\newif\ifSHOWPROOF \SHOWPROOFtrue
\newif\ifARXIV \ARXIVtrue
\newif\ifEXTERNALIZE \EXTERNALIZEtrue
\ifARXIV%
\documentclass[10pt,acmsmall,screen]{acmart}
\else%
\documentclass[10pt,acmsmall,screen]{acmart}
\fi

\citestyle{acmauthoryear}

\usepackage{enumitem}
\setlist[itemize]{leftmargin=20pt}

\usepackage{xspace}
\usepackage{thm-restate}
\hypersetup{hypertexnames=false}

\usepackage[capitalize,nameinlink]{cleveref}
\crefname{relation}{relation}{relations}
\creflabelformat{relation}{#2\textup{(#1)}#3}

\usepackage{physics}
\usepackage{mathtools}
\usepackage{diagbox}

\usepackage{adjustbox}

\usepackage{tikz}
\usepackage{tikz-cd}
\usetikzlibrary{shapes.geometric,calc,positioning}
\usetikzlibrary{quantikz2}
\def\scaleratio{.8}

\usepackage[normalem]{ulem}

\tikzcdset{
arrow style=tikz,
diagrams={>={Straight Barb[scale=0.85]}}
}
\tikzset{
startip/.tip={Glyph[glyph math command=mysym]},
Rightarrow*/.style={double equal sign distance,>={Implies},->.startip,shorten >=-0.54ex},
to*/.style={->.startip,shorten >=-0.85ex},
}

\usepackage{algpseudocode}
\usepackage{algorithm}
\algrenewcommand\algorithmicrequire{\textbf{Input:}}
\algrenewcommand\algorithmicensure{\textbf{Output:}}

\usepackage{ebproof}

\usepackage{bbold}
\usepackage{stmaryrd}

\DeclarePairedDelimiter\rbra{\lparen}{\rparen}
\DeclarePairedDelimiter\sbra{\lbrack}{\rbrack}
\DeclarePairedDelimiter\cbra{\{}{\}}
\DeclarePairedDelimiter\sem{\llbracket}{\rrbracket}
\DeclarePairedDelimiter\lsem{\llparenthesis}{\rrparenthesis}

\newcommand{\ZZ}{\mathbb{Z}}
\newcommand{\NN}{\mathbb{N}}

\newcommand{\cE}{\mathcal{E}}
\newcommand{\cG}{\mathcal{G}}

\newcommand{\cT}{\mathcal{T}}

\DeclareMathOperator{\hshift}{\mathsf{shift}}
\DeclareMathOperator{\shift}{\mathsf{shift}}
\DeclareMathOperator{\hpermute}{\mathsf{permute}}
\DeclareMathOperator{\hdim}{\mathsf{dim}}

\newcommand{\defeq}{\Coloneqq}
\newcommand{\fromto}{\leftrightarrow}
\newcommand{\seqq}{\fatsemi}

\newcommand{\bbz}{\ensuremath{\mathbb{0}}}
\newcommand{\bbo}{\ensuremath{\mathbb{1}}}
\newcommand{\bb}[1]{\mathbb{#1}}

\newcommand{\tconv}{\mathit{tconv}}
\newcommand{\pid}{\mathit{id}}

\newcommand{\swapp}{\mathit{swap}^+}
\newcommand{\swapt}{\mathit{swap}^\times}
\newcommand{\hadamard}{\mathit{hadamard}}
\newcommand{\assoclp}{\mathit{assocl}^+}
\newcommand{\assocrp}{\mathit{assocr}^+}
\newcommand{\assoclt}{\mathit{assocl}^\times}
\newcommand{\assocrt}{\mathit{assocr}^\times}
\newcommand{\unitep}{\mathit{unite}^+}
\newcommand{\unitet}{\mathit{unite}^\times}
\newcommand{\unitip}{\mathit{uniti}^+}
\newcommand{\unitit}{\mathit{uniti}^\times}
\newcommand{\dist}{\mathit{dist}}
\newcommand{\absorb}{\mathit{absorb}}
\newcommand{\factor}{\mathit{factor}}
\newcommand{\factorz}{\mathit{factorz}}
\newcommand{\pctrl}[1]{{\p{ctrl}\ #1}}

\DeclareMathOperator{\lde}{lde}
\DeclareMathOperator{\level}{level}
\DeclareMathOperator{\Real}{\mathsf{Re}}
\DeclareMathOperator{\Imag}{\mathsf{Im}}

\newcommand{\edge}[3]{\langle #1, #2, #3\rangle}

\usepackage{suffix}
\newcommand\Se[1]{\xrightarrow{#1}}
\WithSuffix\newcommand\Se*[1]{\xrightarrow{#1}{}^{\!\!*}\,}
\newcommand\Ne[1]{\xRightarrow{#1}}
\WithSuffix\newcommand\Ne*[1]{\xRightarrow{#1}{}^{\!\!*}\,}

\newcommand{\HPiLang}{\text{Hadamard-$\Pi$}\xspace}
\newcommand{\QPiLang}{\text{Q-$\Pi$}\xspace}
\newcommand{\HPiUnitary}{\ensuremath{O_n\rbra{\ZZ\sbra{\tfrac{1}{\sqrt{2}}}}}}
\newcommand{\HPiRing}{\ensuremath{\ZZ\sbra{{\tfrac{1}{\sqrt{2}}}}}}

\newcommand{\Unitary}{\ensuremath{\mathbf{Unitary}}}
\newcommand{\FinBij}{\ensuremath{\mathbf{FinBij}}}

\newcommand{\blue}[1]{{\color{blue}#1}}

\newif\ifFINAL 
\ifFINAL

\newcommand{\chdeleted}[1]{}
\else

\newcommand{\chdeleted}[1]{{\color{red!30}\sout{#1}}}
\fi

\usepackage{robust-externalize}
\robExtConfigure{
    compile in parallel,
    rename backup files for arxiv,
}
\newcommandAutoForward{\qsem}[1]{\lsem{#1}_Q}
\newcommandAutoForward{\wsem}[1]{\lsem{#1}_W}
\newcommandAutoForward{\qT}[2][]{\cT_{Q}^{#1}\sbra*{#2}}
\newcommandAutoForward{\p}[1]{\mathsf{#1}}
\newcommandAutoForward{\hT}[1]{\cT_{\mathit{H}}\sbra*{#1}}
\newcommandAutoForward{\hqT}[1]{\cT_{\mathit{HQ}}\sbra*{#1}}
\newcommandAutoForward{\CH}[1]{\ensuremath{H_{[#1]}}}
\newcommandAutoForward{\CZ}[1]{\ensuremath{(-1)_{[#1]}}}
\newcommandAutoForward{\CX}[1]{\ensuremath{X_{[#1]}}}
\robExtConfigure{
    new preset={templateQuantikz}{
        latex,
        auto forward,
        add to preamble={
            \pdfminorversion=5
            \RequirePackage[T1]{fontenc}
            \RequirePackage[tt=false, type1=true]{libertine}
            \RequirePackage[varqu]{zi4}
            \RequirePackage[libertine]{newtxmath}
            \usepackage{tikz}
            \usetikzlibrary{quantikz2}
            \usepackage{bbold}
            \usepackage{stmaryrd}
            \usepackage{amsmath,mathtools}
            \DeclarePairedDelimiter\rbra{\lparen}{\rparen}
            \DeclarePairedDelimiter\sbra{\lbrack}{\rbrack}
            \DeclarePairedDelimiter\cbra{\{}{\}}
            \DeclarePairedDelimiter\sem{\llbracket}{\rrbracket}
            \DeclarePairedDelimiter\lsem{\llparenthesis}{\rrparenthesis}
            \newcommand{\ZZ}{\mathbb{Z}}
            \newcommand{\NN}{\mathbb{N}}
            
            \newcommand{\cG}{\mathcal{G}}
            
            \newcommand{\cT}{\mathcal{T}}
            \DeclareMathOperator{\hshift}{\mathsf{shift}}
            \DeclareMathOperator{\shift}{\mathsf{shift}}
            \DeclareMathOperator{\hpermute}{\mathsf{permute}}
            \DeclareMathOperator{\hdim}{\mathsf{dim}}

            \newcommand{\defeq}{\Coloneqq}
            \newcommand{\fromto}{\leftrightarrow}
            \newcommand{\seqq}{\fatsemi}
            \newcommand{\bbz}{\ensuremath{\mathbb{0}}}
            \newcommand{\bbo}{\ensuremath{\mathbb{1}}}
            
            \newcommand{\tconv}{\mathit{tconv}}
            \newcommand{\pid}{\mathit{id}}

            \newcommand{\swapp}{\mathit{swap}^+}
            \newcommand{\swapt}{\mathit{swap}^\times}
            \newcommand{\hadamard}{\mathit{hadamard}}
            \newcommand{\assoclp}{\mathit{assocl}^+}
            \newcommand{\assocrp}{\mathit{assocr}^+}
            \newcommand{\assoclt}{\mathit{assocl}^\times}
            \newcommand{\assocrt}{\mathit{assocr}^\times}
            \newcommand{\unitep}{\mathit{unite}^+}
            \newcommand{\unitet}{\mathit{unite}^\times}
            \newcommand{\unitip}{\mathit{uniti}^+}
            \newcommand{\unitit}{\mathit{uniti}^\times}
            \newcommand{\dist}{\mathit{dist}}
            \newcommand{\absorb}{\mathit{absorb}}
            \newcommand{\factor}{\mathit{factor}}
            \newcommand{\factorz}{\mathit{factorz}}
        },
    },
}
\ifEXTERNALIZE%
\cacheEnvironment{quantikz}{templateQuantikz}
\else%
\fi

\newtheorem{maincase}{Case}
\crefname{theorem}{Thm.}{Thms.}
\crefname{lemma}{Lem.}{Lems.}
\crefname{definition}{Def.}{Defs.}
\crefname{proposition}{Prop.}{Props.}

\crefname{section}{Sect.}{Sects.}
\crefname{subsection}{Sect.}{Sects.}
\crefname{appendix}{Appx.}{Appxs.}
\crefname{figure}{Fig.}{Figs.}
\crefname{maincase}{Case}{Cases}

\author{Wang Fang}
\orcid{0000-0001-7628-1185}
\affiliation{%
  \department{School of Informatics}
  \institution{University of Edinburgh}
  \city{Edinburgh}
  \country{UK}
}
\email{wang.fang@ed.ac.uk}

\author{Chris Heunen}
\orcid{0000-0001-7393-2640}
\affiliation{%
  \department{School of Informatics}
  \institution{University of Edinburgh}
  \city{Edinburgh}
  \country{UK}
}
\email{chris.heunen@ed.ac.uk}

\author{Robin Kaarsgaard}
\orcid{0000-0002-7672-799X}
\affiliation{%
  \department{Centre for Quantum Mathematics, 
  Department of Mathematics and Computer Science}
  \institution{University of Southern Denmark}
  \city{Odense}
  \country{Denmark}
}
\email{kaarsgaard@imada.sdu.dk}

\begin{CCSXML}
<ccs2012>
   <concept>
       <concept_id>10003752.10010124.10010131.10010137</concept_id>
       <concept_desc>Theory of computation~Categorical semantics</concept_desc>
       <concept_significance>500</concept_significance>
       </concept>
   <concept>
       <concept_id>10003752.10003753.10003758</concept_id>
       <concept_desc>Theory of computation~Quantum computation theory</concept_desc>
       <concept_significance>500</concept_significance>
       </concept>
   <concept>
       <concept_id>10011007.10011006.10011008</concept_id>
       <concept_desc>Software and its engineering~General programming languages</concept_desc>
       <concept_significance>300</concept_significance>
       </concept>
 </ccs2012>
\end{CCSXML}

\ccsdesc[500]{Theory of computation~Categorical semantics}
\ccsdesc[500]{Theory of computation~Quantum computation theory}
\ccsdesc[300]{Software and its engineering~General programming languages}

\setcopyright{cc}
\setcctype{by}
\acmDOI{10.1145/3776647}
\acmYear{2026}
\acmJournal{PACMPL}
\acmVolume{10}
\acmNumber{POPL}
\acmArticle{5}
\acmMonth{1}
\received{2025-06-26}
\received[accepted]{2025-11-06}

\ifARXIV
\else
\usepackage{xr}
\externaldocument[ext-]{hadamard_pi_ext}
\fi

\begin{document}

\title{Hadamard-Pi: Equational Quantum Programming}

\begin{abstract}
Quantum computing offers advantages over classical computation, yet the precise features that set the two apart remain unclear. In the standard quantum circuit model, adding a 1-qubit basis-changing gate---commonly chosen to be the Hadamard gate---to a universal set of classical reversible gates yields computationally universal quantum computation. However, the computational behaviours enabled by this addition are not fully characterised.
We give such a characterisation by introducing a small quantum programming language  extending the universal classical reversible programming language $\Pi$ with a single primitive corresponding to the Hadamard gate. The language comes equipped with a sound and complete categorical semantics that is specified by a purely equational theory. Completeness is shown by means of a novel finite presentation, and a corresponding synthesis algorithm, for the groups of orthogonal matrices with entries in the ring $\ZZ\sbra{\tfrac{1}{\sqrt{2}}}$.
\end{abstract}

\keywords{Quantum computing, reversible computing, Hadamard gate, quantum circuit, rig category, orthogonal group.}

\maketitle

\section{Introduction}\label{sec:intro}

Quantum computing has many advantages over classical computing~\cite{kitaevshenvyalyi}, but it remains unclear what exactly differentiates it~\cite{vedral:elusive}, and how to best equip programmers to take advantage of these advantages~\cite{hiemetal:quantumprogramming}.
In the prevalent quantum circuit model, there is one well-known answer: adding any single-qubit gate that does not preserve the computational basis suffices to upgrade a set of classical universal gates to universal quantum computation~\cite{shi2002toffolicontrollednotneedlittle}. The standard choice for such a gate is the Hadamard gate~\cite{aharonov2003simpleprooftoffolihadamard}.
This article makes this idea precise by giving a concrete low-level but universal quantum programming language that is discrete and has sound and complete semantics governed by a small number of equations.

We prove these properties of the language in three stages.
Starting with $\Pi$, a combinator language that is known to be universal for classical reversible computing~\cite{james2012information,carette2016computing,carettejamessabry:reversible} by governing controlled gates, we define the language \HPiLang by adding a primitive for the Hadamard gate.
It already follows from known results~\cite{aharonov2003simpleprooftoffolihadamard,shi2002toffolicontrollednotneedlittle} that this language is computationally universal for reversible quantum computing, but to establish the other desired properties, we will use two auxiliary languages.
First, we introduce the language \QPiLang, which additionally allows for the scalar $-1$, and has a pleasingly simple equational theory. We provide syntactic translations between both languages by emulating the scalar $-1$ with an auxiliary dimension. Second, we further provide translations of \QPiLang into a language that directly describes the operational semantics of orthogonal matrices. 

Additionally, we develop denotational semantics for all three languages and show that the translations preserve this. This is done in terms of rig categories and functors into the category of unitary matrices, which serves as the standard model of quantum circuits; see \cref{fig:translations}.

Further contributions are as follows:
\begin{itemize}
  \item We give a proof that the categorical semantics are sound and complete.
  \item To do so, we provide a presentation of the groups $O_n(\ZZ[\tfrac{1}{\sqrt{2}}])$ of orthogonal matrices with entries in the ring $\ZZ[\tfrac{1}{\sqrt{2}}]$ with finitely many generators $\mathcal{G}_n$ and relations.
  \item We provide an exact synthesis algorithm that turns an orthogonal matrix in $O_n(\ZZ\sbra{\tfrac{1}{\sqrt{2}}})$ into a sequence of words over $\mathcal{G}_n$ as a unique normal form. 
\end{itemize}

\begin{figure}
    \[\begin{tikzcd}[row sep=1.2cm,column sep=1cm,nodes={inner sep=4pt}]
        \setwiretype{n} & \Unitary{} & \\
        \setwiretype{n}\HPiLang{} \ar[ru,"\sem{\cdot}",bend left] \ar[r,bend left,"\qsem{\cdot}"] & \QPiLang{} \ar[u,"\sem{\cdot}"] \ar[l,bend left,"\cT_H"] \ar[r,bend left,"\wsem{\cdot}"] & \phantom{Words over }\tikz[overlay,baseline]{\node[xshift=-.65cm,anchor=base]{$\bigcup\limits_{n\geq 1}\cG_n^*$}}\phantom{\cG} \ar[l,bend left,"\cT_Q"] \ar[lu,"\sem{\cdot}",bend right,swap]
    \end{tikzcd}\]
	\caption{The three languages: their translations and semantics.}
	\label{fig:translations}
\end{figure}

We proceed as follows: \Cref{sec:background} gives a brief background in quantum circuits, rig categories, and the family of $\Pi$ languages. Next, \Cref{sec:language} introduces the languages \HPiLang and \QPiLang and their translations, and \Cref{sec:relations} then presents the relation to orthogonal matrices. The soundness and completeness are proven in \Cref{sec:soundness_and_completeness}. Finally, related work and concluding remarks are discussed in \Cref{sec:conclusion}. Various proofs are detailed in \ifARXIV%
\cref{app:relations_proof,app:qpi_proof,app:hpi_proof,app:misc}%
\else%
\cref{ext-app:relations_proof,ext-app:qpi_proof,ext-app:hpi_proof,ext-app:misc} of the extended version~\cite{FHK25}%
\fi.

\section{Background}\label{sec:background}

We briefly describe reversible quantum computation in terms of unitary matrices, and reversible classical computing in terms of the language $\Pi$ and its categorical semantics.

\subsection{Reversible Quantum Computing}

Single qubits have as states unit vectors in $\mathbb{C}^2$, with classical bits corresponding to the computational basis states $\ket{0} = \left(\begin{smallmatrix} 1 \\ 0 \end{smallmatrix}\right)$ and $\ket{1} = \left( \begin{smallmatrix} 0 \\ 1 \end{smallmatrix}\right)$. 
Multiple qubits take states in the tensor product $\mathbb{C}^2 \otimes \cdots \otimes \mathbb{C}^2$. 
A system of $n$ qubits evolves along \emph{gates}: $2^n$-by-$2^n$ complex matrices $U$ that are unitary, that is, whose complex conjugate transpose $U^\dag$ is its inverse.
Important examples of gates are the \emph{Hadamard} and $X$ gates
\[
  H = \frac{1}{\sqrt{2}} \begin{pmatrix}
    1 & 1 \\ 
    1 & -1
  \end{pmatrix},
  \qquad \qquad
  X = \begin{pmatrix}
    0 & 1 \\
    1 & 0      
  \end{pmatrix}\text.
\]
If $U$ is an $n$-qubit gate, then the block diagonal matrix
\[
  \mathit{CU} = \begin{pmatrix}
    I &  \\
     & U      
  \end{pmatrix}
\]
is an $(n+1)$-qubit gate, called \emph{controlled}-$U$, where $I$ is the $2^n$-by-$2^n$ identity matrix.

Indeed, any $n$-qubit unitary can be approximated arbitrarily well with a \emph{quantum circuit} built up using composition and tensor product from some basic gate sets.
More precisely, a set $G$ of gates is \emph{strictly universal} if there exists a constant $n_0$ such that for any $n \geq n_0$, the subgroup generated by $G$ is dense in the group of all $2^n$-by-$2^n$ unitary matrices with determinant 1. 
The set $G$ is \emph{computationally universal} if it can be used to simulate to within $\epsilon$ error any quantum circuit on $n$ qubits with $t$ gates from a strictly universal set with only polylogarithmic overhead in $(n, t, 1/\epsilon)$.
Since any $n$-qubit unitary $U$ can be simulated by an $(n+1)$-qubit real unitary
\begin{equation}\label{eq:real_U}
    \mathcal{E}(U) = \begin{pmatrix} \Real(U) & -\Imag(U) \\ \Imag(U) & \Real(U) \end{pmatrix},
\end{equation}
where $\Real(U)$ is the real part of $U$ and $\Imag(U)$ is the imaginary part of $U$, a well-known real gate set $\{H,CCX\}$ is computationally universal~\cite{aharonov2003simpleprooftoffolihadamard,shi2002toffolicontrollednotneedlittle}.

Quantum circuits are often drawn graphically, with qubits represented by wires that combine by tensor product and go from left to right as they undergo gates~\cite{kitaevshenvyalyi,yanofskymannucci:quantumcomputing}. For example,
\[
    \scalebox{1}{\begin{quantikz}[row sep=2mm,column sep=2mm]
        &\ctrl{1} & \ctrl{1} & \targ{} && \swap{1} & \\
        &\gate{H} & \targ{} & \octrl{-1} && \targX{} &
    \end{quantikz}} 
\]
is a 2-qubit circuit that applies a controlled Hadamard gate, a controlled X gate, and then a negatively-controlled H gate and a swap gate, defined as follows.
\[
    \scalebox{1}{\begin{quantikz}[row sep=2mm,column sep=2mm]
        &\octrl{1} 
        & \midstick[brackets=none,2]{=} 
        & \targ{} & \ctrl{1} & \targ{} & \\
        &\gate{U} &
        & & \gate{U} & &
    \end{quantikz}} 
    \qquad
    \scalebox{1}{\begin{quantikz}[row sep=3mm,column sep=2mm]
        &\swap{1} & \midstick[brackets=none,4]{=} 
        & \ctrl{1} & \targ{} & \ctrl{1} & \\
        &\targX{} & 
        & \targ{} & \ctrl{-1} & \targ{} & 
    \end{quantikz}}
\]

\subsection{Reversible Classical Computing}

Reversible programming languages govern permutations of finite sets. There is a universal family of languages, called $\Pi$, for reversible classical computing~\cite{james2012information}, that has elegant algebraic type constructors and combinators. It also has an equational theory~\cite{carette2016computing} that is sound and complete for permutation semantics in rig categories~\cite{choudhury2022symmetries}.

\begin{definition}
    [Syntax of $\Pi$]\label{def:pi_syntax}
    The language $\Pi$ has the following syntax for types and combinators.
    \begin{align*}
        b \defeq{}& \bbz \mid \bbo \mid b+b \mid b\times b \tag{value types} \\
        t \defeq{}& b \fromto b \tag{combinator types} \\
        \mathit{iso} \defeq{}& \pid \mid \swapp \mid \assocrp \mid \assoclp \mid \unitep \mid \unitip  \\
        & \hphantom{\pid} \mid \swapt \mid \assocrt \mid \assoclt \mid \unitet \mid \unitit \\
        & \hphantom{\pid} \mid \dist \mid \factor \mid \absorb \mid \factorz \tag{isomorphisms} \\
        c \defeq{}& \mathit{iso} \mid c \seqq c \mid c+c \mid c\times c \tag{combinators}
    \end{align*}
\end{definition}

The value types are formed from empty type ($\bbz$), unit type ($\bbo$), sum type ($+$), and the product type ($\times$).
The combinator type $b\fromto b$ indicates an isomorphism between value types.
Combinators are constructed from primitive isomorphisms $\textit{iso}$ and their compositions are subject to the typing rules in~\cref{fig:typing_rules}.

\begin{figure}[t]
\begin{gather*}
    \arraycolsep=1pt
    \begin{array}{rcl}
        \pid  :{} & b \fromto{} b & : \pid \\
        \swapp :{} & b_1+b_2 \fromto{} b_2 + b_1 & : \swapp \\
        \assocrp :{} & (b_1+b_2)+b_3 \fromto{} b_1 + (b_2+b_3) & : \assoclp \\
        \unitep :{} & \bbz + b \fromto{} b & : \unitip \\
        \swapt :{} & b_1 \times b_2 \fromto{} b_2 \times b_1 & : \swapt \\
        \assocrt :{} & (b_1 \times b_2) \times b_3 \fromto{} b_1 \times  (b_2 \times b_3) & : \assoclt \\
        \unitet :{} & \bbo \times b \fromto{} b & : \unitit \\
        \dist :{} & (b_1 + b_2) \times b_3 \fromto{} (b_1 \times b_3) + (b_2 \times b_3) & : \factor \\
        \absorb :{} & b \times \bbz \fromto{} \bbz & : \factorz
    \end{array} \\
    \begin{prooftree}
        \hypo{c_1 : b_1 \fromto b_2} \hypo{c_2 : b_2 \fromto b_3}
        \infer2{c_1 \seqq c_2 : b_1 \fromto b_3}
    \end{prooftree} \\
    \begin{array}{cc}
        \begin{prooftree}
            \hypo{c_1 : b_1 \fromto b_3} \hypo{c_2 : b_2 \fromto b_4}
            \infer2{c_1 + c_2 : b_1+b_2 \fromto b_3+b_4}
        \end{prooftree} &
        \begin{prooftree}
            \hypo{c_1 : b_1 \fromto b_3} \hypo{c_2 : b_2 \fromto b_4}
            \infer2{c_1 \times c_2 : b_1 \times b_2 \fromto b_3 \times b_4}
        \end{prooftree}
    \end{array}
\end{gather*}
\caption{Typing rules of $\Pi$.}\label{fig:typing_rules}
\end{figure}

The constructors $+$ and $\times$ function like the operations in semirings, which categorifies as follows.

\begin{definition}
    [Monoidal category]
    A \emph{monoidal category} $(\mathbf{C},\otimes,I)$ is a category equipped with:
    \begin{itemize}
        \item a bifunctor $\otimes\colon\mathbf{C}\times\mathbf{C}\to \mathbf{C}$ called tensor product; 
        \item a unit object $I\in \mathrm{Ob}(\mathbf{C})$ for $\otimes$;
        \item a natural isomorphism $\alpha_{\otimes}\colon (A\otimes B)\otimes C \to A\otimes (B\otimes C)$ called the \emph{associator} and natural isomorphisms $\lambda_{\otimes}\colon I\otimes A \to A$ and $\rho_{\otimes}\colon A\otimes I \to A$ called the \emph{unitors}, satisfying certain coherence conditions~\cite[Chapter 1]{heunen2019categories}.
    \end{itemize}
    It is a \emph{symmetric monoidal category} if additionally there is a natural isomorphism $\sigma_{\otimes} \colon  A\otimes B\to B\otimes A$, called \emph{symmetry}, that satisfies certain coherence conditions~\cite[Chapter 1]{heunen2019categories}.
\end{definition}

The monoidal structure allows us to construct objects $A\otimes B$ from objects $A$ and $B$, and morphisms $f\otimes g$ from morphisms $f$ and $g$, which provides two directions of compositions.
For example, $(f_1\otimes g_1)\circ(f_2\otimes g_2) = (f_1\circ f_2)\otimes(g_1\circ g_2)$.

\begin{definition}
    [Rig category]
    A \emph{rig category} $(\mathbf{C}, \otimes, \oplus, I, O)$ is a category with two symmetric monoidal structures, $(\mathbf{C},\otimes,I)$ and $(\mathbf{C},\oplus,O)$, along with a natural isomorphism $\delta_R \colon (A\oplus B)\otimes C \to (A\otimes C) \oplus (B\otimes C)$ called the \emph{distributor} and a natural isomorphism $\delta_0 \colon A\otimes O \to O$ called the \emph{annihilator}, satisfying certain coherence conditions~\cite{laplaza1972coherence}.
\end{definition}

Like quantum circuits, morphisms in monoidal categories are often drawn as string diagrams~\cite[Chapter~1]{heunen2019categories}. We will do so below for rig categories, but for the $\oplus$ monoidal structure rather than $\otimes$; \emph{we will put a $+$ between wires of a string diagram to avoid confusion.}

The language $\Pi$ naturally takes semantics in rig \emph{groupoids}, where all morphisms are invertible: for any morphism $f$, there is an inverse $f^{-1}$ such that $f\circ f^{-1} = \mathrm{id}$ and $f^{-1}\circ f = \mathrm{id}$.
For example, the category \FinBij{} of finite sets and bijections forms a rig groupoid under cartesian product and disjoint union of sets, and the category \Unitary{} of finite-dimensional Hilbert spaces and unitaries forms a rig groupoid under tensor product and direct sum.

\begin{definition}
    [Semantics of $\Pi$]\label{def:pi_semantics}
    The language $\Pi$ has denotational semantics in rig groupoids $(\mathbf{C}, \otimes, \oplus, I, O)$ as follows:
    \begin{itemize}
        \item Types are interpreted as objects of $\mathbf{C}$:
        \begin{align*}
            \sem{\bbz} &= O  & \sem{b_1+b_2} &= \sem{b_1}\oplus\sem{b_2} \\
            \sem{\bbo} &= I & \sem{b_1\times b_2} &= \sem{b_1}\otimes\sem{b_2}
        \end{align*}
        \item Terms are interpreted as isomorphisms of $\mathbf{C}$:
        \begin{gather*}
            \begin{aligned}
                \sem{\pid} &= \mathrm{id} &  \sem{\swapp} &= \sigma_{\oplus} & \sem{\swapt} &= \sigma_{\otimes} \\
                \sem{\dist} &= \delta_{R} & \sem{\assocrp} &= \alpha_{\oplus} & \sem{\assocrt} &= \alpha_{\otimes} \\
                \sem{\factor} &= \delta_{R}^{-1} & \sem{\assoclp} &= \alpha_{\oplus}^{-1} & \sem{\assoclt} &= \alpha_{\otimes}^{-1} \\
                \sem{\absorb} &= \delta_{0} &  \sem{\unitep} &= \lambda_{\oplus} & \sem{\unitet} &= \lambda_{\otimes} \\
                \sem{\factorz} &= \delta_{0}^{-1} & \sem{\unitip} &= \lambda_{\oplus}^{-1} & \sem{\unitit} &= \lambda_{\otimes}^{-1} \\
                \sem{c_1\seqq c_2} &= \sem{c_2}\circ\sem{c_1} & \sem{c_1+c_2} &= \sem{c_1}\oplus\sem{c_2} & \sem{c_1\times c_2} &= \sem{c_1}\otimes\sem{c_2}
            \end{aligned}
    \end{gather*}
    \end{itemize}
\end{definition}

The rig structure-induced equational theory organises a second level of $\Pi$ syntax as level-2 combinators~\cite{carette2016computing}.
Each level-2 combinator is of the form $c_1\fromto_2 c_2$ for appropriate $c_1$ and $c_2$ of the same type $b_1 \fromto b_2$, and asserts that $c_1$ and $c_2$ represent the same bijection.
For example, we have the following level-2 combinators dealing with distributivity:
\begin{align*}
    \rbra*{(c_1+c_2)\times c_3}\seqq \dist &\fromto_2{} \dist\seqq\rbra*{\rbra*{c_1\times c_3}+\rbra*{c_2\times c_3}}, \\
    \factor\seqq\rbra*{(c_1+c_2)\times c_3} &\fromto_2{} \rbra*{\rbra*{c_1\times c_3}+\rbra*{c_2\times c_3}}\seqq\factor.
\end{align*}
All level-2 combinators of $\Pi$ are listed in \ifARXIV%
\cref{app:pi_equations} for convenience%
\else%
\cref{ext-app:pi_equations} of the extended version~\cite{FHK25} for convenience%
\fi.

\begin{theorem}[${\Pi}$ full abstraction and adequacy~\cite{choudhury2022symmetries}]
    The equational theory of $\Pi$ expressed using the level-2
    combinators $\fromto_2$ is sound and complete with respect to its
    semantics in the rig groupoid $\FinBij{}$ of finite sets and
    permutations. 
\end{theorem}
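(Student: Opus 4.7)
The plan is to split the statement into soundness and completeness. For soundness, I would proceed by structural induction on the derivation of $c_1 \fromto_2 c_2$, verifying for each level-2 combinator axiom (listed in the appendix) that $\sem{c_1}$ and $\sem{c_2}$ denote the same bijection in $\FinBij$. Each axiom either asserts a naturality square or a coherence diagram for one of the structural isomorphisms $\alpha_\oplus, \alpha_\otimes, \lambda_\oplus, \lambda_\otimes, \sigma_\oplus, \sigma_\otimes, \delta_R, \delta_0$ of a symmetric rig category, and since $\FinBij$ carries a canonical symmetric rig groupoid structure under disjoint union and cartesian product, all of these hold on the nose. The congruence rules for $\seqq$, $+$, and $\times$ then lift equality of constituents to equality of composites.

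For completeness, the strategy is to exploit the fact that $\FinBij$ is the free symmetric rig groupoid on no generators, so that $\sem{c_1} = \sem{c_2}$ already forces $c_1$ and $c_2$ to be identified under any rig groupoid congruence, and in particular by the least such congruence, which is precisely $\fromto_2$. To make this concrete, given $c_1, c_2 \colon b \fromto b'$ with $\sem{c_1} = \sem{c_2}$, I would first use the associators, unitors, distributors and annihilators (and their inverses) to rewrite both $b$ and $b'$ into a flat normal form as an iterated sum of copies of $\bbo$, simultaneously conjugating $c_1$ and $c_2$ into combinators acting on these flattened types. I would then further normalise each combinator into a composite of $\swapp$'s sandwiched between associators, realising an explicit permutation on the flattened set, appealing to the standard fact that the symmetric group $S_n$ is presented by adjacent transpositions. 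Equality of the induced permutations then yields syntactic identity of the two normal forms, and hence $c_1 \fromto_2 c_2$ by reversing the normalisation.

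The main obstacle is the completeness direction, and within it the coherence argument needed to justify the normalisation step. Unlike MacLane's coherence theorem for symmetric monoidal categories, Laplaza's coherence theorem for rig categories does not make all diagrams of structural maps commute, because the interaction between $\oplus$ and $\otimes$ through $\delta_R$ and $\delta_0$ is coherent only for a specified family of diagrams. I would therefore need to check carefully that the level-2 axioms of $\Pi$ include enough equations to reduce any combinator to the sum-of-units form described above; concretely, this amounts to exhibiting a terminating and confluent rewriting system over $\Pi$-types and combinators modulo $\fromto_2$. Once this rewriting machinery is in place the remainder is routine: the permutation-theoretic step reduces to Moore's presentation of $S_n$, while scalar-absorption steps involving $\bbz$ are handled uniformly by $\absorb$/$\factorz$ together with the coherence equations for the annihilator.
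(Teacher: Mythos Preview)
The paper does not prove this theorem at all: it is stated in the background section as a result of prior work, with the citation \cite{choudhury2022symmetries} in the theorem header, and no proof is given. The paper simply relies on it as a black box throughout (e.g., invoking ``by completeness of $\Pi$'' in later arguments).

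Your sketch is a reasonable outline of how such a proof goes, and you correctly locate the real obstacle in the completeness direction: the normalisation of an arbitrary $\Pi$-combinator to a permutation on a flat sum $\bbo + \cdots + \bbo$ is not an immediate consequence of Laplaza's coherence theorem, precisely because not all structural diagrams in a rig category commute. One remark on your first paragraph of the completeness argument: the sentence ``$\FinBij$ is the free symmetric rig groupoid on no generators, so $\sem{c_1} = \sem{c_2}$ forces $c_1$ and $c_2$ to be identified under any rig groupoid congruence'' has the quantifiers slightly backwards. Freeness of $\FinBij$ gives you a unique rig functor \emph{out of} $\FinBij$ into any rig groupoid; to conclude that equality in $\FinBij$ reflects back to equality in the syntactic category you need the syntactic category $\Pi/{\fromto_2}$ itself to be a rig groupoid and the semantics map to be an equivalence (equivalently, that $\Pi/{\fromto_2}$ is \emph{also} initial). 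Establishing that is exactly the normalisation work you describe in the second paragraph, so the concrete plan is right, but the one-line freeness shortcut does not by itself do the job.
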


\begin{figure}[th]
    \textbf{Syntax}\hfill{}
    \begin{align*}
        \mathit{iso} &\defeq{} \cdots \mid \mathit{sqrtx} \mid \mathit{w} \tag{isomorphisms}
    \end{align*}
    \textbf{Types}\hfill{}
    \begin{align*}
        \mathit{w} &: \bbo \fromto \bbo & \mathit{sqrtx} &: \bbo+\bbo \fromto \bbo+\bbo
    \end{align*}
    \textbf{Equations}\hfill{}
    \begin{gather*}
        \mathit{w}^8
                \fromto_2{} \pid_{\bbo} \\
        \mathit{sqrtx}^2
                \fromto_2{} \swapp_{\bbo+\bbo} \\
        \unitet\seqq\mathit{sqrtx}\seqq(\pid_{\bbo}+\mathit{w}^2)\seqq\mathit{sqrtx}\seqq \unitit
                \fromto_2{} \mathit{w}^2\times ((\pid_{\bbo}+\mathit{w}^2)\seqq\mathit{sqrtx}\seqq(\pid+\mathit{w}^2))     
    \end{gather*}
    \textbf{Additional Terms}\hfill{}
    \begin{align*}
        \sem{\mathit{w}} &= e^{i\frac{\pi}{4}} & \sem{\mathit{sqrtx}} &= \frac{1}{2}\begin{pmatrix*}[r]
            1+i & 1-i \\
            1-i & 1+i
        \end{pmatrix*}
    \end{align*}
    \caption{$\sqrt{\Pi}$ extension of $\Pi$ and its semantics of additional isomorphisms in the rig groupoid \Unitary{}.}\label{fig:sqrt_pi}
\end{figure}

Our work also follows the recent quantum extensions of $\Pi$~\cite{carette2024few, heunen2022quantum, carette2024bake}, among which $\sqrt{\Pi}$~\cite{carette2024few} (see \cref{fig:sqrt_pi}) is a computationally universal quantum programming language equipped with an equational theory that is sound and complete with respect to several gate sets.
However, the completeness of $\sqrt{\Pi}$ is limited to certain subsets of unitaries that it can express, for example, those with entries in the ring $\mathbb{Z}[\frac{1}{2},i]$.
The completeness for larger sets of unitaries remains open, such as those with entries in the ring $\mathbb{Z}[\frac{1}{\sqrt{2}}, i]$, which $\sqrt{\Pi}$ can express.
There remains an open question whether $c_1 \fromto_2 c_2$ iff $\sem{c_1} = \sem{c_2}$ for two terms $c_1$ and $c_2$ of $\sqrt{\Pi}$.

\section{The Language: \texorpdfstring{\HPiLang}{Hadamard-Pi}}\label{sec:language}

We have mentioned that Toffoli and Hadamard gates are computationally universal for quantum computing and so can efficiently simulate any unitary~\cite{aharonov2003simpleprooftoffolihadamard,shi2002toffolicontrollednotneedlittle}.
Consequently, incorporating an isomorphism $\hadamard:\bbo+\bbo \fromto \bbo+\bbo$ into $\Pi$ suffices to achieve a computationally universal quantum programming language.
However, the story does not end here.
We will present a sound and complete equational theory for the extension of $\Pi$ with only one additional generator $\hadamard$.
We first extend $\Pi$ with two isomorphisms, $\hadamard$ and $(-1) \colon \bbo \fromto \bbo$, which facilitate simple interaction with its accurately expressible unitaries.
Then we hide the isomorphism $(-1)$ by introducing additional equations.

\subsection{Adding Minus One and Hadamard}

Our first extension of $\Pi$, named Q-$\Pi$, is presented in \cref{fig:qpi}, and comes accompanied by an equational theory.

\begin{figure}[t]
    \textbf{Syntax}\hfill{}
    \begin{align*}
        \mathit{iso} &\defeq{} \cdots \mid (-1) \mid \hadamard \tag{isomorphisms}
    \end{align*}
    \textbf{Types}\hfill{}
    \begin{align*}
        (-1) &: \bbo \fromto \bbo & \hadamard &: \bbo+\bbo \fromto \bbo+\bbo
    \end{align*}
    \textbf{Equations}\hfill{}
    \begin{gather*}
        (-1)^2 \fromto_2 \pid_{\bbo} \tag{E1}\label{eq:e1} \\
        \hadamard^2 \fromto_2 \pid_{\bbo+\bbo} \tag{E2}\label{eq:e2} \\
        \hadamard\seqq \swapp \seqq \hadamard \fromto_2 \pid_{\bbo}+(-1) \tag{E3}\label{eq:e3} \\
        \begin{prooftree}
            \hypo{c_1: b_1 \fromto b_1}
            \hypo{c_2: b_2 \fromto b_2}
            \hypo{c_1+c_2 \fromto_2 \pid_{b_1}+\pid_{b_2}}
            \infer3{c_1 \fromto_2 \pid_{b_1}}
        \end{prooftree} \tag{E4}\label{eq:e4}
    \end{gather*}
    \textbf{Additional Terms}\hfill{}
    \begin{align*}
        \sem{(-1)} &= -1 & \sem{\hadamard} &= \frac{1}{\sqrt{2}}\begin{pmatrix*}[r]
            1 & 1 \\
            1 & -1
        \end{pmatrix*}
    \end{align*}
    \caption{\QPiLang{} extension of $\Pi$ and its semantics of additional isomorphisms in the rig groupoid \Unitary{}.}\label{fig:qpi}
\end{figure}

\begin{definition}
    [Equational theory for \QPiLang{}]\label{def:qpi_eq}
    The equational theory for \QPiLang{} contains all level-2 combinators of $\Pi$ and \cref{eq:e1,eq:e2,eq:e3,eq:e4}.
\end{definition}

\cref{eq:e1,eq:e2,eq:e3} describe standard interactions of Pauli-$X$, -$Z$ and Hadamard gates.
\cref{eq:e4} is the cancellative condition for $+$.
This condition is sound for any rig category that embeds into one where the interpreted $\oplus$ is a biproduct, which covers \FinBij{} and \Unitary{}.
Although \cref{eq:e4} may appear a little burden here, it is necessary because the proof of \QPiLang{} occasionally requires additional space for intermediate equations. This extra space corresponds to an extra auxiliary qubit used in Toffoli and Hadamard circuits for efficiently simulating any unitary~\cite{aharonov2003simpleprooftoffolihadamard}. The cancellative condition then allows us to eliminate this auxiliary space when it is no longer needed.%
\footnote{The cancellative condition is only sound for $+$ but not for $\times$. For instance, in \Unitary{}, we have that $A\oplus B = I$ if and only if $A = I$ and $B = I$, but $A\otimes B = I$ does not require $A = I$, e.g., $(-1)\otimes(-1) = 1$. Notably, \cref{eq:e4} allows extracting $c \fromto_2{} \pid$ from $\pid_{\bbo+\bbo}\times c \fromto_2{} \pid$.}

The \emph{controlled}-construct for any term $c:b\fromto b$ is derived in $\Pi$ as
\[\pctrl{c} :(\bbo+\bbo)\times b \fromto (\bbo+\bbo)\times b =  \dist\seqq(\pid_{\bbo\times b}+(\pid_{\bbo}\times c))\seqq \factor.\]
Letting $\bb{2}=\bbo+\bbo$ denote the type of booleans (or qubits), we can construct the following gates in \QPiLang{}:
\begin{align*}
    \p{x}:\bb{2}\fromto\bb{2} &= \swapp, \\
    \p{h}:\bb{2}\fromto\bb{2} &= \hadamard, \\
    \p{cx}:\bb{2}\times\bb{2}\fromto\bb{2}\times\bb{2} &= \pctrl{\p{x}}, \\
    \p{ch}:\bb{2}\times\bb{2}\fromto\bb{2}\times\bb{2} &= \pctrl{\p{h}}, \\
    \p{ccx}:\bb{2}\times\bb{2}\times\bb{2}\fromto\bb{2}\times\bb{2}\times\bb{2} &= \pctrl{\p{cx}}.
\end{align*}
Thus, \QPiLang{} can express any quantum circuit over the gate set $\{\, X, \mathit{CX}, \mathit{CCX}, H, \mathit{CH}\,\}$ and is computationally universal for quantum computing~\cite{aharonov2003simpleprooftoffolihadamard,shi2002toffolicontrollednotneedlittle}.
The unitaries that it can represent exactly correspond to unitaries over the ring \HPiRing{}~\cite{Amy2020numbertheoretic}.

\begin{theorem}
    [\QPiLang{} expressivity]
    \QPiLang{} is computationally universal for quantum circuits.
\end{theorem}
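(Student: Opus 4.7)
The plan is to combine two ingredients: the known classical result that $\{H, \mathit{CCX}\}$ (equivalently, the set $\{X, \mathit{CX}, \mathit{CCX}, H, \mathit{CH}\}$) is computationally universal for quantum computing~\cite{aharonov2003simpleprooftoffolihadamard,shi2002toffolicontrollednotneedlittle}, together with a demonstration that \QPiLang{} faithfully encodes every circuit over this gate set. So the work is to justify the encoding, after which universality follows by citation.

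First I would verify the denotational semantics of the constructed terms. For $\p{x} = \swapp : \bb{2} \fromto \bb{2}$, under the interpretation $\sem{\bb{2}} = \sem{\bbo} \oplus \sem{\bbo} = \mathbb{C} \oplus \mathbb{C} \cong \mathbb{C}^2$ in \Unitary{}, the symmetry $\sigma_{\oplus}$ swaps the two summands, yielding exactly $X$. For $\p{h} = \hadamard$, the semantics is given directly in \cref{fig:qpi}. For the controlled constructions, I would appeal to the derivation $\pctrl{c} = \dist \seqq (\pid + (\pid \times c)) \seqq \factor$, whose interpretation in \Unitary{} distributes $\bb{2} \times b$ as $(\bbo \times b) \oplus (\bbo \times b)$, applies identity on the left summand and $c$ on the right, and reassembles. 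This computes exactly the block-diagonal matrix $\mathrm{diag}(I, \sem{c})$, i.e.\ the standard controlled-$\sem{c}$. Iterating this observation for $\p{cx}$, $\p{ch}$, and $\p{ccx}$ shows all five gates are realised.

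Second, I would observe that Q-$\Pi$'s grammar includes sequential composition $\seqq$, parallel composition $\times$, and the full apparatus of rig-categorical wiring (associators, unitors, symmetries, distributors). These suffice to implement arbitrary tensor products of gates acting on any subset of a qubit register: given any $n$-qubit circuit expressed as a string diagram over the generating gates, one uses $\swapt$ and the associators to route wires into position, applies the appropriate gate in parallel with $\pid$ on the remaining qubits via $\times$, and composes the resulting slices with $\seqq$. This is a well-known property of symmetric monoidal expressiveness and requires no clever argument, only routine translation of circuit diagrams into combinator syntax. Then, by appeal to the Aharonov--Shi universality of $\{H, \mathit{CCX}\}$, \QPiLang{} is computationally universal.

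The ``hard part'' is essentially notational: there is no deep obstacle here, since the gate set is already known to be universal and the rig structure of $\Pi$ already guarantees the ability to wire gates together freely. The main care is to confirm that $\pctrl{\cdot}$ really produces a controlled gate under the semantics, and to note that the controlled $\p{x}$ and $\p{h}$ are not strictly needed for universality but arise naturally and are convenient to record. The scalar $(-1)$ plays no role in this expressivity result itself --- it is needed later for the equational theory, not for reaching every expressible unitary --- so it does not enter the proof.
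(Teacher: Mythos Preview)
Your proposal is correct and matches the paper's approach: the paper simply constructs $\p{x}$, $\p{h}$, $\p{cx}$, $\p{ch}$, $\p{ccx}$ in \QPiLang{}, observes that any circuit over this gate set can be expressed, and cites \cite{aharonov2003simpleprooftoffolihadamard,shi2002toffolicontrollednotneedlittle} for computational universality. Your write-up is more explicit about why $\pctrl{c}$ denotes the block-diagonal controlled gate and why the rig structure supports arbitrary wiring, but these elaborations are in the spirit of the paper's terse treatment rather than a different route.
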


When interpreting \QPiLang{} into \Unitary{}, our first main theorem is as follows.

\begin{theorem}[Soundness and completeness of \QPiLang{}]\label{thm:qpi}
    If $c_1,c_2$ are two terms of \QPiLang{}, then $c_1\fromto_2 c_2$ is a level-2 combinator of \QPiLang{} if and only if $\sem{c_1}=\sem{c_2}$.
\end{theorem}
\begin{proof}
    The proofs are deferred to \cref{sec:qpi_soundness_and_completeness}.
\end{proof}

The following lemma will be useful in \cref{sec:soundness_and_completeness}.
\begin{lemma}\label{lem:q_hhcxhh}
    In \QPiLang{},
    $(\p{h}\times \p{h})\seqq\p{cx}\seqq(\p{h}\times\p{h}) \fromto_2{} \swapt\seqq\p{cx}\seqq\swapt$.
\end{lemma}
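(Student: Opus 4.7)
The lemma asserts the well-known quantum identity $(H \otimes H) \cdot CX \cdot (H \otimes H) = SWAP \cdot CX \cdot SWAP$, which states that conjugating $CX$ by Hadamards reverses the roles of control and target, just as conjugating by $SWAP$ does. My plan is to reduce both sides to the common form $(\p{h}\times\pid)\seqq\pctrl{\p{z}}\seqq(\p{h}\times\pid)$, where $\p{z} \defeq \pid + (-1)$ encodes the Pauli-$Z$ gate.

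The technical workhorse is a ``controlled-conjugation on target'' identity: for every $c : \bb{2} \fromto \bb{2}$,
\[
(\pid\times\p{h})\seqq\pctrl{c}\seqq(\pid\times\p{h}) \fromto_2 \pctrl{(\p{h}\seqq c\seqq\p{h})}.
\]
I would derive this by unfolding $\pctrl{c} = \dist\seqq(\pid+(\pid\times c))\seqq\factor$, pushing $\pid\times\p{h}$ across $\dist$ and $\factor$ via their naturality, and collapsing $\p{h}\seqq\p{h} \fromto_2 \pid$ on the identity summand using \cref{eq:e2}. Applied with $c = \p{x}$, together with \cref{eq:e3} ($\p{h}\seqq\p{x}\seqq\p{h} \fromto_2 \p{z}$) and the bifunctor law $\p{h}\times\p{h} \fromto_2 (\p{h}\times\pid)\seqq(\pid\times\p{h})$, this rewrites the LHS as $(\p{h}\times\pid)\seqq\pctrl{\p{z}}\seqq(\p{h}\times\pid)$. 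For the RHS, the same identity in the direction $\p{cx} \fromto_2 (\pid\times\p{h})\seqq\pctrl{\p{z}}\seqq(\pid\times\p{h})$ (valid since $\p{h}\seqq\p{z}\seqq\p{h} \fromto_2 \p{x}$, derivable from \cref{eq:e2,eq:e3}) together with naturality of $\swapt$ (to convert $\swapt\seqq(\pid\times\p{h}) \fromto_2 (\p{h}\times\pid)\seqq\swapt$) reduces the RHS to $(\p{h}\times\pid)\seqq[\swapt\seqq\pctrl{\p{z}}\seqq\swapt]\seqq(\p{h}\times\pid)$.

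Matching LHS and RHS then requires the symmetry of $CZ$, namely $\swapt\seqq\pctrl{\p{z}}\seqq\swapt \fromto_2 \pctrl{\p{z}}$, and this is the main obstacle. Intuitively it holds because $\pctrl{\p{z}}$ applies the scalar $(-1)$ only on the $\bbo\times\bbo$ (``$|11\rangle$'') summand of $(\bbo+\bbo)\times(\bbo+\bbo)$, which $\swapt$ fixes up to coherence. Formally, one expands $\pctrl{\p{z}}$ through $\dist$ and $\factor$ to localize $(-1)$ on that summand, then uses the centrality of scalars in a symmetric monoidal category together with the fact that $\swapt$ restricted to $\bbo\times\bbo$ is the identity via unitor coherence. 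All remaining steps are routine rig-category manipulations supported by $\Pi$'s level-2 combinators.
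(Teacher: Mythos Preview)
Your proof is correct and follows essentially the same approach as the paper: both split $\p{h}\times\p{h}$ via bifunctoriality, push $\pid\times\p{h}$ through the $\pctrl{}$ construct using distributivity, apply \cref{eq:e2,eq:e3} to convert between $\p{x}$ and $\p{z}$, and hinge on the $\swapt$-invariance of $\pctrl{\p{z}}$ (what you isolate as ``CZ symmetry''; the paper invokes it as ``naturality of $\swapt$'' on the expanded form $\pid_\bbo+\pid_\bbo+\pid_\bbo+(-1)$). The only difference is organizational---you reduce both sides to a common midpoint, whereas the paper rewrites linearly from LHS through that midpoint to RHS.
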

\begin{proof}
    Omitting the type conversion isomorphisms $\assocrp$, $\assoclp$, $\unitep$, $\unitip$, $\assocrt$, $\assoclt$, $\unitet$, $\unitit$, $\dist$, $\factor$, $\absorb$, and $\factorz$:
    \begingroup
    \allowdisplaybreaks
    \begin{align*}
        &(\p{h}\times \p{h})\seqq\p{cx}\seqq(\p{h}\times\p{h}) \\
        \fromto_2{}& (\p{h}\times \p{h}) \seqq (\pid_{\bb{2}}+\p{x})\seqq (\p{h}\times \p{h}) \\
        \fromto_2{}& (\p{h}\times \pid_{\bb{2}})\seqq (\pid_{\bb{2}}\times \p{h}) \seqq (\pid_{\bb{2}}+\p{x})\seqq (\pid_{\bb{2}}\times \p{h})\seqq (\p{h}\times \pid_{\bb{2}}) \tag{by bifunctoriality of $\times$} \\
        \fromto_2{}& (\p{h}\times \pid_{\bb{2}})\seqq (\p{h} + \p{h}) \seqq (\pid_{\bb{2}}+\p{x})\seqq (\p{h} + \p{h})\seqq  (\p{h}\times \pid_{\bb{2}}) \tag{by distributivity} \\
        \fromto_2{}& (\p{h}\times \pid_{\bb{2}})\seqq (\pid_{\bb{2}}+\p{h}\seqq\p{x}\seqq\p{h})\seqq  (\p{h}\times \pid_{\bb{2}}) \tag{by bifunctoriality of $+$ and \cref{eq:e2}} \\
        \fromto_2{}& (\p{h}\times \pid_{\bb{2}})\seqq (\pid_{\bbo}+\pid_{\bbo}+\pid_{\bbo}+(-1))\seqq  (\p{h}\times \pid_{\bb{2}}) \tag{by \cref{eq:e3}} \\
        \fromto_2{}& (\p{h}\times \pid_{\bb{2}})\seqq\swapt\seqq (\pid_{\bbo}+\pid_{\bbo}+\pid_{\bbo}+(-1))\seqq\swapt\seqq(\p{h}\times \pid_{\bb{2}}) \tag{by naturality of $\swapt$} \\
        \fromto_2{}&\swapt\seqq(\pid_{\bb{2}}\times \p{h})\seqq (\pid_{\bb{2}}+\pid_{\bbo}+(-1))\seqq(\pid_{\bb{2}}\times \p{h})\seqq \swapt \tag{by naturality of $\swapt$} \\
        \fromto_2{}&\swapt\seqq(\p{h}+ \p{h})\seqq (\pid_{\bb{2}}+\pid_{\bbo}+(-1))\seqq(\p{h}+ \p{h})\seqq \swapt \tag{by distributivity} \\
        \fromto_2{}& \swapt\seqq (\pid_{\bb{2}}+\p{h}\seqq(\pid_{\bbo}+(-1))\seqq\p{h})\seqq \swapt \tag{by bifunctoriality of $+$ and \cref{eq:e2}} \\
        \fromto_2{}& \swapt\seqq (\pid_{\bb{2}}+\p{x})\seqq \swapt \tag{by \cref{eq:e3}} \\
        \fromto_2{}& \swapt\seqq\p{cx}\seqq\swapt. \tag*{\qedhere} 
    \end{align*}
    \endgroup
\end{proof}

\subsection{Hiding Minus One}

When incorporating $(-1)$ into ``higher-dimensional terms'',
e.g., using \cref{eq:e3} to simulate $(-1)$ in type $\bbo+\bbo$, we do not need $(-1)$ as a primitive isomorphism.
Consequently, we extend $\Pi$ to \HPiLang{} as shown in \cref{fig:hadamard_pi} by only introducing $\hadamard$.

\begin{figure}[t]
    \textbf{Syntax}\hfill{}
    \begin{align*}
        \mathit{iso} &\defeq{} \cdots \mid \hadamard \tag{isomorphisms}
    \end{align*}
    \textbf{Types}\hfill{}
    \begin{align*}
        \hadamard &: \bbo+\bbo \fromto \bbo+\bbo
    \end{align*}
    \textbf{Equations}\hfill{}
    \begin{gather*}
        \hadamard^2 \fromto_2 \pid_{\bbo+\bbo} \tag{H1}\label{eq:h1} \\
        \begin{aligned}
            & (\swapp+\pid_{\bbo}) \seqq \assocrp \seqq (\pid_{\bbo}+(\hadamard\seqq\swapp\seqq\hadamard))\seqq \assoclp \\
            &\fromto_2{} \assoclp\seqq (\pid_{\bbo}+(\hadamard\seqq\swapp\seqq\hadamard))\seqq \assocrp\seqq(\swapp+\pid_{\bbo})
        \end{aligned} \tag{H2}\label{eq:h2} \\
        \begin{prooftree}
            \hypo{{c_1: b_1 \fromto b_1}}
            \hypo{{c_2: b_2 \fromto b_2}}
            \hypo{c_1+c_2 \fromto_2 \pid_{b_1}+\pid_{b_2}}
            \infer3{c_1 \fromto_2 \pid_{b_1}}
        \end{prooftree} \tag{H3}\label{eq:h3}
    \end{gather*}
    \textbf{Additional Terms}\hfill{}
    \begin{align*}
        \sem{\hadamard} &= \frac{1}{\sqrt{2}}\begin{pmatrix*}[r]
            1 & 1 \\
            1 & -1
        \end{pmatrix*}
    \end{align*}
    \caption{\HPiLang{} extension of $\Pi$ and its semantics of additional isomorphisms in the rig groupoid \Unitary{}.}\label{fig:hadamard_pi}
\end{figure}

With different lines representing sum types, \cref{eq:h1,eq:h2,eq:h3} have the following graphical illustrations, where we make a shorthand:
$\mathit{hxh} = \hadamard\seqq\swapp\seqq\hadamard$.
\begingroup
\begin{equation*}
    \begin{quantikz}[row sep=1mm]
        \lstick[2,brackets=none]{$+$\hspace{-.65cm}}\push{\bbo} & \gate[2]{\hadamard} & \gate[2]{\hadamard} & \midstick[2,brackets=none]{$\fromto_2$} & \qw\\ 
        \push{\bbo}& & & & \qw 
    \end{quantikz} \tag{H1}
\end{equation*}
\begin{equation*}
    \begin{quantikz}[row sep=2mm,column sep=4mm]
        \lstick[2,brackets=none]{$+$\hspace{-.65cm}}\push{\bbo} & \permute{2,1} & & \midstick[3,brackets=none]{$\fromto_2$} & & \permute{2,1} & \\ 
        \lstick[2,brackets=none]{$+$\hspace{-.65cm}}\push{\bbo} & & \gate[2]{\mathit{hxh}} & & \gate[2]{\mathit{hxh}} & &\\ 
        \push{\bbo} & & & & & & 
    \end{quantikz} \tag{H2}
\end{equation*}
\begin{equation*}
    \begin{quantikz}[column sep=4mm]
        \lstick[2,brackets=none]{$+$\hspace{-.65cm}}\push{b_1} & \gate{c_1} & \midstick[2,brackets=none]{$\fromto_2$} & & \\
        \push{b_2} & \gate{c_2} & & &
    \end{quantikz} \Rightarrow
    \begin{quantikz}[column sep=4mm]
        & \gate{c_1} & \midstick[1,brackets=none]{$\fromto_{2}$} & &
    \end{quantikz} \tag{H3}
\end{equation*}
\endgroup
\cref{eq:h2} also has an equivalent form:
\[ \begin{quantikz}[row sep=2mm,column sep=4mm]
    \lstick[2,brackets=none]{$+$\hspace{-.65cm}}\push{\bbo}& \permute{2,1} & & \permute{2,1} & \midstick[3,brackets=none]{$\fromto_2$} & & \\ 
    \lstick[2,brackets=none]{$+$\hspace{-.65cm}}\push{\bbo}& & \gate[2]{\mathit{hxh}} & & & \gate[2]{\mathit{hxh}} & \\ 
    \push{\bbo}& & & & & &
\end{quantikz}
\]
It asserts that $\mathit{hxh}$ acts as an identity on its first line.
Compared with $(-1):\bbo\fromto \bbo$, the simulation $\mathit{hxh}:\bbo+\bbo\fromto\bbo+\bbo$ needs extra auxiliary space, which may bring some redundancies for the provable equations in \HPiLang{}.
For example, 
\[\pid_{\bb{2}}+(\hadamard\seqq\swapp)^8 \fromto_2 \pid_{\bb{2}}+\pid_{\bb{2}}\]
is provable in \HPiLang{} (see \cref{prop:hx8}).
To eliminate the cumbersome identity terms, we still need the cancellative condition \cref{eq:h3} for $+$.

\begin{definition}
    [Equational theory for \HPiLang{}]\label{def:hpi_eq}
    The equational theory for \HPiLang{} contains all level-2 combinators of $\Pi$ and \cref{eq:h1,eq:h2,eq:h3}.
\end{definition}

\begin{proposition}\label{prop:hx8}
    In \HPiLang{},
    \[(\hadamard\seqq\swapp)^8 \fromto_2 \pid_{\bbo+\bbo}.\]
\end{proposition}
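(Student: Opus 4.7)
The plan is to move to the three-summand type $(\bbo+\bbo)+\bbo$ via \cref{eq:h3}, where \cref{eq:h2} provides enough commutation to mimic the scalar-$(-1)$ reasoning available in \QPiLang. Semantically, $\hadamard\seqq\swapp$ represents a rotation by $\pi/4$ in the real plane spanned by the two summands, so it has order~$8$ in~$\Unitary$; this is why the proposition is true in the intended model.

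The first step is to telescope adjacent Hadamards: by \cref{eq:h1}, associativity of $\seqq$, and bifunctoriality, $(\hadamard\seqq\swapp)^8 \fromto_2 (\mathit{hxh}\seqq\swapp)^4$. By \cref{eq:h3}, it suffices to show $(\mathit{hxh}\seqq\swapp)^4 + \pid_\bbo \fromto_2 \pid$ in $(\bbo+\bbo)+\bbo$; bifunctoriality then rewrites this as $(B\seqq S)^4 \fromto_2 \pid$, where $B = \mathit{hxh}+\pid_\bbo$ and $S = \swapp+\pid_\bbo$. From \cref{eq:h1} we immediately get $B^2 \fromto_2 \pid$ and $S^2 \fromto_2 \pid$; these alone, however, generate an \emph{infinite} dihedral group on $B,S$, so we cannot yet conclude $(B\seqq S)^4 \fromto_2 \pid$.

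The necessary additional relation comes from \cref{eq:h2}, which yields $S\seqq M \fromto_2 M\seqq S$ for the morphism $M = \assocrp\seqq(\pid_\bbo+\mathit{hxh})\seqq\assoclp$. Semantically, $B$ represents $(-1)$ on the second summand, $M$ represents $(-1)$ on the third, and $S$ swaps the first two. Mirroring the \QPiLang chain $(\mathit{hxh}\seqq\swapp)^2 \fromto_2 (\pid+(-1))\seqq((-1)+\pid) \fromto_2 (-1)+(-1)$, set $B' := S\seqq B\seqq S$, so that $(B\seqq S)^2 \fromto_2 B\seqq B'$ (immediate from $S^2 \fromto_2 \pid$) and $(B')^2 \fromto_2 S\seqq B^2\seqq S \fromto_2 \pid$. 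Then, once the commutation $B\seqq B' \fromto_2 B'\seqq B$ is established, $(B\seqq S)^4 \fromto_2 (B\seqq B')^2 \fromto_2 B^2\seqq (B')^2 \fromto_2 \pid$; a final application of \cref{eq:h3} removes the auxiliary summand.

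The main obstacle is therefore the commutation $B\seqq B' \fromto_2 B'\seqq B$. This does not follow directly from \cref{eq:h2}, whose right-hand morphism $M$ represents ``$\mathit{hxh}$ on summands disjoint from $B$'', not the conjugate $B'$ that we actually need. The plan for establishing it is to unfold $\mathit{hxh} = \hadamard\seqq\swapp\seqq\hadamard$, use bifunctoriality to push the $\hadamard$-factors to the outside, and then use \cref{eq:h2} to route the remaining $\swapp+\pid_\bbo$ factors past $M$, so that the two $(-1)$-like components carried by $B$ and $B'$ can be shown to live on disjoint summands after the rearrangement and hence commute. This is the technical heart of the proposition, and I would expect the detailed derivation to be factored through the kind of commutation lemma systematised in \cref{sec:soundness_and_completeness}.
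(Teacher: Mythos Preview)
Your reduction on three summands is circular. Write $H_1=\hadamard+\pid_\bbo$ so that $B=H_1\seqq S\seqq H_1$ and $B'=S\seqq B\seqq S$; then $B\seqq B'=(H_1\seqq S)^4$ and $B'\seqq B=(S\seqq H_1)^4=((H_1\seqq S)^{-1})^4$, so $B\seqq B'\fromto_2 B'\seqq B$ holds iff $(H_1\seqq S)^8\fromto_2\pid$. But $(H_1\seqq S)^8=(\hadamard\seqq\swapp)^8+\pid_\bbo$, which is precisely the statement you set out to prove after the first application of \cref{eq:h3}. The extra generator $M=\pid_\bbo+\mathit{hxh}$ does not help: the only relation \cref{eq:h2} supplies is $S\seqq M\fromto_2 M\seqq S$, and together with $B^2,S^2,M^2\fromto_2\pid$ this presents a free product with amalgamation in which $(B\seqq S)^4$ is not the identity. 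Any attempt to relate $B$ and $M$ (say via the identity $B\fromto_2 S_2\seqq M\seqq S_2$ for $S_2=\pid_\bbo+\swapp$) reduces $B\seqq M\fromto_2 M\seqq B$ to $(S_2\seqq M)^4\fromto_2\pid$, i.e.\ to $\pid_\bbo+(\swapp\seqq\hadamard)^8\fromto_2\pid$, which is again the original goal.

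The missing idea is to add \emph{two} auxiliary summands rather than one. The paper works on $(\bbo+\bbo)+(\bbo+\bbo)$ and first derives the key rewriting lemma
\[
(\pid_\bbo+\swapp)\seqq(\pid_\bbo+\mathit{hxh})\seqq(\pid_\bbo+\swapp)\ \fromto_2\ \mathit{hxh}+\pid_\bbo
\]
(which follows from \cref{eq:h2} and completeness of $\Pi$): conjugating $\mathit{hxh}$ on summands $(2,3)$ by $\swapp$ on $(2,3)$ yields $\mathit{hxh}$ on $(1,2)$. Using this twice, $\pid_{\bbo+\bbo}+(\hadamard\seqq\swapp)^4$ rewrites to a conjugate of $\mathit{hxh}_{(1,2)}\seqq\mathit{hxh}_{(3,4)}$, where the two factors sit on \emph{disjoint} summand pairs and therefore commute by bifunctoriality. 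Squaring then gives $\pid$ immediately, and \cref{eq:h3} finishes. On three summands no such disjoint placement is available, which is why your plan cannot close.
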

\begin{proof}
    \def\scaleratio{.9}
    The formal proof can be obtained from the following graphical equations. Consider
    \begin{align}
        \scalebox{\scaleratio}{\begin{quantikz}[row sep=2mm,column sep=1mm]
            \lstick[2,brackets=none]{$+$\hspace{-.5cm}}& & & &\\
            \lstick[2,brackets=none]{$+$\hspace{-.5cm}}& \permute{2,1} & \gate[2]{\mathit{hxh}} & \permute{2,1} &\\
            & & & &
        \end{quantikz}}
        \overset{\text{\cref{eq:h2}}}{\fromto_2{}}&
        \scalebox{\scaleratio}{\begin{quantikz}[row sep=2mm,column sep=1mm]
            & & \permute{2,1} & & \permute{2,1}& &\\
            & \permute{2,1} & &\gate[2]{\mathit{hxh}} & & \permute{2,1} &\\
            & & & & & &
        \end{quantikz}}
        \overset{\text{completeness of $\Pi$}}{\fromto_2{}}
        \scalebox{\scaleratio}{\begin{quantikz}[row sep=2mm,column sep=2mm]
            & \gate[2]{\mathit{hxh}} & \\
            & & \\
            & &
        \end{quantikz}} \label{eq:hxh1}
    \end{align}
    Then,
    \begingroup
    \allowdisplaybreaks
    \begin{gather*}
        \scalebox{\scaleratio}{\begin{quantikz}[row sep=3mm,column sep=4mm]
            \lstick[2,brackets=none]{$+$\hspace{-.5cm}}& &\\
            \lstick[2,brackets=none]{$+$\hspace{-.5cm}}& &\\
            \lstick[2,brackets=none]{$+$\hspace{-.5cm}}& \gate[2]{(\hadamard\seqq\swapp)^4} &\\
            & &
        \end{quantikz}}
        \fromto_2{}
        \scalebox{\scaleratio}{\begin{quantikz}[row sep=3mm,column sep=4mm]
            \lstick[2,brackets=none]{$+$\hspace{-.5cm}} & & & & &\\
            \lstick[2,brackets=none]{$+$\hspace{-.5cm}} & & & & &\\
            \lstick[2,brackets=none]{$+$\hspace{-.5cm}}& \gate[2]{\mathit{hxh}} & \permute{2,1} & \gate[2]{\mathit{hxh}} & \permute{2,1} &\\
            & & & & &
        \end{quantikz}} 
        \overset{\text{\cref{eq:hxh1}}}{\fromto_2{}}
        \scalebox{\scaleratio}{\begin{quantikz}[row sep=2mm,column sep=4mm]
            & & & \\
            & & \gate[2]{\mathit{hxh}} & \\
            & \gate[2]{\mathit{hxh}} & & \\
            & & & 
        \end{quantikz}} \\
        \overset{\text{\cref{eq:h2}}}{\fromto_2{}}
        \scalebox{\scaleratio}{\begin{quantikz}[row sep=2mm,column sep=2mm]
            & & & & & \\
            & \permute{2,1} & & \permute{2,1} & \gate[2]{\mathit{hxh}} & \\
            & & \gate[2]{\mathit{hxh}} & & & \\
            & & & & &
        \end{quantikz}}
        \overset{\text{\cref{eq:hxh1}}}{\fromto_2{}}
        \scalebox{\scaleratio}{\begin{quantikz}[row sep=2mm,column sep=2mm]
            & & & \gate[2]{\mathit{hxh}} & & \\
            & \permute{2,1} & & & \permute{2,1} & \\
            & & \gate[2]{\mathit{hxh}} & & & \\
            & & & & &
        \end{quantikz}}
    \end{gather*}
    \endgroup
    It follows that
    \[ \pid_{\bbo+\bbo}+(\hadamard\seqq\swapp)^8 \fromto_2 \pid_{\bbo+\bbo}+\pid_{\bbo+\bbo}. \]
    Finally, \cref{eq:h3} gives $(\hadamard\seqq\swapp)^8 \fromto_2 \pid_{\bbo+\bbo}$.
\end{proof}

Expressivity, soundness, and completeness extend from \QPiLang{} to \HPiLang.

\begin{theorem}
    [\HPiLang{} expressivity]
    \HPiLang{} is computationally universal for quantum circuits.
\end{theorem}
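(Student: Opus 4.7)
The plan is to appeal directly to the Aharonov--Shi theorem~\cite{aharonov2003simpleprooftoffolihadamard,shi2002toffolicontrollednotneedlittle} that $\{H,\mathit{CCX}\}$ is computationally universal for quantum computing, and to exhibit both gates as \HPiLang{} terms whose denotations in \Unitary{} are the standard Hadamard and Toffoli matrices.

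First, I would note that \HPiLang{} contains all of $\Pi$, so every classical reversible bijection has a term representation. Using the derived controlled-construct $\pctrl{c} = \dist\seqq(\pid+(\pid\times c))\seqq\factor$, as already spelled out for \QPiLang{}, one obtains the terms $\p{x} = \swapp$, $\p{cx} = \pctrl{\p{x}}$, and $\p{ccx} = \pctrl{\p{cx}}$; these live entirely in $\Pi\subseteq\HPiLang{}$, and their denotations in \Unitary{} are the Pauli-$X$, CNOT, and Toffoli matrices respectively (by \cref{def:pi_semantics}). The Hadamard gate is given by the primitive $\hadamard$, interpreted as the standard $H$ matrix according to \cref{fig:hadamard_pi}.

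Second, the combinators $\seqq$, $+$, and $\times$ denote sequential composition, direct sum, and tensor product in \Unitary{}. Consequently any $n$-qubit quantum circuit built from $\{H,\mathit{CCX}\}$ by these operations can be transcribed as an \HPiLang{} term of type $\bb{2}^{\times n}\fromto\bb{2}^{\times n}$, and its denotation agrees with the unitary implemented by the circuit. The polylogarithmic-overhead simulation guaranteed by Aharonov--Shi for circuits over $\{H,\mathit{CCX}\}$ then transfers verbatim, giving computational universality of \HPiLang{}.

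I do not expect an obstacle in this particular argument: the universal gate set lives in the classical reversible fragment plus the single added primitive $\hadamard$, so the proof is a direct transcription of the corresponding one for \QPiLang{}, and the absence of $(-1)$ as a primitive of \HPiLang{} plays no role since the Aharonov--Shi set does not mention it. The harder question --- precisely which unitaries \HPiLang{} represents, and whether equivalent terms are provably equal --- is the content of \cref{sec:soundness_and_completeness} rather than of the present theorem.
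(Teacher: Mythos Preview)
Your proposal is correct and matches the paper's own argument: the paper states the theorem without an explicit proof block, relying on the preceding text that constructs $\p{x},\p{cx},\p{ccx}$ in $\Pi\subseteq\HPiLang{}$ and $\p{h}=\hadamard$, then invokes the Aharonov--Shi result that $\{H,\mathit{CCX}\}$ is computationally universal. Your observation that the absence of $(-1)$ is irrelevant here is exactly the point the paper makes when it says ``Expressivity, soundness, and completeness extend from \QPiLang{} to \HPiLang.''
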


\begin{theorem}[Soundness and completeness of \HPiLang{}]
    \label{thm:hpi}
    If $c_1,c_2$ are two terms of \HPiLang{}, then $c_1\fromto_2 c_2$ is a level-2 combinator of \HPiLang{} if and only if $\sem{c_1}=\sem{c_2}$.
\end{theorem}
\begin{proof}
    The proofs are deferred to \cref{sec:hpi_soundness_and_completeness}.
\end{proof}

Consequently, we can define any categorical model for \HPiLang as a rig groupoid with a distinguished morphism $\hadamard$ satisfying \cref{eq:h1,eq:h2,eq:h3}. The semantics of \HPiLang are sound in the sense that \Unitary{} is a model. The semantics are complete in the sense that two syntactic terms are equal in any model (according to the equations) if and only if their interpretations are equal as morphisms in \Unitary{}.

\section{Presenting the Groups \texorpdfstring{\HPiUnitary{}}{On(Z[1/sqrt(2)])}}\label{sec:relations}

As discussed in \cref{sec:language}, the unitaries that can be accurately represented by \QPiLang{} and \HPiLang{} are those with entries in the ring \HPiRing{}.
In this section, we will provide a finite presentation of the group \HPiUnitary{} of $n$-dimensional orthogonal (unitary) matrices with entries in \HPiRing{}.
This presentation is fundamental to the completeness proof for our equational theories for \QPiLang{} and \HPiLang{}.
The techniques used here are largely inspired by~\cite{greylyn2014generators,Amy2020numbertheoretic,li2021generators,bian2021generators}.

\subsection{Generators and Exact Synthesis for \texorpdfstring{\HPiUnitary{}}{On(Z[1/sqrt(2)])}}

First, recall some useful definitions and lemmas from~\cite{Amy2020numbertheoretic}.

\begin{definition}
    The rings $\ZZ[\sqrt{2}]$ and $\ZZ[\frac{1}{\sqrt{2}}]$ are defined as
    \begin{align*}
        \ZZ[\sqrt{2}] 
        & = \{\,a+b{\sqrt{2}} \mid a,b\in\ZZ\,\}, \\
        \HPiRing{} 
        & = \{\, (a+b\sqrt{2})/2^k \mid a,b\in\ZZ,k\in\NN\,\}
        = \{\,v/\sqrt{2}^k\mid v\in \ZZ[\sqrt{2}], k \in\NN\,\}.
    \end{align*}
\end{definition}

\begin{definition}\label{def:lde}
    The \emph{least denominator exponent} of an element $v$ of $\HPiRing$ is defined as
    \[\lde(v) = \min\cbra*{\,k\in\NN \mid \sqrt{2}^kv \in \ZZ[\sqrt{2}]\,}.\]
\end{definition}

We also extend \cref{def:lde} to vectors or matrices over \HPiRing{}: if $M$ is a vector or matrix over \HPiRing{}, then $\lde(M)$ is the least natural number of $k$ such that $\sqrt{2}^k M$ is a vector or matrix over $\ZZ\sbra{\sqrt{2}}$, respectively.

\begin{definition}
    The group of $n$-dimensional orthogonal matrices with entries in \HPiRing{} is denoted \HPiUnitary{}.
\end{definition}

\begin{definition}
    Let $M$ be a $m\times m$ matrix, let $m<n$, and let $1\leq a_1,a_2,\ldots,a_m\leq n$. An $m$-level matrix of type $M$ is the $n\times n$ matrix $M_{[a_1,\ldots,a_m]}$ defined by 
    \[(M_{[a_1,\ldots,a_m]})_{i,j} = \begin{cases}
        M_{i',j'} & \text{ if $i=a_{i'}$ and $j = a_{j'}$,} \\
        \delta_{i,j} & \text{ otherwise.}
    \end{cases}\]
\end{definition}

\begin{definition}
    The set $\cG_n$ of \emph{$n$-dimensional generators} is the subset of \HPiUnitary{} defined as
    \begin{equation*}
        \cG_n = \left\{\,\CZ{a}, \CX{b,c}, \CH{b,c} \mid 1\leq a \leq n,1 \leq b < c \leq n \right\}.
    \end{equation*}
\end{definition}

\begin{lemma}
    [{\cite[Lemma~5.13]{Amy2020numbertheoretic}}]\label{lem:amy1}
    Let $v$ be an $n$-dimensional unit vector over $\HPiRing$ with $\lde(v) > 0$. There exist generators $G_1,G_2,\ldots,G_{\ell} \in \cG_n$ satisfying $\lde(G_1\cdots G_\ell v) < \lde(v)$.
\end{lemma}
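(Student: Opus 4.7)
My plan is to give a parity argument in $\ZZ[\sqrt{2}]$, exploiting that $\CH{i,j}$ halves the denominator on a pair $(i,j)$ while $\CX{b,c}$ and $\CZ{a}$ rearrange entries and adjust signs. First, I will write $k = \lde(v) > 0$ and $v = u/\sqrt{2}^k$ with $u \in \ZZ[\sqrt{2}]^n$, where at least one $u_i$ is not divisible by $\sqrt{2}$ (else $\lde(v) < k$). Decompose each $u_i = a_i + b_i\sqrt{2}$ with $a_i, b_i \in \ZZ$. The unit-norm condition $\sum_i u_i^2 = 2^k$ expands, using the $\ZZ$-basis $\cbra*{1, \sqrt{2}}$, into an integer equation $\sum_i (a_i^2 + 2b_i^2) = 2^k$ together with the exact identity $\sum_i a_i b_i = 0$.

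From these I will extract two parity constraints. Reducing the first equation modulo $2$ yields that $S \defeq \cbra*{i : a_i \text{ odd}}$ has even cardinality, and $|S| \geq 2$ since $u \not\equiv 0 \pmod{\sqrt{2}}$. Reducing the identity $\sum_i a_i b_i = 0$ modulo $2$ shows that $|S_B|$ is even, where $S_B \defeq \cbra*{i \in S : b_i \text{ odd}}$, and hence $|S_A| \defeq |S \setminus S_B|$ is even as well. Therefore at least one of the classes $S_A, S_B$ contains a pair of distinct indices.

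Next comes the reduction step. I will pick two indices $i, j$ from the same class, use $\CX$'s (which include all transpositions on positions) to swap them into the pair on which $\CH{i,j}$ acts, and apply $\CH{i,j}$. The new scaled entries are $(u_i \pm u_j)/\sqrt{2}$; since $a_i + a_j$ is even these lie in $\ZZ[\sqrt{2}]$, so the lde does not increase. A direct computation $(u_i + u_j)/\sqrt{2} = (b_i + b_j) + \tfrac{a_i + a_j}{2}\sqrt{2}$ shows that the new integer part at position $i$ is $b_i + b_j$, which is even by the matching-class choice; the analogous expression holds at $j$. Hence $S$ strictly shrinks by two while entries outside $\cbra*{i, j}$ are untouched. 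Prepending $\CZ$'s to correct signs if needed is harmless since they preserve all residues modulo $\sqrt{2}$.

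Iterating until $S = \emptyset$ makes every entry of the final scaled vector divisible by $\sqrt{2}$, so $\lde(G_1 \cdots G_\ell v) < k$ as required. The main obstacle I expect is the mod-$2$ refinement: a mod-$\sqrt{2}$ analysis alone only gives $|S|$ even, leaving open the degenerate configuration $|S_A| = |S_B| = 1$ in which no single $\CH{i,j}$ applied to those two indices reduces the residue weight. The extra exact identity $\sum_i a_i b_i = 0$, derived from the $\sqrt{2}$-component of the norm equation rather than from a mere residue reduction, is precisely what rules out this degeneracy and lets the induction go through.
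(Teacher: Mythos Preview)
The paper does not give its own proof of this lemma; it is quoted from \cite{Amy2020numbertheoretic}. Your argument is correct and is essentially the standard one, rephrased in terms of integer parities: your classes $S_A$ and $S_B$ are exactly the residue classes $\equiv 1$ and $\equiv 1+\sqrt{2}$ modulo $2$ in $\ZZ[\sqrt{2}]$ that lines~5--8 of \cref{alg:exact_synthesis} use to select the pair on which to act with a two-level Hadamard. One cosmetic point: since $\CH{i,j}\in\cG_n$ for every $i<j$, the preliminary $\CX$-swaps and the $\CZ$ sign corrections you mention are not needed for the reduction.
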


\begin{lemma}
    [{\cite[Lemma~5.14]{Amy2020numbertheoretic}}]\label{lem:amy2}
    Let $v$ be an $n$-dimensional unit vector over $\HPiRing$. There exist generators $G_1,G_2,\ldots,G_{\ell} \in \cG_n$ satisfying $G_1\cdots G_\ell v = e_j$, where $e_j$ is the $j$-th standard basis vector.
\end{lemma}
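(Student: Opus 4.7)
The plan is to induct on $\lde(v)$, using \cref{lem:amy1} as the engine of the reduction. While $\lde(v) > 0$, \cref{lem:amy1} supplies generators whose product strictly decreases $\lde$; since $\lde$ takes values in $\NN$, iterating this finitely many times produces generators $G_1,\ldots,G_k$ such that $v' \defeq G_1 \cdots G_k v$ satisfies $\lde(v') = 0$, i.e., every entry of $v'$ lies in $\ZZ[\sqrt{2}]$. Because each $G_i \in \cG_n$ is orthogonal, $v'$ remains a unit vector.

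The next step is to classify unit vectors in $\ZZ[\sqrt{2}]^n$. Writing $v'_i = a_i + b_i \sqrt{2}$ with $a_i, b_i \in \ZZ$, the unit-norm condition $\sum_i (v'_i)^2 = 1$ expands to
\[\sum_i \rbra*{a_i^2 + 2 b_i^2} + 2\sqrt{2}\sum_i a_i b_i = 1.\]
Linear independence of $1$ and $\sqrt{2}$ over $\QQ$ forces $\sum_i (a_i^2 + 2 b_i^2) = 1$ and $\sum_i a_i b_i = 0$. The first equation, being a sum of non-negative integers equal to $1$, admits only the solutions in which a single term equals $1$ and all others vanish; moreover $a_j^2 + 2 b_j^2 = 1$ over $\ZZ$ forces $b_j = 0$ and $a_j = \pm 1$. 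Hence $v' = \pm e_j$ for some $j$, and the constraint $\sum_i a_i b_i = 0$ is automatic.

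Finally, if $v' = e_j$, the product $G_1 \cdots G_k$ already does the job. If $v' = -e_j$, prepending $\CZ{j} \in \cG_n$ yields
\[\CZ{j} G_1 \cdots G_k v \;=\; \CZ{j}(-e_j) \;=\; e_j,\]
completing the synthesis. I do not anticipate any serious obstacle: the bulk of the work is already absorbed into \cref{lem:amy1}, and what remains reduces to an elementary diophantine observation about $\ZZ[\sqrt{2}]$ together with termination of a monotonically decreasing sequence of natural numbers.
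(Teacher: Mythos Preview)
Your proposal is correct and follows the standard argument from the cited reference: iterate \cref{lem:amy1} to drive $\lde$ to zero, then observe that a unit vector in $\ZZ[\sqrt{2}]^n$ must be $\pm e_j$, and absorb the sign with $\CZ{j}$. This is precisely the reasoning underlying the signed-transposition step in \cref{alg:exact_synthesis} (lines 12--18), so there is nothing to add.
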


\begin{theorem}
    [{\cite[Theorem~5.15]{Amy2020numbertheoretic}}]
    An $n$-by-$n$ complex matrix is in $\HPiUnitary$ if and only if it is a product of elements of $\cG_n$.
\end{theorem}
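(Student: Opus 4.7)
The plan is to prove both directions separately, with the forward direction being trivial and the reverse direction proceeding by induction on the dimension $n$, using \cref{lem:amy2} as the key reduction tool.

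For the \emph{if} direction, I would simply observe that each element of $\cG_n$ (namely $\CZ{a}$, $\CX{b,c}$, and $\CH{b,c}$) is an orthogonal matrix whose entries lie in $\HPiRing$: the first two clearly have entries in $\ZZ \subseteq \HPiRing$, and $\CH{b,c}$ has entries in $\cbra{0,\pm 1, \pm \tfrac{1}{\sqrt{2}}} \subseteq \HPiRing$. Since $\HPiUnitary$ is closed under matrix multiplication, any product of generators lies in $\HPiUnitary$.

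For the \emph{only if} direction, I would proceed by induction on $n$. The base case $n=1$ is immediate: $O_1(\HPiRing) = \cbra{\pm 1}$, and the matrix $(1)$ is the empty product while $(-1) = \CZ{1}$. For the inductive step, suppose the result holds for dimension $n-1$, and let $M \in \HPiUnitary$. Consider the last column $v$ of $M$, which is a unit vector over $\HPiRing$. By \cref{lem:amy2}, there exist $G_1, \dots, G_\ell \in \cG_n$ such that $G_1 \cdots G_\ell v = e_j$ for some $1 \le j \le n$. Composing with the swap $\CX{j,n}$ (if $j \neq n$) yields generators $G_1', \dots, G_{\ell'}'$ such that the last column of $M' := G_1' \cdots G_{\ell'}' M$ is $e_n$.

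Since $M'$ is still orthogonal and its $n$-th column is $e_n$, the remaining columns must be unit vectors orthogonal to $e_n$, hence zero in their last coordinate. Orthogonality of rows then forces the last row to equal $e_n^T$, so $M'$ has block form $\begin{pmatrix} M'' & 0 \\ 0 & 1 \end{pmatrix}$ with $M'' \in O_{n-1}(\HPiRing)$. By the inductive hypothesis, $M'' = H_1 \cdots H_m$ for some $H_i \in \cG_{n-1}$; embedding each $H_i$ as an $n$-level matrix acting trivially on the last coordinate gives generators in $\cG_n$ whose product is exactly $M'$. Finally, every generator in $\cG_n$ is self-inverse ($\CZ{a}^2 = \CX{b,c}^2 = I$ by direct computation, and $\CH{b,c}^2 = I$ from $H^2 = I$), so inverting yields $M = G_{\ell'}' \cdots G_1' H_1 \cdots H_m$ as a product of generators.

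The only mildly delicate point is confirming that the orthogonality argument forces the block-diagonal structure and that the embedding $\cG_{n-1} \hookrightarrow \cG_n$ as $(n-1)$-level matrices is well-defined; both are routine. The main conceptual content is packaged into \cref{lem:amy2}, which is quoted from prior work, so the proof essentially amounts to recognising that unit-vector reduction plus orthogonality suffices to peel off one dimension at a time.
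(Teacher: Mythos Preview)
Your proof is correct. The paper does not actually give its own proof of this theorem: it is stated as a citation from \cite{Amy2020numbertheoretic} and then immediately followed by \cref{alg:exact_synthesis}, which realises the ``only if'' direction constructively by the same column-by-column reduction you describe (processing columns $j=n,n-1,\dots,1$ and using \cref{lem:amy1,lem:amy2} at each step). Your induction on $n$ is just a repackaging of that loop---peeling off the last column and recursing on the $(n-1)\times(n-1)$ block is exactly what one pass of the algorithm achieves---so the two arguments are the same in substance.
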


\begin{algorithm}[ht]
	\caption{Exact synthesis for \HPiUnitary{}.}
	\label{alg:exact_synthesis}
	\begin{algorithmic}[1]
    \Require{An element $M$ of \HPiUnitary{}}
    \Ensure{A sequence $\mathbf{W}_1, \ldots,\mathbf{W}_{\ell}$ of words over $\cG_n$ such that $\mathbf{W}_{\ell}\cdots \mathbf{W}_1M = I$}
    \State{$N \gets M$}
    \For{$j \gets n,n-1,\ldots,1$} 
        \State{$k \gets \lde(Ne_j)$}
        \While{$k > 0$} \Comment{Decrease the value of $\lde(Ne_j)$}
            \State{Let $i_1$ be the least integer such that $\sqrt{2}^kN[i_1,j]\equiv 1 \pmod{2}$ or $\equiv 1+\sqrt{2} \pmod{2}$}
            \State{Let $i_2$ be the least integer $> i_1$ such that $\sqrt{2}^kN[i_2,j]\equiv \sqrt{2}^kN[i_1,j] \pmod{2}$} \Comment{See~\cite{Amy2020numbertheoretic} for the existence of $i_1$ and $i_2$}
            \State \textbf{if}\ {$i_1 = 1$}\ \textbf{then}\ $\mathbf{W} \gets \CH{1,i_2}$
            \State \textbf{if}\ {$i_1 > 1$}\ \textbf{then}\ $\mathbf{W} \gets \CH{1,i_2}\CX{1,i_1}$
            \State{Output $\mathbf{W}$}
            \State{$N \gets \mathbf{W}N, k \gets \lde(Ne_j)$}
        \EndWhile
        \State{$v \gets N e_j$}
        \If{$v \neq e_j$} \Comment{Signed transposition}
            \State There are $a\leq j$ and $\tau\in\{0,1\}$ such that $v = (-1)^\tau e_a$.
            \State \textbf{if}\ {$a = j$} \textbf{then}\ $\mathbf{W} \gets \CZ{a}^\tau$ \Comment{Note that $\tau = 1$ in this case}
            \State \textbf{if}\ {$a < j$} \textbf{then}\  $\mathbf{W} \gets \CX{a,j}\CZ{a}^\tau $
            \State{Output $\mathbf{W}$}
            \State{$N \gets \mathbf{W}N$}
        \EndIf
    \EndFor
	\end{algorithmic}
\end{algorithm}

\Cref{alg:exact_synthesis} constructs the concrete decomposition of a given element $M$ of \HPiUnitary{} in terms of generators $\cG_n$.\footnote{Unlike synthesis algorithms designed for efficiency (e.g.,~\cite{amy2023improved}), ours provides a unique normal form for each element in $O_n(\ZZ\sbra{\tfrac{1}{\sqrt{2}}})$, which is essential for proving later \cref{thm:relation_completeness} below.}
Intuitively, for $j$ from $n$ to $1$, \cref{alg:exact_synthesis} outputs generators to decrease the value of $\lde(Me_j)$ (\cref{lem:amy1}) and outputs additional generators to get a new $M$ such that $Me_j = e_j$ (\cref{lem:amy2}).
Finally, the output generators transform the $j$-th column of $M$ into $e_j$, i.e., making $M$ become the identity matrix.
We can precisely characterise this intuition using a notion of \emph{level} similar to the one proposed in~\cite{greylyn2014generators,bian2021generators,li2021generators}.

\begin{definition}
    The \emph{level} of a unitary $M \in \HPiUnitary{}$, denoted $\level(M)$, is a triple $(j,k,l)$ defined as follows:
    \begin{itemize}
        \item $j = \max\rbra*{\{\,0\,\}\cup\{\,i \in \NN \mid 1\leq i \leq n, Me_i \neq e_i\,\}}$;
        \item $k = \lde(Me_j)$, or $k=0$ if $j=0$;  and
        \item $l$ is the number of entries in $\sqrt{2}^kMe_j$ that are $\equiv 1 \pmod{2}$ or $\equiv 1+\sqrt{2}\pmod{2}$, or $l=0$ if $k=0$. 
    \end{itemize}
    Levels are ordered lexicographically: $(j_1,k_1,l_1)<(j_2,k_2,l_2)$ iff ($j_1<j_2$) or ($j_1=j_2$ and $k_1<k_2$) or ($j_1=j_2$ and $k_1=k_2$ and $l_1< l_2$).
\end{definition}

\begin{theorem}
    [Correctness of {\cref{alg:exact_synthesis}}]
    Each output of \cref{alg:exact_synthesis} strictly decreases the level of $N$.
\end{theorem}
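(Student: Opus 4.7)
The plan is to show, for each kind of output $\mathbf{W}$ produced by the algorithm, that $\level(\mathbf{W}N) < \level(N)$ in the lexicographic order on triples $(j,k,l)$. Outputs come in two flavours: the Hadamard-mixing steps from inside the while loop, which are designed to shrink the pair $(k,l)$ while pinning the first component to $j$; and the signed-transposition steps outside the loop, which clear the $j$-th column and thereby drop the first component outright.

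I would open with a structural observation. Since $j$ is the largest index with $Ne_j \neq e_j$ and $Ne_i = e_i$ for $i > j$, orthogonality of $N$ forces the rows of $N$ indexed by $i > j$ to equal $e_i^\top$; in particular, $N_{i,j'} = 0$ whenever $i > j$ and $j' \leq j$, so the support of $Ne_j$ lies in rows $\leq j$. Consequently the indices $i_1, i_2, a$ chosen by the algorithm all lie in $\{1, \ldots, j\}$, each $\mathbf{W}$ acts only on coordinates $\leq j$, and the columns of $\mathbf{W}N$ indexed by $i > j$ remain $e_i$. Hence the first component of $\level(\mathbf{W}N)$ is at most $j$.

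For the while-loop outputs I would set $u = \sqrt{2}^k Ne_j \in \ZZ[\sqrt{2}]^n$ and track the effect of $\mathbf{W} = \CH{1,i_2}$ (prefixed by $\CX{1,i_1}$ when $i_1 > 1$) on the entries of $u$. Only positions $1, i_1, i_2$ change; the Hadamard mixing of entries with matching residue mod $2$ produces entries divisible by $\sqrt{2}$, by an easy calculation in $\ZZ[\sqrt{2}]/(2) \cong \mathbb{F}_2[x]/(x^2)$. So $\sqrt{2}^k \mathbf{W}Ne_j$ still lies in $\ZZ[\sqrt{2}]^n$, giving $\lde(\mathbf{W}Ne_j) \leq k$. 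If the inequality is strict the second component drops; otherwise the number of odd entries $l$ decreases by exactly $2$, because the two mixed entries flip from odd to even, the entry moved to position $i_1$ (which was $u_1$) is already even by the minimality of $i_1$, and all other entries are untouched.

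The signed-transposition case is easier. Once $k = 0$, the vector $v := Ne_j$ lies in $\ZZ[\sqrt{2}]^n$; writing $v_i = x_i + y_i\sqrt{2}$ with $x_i, y_i \in \ZZ$ and expanding $\sum_i v_i^2 = 1$ yields $\sum_i (x_i^2 + 2 y_i^2) = 1$ and $\sum_i x_i y_i = 0$, which forces every $y_i = 0$ and exactly one $x_a = \pm 1$, so $v = (-1)^\tau e_a$ with $a \leq j$. The algorithm's prescribed $\mathbf{W}$ then sends $v$ to $e_j$, the new $j$-th column becomes $e_j$, and the first level component drops strictly below $j$. The main obstacle is the parity bookkeeping in the while-loop case; this is essentially the content of the reduction lemmas of Amy, Glaudell, and Ross cited above, adapted so that the loss of precisely two odd entries registers as a strict decrease in the third component $l$ of the refined level.
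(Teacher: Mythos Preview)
Your proposal is correct and follows the same approach as the paper, which simply cites the proofs of Lemmas~5.13 and~5.14 in Amy--Glaudell--Ross without further detail. You have unpacked exactly the parity bookkeeping those lemmas contain, adapted to the refined level triple $(j,k,l)$: the structural observation that all indices touched lie in $\{1,\ldots,j\}$, the residue calculation in $\ZZ[\sqrt{2}]/(2)$ showing the two Hadamard-mixed entries become divisible by $\sqrt{2}$, the minimality argument that the entry swapped out to position $i_1$ was already even, and the norm computation forcing $v=\pm e_a$ once $k=0$. Your final paragraph even acknowledges this is essentially the cited reduction; the only thing the paper adds beyond the citation is the definition of $\level$ itself, which you use correctly.
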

\begin{proof}
    Follows from the proofs of \cref{lem:amy1,lem:amy2} in~\cite{Amy2020numbertheoretic}.
\end{proof}

\subsection{Finite Presentation of \texorpdfstring{\HPiUnitary{}}{On(Z[1/sqrt(2)])}}

We already know that $\cG_n$ generates \HPiUnitary{}.
However, the complete relations for \HPiUnitary{} remain unknown in the literature.
Before diving into the complete relations we found for \HPiUnitary{}, we recall some useful terminology from~\cite{greylyn2014generators,bian2021generators,li2021generators}.

\begin{definition}[Words over $\cG_n$]
    A \emph{word} $\mathbf{W}$ over $\cG_n$ is a finite sequence $G_1G_2\cdots G_\ell$  of elements of $\cG_n$.
    The \emph{empty} word (where $\ell=0$) is denoted by $\varepsilon$.
    The collection of words over $\cG_n$ is denoted by $\cG_n^*$.
\end{definition}

\begin{definition}
    [Semantics of words]
    A word $\mathbf{W} = G_1G_2\cdots G_\ell$ over $\cG_n$ is interpreted in \HPiUnitary{}, which embeds into \Unitary{}, as
    \[\sem{\mathbf{W}} = G_1\cdot G_2\cdot\ldots\cdot G_\ell. \] 
\end{definition}

\begin{figure*}[ht]
    \begin{minipage}{0.42\linewidth}
        \begin{align}
            \CZ{a}^2 &\approx \varepsilon \label[relation]{rel:a1} \tag{a1}\\
            \CX{a,b}^2 &\approx \varepsilon \label[relation]{rel:a2} \tag{a2}\\
            \notag\\
            \CZ{a}\CZ{b} &\approx \CZ{b}\CZ{a} \label[relation]{rel:b1} \tag{b1}\\
            \CZ{a}\CX{b,c} &\approx \CX{b,c}\CZ{a} \label[relation]{rel:b2} \tag{b2}\\
            \CX{a,b}\CX{c,d} &\approx \CX{c,d}\CX{a,b} \label[relation]{rel:b3} \tag{b3}\\
            \notag\\
            \CZ{a}\CX{a,b} &\approx \CX{a,b}\CZ{b} \label[relation]{rel:c1} \tag{c1}\\
            \CX{b,c}\CX{a,b} &\approx \CX{a,b}\CX{a,c} \label[relation]{rel:c2} \tag{c2}\\
            \CX{a,c}\CX{b,c} &\approx \CX{b,c}\CX{a,b} \label[relation]{rel:c3} \tag{c3}
        \end{align}
    \end{minipage}
    \begin{minipage}{0.57\linewidth}
        \begin{align}
            \CH{a,b}^2 &\approx \varepsilon \label[relation]{rel:a3} \tag{a3}\\
            \notag\\
            \CZ{a}\CH{b,c} &\approx \CH{b,c}\CZ{a} \label[relation]{rel:b4} \tag{b4}\\
            \CX{a,b}\CH{c,d} &\approx \CH{c,d}\CX{a,b} \label[relation]{rel:b5} \tag{b5}\\
            \CH{a,b}\CH{c,d} &\approx \CH{c,d}\CH{a,b} \label[relation]{rel:b6} \tag{b6} \\
            \notag\\
            \CH{b,c}\CX{a,b} &\approx \CX{a,b}\CH{a,c} \label[relation]{rel:c4} \tag{c4}\\
            \CH{a,c}\CX{b,c} &\approx \CX{b,c}\CH{a,b} \label[relation]{rel:c5} \tag{c5}\\
            \notag\\
            \CZ{a}\CZ{b}\CH{a,b} &\approx \CH{a,b}\CZ{a}\CZ{b} \label[relation]{rel:d1} \tag{d1}\\
            \CZ{b}\CH{a,b} &\approx \CH{a,b}\CX{a,b} \label[relation]{rel:d2} \tag{d2}
        \end{align}
    \end{minipage}
    \begin{align}
        (\CH{c,d}\CH{a,c}\CH{b,d})^4 &\approx \CH{a,b}\CH{c,d}
        \label[relation]{rel:d3} \tag{d3} \\
        (\CH{a,c}\CH{b,d}\CH{a,b}\CH{a,c}\CH{b,d}\CX{c,e}\CX{d,f})^3 &\approx \CH{c,e}\CH{d,f}\CH{e,f}\CH{c,e}\CH{d,f}\CX{c,e}\CX{d,f} \label[relation]{rel:d4} \tag{d4}
    \end{align}
    \caption{Relations for \HPiUnitary{}. The indices are distinct.}\label{fig:relations}
\end{figure*}

Now an equivalence relation $\approx$ for \HPiUnitary{} can be presented in \cref{fig:relations}, along with its soundness (\cref{thm:relation_soundness}) and completeness (\cref{thm:relation_completeness}). 
It lets us reason about the equivalence of two words syntactically, rather than interpreting them as matrices and determining equivalence through matrix multiplication.

\begin{theorem}
    [Soundness of $\approx$]\label{thm:relation_soundness}
    Let $\mathbf{G}$ and $\mathbf{H}$ be words over $\cG_n$. If $\mathbf{G} \approx \mathbf{H}$, then $\sem{\mathbf{G}} = \sem{\mathbf{H}}$.
\end{theorem}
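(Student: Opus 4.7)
The plan is to reduce soundness to a finite check of the axiom schemes in \cref{fig:relations}. By construction, $\approx$ is the smallest congruence on the free monoid $\cG_n^*$ (under word concatenation, with unit $\varepsilon$) containing every instance of these axioms. The interpretation $\sem{\cdot}\colon \cG_n^* \to \HPiUnitary{}$ is a monoid homomorphism: $\sem{\varepsilon} = I$ and $\sem{\mathbf{W}_1\mathbf{W}_2} = \sem{\mathbf{W}_1}\sem{\mathbf{W}_2}$. Matrix equality in $\HPiUnitary{}$ is itself a congruence, so a straightforward induction on the derivation of $\mathbf{G}\approx\mathbf{H}$ reduces the task to showing that each listed axiom becomes a valid matrix identity.

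The axioms dispatch in natural groups. The involutive axioms \cref{rel:a1,rel:a2,rel:a3} follow from $Z^2=I$, $X^2=I$ and $H^2=I$. The disjoint-support commutations \cref{rel:b1,rel:b2,rel:b3,rel:b4,rel:b5,rel:b6} hold because the two generators involved act non-trivially on disjoint index sets and therefore decompose as a tensor product modulo identities on the untouched coordinates. The interaction axioms \cref{rel:c1,rel:c2,rel:c3,rel:c4,rel:c5} and \cref{rel:d1,rel:d2} each reduce to a small matrix computation on the two- or four-dimensional subspace spanned by the basis vectors at the finitely many active coordinates; in particular \cref{rel:d2} is the classical identity $ZH=HX$, and \cref{rel:d1} amounts to $H$ commuting with $-I$ on its active subspace (i.e.\ with $Z\otimes Z$).

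The main obstacle will be the two remaining relations \cref{rel:d3,rel:d4}. Relation \cref{rel:d3} is a four-dimensional matrix identity in coordinates $a,b,c,d$, asserting that the fourth power of $H_{c,d}H_{a,c}H_{b,d}$ equals $H_{a,b}H_{c,d}$. Relation \cref{rel:d4} is a six-dimensional identity in coordinates $a,\ldots,f$, comparing the cube of a seven-generator product with another seven-generator product. In both cases every generator appearing acts non-trivially only on the indicated indices, so verification reduces to a bounded, explicit computation on a $4\times 4$ or $6\times 6$ matrix over $\HPiRing{}$. The plan is to carry this out by expanding both sides symbolically while tracking the $1/\sqrt{2}^k$ factors and signs carefully; this is the most computationally intensive step and the most error-prone, so it warrants a careful tabulation (and ideally cross-checking by computer algebra).
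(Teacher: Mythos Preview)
Your proposal is correct and matches the paper's approach: the paper's proof is simply ``by induction on the steps of the proof for $\mathbf{G}\approx\mathbf{H}$, it suffices to show that the relations in \cref{fig:relations} are sound; this can be verified by direct computation.'' Your version elaborates the same reduction and organizes the direct-computation step into sensible groups, but it is not a different argument.
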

\begin{proof}
    By induction on the steps of the proof for $\mathbf{G}\approx \mathbf{H}$, it suffices to show that the relations in \cref{fig:relations} are sound. This can be verified by direct computation.
\end{proof}

\begin{restatable}[Completeness of $\approx$]{theorem}{relationcompleteness}
    \label{thm:relation_completeness}
    Let $\mathbf{G}$ and $\mathbf{H}$ be words over $\cG_n$. If $\sem{\mathbf{G}} = \sem{\mathbf{H}}$, then $\mathbf{G} \approx \mathbf{H}$.
\end{restatable}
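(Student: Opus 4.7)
The plan is to prove completeness via a canonical form argument that mirrors the structure of \Cref{alg:exact_synthesis}. For each $M \in \HPiUnitary{}$, the algorithm outputs a sequence of generator-words $\mathbf{W}_1,\ldots,\mathbf{W}_\ell$ with $\sem{\mathbf{W}_\ell\cdots\mathbf{W}_1}\cdot M = I$; since every element of $\cG_n$ is an involution, the reversed word $\mathrm{can}(M) \defeq \mathbf{W}_1^{\mathrm{rev}}\cdots\mathbf{W}_\ell^{\mathrm{rev}}$ satisfies $\sem{\mathrm{can}(M)} = M$ and depends only on $M$. The completeness statement then follows from the normalization claim that $\mathbf{W} \approx \mathrm{can}(\sem{\mathbf{W}})$ for every word $\mathbf{W}$ over $\cG_n$, because if $\sem{\mathbf{G}} = \sem{\mathbf{H}}$ then $\mathrm{can}(\sem{\mathbf{G}})$ and $\mathrm{can}(\sem{\mathbf{H}})$ are literally the same word, giving $\mathbf{G} \approx \mathrm{can}(\sem{\mathbf{G}}) = \mathrm{can}(\sem{\mathbf{H}}) \approx \mathbf{H}$.

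The normalization claim is proved by induction on the length of $\mathbf{W}$. The empty word is already in canonical form, and the inductive step reduces to the key lemma: for any generator $G \in \cG_n$ and any $M \in \HPiUnitary{}$,
\[ G \cdot \mathrm{can}(M) \approx \mathrm{can}(GM). \]
I would establish this by a nested well-founded induction on $\level(M)$, which decreases strictly with each step of \Cref{alg:exact_synthesis}. Within each level the argument is a case analysis on how $G$ interacts with the initial generator $G'$ that \Cref{alg:exact_synthesis} would choose when run on $GM$: the commuting cases are dispatched by relations (b1)--(b6) for disjoint indices and (c1)--(c5) for shared-index partial commutations; the direct cancellation cases by the involution relations (a1)--(a3); and the cases where $G$ must be converted between a Pauli-type generator and a Hadamard generator by (d1) and (d2). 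In each of these we reach an instance whose matrix has strictly smaller $\level$, so the inductive hypothesis applies.

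The main obstacle is the residual class of cases in which $G$ interacts non-locally with two or more Hadamard generators near the front of $\mathrm{can}(M)$, because prepending $G$ forces \Cref{alg:exact_synthesis} to select a different pair of pivots during a $\lde$-reduction step. These are precisely the configurations that the heavier relations (d3) and (d4) are engineered to resolve: (d3) witnesses the four-index identity by which two overlapping Hadamard pivots can be swapped, and (d4) extends this to a six-index setting where the swap has to propagate through an $X$-conjugation inserted by \Cref{alg:exact_synthesis}. Establishing that these are the only remaining obstructions, and that (d3) and (d4) together suffice to eliminate them, is the technical heart of the argument and I suspect is what drives the precise form chosen for the list in \Cref{fig:relations}. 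The bookkeeping is helped by the fact that the fragment of $\cG_n$ without Hadamard generators already presents the signed symmetric group via (a1), (a2), (b1)--(b3), (c1)--(c3), so the permutation-and-sign suffix of $\mathrm{can}(M)$ can be realigned by appealing to the classical presentation.
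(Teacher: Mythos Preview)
Your outline matches the paper's approach closely: a canonical form supplied by \Cref{alg:exact_synthesis}, a key lemma that prepending a single generator to a canonical word is $\approx$-equivalent to the canonical word of the product, and an induction on $\level$ to establish that lemma. Two technical refinements in the paper's execution are worth flagging, because without them the case analysis you sketch becomes substantially harder to close.

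First, the paper does not case-split over all of $\cG_n$. It introduces a subset of \emph{basic} generators $\cG_n' = \{\CX{1,x},\,\CZ{1},\,\CH{1,2}\}$ and shows (using only the (a)--(c) relations) that every simple edge can be replaced by a path of basic edges without increasing level. The diamond-completion lemma (your key lemma, the paper's Main Lemma) is then proved only for basic $G$ against the four syllable shapes the algorithm can emit, giving a $3\times 4$ table of cases rather than an $O(n^2)$-sized one. Your proposal implicitly commits to the full generator set, which would work in principle but multiplies the bookkeeping considerably.

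Second, the ``non-local Hadamard interaction'' cases you correctly identify as the crux are not handled by a single direct appeal to (d3) and (d4). The paper first develops several number-theoretic lemmas about residues of $\ZZ[\sqrt{2}]$-vectors modulo $2$ (determining which of a small family of five-Hadamard words actually lowers $\lde$), and then proves a structural lemma (\cref{lem:useful4}) that a specific length-five Hadamard word $\CH{a,c}\CH{b,d}\CH{a,b}\CH{b,d}\CH{a,c}$ can always be rerouted through states of level at most $(j,k,l{+}2)$. Relations (d3) and (d4) enter only through a chain of derived six-index identities (\cref{prop:h1}--\cref{prop:h8}) that supply the rewrites needed inside that lemma. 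So the picture ``(d3)/(d4) swap overlapping Hadamard pivots'' is the right intuition, but the actual mechanism is a layer removed from the relation list and requires the auxiliary arithmetic lemmas to know which rewrite to invoke.
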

\begin{proof}
    Once we have the relation $\approx$, the proof follows directly from the approach of~\cite{greylyn2014generators,bian2021generators,li2021generators}.
    \cref{alg:exact_synthesis} provides a unique \emph{normal form} for each element in \HPiUnitary{} in a deterministic way, similar to~\cite{li2021generators}.
    The idea of the proof is showing that each word over $\cG_n$ can be rewritten into its unique normal form, thereby establishing completeness.
    The full proof needs a very long case distinction, and is relegated to \ifARXIV%
    \cref{app:relations_proof}%
    \else%
    \cref{ext-app:relations_proof} of the extended version~\cite{FHK25}%
    \fi.
\end{proof}

With the generators and relations for \HPiUnitary{}, we obtain a finite presentation of \HPiUnitary{}. This presentation will bridge \Unitary{}, \QPiLang{}, and \HPiLang{}, contributing to the completeness of our equational theories.

\section{Soundness and Completeness}\label{sec:soundness_and_completeness}

In this section, we prove the soundness and completeness of \HPiLang{} by using a hierarchical approach with translations as illustrated in \cref{fig:translations}.
The overall proof is divided into three major steps.
\begin{itemize}
    \item We have shown that $\cG_n^*$ with the relation $\approx$ is sound (\cref{thm:relation_soundness}) and complete (\cref{thm:relation_completeness}) for \Unitary{}.
    \item Using translations $\wsem{\cdot}$ and $\cT_Q$, we will prove that \QPiLang{} with its equational theory (\cref{def:qpi_eq}) is complete for $\cG_n^*$ with relation $\approx$ (\cref{thm:qpi_relation}).
    \item Using translations $\qsem{\cdot}$ and $\cT_H$, we will prove that \HPiLang{} with its equational theory (\cref{def:hpi_eq}) is complete for \QPiLang{} (\cref{thm:hpi_qpi}).
\end{itemize}

\subsection{Soundness and Completeness of \texorpdfstring{\QPiLang}{Q-Pi}}\label{sec:qpi_soundness_and_completeness}
Let us begin by showing how to translate a term of \QPiLang{} to a word over $\cG_n$ using $\wsem{\cdot}$.
To define $\wsem{\cdot}$, we first need to introduce some auxiliary functions.
\begin{itemize}
    \item $\hdim$: Given a term $c:b_1\fromto b_2$ of \QPiLang{}, we need to determine a specific $n$ to translate into $\cG_n^*$. Thus we define $\hdim(c) = \hdim(b_1)$ as the {dimension} of $c$, where the {dimension} of a type $b$ is defined inductively:
    \begin{align*}
        \hdim(\bbz) &= 0, &  \hdim(b_1+b_2) &= \hdim(b_1)+\hdim(b_2), \\
        \hdim(\bbo)&= 1, & \hdim(b_1\times b_2) &= \hdim(b_1)\times \hdim(b_2).
    \end{align*}
    \item $\hpermute$: Given a permutation $\pi:[n]\fromto [n]$, where $[n]=\{\,1,\ldots,n\}$ for any $n \geq 1$, $\hpermute(\pi)$ outputs $\mathbf{W}_\ell\cdots\mathbf{W}_2\mathbf{W}_1$ if the output sequence of \cref{alg:exact_synthesis} is $\mathbf{W}_1,\ldots,\mathbf{W}_\ell$ on the input $X_\pi$, the matrix representation of $\pi$.
    \item $\shift$: Shifting the subscripts of generators is defined inductively:
    \begin{align*}
        \shift\rbra*{\CZ{a},n} &= \CZ{a+n}, &
        \shift\rbra*{\CX{b,c}, n} &= \CX{b+n,c+n}, \\
        \shift\rbra*{\CH{b,c}, n} &= \CH{b+n,c+n}, &
        \shift\rbra*{\mathbf{G}_1\mathbf{G}_2,n} &= \shift\rbra*{\mathbf{G}_1,n}\shift\rbra*{\mathbf{G}_2,n}.
    \end{align*}
\end{itemize}

\begin{definition}
    [Translation $\wsem{\cdot}$]
    The translation $\wsem{\cdot}$ that maps a term $c$ of \QPiLang{} to $\cG_n^*$ with $n= \hdim(c)$ is defined inductively on the structure of terms as shown in \cref{fig:def_wsem}.
    \begin{figure}[t]
        \begin{align*}
            \wsem{(-1)} &=  \CZ{1} & \wsem{\hadamard} &= \CH{1,2} & \wsem{\pid} &= \varepsilon
        \end{align*}
        \begin{align*}
            \wsem{\assoclp} &= \varepsilon & \wsem{\assocrp} &= \varepsilon & \wsem{\unitep} &=\varepsilon & \wsem{\unitip} &= \varepsilon \\
            \wsem{\assoclt} &= \varepsilon & \wsem{\assocrt} &= \varepsilon & \wsem{\unitet} &=\varepsilon & \wsem{\unitit} &= \varepsilon \\
            \wsem{\dist} &= \varepsilon & \wsem{\factor} &= \varepsilon & \wsem{\absorb} &= \varepsilon & \wsem{\factorz} &= \varepsilon
        \end{align*}
        \begin{gather*}
            \begin{prooftree}
                \hypo{\swapp : b_1+b_2 \fromto b_2+b_1} \hypo{n_1 = \hdim(b_1)} \hypo{n_2 = \hdim(b_2)}
                \infer3{\wsem{\swapp} = \hpermute(\pi:j\in[n_1+n_2] \mapsto \mathsf{if}\, j \leq n_1 \,\mathsf{then}\, j+n_2 \,\mathsf{else}\, j-n_1)}
            \end{prooftree} \\
            \begin{prooftree}
                \hypo{\swapt : b_1\times b_2 \fromto b_2\times b_1} \hypo{n_1 = \hdim(b_1)} \hypo{n_2 = \hdim(b_2)}
                \infer3{\wsem{\swapt} = \hpermute(\pi:j\in[n_1\times n_2] \mapsto \mathsf{let}\, (k,l) = (j\ \mathrm{div}\ n_2, j\bmod n_2) \,\mathsf{in}\, l\times n_1+k+1)}
            \end{prooftree}
        \end{gather*}
        \begin{align*}
            \wsem{c_1\seqq c_2} &= \wsem{c_2}\wsem{c_1} &
            \wsem{c_1+c_2} &= \wsem{c_1}\shift\rbra*{\wsem{c_2}, \hdim(c_1)}
        \end{align*}
        \begin{equation*}
            \begin{prooftree}
                \hypo{c_1 : b_1 \fromto b_3} \hypo{c_2 : b_2 \fromto b_4}
                \infer2{\wsem{c_1\times c_2} = \wsem{\swapt}\wsem{\pid_{b_4}\times c_1}\wsem{\swapt}\wsem{\pid_{b_1}\times c_2}}
            \end{prooftree}
        \end{equation*}
        
        \begin{flushleft}
            For terms in the form of $\pid_{b}\times c$, we define:
        \end{flushleft}
        \begin{equation*}
            \wsem{\pid_{b}\times c} = \wsem{c}\shift\rbra[\big]{\wsem{c},\hdim(c)}\shift\rbra[\big]{\wsem{c},2\hdim(c)}\cdots\shift\rbra[\big]{\wsem{c},(\hdim(b)-1)\times \hdim(c)}
        \end{equation*}
        \caption{Definition of translation $\wsem{\cdot}$. \ifARXIV \cref{fig:visual_permutation} in the appendix demonstrates the permutations in the definitions of $\wsem{\swapp}$, $\wsem{\swapt}$.\fi}\label{fig:def_wsem}
    \end{figure}
\end{definition}

The following lemma follows readily.

\begin{restatable}{lemma}{lemwfaith}\label{lem:w_faith}
    If $c$ is a term of Q-$\Pi$, then $\sem{\wsem{c}} = \sem{c}$.
\end{restatable}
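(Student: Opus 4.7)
The plan is to prove this by structural induction on $c$, verifying that each clause in the definition of $\wsem{\cdot}$ (\cref{fig:def_wsem}) preserves semantics.

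For the primitive isomorphisms, direct computation suffices: $\sem{\CZ{1}} = -1 = \sem{(-1)}$ and $\sem{\CH{1,2}} = \sem{\hadamard}$ by inspection of the matrix definitions. For $\pid$ and each of the structural isomorphisms ($\assoclp$, $\assocrp$, $\unitep$, $\unitip$, $\assoclt$, $\assocrt$, $\unitet$, $\unitit$, $\dist$, $\factor$, $\absorb$, $\factorz$), which are all translated to the empty word $\varepsilon$, we use the fact that in \Unitary{} these are all interpreted as identity matrices (since the canonical rig structure on matrix dimensions is strict), coinciding with $\sem{\varepsilon} = I$.

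For the swap cases $\swapp$ and $\swapt$, the plan is to first verify that the permutation $\pi$ given in the definition of $\wsem{\cdot}$ corresponds to the basis reindexing induced by the respective swap isomorphism, under the standard identification of $\sem{b_1 + b_2}$ with the flat vector space of dimension $n_1+n_2$ (and similarly for products, with lexicographic ordering). The correctness of \Cref{alg:exact_synthesis} then yields that $\sem{\hpermute(\pi)}$ realises the permutation matrix $X_\pi$, which matches the semantics of the swap.

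For the compositional cases, sequencing $c_1 \seqq c_2$ is immediate from the word-semantics homomorphism property together with the inductive hypothesis. For the sum case $c_1 + c_2$, I would observe that each generator appearing in $\wsem{c_1}$ retains its indices when interpreted in $\cG_{n_1+n_2}^*$, embedding as a block-diagonal matrix of the form $M \oplus I$; likewise, the shifted generators in $\shift(\wsem{c_2}, n_1)$ embed as $I \oplus M$. Since these two families commute, multiplying them yields $\sem{c_1} \oplus \sem{c_2}$ by the inductive hypothesis, matching $\sem{c_1 + c_2}$. For the auxiliary form $\pid_b \times c$, the concatenation of shifted copies of $\wsem{c}$ at offsets $0, \hdim(c), \ldots, (\hdim(b){-}1)\hdim(c)$ consists of mutually commuting factors on disjoint index ranges, whose product is the block-diagonal matrix with $\hdim(b)$ copies of $\sem{c}$, i.e.\ $I_{\hdim(b)} \otimes \sem{c}$. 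Finally, the general product case $c_1 \times c_2$ follows from the defining equation of $\wsem{c_1 \times c_2}$: applying the cases already proven and using naturality of $\sigma_\otimes$ together with bifunctoriality of $\otimes$ in \Unitary{} yields $\sem{c_1} \otimes \sem{c_2}$.

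The main obstacle will be the careful bookkeeping in the swap cases: verifying that the permutation formulas in $\wsem{\swapp}$ and $\wsem{\swapt}$ (with the $\div$ and $\bmod$ index conventions) correctly describe the induced basis reorderings at the matrix level, and confirming that the index-preserving embedding $\cG_n^* \hookrightarrow \cG_{n+n'}^*$ corresponds to taking the direct sum with an identity block in \Unitary{}. These verifications are routine but combinatorially dense, and would form the bulk of the detailed proof.
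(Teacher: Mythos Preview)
Your proposal is correct and follows essentially the same approach as the paper: structural induction on $c$, with the block-diagonal reasoning for the sum and $\pid_b \times c$ cases formalised in the paper via two auxiliary lemmas (\cref{lem:n_embedded_n+m,lem:n_shift}) stating that embedding $\cG_n^* \hookrightarrow \cG_{n+m}^*$ and shifting by $m$ correspond semantically to $\oplus I_m$ and $I_m \oplus {-}$ respectively. Your informal descriptions of these facts match those lemmas, and the remaining cases are handled identically.
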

\begin{proof}
    By induction on the structure of $c$ and the correctness of \cref{alg:exact_synthesis}. See details in \ifARXIV%
    \cref{app:qpi_proof}%
    \else%
    \cref{ext-app:qpi_proof} of the extended version~\cite{FHK25}%
    \fi.
\end{proof}

Next, we define $\cT_Q$, which describes how to use \QPiLang{} to express the words over $\cG_n$.

\begin{definition}
    [Translation $\cT_Q$]\label{def:qt}
    The translation $\cT_Q$ that maps a word $\mathbf{G}$ over $\cG_n$ to a term $\qT[n]{\mathbf{G}}$ of Q-$\Pi$ is defined inductively as shown in \cref{fig:def_qt}, where $k\bbo$ is a shorthand for $\underbrace{\bbo+\bbo+\cdots+\bbo}_{k}$.
    \begin{figure}[t]
        \begin{align*}
            \qT[n]{\varepsilon} &= \pid_{n\bbo} \\
            \qT[n]{\CZ{a}} &= \swapp(a,n)\seqq (\pid_{(n-1)\bbo}+(-1))\seqq\swapp(a,n) \\
            \qT[n]{\CX{b,c}} &= \swapp(b,c) \\
            \qT[n]{\CH{b,c}} &= \swapp(b,n-1)\seqq\swapp(c,n)\seqq \assocrp\seqq(\pid_{(n-2)\bbo}+\hadamard)\seqq \\
            & \quad \assoclp\seqq\swapp(c,n)\seqq\swapp(b,n-1) \\
            \qT[n]{\mathbf{G}_1\mathbf{G}_2} &= \qT[n]{\mathbf{G}_2}\seqq\qT[n]{\mathbf{G}_1}
        \end{align*}
        \caption{Definition of translation $\cT_Q$. Here $\swapp(n-1,n) \triangleq \assocrp\seqq(\pid_{(n-2)\bbo}+\swapp)\seqq\assoclp$, $\swapp(k,n) \triangleq (\assocrp + \pid_{(n-k-1)\bbo}) \seqq((\pid_{(k-1)\bbo}+\swapp)+\pid_{(n-k)\bbo})\seqq(\assoclp + \pid_{(n-k-1)\bbo}) \seqq \swapp(k+1,n) \seqq(\assocrp + \pid_{(n-k-1)\bbo}) \seqq ((\pid_{(k-1)\bbo}+\swapp)+\pid_{(n-k)\bbo}) \seqq(\assoclp + \pid_{(n-k-1)\bbo})$, $\swapp(j,k) \triangleq \swapp(k,n)\seqq\swapp(j,n)\seqq\swapp(k,n)$ if $j\neq k$, and $\swapp(j,j) \triangleq \pid_{n\bbo}$.}\label{fig:def_qt}
    \end{figure} 
\end{definition}
In \cref{fig:def_qt}, $\swapp(j,k)$ represents a permutation that swaps the $j$-th and $k$-th terms of a sum of $n$ elements.
This permutation can be synthesised into a term of \QPiLang{} through the full abstraction of $\Pi$~\cite{choudhury2022symmetries}. 
For $\cT_Q$, we also use a superscript $n$ in $\qT[n]{\mathbf{G}}$ to indicate that $\mathbf{G}$ is unambiguously in $\cG_{n}^*$.
This is necessary because words can sometimes be used for different values of $n$.
For example, $\CH{1,2}$ can be regarded as a word over $\cG_n$ for any $n\geq 2$.

\begin{restatable}{lemma}{lemweq}\label{lem:w_eq}
    Let $\mathbf{G}$ and $\mathbf{H}$ be words over $\cG_n$. If $\mathbf{G}\approx \mathbf{H}$, then $\qT[n]{\mathbf{G}} \fromto_2 \qT[n]{\mathbf{H}}$ is a level-2 combinator of Q-$\Pi$. 
\end{restatable}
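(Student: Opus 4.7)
The plan is to induct on the derivation of $\mathbf{G} \approx \mathbf{H}$. Since $\approx$ is an equivalence relation that is a congruence for concatenation, generated by the seventeen base laws in \cref{fig:relations}, it suffices to handle the closure operations and to verify each base law separately. Reflexivity, symmetry, and transitivity are discharged by the corresponding axioms of $\fromto_2$ inherited from $\Pi$. For the congruence step, observe that $\qT[n]{\mathbf{G}_1\mathbf{G}_2} = \qT[n]{\mathbf{G}_2}\seqq\qT[n]{\mathbf{G}_1}$ by \cref{def:qt}, so the $\Pi$ congruence rule for $\seqq$ immediately lifts $\qT[n]{\mathbf{G}_i} \fromto_2 \qT[n]{\mathbf{H}_i}$ for $i=1,2$ to $\qT[n]{\mathbf{G}_1\mathbf{G}_2} \fromto_2 \qT[n]{\mathbf{H}_1\mathbf{H}_2}$. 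Thus the real work is a finite case analysis on the base relations after unfolding the translation through \cref{fig:def_qt}.

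The base relations split into three tiers of difficulty. The involutivity laws \cref{rel:a1,rel:a2,rel:a3} reduce respectively to \cref{eq:e1}, to involutivity of $\swapp$, and to \cref{eq:e2}, once the outer swap-conjugations in $\qT[n]{\CZ{a}}$, $\qT[n]{\CX{b,c}}$, and $\qT[n]{\CH{b,c}}$ are collapsed using $\swapp\seqq\swapp \fromto_2 \pid$ from the $\Pi$ equational theory. The disjoint-index commutation laws \cref{rel:b1}--\cref{rel:b6} and the disjoint-support transposition laws \cref{rel:c2,rel:c3} follow from bifunctoriality of $+$ together with naturality of $\swapp$: on non-overlapping coordinates the enclosing permutations can be slid past the local gates, which is again a mechanical application of the $\Pi$ level-2 combinators. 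The shared-coordinate transposition laws \cref{rel:c1,rel:c4,rel:c5} additionally require showing that two consecutive outer transpositions collapse to the transposition expected on the right-hand side, for which completeness of $\Pi$ on $\FinBij{}$ suffices.

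The substantive work lies in the four Hadamard relations \cref{rel:d1}--\cref{rel:d4}. Relations \cref{rel:d1,rel:d2} should follow by a short direct calculation in \QPiLang{} using \cref{eq:e2,eq:e3}, in a style essentially identical to the proof of \cref{lem:q_hhcxhh}. Relation \cref{rel:d3} is a four-coordinate identity which I expect to reduce, after repeatedly applying \cref{lem:q_hhcxhh} to rewrite Hadamard-pair conjugates of $\p{cx}$ into swap-conjugates, to a pure permutation identity discharged by completeness of $\Pi$, with any residual sign handled by \cref{eq:e3}. The main obstacle will be \cref{rel:d4}: a six-coordinate, order-three identity mixing three Hadamard conjugations with two $\mathit{CX}$ factors. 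The intended tactic is again to iteratively apply \cref{lem:q_hhcxhh} to eliminate all $(\p{h}\times\p{h})\seqq\p{cx}\seqq(\p{h}\times\p{h})$ subfactors in favour of $\swapt\seqq\p{cx}\seqq\swapt$, thereby converting the Hadamard content into classical permutations that can be dispatched by completeness of $\Pi$; if direct rewriting stalls, the fallback is to pad with an auxiliary summand, perform the rewriting there, and cancel it at the end using \cref{eq:e4}.
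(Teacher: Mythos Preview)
Your induction scheme, the handling of the congruence closure, and the treatment of the easy relations \cref{rel:a1}--\cref{rel:c5} and \cref{rel:d1,rel:d2} are all correct and essentially match the paper (which relegates these to \cref{app:qpi_proof} after first reducing to fixed-index instances via completeness of $\Pi$).

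The gap is in your plan for \cref{rel:d3} and \cref{rel:d4}. For \cref{rel:d3} the left- and right-hand sides consist \emph{only} of $H$ generators, so there are no ``Hadamard-pair conjugates of $\p{cx}$'' for \cref{lem:q_hhcxhh} to act on; and since the right-hand side $\CH{a,b}\CH{c,d}$ is itself not a permutation, the identity cannot possibly reduce to a ``pure permutation identity discharged by completeness of $\Pi$''. What the paper actually does (\cref{lem:d3_fixed}) is view the four coordinates as a two-qubit system, so that each $\CH{\cdot,\cdot}$ becomes a controlled-$\p{h}$, and then invoke a separate decomposition lemma \eqref{eq:h_decomposition}: on an auxiliary qubit one has $\p{h} \fromto_2 \p{A}\seqq(\pid\times\p{x})\seqq\p{A}^{-1}$ for a specific two-qubit unitary $\p{A}$. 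Pushing this under a control and using \cref{eq:conjugation}, each controlled-$\p{h}$ becomes $\p{A}$-conjugated controlled-$\p{x}$; adjacent $\p{A}^{-1}\seqq\p{A}$ pairs cancel, and a companion identity \eqref{eq:cnot_h} collapses the residual $(\p{cx}\seqq(\p{h}\times\pid))^4$. Only then is the ancilla removed via \cref{eq:e4}. For \cref{rel:d4} the paper likewise pads from dimension~$6$ to~$8$ using \cref{eq:e4} (so your fallback instinct is right---but it is not a fallback, it is essential) and again relies on \eqref{eq:h_decomposition}. \Cref{lem:q_hhcxhh} does appear in the paper's argument, but only as one of several rewriting tools inside these auxiliary lemmas; by itself it cannot carry the weight you assign it.
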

\begin{proof}
    By induction on the relations used for deriving $\mathbf{G}\approx \mathbf{H}$, we only need to verify that all the relations $\mathbf{G}\approx \mathbf{H}$ in \cref{fig:relations} satisfy this lemma.
    Since the level-2 combinators of $\Pi$ are sound and complete with respect to permutations~\cite{choudhury2022symmetries}, we only need to deal with the relations of \cref{fig:relations} with fixed indices.
    We provide graphical proofs for the fixed index versions of \cref{rel:d3} with $n=4$ (see \cref{lem:d3_fixed}) and \cref{rel:d4} with $n=6$ (see \cref{lem:d4_fixed}),
    while the other relations in \cref{fig:relations} are easily derived and relegated to \ifARXIV%
    \cref{app:qpi_proof}%
    \else%
    \cref{ext-app:qpi_proof} of the extended version~\cite{FHK25}%
    \fi.
\end{proof}

\begingroup
\allowdisplaybreaks
Let us list some properties and definitions, illustrated via quantum circuits, that will be required for \cref{lem:d3_fixed,lem:d4_fixed}.
\begin{gather}
    \scalebox{1}{\begin{quantikz}[row sep=2mm,column sep=2mm]
        &&\ctrl{1}& \ctrl{1} & \midstick[2,brackets=none]{$\fromto_2{}$} && \ctrl{1} &\\
        &\qwbundle{} &\gate{c_1}& \gate{c_2} & & \qwbundle{} & \gate{c_1\seqq c_2} &
    \end{quantikz}} \label{eq:qt0} \\
    \mathllap{\Leftrightarrow{}} (\pid+c_1)\seqq(\pid+c_2) \fromto_2{} (\pid+(c_1\seqq c_2)) \tag{by bifunctoriality of $+$} \\
        \scalebox{1}{\begin{quantikz}[row sep=2mm,column sep=2mm]
            &&\ctrl{1}& \octrl{1} & \midstick[2,brackets=none]{$\fromto_2{}$} && \octrl{1}& \ctrl{1} &\\
            &\qwbundle{} &\gate{c_1}& \gate{c_2} & & \qwbundle{} & \gate{c_2}& \gate{c_1} &
        \end{quantikz}} \label{eq:qt1} \\
        \mathllap{\Leftrightarrow{}} (\pid+c_1)\seqq(c_2+\pid) \fromto_2{} (c_2+\pid)\seqq(\pid+c_1) \tag{by bifunctoriality of $+$} \\
        \scalebox{1}{\begin{quantikz}[row sep=2mm,column sep=2mm]
            &&\ctrl{1}& \octrl{1} & \midstick[2,brackets=none]{$\fromto_2{}$} && &\\
            &\qwbundle{} &\gate{c}& \gate{c} & &\qwbundle{} & \gate{c} &
        \end{quantikz}} \label{eq:qt2} \\
        \mathllap{\Leftrightarrow{}} (\pid+c)\seqq(c+\pid) \fromto_2{} (c+c)\fromto_2{} \pid_{\bb{2}}\times c \tag{by bifunctoriality of $+$ and distributivity} \\
        \scalebox{1}{\begin{quantikz}[row sep=2mm,column sep=2mm]
            &\swap{1}& \gate{c_1} & \midstick[2,brackets=none]{$\fromto_2{}$} & \gate{c_2} & \swap{1} &\\
            &\targX{}& \gate{c_2} & & \gate{c_1} & \targX{} &
        \end{quantikz}} \label{eq:qt3} \\
        \mathllap{\Leftrightarrow{}} \swapt\seqq(c_1\times c_2)\fromto_2{} (c_2\times c_1)\seqq \swapt \tag{by naturality of $\swapt$} \\
        \scalebox{1}{\begin{quantikz}[row sep=2mm,column sep=2mm]
            &\gate{\p{\p{h}}} & \ctrl{1} & \gate{\p{\p{h}}} & \midstick[2,brackets=none]{$\fromto_2{}$} & \swap{1} & \ctrl{1} & \swap{1} &\\
            &\gate{\p{\p{h}}} & \targ{} & \gate{\p{\p{h}}} & & \targX{} & \targ{} & \targX{} &
        \end{quantikz}} \enspace \Leftrightarrow{} \enspace \text{\cref{lem:q_hhcxhh}} \label{eq:qt4} \\
        \scalebox{1}{\begin{quantikz}[row sep=2mm,column sep=2mm]
            &\gate{c} & \midstick[2,brackets=none]{$\triangleq$} & \swap{1} & \ctrl{1} & \swap{1} &\\
            &\ctrl{-1} & & \targX{} & \gate{c} & \targX{} &
        \end{quantikz}} \hspace{2cm}
        \scalebox{1}{\begin{quantikz}[row sep=2mm,column sep=2mm]
            &\gate{c} & \midstick[2,brackets=none]{$\triangleq$} & \swap{1} & \octrl{1} & \swap{1} &\\
            &\octrl{-1} & & \targX{} & \gate{c} & \targX{} &
        \end{quantikz}} \label{eq:qt5} \\
        \scalebox{1}{\begin{quantikz}[row sep=2mm,column sep=2mm]
            &\gate{\p{z}} & \midstick[2,brackets=none]{$\triangleq$} & \gate{\pid_\bbo+(-1)} &
        \end{quantikz}} \hspace{2cm}
        \scalebox{1}{\begin{quantikz}[row sep=2mm,column sep=2mm]
            &\ctrl{1} & \midstick[2,brackets=none]{$\triangleq$} & \ctrl{1} &\\
            &\phase{} & & \gate{\p{z}} &
        \end{quantikz}} \label{eq:qt6}
\end{gather}

The following simple lemmas are also needed.
\begin{restatable}{lemma}{lemqceqs}~\\
    \begin{minipage}{0.23\linewidth}
        \begin{equation}
            \scalebox{\scaleratio}{\begin{quantikz}[row sep=2mm,column sep=2mm]
                &\ctrl{1} & \midstick[2,brackets=none]{$\fromto_2{}$} & \gate{\p{z}} &\\
                &\gate{\p{z}} & & \ctrl{-1} &
            \end{quantikz}} \label{eq:zz}
        \end{equation}
    \end{minipage}
    \hfill
    \begin{minipage}{0.335\linewidth}
        \begin{equation}
            \scalebox{\scaleratio}{\begin{quantikz}[row sep=2mm,column sep=2mm]
                &&\gate{\p{x}}&\ctrl{1} & \gate{\p{x}} &\midstick[2,brackets=none]{$\fromto_2{}$} && \octrl{1} & \\
                &\qwbundle{}&&\gate{c} & & &\qwbundle{}& \gate{c}&
            \end{quantikz}} \label{eq:octrl}
        \end{equation}
    \end{minipage}
    \hfill
    \begin{minipage}{0.42\linewidth}
        \begin{equation}
            \scalebox{\scaleratio}{\begin{quantikz}[row sep=2mm,column sep=2mm]
                &&\ctrl{1} & \midstick[2,brackets=none]{$\fromto_2{}$} && & \ctrl{1} & & \\
                &\qwbundle{}&\gate{d\seqq c\seqq d^{-1}} & &\qwbundle{}& \gate{d} &\gate{c\vphantom{d^{-1}}} & \gate{d^{-1}} &
            \end{quantikz}} \label{eq:conjugation}
        \end{equation}
    \end{minipage}\\
    Here, $d^{-1}$ denotes the inverse of $d$. Since our language is reversible, $d^{-1}$ can be straightforwardly constructed and satisfies $d\seqq d^{-1} \fromto_2{} d^{-1}\seqq d \fromto_2 \pid$.
\end{restatable}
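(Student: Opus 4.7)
The plan is to prove each of the three equations by unfolding $\pctrl{c} = \dist\seqq (\pid + (\pid \times c))\seqq \factor$ (and the analogous formula $\pnctrl{c} = \dist\seqq ((\pid \times c) + \pid)\seqq \factor$) and rewriting with bifunctoriality of $+$ and $\times$ together with naturality of the rig-categorical structural isomorphisms $\dist$, $\factor$, $\swapp$, and $\swapt$. Whenever a residual identity concerns only permutations of $\bbo$-summands, I would discharge it by appealing to full abstraction of base $\Pi$ rather than tracing each coherence combinator explicitly.

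For \cref{eq:octrl}, I expand $\p{x} = \swapp$ to get $(\swapp\times\pid)\seqq\dist\seqq(\pid + (\pid\times c))\seqq\factor\seqq(\swapp\times\pid)$. Naturality of $\dist$ delivers $(\swapp\times\pid)\seqq\dist \fromto_2 \dist\seqq\swapp$, and dually $\factor\seqq(\swapp\times\pid) \fromto_2 \swapp\seqq\factor$, so the two outer factors migrate inside and sandwich the middle sum. Naturality of $\swapp$ on that sum swaps its two summands, producing $\dist\seqq((\pid\times c) + \pid)\seqq\factor$, which is exactly $\pnctrl{c}$.

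For \cref{eq:conjugation}, I unfold the right-hand side $(\pid\times d)\seqq\pctrl{c}\seqq(\pid\times d^{-1})$. Naturality of $\dist$ rewrites $(\pid_{\bb{2}}\times d)\seqq\dist$ as $\dist\seqq((\pid\times d) + (\pid\times d))$, and symmetrically for $\factor$, so the composite reshapes as $\dist\seqq((\pid\times d)+(\pid\times d))\seqq(\pid + (\pid\times c))\seqq((\pid\times d^{-1})+(\pid\times d^{-1}))\seqq\factor$. Bifunctoriality of $+$ and $\times$ now collapses the middle: the first summand becomes $\pid\times(d\seqq d^{-1}) \fromto_2 \pid$ using invertibility of $d$, the second becomes $\pid\times(d\seqq c\seqq d^{-1})$, and the whole reassembles into $\pctrl{d\seqq c\seqq d^{-1}}$.

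For \cref{eq:zz}, which in $\Pi$-notation reads $\pctrl{\p{z}} \fromto_2 \swapt\seqq\pctrl{\p{z}}\seqq\swapt$, I would decompose both sides all the way down to the four $\bbo$-summands of $\bb{2}\times\bb{2}$ using iterated $\dist$ together with its derived left-handed counterpart $\swapt\seqq\dist\seqq(\swapt + \swapt)$. After unfolding $\p{z} = \pid_\bbo + (-1)$, both sides reduce to the same sum $\pid_\bbo + \pid_\bbo + \pid_\bbo + (-1)$ modulo a permutation of summands induced by $\swapt$; the permutation-only residue is then closed by full abstraction of $\Pi$. The main obstacle will be precisely here: one must verify that the permutation on the four cells coming from $\swapt$ really fixes the ``both true'' corner where $(-1)$ lives, so that the two normal forms line up rather than merely being conjugate. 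Once that is checked, the rest is mechanical, and \cref{eq:octrl,eq:conjugation} are, by comparison, routine naturality chases with no subtlety beyond selecting the right naturality square at each step.
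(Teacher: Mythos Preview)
Your proposal is correct, and for \cref{eq:zz} and \cref{eq:octrl} it matches the paper's proof essentially line for line: the paper also unfolds to four $\bbo$-summands, observes that $\swapt$ on $\bb{2}\times\bb{2}$ is the permutation $(\pid_\bbo+\swapp+\pid_\bbo)$ (which fixes the last summand carrying $(-1)$), and discharges the permutation-only residue by completeness of $\Pi$; for \cref{eq:octrl} it likewise pushes $\swapp\times\pid$ through $\dist$ and then uses naturality of $\swapp$.

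For \cref{eq:conjugation} your route is genuinely more economical than the paper's. The paper first splits $\pctrl{d\seqq c\seqq d^{-1}}$ into $\pctrl d\seqq\pctrl c\seqq\pctrl{d^{-1}}$ via \cref{eq:qt0}, then \emph{inserts} negatively-controlled copies of $d$ and $d^{-1}$ using \cref{eq:qt2} (so that $\pctrl d$ becomes an uncontrolled $d$ followed by $\pnctrl d$, and symmetrically on the right), commutes the $\pnctrl$ gates past $\pctrl c$, and finally cancels $\pnctrl d\seqq\pnctrl{d^{-1}}$ and $d\seqq d^{-1}$. Your argument bypasses all of this gate-shuffling: pushing the uncontrolled $(\pid\times d)$ and $(\pid\times d^{-1})$ through $\dist$/$\factor$ by naturality lands you directly on the two summands $(\pid\times(d\seqq d^{-1}))$ and $(\pid\times(d\seqq c\seqq d^{-1}))$, whence bifunctoriality and invertibility of $d$ finish in one step. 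Both proofs live entirely inside the $\Pi$ level-2 combinators, so nothing is lost; yours is simply the shorter naturality chase.
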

\begin{proof}
    This lemma is easily derived and relegated to \ifARXIV%
    \cref{app:qpi_proof}%
    \else%
    \cref{ext-app:qpi_proof} of the extended version~\cite{FHK25}%
    \fi.
\end{proof}

For $\p{h}$ gates, we derive the following decomposition lemma, inspired by~\cite{bian2022generators}.\footnote{In particular, \cref{eq:h_decomposition} represents a lifted version of the gate decomposition $H = (T\cdot H\cdot S)^{\dagger}\cdot X \cdot (T\cdot H\cdot S)$ via the mapping $\mathcal{E}$ defined in \cref{eq:real_U}. We can verify that $\cE(H) = I\otimes H, \cE(X) = I\otimes X, \cE(T) = \mathit{SWAP} \cdot (\mathit{CH}\cdot\mathit{CX}) \cdot \mathit{SWAP}$ and $\cE(S) = \mathit{SWAP} \cdot (\mathit{CZ}\cdot\mathit{CX}) \cdot \mathit{SWAP}$. For readers unfamiliar with common quantum gates, the $\pi/8$ gate is $T = 1\oplus e^{i\pi/4}$, the phase gate is $S = 1 \oplus i$, and the swap gate is $\mathit{SWAP} = 1\oplus X \oplus 1$.}
\begin{lemma}
    \begin{equation}\label{eq:cnot_h}
        \scalebox{\scaleratio}{\begin{quantikz}[row sep=2mm,column sep=2mm]
            & \ctrl{1} & \gate{\p{h}} & \ctrl{1} & \gate{\p{h}} & \ctrl{1} & \gate{\p{h}} & \ctrl{1} &\\
            & \targ{} & \ghost{} & \targ{} & & \targ{} & & \targ{} &
        \end{quantikz}}
        \fromto_2{}
        \scalebox{\scaleratio}{\begin{quantikz}[row sep=2mm,column sep=2mm]
            & \gate{\p{h}} & \\
            & \gate{\p{x}} &
        \end{quantikz}}
    \end{equation}
    \begin{equation}
        \label{eq:h_decomposition}
        \scalebox{\scaleratio}{\begin{quantikz}[row sep=2mm,column sep=2mm]
            & \gate[2][.8cm]{\p{A}} & & \gate[2][.8cm]{\p{A}^{-1}} & \midstick[2,brackets=none]{\scalebox{1.3}{$\fromto_2{}$}} & &\\
            & & \gate{\p{x}} & & & \gate{\p{h}} &
        \end{quantikz}} \text{ with }
        \scalebox{\scaleratio}{\begin{quantikz}[row sep=2mm,column sep=2mm]
            & \gate[2][.8cm]{\p{A}} & \midstick[2,brackets=none]{\scalebox{1.3}{$\triangleq$}} & \targ{} & \ctrl{1} & & \targ{} & \gate{\p{h}} &\\
            & & & \ctrl{-1} & \phase{} & \gate{\p{h}} & \ctrl{-1} & \ctrl{-1} &
        \end{quantikz}}
    \end{equation}
\end{lemma}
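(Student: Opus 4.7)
The plan is to prove the two equations in sequence, using (eq:cnot_h) as a technical building block for (eq:h_decomposition).

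For (eq:cnot_h), the left-hand side has the form $C\seqq H_1\seqq C\seqq H_1\seqq C\seqq H_1\seqq C$, writing $C$ for $\p{cx}$ and $H_1$ for $\p{h}\times\pid$. First I would split each $H_1$ into $(\p{h}\times\p{h})\seqq(\pid\times\p{h})$ (using bifunctoriality and \cref{eq:e2}), which yields a template of the form $C\seqq (\p{h}\times\p{h})\seqq \ldots \seqq(\p{h}\times\p{h})\seqq C$ around alternating $(\pid\times \p{h})$ factors. Next I would apply \cref{lem:q_hhcxhh} to turn each $(\p{h}\times\p{h})\seqq C\seqq(\p{h}\times\p{h})$ into $\swapt\seqq C\seqq\swapt$, so that every inner CNOT acquires the structure of a CNOT with control and target interchanged. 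Then using naturality of $\swapt$, bifunctoriality, and the pentagon of $\Pi$-equations I would regroup the $\pid\times\p{h}$ factors and repeatedly cancel $\p{h}^2\fromto_2\pid$; the remaining reductions that produce a lone $\pswap$ on the bottom wire come from the $H X H = Z$-identity \cref{eq:e3} applied whenever an $\p{h}\seqq\pswap\seqq\p{h}$ pattern appears sandwiched on the bottom wire by CNOTs. The target form $\p{h}\times\pswap$ then falls out by bifunctoriality.

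For (eq:h_decomposition), the key observation is that the inner sandwich $\p{ch}_{21}\seqq(\pid\times\pswap)\seqq\p{ch}_{21}$ computes $\p{h}\times\pswap$: intuitively, the bottom $X$ flips the control, so exactly one of the two controlled-$H$'s fires regardless of its initial value. This can be shown formally by expanding $\p{ch}_{21}$ via $\pctrl{\cdot}$, distributing, and applying $\p{h}^2\fromto_2\pid$. Plugging this into $\p{A}\seqq(\pid\times\pswap)\seqq\p{A}^{-1}$ leaves $\p{cx}_{21}\seqq\p{cz}\seqq(\pid\times\p{h})\seqq\p{cx}_{21}\seqq(\p{h}\times\pswap)\seqq\p{cx}_{21}\seqq(\pid\times\p{h})\seqq\p{cz}\seqq\p{cx}_{21}$. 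Using the conjugation form \cref{eq:conjugation}, the commutation rules of CNOT with $\pid\times\p{x}$, and the identity $\p{cz}=(\pid\times\p{h})\seqq\p{cx}\seqq(\pid\times\p{h})$, I would rewrite the two outer fragments into a symmetrised form of the circuit in (eq:cnot_h), conjugated by an overall $\swapt$. Applying (eq:cnot_h) then collapses everything into $\pid\times\p{h}$; since there is an auxiliary $\pid+\p{z}$-style factor introduced by \cref{eq:e3} along the way, the cancellative condition \cref{eq:e4} provides a safety net if needed.

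The main obstacle is the first step: finding the right sequence of rewrites for (eq:cnot_h). The challenge is that each intermediate form must combine bifunctoriality, naturality of $\swapt$, and the two algebraic relations \cref{eq:e2,eq:e3} in exactly the right order, and the only nonobvious move is where to insert the $(\p{h}\times\p{h})^2\fromto_2\pid$ pairs so that \cref{lem:q_hhcxhh} becomes applicable. Once this bookkeeping is done, the subsequent reductions are mechanical, and the derivation of (eq:h_decomposition) reduces to rearranging CNOTs and $\p{cz}$ gates around the established identity.
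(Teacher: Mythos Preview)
Your plan for \eqref{eq:h_decomposition} is essentially the paper's. The observation that $\p{ch}_{21}\seqq(\pid\times\p{x})\seqq\p{ch}_{21}\fromto_2\p{h}\times\p{x}$ is exactly what the paper obtains in its first two steps, via \eqref{eq:octrl} followed by \eqref{eq:qt2} (a positively-controlled $\p{h}$ adjacent to a negatively-controlled $\p{h}$ collapses to an unconditional $\p{h}$). The subsequent reduction to the shape of \eqref{eq:cnot_h} is also the paper's route. You do not need \eqref{eq:e4} anywhere in this lemma.

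For \eqref{eq:cnot_h} there is a genuine gap. Your plan is to split each $H_1=\p{h}\times\pid$ as $(\p{h}\times\p{h})\seqq(\pid\times\p{h})$ and then apply \cref{lem:q_hhcxhh} so that ``every inner CNOT'' becomes $\swapt\seqq\p{cx}\seqq\swapt$. But each $H_1$ contributes a single $(\p{h}\times\p{h})$ factor, and the middle $H_1$ cannot donate one to both of its neighbouring $\p{cx}$'s at once: writing $H_1=(\p{h}\times\p{h})\seqq X\seqq(\p{h}\times\p{h})$ forces $X=H_1$ again, so nothing is gained. You can flip \emph{one} inner $\p{cx}$ this way, obtaining
\[
\p{cx}\seqq(\pid\times\p{h})\seqq\p{cx}_{21}\seqq(\pid\times\p{h})\seqq\p{cx}\seqq(\p{h}\times\pid)\seqq\p{cx},
\]
which is precisely the paper's first step. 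From here the decisive move---absent from your sketch---is to use the identity $\p{cx}\fromto_2(\pid\times\p{h})\seqq\p{cz}\seqq(\pid\times\p{h})$ (an instance of \eqref{eq:e3} under \eqref{eq:conjugation}) to absorb the stray $(\pid\times\p{h})$ factors into the flanking $\p{cx}$'s, converting them to $\p{cz}$'s. The rest of the paper's argument pushes these $\p{cz}$'s around using \eqref{eq:zz} and commutation, applies \cref{lem:q_hhcxhh} once more to the final $\p{cx}$, and finishes with \eqref{eq:qt2}, \eqref{eq:octrl}, and \eqref{eq:e2}/\eqref{eq:e3}. The ``regroup and cancel $\p{h}^2$'' description you give does not by itself produce the $\p{x}$ on the bottom wire; that comes specifically from the $\p{cz}$ manipulations.
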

\begin{proof}
    \allowdisplaybreaks
    \begin{gather*}
        \scalebox{\scaleratio}{\begin{quantikz}[row sep=2mm,column sep=2mm]
            & \ctrl{1} & \gate{\p{h}} & \ctrl{1} & \gate{\p{h}} & \ctrl{1} & \gate{\p{h}} & \ctrl{1} &\\
            & \targ{} & & \targ{} & & \targ{} & & \targ{} &
        \end{quantikz}}
        \overset{\text{\cref{eq:e2,,eq:qt4,eq:qt5}}}{\fromto_2{}} \scalebox{\scaleratio}{\begin{quantikz}[row sep=2mm,column sep=2mm]
            & \ctrl{1} & & \targ{} &  & \ctrl{1} & \gate{\p{h}} & \ctrl{1} &\\
            & \targ{} & \gate{\p{h}} & \ctrl{-1} & \gate{\p{h}} & \targ{} & & \targ{} &
        \end{quantikz}} \\
        \overset{\text{\cref{eq:e3,,eq:zz,eq:conjugation}}}{\fromto_2{}}
        \scalebox{\scaleratio}{\begin{quantikz}[row sep=2mm,column sep=2mm]
            & & \ctrl{1} & \targ{} & \ctrl{1} & \gate{\p{h}} & \ctrl{1} &\\
            & \gate{\p{h}} & \phase{} & \ctrl{-1} & \phase{} & \gate{\p{h}} & \targ{} &
        \end{quantikz}}
        \overset{\text{\cref{eq:e2,eq:qt4,eq:qt5}}}{\fromto_2{}}
        \scalebox{\scaleratio}{\begin{quantikz}[row sep=2mm,column sep=2mm]
            & & \ctrl{1} & \targ{} & \ctrl{1} & \targ{} & \gate{\p{h}} &\\
            & \gate{\p{h}} & \phase{} & \ctrl{-1} & \phase{} & \ctrl{-1} & \gate{\p{h}} &
        \end{quantikz}} \\
        \overset{\text{\cref{eq:qt0,eq:qt5,eq:conjugation}}}{\fromto_2{}}
        \scalebox{\scaleratio}{\begin{quantikz}[row sep=2mm,column sep=2mm]
            & & \ctrl{1} & \gate{\p{x}} & \ctrl{1} & \gate{\p{x}} & \gate{\p{h}} & \\
            & \gate{\p{h}} & \phase{} & & \phase{} & & \gate{\p{h}} &
        \end{quantikz}}
        \overset{\text{\cref{eq:qt6}}}{\fromto_2{}}
        \scalebox{\scaleratio}{\begin{quantikz}[row sep=2mm,column sep=2mm]
            & & \ctrl{1} & \gate{\p{x}} & \ctrl{1} & \gate{\p{x}} & \gate{\p{h}} & \\
            & \gate{\p{h}} & \gate{\p{z}} & & \gate{\p{z}} & & \gate{\p{h}} &
        \end{quantikz}} \\
        \overset{\text{\cref{eq:qt2,eq:octrl}}}{\fromto_2{}}
        \scalebox{\scaleratio}{\begin{quantikz}[row sep=2mm,column sep=2mm]
            & & & \gate{\p{h}} & \\
            & \gate{\p{h}} & \gate{\p{z}} & \gate{\p{h}} &
        \end{quantikz}} 
        \overset{\text{\cref{eq:e2,eq:e3}}}{\fromto_2{}}
        \scalebox{\scaleratio}{\begin{quantikz}[row sep=2mm,column sep=2mm]
            & \gate{\p{h}} & \\
            & \gate{\p{x}} &
        \end{quantikz}}
    \end{gather*}
    \begin{gather*}
        \scalebox{\scaleratio}{\begin{quantikz}[row sep=2mm,column sep=2mm]
            & \gate[2][.8cm]{\p{A}} & & \gate[2][.8cm]{\p{A}^{-1}} &\\
            & & \gate{\p{x}} & &
        \end{quantikz}}
        \overset{}{\fromto_2{}}
        \scalebox{\scaleratio}{\begin{quantikz}[row sep=2mm,column sep=2mm]
            & \targ{} & \ctrl{1} & & \targ{} & \gate{\p{h}} & & \gate{\p{h}} & \targ{} & & \ctrl{1} & \targ{} &\\
            & \ctrl{-1} & \phase{} & \gate{\p{h}} & \ctrl{-1} & \ctrl{-1} & \gate{\p{x}} & \ctrl{-1} & \ctrl{-1} & \gate{\p{h}} & \phase{} & \ctrl{-1} &
        \end{quantikz}} \\
        \overset{\text{completeness of $\Pi$, \cref{eq:octrl}}}{\fromto_2{}}
        \scalebox{\scaleratio}{\begin{quantikz}[row sep=2mm,column sep=2mm]
            & \targ{} & \ctrl{1} & & \targ{} & \gate{\p{h}} & \gate{\p{h}} & & \targ{} & & \ctrl{1} & \targ{} &\\
            & \ctrl{-1} & \phase{} & \gate{\p{h}} & \ctrl{-1} & \ctrl{-1} & \octrl{-1} & \gate{\p{x}} & \ctrl{-1} & \gate{\p{h}} & \phase{} & \ctrl{-1} &
        \end{quantikz}}
        \overset{\text{\cref{eq:qt2}}}{\fromto_2{}}
        \scalebox{\scaleratio}{\begin{quantikz}[row sep=2mm,column sep=2mm]
            & \targ{} & \ctrl{1} & & \targ{} & \gate{\p{h}} & \targ{} & & \ctrl{1} & \targ{} &\\
            & \ctrl{-1} & \phase{} & \gate{\p{h}} & \ctrl{-1} & \gate{\p{x}} & \ctrl{-1} & \gate{\p{h}} & \phase{} & \ctrl{-1} &
        \end{quantikz}} \\
        \overset{\text{\cref{eq:e3,eq:conjugation}}}{\fromto_2{}}
        \scalebox{\scaleratio}{\begin{quantikz}[row sep=2mm,column sep=2mm]
            & \targ{} & & \ctrl{1} & \targ{} & \gate{\p{h}} & \targ{}& \ctrl{1} & & \targ{} &\\
            & \ctrl{-1} & \gate{\p{h}} & \targ{} & \ctrl{-1} & \gate{\p{x}} & \ctrl{-1} & \targ{} & \gate{\p{h}} & \ctrl{-1} &
        \end{quantikz}}
        \overset{\text{completeness of $\Pi$}}{\fromto_2{}}
        \scalebox{\scaleratio}{\begin{quantikz}[row sep=2mm,column sep=2mm]
            & \targ{} & & \targ{} &\swap{1} & \gate{\p{h}} & \swap{1} & \targ{} & & \targ{} &\\
            & \ctrl{-1} & \gate{\p{h}} & \ctrl{-1} & \targX{} & \gate{\p{x}} & \targX{} & \ctrl{-1} & \gate{\p{h}} & \ctrl{-1} &
        \end{quantikz}} \\
        \overset{\text{naturality of $\swapt$}}{\fromto_2{}}
        \scalebox{\scaleratio}{\begin{quantikz}[row sep=2mm,column sep=2mm]
            & \targ{} & & \targ{} & \gate{\p{x}} & \targ{} & & \targ{} &\\
            & \ctrl{-1} & \gate{\p{h}} & \ctrl{-1} & \gate{\p{h}} & \ctrl{-1} & \gate{\p{h}} & \ctrl{-1} &
        \end{quantikz}}
        \overset{\text{completeness of $\Pi$ and \cref{eq:conjugation}}}{\fromto_2{}}
        \scalebox{\scaleratio}{\begin{quantikz}[row sep=2mm,column sep=2mm]
            & \targ{} & & \targ{} & & \targ{} & & \targ{} & \gate{\p{x}} &\\
            & \ctrl{-1} & \gate{\p{h}} & \ctrl{-1} & \gate{\p{h}} & \ctrl{-1} & \gate{\p{h}} & \ctrl{-1} & &
        \end{quantikz}} \\
        \overset{\text{naturality of $\swapt$, \cref{eq:qt5,eq:cnot_h}}}{\fromto_2{}}
        \scalebox{\scaleratio}{\begin{quantikz}[row sep=2mm,column sep=2mm]
            & \gate{\p{x}} & \gate{\p{x}} &\\
            & \gate{\p{h}} & &
        \end{quantikz}}
        \overset{\text{completeness of $\Pi$}}{\fromto_2{}}
        \scalebox{\scaleratio}{\begin{quantikz}[row sep=2mm,column sep=2mm]
            & \ghost{} &\\
            & \gate{\p{h}} &
        \end{quantikz}} \qedhere
    \end{gather*}
\end{proof}

\begin{lemma}
    \label{lem:d3_fixed}
    In \QPiLang{},
    \begin{equation*}
        \cT_Q^{4}\left[(\CH{3,4}\CH{1,3}\CH{2,4})^4\right] \fromto_2{} \cT_Q^{4}\left[\CH{1,2}\CH{3,4}\right].
    \end{equation*}
\end{lemma}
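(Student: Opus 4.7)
My plan is to proceed by the same graphical/equational methodology that the paper has already demonstrated for simpler identities, working entirely within the \QPiLang{} equational theory on 4 wires.

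First I would recall the graphical meaning of both sides under the translation $\cT_Q^4$. The left-hand side is the 4-wire \QPiLang{} term obtained by repeating the block $\CH{3,4}\,\CH{1,3}\,\CH{2,4}$ four times, which after translation becomes a circuit of 12 controlled-Hadamards (interleaved with the transposition conjugations packed inside $\qT[4]{\CH{b,c}}$). The right-hand side is the much simpler parallel pair $\CH{1,2}\otimes\CH{3,4}$. The goal is to connect them equationally, not to compute their matrix semantics — although the matrix identity underpins the fact that this is provable in the complete theory at all.

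The key technical tool is \cref{eq:h_decomposition}, which lets us replace each occurrence of $\p{h}$ by a conjugate of $\p{x}$ via the 2-wire gadget $\p{A}$. Applying this to every controlled-Hadamard in the 12-gate product converts the left-hand side into a circuit consisting only of $\p{x}$, $\p{z}$, $\p{cx}$, the ancillary gadgets $\p{A}$ and $\p{A}^{-1}$, and a few $\p{h}$ gates that come out of $\p{A}$ itself. The second phase of the argument is to push the $\p{A}$ gadgets past each other: since $\CH{1,3}$ acts on wires $\{1,3\}$ while $\CH{2,4}$ acts on $\{2,4\}$, their $\p{A}$-conjugated forms act on disjoint wire pairs and commute freely; $\CH{3,4}$ overlaps with both, and here the main computational work happens. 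I would use the commutation relations \cref{eq:qt0,eq:qt1,eq:qt2,eq:qt3} on the pieces of $\p{A}$, \cref{lem:q_hhcxhh} to handle the $(\p{h}\times\p{h})\p{cx}(\p{h}\times\p{h})\fromto_2\swapt\p{cx}\swapt$ rearrangement whenever two Hadamards sandwich a CX across common wires, and \cref{eq:conjugation} to cancel $\p{A}\p{A}^{-1}$ pairs between adjacent blocks so the accumulated conjugations telescope.

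After three such telescopings (corresponding to four iterations of the bracketed triple), the residual circuit should reduce — via the completeness of $\Pi$ for the permutation part and \cref{eq:e1,eq:e2,eq:e3} for the scalar/Hadamard part — to two disjoint $\CH$ blocks, one on wires $\{1,2\}$ and one on $\{3,4\}$. Rewrapping these in the transposition conjugations that $\cT_Q^4$ uses to place $\CH$ gates on arbitrary index pairs delivers the right-hand side $\qT[4]{\CH{1,2}\CH{3,4}}$.

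The principal obstacle is bookkeeping rather than ingenuity: managing a 4-qubit, 12-controlled-Hadamard circuit with ancillary $\p{A}$ gadgets risks a combinatorial explosion of intermediate diagrams, and one has to pick a disciplined rewriting order — most naturally, normalising the $\p{A}$-boundary between every two adjacent triples before touching anything internal — to keep the diagrams tractable. A secondary obstacle is ensuring that the transposition permutations emerging from $\swapp(b,c)$ inside $\qT[4]{\CH{b,c}}$ are handled uniformly; I would discharge these implicitly throughout by appealing to the full abstraction of $\Pi$ for permutation subterms, exactly as done in the proof of \cref{prop:hx8}.
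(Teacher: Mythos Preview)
Your proposal identifies the right technical tool (\cref{eq:h_decomposition}) but misses the structural move that makes the proof tractable. The paper does not work on the raw 4-sum-wire term $\qT[4]{\cdots}$; its first step is to reinterpret $4\bbo \cong \bb{2}\times\bb{2}$ via distributivity, under which each $\qT[4]{\CH{a,b}}$ becomes a \emph{2-qubit} controlled-Hadamard (positively or negatively controlled, on one qubit or the other, depending on $a,b$). This is the essential simplification: in the 2-qubit picture every adjacent pair $\CH{2,4}\CH{1,3}$ collapses immediately, via \cref{eq:qt2}, into a single uncontrolled $\p{h}$ on qubit~1, reducing the whole identity to the short 2-qubit equation \cref{eq:d3_fixed_qc}, which has only four copies of the \emph{same} controlled-$\p{h}$ interleaved with bare $\p{h}$'s. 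Your commutation claim ``$\CH{1,3}$ and $\CH{2,4}$ act on disjoint wire pairs'' conflates the 4-sum-wire picture (where this is true but \cref{eq:h_decomposition}, a qubit-level identity about $\pid_{\bb{2}}\times\p{h}$, cannot be applied in place) with the qubit picture (where it is false: both are 2-qubit controlled gates sharing both wires).

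After the reduction to \cref{eq:d3_fixed_qc}, the paper applies \cref{eq:h_decomposition} \emph{once}, introducing a single ancilla qubit; since all four remaining controlled-$\p{h}$'s are identical, the inner $\p{A}^{-1}\p{A}$ pairs cancel trivially and one lands directly on the hypothesis of \cref{eq:cnot_h}. The ancilla is then eliminated via the cancellative rule \cref{eq:e4} --- a step you do not mention but which is indispensable here. Your plan of conjugating all twelve gates by $\p{A}$ separately and then sorting out the interactions has no evident telescoping structure (the three $\CH$'s in each block are distinct gates, so adjacent $\p{A}$'s do not cancel), and the ``bookkeeping obstacle'' you flag is really a symptom that the decomposition is being applied before the problem has been organised into the form where it works.
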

\begin{proof}
    \allowdisplaybreaks
    We abbreviate $\cT_Q^4$ to $\cT_Q$. Then, by completeness of $\Pi$, we have
    \begin{align*}
        \qT{\CH{1,2}} &\fromto_2{} \scalebox{\scaleratio}{\begin{quantikz}[row sep=1mm,column sep=2mm]
            \lstick[2,brackets=none]{$+$\hspace{-.5cm}}& \gate[2]{\p{h}} &\\
            \lstick[2,brackets=none]{$+$\hspace{-.5cm}}& &\\
            \lstick[2,brackets=none]{$+$\hspace{-.5cm}}& \ghost{\p{h}} &\\
            & \ghost{\p{h}} &
        \end{quantikz}} &
        \qT{\CH{3,4}} &\fromto_2{} \scalebox{\scaleratio}{\begin{quantikz}[row sep=1mm,column sep=2mm]
            \lstick[2,brackets=none]{$+$\hspace{-.5cm}}& \ghost{\p{h}} &\\
            \lstick[2,brackets=none]{$+$\hspace{-.5cm}}& \ghost{\p{h}} &\\
            \lstick[2,brackets=none]{$+$\hspace{-.5cm}}& \gate[2]{\p{h}} &\\
            & &
        \end{quantikz}} \\
        \qT{\CH{1,3}} &\fromto_2{} \scalebox{\scaleratio}{\begin{quantikz}[row sep=1mm,column sep=2mm]
            \lstick[2,brackets=none]{$+$\hspace{-.5cm}}& &\gate[2]{\p{h}}& &\\
            \lstick[2,brackets=none]{$+$\hspace{-.5cm}}& \permute{2,1} && \permute{2,1} &\\
            \lstick[2,brackets=none]{$+$\hspace{-.5cm}}& &\ghost{\p{h}}& &\\
            & &\ghost{\p{h}}& &
        \end{quantikz}} &
        \qT{\CH{2,4}} &\fromto_2{} \scalebox{\scaleratio}{\begin{quantikz}[row sep=1mm,column sep=2mm]
            \lstick[2,brackets=none]{$+$\hspace{-.5cm}}& &\ghost{\p{h}}& &\\
            \lstick[2,brackets=none]{$+$\hspace{-.5cm}}& \permute{2,1} &\ghost{\p{h}}& \permute{2,1} &\\
            \lstick[2,brackets=none]{$+$\hspace{-.5cm}}& &\gate[2]{\p{h}}& &\\
            & && &
        \end{quantikz}}
    \end{align*}
    By aligning $\ket{00},\ket{01},\ket{10},\ket{11}$ with $1,2,3,4$ respectively, the above terms have equivalent representations as quantum circuits:
    \begin{align*}
        \qT{\CH{1,2}} &\fromto_2{} \scalebox{\scaleratio}{\begin{quantikz}[row sep=2mm,column sep=2mm]
            &\octrl{1}& \\
            &\gate{\p{h}}& 
        \end{quantikz}} &
        \qT{\CH{3,4}} &\fromto_2{} \scalebox{\scaleratio}{\begin{quantikz}[row sep=2mm,column sep=2mm]
            &\ctrl{1}& \\
            &\gate{\p{h}}& 
        \end{quantikz}} &
        \qT{\CH{1,3}} &\fromto_2{} \scalebox{\scaleratio}{\begin{quantikz}[row sep=2mm,column sep=2mm]
            &\gate{\p{h}}& \\
            &\octrl{-1}&
        \end{quantikz}} &
        \qT{\CH{2,4}} &\fromto_2{} \scalebox{\scaleratio}{\begin{quantikz}[row sep=2mm,column sep=2mm]
            &\gate{\p{h}}& \\
            &\ctrl{-1}&
        \end{quantikz}}
    \end{align*}
    Then, our goal is equivalently represented as a quantum circuit as follows.
    \begin{equation*}
        \scalebox{\scaleratio}{\begin{quantikz}[row sep=2mm,column sep=2mm]
            & \gate{\p{h}} & \gate{\p{h}} &\ctrl{1} & \gate{\p{h}} & \gate{\p{h}} & \ctrl{1} & \gate{\p{h}} & \gate{\p{h}} & \ctrl{1} & \gate{\p{h}} & \gate{\p{h}} &\ctrl{1} & \\
            & \ctrl{-1} & \octrl{-1} & \gate{\p{h}} & \ctrl{-1} & \octrl{-1} & \gate{\p{h}} & \ctrl{-1} & \octrl{-1} & \gate{\p{h}} & \ctrl{-1} & \octrl{-1} &\gate{\p{h}} &
        \end{quantikz}}
        \fromto_2{}
        \scalebox{\scaleratio}{\begin{quantikz}[row sep=2mm,column sep=2mm]
            & \ctrl{1} & \octrl{1} &\\
            & \gate{\p{h}} & \gate{\p{h}} &
        \end{quantikz}}
    \end{equation*}
    By \cref{eq:qt2}, it becomes
    \begin{equation}\label{eq:d3_fixed_qc}
        \scalebox{\scaleratio}{\begin{quantikz}[row sep=2mm,column sep=2mm]
            & \gate{\p{h}} &\ctrl{1} & \gate{\p{h}} & \ctrl{1} & \gate{\p{h}} & \ctrl{1} & \gate{\p{h}} &\ctrl{1} & \\
            & & \gate{\p{h}} & & \gate{\p{h}} & & \gate{\p{h}} & &\gate{\p{h}} &
        \end{quantikz}}
        \fromto_2{}
        \scalebox{\scaleratio}{\begin{quantikz}[row sep=2mm,column sep=2mm]
            & \ghost{\p{h}} &\\
            & \gate{\p{h}} &
        \end{quantikz}}
    \end{equation}
    Using \cref{eq:h_decomposition,eq:conjugation} to decompose $\p{ch}$ with one ancilla qubit, \cref{eq:d3_fixed_qc} rewrites to
    \begin{gather*}
        \scalebox{\scaleratio}{\begin{quantikz}[row sep=2mm,column sep=2mm]
            & \gate{\p{h}} &\ctrl{2} & \gate{\p{h}} & \ctrl{2} & \gate{\p{h}} & \ctrl{2} & \gate{\p{h}} &\ctrl{2} & \\
            & \ghost{} & & & & & & & & \\
            & & \gate{\p{h}} & & \gate{\p{h}} & & \gate{\p{h}} & &\gate{\p{h}} &
        \end{quantikz}} \\
        \overset{\text{\cref{eq:h_decomposition,eq:conjugation}}}{\fromto_2{}}
        \scalebox{\scaleratio}{\begin{quantikz}[row sep=2mm,column sep=2mm]
            & \gate{\p{h}} & &\ctrl{2} & & \gate{\p{h}} & & \ctrl{2} & & \gate{\p{h}} & & \ctrl{2} & & \gate{\p{h}} & & \ctrl{2} & & \\
            & & \gate[2][.7cm]{\p{A}} & & \gate[2][.7cm]{\mathclap{\p{A}^{-1}}}& & \gate[2][.7cm]{\p{A}} & & \gate[2][.7cm]{\mathclap{\p{A}^{-1}}} & & \gate[2][.7cm]{\p{A}} & & \gate[2][.7cm]{\mathclap{\p{A}^{-1}}} & & \gate[2][.7cm]{\p{A}} & & \gate[2][.7cm]{\mathclap{\p{A}^{-1}}} & \\
            & & & \targ{} & & & & \targ{} & & & & \targ{} & & & & \targ{} & &
        \end{quantikz}} \\
        \overset{\text{$\p{A}^{-1}$ is the inverse of $\p{A}$}}{\fromto_2{}}
        \scalebox{\scaleratio}{\begin{quantikz}[row sep=2mm,column sep=2mm]
            & \gate{\p{h}} &\ctrl{2} & \gate{\p{h}} & \ctrl{2} & \gate{\p{h}} & \ctrl{2} & \gate{\p{h}} & \ctrl{2} & & \\
            & \gate[2][.7cm]{\p{A}} & & & & & & & & \gate[2][.7cm]{\mathclap{\p{A}^{-1}}} & \\
            & & \targ{} & & \targ{} & & \targ{} & & \targ{} & &
        \end{quantikz}}
        \overset{\text{\cref{eq:cnot_h}}}{\fromto_2{}}
        \scalebox{\scaleratio}{\begin{quantikz}[row sep=2mm,column sep=2mm]
            & \gate{\p{h}} & \gate{\p{h}} & & \\
            & \gate[2][.7cm]{\p{A}} & & \gate[2][.7cm]{\mathclap{\p{A}^{-1}}} & \\
            & & \gate{\p{x}} & &
        \end{quantikz}} 
        \overset{\text{\cref{eq:h_decomposition,eq:e2}}}{\fromto_2{}}
        \scalebox{\scaleratio}{\begin{quantikz}[row sep=2mm,column sep=2mm]
            & \ghost{} & \\
            & \ghost{} & \\
            & \gate{\p{h}} &
        \end{quantikz}}
    \end{gather*}
    Then, by invoking the cancellative condition \cref{eq:e4} to remove the ancilla qubit,\footnote{An ancilla qubit can be eliminated via the following derivation: $\pid_{\bbo+\bbo}\times c \fromto_2 \pid \Rightarrow c + c \fromto_2 \pid \Rightarrow c \fromto_2 \pid$.} along with the naturality of $\swapt$ and distributivity, we derive \cref{eq:d3_fixed_qc}.
\end{proof}

\begin{lemma}
    \label{lem:d4_fixed}
    In \QPiLang{},
    \[ \cT_Q^{6}\left[(\CH{1,3}\CH{2,4}\CH{1,2}\CH{1,3}\CH{2,4}\CX{3,5}\CX{4,6})^3\right] \fromto_2{} \cT_Q^{6}\left[\CH{3,5}\CH{4,6}\CH{5,6}\CH{3,5}\CH{4,6}\CX{3,5}\CX{4,6}\right].\]
\end{lemma}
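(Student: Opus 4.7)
The plan is to follow the strategy employed in \cref{lem:d3_fixed}: translate both sides using $\cT_Q$ and recast the result as a quantum-circuit identity on a factorised six-dimensional space. First I would use the completeness of $\Pi$ to exhibit an isomorphism $6\bbo \cong (\bbo+\bbo+\bbo)\times\bb{2}$, viewing positions $1$--$2$, $3$--$4$, $5$--$6$ as three two-element ``blocks''. Under this decomposition the key observations are that $\CX{3,5}\CX{4,6}$ collapses to a transposition of blocks $2$ and $3$ tensored with the identity on the intra-block coordinate, and $\CH{3,5}\CH{4,6}$ collapses to a $\hadamard$ on the subspace spanned by blocks $2$ and $3$. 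The cross-block gates $\CH{1,3}$ and $\CH{2,4}$ similarly give a $\hadamard$ on the span of blocks $1$ and $2$, while $\CH{1,2}$ and $\CH{5,6}$ act as within-block Hadamards on blocks $1$ and $3$ respectively.

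With this reshaping, both sides of the lemma become quantum-circuit identities. I would attack the reduced identity using \cref{lem:q_hhcxhh} together with the decomposition identities \cref{eq:cnot_h} and \cref{eq:h_decomposition} established en route to \cref{lem:d3_fixed}. Since the LHS has the form $(AB)^3 = A\seqq (BAB)\seqq A$ with $B = \CX{3,5}\CX{4,6}$ a permutation swapping blocks $2$ and $3$, I would push $B$ past $A = \CH{1,3}\CH{2,4}\CH{1,2}\CH{1,3}\CH{2,4}$ using naturality, which transports the cross-block Hadamards $\CH{1,3},\CH{2,4}$ into $\CH{1,5},\CH{2,6}$ and leaves $\CH{1,2}$ fixed. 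Consolidating the resulting Hadamard sequences by \cref{eq:e2}, and absorbing any auxiliary identity terms via the cancellative \cref{eq:e4}, should reduce the whole expression to the RHS.

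The main obstacle is that, unlike the clean two-qubit picture underlying \cref{lem:d3_fixed}, the three-block factor $(\bbo+\bbo+\bbo)$ does not admit a binary product decomposition, so the standard ``qubit circuit'' reasoning must be adapted to a ternary block structure. In particular, the asymmetry of $\CH{1,2}$ (touching only block $1$, while $\CH{5,6}$ only touches block $3$ and no gate appears on block $2$ in isolation) breaks the superficial three-fold symmetry, and this asymmetry is precisely what the cubing on the LHS is designed to exploit. I anticipate needing to invoke \cref{lem:d3_fixed} itself---in the form $(\CH{3,4}\CH{1,3}\CH{2,4})^4\fromto_2\CH{1,2}\CH{3,4}$ embedded into the $6$-element type---as a sub-step to collapse an intermediate Hadamard burst on blocks $1$ and $2$ that arises after conjugation by $B$. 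The combinatorics of tracking six basis-indexed Hadamards through three rounds of conjugation is considerable, and I expect the formal proof to mirror the graphical calculus of \cref{lem:d3_fixed} but at roughly twice the length.
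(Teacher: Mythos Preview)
Your $3\times 2$ factorisation of $6\bbo$ is correct and the identifications of the generators under it are right, but the plan to then drive the argument with \cref{lem:q_hhcxhh}, \cref{eq:cnot_h}, and \cref{eq:h_decomposition} has a genuine gap: those identities are stated and proved for qubit circuits, i.e.\ morphisms on tensor powers of $\bb{2}$. In your decomposition one tensor factor is the qutrit $\bbo+\bbo+\bbo$, and the ``Hadamard on blocks $i$ and $j$'' you describe is a $2$-level operator sitting inside a $3$-dimensional summand, not a gate on a $\bb{2}$ tensor factor. None of the cited circuit lemmas apply to that situation as written, and there is no off-the-shelf qutrit analogue to invoke. (Incidentally, $(AB)^3 = ABABAB$, not $A\seqq(BAB)\seqq A$; the conjugation picture you sketch does not quite match the exponent.)

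The paper resolves exactly this obstruction by using the cancellative condition \cref{eq:e4} \emph{at the very first step}, not as end-of-proof cleanup: it passes to $\cT_Q^{8}$, and $8\bbo\cong\bb{2}\times\bb{2}\times\bb{2}$ is a genuine $3$-qubit space in which every generator translates to a standard (multi-)controlled gate. From there \cref{lem:q_hhcxhh}, \cref{eq:cnot_h}, and \cref{eq:h_decomposition} apply directly; \cref{eq:e4} is invoked once more to adjoin an auxiliary qubit for the $\p{A}/\p{A}^{-1}$ trick, and the whole statement is ground down to the single-qubit identity $(\p{z}\seqq\p{x}\seqq\p{z}\seqq\p{x})^2\seqq\p{z}\fromto_2\p{z}$. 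The proof does not call on \cref{lem:d3_fixed}. Your instinct that auxiliary space is needed is correct, but it must be introduced \emph{before} the circuit reasoning so that the qubit lemmas become applicable, rather than after.
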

\begin{proof}
    By the cancellative condition \cref{eq:e4}, it suffices to prove this lemma for $\cT_Q^{8}$.
    Similar to the proof of \cref{lem:d3_fixed}, i.e., by aligning $\ket{000},\ket{001},\ldots,\ket{111}$ with $1,2,\ldots,8$ respectively, we have the following equivalent representations via $3$-qubit circuits.
    \begin{equation*}
    \adjustbox{max width=\linewidth}{$
    \begin{aligned}
        \cT_Q^{8}[\CH{1,3}] &\fromto_2{} \scalebox{\scaleratio}{\begin{quantikz}[row sep=2mm,column sep=2mm]
            &\octrl{1}& \\
            &\gate{\p{h}}& \\
            &\octrl{-1}&
        \end{quantikz}} &
        \cT_Q^{8}[\CH{1,2}] &\fromto_2{} \scalebox{\scaleratio}{\begin{quantikz}[row sep=2mm,column sep=2mm]
            &\octrl{1}& \\
            &\octrl{1}& \\
            &\gate{\p{h}}&
        \end{quantikz}} &
        \cT_Q^{8}[\CH{3,5}] &\fromto_2{} \scalebox{\scaleratio}{\begin{quantikz}[row sep=2mm,column sep=2mm]
            &\ctrl{1} &\gate{\p{h}}& \ctrl{1} & \\
            &\targ{} &\ctrl{-1}& \targ{} &\\
            & &\octrl{-1} & &
        \end{quantikz}} &
        \cT_Q^{8}[\CX{3,5}] &\fromto_2{} \scalebox{\scaleratio}{\begin{quantikz}[row sep=2mm,column sep=2mm]
            &\ctrl{1} &\targ{}& \ctrl{1} & \\
            &\targ{} &\ctrl{-1}& \targ{} &\\
            & &\octrl{-1} & &
        \end{quantikz}} \\
        \cT_Q^{8}[\CH{2,4}] &\fromto_2{} \scalebox{\scaleratio}{\begin{quantikz}[row sep=2mm,column sep=2mm]
            &\octrl{1}& \\
            &\gate{\p{h}}& \\
            &\ctrl{-1}&
        \end{quantikz}} &
        \cT_Q^{8}[\CH{5,6}] &\fromto_2{} \scalebox{\scaleratio}{\begin{quantikz}[row sep=2mm,column sep=2mm]
            & \ctrl{1} & \\
            & \octrl{1} & \\
            & \gate{\p{h}} &
        \end{quantikz}} &
        \cT_Q^{8}[\CH{4,6}] &\fromto_2{} \scalebox{\scaleratio}{\begin{quantikz}[row sep=2mm,column sep=2mm]
            &\ctrl{1} &\gate{\p{h}}& \ctrl{1} & \\
            &\targ{} &\ctrl{-1}& \targ{} &\\
            & &\ctrl{-1} & &
        \end{quantikz}} &
        \cT_Q^{8}[\CX{4,6}] &\fromto_2{} \scalebox{\scaleratio}{\begin{quantikz}[row sep=2mm,column sep=2mm]
            &\ctrl{1} &\targ{}& \ctrl{1} &\\
            &\targ{} &\ctrl{-1}& \targ{} &\\
            & &\ctrl{-1} & &
        \end{quantikz}}
    \end{aligned}$}
    \end{equation*}
    Then, by completeness of $\Pi$ and \cref{eq:qt2}, we have
    \begin{gather*}
        \cT_Q^{8}[\CH{1,3}\CH{2,4}] \fromto_2{} \scalebox{\scaleratio}{\begin{quantikz}[row sep=2mm,column sep=2mm]
            & \octrl{1} & \octrl{1} & \\
            & \gate{\p{h}} & \gate{\p{h}} & \\
            & \ctrl{-1} &\octrl{-1}&
        \end{quantikz}}
        \fromto_2{} \scalebox{\scaleratio}{\begin{quantikz}[row sep=2mm,column sep=2mm]
            & \octrl{1} & \\
            & \gate{\p{h}} & \\
            & \ghost{} &
        \end{quantikz}} \qquad
        \cT_Q^{8}[\CX{3,5}\CX{4,6}] \fromto_2{} \scalebox{\scaleratio}{\begin{quantikz}[row sep=2mm,column sep=2mm]
            &\ctrl{1} &\targ{}& \ctrl{1} &\ctrl{1} &\targ{}& \ctrl{1} & \\
            &\targ{} &\ctrl{-1}& \targ{} &\targ{} &\ctrl{-1}& \targ{} &\\
            & &\ctrl{-1} & & &\octrl{-1} & &
        \end{quantikz}} \fromto_2{} \scalebox{\scaleratio}{\begin{quantikz}[row sep=5mm,column sep=2mm]
            & \swap{1} & \\
            & \targX{} & \\
            &  &
        \end{quantikz}}
        \\
        \cT_Q^{8}[\CH{3,5}\CH{4,6}] \fromto_2{} \scalebox{\scaleratio}{\begin{quantikz}[row sep=2mm,column sep=2mm]
            &\ctrl{1} &\gate{\p{h}}& \ctrl{1} &\ctrl{1} &\gate{\p{h}}& \ctrl{1} & \\
            &\targ{} &\ctrl{-1}& \targ{} &\targ{} &\ctrl{-1}& \targ{} &\\
            & &\ctrl{-1} & & &\octrl{-1} & &
        \end{quantikz}} \fromto_2{} \scalebox{\scaleratio}{\begin{quantikz}[row sep=2mm,column sep=2mm]
            &\ctrl{1} &\gate{\p{h}}& \ctrl{1} & \\
            &\targ{} &\ctrl{-1}& \targ{} &\\
            & & \ghost{} & &
        \end{quantikz}}
    \end{gather*}
    Thus, our goal for $\cT_Q^8$ is equivalent to
    \begin{align*}
        & \scalebox{\scaleratio}{\begin{quantikz}[row sep=2mm,column sep=2mm]
            & \swap{1} & \octrl{1} & \octrl{1} & \octrl{1} & \swap{1} & \octrl{1} & \octrl{1} & \octrl{1} & \swap{1} & \octrl{1} & \octrl{1} & \octrl{1} &\\
            & \targX{} & \gate{\p{h}} & \octrl{1} & \gate{\p{h}} & \targX{} & \gate{\p{h}} & \octrl{1} & \gate{\p{h}} & \targX{} & \gate{\p{h}} & \octrl{1} & \gate{\p{h}} &\\
            & & & \gate{\p{h}} & & & & \gate{\p{h}} & & & & \gate{\p{h}} & &
        \end{quantikz}}
        \fromto_2{}
        \scalebox{\scaleratio}{\begin{quantikz}[row sep=2mm,column sep=2mm]
            & \swap{1} & \ctrl{1} & \gate{\p{h}} & \ctrl{1} & \ctrl{1} & \ctrl{1} & \gate{\p{h}} & \ctrl{1} & \\
            & \targX{} & \targ{} & \ctrl{-1} & \targ{} & \octrl{1} & \targ{} & \ctrl{-1} & \targ{} & \\
            & & & & & \gate{\p{h}} & & & &
        \end{quantikz}} \\
        \Leftrightarrow{}&
        \scalebox{\scaleratio}{\begin{quantikz}[row sep=2mm,column sep=2mm]
            & & \octrl{1} & & \swap{1} & & \octrl{1} & & \swap{1} & & \octrl{1} & &\\
            & \gate{\p{h}} & \octrl{1} & \gate{\p{h}} & \targX{} & \gate{\p{h}} & \octrl{1} & \gate{\p{h}} & \targX{} & \gate{\p{h}} & \octrl{1} & \gate{\p{h}} &\\
            & & \gate{\p{h}} & & & & \gate{\p{h}} & & & & \gate{\p{h}} & &
        \end{quantikz}}
        \fromto_2{}
        \scalebox{\scaleratio}{\begin{quantikz}[row sep=2mm,column sep=2mm]
            & \ctrl{1} & \gate{\p{h}} & \ctrl{1} & \gate{\p{h}} & \ctrl{1} & \\
            & \targ{} & \ctrl{-1} & \ctrl{1} & \ctrl{-1} & \targ{} & \\
            & & & \gate{\p{h}} & & &
        \end{quantikz}} \fromto_2{}
        \scalebox{\scaleratio}{\begin{quantikz}[row sep=2mm,column sep=2mm]
            & \ctrl{1} & \gate{\p{h}} & \ctrl{1} & \gate{\p{h}} & \ctrl{1} & \\
            & \targ{} & & \ctrl{1} & & \targ{} & \\
            & & & \gate{\p{h}} & & &
        \end{quantikz}}
        \tag{by completeness of $\Pi$, \cref{eq:octrl,eq:qt5,,eq:conjugation}} \\
        \Leftrightarrow{}&
        \scalebox{\scaleratio}{\begin{quantikz}[row sep=2mm,column sep=2mm]
            & & \octrl{1} & \gate{\p{h}} & \octrl{1} & \gate{\p{h}} & \octrl{1} & &\\
            & \gate{\p{h}} & \octrl{1} & \gate{\p{h}} & \octrl{1} & \gate{\p{h}} & \octrl{1} & \gate{\p{h}} &\\
            & & \gate{\p{h}} & & \gate{\p{h}} & & \gate{\p{h}} & & &
        \end{quantikz}}
        \fromto_2{}
        \scalebox{\scaleratio}{\begin{quantikz}[row sep=2mm,column sep=2mm]
            & \ctrl{1} & \gate{\p{h}} & \ctrl{1} & \gate{\p{h}} & \ctrl{1} & \\
            & \targ{} & & \ctrl{1} & & \targ{} & \\
            & & & \gate{\p{h}} & & &
        \end{quantikz}} \tag{by naturality of $\swapt$, \cref{eq:qt5,,eq:e2}}
    \end{align*}
    With the cancellative condition \cref{eq:e4}, naturality of $\swapt$ and distributivity, it suffices to prove
    \begin{equation*}
        \adjustbox{max width=\linewidth}{$
        \scalebox{\scaleratio}{\begin{quantikz}[row sep=2mm,column sep=2mm]
            & & \octrl{1} & \gate{\p{h}} & \octrl{1} & \gate{\p{h}} & \octrl{1} & &\\
            & \gate{\p{h}} & \octrl{2} & \gate{\p{h}} & \octrl{2} & \gate{\p{h}} & \octrl{2} & \gate{\p{h}} &\\
            & \ghost{} & & & & & & & \\
            & & \gate{\p{h}} & & \gate{\p{h}} & & \gate{\p{h}} & &
        \end{quantikz}}
        \fromto_2{}
        \scalebox{\scaleratio}{\begin{quantikz}[row sep=2mm,column sep=2mm]
            & \ctrl{1} & \gate{\p{h}} & \ctrl{1} & \gate{\p{h}} & \ctrl{1} & \\
            & \targ{} & & \ctrl{2} & & \targ{} & \\
            & \ghost{} & & & & & \\
            & & & \gate{\p{h}} & & &
        \end{quantikz}} 
        \Leftrightarrow{}
        \scalebox{\scaleratio}{\begin{quantikz}[row sep=2mm,column sep=2mm]
            & & \octrl{1} & \gate{\p{h}} & \octrl{1} & \gate{\p{h}} & \octrl{1} & &\\
            & \gate{\p{h}} & \octrl{2} & \gate{\p{h}} & \octrl{2} & \gate{\p{h}} & \octrl{2} & \gate{\p{h}} &\\
            & \gate[2][.8cm]{\p{A}} & & & & & & \gate[2][.8cm]{\p{A}^{-1}} &\\
            & & \targ{} & & \targ{} & & \targ{} & &
        \end{quantikz}}
        \fromto_2{}
        \scalebox{\scaleratio}{\begin{quantikz}[row sep=2mm,column sep=2mm]
            & \ctrl{1} & \gate{\p{h}} & \ctrl{1} & \gate{\p{h}} & \ctrl{1} & \\
            & \targ{} & & \ctrl{2} & & \targ{} & \\
            & & \gate[2][.8cm]{\p{A}} & & \gate[2][.8cm]{\p{A}^{-1}} & &\\
            & & & \targ{} & & &
        \end{quantikz}}$}
        \tag{by $\p{A}^{-1}$ is the inverse of $\p{A}$, \cref{eq:h_decomposition,eq:conjugation}}
    \end{equation*}
    where we use the same decomposition trick as in the proof of \cref{eq:d3_fixed_qc}.
    Similarly, it suffices to prove
    \begin{align*}
        & \scalebox{\scaleratio}{\begin{quantikz}[row sep=2mm,column sep=2mm]
            & & \octrl{1} & \gate{\p{h}} & \octrl{1} & \gate{\p{h}} & \octrl{1} & &\\
            & \gate{\p{h}} & \octrl{1} & \gate{\p{h}} & \octrl{1} & \gate{\p{h}} & \octrl{1} & \gate{\p{h}} &\\
            & & \targ{} & & \targ{} & & \targ{} & &
        \end{quantikz}}
        \fromto_2{}
        \scalebox{\scaleratio}{\begin{quantikz}[row sep=2mm,column sep=2mm]
            & \ctrl{1} & \gate{\p{h}} & \ctrl{1} & \gate{\p{h}} & \ctrl{1} & \\
            & \targ{} & & \ctrl{1} & & \targ{} & \\
            & & & \targ{} & & &
        \end{quantikz}} \\
        \Leftrightarrow{}&
        \scalebox{\scaleratio}{\begin{quantikz}[row sep=2mm,column sep=2mm]
            & & & \octrl{1} & \gate{\p{h}} & \gate{\p{x}} & \ctrl{1} & \gate{\p{x}} & \gate{\p{h}} & \octrl{1} & & &\\
            & \gate{\p{h}} & \gate{\p{x}} & \ctrl{1} & \gate{\p{x}} & \gate{\p{h}} & \octrl{1} & \gate{\p{h}} & \gate{\p{x}} & \ctrl{1} & \gate{\p{x}} & \gate{\p{h}} &\\
            & & & \targ{} & & & \targ{} & & & \targ{} & & &
        \end{quantikz}}
        \fromto_2{}
        \scalebox{\scaleratio}{\begin{quantikz}[row sep=2mm,column sep=2mm]
            & \ctrl{1} & \gate{\p{h}} & \ctrl{1} & \gate{\p{h}} & \ctrl{1} & \\
            & \targ{} & & \ctrl{1} & & \targ{} & \\
            & & & \targ{} & & &
        \end{quantikz}} \tag{by \cref{eq:conjugation,eq:octrl}} \\
        \Leftrightarrow{}&
        \scalebox{\scaleratio}{\begin{quantikz}[row sep=2mm,column sep=2mm]
            & & & \octrl{1} & \gate{\p{z}} & \gate{\p{h}} & \ctrl{1} & \gate{\p{h}} & \gate{\p{z}} & \octrl{1} & & &\\
            & \gate{\p{z}} & \gate{\p{h}} & \ctrl{1} & \gate{\p{h}} & \gate{\p{z}} & \octrl{1} & \gate{\p{z}} & \gate{\p{h}} & \ctrl{1} & \gate{\p{h}} & \gate{\p{z}} &\\
            & & & \targ{} & & & \targ{} & & & \targ{} & & &
        \end{quantikz}}
        \fromto_2{}
        \scalebox{\scaleratio}{\begin{quantikz}[row sep=2mm,column sep=2mm]
            & \ctrl{1} & \gate{\p{h}} & \ctrl{1} & \gate{\p{h}} & \ctrl{1} & \\
            & \targ{} & & \ctrl{1} & & \targ{} & \\
            & & & \targ{} & & &
        \end{quantikz}} \tag{by \cref{eq:e2,eq:e3}} \\
        \Leftrightarrow{}&
        \scalebox{\scaleratio}{\begin{quantikz}[row sep=2mm,column sep=2mm]
            & & \octrl{1} & \gate{\p{z}}&  & \targ{} & & \gate{\p{z}} & \octrl{1} & & \\
            & \gate{\p{z}} & \targ{} & & \gate{\p{z}} & \octrl{-1} & \gate{\p{z}} & & \targ{} & \gate{\p{z}} &\\
            & \gate{\p{h}} & \ctrl{-1} & \gate{\p{h}} & \gate{\p{h}} & \ctrl{-1} & \gate{\p{h}} & \gate{\p{h}} & \ctrl{-1} & \gate{\p{h}} &
        \end{quantikz}}
        \fromto_2{}
        \scalebox{\scaleratio}{\begin{quantikz}[row sep=2mm,column sep=2mm]
            & \ctrl{1} & \targ{} & \ctrl{1} & \\
            & \targ{} & \ctrl{-1} & \targ{} & \\
            & \gate{\p{h}} & \ctrl{-1} & \gate{\p{h}} &
        \end{quantikz}} \tag{by \cref{eq:qt4,eq:qt5,eq:conjugation,,eq:e2}}  \\
        \Leftrightarrow{}&
        \scalebox{\scaleratio}{\begin{quantikz}[row sep=2mm,column sep=2mm]
            & & \octrl{1} & \gate{\p{z}} & \targ{} & \gate{\p{z}} & \octrl{1} & & \\
            & \gate{\p{z}} & \targ{} & \gate{\p{z}} & \octrl{-1} & \gate{\p{z}} & \targ{} & \gate{\p{z}} &\\
            & & \ctrl{-1} &  & \ctrl{-1} &  & \ctrl{-1} &  &
        \end{quantikz}}
        \fromto_2{}
        \scalebox{\scaleratio}{\begin{quantikz}[row sep=2mm,column sep=2mm]
            & \ctrl{1} & \targ{} & \ctrl{1} & \\
            & \targ{} & \ctrl{-1} & \targ{} & \\
            &  & \ctrl{-1} &  &
        \end{quantikz}} \tag{by \cref{eq:e2}}
    \end{align*}
    With the naturality of $\swapt$, \cref{eq:conjugation,eq:qt5}, this reduces further to:
    \begin{equation}\label{eq:d4_final}
        \scalebox{\scaleratio}{\begin{quantikz}[row sep=2mm,column sep=2mm]
            & & \octrl{1} & \gate{\p{z}} & \targ{} & \gate{\p{z}} & \octrl{1} & & \\
            & \gate{\p{z}} & \targ{} & \gate{\p{z}} & \octrl{-1} & \gate{\p{z}} & \targ{} & \gate{\p{z}} &
        \end{quantikz}}
        \fromto_2{}
        \scalebox{\scaleratio}{\begin{quantikz}[row sep=2mm,column sep=2mm]
            & \ctrl{1} & \targ{} & \ctrl{1} & \\
            & \targ{} & \ctrl{-1} & \targ{} &
        \end{quantikz}}
    \end{equation}
    Noticing that
        \begin{gather}
            \scalebox{\scaleratio}{\begin{quantikz}[row sep=2mm,column sep=2mm]
            & \gate{\p{z}} & \octrl{1} & \gate{\p{z}} & \\
            & & \gate{c} & &
            \end{quantikz}}
            \overset{\text{\cref{eq:qt5,eq:qt2}}}{\fromto_2{}}
            \scalebox{\scaleratio}{\begin{quantikz}[row sep=2mm,column sep=2mm]
            & \gate{\p{z}} & \gate{\p{z}} & \octrl{1} & \gate{\p{z}} & \gate{\p{z}} & \\
            & \octrl{-1} & \ctrl{-1} & \gate{c} & \ctrl{-1} & \octrl{-1} &
            \end{quantikz}}
            \overset{\text{\cref{eq:zz}}}{\fromto_2{}}
            \scalebox{\scaleratio}{\begin{quantikz}[row sep=2mm,column sep=2mm]
            & \gate{\p{z}} & \ctrl{1} & \octrl{1} & \ctrl{1} & \gate{\p{z}} & \\
            & \octrl{-1} & \gate{\p{z}} & \gate{c} & \gate{\p{z}} & \octrl{-1} &
            \end{quantikz}} \notag \\
            \overset{\text{\cref{eq:qt0,,eq:qt1,,eq:qt6,eq:e1}}}{\fromto_2{}}
            \scalebox{\scaleratio}{\begin{quantikz}[row sep=2mm,column sep=2mm]
            & \gate{\p{z}} & \octrl{1} & \gate{\p{z}} & \\
            & \octrl{-1} & \gate{c} & \octrl{-1} &
            \end{quantikz}}
            \overset{\text{\cref{eq:octrl,eq:qt5}}}{\fromto_2{}}
            \scalebox{\scaleratio}{\begin{quantikz}[row sep=2mm,column sep=2mm]
            & & \gate{\p{z}} & & \octrl{1} & & \gate{\p{z}} & & \\
            & \gate{\p{x}} & \ctrl{-1} & \gate{\p{x}} & \gate{c} & \gate{\p{x}} & \ctrl{-1} & \gate{\p{x}} &
            \end{quantikz}} \notag \\
            \overset{\text{\cref{eq:zz}}}{\fromto_2{}}
            \scalebox{\scaleratio}{\begin{quantikz}[row sep=2mm,column sep=2mm]
            & & \ctrl{1} & & \octrl{1} & & \ctrl{1} & & \\
            & \gate{\p{x}} & \gate{\p{z}} & \gate{\p{x}} & \gate{c} & \gate{\p{x}} & \gate{\p{z}} & \gate{\p{x}} &
            \end{quantikz}}
            \overset{\text{\cref{eq:conjugation}}}{\fromto_2{}}
            \scalebox{\scaleratio}{\begin{quantikz}[row sep=2mm,column sep=2mm]
            & \ctrl{1} & \octrl{1} & \ctrl{1} & \\
            & \gate{\p{x}\seqq\p{z}\seqq\p{x}} & \gate{c} & \gate{\p{x}\seqq\p{z}\seqq\p{x}} &
            \end{quantikz}}
            \overset{\text{completeness of $\Pi$}}{\underset{\text{\cref{eq:qt0,eq:qt1,eq:e1}}}{\fromto_2{}}}
            \scalebox{\scaleratio}{\begin{quantikz}[row sep=2mm,column sep=2mm]
            & \octrl{1} & \\
            & \gate{c} &
            \end{quantikz}}
            \label{eq:z_octrl_z}
        \end{gather}
    \cref{eq:d4_final} rewrites to
        \begin{gather*}
            \scalebox{\scaleratio}{\begin{quantikz}[row sep=2mm,column sep=2mm]
                & & \octrl{1} & \gate{\p{z}} & \targ{} & \gate{\p{z}} & \octrl{1} & & \\
                & \gate{\p{z}} & \targ{} & \gate{\p{z}} & \octrl{-1} & \gate{\p{z}} & \targ{} & \gate{\p{z}} &
            \end{quantikz}}
            \overset{\text{\cref{eq:z_octrl_z,eq:qt5,eq:e1}}}{\fromto_2{}}
            \scalebox{\scaleratio}{\begin{quantikz}[row sep=2mm,column sep=2mm]
                & \gate{\p{z}} & \octrl{1} & & \targ{} & & \octrl{1} & \gate{\p{z}} & \\
                & \gate{\p{z}} & \targ{} & & \octrl{-1} & & \targ{} & \gate{\p{z}} &
            \end{quantikz}} \\
            \overset{\text{completeness of $\Pi$}}{\fromto_2{}}
            \scalebox{\scaleratio}{\begin{quantikz}[row sep=2mm,column sep=2mm]
                & \gate{\p{z}} & \swap{1} & \gate{\p{z}} & \\
                & \gate{\p{z}} & \targX{} & \gate{\p{z}} &
            \end{quantikz}}
            \overset{\text{naturality of $\swapt$ and \cref{eq:e1}}}{\fromto_2{}}
            \scalebox{\scaleratio}{\begin{quantikz}[row sep=2mm,column sep=2mm]
                & \ghost{} & \swap{1} & &\\
                & \ghost{} & \targX{} & &
            \end{quantikz}}
            \overset{\text{completeness of $\Pi$}}{\fromto_2{}}
            \scalebox{\scaleratio}{\begin{quantikz}[row sep=2mm,column sep=2mm]
            & \ctrl{1} & \targ{} & \ctrl{1} & \\
            & \targ{} & \ctrl{-1} & \targ{} &
        \end{quantikz}}
        \end{gather*}
    This completes the proof.
\end{proof}
\endgroup

Next, we show that the composition of $\wsem{\cdot}$ and $\cT_Q$ gives equivalent terms.

\begin{restatable}{lemma}{lemwqinv}\label{lem:wq_inv}
    If $c$ is a term of Q-$\Pi$, then $c \fromto_2 \qT[\hdim(c)]{\wsem{c}}$ is a level-2 combinator of Q-$\Pi$. 
\end{restatable}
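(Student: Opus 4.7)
The plan is to proceed by structural induction on the term $c$ of \QPiLang{}. The atomic primitives split naturally into three groups. First, the primitives whose image under $\wsem{\cdot}$ is $\varepsilon$ (namely $\pid$, all the associators/unitors, $\dist$, $\factor$, $\absorb$, $\factorz$) satisfy the statement because $\qT[n]{\varepsilon} = \pid_{n\bbo}$ represents the identity permutation on the flattened type; and each such primitive is already an identity permutation on finite sets, so full abstraction of $\Pi$ gives the required level-2 equivalence after reconciling the two normalised representations of the type. Second, $\swapp$ and $\swapt$ are handled similarly: their images under $\wsem{\cdot}$ are synthesised from the same underlying permutation on $[n]$ via \cref{alg:exact_synthesis}, and then $\cT_Q$ produces a term in \QPiLang{} denoting that same permutation, so completeness of $\Pi$ closes the case. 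Third, for $(-1)$ and $\hadamard$ one unfolds $\qT[1]{\CZ{1}}$ and $\qT[2]{\CH{1,2}}$ and simplifies the $\swapp(1,1)$, $\assoclp$, $\assocrp$ surroundings using the $\Pi$ equations, recovering $(-1)$ and $\hadamard$ respectively.

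The sequential composition case is essentially immediate from the definitions: $\wsem{c_1 \seqq c_2} = \wsem{c_2}\wsem{c_1}$ and $\cT_Q$ is a contravariant functor on composition, so $\qT[n]{\wsem{c_1\seqq c_2}} = \qT[n]{\wsem{c_1}} \seqq \qT[n]{\wsem{c_2}}$, and the induction hypothesis together with congruence of $\fromto_2$ with respect to $\seqq$ finishes the step.

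For the sum case, I would first prove an auxiliary lemma that $\qT[n]{\shift(\mathbf{G}, k)} \fromto_2 \pid_{k\bbo} + \qT[n-k]{\mathbf{G}}$ for any word $\mathbf{G}$ over $\cG_{n-k}$, by straightforward induction on the length and shape of $\mathbf{G}$ (each generator's shifted form replays the definition of $\qT$ with all swap indices uniformly translated by $k$, which the $\Pi$ equations absorb into a wrapping identity on the first $k$ summands). With this in hand, $\wsem{c_1+c_2} = \wsem{c_1}\shift(\wsem{c_2},\hdim(c_1))$ rewrites as $\qT[n]{\wsem{c_1}} \seqq (\pid + \qT[\hdim(c_2)]{\wsem{c_2}})$ up to a $\Pi$-provable associator/unitor shuffle; then bifunctoriality of $+$ combined with the induction hypothesis delivers $(c_1 + \pid) \seqq (\pid + c_2) \fromto_2 c_1 + c_2$.

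The main obstacle is the product case. Here $\wsem{c_1 \times c_2}$ is defined through two copies of $\wsem{\swapt}$ and two occurrences of $\wsem{\pid_{b}\times c}$; the latter is defined itself as a concatenation of shifted copies of $\wsem{c}$ of length $\hdim(b)$. To handle it I would prove another auxiliary lemma showing that $\qT[n]{\shift(\mathbf{G},\hdim(c))\shift(\mathbf{G},2\hdim(c))\cdots}$ translates back to $\pid_{b} \times \qT[\hdim(c)]{\mathbf{G}}$, using distributivity to turn the block of shifted copies into a tensor product with an identity, iterated $\hdim(b)$ times. Combining this with the induction hypothesis and the factorisation $c_1 \times c_2 \fromto_2 \swapt \seqq (\pid_{b_4} \times c_1) \seqq \swapt \seqq (\pid_{b_1}\times c_2)$ (which holds in any rig category by naturality of $\swapt$ and bifunctoriality) mirrors exactly the structure of $\wsem{c_1 \times c_2}$, so the product case follows. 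The delicate point throughout will be bookkeeping of how the flattened type $\hdim(c)\,\bbo$ is identified with the inductive type $b$ via the $\Pi$ coherence isomorphisms, but this is entirely within the completeness of $\Pi$ and does not require the new \QPiLang{} axioms.
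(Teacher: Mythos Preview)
Your proposal is correct and follows essentially the same structural induction as the paper. One small point: in the sum case you explicitly isolate the shift lemma $\qT[n+k]{\shift(\mathbf{G},k)} \fromto_2 \pid_{k\bbo} + \qT[n]{\mathbf{G}}$, but you also need its companion $\qT[n+m]{\mathbf{G}} \fromto_2 \qT[n]{\mathbf{G}} + \pid_{m\bbo}$ for $\mathbf{G} \in \cG_n^*$ (the paper states this separately as \cref{lem:qT_nm}), since the inductive hypothesis gives $c_1 \fromto_2 \qT[\hdim(c_1)]{\wsem{c_1}}$ rather than $\qT[n]{\wsem{c_1}}$; this is not merely an associator/unitor shuffle but a statement about how $\cT_Q$'s internal $\swapp(a,n)$ constructions pad on the right, though it is equally routine to prove by induction on the word and completeness of $\Pi$.
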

\begin{proof}
    The proof proceeds by induction on the structure of $c$.
    Here we consider the most important case $c = c_1+c_2$. 
    Then $\hdim(c) = \hdim(c_1)+\hdim(c_2)$, so
        \begin{align*}
            \qT[\hdim(c)]{\wsem{c_1+c_2}}
            ={}& \qT[\hdim(c)]{\wsem{c_1}\shift(\wsem{c_2},\hdim(c_1))}  \\
            ={}& \qT[\hdim(c)]{\shift(\wsem{c_2},\hdim(c_1))} \seqq \qT[\hdim(c)]{\wsem{c_1}} \\
            ={}& \qT[\hdim(c_2)+\hdim(c_1)]{\shift(\wsem{c_2},\hdim(c_1))} \seqq \qT[\hdim(c)]{\wsem{c_1}} \\
            \fromto_2{}& \rbra*{\pid_{\hdim(c_1)\bbo}+\qT[\hdim(c_2)]{\wsem{c_2}}}\seqq\qT[\hdim(c)]{\wsem{c_1}} \tag{by \cref{lem:qT_shift} later} \\
            \fromto_2{}& \rbra*{\pid_{\hdim(c_1)\bbo}+c_2}\seqq\qT[\hdim(c_1)+\hdim(c_2)]{\wsem{c_1}} \tag{by inductive hypothesis} \\
            \fromto_2{}& \rbra*{\pid_{\hdim(c_1)\bbo}+c_2}\seqq\rbra*{\qT[\hdim(c_1)]{\wsem{c_1}}+\pid_{\hdim(c_2)\bbo}} \tag{by \cref{lem:qT_nm} later} \\
            \fromto_2{}& \rbra*{\pid_{\hdim(c_1)\bbo}+c_2}\seqq\rbra*{c_1+\pid_{\hdim(c_2)\bbo}} \tag{by inductive hypothesis} \\
            \fromto_2{}& c_1+c_2. \tag{bifunctoriality of $+$}
        \end{align*}
    The other cases are either basic or can be converted into $+$ using distributivity, and are detailed in \ifARXIV%
    \cref{app:qpi_proof}%
    \else%
    \cref{ext-app:qpi_proof} of the extended version~\cite{FHK25}%
    \fi.
\end{proof}

The full proof of \cref{lem:wq_inv} requires the following two lemmas, showing that when a word $\mathbf{G}$ over $\cG_n$ is overloaded as a word over $\cG_{n+k}$, the translation $\cT_Q$ will introduce additional identity terms.

\begin{lemma}\label{lem:qT_nm}
    When omitting the isomorphisms $\assocrp$, $\assoclp$, $\assocrt$, $\assoclt$, $\dist$, $\factor$ used for type conversion, for a word $\mathbf{G}$ over $\cG_n$, we have
    \[\qT[n]{\mathbf{G}} + \pid_{m\bbo}\fromto_2 \qT[n+m]{\mathbf{G}}.\]
\end{lemma}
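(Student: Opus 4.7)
The plan is to proceed by induction on the structure of the word $\mathbf{G}$. For the base case $\mathbf{G} = \varepsilon$, bifunctoriality of $+$ immediately gives $\pid_{n\bbo} + \pid_{m\bbo} \fromto_2 \pid_{(n+m)\bbo}$, matching the definition of $\qT[n+m]{\varepsilon}$.

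For the composition case $\mathbf{G} = \mathbf{G}_1\mathbf{G}_2$, I would expand using the definition $\qT[n]{\mathbf{G}_1\mathbf{G}_2} = \qT[n]{\mathbf{G}_2}\seqq\qT[n]{\mathbf{G}_1}$, apply bifunctoriality of $+$ (together with $\pid_{m\bbo}\seqq\pid_{m\bbo} \fromto_2 \pid_{m\bbo}$) to distribute $\pid_{m\bbo}$ over $\seqq$, and then invoke the inductive hypothesis on each factor.

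The main work is the single-generator case. For $\CX{b,c}$, the translation $\qT[n]{\CX{b,c}} = \swapp(b,c)$ is a pure permutation term, and $\swapp(b,c) + \pid_{m\bbo}$ denotes the same underlying permutation of $[n+m]$ (fixing positions $n{+}1,\ldots,n{+}m$) as $\qT[n+m]{\CX{b,c}}$; completeness of $\Pi$ for permutations closes this case. For $\CZ{a}$ and $\CH{b,c}$ the argument is more delicate because the translations place a non-permutation core ($(-1)$ or $\hadamard$) at position $n$ (resp.\ $n+m$) via conjugation by $\swapp(a,n)$ (resp.\ $\swapp(a,n+m)$). The key step is to rewrite, using completeness of $\Pi$ on permutations,
\[\swapp(a,n+m) \fromto_2 (\swapp(a,n)+\pid_{m\bbo}) \seqq \tau,\]
where $\tau$ is a $\Pi$-term realising the transposition of positions $n$ and $n+m$ in dimension $n+m$. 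Conjugating the core $\pid_{(n+m-1)\bbo}+(-1)$ (or its $\hadamard$ analogue on the two relevant positions) by $\tau$ relocates it from position $n+m$ to position $n$; this is justified by naturality of $\swapp$ together with involutivity $\swapp\seqq\swapp \fromto_2 \pid$. The outer conjugation by $\swapp(a,n)+\pid_{m\bbo}$ then matches the $+\pid_{m\bbo}$-extension of $\qT[n]{\CZ{a}}$ or $\qT[n]{\CH{b,c}}$ by bifunctoriality of $+$.

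The main obstacle is bookkeeping: the inductive definition of $\swapp(b,c)$ in $\cT_Q$ nests many associators and primitive swaps, and these type-conversion combinators differ between dimensions $n$ and $n+m$. Since the statement allows us to suppress $\assocrp,\assoclp,\assocrt,\assoclt,\dist,\factor$, and since completeness of $\Pi$ absorbs all such syntactic discrepancies at the level of finite bijections, the calculation reduces to verifying that the two sides denote the same permutation of $[n+m]$ and that the single Hadamard or $(-1)$ core lands at the same position on both sides, which the argument above establishes.
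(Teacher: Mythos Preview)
Your proposal is correct and follows essentially the same approach as the paper: induction on the word structure, with the composition case handled by bifunctoriality of $+$ and the single-generator cases handled by combining completeness of $\Pi$ for the permutation parts with naturality of $\swapp$ to relocate the $(-1)$ or $\hadamard$ core from position $n+m$ to position $n$. Your transposition $\tau$ is exactly the paper's $\swapp(n,n+m)$, so the arguments coincide up to the direction in which the equivalence is unfolded.
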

\begin{proof}
    \begingroup
    \allowdisplaybreaks
    By induction on $\mathbf{G}$:
    \begin{itemize}
        \item If $\mathbf{G} = \CZ{a}$, then \begin{align*}
            \qT[n+m]{\CZ{a}}
            ={}& \swapp(a,n+m)\seqq(\pid_{(n+m-1)\bbo}+(-1))\seqq \swapp(a,n+m) \\
            \fromto_2{} & \swapp(a,n+m)\seqq\swapp(n,n+m)\seqq (\pid_{(n-1)\bbo}+(-1)+\pid_{m\bbo})\seqq \\
            & \quad \swapp(n,n+m)\seqq \swapp(a,n+m) \tag{by naturality of $\swapp(n,n+m)$} \\
            \fromto_2{}& \swapp(a,n)\seqq(\pid_{(n-1)\bbo}+(-1)+\pid_{m\bbo})\seqq \swapp(a,n) \tag{by completeness of $\Pi$} \\
            \fromto_2{}& \rbra*{\swapp(a,n)\seqq(\pid_{(n-1)\bbo}+(-1))\seqq \swapp(a,n)} + \pid_{m\bbo} \tag{by bifunctoriality of $+$} \\
            ={} & \qT[n]{\CZ{a}} + \pid_{m\bbo}\text.
        \end{align*}
        \item The cases $\mathbf{G} = \CX{b,c}$ and $\mathbf{G} = \CH{b,c}$ are similar.
        \item If $\mathbf{G}=\mathbf{G}_1\mathbf{G}_2$, then
        \begin{align*}
            \qT[n]{\mathbf{G}_1\mathbf{G}_2} + \pid_{m\bbo}
            ={}& \rbra*{\qT[n]{\mathbf{G}_2}\seqq\qT[n]{\mathbf{G}_1}}+\pid_{m\bbo} \\
            \fromto_2{}& (\qT[n]{\mathbf{G}_2}+\pid_{m\bbo})\seqq(\qT[n]{\mathbf{G}_1}+\pid_{m\bbo}) \tag{by bifunctoriality of $+$} \\
            \fromto_2{}& \qT[n+m]{\mathbf{G}_2}\seqq\qT[n+m]{\mathbf{G}_1} \tag{by inductive hypothesis} \\
            ={}& \qT[n+m]{\mathbf{G}_1\mathbf{G}_2}\text.
        \end{align*} 
    \end{itemize}
    \endgroup
    This completes the proof.
\end{proof}

\begin{restatable}{lemma}{lemqtshift}\label{lem:qT_shift}
    When omitting the isomorphisms $\assocrp$, $\assoclp$, $\assocrt$, $\assoclt$, $\dist$, $\factor$ used for type conversion, for a word $\mathbf{G}$ over $\cG_n$, we have
    \[\qT[n+k]{\hshift(\mathbf{G}, k)} \fromto_2 \pid_{k\bbo}+\qT[n]{\mathbf{G}}.\]
\end{restatable}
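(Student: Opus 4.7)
The plan is to proceed by structural induction on $\mathbf{G}$, mirroring the proof of the companion \cref{lem:qT_nm} but with the identity summand on the left instead of the right. The inductive step is cheap: for $\mathbf{G} = \mathbf{G}_1\mathbf{G}_2$, I would unfold $\hshift$ and $\cT_Q$ to get
\[ \qT[n+k]{\hshift(\mathbf{G}_1\mathbf{G}_2, k)} = \qT[n+k]{\hshift(\mathbf{G}_2, k)}\seqq\qT[n+k]{\hshift(\mathbf{G}_1, k)}, \]
apply the inductive hypothesis to each factor, and combine by bifunctoriality of $+$:
\[ (\pid_{k\bbo}+\qT[n]{\mathbf{G}_2})\seqq(\pid_{k\bbo}+\qT[n]{\mathbf{G}_1}) \fromto_2 \pid_{k\bbo}+(\qT[n]{\mathbf{G}_2}\seqq\qT[n]{\mathbf{G}_1}) = \pid_{k\bbo}+\qT[n]{\mathbf{G}_1\mathbf{G}_2}. \]

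For the base cases ($\mathbf{G}$ one of $\CZ{a}$, $\CX{b,c}$, $\CH{b,c}$), the translation $\qT[n+k]{\hshift(\mathbf{G},k)}$ is (by definition in \cref{fig:def_qt}) a sandwich of permutations $\swapp(a+k,n+k)$, $\swapp(b+k,c+k)$ etc.\ around one of $(-1)$, $\swapp$, or $\hadamard$, all living on $(n+k)\bbo$. The key algebraic fact is that each such shifted $\swapp$-permutation, viewed on $k\bbo + n\bbo$, fixes the first $k$ summands pointwise. Hence by full abstraction of $\Pi$ on \FinBij{}~\cite{choudhury2022symmetries} we obtain
\[ \swapp(a+k,n+k) \fromto_2 \pid_{k\bbo}+\swapp(a,n), \]
and similarly for the other permutation shorthands. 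Pushing these equivalences inside and using bifunctoriality of $+$ together with the equation $\pid_{k\bbo}+(c_1\seqq c_2)\fromto_2(\pid_{k\bbo}+c_1)\seqq(\pid_{k\bbo}+c_2)$, each base-case translation collapses to $\pid_{k\bbo}+\qT[n]{\mathbf{G}}$.

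The main obstacle is the $\CH{b,c}$ case, because its translation nests two $\swapp$ permutations together with the local $\assocrp/\assoclp$ coercions around the primitive $\hadamard$ acting on the top two summands. Verifying the base case there requires carefully tracking that after shifting by $k$, the top-two position of the $\hadamard$ is unchanged (the Hadamard always lands on summands $n+k-1$ and $n+k$), so it commutes across $\pid_{k\bbo}+$ cleanly. The remainder is routine $\Pi$-level bookkeeping with associators and unitors, identical in style to the analogous computation in \cref{lem:qT_nm}. Once that case is verified, the lemma follows as above, and I would relegate the full symbol-pushing to \cref{app:qpi_proof}.
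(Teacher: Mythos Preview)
Your proposal is correct and follows essentially the same approach as the paper: structural induction on $\mathbf{G}$, with the compound case handled by bifunctoriality of $+$ and each generator base case reduced via completeness of $\Pi$ to the observation that the shifted $\swapp$-permutations fix the first $k$ summands. The paper's proof in \cref{app:qpi_proof} carries out exactly this, including the slightly longer $\CH{b,c}$ unfolding you flag.
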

\begin{proof}
    By induction on the structure of $\mathbf{G}$, using a similar proof of \cref{lem:qT_nm}. See details in \ifARXIV%
    \cref{app:qpi_proof}%
    \else%
    \cref{ext-app:qpi_proof} of the extended version~\cite{FHK25}%
    \fi.
\end{proof}

\cref{lem:w_eq,lem:wq_inv} now give the completeness of \QPiLang{} with respect to $\approx$.
\begin{theorem}
    [Completeness of \QPiLang{} w.r.t $\approx$]\label{thm:qpi_relation}
    If \QPiLang{}-terms $c_1,c_2\colon b_1\fromto b_2$ satisfy $\wsem{c_1} \approx \wsem{c_2}$, then $c_1\fromto_2 c_2$ is a level-2 combinator of \QPiLang{}.
\end{theorem}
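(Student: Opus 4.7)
The plan is a short three-line chase through the translations $\wsem{\cdot}$ and $\cT_Q$, using the lemmas already established in this subsection. First I would observe that since $c_1,c_2$ share the source type $b_1$, the dimensions agree: $\hdim(c_1)=\hdim(b_1)=\hdim(c_2)$; call this common value $n$. Hence $\wsem{c_1},\wsem{c_2}$ are both words over the same alphabet $\cG_n$, so the hypothesis $\wsem{c_1}\approx\wsem{c_2}$ is well-typed and \cref{lem:w_eq} applies, yielding
\[
    \qT[n]{\wsem{c_1}} \fromto_2 \qT[n]{\wsem{c_2}}
\]
as a level-2 combinator of \QPiLang{}.

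Next I invoke \cref{lem:wq_inv} twice to close the loop: for $i=1,2$ we have $c_i \fromto_2 \qT[n]{\wsem{c_i}}$ as level-2 combinators of \QPiLang{}. Composing these three derivations via transitivity of $\fromto_2$ (which is part of the equational theory inherited from $\Pi$, see \cref{app:pi_equations}) gives
\[
    c_1 \;\fromto_2\; \qT[n]{\wsem{c_1}} \;\fromto_2\; \qT[n]{\wsem{c_2}} \;\fromto_2\; c_2,
\]
as desired.

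Since every ingredient is already in place, there is essentially no obstacle here: all of the work has been front-loaded into \cref{lem:w_eq} (which required the non-trivial fixed-index verifications of \cref{lem:d3_fixed,lem:d4_fixed}) and \cref{lem:wq_inv} (which required the auxiliary \cref{lem:qT_nm,lem:qT_shift} to handle the shift/embedding of words into larger dimensions). The only subtlety worth spelling out is the type-compatibility check that both $\qT[n]{\wsem{c_i}}$ have type $n\bbo \fromto n\bbo$ (so that the middle step is well-typed), and that the outer level-2 combinators $c_i \fromto_2 \qT[n]{\wsem{c_i}}$ silently absorb the necessary associators, unitors and distributors of $\Pi$ to reconcile $b_1,b_2$ with $n\bbo$; this is exactly what the convention in \cref{lem:qT_nm,lem:qT_shift} about omitted isomorphisms is set up to handle.
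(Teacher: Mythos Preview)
Your proposal is correct and follows essentially the same approach as the paper: note the common dimension $n=\hdim(c_1)=\hdim(c_2)$, apply \cref{lem:w_eq} to transport $\approx$ to a level-2 combinator between the $\cT_Q$-translations, then use \cref{lem:wq_inv} on both sides and close with transitivity. Your additional remarks about type compatibility and the silent type-conversion isomorphisms are helpful commentary but do not change the argument.
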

\begin{proof}
    Use the fact that $\hdim(b_1)=\hdim(c_1)=\hdim(c_2)$:
    \begin{align*}
        \wsem{c_1} \approx \wsem{c_2}
        \Rightarrow{}& \qT[\hdim(b_1)]{\wsem{c_1}} \fromto_2 \qT[\hdim(b_1)]{\wsem{c_2}} \tag{by \cref{lem:w_eq}} \\ 
        \Rightarrow{}& \qT[\hdim(c_1)]{\wsem{c_1}} \fromto_2 \qT[\hdim(c_2)]{\wsem{c_2}} \\
        \Rightarrow{}& c_1 \fromto_2 \qT[\hdim(c_1)]{\wsem{c_1}} \fromto_2 \qT[\hdim(c_2)]{\wsem{c_2}} \fromto_2 c_2. \tag{by \cref{lem:wq_inv}}
    \end{align*}
    This completes the proof.
\end{proof}

\cref{lem:w_faith} can take the equation $\sem{c_1} = \sem{c_2}$ for two terms $c_1,c_2$ of \QPiLang{} to the equation $\sem{\wsem{c_1}} = \sem{\wsem{c_2}}$ for two words $\wsem{c_1}, \wsem{c_2}$.
The completeness of $\approx$ (\cref{thm:relation_completeness}) then ensures that $\wsem{c_1} \approx \wsem{c_2}$, which is exactly the condition of \cref{thm:qpi_relation}.
Therefore, we prove \cref{thm:qpi} as follows.

\begin{proof}[Proof of \cref{thm:qpi}]
    For soundness, notice that by soundness of $\Pi$, it suffices to verify that \cref{eq:e1,eq:e2,eq:e3} are sound. This can be verified by direct computation.

    For completeness, use \Cref{lem:w_faith}, completeness of $\approx$ from \Cref{thm:relation_completeness}, and completeness of \QPiLang{} w.r.t. $\approx$ from \Cref{thm:qpi_relation}:
    \begin{align*}
        \sem{c_1} = \sem{c_2}
        \Rightarrow{}& \sem{\wsem{c_1}} = \sem{c_1} = \sem{c_2} = \sem{\wsem{c_2}} \\
        \Rightarrow{}& \wsem{c_1} \approx \wsem{c_2} \\
        \Rightarrow{}& c_1 \fromto_2 c_2. 
    \end{align*}
    This completes the proof.
\end{proof}

\subsection{Soundness and Completeness of \texorpdfstring{\HPiLang}{Hadamard-Pi}}\label{sec:hpi_soundness_and_completeness}
The next part follows a similar approach as in \cref{sec:qpi_soundness_and_completeness}, using translations $\qsem{\cdot}$ and $\cT_H$ to prove the completeness of \HPiLang{} with respect to \QPiLang{}.

\begin{definition}
    [Translation $\qsem{\cdot}$]
    The translation $\qsem{\cdot}$ that maps a term $c$ of \HPiLang{} to a term of Q-$\Pi$ is defined inductively on the structures of combinators:
    \begin{equation*}
        \begin{aligned}
            \qsem{\textit{iso}} &= \textit{iso}, &
            \qsem{c_1\seqq c_2} &= \qsem{c_1}\seqq \qsem{c_2}, \\
            \qsem{c_1+c_2} &= \qsem{c_1} + \qsem{c_2}, & 
            \qsem{c_1\times c_2} &= \qsem{c_1} \times \qsem{c_2}.
        \end{aligned}
    \end{equation*}
\end{definition}
$\qsem{\cdot}$ is just a simple embedding, since the syntax of combinators in \HPiLang{} is a subset of \QPiLang{}. The following is therefore easy to see.

\begin{lemma}\label{lem:q_faith}
    If $c$ is a \HPiLang{}-term, then $\sem{\qsem{c}} = \sem{c}$.
\end{lemma}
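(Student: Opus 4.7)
The plan is to proceed by straightforward structural induction on the term $c$, exploiting the fact that $\qsem{\cdot}$ is essentially the identity embedding on syntax and that the denotational semantics is defined compositionally in exactly the same way for both \HPiLang{} and \QPiLang{}.

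For the base cases, I would observe that every primitive isomorphism appearing in \HPiLang{} also appears in \QPiLang{} and is assigned the same semantics. Concretely, the primitives inherited from $\Pi$ (that is, $\pid$, $\swapp$, $\assocrp$, $\assoclp$, $\unitep$, $\unitip$, $\swapt$, $\assocrt$, $\assoclt$, $\unitet$, $\unitit$, $\dist$, $\factor$, $\absorb$, $\factorz$) receive their interpretations via the rig groupoid structure of \Unitary{} (\cref{def:pi_semantics}), which does not depend on which extension we are in, and $\hadamard$ is assigned the Hadamard matrix in both \cref{fig:qpi} and \cref{fig:hadamard_pi}. Since $\qsem{\textit{iso}} = \textit{iso}$, we get $\sem{\qsem{\textit{iso}}} = \sem{\textit{iso}}$ immediately.

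For the inductive cases, the three clauses of the definition of $\qsem{\cdot}$ mirror exactly the three compositional clauses of $\sem{\cdot}$. Assuming the result for $c_1$ and $c_2$, I would compute
\begin{align*}
    \sem{\qsem{c_1 \seqq c_2}} &= \sem{\qsem{c_1}\seqq\qsem{c_2}} = \sem{\qsem{c_2}}\circ\sem{\qsem{c_1}} = \sem{c_2}\circ\sem{c_1} = \sem{c_1\seqq c_2}, \\
    \sem{\qsem{c_1 + c_2}} &= \sem{\qsem{c_1}+\qsem{c_2}} = \sem{\qsem{c_1}}\oplus\sem{\qsem{c_2}} = \sem{c_1}\oplus\sem{c_2} = \sem{c_1+c_2},
\end{align*}
and analogously for $c_1\times c_2$ using $\otimes$ instead of $\oplus$.

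There is essentially no obstacle here: the lemma holds because $\qsem{\cdot}$ is a structure-preserving embedding and the semantics of \HPiLang{} is, by design, the restriction of the semantics of \QPiLang{} to the sublanguage that omits $(-1)$. The substantive work will come later, when one must show that equational derivability is preserved under $\qsem{\cdot}$ (so that $\cref{eq:h1,eq:h2,eq:h3}$ can be justified from \cref{def:qpi_eq}), but that is a separate matter and not needed for this lemma.
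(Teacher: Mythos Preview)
Your proposal is correct and matches the paper's approach: the paper simply remarks that $\qsem{\cdot}$ is a syntactic embedding and declares the lemma ``easy to see,'' without spelling out the induction. Your structural induction is exactly the argument the paper is gesturing at, just written out in full.
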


\begin{definition}
    [Translation $\cT_H$]
    The translation $\cT_H$ that maps a term $c:b_1 \fromto b_2$ of Q-$\Pi$ to a term $\hT{c}:\bbo+b_1 \fromto \bbo+b_2$ of \HPiLang{} is defined inductively on the structure of combinators and types as shown in \cref{fig:def_ht}.
    \begin{figure*}
        \begin{equation*}
            \begin{gathered}
            \begin{aligned}
                \hT{(-1)} &= \hadamard\seqq \swapp \seqq \hadamard \\
                \hT{\textit{iso}} &= \pid_{\bbo}+\textit{iso} \quad \text{(for other isomorphisms)} \\
                \hT{c_1\seqq c_2} &= \hT{c_1}\seqq \hT{c_2}
            \end{aligned} \\[2mm]
            \begin{prooftree}
                \hypo{c_1 : b_1 \fromto b_3} \hypo{c_2 : b_2 \fromto b_4}
                \infer2{\mbox{$\begin{aligned}
                    \hT{c_1 + c_2} ={}& \assoclp \seqq (\hT{c_1}+\pid_{b_2})\seqq (\swapp +\pid_{b_2}) \seqq \\
                    &\quad \assocrp \seqq (\pid_{b_3}+\hT{c_2})\seqq \assoclp \seqq (\swapp + \pid_{b_4})\seqq \assocrp
                    \end{aligned}$}}
            \end{prooftree} \\[2mm]
            \begin{prooftree}
                \hypo{c_1 : b_1 \fromto b_3} \hypo{c_2 : b_2 \fromto b_4}
                \infer2{ \hT{c_1 \times c_2} = (\pid_{\bbo}+\swapt)\seqq \hT{\pid_{b_2}\times c_1}\seqq (\pid_{\bbo}+\swapt) \seqq \hT{\pid_{b_3}\times c_2} }
            \end{prooftree}
            \end{gathered}
        \end{equation*}
        
        \begin{flushleft}
            For terms of the form of $\pid_b\times c$, $\cT_H$ is defined inductively on the structure of $b$:
        \end{flushleft}
        \begin{align*}
            \hT{\pid_{\bbz}\times c} &= (\pid_{\bbo}+\swapt\seqq\absorb\seqq\pid_{\bbz}\seqq\factorz\seqq\swapt) \\
            \hT{\pid_{\bbo}\times c} &= (\pid_{\bbo}+\unitet)\seqq \hT{c} \seqq (\pid_{\bbo}+\unitit) \\
            \hT{\pid_{b_1+b_2}\times c} &= (\pid_{\bbo}+\dist)\seqq\hT{\pid_{b_1}\times c + \pid_{b_2}\times c}\seqq(\pid_{\bbo}+\factor) \\
            \hT{\pid_{\bbz \times b_2}\times c} &= (\pid_{\bbo}+\assocrt\seqq\swapt\seqq\absorb\seqq\pid_{\bbz}\seqq\factorz\seqq\swapt\seqq\assoclt) \\
            \hT{\pid_{\bbo \times b_2}\times c} &= (\pid_{\bbo}+\assocrt\seqq\unitet)\seqq \hT{\pid_{b_2}\times c}\seqq (\pid_{\bbo}+\unitit\seqq\assoclt) \\
            \hT{\pid_{(b_3+b_4) \times b_2}\times c} &= (\pid_{\bbo}+\dist \times \pid)\seqq\hT{\pid_{b_3\times b_2+b_4\times b_2}\times c}\seqq (\pid_{\bbo}+\factor \times \pid) \\
            \hT{\pid_{(b_3\times b_4) \times b_2}\times c} &= (\pid_{\bbo}+\assocrt \times \pid)\seqq\hT{\pid_{b_3\times(b_4\times b_2)}\times c} \seqq (\pid_{\bbo}+\assoclt \times \pid)
        \end{align*}
        \begin{flushleft}
            The above definition is well-founded through an inductively defined ranking function $R$ on the type $b$ of $\pid_{b}\times c$:
        \end{flushleft}
        \begin{align*}
            R(\bbz) &= 1 & R(\bbo) &= 2 & R(b_1+b_2) &= R(b_1)+R(b_2) & R(b_1\times b_2) &= (R(b_1)+1)^2\times R(b_2)
        \end{align*}
        \caption{Definition of translation $\cT_H$. In the case of $\hT{\pid_b\times c}$, $\pid_b\times c$ will be transformed to an equivalent form of $\protect\underbrace{c+\cdots+c}_{\hdim(b)}$.}\label{fig:def_ht}
    \end{figure*}
\end{definition}

The translation $\cT_H$ introduces an additional $\bbo$ to make $(-1)$ simulable by $\hadamard\seqq\swapp\seqq\hadamard$.
To align (or merge) all the additional $\bbo$ of the sub-combinators with the single additional $\bbo$ of the parent-combinator, $\cT_H$ does some ``twists'' in \HPiLang{}.
The case of $\hT{c_1+c_2}$ has the following graphical presentation.
\begin{equation*}
    \hT{
        \scalebox{\scaleratio}{\begin{quantikz}[row sep=3mm,column sep=2mm]
            \lstick[2,brackets=none]{$+$\hspace{-.65cm}}\push{b_1} & & \gate{c_1} & & \push{b_3} \\
            \push{b_2}&& \gate{c_2} && \push{b_4}
        \end{quantikz}}
    }
    = 
    \scalebox{\scaleratio}{\begin{quantikz}[row sep=1mm,column sep=2mm]
        \lstick[2,brackets=none]{$+$\hspace{-.65cm}}\push{\bbo} & \gate[2]{\hT{c_1}} & \permute{2,1} & & \permute{2,1} & \push{\bbo}\\
        \lstick[2,brackets=none]{$+$\hspace{-.65cm}}\push{b_1} & & & \gate[2]{\hT{c_2}} & & \push{b_3}\\
        \push{b_2} & & & & & \push{b_4}
    \end{quantikz}}
\end{equation*}
The case of $\hT{c_1\times c_2}$ is handled by converting $c_1\times c_2$, which is equivalent to $\swapt\seqq(\pid\times c_1)\seqq\swapt\seqq(\pid\times c_2)$, into terms of $+$ by distributivity, using that $\pid\times c$ is equivalent to $c+\cdots+c$.

To establish a lemma similar to \cref{lem:w_eq}, we need to be careful with the ``twists'' above, which prevent us from directly using the rig structure.
For example, the following two terms
\[
    \scalebox{\scaleratio}{\begin{quantikz}[row sep=3mm,column sep=2mm]
        \lstick[2,brackets=none]{$+$\hspace{-.5cm}} & \gate{c_1} & & \midstick[2,brackets=none]{\scalebox{1.25}{$\fromto_2$}} & & \gate{c_1} &\\
        && \gate{c_2} & & \gate{c_2} & &
    \end{quantikz}}
\]
are equivalent in \QPiLang{}.
However, when translated to \HPiLang{} by $\cT_H$, they become
\[ 
    \scalebox{\scaleratio}{\begin{quantikz}[row sep=1mm,column sep=1.5mm]
        \lstick[2,brackets=none]{$+$\hspace{-.5cm}}& \gate[2]{\hT{c_1}} & \permute{2,1} & & \permute{2,1} & \midstick[3,brackets=none]{\scalebox{1.25}{$?$}} & \permute{2,1} & & \permute{2,1} & \gate[2]{\hT{c_1}} & \\
        \lstick[2,brackets=none]{$+$\hspace{-.5cm}}& & & \gate[2]{\hT{c_2}} & & & & \gate[2]{\hT{c_2}} & & &\\
        & & & & & & & & & &
    \end{quantikz}}
\]
The equivalence of these terms is not immediately apparent in \HPiLang{}, necessitating the following lemma.

\begin{lemma}\label{lem:ht2}
    If $c:b_1\fromto b_2$ is a term of \QPiLang{}, then \begin{align*}
        (\swapp+\pid_{b_1})\seqq\assocrp\seqq(\pid_\bbo+\hT{c})\seqq\assoclp \fromto_2{} \assoclp\seqq(\pid_\bbo+\hT{c})\seqq(\swapp+\pid_{b_2})
    \end{align*}
    is a level-2 combinator of \HPiLang{}.
\end{lemma}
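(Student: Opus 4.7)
The proof is by structural induction on the Q-$\Pi$ term $c$. The key conceptual point is that $\hT{c}$ is engineered precisely so that it always leaves the auxiliary $\bbo$ untouched; hence, after lifting by $\pid_\bbo+(-)$, it must commute with the swap on the two auxiliary wires that this lifting creates. The base case $c = (-1)$ is exactly the axiom \cref{eq:h2}, since $\hT{(-1)} = \hadamard\seqq\swapp\seqq\hadamard$. For any other primitive isomorphism $\textit{iso}$ we have $\hT{\textit{iso}} = \pid_\bbo+\textit{iso}$, and a short calculation using naturality of $\assocrp/\assoclp$, $\assocrp\seqq\assoclp\fromto_2\pid$, and bifunctoriality of $+$ collapses both sides to $\swapp+\textit{iso}$.

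For the composition case $c = c_1\seqq c_2$, I would unfold $\hT{c_1\seqq c_2} = \hT{c_1}\seqq\hT{c_2}$ and use bifunctoriality of $+$ to split $\pid_\bbo+\hT{c_1\seqq c_2}$ as $(\pid_\bbo+\hT{c_1})\seqq(\pid_\bbo+\hT{c_2})$. Inserting $\assoclp\seqq\assocrp\fromto_2\pid$ between the two factors produces two nested copies of the ``conjugated twist'' pattern $\assocrp\seqq(\pid_\bbo+\hT{c_i})\seqq\assoclp$. Applying the inductive hypothesis first to the $c_1$-twist to push $(\swapp+\pid)$ across it, then to the $c_2$-twist to push it across the second, delivers the result.

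For the sum case $c = c_1+c_2$, I would unfold the eight-piece definition of $\hT{c_1+c_2}$ and carry the swap on the two auxiliary wires through it step by step, using the inductive hypothesis on $\hT{c_1}$ and $\hT{c_2}$ (wrapped by the appropriate associators and $\swapp+\pid$ ``bookkeeping'' terms), with completeness of $\Pi$ taking care of the pure rig coherences that rearrange associators, swaps, and the shared identity wires along the way. The product case $c = c_1\times c_2$ reduces, by the definition in \cref{fig:def_ht}, to instances of $\hT{\pid_b\times c'}$ where $c'$ is a subterm; these in turn are handled by a secondary induction on the ranking function $R$ given in \cref{fig:def_ht}, each step either dropping to a smaller $\hT{\pid_{b'}\times c'}$ (again an instance of the lemma) or invoking the sum case just established.

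The principal obstacle is the sum case together with the product case that reduces to it: despite the conceptual simplicity of ``the twist commutes with the swap,'' carrying out the calculation syntactically requires tracking many associators, swaps, and identity wires simultaneously and applying the inductive hypothesis to non-trivial sub-configurations. I would therefore carry out these cases graphically in the style of \cref{prop:hx8}, using coherence of $\Pi$ to suppress all the type-conversion bookkeeping and reducing the argument to a small number of applications of bifunctoriality and of the inductive hypothesis.
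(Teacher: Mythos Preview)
Your proposal is correct and follows essentially the same approach as the paper: structural induction on $c$, with the $(-1)$ case being precisely \cref{eq:h2}, the other-isomorphism case collapsing by bifunctoriality to $\swapp+\textit{iso}$, the sequential case following immediately from the inductive hypothesis, the sum case handled by unfolding $\hT{c_1+c_2}$ and pushing the swap through via the inductive hypothesis on $c_1$ and $c_2$ (the paper carries this out graphically), and the product case reduced to the composition and sum cases via the recursive definition of $\hT{\pid_b\times c'}$ along the ranking $R$. Your identification of the sum and product cases as the only places requiring real work, and of the secondary induction on $R$ needed for the latter, is exactly how the paper organises the argument.
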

\begin{proof}
\allowdisplaybreaks
This lemma shares the same graphical pattern as \cref{eq:h2} for $\mathit{hxh}$:
\begin{equation*}
    \scalebox{\scaleratio}{\begin{quantikz}[row sep=2mm,column sep=4mm]
        \lstick[2,brackets=none]{$+$\hspace{-.65cm}}\push{\bbo} & \permute{2,1} & & \midstick[3,brackets=none]{$\fromto_2$} & & \permute{2,1} & \\ 
        \lstick[2,brackets=none]{$+$\hspace{-.65cm}}\push{\bbo} & & \gate[2]{\hT{c}} & & \gate[2]{\hT{c}} & &\\ 
        & & & & & & 
    \end{quantikz}}
\end{equation*}
We give a graphical proof by induction on the structure of $c$.
\begin{itemize}
    \item If $c = (-1)$, then immediately invoke \cref{eq:h2}.
    \item If $c$ is another isomorphism, then
    \begin{gather*}
        \scalebox{\scaleratio}{\begin{quantikz}[row sep=2mm,column sep=1mm]
            \lstick[2,brackets=none]{$+$\hspace{-.65cm}}\push{\bbo} & \permute{2,1} & & \midstick[3,brackets=none]{$=$} & \permute{2,1} & & \midstick[3,brackets=none]{$\fromto_2$} & & \permute{2,1} & \midstick[3,brackets=none]{$=$} & &\permute{2,1}& \\ 
            \lstick[2,brackets=none]{$+$\hspace{-.65cm}}\push{\bbo} & & \gate[2]{\hT{\mathit{c}}} & & & &&&& & \gate[2]{\hT{\mathit{c}}} & &\\ 
            & & & & & \gate{\mathit{c}} & & \gate{\mathit{c}} & & & & &
        \end{quantikz}}
    \end{gather*}
    \item If $c = c_1\seqq c_2$, the proof follows quickly from the inductive hypothesis.
    \item If $c = c_1+c_2$, Then
    \begin{gather*}
        \scalebox{\scaleratio}{\begin{quantikz}[row sep=1mm,column sep=1mm]
            \lstick[2,brackets=none]{$+$\hspace{-.65cm}}\push{\bbo} & \permute{2,1} & & \\ 
            \lstick[2,brackets=none]{$+$\hspace{-.65cm}}\push{\bbo} & & \gate[3]{\hT{c_1+c_2}} &\\ 
            \lstick[2,brackets=none]{$+$\hspace{-.65cm}}\push{\phantom{\bbo}} & & &\\
            & & &
        \end{quantikz}}
        \overset{\text{definition of $\cT_H$}}{=}
        \scalebox{\scaleratio}{\begin{quantikz}[row sep=1mm,column sep=1mm]
            \lstick[2,brackets=none]{$+$\hspace{-.65cm}}\push{\bbo} & \permute{2,1} & & & & &\\ 
            \lstick[2,brackets=none]{$+$\hspace{-.65cm}}\push{\bbo} & & \gate[2]{\hT{c_1}} & \permute{2,1} & & \permute{2,1} &\\ 
            \lstick[2,brackets=none]{$+$\hspace{-.65cm}}\push{\phantom{\bbo}} & & & & \gate[2]{\hT{c_2}} & &\\
            & & & & & &
        \end{quantikz}} \\
        \overset{\text{inductive}}{\underset{\text{hypothesis}}{\fromto_2{}}}
        \scalebox{\scaleratio}{\begin{quantikz}[row sep=1mm,column sep=1mm]
            \lstick[2,brackets=none]{$+$\hspace{-.65cm}}\push{\bbo} & & \permute{2,1} & & & &\\ 
            \lstick[2,brackets=none]{$+$\hspace{-.65cm}}\push{\bbo} & \gate[2]{\hT{c_1}} & & \permute{2,1} & & \permute{2,1} &\\ 
            \lstick[2,brackets=none]{$+$\hspace{-.65cm}}\push{\phantom{\bbo}} & & & & \gate[2]{\hT{c_2}} & &\\
            & & & & & &
        \end{quantikz}}
        \overset{\text{completeness of $\Pi$}}{\fromto_2{}}
        \scalebox{\scaleratio}{\begin{quantikz}[row sep=1mm,column sep=1mm]
            \lstick[2,brackets=none]{$+$\hspace{-.65cm}}\push{\bbo} & & \permute{2,1} & \permute{2,3,1} & & & \permute{3,1,2} &\\ 
            \lstick[2,brackets=none]{$+$\hspace{-.65cm}}\push{\bbo} & \gate[2]{\hT{c_1}} & & &\push{\bbo} & & &\\ 
            \lstick[2,brackets=none]{$+$\hspace{-.65cm}}\push{\phantom{\bbo}} & & & &\push{\bbo} &\gate[2]{\hT{c_2}} & &\\
            & & & & & & &
        \end{quantikz}} \\
        \overset{\text{inductive hypothesis}}{\underset{\text{and completeness of $\Pi$}}{\fromto_2{}}}
        \scalebox{\scaleratio}{\begin{quantikz}[row sep=1mm,column sep=1mm]
            \lstick[2,brackets=none]{$+$\hspace{-.65cm}}\push{\bbo} & & \permute{2,1} & \permute{2,3,1} & & & & \permute{3,1,2} &\\ 
            \lstick[2,brackets=none]{$+$\hspace{-.65cm}}\push{\bbo} & \gate[2]{\hT{c_1}} & & &\permute{2,1} & & \permute{2,1} & &\\ 
            \lstick[2,brackets=none]{$+$\hspace{-.65cm}}\push{\phantom{\bbo}} & & & & &\gate[2]{\hT{c_2}} & & &\\
            & & & & & & & &
        \end{quantikz}} \\
        \overset{\text{completeness of $\Pi$}}{\fromto_2{}}
        \scalebox{\scaleratio}{\begin{quantikz}[row sep=1mm,column sep=1mm]
            \lstick[2,brackets=none]{$+$\hspace{-.65cm}}\push{\bbo} & & \permute{1,3,2} &  & \permute{1,3,2} & \permute{2,1} &\\ 
            \lstick[2,brackets=none]{$+$\hspace{-.65cm}}\push{\bbo} & \gate[2]{\hT{c_1}} & & & & & \\ 
            \lstick[2,brackets=none]{$+$\hspace{-.65cm}}\push{\phantom{\bbo}} & & & \gate[2]{\hT{c_2}} & & & \\
            & & & & & &
        \end{quantikz}}
        \overset{\text{definition of $\cT_H$}}{=}
        \scalebox{\scaleratio}{\begin{quantikz}[row sep=1mm,column sep=1mm]
            \lstick[2,brackets=none]{$+$\hspace{-.65cm}}\push{\bbo} & & \permute{2,1} &\\ 
            \lstick[2,brackets=none]{$+$\hspace{-.65cm}}\push{\bbo} & \gate[3]{\hT{c_1+c_2}} & & \\ 
            \lstick[2,brackets=none]{$+$\hspace{-.65cm}}\push{\phantom{\bbo}} & & & \\
            & & &
        \end{quantikz}}
    \end{gather*}
    \item If $c = c_1\times c_2$, the proof can be reduced to the earlier cases of $d_1\seqq d_2$ and $d_1+d_2$.\qedhere
    \end{itemize}
\end{proof}

With the help of \cref{lem:ht2}, we can address the bifunctoriality of $+$ and $\times$ for terms translated by $\cT_H$ as follows.

\begin{restatable}{lemma}{lemht}\label{lem:ht3}
    If $c_1:b_1\fromto b_3,c_2:b_2\fromto b_4$ are two terms of \QPiLang{}, then
    \begin{align}
        \hT{c_1+c_2}
        \fromto_2{} & \hT{c_1+\pid_{b_2}}\seqq \hT{\pid_{b_3}+c_2} \fromto_2{} \hT{\pid_{b_1}+c_2}\seqq \hT{c_1+\pid_{b_4}} \label{eq:ht_plus}\\
        \hT{c_1\times c_2}
        \fromto_2{} & \hT{c_1\times \pid_{b_2}}\seqq \hT{\pid_{b_3}\times c_2} \fromto_2{} \hT{\pid_{b_1}\times c_2}\seqq \hT{c_1\times \pid_{b_4}} \label{eq:ht_times}
    \end{align}
    are level-2 combinators of \HPiLang{}.
\end{restatable}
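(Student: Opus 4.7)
My plan is to prove the two parts of the lemma separately, handling \cref{eq:ht_plus} first and then reducing \cref{eq:ht_times} to it via the definitional unfolding of $\cT_H$ on products and the inductive definition of $\hT{\pid_b \times c}$.

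For the first direction of \cref{eq:ht_plus}, I would unfold $\hT{c_1 + \pid_{b_2}}$ and $\hT{\pid_{b_3} + c_2}$ using the definition of $\cT_H$ on sums, exploiting the key observation that $\hT{\pid_b} \fromto_2 \pid_{\bbo + b}$ (since $\pid$ falls under the ``other isomorphisms'' clause). Inside $\hT{c_1 + \pid_{b_2}}$, the $\hT{\pid_{b_2}}$ collapses to identity, and the resulting $\assocrp \seqq \assoclp$ together with $(\swapp + \pid) \seqq (\swapp + \pid)$ reduce to identity by the $\Pi$-equations, leaving $\hT{c_1 + \pid_{b_2}} \fromto_2 \assoclp \seqq (\hT{c_1} + \pid_{b_2}) \seqq \assocrp$. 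A similar simplification for $\hT{\pid_{b_3} + c_2}$ and the cancellation of $\assocrp \seqq \assoclp$ between the two halves reproduces exactly the definition of $\hT{c_1 + c_2}$.

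For the second direction of \cref{eq:ht_plus}, the analogous simplification yields a composition in which $\hT{c_2}$ is applied before $\hT{c_1}$, separated by a twist structure $\assoclp \seqq (\swapp + \pid) \seqq \assocrp$ that routes the extra $\bbo$ between the $b_1$-side and the $b_2$-side. To match the definition of $\hT{c_1+c_2}$ (in which $\hT{c_1}$ comes first) I would use \cref{lem:ht2} to slide $\hT{c_1}$ past this twist; combined with bifunctoriality of $+$ for $\hT{c_1}$ and $\hT{c_2}$ acting on disjoint parts, and a few $\Pi$-equations, this yields the claim.

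For \cref{eq:ht_times}, I would expand all three terms using $\hT{c_1 \times c_2} = (\pid_\bbo + \swapt) \seqq \hT{\pid_{b_2} \times c_1} \seqq (\pid_\bbo + \swapt) \seqq \hT{\pid_{b_3} \times c_2}$. The two resulting composite terms agree on the outermost $(\pid_\bbo + \swapt)$ wrappers, reducing the problem to the analogous bifunctoriality statement for $\hT{\pid_b \times d}$. This latter statement I would prove by induction on the ranking function $R(b)$ used in \cref{fig:def_ht}: the $\bbz$ and $\bbo$ cases are direct (the $\bbo$ case reduces immediately to \cref{eq:ht_plus}), while the $b_1 + b_2$ case uses \cref{eq:ht_plus} applied to $\hT{\pid_{b_1} \times d + \pid_{b_2} \times d}$, and the $b_1 \times b_2$ cases reduce to smaller rank via the associator wrapping.

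The main obstacle is \cref{eq:ht_plus}'s second direction, where the twist structure in $\cT_H$ breaks naive bifunctoriality and \cref{lem:ht2} is the only tool that lets us restore it; orchestrating the associators, units, and swaps so that everything composes to the correct canonical form is delicate. Everything else is essentially bookkeeping, but the sheer number of type-conversion isomorphisms introduced by $\cT_H$'s twists means the diagrammatic proof will be considerably longer than the idea suggests, so I would relegate the full derivation to the appendix.
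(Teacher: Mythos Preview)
Your plan for \cref{eq:ht_plus} is correct and is in fact more direct than the paper's appendix proof: unfolding the definition, collapsing $\hT{\pid_b}$ to identity, and (for the second direction) using \cref{lem:ht2} to slide $\hT{c_1}$ past the twist is exactly what the main-text sketch (``straightforward by \cref{lem:ht2,eq:h3}'') intends.

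For \cref{eq:ht_times}, however, there is a gap. Your claim that after unfolding ``the two resulting composite terms agree on the outermost $(\pid_\bbo+\swapt)$ wrappers'' holds only for the first direction. For the second direction you are left needing
\[
\hT{\pid_{b_1}\times c_2}\seqq(\pid_\bbo+\swapt)\seqq\hT{\pid_{b_4}\times c_1}\seqq(\pid_\bbo+\swapt)
\ \fromto_2\
(\pid_\bbo+\swapt)\seqq\hT{\pid_{b_2}\times c_1}\seqq(\pid_\bbo+\swapt)\seqq\hT{\pid_{b_3}\times c_2},
\]
which is a genuine commutation of $c_1$ past $c_2$ through the $\swapt$'s, not a statement about a single $\hT{\pid_b\times d}$. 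Your induction on $R(b)$ will give you $\hT{\pid_b\times c}\fromto_2\hT{c+\cdots+c}$ (the paper's \cref{lem:hqt1}), but reducing to \cref{eq:ht_plus} then requires commuting blocks of $\hT{c_2}$'s past blocks of $\hT{c_1}$'s \emph{across} a $(\pid_\bbo+\swapt)$ that permutes the summands; \cref{lem:ht2} only lets you move one $\hT{c}$ past a swap of the auxiliary $\bbo$, not past a $\swapt$ that reindexes all copies simultaneously.

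The paper closes this gap by a different mechanism: it introduces an auxiliary translation $\cT_{HQ}$ that adds $\hdim(c)$ copies of $\bbo$ (so that $\hqT{c}$ has type $\bb{2}\times b_1\fromto\bb{2}\times b_2$ up to strictness), proves a strengthened version of \cref{lem:ht2} relating $\hT{c}$ and $\hqT{c}$ (\cref{lem:hqt2}), and then \emph{extracts} the desired commutation from the trivially-true bifunctoriality $\hT{c_1}\times\hT{c_2}\fromto_2(\hT{c_1}\times\pid)\seqq(\pid\times\hT{c_2})\fromto_2(\pid\times\hT{c_2})\seqq(\hT{c_1}\times\pid)$ in the enlarged space, using \cref{eq:h3} to cancel the extra dimensions (\cref{lem:ht_times4,lem:ht_times5}). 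This ``extraction from a larger space'' is the key idea your sketch is missing.
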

\begin{proof}
    The proof for \cref{eq:ht_plus} is straightforward by \cref{lem:ht2,eq:h3}.
    However, the proof for \cref{eq:ht_times} involves extracting proof from $\hT{c_1}\times \hT{c_2} \fromto_2 (\hT{c_1}\times \pid_{\bbo+b_2})\seqq (\pid_{\bbo+b_3}\times \hT{c_2}) \fromto_2 (\pid_{\bbo+b_1}\times \hT{c_2})\seqq (\hT{c_1}\times \pid_{\bbo+b_4})$ using \cref{lem:ht2,eq:h3}. We postpone the details to \ifARXIV%
    \cref{app:hpi_proof}%
    \else%
    \cref{ext-app:hpi_proof} of the extended version~\cite{FHK25}%
    \fi.
\end{proof}

\begin{restatable}{lemma}{lemqeq}\label{lem:q_eq}
    If $c_1 \fromto_2 c_2$ is a level-2 combinator of Q-$\Pi$, then $\hT{c_1} \fromto_2 \hT{c_2}$ is a level-2 combinator of \HPiLang{}. 
\end{restatable}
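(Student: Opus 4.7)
The plan is to induct on the derivation of $c_1 \fromto_2 c_2$ in the equational theory of \QPiLang{}, and check that each axiom and each congruence/structural rule is preserved under $\cT_H$. Reflexivity, symmetry, and transitivity are immediate from the corresponding rules in \HPiLang{}. For the congruence rules for $\seqq$, $+$, $\times$, we use \cref{lem:ht3}: sequential composition is preserved directly from the definition $\hT{d_1\seqq d_2}=\hT{d_1}\seqq\hT{d_2}$; compatibility of $+$ and $\times$ with $\fromto_2$ follows by rewriting $\hT{c_1+c_2}$ (resp.\ $\hT{c_1\times c_2}$) via \cref{eq:ht_plus} (resp.\ \cref{eq:ht_times}) into a sequential composition of $\cT_H$-images in which only one factor changes at a time.

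For the axioms inherited from $\Pi$ (naturality, bifunctoriality, associator/unitor/distributor coherence), the strategy is the same in every case: unfold $\cT_H$ on both sides into compositions of the form $(\pid_\bbo + \mathit{iso})$ plus the braiding ``twists'' $(\swapp+\pid)\seqq\assocrp$ and $\assoclp\seqq(\swapp+\pid)$ that $\cT_H$ introduces around each $c_i$. Every $\Pi$ coherence then reduces, after using \cref{lem:ht2} to slide twists past $\cT_H$-images when needed, to the \emph{same} $\Pi$ coherence applied on the $(\bbo+b)$ wire, which is provable in $\Pi$ and hence in \HPiLang{}. The case $c_1\times c_2$ deserves care: because $\hT{\pid_b\times c}$ is defined by recursion on $b$ using $\dist,\factor,\absorb,\factorz$, we first show by a straightforward structural induction on $b$ that $\hT{\pid_b\times c}$ equals the canonical ``sum of $\hdim(b)$ copies of $c$'' wrapped by $\cT_H$, so that the $\times$ cases collapse to already-handled $+$ cases.

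For the four axioms specific to \QPiLang{}:
\begin{itemize}
\item \cref{eq:e1}: $\hT{(-1)\seqq(-1)} = (\hadamard\seqq\swapp\seqq\hadamard)^2 \fromto_2 \pid_{\bbo+\bbo}$ by two applications of \cref{eq:h1} and $\swapp\seqq\swapp\fromto_2\pid$ from $\Pi$.
\item \cref{eq:e2}: $\hT{\hadamard\seqq\hadamard} = (\pid_\bbo+\hadamard)\seqq(\pid_\bbo+\hadamard) \fromto_2 \pid_\bbo+\hadamard^2 \fromto_2 \pid_{\bbo+\bbo+\bbo}$ by bifunctoriality and \cref{eq:h1}.
\item \cref{eq:e3}: $\hT{\hadamard\seqq\swapp\seqq\hadamard} = (\pid_\bbo+\hadamard)\seqq(\pid_\bbo+\swapp)\seqq(\pid_\bbo+\hadamard)$ and $\hT{\pid_\bbo+(-1)}$ both have to be shown equal; the latter unfolds through \cref{eq:ht_plus} to a twisted form in which the inner occurrence of $\hadamard\seqq\swapp\seqq\hadamard$ acts on the first two summands. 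Rearranging via the $\Pi$ coherences and \cref{eq:h2} (whose graphical content is precisely ``$\mathit{hxh}$ commutes past a swap through an extra wire'') closes the case.
\item \cref{eq:e4}: cancellativity lifts directly to \HPiLang{} via \cref{eq:h3} applied to $\hT{c_1+c_2}$; the ``twists'' introduced by $\cT_H$ are absorbed into $c_1+c_2$ by first invoking bifunctoriality on the outer wire.
\end{itemize}

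The main obstacle is the $\times$ case combined with the twist bookkeeping: since $\cT_H$ threads a single $\bbo$ wire through a term whose subterms each expect their own $\bbo$ wire, every bifunctoriality/naturality step must be rewired so that the subterms' wires are merged into one. This is exactly what \cref{lem:ht2} is designed for, and the $\times$-vs-$+$ conversion of $\hT{\pid_b\times c}$ reduces this obstacle to a $+$-based argument that only uses \cref{lem:ht2,lem:ht3}, \cref{eq:h3}, and the $\Pi$ coherences. Putting these together completes the induction.
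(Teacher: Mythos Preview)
Your proposal is correct and follows the same overall strategy as the paper: induction on the derivation of $c_1\fromto_2 c_2$, with \cref{lem:ht2,lem:ht3} carrying the structural rules and the four \QPiLang{}-specific axioms handled essentially as you describe.

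The one substantive difference is in the machinery for the $\times$-related axioms. Your plan is to convert $\hT{\pid_b\times c}$ to a $\cT_H$-wrapped sum of $\hdim(b)$ copies (this is the first half of the paper's \cref{lem:hqt1}) and then reduce every $\times$ case to an already-established $+$ case. The paper does exactly this for bifunctoriality of $\times$, but for the remaining $\times$ axioms---naturality of $\swapt$, associativity and coherence for $\otimes$, distributivity, and the congruence rule for $\times$---it instead introduces an auxiliary translation $\cT_{HQ}$ (which adds $\hdim(c)$ ancillary dimensions rather than one) together with a strengthening of \cref{lem:ht2} relating $\cT_H$ and $\cT_{HQ}$. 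Because $\cT_{HQ}\lbrack c_1\times c_2\rbrack$ is built \emph{compositionally} from $\cT_{HQ}\lbrack c_1\rbrack$ and $\cT_{HQ}\lbrack c_2\rbrack$, the $\times$ axioms become nearly mechanical in $\cT_{HQ}$, and one transfers back to $\cT_H$ via the strengthened lemma and \cref{eq:h3}. Your route avoids introducing $\cT_{HQ}$ but pays for it in bookkeeping: you would need to track explicitly how the $(\pid_\bbo+\swapt)$ conjugations interact with the iterated $\cT_H$-sums for each axiom, whereas the paper's device packages that interaction once.
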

\begin{proof}
    With \cref{lem:ht2,lem:ht3}, we can use induction on the level-2 combinators deriving $c_1 \fromto_2 c_2$. See details in \ifARXIV%
    \cref{app:hpi_proof}%
    \else%
    \cref{ext-app:hpi_proof} of the extended version~\cite{FHK25}%
    \fi.
\end{proof}

Similar to \cref{lem:wq_inv}, the next lemma tells us that the composition of $\qsem{\cdot}$ and $\cT_H$ will yield equivalent terms, with additional terms $\pid_\bbo$.
\begin{lemma}\label{lem:qh_inv}
    If $c$ is a term of \HPiLang{}, then $\pid_{\bbo}+c \fromto_2 \hT{\qsem{c}}$ is a level-2 combinator of \HPiLang{}. 
\end{lemma}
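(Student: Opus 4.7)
The proof proceeds by structural induction on $c$. The base cases are immediate: for every primitive isomorphism that actually appears in \HPiLang{}---namely $\hadamard$ and the isomorphisms inherited from $\Pi$, since $(-1)$ is not a primitive of \HPiLang{}---the definition of $\cT_H$ directly gives $\hT{c} = \pid_\bbo + c$. The sequential composition case is equally quick: $\hT{\qsem{c_1 \seqq c_2}} = \hT{\qsem{c_1}} \seqq \hT{\qsem{c_2}}$, so the inductive hypothesis combined with bifunctoriality of $+$ yields $\hT{\qsem{c_1 \seqq c_2}} \fromto_2 \pid_\bbo + (c_1 \seqq c_2)$.

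For the sum case $c = c_1 + c_2$, I would unfold the definition of $\hT{\qsem{c_1} + \qsem{c_2}}$ and apply the inductive hypothesis to each $\hT{\qsem{c_i}}$, obtaining a term of the shape $\sigma_1 \seqq ((\pid_\bbo + c_1) + \pid_{b_2}) \seqq \sigma_2 \seqq (\pid_{b_3} + (\pid_\bbo + c_2)) \seqq \sigma_3$, where the $\sigma_i$ are compositions of $\assocrp$, $\assoclp$, and $\swapp$. Repeated application of naturality of the structural isomorphisms together with bifunctoriality of $+$ lets the two occurrences of $c_i$ be pulled through the structural layers and collected into a single factor $\pid_\bbo + (c_1 + c_2)$. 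What remains is a purely $\Pi$ composition $\bbo + (b_1 + b_2) \to \bbo + (b_1 + b_2)$ representing the trivial permutation, which by completeness of $\Pi$ is $\fromto_2 \pid_{\bbo + (b_1 + b_2)}$.

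The product case $c = c_1 \times c_2$ is where the main difficulty lies. By the definition of $\cT_H$, $\hT{\qsem{c_1} \times \qsem{c_2}} = (\pid_\bbo + \swapt) \seqq \hT{\pid_{b_2} \times \qsem{c_1}} \seqq (\pid_\bbo + \swapt) \seqq \hT{\pid_{b_3} \times \qsem{c_2}}$, so I need to control terms of the shape $\hT{\pid_b \times \qsem{c'}}$. My plan is to establish an auxiliary claim: for any value type $b$ and any \HPiLang{} term $c'$, $\hT{\pid_b \times \qsem{c'}} \fromto_2 \pid_\bbo + (\pid_b \times c')$, proved by a secondary induction on $b$ following the well-founded rank $R$ used in the definition of $\cT_H$. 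The $b = \bbo$ case is dispatched by the main inductive hypothesis applied to $c'$ together with naturality of $\unitet$; the $b = b_1 + b_2$ case reduces via the distributor to an instance of the sum case just proved; the $b = \bbz$ case and the four product sub-cases follow from naturality of $\absorb$, $\assocrt$, and $\swapt$ together with the secondary inductive hypothesis at smaller rank. With the auxiliary claim in hand, the outer product case is finished by applying it to $(b_2, c_1)$ and $(b_3, c_2)$, concatenating, and collapsing the two adjacent $\pid_\bbo + \swapt$ factors using naturality of $\swapt$ to recover $\pid_\bbo + (c_1 \times c_2)$.

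The principal obstacle is therefore the auxiliary claim in the product case; once it is in hand, every remaining step is routine bookkeeping with rig-groupoid coherence together with completeness of $\Pi$.
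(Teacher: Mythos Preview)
Your proposal is correct and matches the paper's approach. The paper's own proof is a one-line remark that the result follows by ``an easy induction on the structure of $c$'', relying on the fact that $\qsem{\cdot}$ is the identity embedding and that $\hT{\mathit{iso}} = \pid_\bbo + \mathit{iso}$ for every primitive of \HPiLang{}; you have simply spelled out the inductive cases that the paper leaves implicit, including the secondary induction on $b$ that handles $\hT{\pid_b \times \qsem{c'}}$ in the product case.
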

\begin{proof}
    Since $\qsem{\cdot}$ is an embedding, and for any isomorphisms $\mathit{iso}$  of \HPiLang{}, which does not contain $(-1)$, we have $\hT{\qsem{\mathit{iso}}} = \hT{\mathit{iso}} = \pid_\bbo+\mathit{iso}$, the proof is an easy induction on the structure of $c$.
\end{proof}

Analogously to the reasoning of \cref{thm:qpi_relation}, \cref{lem:qh_inv,lem:q_eq} give the completeness of \HPiLang{} with respect to \QPiLang{}.

\begin{theorem}
    [Completeness of \HPiLang{} w.r.t Q-$\Pi$]\label{thm:hpi_qpi}
    If $c_1,c_2$ are two terms of \HPiLang{} satisfying $\qsem{c_1} \fromto_2 \qsem{c_2}$, then $c_1\fromto_2 c_2$ is a level-2 combinator of \HPiLang{}.
\end{theorem}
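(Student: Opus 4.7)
The approach is to transport the hypothesis across the round-trip $\HPiLang \to \QPiLang \to \HPiLang$ using the two preceding lemmas, arriving at an equation of the form $\pid_\bbo + c_1 \fromto_2 \pid_\bbo + c_2$ in \HPiLang{}, and then cancelling the auxiliary $\bbo$ summand introduced by $\cT_H$ by means of \cref{eq:h3}.

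Concretely, the first move is to apply \cref{lem:q_eq} to the hypothesis $\qsem{c_1} \fromto_2 \qsem{c_2}$, obtaining $\hT{\qsem{c_1}} \fromto_2 \hT{\qsem{c_2}}$ as a level-2 combinator of \HPiLang{}. Second, two applications of \cref{lem:qh_inv} combined by transitivity yield
\[ \pid_\bbo + c_1 \fromto_2 \hT{\qsem{c_1}} \fromto_2 \hT{\qsem{c_2}} \fromto_2 \pid_\bbo + c_2. \]

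To strip the leading $\pid_\bbo$, let $c_2^{-1}$ denote the mechanical reversal of $c_2$ (constructible in any reversible language and satisfying $c_2 \seqq c_2^{-1} \fromto_2 \pid$). Post-composing both sides with $\pid_\bbo + c_2^{-1}$ and using bifunctoriality of $+$ reduces the problem to $\pid_\bbo + (c_1 \seqq c_2^{-1}) \fromto_2 \pid$; a final appeal to the cancellativity rule then gives $c_1 \seqq c_2^{-1} \fromto_2 \pid$, and post-composing with $c_2$ delivers $c_1 \fromto_2 c_2$.

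The main obstacle is the asymmetry of \cref{eq:h3}, which only permits the \emph{left}-hand summand to be recovered as $\pid$. In our derivation the auxiliary $\bbo$ sits on the left, so the rule cannot be applied verbatim to the interesting summand. I expect to bridge this by first conjugating the equation with $\swapp$ and invoking its naturality and involutivity (both provable already in $\Pi$) to swap the summands, yielding $(c_1 \seqq c_2^{-1}) + \pid_\bbo \fromto_2 \pid$, at which point \cref{eq:h3} applies directly. Apart from this small bookkeeping manoeuvre, the proof is essentially a direct assembly of \cref{lem:q_eq} and \cref{lem:qh_inv}.
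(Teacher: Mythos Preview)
Your proposal is correct and follows essentially the same route as the paper: apply \cref{lem:q_eq}, then \cref{lem:qh_inv}, then cancel via \cref{eq:h3}. The paper's proof is terser and simply cites \cref{eq:h3} for the final step without spelling out the inverse-and-swap manoeuvre you describe; your elaboration of that bookkeeping (reduce to $\pid_\bbo + (c_1\seqq c_2^{-1}) \fromto_2 \pid$, conjugate by $\swapp$ to put the nontrivial summand on the left, then apply \cref{eq:h3}) is exactly the implicit content of that citation.
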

\begin{proof}
    Invoke \Cref{lem:q_eq,lem:qh_inv,eq:h3}:
    \begin{align*}
        \qsem{c_1} \fromto_2 \qsem{c_2}
        \Rightarrow{}& \hT{\qsem{c_1}} \fromto_2 \hT{\qsem{c_2}} \\
        \Rightarrow{}& \pid_{\bbo}+c_1 \fromto_2 \hT{\qsem{c_1}} \fromto_2 \hT{\qsem{c_2}} \fromto_2 \pid_{\bbo}+c_2 \\
        \Rightarrow{}& c_1\fromto_2 c_2. \qedhere
    \end{align*}
\end{proof}

Finally, we prove \cref{thm:hpi} as follows.
\begin{proof}[Proof of \cref{thm:hpi}]
    For soundness, observe that by soundness of $\Pi$ it suffices to verify that \cref{eq:h1,eq:h2,eq:h3} are sound. This can be verified by direct computation.

    For completeness, combine \cref{lem:q_faith} with the completeness of $\QPiLang{}$ from \cref{thm:qpi} and the completeness of \HPiLang{} w.r.t. \QPiLang{} from \cref{thm:hpi_qpi}:
    \begin{align*}
            \sem{c_1} = \sem{c_2}
            \Rightarrow{}& \sem{\qsem{c_1}} = \sem{c_1} = \sem{c_2} = \sem{\qsem{c_2}} \\
            \Rightarrow{}& \qsem{c_1} \fromto_2{} \qsem{c_2} \\
            \Rightarrow{}& c_1 \fromto_2 c_2.
    \end{align*}
    This completes the proof.
\end{proof}

\section{Related work and Concluding Remarks}\label{sec:conclusion}
Our work delivers a notable improvement over the classical reversible programming language
$\Pi$~\cite{choudhury2022symmetries}: we add only one quantum primitive and three equations to $\Pi$, yet still obtain a completeness result. Several other extensions of $\Pi$ to quantum computing exist, such as $\sqrt{\Pi}$~\cite{carette2024few} and Quantum $\Pi$~\cite{carette2024bake}.
In this section, we nuance our position relative to the existing work and discuss the implications of our result and future work.

The finite presentation of $\HPiUnitary$ in \cref{sec:relations} is the cornerstone for our completeness result.
It can be regarded as a refinement of~\cite{bian2021generators,greylyn2014generators,li2021generators} as follows.
First, \citet{bian2021generators} do not consider the unitary groups over a ring that contains $1/\sqrt{2}$.
Second, \citet{li2021generators} handle $1/\sqrt{2}$ in an ad hoc way, since the orthogonal matrices of the form $M/\sqrt{2}^k$ with integer matrix $M$ and non-negative integer $k$ only form a subset of $\HPiUnitary$.
Finally, \citet{greylyn2014generators} only considers the group $U_4(\ZZ[\tfrac{1}{\sqrt{2}},i])$ of a fixed dimension $4$.
However, we do not claim that our contribution in \cref{sec:relations} represents a major breakthrough, as it adopts the general proof framework of prior work.
The difficulty is that the introduction of $\ZZ[\tfrac{1}{\sqrt{2}}]$ makes deriving \cref{rel:d3,rel:d4} particularly challenging.

Our results enable a complete mechanisation of reversible quantum program rewriting: two programs in \HPiLang implement the same quantum computation if and only if one can be rewritten into the other by a finite string of equations chosen from a finite set of equations. This is significant because physically implementing a discrete (finite) universal quantum gate set with high accuracy is generally simpler than implementing a continuous one.
In contrast, a major limitation for ZX- and ZH-calculi~\cite{coeckeduncan:zx,jeandealperdrixvilmart:zx,backenskissinger:zh} is that they do not operate within the universe of unitaries.
For example, even if a ZX-diagram represents a unitary, it is generally hard to synthesise a proper quantum circuit from it~\cite{beaudrapkissingerwetering:circuithard}.
The LOv-calculus~\cite{clementetal:lov} and subsequent work on equational theories of quantum circuits~\cite{clement2024minimal,clement2023complete,clément2023quantumcircuitcompletenessextensions} have gates that are parametrised by continuous variables and equations involving existential quantifiers. They offer a different advantage, as it operates directly on quantum circuits, but comes with continuous gates, unlike our discrete formulation.

Our work opens up several future directions of research. An implementation of \HPiLang and \QPiLang would enable term rewriting systems in transforming and optimising quantum programs, and a logic to support reasoning about quantum programs.
Based on our completeness result, it seems promising to develop a complete equational theory for quantum circuits with a discrete universal gate set.
Finally, our proofs of completeness rely on a cancellativity property that is currently baked into the languages; it is unclear whether this is necessary.

\begin{acks}
    We owe special thanks to an anonymous reviewer for identifying a gap in the proof of \cref{thm:relation_completeness} in an earlier version of this manuscript, and to Alex Rice for pointing out the need for a cancellative condition.
    This research was funded by the Engineering and Physical Sciences Research Council (EPSRC) under project EP/X025551/1 ``Rubber DUQ: Flexible Dynamic Universal Quantum programming''.
    Robin Kaarsgaard was supported by Sapere Aude: DFF-Research Leader grant 5251-00024B.
\end{acks}

\bibliographystyle{ACM-Reference-Format}
\bibliography{ref}

\ifARXIV
\appendix

\section{Completeness of Relations \texorpdfstring{(\NoCaseChange{\cref{thm:relation_completeness}})}{}}\label{app:relations_proof}
\begingroup
\tikzset{
    commutative diagrams/every diagram/.style={
        row sep=1cm,
        column sep=1cm,
        nodes={inner sep=4pt},
    }
}
\allowdisplaybreaks

We begin by adopting some useful definitions from~\cite{li2021generators} to simplify our presentation.

\begin{definition}
    [Syllable]
    The words produced by \cref{alg:exact_synthesis} are referred to as \emph{syllables}, which are of the form
    $\CZ{a}$, $\CX{a,j}\CZ{a}^{\tau}$, $\CH{1,b}$ and $\CH{1,b}\CX{1,c}$ for appropriately chosen $a,b,c$ and $\tau$.
\end{definition}

\begin{definition}[State graph]
    The \emph{state graph} for \HPiUnitary{} is a directed graph defined as follows.
    \begin{itemize}
        \item The vertices are the elements of \HPiUnitary{} and are referred to as \emph{states}.
        \item There are two types of edges: \begin{itemize}
            \item \emph{simple edges}, which are triples $\edge{s}{G}{s'}$ and denoted by $s \Se{G} s'$ or $G:s\Se{} s'$ where $s,s' \in \HPiUnitary$, $G\in\cG_n$ and $s' = Gs$.
            \item \emph{normal edges}, which are triples $\edge{s}{N}{s'}$ and denoted by $s \Ne{N} s'$ or $N:s\Ne{} s'$ where $s,s' \in \HPiUnitary$, $N$ is the unique first syllable output by \cref{alg:exact_synthesis} on input $s$ and $s' = Ns$.
        \end{itemize}
    \end{itemize} 
\end{definition}

For convenience, we write $\mathbf{G} : s \Se*{} s'$ for $\mathbf{G} = G_m\cdots G_2G_1$ if there is a sequence of simple edges
\begin{equation*}
    s \Se{G_{1}} s_1 \Se{G_2} s_2 \cdots s_{m-1} \Se{G_m} s'; 
\end{equation*}
and we write $\mathbf{N} :s \Ne*{} s'$ for $\mathbf{N} = N_m\cdots N_2N_1$ if there is a sequence of normal edges
\begin{equation*}
    s \Ne{N_{1}} s_1 \Ne{N_2} s_2 \cdots s_{m-1} \Ne{N_m} s'.
\end{equation*}
Since the syllables output by \cref{alg:exact_synthesis} are generated by $\cG_n$, normal edges can be treated as sequences of simple edges.
Thus, a sequence of normal edges can also be treated as a sequence of simple edges.

\begin{definition}
    Let $\mathbf{G}:s \Se*{} s'$ be a sequence of simple edges with $\mathbf{G} = G_m\cdots G_2G_1$.
    The \emph{level} of $\mathbf{G}: s \Se*{} s'$, denoted $\level(\mathbf{G}:s \Se*{} s')$, is the maximum level of the intermediate states in this sequence, i.e., 
    \begin{equation}
        \level(\mathbf{G}:s \Se*{} s') = \max\cbra{\,\level(s),\level(G_1s),\level(G_2G_1s),\ldots,\level(\mathbf{G}s)\,}.
    \end{equation}
\end{definition}

\begin{definition}
    Let $\mathbf{G}:s \Se*{} t, \mathbf{G}':s \Se*{} t$ be two sequences of edges. We say that the diagram
    \begin{equation}
        \begin{tikzcd}
            \setwiretype{n} s \ar[r,to*,bend right=-40,"\mathbf{G}"] \ar[r,to*, bend right=40,"\mathbf{G}'",swap]& t 
        \end{tikzcd}
    \end{equation}
    commutes equationally if $\mathbf{G} \approx \mathbf{G}'$.
\end{definition}

We follow the approach of~\cite{greylyn2014generators} to define the basic edges.

\begin{definition}
    The subset $\cG_n' \subseteq \cG_n$ of \emph{basic} generators is defined as
    \begin{equation}
        \cG_n' = \{\,\CX{1,x}, \CZ{1}, \CH{1,2} \mid 2\leq x \leq n \,\}.
    \end{equation}
    An edge $G:s \Se{} t$ is called a \emph{basic edge} if $G$ is basic.
\end{definition}

\begin{lemma}\label{lem:simple_basic}
    Let $s$ and $r$ be states, $G:s \Se{} r$ be a simple edge. Then there exists a sequence of basic edges $\mathbf{G}'$ such that 
    \[\mathbf{G}' \approx G \text{ and } \level(\mathbf{G}':s \Se*{} r) = \level(G:s \Se{} r).\] 
\end{lemma}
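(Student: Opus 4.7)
The plan is to case-split on the generator $G \in \cG_n$. If $G$ is already basic (i.e., in $\cG_n'$), I take $\mathbf{G}' = G$ and there is nothing more to do. Otherwise, I rewrite $G$ as a conjugate of a basic generator by basic $\CX{1,\cdot}$ gates, using the relations in \cref{fig:relations}:
\begin{itemize}
    \item $\CZ{a}$ with $a>1$: from \cref{rel:c1} and \cref{rel:a2}, $\CZ{a}\approx \CX{1,a}\CZ{1}\CX{1,a}$.
    \item $\CX{b,c}$ with $b>1$: from \cref{rel:c2} and \cref{rel:a2}, $\CX{b,c}\approx \CX{1,b}\CX{1,c}\CX{1,b}$.
    \item $\CH{b,c}$ with $b>1$: from \cref{rel:c4}, $\CH{b,c}\approx \CX{1,b}\CH{1,c}\CX{1,b}$, reducing to the next case.
    \item $\CH{1,c}$ with $c>2$: from \cref{rel:c5}, $\CH{1,c}\approx \CX{2,c}\CH{1,2}\CX{2,c}$, and each $\CX{2,c}$ is then unfolded as $\CX{1,2}\CX{1,c}\CX{1,2}$ via the $\CX$ case above.
\end{itemize}
Concatenating these rewrites yields the sought word $\mathbf{G}'$ of basic generators with $\mathbf{G}' \approx G$.

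For the level equality, observe first that $\level(G:s\Se{} r) = \max(\level(s),\level(r))$ and that $\level(\mathbf{G}':s\Se*{} r) \geq \max(\level(s),\level(r))$ is automatic, because $s$ and $r$ themselves appear among the states of the sequence. So I only need to bound each intermediate state $s_i$ by $\level(s_i) \leq \max(\level(s),\level(r))$. The key uniform observation is that left-multiplication by $\CX{i,j}$ acts as a row permutation, and left-multiplication by $\CZ{i}$ as a diagonal sign change; both preserve the least denominator exponent $\lde$ of every column and preserve all residues modulo $2$ of the entries of $\sqrt{2}^k M e_j$. Hence only the first component $j$ of the level can fluctuate along a basic step, and it can only grow by acquiring indices on which the conjugating $\CX$ acts. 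Since $G$ itself acts non-trivially on exactly those indices, they already lie in the support of $r = Gs$, so every intermediate $\level(s_i)$ stays below $\max(\level(s),\level(r))$.

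The main obstacle I expect is the level bookkeeping in the $\CH$ cases, particularly $\CH{1,c}$, where the expansion has seven basic factors and the inner $\CX{2,c}$ must itself be unfolded, producing several transient states in which only part of the conjugation has been performed. The $\CZ$ and $\CX$ subcases are handled uniformly by the invariance observation above, and I expect the $\CH$ subcases to follow the same pattern once each transient's column-support is confirmed to lie within the supports of $s$ and $r$ together — a verification that should be routine, mirroring the treatment in~\cite{greylyn2014generators,li2021generators}.
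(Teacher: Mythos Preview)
Your decomposition of each generator into basic generators matches the paper exactly, and the relations you cite are the right ones. The paper structures the $\CH{b,c}$ case via the same diagram $s\to s_1\to r_1\to s_2\to r_2\to r$ with $s_1=\CX{1,b}s$, $r_1=\CX{2,c}s_1$, $s_2=\CH{1,2}r_1$, $r_2=\CX{2,c}s_2$, and then invokes the already-proved $\CX$ case for the two inner $\CX{2,c}$ expansions.

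Where your proposal is thin is precisely where the paper does the work: bounding the intermediate levels in the $\CH$ case. Your invariance observation (row permutations and sign flips preserve $\lde$ and residues of each column) handles $s_1,r_1$ and $r_2$ fine, but it does \emph{not} apply across the central $\CH{1,2}$ step, which genuinely changes $\lde$. The claim ``only the first component $j$ can fluctuate'' is simply false at that step, so the support heuristic as stated does not close the argument. The missing ingredient is that the commutation relations give
\[
\CH{1,2}\,\CX{2,c}\,\CX{1,b} \;\approx\; \CX{2,c}\,\CX{1,b}\,\CH{b,c},
\]
hence $s_2=\CX{2,c}\CX{1,b}\,r$: the post-$\CH{1,2}$ states are row permutations of $r$, not of $s$. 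Once you observe this, your invariance argument applies symmetrically to the second half relative to $r$, and together with $(c,0,0)\le\max(\level(s),\level(r))$ (which holds because $G$ moves column $c$ whenever $j_s<c$) the bound follows uniformly. The paper instead proceeds by direct case analysis on $j$ versus $c$ and on whether $k=0$, explicitly computing each $s_i e_j$, $r_i e_j$; that is more laborious but avoids needing the commutation identity above. Either route works, but what you wrote does not yet contain one of them.

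One small inaccuracy: the conjugating indices in your $\CH$ expansion include $2$ (from the $\CX{1,2}$ factors), which need not be an index of $G$; the correct statement is that all such indices are $\le c$, the largest index of $G$, which is what the bound actually uses.
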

\begin{proof}
    Let $(j,k,l) = \level(s)$, we prove it for all cases of $G$ and $(j,k,l)$.
    \begin{enumerate}[leftmargin=25pt]
        \item $G = \CZ{a}$. Let $\mathbf{G}' = \CX{1,a}\CZ{1}\CX{1,a}$, then the following diagram
        \[ \begin{tikzcd}
                \setwiretype{n} s \ar[r,"\CZ{a}"]\ar[d,"\CX{1,a}",swap] & r \ar[d,<-, "\CX{1,a}"] \\
                \setwiretype{n} t_1 \ar[r, swap, "\CZ{1}"]& t_2
            \end{tikzcd} 
        \]
        commutes equationally by \cref{rel:c4}. \begin{enumerate}[leftmargin=15pt]
            \item $j < a$. We have $s e_a = e_a$, then $re_a = \CZ{a}se_a = -e_a, t_1e_a = \CX{1,a}se_a = e_1, t_2e_a = \CZ{1}\CX{1,a}e_a = -e_1$. Thus, $\level(r) = \level(t_1) = \level(t_2) = (a, 0, 0) > \level(s)$. Then, $\level(\mathbf{G}':s \Se*{} r) = \level(G:s \Se{} r)$.
            \item $j = a$. \begin{enumerate}[leftmargin=15pt]
                \item If $k = 0$, which also means $l=0$,we have $se_a = -e_a$, then, $t_1e_a = \CX{1,a}se_a = -e_1$, $re_a = \CZ{a}se_a = e_a$, $t_2e_a = \CZ{1}\CX{1,a}se_a = e_1$. Thus, $\level(t_1) = \level(t_2) = (a,0,0) = \level(s)$, $\level(r) < (a,0,0)$, then $\level(\mathbf{G}':s \Se*{} r) = \level(G:s \Se{} r)$.
                \item If $k\neq 0$, which means $l \neq 0$, we will have $\level(t_1)=\level(t_2)=\level(r)=\level(s)$. Thus, $\level(\mathbf{G}':s \Se*{} r) = \level(G:s \Se{} r)$.
            \end{enumerate}
            \item $j > a$. We have $\CZ{a}, \CX{1,a}$ can not increase the $j$ of level $(j,k',l')$, then $\level(t_1) = \level(s) = \level(r)$. Similarly, $\level(t_1) = \level(s)$. Thus, $\level(\mathbf{G}':s \Se*{} r) = \level(G:s \Se{} r)$.
        \end{enumerate}
        \item $G = \CX{b,c}$. Let $\mathbf{G}' = \CX{1,b}\CX{1,c}\CX{1,b}$, then the following diagram 
        \[ \begin{tikzcd}
                \setwiretype{n} s \ar[r,"\CX{b,c}"]\ar[d,"\CX{1,b}",swap] & r \ar[d,<-, "\CX{1,b}"] \\
                \setwiretype{n} t_1 \ar[r, swap, "\CX{1,c}"]& t_2
            \end{tikzcd} 
        \]
        commutes equationally by \cref{rel:c2}.
        \begin{enumerate}[leftmargin=15pt]
            \item $j< c$. We have that $se_c = e_c$. Then $t_1e_c = \CX{1,b}se_c = e_c$, $re_c = \CX{b,c}se_c = e_b$, $t_2e_c = \CX{1,c}\CX{1,b}se_c = e_1$. Thus, $\level(t_1) < (c, 0,0), \level(t_2) = (c,0,0), \level(r) = (c,0,0)$, then  $\level(\mathbf{G}':s \Se*{} r) = (c,0,0) = \level(r) = \level(G:s \Se{} r)$.
            \item $j = c$. \begin{enumerate}[leftmargin=15pt]
                \item If $k = 0$, which means $l=0$, we have $se_c = -e_c$, then $t_1e_c = \CX{1,b}se_c = -e_c$, $re_c = \CX{b,c}se_c = -e_b$, $t_2e_c = \CX{1,c}\CX{1,b}se_c = -e_1$. Thus $\level(r) = \level(t_2) = \level(t_1) = (c,0,0)$, then $\level(\mathbf{G}':s \Se*{} r) = (c,0,0) = \level(r) = \level(G:s \Se{} r)$.
                \item If $k \neq 0$, then similar to the case 1)~b)~ii), we have $\level(\mathbf{G}':s \Se*{} r) = \level(G:s \Se{} r)$.
            \end{enumerate}
            \item $j > c$. Similar to the case 1)~c), we have $\level(\mathbf{G}':s \Se*{} r) = \level(G:s \Se{} r)$.
        \end{enumerate}
        \item $G = \CH{b,c}$. Consider the following diagram
        \[
            \begin{tikzcd}
                \setwiretype{n} s \ar[rrr,"\CH{b,c}"] \ar[d,"\CX{1,b}",swap] & & & r \\[-2mm]
                \setwiretype{n} s_1 \ar[r,"\CX{2,c}"] \ar[d,"\CX{1,2}",swap] & r_1\ar[r,"\CH{1,2}"] & s_2 \ar[r,"\CX{2,c}"]\ar[d,"\CX{1,2}"] & r_2 \ar[u,"\CX{1,b}",swap]  \\[-2mm]
                \setwiretype{n} t_1 \ar[r,"\CX{1,c}",swap] & t_2 \ar[u,"\CX{1,2}"] & t_3 \ar[r,"\CX{1,c}",swap] & t_4 \ar[u,"\CX{1,2}",swap]
            \end{tikzcd}
        \]
        As we have shown in case 2), the two rectangles at the bottom of the diagram commute equationally and $\level(s_1:\Se*{\CX{1,2}\CX{1,c}\CX{1,2}} r_1) = \level(s_1:\Se{\CX{2,c}}r_1),\level(s_2:\Se*{\CX{1,2}\CX{1,c}\CX{1,2}}r_2) = \level(s_2:\Se{\CX{2,c}}r_2)$.
        The rectangle at the top of the diagram commutes equationally by \cref{rel:c4,rel:c5}. Thus, we choose $\mathbf{G}' = \CX{1,b}\CX{1,2}\CX{1,c}\CX{1,2}\CH{1,2}\CX{1,2}\CX{1,c}\CX{1,2}\CX{1,b}$. With the diagram commutes equationally, we have $\mathbf{G}' \approx G$.
        \begin{enumerate}[leftmargin=15pt]
            \item $j < c$. Since $j \leq b < c$, we have that $s e_c = e_c$. Then, by direct computation , we have $s_1 e_c = e_c, r_1e_c = e_2, s_2e_c = \frac{1}{\sqrt{2}}(e_1-e_2), r_2e_c = \frac{1}{\sqrt{2}}(e_1-e_c), re_c = \frac{1}{\sqrt{2}}(e_b-e_c)$. Thus $\level(s_1) < (c,0,0), \level(r_1) < (c,0,0), \level(s_2) = \level(r_2) = \level(r) = (c,1,2)$, then $\level(\mathbf{G}':s \Se*{} r) = (c,1,2) = \level(r) = \level(G:s \Se{} r)$.
            \item $j = c$. \begin{enumerate}[leftmargin=15pt]
                \item If $k = 0$, which means $l=0$, we have $se_c = -e_c$, then $s_1 e_c = -e_c, r_1e_c = -e_2, s_2e_c = \frac{1}{\sqrt{2}}(e_2-e_1), r_2e_c = \frac{1}{\sqrt{2}}(e_c-e_1), re_c = \frac{1}{\sqrt{2}}(e_c-e_b)$. Thus, $\level(s_1) = \level(r_1) = (c,0,0), \level(s_2) = \level(r_2) = \level(r) = (c, 1,2)$, then $\level(\mathbf{G}':s \Se*{} r) = (c,1,2) = \level(r) = \level(G:s \Se{} r)$.
                \item If $k \neq 0$, which means $l \neq 0$. We have that $\CX{1,b}, \CX{2,c}$ can not increase the $j$ of level $(j,k',l')$, thus $\level(r_1) = \level(s_1) = \level(s) > (c, 0, 0)$. For $s_2,r_2,r$, we consider the following two cases: \begin{itemize}
                    \item If $\level(r) \leq (c,0,0)$, then it means $re_c = \pm e_k$ for $k \leq c$. Thus, $\level(s_2) \leq (c,0,0), \level(r_2) \leq (c,0,0)$. We have $\level(\mathbf{G}':s \Se*{} r) = \level(s) = \level(G:s \Se{} r)$.
                    \item If $\level(r) > (c,0,0)$, then, since $\CX{1,b}, \CX{2,c}$ can not increase the $j$ of level $(j,k',l')$, we have $\level(s_2) = \level(r_2) = \level(r)$. Thus, $\level(\mathbf{G}':s \Se*{} r) = \max\{\level(s), \level(r)\} = \level(G:s \Se{} r)$.
                \end{itemize}
            \end{enumerate}
            \item $j > c$. Similar to the case 1)~c), we have $\level(r_1) = \level(s_1) = \level(s), \level(s_2) = \level(r_2 = \level(r))$. Thus, $\level(\mathbf{G}':s \Se*{} r) = \max\{\level(s), \level(r)\} = \level(G:s \Se{} r)$. \qedhere
        \end{enumerate}
    \end{enumerate}
\end{proof}

Next, we can establish the Main Lemma for basic edges.

\begin{lemma}[Main Lemma]\label{lem:basic_main}
    Let $s$, $t$, and $r$ be states, $N: s \Ne{} t$ be a normal edge, and $G: s \Se{} r$ be a basic edge. Then there exist a state $q$, a sequence of normal edges $\mathbf{N}': r \Ne*{} q$, and a sequence of simple edges $\mathbf{G}':t \Se*{} q$ such that the diagram 
    \[ \begin{tikzcd}
        \setwiretype{n} s \ar[r,"G"]\ar[d,Rightarrow,"N",swap] & r \ar[d, Rightarrow*, "\mathbf{N}'"] \\
        \setwiretype{n} t \ar[r, swap, to*, "\mathbf{G}'"]& q
    \end{tikzcd} 
    \]
    commutes equationally and $\level(\mathbf{G}':t \Se*{} q) < \level(s)$.
\end{lemma}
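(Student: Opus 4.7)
The plan is to prove the Main Lemma by a finite but sizeable case analysis, organised along two axes: the form of the normal-edge syllable $N$ (which must be one of $\CZ{a}$, $\CX{a,j}\CZ{a}^\tau$, $\CH{1,b}$, or $\CH{1,b}\CX{1,c}$ per \cref{alg:exact_synthesis}) and the basic generator $G \in \{\CX{1,x}, \CZ{1}, \CH{1,2}\}$. For each combination I will exhibit an explicit completing square: a sequence $\mathbf{N}'$ of normal edges from $r$ and a sequence $\mathbf{G}'$ of simple edges from $t$ meeting at a common state $q$, such that the relations in \cref{fig:relations} prove $N\mathbf{G}' \approx \mathbf{N}'G$. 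By \cref{lem:simple_basic}, it suffices to control $\mathbf{G}'$ at the level of simple edges, so constructing $\mathbf{N}'$ and $\mathbf{G}'$ can draw freely from all of $\cG_n$.

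The easy branches are those where $G$ and (the syllables of) $N$ act on disjoint indices: relations \cref{rel:b1,rel:b2,rel:b3,rel:b4,rel:b5,rel:b6} then give a trivial square with $\mathbf{N}' = N$ and $\mathbf{G}' = G$. The substantial branches are those where the indices of $G$ overlap with the ``pivot'' indices used by the syllable $N$. For example, if $N = \CH{1,b}\CX{1,c}$ and $G = \CX{1,x}$, then depending on whether $x \in \{1,b,c\}$ one commutes $G$ past the syllable using the braid-type relations \cref{rel:c2,rel:c3,rel:c4,rel:c5}, possibly rewriting $G$ into a new generator that lives further down the index range; the cases where both $N$ and $G$ are Hadamard-flavoured and share indices reduce, after normalising, to instances of the two long relations \cref{rel:d3,rel:d4}, together with the Pauli-propagation rules \cref{rel:c1,rel:d1,rel:d2}. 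This is why \cref{fig:relations} contains exactly these relations.

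Level bookkeeping is the glue that makes the completion useful. Write $(j,k,l) = \level(s)$ and note that by correctness of \cref{alg:exact_synthesis}, $\level(t) < (j,k,l)$. For $\mathbf{G}'$ I need $\level(\mathbf{G}': t \Se*{} q) < (j,k,l)$, which by \cref{lem:simple_basic} reduces to bounding the level of each intermediate state produced by the basic rewrite. Because the basic generators involve index $1$ plus at most one other index, and the syllable $N$ is designed to rewrite exactly column $j$ of the current state toward the standard basis, the intermediate states traversed by $\mathbf{G}'$ can be shown column-by-column to differ from $t$ only at positions $\leq j$, and to have their $j$-th column either equal to $te_j$ (when $G$'s second index is $<j$) or obtained by a basis permutation/sign flip on a vector whose $\lde$ has already been reduced by $N$. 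In all subcases this yields $(j',k',l') < (j,k,l)$ for every intermediate state.

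The principal obstacle is simply the volume of subcases: there are roughly four syllable families times three basic generators, and each combination further splits according to whether the basic index equals one of the syllable indices or is distinct, and sometimes by the value of $\tau$. I expect the hardest individual subcases to be those that force use of \cref{rel:d3,rel:d4}, specifically when $N$ begins with $\CH{1,b}$ and $G = \CH{1,2}$ with $b>2$, or when $N = \CH{1,b}\CX{1,c}$ interacts with $G = \CH{1,2}$: here one has to first pre-multiply by a sequence of $\CH{\cdot,\cdot}\CX{\cdot,\cdot}$-conjugates to align the indices into the pattern of \cref{rel:d4}, then apply the relation, and finally reabsorb the conjugation; checking that the resulting $\mathbf{G}'$ stays below $\level(s)$ requires tracking the $\lde$ decrease promised by \cref{lem:amy1} through the long word. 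The remaining cases are comparatively mechanical, and I would organise the write-up as a table of completing squares indexed by $(N,G)$, relegating the routine entries to an appendix and giving the $\CH$-on-$\CH$ cases in detail.
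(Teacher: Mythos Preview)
Your plan matches the paper's approach closely: the proof is indeed a twelve-way case split indexed by the syllable shape of $N$ (four families) and the basic generator $G$ (three choices), organised exactly as you describe into a table with the routine cases dispatched by the commutation relations and the substantive work concentrated in the $\CH$-on-$\CH$ interactions.

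Where you underestimate the difficulty is in the hardest subcase, $N=\CH{1,i_2}$ with $G=\CH{1,2}$ and $i_2>2$. Your sketch suggests that one conjugates into the pattern of \cref{rel:d3} or \cref{rel:d4} and then applies it, but in fact \emph{which} completion to use is not determined by the indices alone: it depends on the residues modulo $2$ of the entries $\sqrt{2}^k s[i,j]$ in the active column. The paper needs a battery of auxiliary number-theoretic lemmas (over $\ZZ[\sqrt{2}]$, working modulo $2$) to show that for every possible residue pattern there \emph{exists} a completion whose intermediate levels stay below $\level(s)$, and these lemmas in turn feed into a structural lemma that handles the five-Hadamard word $\CH{a,c}\CH{b,d}\CH{a,b}\CH{b,d}\CH{a,c}$ and its variants. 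The relations \cref{rel:d3,rel:d4} are invoked only indirectly, through a further suite of propositions establishing specific fifteen-letter Hadamard identities. So the ``tracking $\lde$ through the long word'' step is not bookkeeping but the crux: you must first prove that a level-decreasing word exists, case-by-case on the arithmetic of the column, before you can track anything. Expect this single cell of your table to be longer than the rest of the proof combined.
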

\begin{proof}
    To maintain a smooth reading flow, the proof is deferred to \cref{proof:main_lemma}.
\end{proof}

\begin{lemma}[{\cite[Lemma~32]{greylyn2014generators}}]\label{lem:normal_basic_normal}
    Let $s, r$ be two states, $G:s\Se{} r$ be a basic edge, $\mathbf{N}:s\Ne*{}I$ be the unique sequence of normal edges from $s$ to $I$ (the identity matrix) and  $\mathbf{M}:r\Ne*{}I$ be the unique sequence of normal edges from $r$ to $I$. Then $\mathbf{M}G \approx \mathbf{N}$.
\end{lemma}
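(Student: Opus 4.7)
The plan is to prove this by strong induction on $\level(s)$, using the Main Lemma (\cref{lem:basic_main}) as the primary engine and \cref{lem:simple_basic} to bridge the gap between arbitrary simple edges and basic edges.

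For the base case $s = I$ (i.e.\ $\level(s) = (0,0,0)$), the normal sequence $\mathbf{N}$ is empty and $r = G \in \{\CZ{1},\CX{1,x},\CH{1,2}\}$. I would check the three cases directly: running \cref{alg:exact_synthesis} on $G$ produces the single syllable $G$ itself (for $\CZ{1}$ the signed transposition branch fires at $j=1$ with $a=1,\tau=1$; for $\CX{1,x}$ it fires at $j=x$ with $a=1$; for $\CH{1,2}$ the inner loop fires once at $j=2$ with $i_1=1,i_2=2$). In each case $\mathbf{M} = G$ and $\mathbf{M}G \approx \varepsilon$ by whichever of \cref{rel:a1,rel:a2,rel:a3} applies.

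For the inductive step, $s \neq I$, so there is a first normal edge $N : s \Ne{} t$ and we can write $\mathbf{N} = \mathbf{N}_0 N$ with $\mathbf{N}_0 : t \Ne*{} I$ being the unique normal sequence from $t$. Applying \cref{lem:basic_main} to $G$ and $N$ yields a state $q$, a sequence of normal edges $\mathbf{N}' : r \Ne*{} q$, and a sequence of simple edges $\mathbf{G}' : t \Se*{} q$ with $\mathbf{N}' G \approx \mathbf{G}' N$ and $\level(\mathbf{G}' : t \Se*{} q) < \level(s)$. Uniqueness of normal sequences forces $\mathbf{M} = \mathbf{M}_1 \mathbf{N}'$ where $\mathbf{M}_1 : q \Ne*{} I$ is the unique normal sequence from $q$. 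Therefore
\[
  \mathbf{M} G \;=\; \mathbf{M}_1 \mathbf{N}' G \;\approx\; \mathbf{M}_1 \mathbf{G}' N,
\]
so it suffices to prove $\mathbf{M}_1 \mathbf{G}' \approx \mathbf{N}_0$.

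To handle this, I would perform a secondary (finite) induction along the path $\mathbf{G}' = G'_m \cdots G'_1$, letting $u_i$ be the intermediate states and $\mathbf{M}^{(i)} : u_i \Ne*{} I$ the unique normal sequences (so $\mathbf{M}^{(0)} = \mathbf{N}_0$ and $\mathbf{M}^{(m)} = \mathbf{M}_1$). The claim telescopes once we establish $\mathbf{M}^{(i)} G'_i \approx \mathbf{M}^{(i-1)}$ for every $i$. For each such $i$ the edge $G'_i$ is only \emph{simple}, not basic, so I would invoke \cref{lem:simple_basic} to replace $G'_i$ by an equivalent sequence of basic edges whose maximum traversed level still stays $< \level(s)$. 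Each basic step then falls under the outer inductive hypothesis (applied at a state of level strictly below $\level(s)$), and chaining these gives the required equation.

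The main obstacle I expect is the bookkeeping around levels: I must be careful that when \cref{lem:simple_basic} expands a simple edge into basic ones, the intermediate states the expansion passes through all have level strictly smaller than $\level(s)$, so that the outer induction hypothesis is applicable. This is guaranteed because \cref{lem:simple_basic} preserves the level of the edge it replaces, and the Main Lemma has already certified $\level(\mathbf{G}':t \Se*{} q) < \level(s)$; but the argument needs to be spelled out carefully so as not to let a basic sub-edge slip above $\level(s)$. Everything else is either uniqueness of normal sequences (which is built into \cref{alg:exact_synthesis}) or straightforward rewriting using the already-established equational square from \cref{lem:basic_main}.
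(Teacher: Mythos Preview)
Your proposal is correct and follows essentially the same route as the paper: strong induction on $\level(s)$, base case handled by the self-inverse relations, inductive step driven by the Main Lemma followed by a telescoping walk along $\mathbf{G}'$ applying the outer hypothesis at each intermediate state. The one place you are more careful than the paper is in noting that the edges $G'_i$ returned by \cref{lem:basic_main} are merely simple, not basic, and therefore inserting \cref{lem:simple_basic} before invoking the inductive hypothesis; the paper's write-up applies the hypothesis directly to each $G_j$ and leaves this reduction implicit.
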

\begin{proof}
    By induction on the level of $s$~\cite{greylyn2014generators}. \begin{itemize}
        \item $\level(s) = (0,0,0)$. In this case, $s = I$, then we have the following diagram 
        \[ \begin{tikzcd}
                \setwiretype{n} I \ar[rr,"G"]\ar[dr,Rightarrow*,"\mathbf{N} = \varepsilon",swap] &[-2mm]&[-2mm] r \ar[dl, Rightarrow, "\mathbf{M}=G"] \\
                \setwiretype{n} & I &
            \end{tikzcd} 
            \]
        commutes equationally by \cref{rel:a1,rel:a2,rel:a3} for all the cases of $G$.
        \item $\level(s) > (0,0,0)$. In this case, $s \neq I$. Thus, $\mathbf{N}$ must have one term $N$ such that $\mathbf{N} = \mathbf{N}'N$. By \cref{lem:basic_main}, there exist $\mathbf{G}' = G_k\cdots G_1$ and $\mathbf{M}'$ such that $\mathbf{M}'G \approx \mathbf{G}'N$ and \[\level(\mathbf{G}': t \Se*{} q) < \level(s),\] where $t = Ns$.
        Thus, we consider the following diagram
        \[ \begin{tikzcd}
                \setwiretype{n} s \ar[rrrr,"G"]\ar[d,Rightarrow,"N"{name=N},swap] &&&& r \ar[d, Rightarrow*, "\mathbf{M}"{name=M}] \\
                \setwiretype{n} t_0 = t \ar[drr,Rightarrow,"\mathbf{N}'_0 = \mathbf{N}'",swap,bend right,pos=.3]\ar[r,"G_1"{name=g1}]\ar[r,phantom,"\text{IH}"{name=g1},yshift=-7mm] & t_1\ar[r,"G_2"{name=g2}]\ar[r,phantom,"\text{IH}"{name=g1},yshift=-7mm]\ar[dr,Rightarrow,"\mathbf{N}_1'",swap,bend right,pos=.3] & t_2\ar[r,dashed,"\cdots"{name=gdots}]\ar[r,phantom,"\text{IH}"{name=g1},yshift=-7mm]\ar[d,Rightarrow,"\mathbf{N}_2'"{name=d1},swap,] & t_{k-1}\ar[r,"G_k"{name=gk}]\ar[r,phantom,"\text{IH}"{name=g1},yshift=-7mm]\ar[dl,Rightarrow,"\mathbf{N}_{k-1}'"{name=d2},bend left,pos=.3] & t_k = q \ar[dll,Rightarrow,"\mathbf{N}_k'",bend left,pos=.3] \\[5mm]
                \setwiretype{n} && I
                \arrow[phantom,from=d1,to=d2,"\cdots"]
                \arrow[phantom,from=N,to=M,"\text{\cref{lem:basic_main}}"]
            \end{tikzcd} 
            \]
            Since $\level(\mathbf{G}': t \Se*{}q) < \level(s)$, then $t_{j} <\level(s)$ for each $0\leq j \leq k-1$. Thus, by inductive hypothesis, we have $\mathbf{N}_{j+1}'G_j \approx \mathbf{N}_{j}'$ for each $0\leq j \leq k-1$, then $\mathbf{N}_k'\mathbf{G}' = \mathbf{N}_k'G_k\cdots G_1\approx \mathbf{N}_{k-1}'G_{k-1}\cdots G_1 \approx \cdots \approx \mathbf{N}_1'G_1 \approx \mathbf{N}'_0 = \mathbf{N}'$. Together with $\mathbf{M}'G \approx \mathbf{G}'N$ and the uniqueness of normal edges, we have $\mathbf{N} = \mathbf{N}'N \approx \mathbf{N}_k'\mathbf{G}' \approx \mathbf{N}_k'\mathbf{M}'G = \mathbf{M}G$. \qedhere
    \end{itemize}
\end{proof}

\begin{lemma}
    [{\cite[Lemma~33]{greylyn2014generators}}]\label{lem:basic_normal}
    Let $\mathbf{G}:s \Se*{} I$ be any sequence of basic edges with final state $I$, and let $\mathbf{N}:s \Ne*{} I$ be the unique sequence of normal edges from $s$ to $I$. Then $\mathbf{G} \approx \mathbf{N}$.
\end{lemma}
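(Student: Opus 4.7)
The plan is to proceed by induction on the length of the sequence $\mathbf{G}$ of basic edges, leveraging \cref{lem:normal_basic_normal} to handle the inductive step. The intuition is that \cref{lem:normal_basic_normal} already tells us how a single basic edge interacts with the unique normal sequences to $I$; we just need to iterate this.

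For the base case, suppose $\mathbf{G} = \varepsilon$ has length $0$. Then $s = I$, and since the unique normal sequence from $I$ to $I$ is the empty sequence, we also have $\mathbf{N} = \varepsilon$. Thus $\mathbf{G} \approx \mathbf{N}$ trivially.

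For the inductive step, write $\mathbf{G} = \mathbf{G}'' G$ where $G: s \Se{} r$ is the first basic edge in $\mathbf{G}$ and $\mathbf{G}'': r \Se*{} I$ is a strictly shorter sequence of basic edges ending at $I$. Let $\mathbf{M}: r \Ne*{} I$ be the unique sequence of normal edges from $r$ to $I$. By the inductive hypothesis applied to $\mathbf{G}''$, we have $\mathbf{G}'' \approx \mathbf{M}$. Applying \cref{lem:normal_basic_normal} to the basic edge $G: s \Se{} r$ together with the unique normal sequences $\mathbf{N}: s \Ne*{} I$ and $\mathbf{M}: r \Ne*{} I$, we obtain $\mathbf{M} G \approx \mathbf{N}$. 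Combining these,
\[
    \mathbf{G} = \mathbf{G}'' G \approx \mathbf{M} G \approx \mathbf{N},
\]
as required.

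There is no serious obstacle here since all the hard work has been front-loaded into \cref{lem:basic_main} and \cref{lem:normal_basic_normal}; this lemma is a clean bookkeeping argument. The only thing to be careful about is that the intermediate states visited by $\mathbf{G}$ need not lie on the unique normal path from $s$ to $I$, which is precisely why the induction goes through $\mathbf{G}''$ rather than trying to match normal edges one-by-one with basic edges of $\mathbf{G}$.
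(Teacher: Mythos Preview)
Your proof is correct and matches the paper's approach exactly: the paper simply says ``Repeated application of \cref{lem:normal_basic_normal} gives the proof,'' and your induction on the length of $\mathbf{G}$ is precisely that repeated application spelled out.
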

\begin{proof}
    Repeated application of \cref{lem:normal_basic_normal} gives the proof~\cite{greylyn2014generators}.
\end{proof}

\relationcompleteness*
\begin{proof}
    With \cref{lem:basic_normal,lem:simple_basic}, it is straightforward to follow the proofs in~\cite{greylyn2014generators}:
    By \cref{lem:simple_basic}, there exists two sequences $\mathbf{G}', \mathbf{H}'$ of basic edges such that $\mathbf{G}'\approx \mathbf{G}, \mathbf{H}' \approx \mathbf{H}$. By the soundness of relations (\cref{thm:relation_soundness}), we have $\sem{\mathbf{G}'} =  \sem{\mathbf{G}}, \sem{\mathbf{H}'} =  \sem{\mathbf{H}}$, then $\sem{\mathbf{G}'} = \sem{\mathbf{H}'}$.
    Let $s = \sem{\mathbf{G}'}^{-1} = \sem{\mathbf{H}'}^{-1}$, then $\mathbf{G}':s \Se*{} I$ and $\mathbf{H}':s \Se*{} I$.
    Let $\mathbf{N}:s\Ne*{} I$ be the unique sequence of normal edges from $s$ to $I$, then by \cref{lem:basic_normal}, we have $\mathbf{G}' \approx \mathbf{N} \approx \mathbf{H}'$.
    We finally get $\mathbf{G} \approx \mathbf{H}$ as $\approx$ is an equivalence relation. 
\end{proof}

\subsection{Derived Relations and Useful Lemmas}\label{proof:derived_relations}
For proof convenience, we add generators $\CX{c,b}$, $1 \leq b<c \leq n$, with relation
\begin{equation}
    \CX{c,b} \approx \CX{b,c} \tag{e1} \label[relation]{rel:e1}
\end{equation}
and generators $\CH{c,b}$, $1 \leq b<c \leq n$, with relation
\begin{equation}
    \CH{c,b} \approx \CX{b,c}\CH{b,c}\CX{b,c}\text. \tag{e2} \label[relation]{rel:e2}
\end{equation}

\begin{proposition} 
    The counterpart of \cref{rel:c4,rel:c5} with reversed subscripts of $H$ generators
    \begin{equation*}
        \CH{c,b}\CX{a,b} \approx \CX{a,b}\CH{c,a}, \quad \CH{c,a}\CX{b,c} \approx \CX{b,c} \CH{b,a}
    \end{equation*}
    is derivable from \cref{rel:e2}.
\end{proposition}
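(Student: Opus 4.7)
The plan is to reduce each counterpart identity to the corresponding original relation \cref{rel:c4} or \cref{rel:c5} by expanding the reversed-subscript Hadamard generators through the definitional relation \cref{rel:e2}, and then rearranging the intermediate $\CX$ factors using the braidings \cref{rel:c2} and \cref{rel:c3}.

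For $\CH{c,b}\CX{a,b} \approx \CX{a,b}\CH{c,a}$, I would start from the left-hand side and expand $\CH{c,b}$ by \cref{rel:e2} to obtain $\CX{b,c}\CH{b,c}\CX{b,c}\CX{a,b}$. Using \cref{rel:c2} to rewrite the trailing $\CX{b,c}\CX{a,b}$ as $\CX{a,b}\CX{a,c}$ places $\CH{b,c}$ next to $\CX{a,b}$, so that \cref{rel:c4} fires and produces $\CX{b,c}\CX{a,b}\CH{a,c}\CX{a,c}$. A second use of \cref{rel:c2} on the new leading pair $\CX{b,c}\CX{a,b}$ gives $\CX{a,b}\CX{a,c}\CH{a,c}\CX{a,c}$, and one final backward application of \cref{rel:e2} contracts $\CX{a,c}\CH{a,c}\CX{a,c}$ into $\CH{c,a}$, yielding the right-hand side.

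The second identity $\CH{c,a}\CX{b,c} \approx \CX{b,c}\CH{b,a}$ follows the symmetric recipe with \cref{rel:c3} in place of \cref{rel:c2} and \cref{rel:c5} in place of \cref{rel:c4}: expand $\CH{c,a}$ by \cref{rel:e2}, commute $\CX{a,c}\CX{b,c}$ via \cref{rel:c3}, invoke \cref{rel:c5} on the newly adjacent $\CH{a,c}\CX{b,c}$, apply \cref{rel:c3} once more to normalise the surrounding $\CX$'s, and contract via \cref{rel:e2} to obtain $\CX{b,c}\CH{b,a}$.

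I do not expect any serious obstacle here: the derivation is mechanical once the right braiding is chosen at each step. The only genuine choice is matching the index pattern of each $\CX$-commutation (either \cref{rel:c2}, which pulls an $\CX$ past one sharing the lower index, or \cref{rel:c3}, which pulls past one sharing the upper index) to the structure produced by expanding \cref{rel:e2}. Once that matching is made, exactly one of the original relations \cref{rel:c4}, \cref{rel:c5} fires on the middle $\CH$, and a second braiding followed by \cref{rel:e2} contracts the result to the desired form.
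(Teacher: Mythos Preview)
Your proposal is correct and follows essentially the same derivation as the paper: expand via \cref{rel:e2}, commute the trailing $\CX$'s with \cref{rel:c2} (resp.\ \cref{rel:c3}), apply \cref{rel:c4} (resp.\ \cref{rel:c5}), commute the leading $\CX$'s with \cref{rel:c2} (resp.\ \cref{rel:c3}) again, and contract with \cref{rel:e2}. The paper compresses the second identity's intermediate steps into a single line citing both \cref{rel:c3} and \cref{rel:c5}, but the argument is the same as yours.
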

\begin{proof}
    \begin{align*}
        \CH{c,b}\CX{a,b} &\approx \CX{b,c}\CH{b,c}\CX{b,c}\CX{a,b} \tag{by \cref{rel:e2}} \\
        &\approx \CX{b,c}\CH{b,c}\blue{\CX{a,b}\CX{a,c}} \tag{by \cref{rel:c2}} \\
        &\approx \CX{b,c}\blue{\CX{a,b}\CH{a,c}}\CX{a,c} \tag{by \cref{rel:c4}} \\
        &\approx \blue{\CX{a,b}\CX{a,c}}\CH{a,c}\CX{a,c} \tag{by \cref{rel:c2}} \\
        &\approx \CX{a,b}\blue{\CH{c,a}}. \tag{by \cref{rel:e2}}\\ \\
        \CH{c,a}\CX{b,c} &\approx \CX{a,c}\CH{a,c}\CX{a,c}\CX{b,c} \tag{by \cref{rel:e2}} \\
        &\approx \blue{\CX{b,c}\CX{a,b}\CH{a,b}\CX{a,b}} \tag{by \cref{rel:c3,rel:c5}} \\
        &\approx \CX{b,c}\blue{\CH{b,a}}. \tag{by \cref{rel:e2}}
    \end{align*}
\end{proof}
Later proofs may thus apply \cref{rel:c4,rel:c5} directly for any order of the subscripts.

\begin{proposition}
    \begin{equation}
        (\CH{a,b}\CH{c,d}\CH{a,c}\CH{b,d})^2 \approx \varepsilon \tag{f1} \label[relation]{rel:f1}
    \end{equation}
\end{proposition}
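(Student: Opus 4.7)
The plan is to exploit \cref{rel:d3} to derive a commutation identity between $\CH{a,b}\CH{c,d}$ and $\CH{a,c}\CH{b,d}$, and then conclude via the fact that both of these are involutions by \cref{rel:a3,rel:b6}. Write $A = \CH{a,c}$, $B = \CH{b,d}$, $C = \CH{a,b}$, $D = \CH{c,d}$. Since $a,b,c,d$ are all distinct, \cref{rel:b6} gives $AB \approx BA$ and $CD \approx DC$, so together with \cref{rel:a3} we immediately obtain $(AB)^2 \approx \varepsilon$ and $(CD)^2 \approx \varepsilon$. The goal $(CDAB)^2 \approx \varepsilon$ is therefore equivalent to showing $CDAB \approx ABCD$.

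Next I would extract this commutation from \cref{rel:d3}, which in the current notation reads $(DAB)^4 \approx CD$. Left-multiplying both sides by $D$ and applying \cref{rel:a3} (to cancel $D^2$ on the left) together with $CD \approx DC$ (from \cref{rel:b6}) yields $AB(DAB)^3 \approx C$. Right-multiplying both sides by $DAB$ gives $AB(DAB)^4 \approx C \cdot DAB = CDAB$, and substituting \cref{rel:d3} into the left-hand side produces $AB \cdot CD \approx CDAB$, i.e. $CDAB \approx ABCD$.

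With this commutation in hand, the conclusion is a short calculation: $(CDAB)^2 = CDAB \cdot CDAB \approx CDAB \cdot ABCD = CD(AB)(AB)CD \approx CD \cdot \varepsilon \cdot CD = (CD)^2 \approx \varepsilon$, using $(AB)^2 \approx \varepsilon$ and then $(CD)^2 \approx \varepsilon$.

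The only delicate step is the rearrangement in the second paragraph, where care is needed to justify each cancellation strictly through the listed relations (in particular, to use \cref{rel:b6} rather than matrix computations to commute $D$ past $C$, and to consistently apply \cref{rel:a3} for the involutive cancellations). Everything else is routine bookkeeping on \cref{rel:a3,rel:b6} and \cref{rel:d3}, so I do not expect any significant obstacle beyond keeping track of the factors.
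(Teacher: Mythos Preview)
Your proposal is correct and follows essentially the same strategy as the paper's proof: both arguments apply \cref{rel:d3} twice, together with \cref{rel:a3} and \cref{rel:b6}, to reduce $(CDAB)^2$ to $\varepsilon$. The paper derives $CDAB \approx BADC$ (the word equals its formal inverse) while you derive the equivalent commutation $CDAB \approx ABCD$; these differ only by an application of \cref{rel:b6}, and your organisation of the algebra (left-multiply by $D$, then right-multiply by $DAB$) is arguably cleaner than the paper's insertion of a trivial block to re-form the $(DAB)^4$ pattern.
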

\begin{proof}
    Since
    \begin{align*}
        &\CH{a,b}\CH{c,d}\CH{a,c}\CH{b,d} \\
        \approx{}& \blue{(\CH{c,d}\CH{a,c}\CH{b,d})^4}\CH{a,c}\CH{b,d} \tag{by \cref{rel:d3}} \\
        ={}& \blue{(\CH{c,d}\CH{a,c}\CH{b,d})^3\CH{c,d}\CH{a,c}\CH{b,d}}\CH{a,c}\CH{b,d} \\
        \approx{}& (\CH{c,d}\CH{a,c}\CH{b,d})^3\CH{c,d}\blue{\varepsilon} \tag{by \cref{rel:a3,rel:b6}} \\
        \approx{}& \blue{\CH{b,d}\CH{a,c}\CH{c,d}\CH{c,d}\CH{a,c}\CH{b,d}}(\CH{c,d}\CH{a,c}\CH{b,d})^3\CH{c,d} \tag{by \cref{rel:a3}} \\
        ={}& \CH{b,d}\CH{a,c}\CH{c,d}\blue{(\CH{c,d}\CH{a,c}\CH{b,d})^4}\CH{c,d} \\
        \approx{}& \CH{b,d}\CH{a,c}\CH{c,d}\blue{\CH{a,b}\CH{c,d}}\CH{c,d} \tag{by \cref{rel:d3}}\\
        \approx{}& \CH{b,d}\CH{a,c}\CH{c,d}\CH{a,b}\blue{\varepsilon}, \tag{by \cref{rel:a3}}
    \end{align*}
    we have
    \begin{align*}
        & (\CH{a,b}\CH{c,d}\CH{a,c}\CH{b,d})^2 \\
        ={}& \CH{a,b}\CH{c,d}\CH{a,c}\CH{b,d}\CH{a,b}\CH{c,d}\CH{a,c}\CH{b,d} \\
        \approx{}& \blue{\CH{b,d}\CH{a,c}\CH{c,d}\CH{a,b}}\CH{a,b}\CH{c,d}\CH{a,c}\CH{b,d} \\
        \approx{}& \blue{\varepsilon}. \tag{by \cref{rel:a3}}
    \end{align*}
\end{proof}

\begin{proposition}
    \begin{equation}
        (\CH{a,c}\CH{b,d}\CH{a,d}\CH{b,c})^2 \approx \CX{a,b}\CX{c,d} \tag{f2} \label[relation]{rel:f2}
    \end{equation}
\end{proposition}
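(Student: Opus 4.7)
The plan is to reduce (f2) to (d3) via a symmetric conjugation argument, mirroring the strategy used for (f1) but tracking the non-trivial scalar factor $\CX{a,b}\CX{c,d}$.

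First, I introduce the abbreviations $P = \CH{a,c}\CH{b,d}$ and $Q = \CH{a,d}\CH{b,c}$; by \cref{rel:b6} the two factors inside each are commutative. Applying \cref{rel:c5} twice establishes $\CX{c,d}\,\CH{a,c} \approx \CH{a,d}\,\CX{c,d}$ and $\CX{c,d}\,\CH{b,d} \approx \CH{b,c}\,\CX{c,d}$, from which $\CX{c,d}\,P\,\CX{c,d} \approx Q$, equivalently $P\,\CX{c,d} \approx \CX{c,d}\,Q$. Combining this with $\CX{c,d}^{2} \approx \varepsilon$ from \cref{rel:a2} turns the eight-factor product into a four-factor one:
\[
    (PQ)^{2} \;\approx\; \bigl(P\,\CX{c,d}\,P\,\CX{c,d}\bigr)^{2} \;\approx\; \bigl(P\,\CX{c,d}\bigr)^{4}.
\]
So it suffices to prove the ``$\CX$-analogue of (d3)'': $(\CH{a,c}\CH{b,d}\CX{c,d})^{4} \approx \CX{a,b}\CX{c,d}$.

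Second, I substitute $\CX{c,d} \approx \CH{c,d}\CZ{d}\CH{c,d}$ (obtained from \cref{rel:d2} and \cref{rel:a3}). This gives
\[
    (P\,\CX{c,d})^{4} \;=\; \bigl(P\,\CH{c,d}\,\CZ{d}\,\CH{c,d}\bigr)^{4}.
\]
Pushing each $\CZ{d}$ rightward through the H-factors using \cref{rel:b4} (commuting past $\CH{a,c}$), the consequence of \cref{rel:d2} in the form $\CZ{d}\,\CH{b,d} \approx \CH{b,d}\,\CX{b,d}$ (creating a new $\CX{b,d}$), and $\CZ{d}\,\CH{c,d} \approx \CH{c,d}\,\CX{c,d}$, I collect all $\CZ$'s on one side, where they pair and cancel by \cref{rel:a1}. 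The remaining H-backbone rearranges via \cref{rel:b6} into a cyclic shift of $(\CH{c,d}\CH{a,c}\CH{b,d})^{4}$, which by \cref{rel:d3} equals $\CH{a,b}\CH{c,d}$. The $\CX$-residue produced during the push meanwhile commutes through via \cref{rel:c2}--\cref{rel:c5} and collects as $\CX{a,b}\CX{c,d}$ at one end.

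Third, a final application of \cref{rel:d2} (applied on the (a,b)-subspace this time) converts the surviving $\CH{a,b}\,\CZ{\cdot}\,\CH{a,b}$ and $\CH{c,d}\,\CZ{\cdot}\,\CH{c,d}$ subwords back into $\CX{a,b}\CX{c,d}$, yielding the desired identity.

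The main obstacle is the bookkeeping in the second step: each application of \cref{rel:d2} introduces a new $\CX$ on a specific index, which then has to be commuted past subsequent H-factors using \cref{rel:c4} or \cref{rel:c5}, which in turn relabel its index. Verifying that the four $\CZ{d}$'s pair off correctly via \cref{rel:a1} and that the emerging $\CX$'s collect precisely to $\CX{a,b}\CX{c,d}$ (rather than, say, $\CX{a,b}\CX{c,d}\CZ{b}$) is the technical crux. As in the proofs of \cref{lem:d3_fixed,lem:d4_fixed}, this is most transparently presented via a graphical derivation rather than a purely linear rewriting chain.
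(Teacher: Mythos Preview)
Your opening reduction is sound: setting $P=\CH{a,c}\CH{b,d}$, $Q=\CH{a,d}\CH{b,c}$ and observing $\CX{c,d}P\CX{c,d}\approx Q$ to get $(PQ)^2\approx (P\CX{c,d})^4$ is correct, and is in fact the same first move the paper makes (it arrives at $P\CX{c,d}P\CX{c,d}$ after inserting $\CX{c,d}^2$).

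The gap is in your second step. The claim that pushing the $\CZ{d}$'s through leaves behind the backbone $(\CH{c,d}\CH{a,c}\CH{b,d})^4$ plus a separable $\CX$-residue is not correct. Trace what happens to a single $\CZ{d}$ in $(\CH{a,c}\CH{b,d}\CH{c,d}\,\CZ{d}\,\CH{c,d})^4$: pushing right, $\CZ{d}\CH{c,d}\to\CH{c,d}\CX{c,d}$; then $\CX{c,d}$ meets the next block $\CH{a,c}\CH{b,d}\CH{c,d}$, and via \cref{rel:c4,rel:c5} it \emph{relabels} those $H$'s to $\CH{a,d}\CH{b,c}$ before turning back into $\CZ{d}$ at the next $\CH{c,d}$, where it cancels the adjacent $\CZ{d}$. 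The net effect of handling two consecutive $\CZ{d}$'s this way is exactly to reproduce one factor $PQ$; doing it twice just gives $(PQ)^2$ again. The substitution unwinds itself, and no $(\CH{c,d}\CH{a,c}\CH{b,d})^4$ ever appears. Your step~3 is then built on an expression that is never reached.

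The paper's route diverges after the shared first step: rather than attempting to isolate the \cref{rel:d3} backbone, it inserts $\CH{a,b}^2$ and invokes \cref{rel:f1} (already proved from \cref{rel:d3}) to rewrite $PQ$ in the closed form $\CH{a,b}\CH{c,d}\CX{b,d}\CZ{d}\CH{c,d}\CH{a,b}$. Squaring this is then elementary: the outer $\CH{a,b}\CH{c,d}$ pairs cancel in the middle, $\CX{b,d}\CZ{d}\CX{b,d}\CZ{d}\approx\CZ{b}\CZ{d}$ by \cref{rel:c1}, and $\CH{a,b}\CZ{b}\CH{a,b}\cdot\CH{c,d}\CZ{d}\CH{c,d}\approx\CX{a,b}\CX{c,d}$ by \cref{rel:d2}. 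The key insight you are missing is this use of \cref{rel:f1} to put $PQ$ into a self-conjugate form \emph{before} squaring, rather than trying to apply \cref{rel:d3} to the fourth power directly.
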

\begin{proof}
    Since
    \begin{align*}
        &\CH{a,c}\CH{b,d}\CH{a,d}\CH{b,c} \\
        \approx{}& \CH{a,c}\CH{b,d}\CH{a,d}\blue{\CX{c,d}\CX{c,d}}\CH{b,c} \tag{by \cref{rel:a2}} \\
        \approx{}& \CH{a,c}\CH{b,d}\blue{\CX{c,d}\CH{a,c}\CH{b,d}\CX{c,d}} \tag{by \cref{rel:c5}} \\
        \approx{}& \CH{a,c}\CH{b,d}\blue{\CH{a,b}\CX{c,d}\CH{a,b}}\CH{a,c}\CH{b,d}\CX{c,d} \tag{by \cref{rel:a3,rel:b5}} \\
        \approx{}& \CH{a,c}\CH{b,d}\CH{a,b}\blue{\CH{c,d}\CZ{d}\CH{c,d}}\CH{a,b}\CH{a,c}\CH{b,d}\CX{c,d} \tag{by \cref{rel:d2}} \\
        \approx{}& \blue{\CH{a,b}\CH{c,d}\CH{a,c}\CH{b,d}}\CZ{d}\blue{\CH{a,c}\CH{b,d}\CH{c,d}\CH{a,b}}\CX{c,d} \tag{by \cref{rel:a3,,rel:b6,rel:f1}} \\
        \approx{}& \CH{a,b}\CH{c,d}\blue{\CH{b,d}\CZ{d}}\CH{b,d}\CH{c,d}\CH{a,b}\CX{c,d} \tag{by \cref{rel:a3,rel:b4}} \\
        \approx{}& \CH{a,b}\CH{c,d}\blue{\CX{b,d}}\CH{c,d}\CH{a,b}\CX{c,d} \tag{by \cref{rel:d2}} \\
        \approx{}& \CH{a,b}\CH{c,d}\CX{b,d}\blue{\CZ{d}\CH{c,d}\CH{a,b}}, \tag{by \cref{rel:b5,rel:d2}} \\
    \end{align*}
    we have
    \begin{align*}
        & (\CH{a,c}\CH{b,d}\CH{a,d}\CH{b,c})^2 \\
        \approx{}& (\blue{\CH{a,b}\CH{c,d}\CX{b,d}\CZ{d}\CH{c,d}\CH{a,b}})^2 \\
        ={}& \blue{\CH{a,b}\CH{c,d}\CX{b,d}\CZ{d}\CH{c,d}\CH{a,b}\CH{a,b}\CH{c,d}\CX{b,d}\CZ{d}\CH{c,d}\CH{a,b}} \\
        \approx{}& \CH{a,b}\CH{c,d}\CX{b,d}\CZ{d}\blue{\varepsilon}\CX{b,d}\CZ{d}\CH{c,d}\CH{a,b} \tag{by \cref{rel:a3}} \\
        \approx{}& \CH{a,b}\CH{c,d}\blue{\CZ{b}}\CZ{d}\CH{c,d}\CH{a,b} \tag{by \cref{rel:c1}} \\
        \approx{}& \blue{\CH{a,b}\CZ{b}\CH{a,b}}\blue{\CH{c,d}\CZ{d}\CH{c,d}} \tag{by \cref{rel:b4,rel:b6}} \\
        \approx{}& \blue{\CX{a,b}}\blue{\CX{c,d}}. \tag{by \cref{rel:d2}}
    \end{align*}
\end{proof}

\begin{lemma}
    \label{lem:useful0}
    For any $u, v \in \HPiRing{}$ such that $\lde(u) = \lde(v) = k$ and $\sqrt{2}^ku \not\equiv \sqrt{2}^kv \pmod{2}$, let $(u',v')^T = H(u,v)^{T}$, then $\lde(u') = \lde(v') = k$ and $\sqrt{2}^ku' \not\equiv \sqrt{2}^kv' \pmod{2}$
\end{lemma}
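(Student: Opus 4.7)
Proof plan.

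The plan is to set $a = \sqrt{2}^k u$ and $b = \sqrt{2}^k v$, which live in $\ZZ[\sqrt{2}]$ by hypothesis, and translate everything into the quotient $\ZZ[\sqrt{2}]/2$. This quotient has exactly four elements $\{\,0,\,1,\,\sqrt 2,\,1+\sqrt 2\,\}$; the elements $\{0,\sqrt 2\}$ are precisely those divisible by $\sqrt 2$, while $\{1,1+\sqrt 2\}$ are the residues appearing in the hypothesis. Since $\lde(u)=k$ forces $a \notin \sqrt{2}\cdot\ZZ[\sqrt 2]$ (otherwise $\sqrt{2}^{k-1}u \in \ZZ[\sqrt 2]$), we get $a \equiv 1$ or $1+\sqrt 2\pmod 2$, and similarly for $b$. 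Combining with $a \not\equiv b \pmod 2$, the pair $\{a,b\}$ covers both residues $\{1,1+\sqrt 2\}$, and therefore $a+b \equiv \sqrt 2 \pmod 2$. Since $-1 \equiv 1 \pmod 2$ in $\ZZ[\sqrt 2]$, also $a-b \equiv \sqrt 2 \pmod 2$.

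Next I compute $u' = (u+v)/\sqrt 2$ and $v' = (u-v)/\sqrt 2$, so that $\sqrt{2}^{k+1} u' = a+b$ and $\sqrt{2}^{k+1} v' = a-b$. Since $\sqrt 2 \mid (a+b)$ in $\ZZ[\sqrt 2]$ (because $a+b \equiv \sqrt 2 \pmod 2$, hence $a+b \equiv 0 \pmod{\sqrt 2}$), we obtain $\sqrt{2}^{k} u' = (a+b)/\sqrt 2 \in \ZZ[\sqrt 2]$, so $\lde(u') \le k$. For the matching lower bound, $\lde(u') < k$ would require $2 \mid (a+b)$, contradicting $a+b \equiv \sqrt 2 \not\equiv 0 \pmod 2$. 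Hence $\lde(u') = k$, and the identical argument with $a-b$ in place of $a+b$ yields $\lde(v') = k$.

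For the final inequivalence, I compute the difference directly:
\[
\sqrt{2}^{k} u' - \sqrt{2}^{k} v' \;=\; \frac{(a+b)-(a-b)}{\sqrt 2} \;=\; \sqrt 2\, b.
\]
Writing $b = c + \sqrt{2}\, d$ with $c,d \in \ZZ$, the condition $b \not\equiv 0 \pmod{\sqrt 2}$ is equivalent to $c$ being odd, whence $\sqrt 2\, b = 2d + c\sqrt 2 \equiv c\sqrt 2 \equiv \sqrt 2 \pmod 2$. Thus $\sqrt{2}^{k} u' - \sqrt{2}^{k} v' \not\equiv 0 \pmod 2$, which is exactly what we need.

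I expect no real obstacle here: the whole argument is a short calculation in $\ZZ[\sqrt 2]/2$ once one notices that the hypothesis forces $a$ and $b$ to occupy the two distinct odd residues $\{1,1+\sqrt 2\}$. The only mild subtlety is keeping careful track of the distinction between divisibility by $\sqrt 2$ (governing $\lde$ dropping by one) and divisibility by $2$ (governing $\lde$ dropping by two), which is handled uniformly by reading residues in the four-element ring $\ZZ[\sqrt 2]/2$.
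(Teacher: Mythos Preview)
Your proposal is correct and follows essentially the same approach as the paper: both arguments reduce to a residue computation in $\ZZ[\sqrt 2]/2$, observing that the hypothesis forces $\sqrt{2}^k u$ and $\sqrt{2}^k v$ to occupy the two ``odd'' residues $1$ and $1+\sqrt 2$. The paper carries this out with explicit integer coordinates $a_1,b_1,a_2,b_2$, while you work more abstractly in the quotient ring, but the computation is the same.
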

\begin{proof}
    Without loss of generality, we assume $\sqrt{2}^ku \equiv 1 \pmod{2}$ and $\sqrt{2}^ku \equiv 1+\sqrt{2} \pmod{2}$.
    Then, there exist $a_1,b_1,a_2,b_2\in \ZZ$ such that
    \[\sqrt{2}^ku = 2a_1+1+2b_1\sqrt{2}, \enspace \sqrt{2}^kv = 2a_2+1+(2b_2+1)\sqrt{2}.\]
    By direct computation, we have
    \[\sqrt{2}^ku' = 2(b_1+b_2)+1 + (a_1+a_2+1)\sqrt{2}, \enspace \sqrt{2}^kv' = 2(b_1-b_2)-1+(a_1-a_2)\sqrt{2}.\]
    Thus, $\lde(u') = \lde(v') = k$ and
    \[\sqrt{2}^ku' \equiv 1+(a_1+a_2+1)\sqrt{2} \equiv 1+(a_1-a_2+1)\sqrt{2}\not\equiv 1+(a_1-a_2)\sqrt{2} \equiv \sqrt{2}^kv' \pmod{2}. \qedhere\]
\end{proof}

\begin{lemma}\label{lem:useful1}
    For any $v\in \HPiRing{}^4$ and $k\geq 2$ such that $\sqrt{2}^kv \in \mathbb{Z}[\sqrt{2}]^4$, $\sqrt{2}^kv^{T} \equiv (1,1,1,1) \pmod{2}$ or $ \equiv (1+\sqrt{2}, 1+\sqrt{2}, 1+\sqrt{2}, 1+\sqrt{2}) \pmod{2}$, either
    \begin{itemize}
        \item $\sqrt{2}^{k-1}\CH{1,3}\CH{2,4}\CH{1,2}\CH{3,4}v\in \ZZ[\sqrt{2}]^4$; or
        \item $\sqrt{2}^{k-1}\CH{1,4}\CH{2,3}\CH{1,2}\CH{3,4}v\in \ZZ[\sqrt{2}]^4$.
    \end{itemize}
\end{lemma}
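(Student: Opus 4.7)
The plan is to unpack both candidate products of controlled Hadamards as explicit $4\times 4$ matrices and then perform a residue-class calculation on the entries of $u\defeq\sqrt{2}^k v$.

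First, I would compute directly that
\[
\CH{1,3}\CH{2,4}\CH{1,2}\CH{3,4} = \tfrac{1}{2}B_1,\qquad \CH{1,4}\CH{2,3}\CH{1,2}\CH{3,4} = \tfrac{1}{2}B_2,
\]
where $B_1,B_2\in\{\pm 1\}^{4\times 4}$ with row sums $(4,0,0,0)$ and $(2,2,-2,2)$ respectively. The $\pm 1$-patterns of the rows are obtained by expanding $v\mapsto\CH{3,4}v$, then $\CH{1,2}$, and so on; I expect this to be about eight lines of routine arithmetic. Setting $u=\sqrt{2}^k v$, what must be shown is then
\[
\tfrac{\sqrt{2}}{4}B_i u \in \ZZ[\sqrt{2}]^4 \text{ for some } i\in\{1,2\},
\]
equivalently $B_i u \in 2\sqrt{2}\,\ZZ[\sqrt{2}]^4$, i.e.\ each coordinate of $B_i u$ should have integer part divisible by $4$ and $\sqrt{2}$-part divisible by $2$.

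Next I would write $u_l=a_l+b_l\sqrt{2}$ and split on the hypothesis. In both cases $a_l=1+2c_l$; in case~1 we additionally have $b_l=2d_l$, and in case~2 we have $b_l=1+2d_l$, for some $c_l,d_l\in\ZZ$. For each row $j$ of $B_i$, I would expand
\[
(B_i u)_j = \Bigl(\sum_{l}(B_i)_{jl}\Bigr) + 2\sum_{l}(B_i)_{jl}c_l \;+\; \Bigl(\sum_l(B_i)_{jl}b_l\Bigr)\sqrt{2},
\]
and use the row-sum values computed above. The $\sqrt{2}$-part is immediately seen to lie in $2\ZZ$ in both cases (trivially when $b_l$ is even; using $1\pm 1\pm 1\pm 1\in 2\ZZ$ for any sign pattern when $b_l$ is odd). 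The integer part reduces, in every row and in both cases, to the single parity condition
\[
c_1+c_2+c_3+c_4 \equiv 0 \pmod 2 \quad\text{for } B_1,\qquad c_1+c_2+c_3+c_4 \equiv 1 \pmod 2 \quad\text{for } B_2,
\]
since the row sums of $B_1$ are in $4\ZZ$ and those of $B_2$ are in $2+4\ZZ$. Exactly one of these parity conditions holds, which proves the lemma.

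The main obstacle is bookkeeping rather than any conceptual difficulty: I need to keep track of the two separate congruences (mod $4$ for the integer part, mod $2$ for the $\sqrt{2}$-part) across all eight (row, case) pairs and verify that, thanks to the uniform row-sum structure of $B_1$ and $B_2$ in case~1 (and the fact that $b_l$ is odd in case~2 combines with the $\pm 1$-row sums to give an even contribution), all four coordinates impose the same parity condition on $\sum_l c_l$. Laying out the computation row by row, as above, should make this transparent.
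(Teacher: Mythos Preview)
Your proposal is correct and follows essentially the same approach as the paper: a direct computation of the matrix action followed by a residue/parity analysis, with the dichotomy reducing to the parity of $c_1+c_2+c_3+c_4$ (equivalently, in the paper's notation, whether $a_1+a_2$ and $a_3+a_4$ have the same parity). The only organisational difference is that the paper applies $\CH{1,2}\CH{3,4}$ first and then inspects residues of the intermediate vector to decide which second pair to use, whereas you compute the full products $\tfrac{1}{2}B_1$ and $\tfrac{1}{2}B_2$ up front and read off the dichotomy from the row-sum structure ($4\ZZ$ versus $2+4\ZZ$); your packaging is arguably a little cleaner, but the content is the same.
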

\begin{proof}~
    \begin{itemize}
        \item If $\sqrt{2}^kv^{T} \equiv (1,1,1,1) \pmod{2}$, then there exist $a_1,b_1,a_2,b_2,a_3,b_3,a_4,b_4 \in \ZZ$ such that 
        \[\sqrt{2}^kv^T = (2a_1+1+2b_1\sqrt{2}, 2a_2+1+2b_2\sqrt{2},2a_3+1+2b_3\sqrt{2},2a_4+1+2b_4\sqrt{2}).\]
        Let $u = \CH{1,2}\CH{3,4}v$, we have 
        \begin{align*}
            \sqrt{2}^{k-1}u_1 &= (a_1+a_2+1)+(b_1+b_2)\sqrt{2}, \\ 
            \sqrt{2}^{k-1}u_2 &= (a_1-a_2 \phantom{{}+{}1})+(b_1-b_2)\sqrt{2}, \\ 
            \sqrt{2}^{k-1}u_3 &= (a_3+a_4+1)+(b_3+b_4)\sqrt{2}, \\ 
            \sqrt{2}^{k-1}u_4 &= (a_3-a_4 \phantom{{}+{}1})+(b_3-b_4)\sqrt{2}.
        \end{align*}
        Then, depending on the parity of $a_1+a_2$ and $a_3+a_4$, we have that either
        \begin{itemize}
            \item $a_1+a_2$ and $a_3+a_4$ have the same parity: $\sqrt{2}^{k-1}u_1 \equiv \sqrt{2}^{k-1}u_3 \pmod{\sqrt{2}}$, $\sqrt{2}^{k-1}u_2 \equiv \sqrt{2}^{k-1}u_4 \pmod{\sqrt{2}}$, which results in \[\sqrt{2}^{k-1}\CH{1,3}\CH{2,4}\CH{1,2}\CH{3,4}v = \sqrt{2}^{k-1}\CH{1,3}\CH{2,4}u\in \ZZ[\sqrt{2}]^4; \text{ or}\]
            \item $a_1+a_2$ and $a_3+a_4$ have different parities: $\sqrt{2}^{k-1}u_1 \equiv \sqrt{2}^{k-1}u_4 \pmod{\sqrt{2}}$, $\sqrt{2}^{k-1}u_2 \equiv \sqrt{2}^{k-1}u_3 \pmod{\sqrt{2}}$, which results in  \[\sqrt{2}^{k-1}\CH{1,4}\CH{2,3}\CH{1,2}\CH{3,4}v = \sqrt{2}^{k-1}\CH{1,4}\CH{2,3}u\in \ZZ[\sqrt{2}]^4.\]
        \end{itemize}
        \item If $\sqrt{2}^kv \equiv (1+\sqrt{2}, 1+\sqrt{2}, 1+\sqrt{2}, 1+\sqrt{2}) \pmod{2}$, we can conclude similarly to the one above. \qedhere
    \end{itemize}
\end{proof}

\begin{lemma}\label{lem:useful2}
    For any $v\in \HPiRing{}^4$ and $k\geq 2$ such that $\sqrt{2}^kv \in \mathbb{Z}[\sqrt{2}]^4$, $\sqrt{2}^kv^T \equiv (1+\sqrt{2},1,1+\sqrt{2},1) \pmod{2}$ or $ \equiv (1, 1+\sqrt{2}, 1, 1+\sqrt{2}) \pmod{2}$, one of the following four cases must hold:
    \begin{enumerate}
        \item $\sqrt{2}^{k-1}\CH{1,4}\CH{2,3}\CH{1,2}\CH{2,3}\CH{1,4}v \in \mathbb{Z}[\sqrt{2}]^4$;
        \item $\sqrt{2}^{k-1}\CH{1,4}\CH{2,3}\CH{1,3}\CH{2,3}\CH{1,4}v \in \mathbb{Z}[\sqrt{2}]^4$;
        \item $\sqrt{2}^{k-1}\CH{1,4}\CH{2,3}\CH{2,4}\CH{2,3}\CH{1,4}v \in \mathbb{Z}[\sqrt{2}]^4$;
        \item $\sqrt{2}^{k-1}\CH{1,4}\CH{2,3}\CH{3,4}\CH{2,3}\CH{1,4}v \in \mathbb{Z}[\sqrt{2}]^4$.
    \end{enumerate}
\end{lemma}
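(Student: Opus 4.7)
The plan is to view each of the four candidate operators as $V\CH{i,j}V$ with $V = \CH{1,4}\CH{2,3} = \CH{2,3}\CH{1,4}$ (these commute since their indices are disjoint, and $V^2 = I$, so $V = V^{-1}$). Thus all four are conjugates of a single Hadamard by $V$, and the problem reduces to analysing how residues transform under the ``basis change'' $V$.

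First I would compute the residues of $\sqrt{2}^k V v$ modulo $2$. Setting $w = \sqrt{2}^k v$, expanding each $w_i = 2a_i + \alpha_i + (2b_i + \beta_i)\sqrt{2}$ with $(\alpha_i,\beta_i) \in \{(1,0),(1,1)\}$ dictated by the hypothesis, and applying the Hadamard mixing formulae (as in the proof of \cref{lem:useful0}) yields that $\sqrt{2}^k V v$ still lies in $\ZZ[\sqrt{2}]^4$ with each residue of the form $1 + \gamma_i\sqrt{2}$ for some $\gamma_i \in \{0,1\}$. Crucially, the pairs of residues at positions $\{1,4\}$ and at positions $\{2,3\}$ are each complementary: within each pair, one residue is $1$ and the other is $1+\sqrt{2}$. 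The four possible residue patterns for $\sqrt{2}^k V v$ are therefore parameterised by two mixing parities.

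Second, in each of the four patterns, two ``diagonal'' pairs of coordinates share a residue: the matching pairs are $\{1,2\},\{3,4\}$ in two of the patterns, and $\{1,3\},\{2,4\}$ in the other two. Choosing $(i,j)$ to be such a matching pair, one has $w'_i + w'_j \equiv 0 \pmod 2$ (writing $w' = \sqrt{2}^k Vv$), so $\CH{i,j}$ reduces the lde of coordinates $i,j$ from $k$ to $k-1$, while the other two coordinates retain lde $k$. This places the four residue patterns in bijection with the four lemma cases $(i,j) \in \{(1,2),(1,3),(2,4),(3,4)\}$, and similarly handles the second hypothesised input pattern by the symmetry $1\leftrightarrow 2$, $3\leftrightarrow 4$.

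The decisive step is to show that applying $V$ once more on the left then brings the remaining two coordinates down to lde $k-1$ without re-inflating the lde of the already-reduced ones. The two ``untouched'' coordinates have matching residues inherited from $Vv$ and are paired (via $\{1,4\}$ or $\{2,3\}$) by one of the Hadamards in $V$, which absorbs one further power of $\sqrt{2}$; the two already-reduced coordinates are then mixed with entries whose residues cooperate (thanks to the complementarity established in the first step) so as to preserve lde $k-1$. The main obstacle is precisely this last piece of residue bookkeeping, which is analogous to but more intricate than \cref{lem:useful1}: one must separately verify, in each of the four configurations, that both Hadamards of the final $V$ act so as to drop exactly one power of $\sqrt{2}$ on all four coordinates simultaneously. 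By the symmetries swapping the two input residue patterns and swapping the two classes of matching pairs, a single representative calculation carries through to the other cases.
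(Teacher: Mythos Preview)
Your decomposition of each operator as $V\CH{i,j}V$ with $V=\CH{1,4}\CH{2,3}$ is correct, and your residue analysis of $Vv$ is sound: the pairs $\{1,4\}$ and $\{2,3\}$ are indeed each complementary, parameterised by the parities of $a_1+a_4$ and $a_2+a_3$. However, the heart of your plan --- choosing the middle $\CH{i,j}$ to act on a \emph{matching} pair in $Vv$ so that two coordinates drop to $\lde\le k-1$, then letting the final $V$ handle the remaining two --- is wrong in two ways.

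First, for every one of the four choices $(i,j)\in\{(1,2),(1,3),(2,4),(3,4)\}$ the complementary ``untouched'' pair is never $\{1,4\}$ or $\{2,3\}$, so your claim that the final $V$ pairs the untouched coordinates is simply false. If you try it concretely (say $(i,j)=(1,2)$ when $\{1,2\}$ is a matching pair), the final $\CH{2,3}$ mixes an $\lde\le k-1$ coordinate with an $\lde=k$ coordinate and \emph{raises} the $\lde$ to $k+1$. Second, and more fundamentally, the correct middle Hadamard acts on a \emph{non}-matching pair in $Vv$: it does not reduce any $\lde$ at that stage, but rearranges residues so that after it both $\{1,4\}$ and $\{2,3\}$ become matching pairs, and only then does the final $V$ drop all four coordinates at once. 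The selection among the four $(i,j)$ depends on the parity of $a_1-a_4+b_1+b_2+b_3+b_4$, which involves the $\sqrt{2}$-parts $b_i$ and is \emph{not} visible in the residue pattern of $Vv$ (which records only $a_1+a_4$ and $a_2+a_3$ mod~$2$). So your proposed bijection between residue patterns and lemma cases cannot exist. The paper avoids this trap by writing out the four composite matrices explicitly --- each turns out to be $\tfrac{1}{\sqrt{2}^3}$ times a $\ZZ[\sqrt{2}]$-matrix, two factors of $\sqrt{2}$ having already cancelled --- and then case-splitting on the two parities $a_1+a_2+a_3+a_4$ and $a_1-a_4+b_1+b_2+b_3+b_4$.
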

\begin{proof}~
    \begin{itemize}
        \item If $\sqrt{2}^kv^T \equiv (1+\sqrt{2},1,1+\sqrt{2},1) \pmod{2}$, there exist $a_1,b_1,a_2,b_2,a_3,b_3,a_4,b_4 \in \ZZ$ such that
        \[\sqrt{2}^kv^T = (2a_1+1+(2b_1+1)\sqrt{2}, 2a_2+1+2b_2\sqrt{2},2a_3+1+(2b_3+1)\sqrt{2},2a_4+1+2b_4\sqrt{2}).\]
        By direct computation, we have
        \begin{align*}
            \CH{1,4}\CH{2,3}\CH{1,2}\CH{2,3}\CH{1,4} &= \frac{1}{\sqrt{2}^3} \begin{psmallmatrix}
                1 + \sqrt{2} & 1 & 1 & 1 - \sqrt{2} \\
                1 & -1 + \sqrt{2} & -1 - \sqrt{2} & 1 \\
                1 & -1 - \sqrt{2} & -1 + \sqrt{2} & 1 \\
                1 - \sqrt{2} & 1 & 1 & 1 + \sqrt{2}
            \end{psmallmatrix}_{[1,2,3,4]} \\
            \CH{1,4}\CH{2,3}\CH{1,3}\CH{2,3}\CH{1,4} &= \frac{1}{\sqrt{2}^3} \begin{psmallmatrix}
                1 + \sqrt{2} & 1 & -1 & 1 - \sqrt{2} \\
                1 & -1 + \sqrt{2} & 1 + \sqrt{2} & 1 \\
                -1 & 1 + \sqrt{2} & -1 + \sqrt{2} & -1 \\
                1 - \sqrt{2} & 1 & -1 & 1 + \sqrt{2}
            \end{psmallmatrix}_{[1,2,3,4]} \\
            \CH{1,4}\CH{2,3}\CH{2,4}\CH{2,3}\CH{1,4} &= \frac{1}{\sqrt{2}^3} \begin{psmallmatrix}
                -1 + \sqrt{2} & 1 & 1 & 1 + \sqrt{2} \\
                1 & 1 + \sqrt{2} & 1 - \sqrt{2} & -1 \\
                1 & 1 - \sqrt{2} & 1 + \sqrt{2} & -1 \\
                1 + \sqrt{2} & -1 & -1 & -1 + \sqrt{2}
            \end{psmallmatrix}_{[1,2,3,4]} \\
            \CH{1,4}\CH{2,3}\CH{3,4}\CH{2,3}\CH{1,4} &= \frac{1}{\sqrt{2}^3} \begin{psmallmatrix}
                -1 + \sqrt{2} & 1 & -1 & 1 + \sqrt{2} \\
                1 & 1 + \sqrt{2} & -1 + \sqrt{2} & -1 \\
                -1 & -1 + \sqrt{2} & 1 + \sqrt{2} & 1 \\
                1 + \sqrt{2} & -1 & 1 & -1 + \sqrt{2}
            \end{psmallmatrix}_{[1,2,3,4]}
        \end{align*}
        With $u_1$, $u_2$, $u_3$, and $u_4$ denoting the first entries of the following vectors, respectively:
        \begin{gather*}
            \sqrt{2}^{k-1}\CH{1,4}\CH{2,3}\CH{1,2}\CH{2,3}\CH{1,4}v, \sqrt{2}^{k-1}\CH{1,4}\CH{2,3}\CH{1,3}\CH{2,3}\CH{1,4}v, \\
            \sqrt{2}^{k-1}\CH{1,4}\CH{2,3}\CH{2,4}\CH{2,3}\CH{1,4}v, \text{ and } \sqrt{2}^{k-1}\CH{1,4}\CH{2,3}\CH{3,4}\CH{2,3}\CH{1,4}v,
        \end{gather*}
        we have
        \begin{align*}
            u_1 &= \frac{1}{2}(a_1 + a_2 + a_3 + a_4 + 2b_1 - 2b_4+3) + \frac{1}{\sqrt{2}}(a_1 - a_4 + b_1 + b_2 + b_3 + b_4 + 1), \\
            u_2 &= \frac{1}{2}(a_1 + a_2 - a_3 + a_4 + 2b_1 - 2b_4+2) + \frac{1}{\sqrt{2}}(a_1 - a_4 + b_1 + b_2 - b_3 + b_4), \\
            u_3 &= \frac{1}{2}(-a_1 + a_2 + a_3 + a_4 + 2b_1 + 2b_4+2) + \frac{1}{\sqrt{2}}(a_1 + a_4 - b_1 + b_2 + b_3 + b_4 + 1), \\
            u_4 &= \frac{1}{2}(-a_1 + a_2 - a_3 + a_4 + 2b_1 + 2b_4+1) + \frac{1}{\sqrt{2}}(a_1 + a_4 - b_1 + b_2 - b_3 + b_4).
        \end{align*}
        Then, depending on the parity of $a_1 + a_2 + a_3 + a_4$ and $a_1 - a_4 + b_1 + b_2 + b_3 + b_4$, we have one of the following four cases:
        \begin{enumerate}
            \item $a_1 + a_2 + a_3 + a_4 \equiv 0 \pmod{2}$ and $a_1 - a_4 + b_1 + b_2 + b_3 + b_4 \equiv 0 \pmod{2}$.
            In this case, since \begin{align*}
                a_1 + a_2 - a_3 + a_4 - (a_1 + a_2 + a_3 + a_4) &= -2a_3,\\
                a_1 - a_4 + b_1 + b_2 - b_3 + b_4 - (a_1 - a_4 + b_1 + b_2 + b_3 + b_4) &= -2b_3.
            \end{align*}
            we have $a_1 + a_2 - a_3 + a_4 \equiv a_1 - a_4 + b_1 + b_2 - b_3 + b_4 \equiv 0 \pmod{2}$. Write $a_1 + a_2 - a_3 + a_4 = 2c, a_1 - a_4 + b_1 + b_2 - b_3 + b_4 = 2d$ for some $c, d \in \ZZ$, with direct computation, we have 
            \begin{align*}
                & \sqrt{2}^{k-1}\CH{1,4}\CH{2,3}\CH{1,3}\CH{2,3}\CH{1,4}v \\
                ={}& \begin{pmatrix}
                    \frac{1}{2}(a_1 + a_2 - a_3 + a_4 + 2b_1 - 2b_4+2) + \frac{1}{\sqrt{2}}(a_1 - a_4 + b_1 + b_2 - b_3 + b_4) \\
                    \frac{1}{2}(a_1 - a_2 + a_3 + a_4 + 2b_2 + 2b_3+2) + \frac{1}{\sqrt{2}}(a_2 + a_3 + b_1 - b_2 + b_3 + b_4 + 2) \\
                    \frac{1}{2}(-a_1 + a_2 - a_3 - a_4 + 2b_2 + 2b_3) + \frac{1}{\sqrt{2}}(a_2 + a_3 - b_1 + b_2 - b_3 - b_4) \\
                    \frac{1}{2}(a_1 + a_2 - a_3 + a_4 - 2b_1 + 2b_4) + \frac{1}{\sqrt{2}}(-a_1 + a_4 + b_1 + b_2 - b_3 + b_4)
                \end{pmatrix} \\
                ={}& \begin{pmatrix}
                    \frac{1}{2}(2c + 2b_1 - 2b_4+2) + \frac{1}{\sqrt{2}}(2d) \\
                    \frac{1}{2}(2c - 2a_2+2a_3 + 2b_2 + 2b_3+2) + \frac{1}{\sqrt{2}}(2d-2c+2a_2+2a_4-2b_2+2b_3 + 2) \\
                    \frac{1}{2}(2c-2a_1-2a_4 + 2b_2 + 2b_3) + \frac{1}{\sqrt{2}}(2d-2c+2a_2+2a_4-2b_1-2b_4) \\
                    \frac{1}{2}(2c- 2b_1 + 2b_4) + \frac{1}{\sqrt{2}}(2d-2a_1+2a_4)
                \end{pmatrix} \\
                ={}& \begin{pmatrix}
                    (c + b_1 - b_4+1) + \sqrt{2}(d) \\
                    (c - a_2+a_3 + b_2 + b_3+1) + {\sqrt{2}}(d-c+a_2+a_4-b_2+b_3 + 1) \\
                    (c-a_1-a_4 + b_2 + b_3) + {\sqrt{2}}(d-c+a_2+a_4-b_1-b_4) \\
                    (c- b_1 + b_4) + {\sqrt{2}}(d-a_1+a_4)
                \end{pmatrix} \in \mathbb{Z}[\sqrt{2}]^4.
            \end{align*}
            \item $a_1 + a_2 + a_3 + a_4 \equiv 0 \pmod{2}$ and $a_1 - a_4 + b_1 + b_2 + b_3 + b_4 \equiv 1 \pmod{2}$. In this case, since \begin{align*}
                -a_1 + a_2 - a_3 - a_4 - (a_1 + a_2 + a_3 + a_4) &= -2a_1-2a_3-2a_4,\\
                a_1 + a_4 - b_1 + b_2 + b_3 + b_4 - (a_1 - a_4 + b_1 + b_2 + b_3 + b_4) &= 2a_4-2b_3.
            \end{align*}
            we have $-a_1 + a_2 - a_3 - a_4 \equiv a_1 + a_4 - b_1 + b_2 + b_3 + b_4+1 \equiv 0 \pmod{2}$. Write $-a_1 + a_2 - a_3 - a_4 = 2c, a_1 + a_4 - b_1 + b_2 + b_3 + b_4+1 = 2d$ for some $c, d \in \ZZ$, with direct computation, we have 
            \begin{align*}
                & \sqrt{2}^{k-1}\CH{1,4}\CH{2,3}\CH{2,4}\CH{2,3}\CH{1,4}v \\
                ={}& \begin{pmatrix}
                    \frac{1}{2}(-a_1 + a_2 + a_3 + a_4 + 2b_1 + 2b_4+2) + \frac{1}{\sqrt{2}}(a_1 + a_4 - b_1 + b_2 + b_3 + b_4 + 1) \\
                    \frac{1}{2}(a_1 + a_2 + a_3 - a_4 + 2b_2 - 2b_3) + \frac{1}{\sqrt{2}}(a_2 - a_3 + b_1 + b_2 + b_3 - b_4 + 1) \\
                    \frac{1}{2}(a_1 + a_2 + a_3 - a_4 - 2b_2 + 2b_3+2) + \frac{1}{\sqrt{2}}(-a_2 + a_3 + b_1 + b_2 + b_3 - b_4 + 1) \\
                    \frac{1}{2}(a_1 - a_2 - a_3 - a_4 + 2b_1 + 2b_4) + \frac{1}{\sqrt{2}}(a_1 + a_4 + b_1 - b_2 - b_3 - b_4 + 1)
                \end{pmatrix} \\
                ={}& \begin{pmatrix}
                    \frac{1}{2}(2c + 2b_1 + 2b_4+2) + \frac{1}{\sqrt{2}}(2d) \\
                    \frac{1}{2}(2c+2a_1-2a_4 + 2b_2 - 2b_3) + \frac{1}{\sqrt{2}}(2d+2c+2b_1-2b_4) \\
                    \frac{1}{2}(2c+2a_1-2a_4 - 2b_2 + 2b_3+2) + \frac{1}{\sqrt{2}}(2d+2c+2a_2-2a_3+2b_1-2b_4) \\
                    \frac{1}{2}(2c+2a_1-2a_2-2a_3-2a_4 + 2b_1 + 2b_4) + \frac{1}{\sqrt{2}}(2d+2b_1 - 2b_2 - 2b_3 - 2b_4)
                \end{pmatrix} \\
                ={}& \begin{pmatrix}
                    (c + b_1 + b_4+1) + {\sqrt{2}}(d) \\
                    (c+a_1-a_4 + b_2 - b_3) + {\sqrt{2}}(d+c+b_1-b_4) \\
                    (c+a_1-a_4 - b_2 + b_3+1) + {\sqrt{2}}(d+c+a_2-a_3+b_1-b_4) \\
                    (c+a_1-a_2-a_3-a_4 + b_1 + b_4) + {\sqrt{2}}(d+b_1 - b_2 - b_3 - b_4)
                \end{pmatrix}
                \in \mathbb{Z}[\sqrt{2}]^4.
            \end{align*}
            \item $a_1 + a_2 + a_3 + a_4 \equiv 1 \pmod{2}$ and $a_1 - a_4 + b_1 + b_2 + b_3 + b_4 \equiv 0 \pmod{2}$. In this case, since \begin{align*}
                -a_1 + a_2 - a_3 + a_4 - (a_1 + a_2 + a_3 + a_4) &= -2a_1-2a_3,\\
                a_1 + a_4 - b_1 + b_2 - b_3 + b_4 - (a_1 - a_4 + b_1 + b_2 + b_3 + b_4) &= 2a_4-2b_1-2b_3.
            \end{align*}
            we have $-a_1 + a_2 - a_3 + a_4+1 \equiv a_1 + a_4 - b_1 + b_2 - b_3 + b_4 \equiv 0 \pmod{2}$. Write $-a_1 + a_2 - a_3 + a_4+1 = 2c, a_1 + a_4 - b_1 + b_2 - b_3 + b_4 = 2d$ for some $c, d \in \ZZ$, with direct computation, we have 
            \begin{align*}
                & \sqrt{2}^{k-1}\CH{1,4}\CH{2,3}\CH{3,4}\CH{2,3}\CH{1,4}v \\
                ={}& \begin{pmatrix}
                    \frac{1}{2}(-a_1 + a_2 - a_3 + a_4 + 2b_1 + 2b_4+1) + \frac{1}{\sqrt{2}}(a_1 + a_4 - b_1 + b_2 - b_3 + b_4) \\
                    \frac{1}{2}(a_1 + a_2 - a_3 - a_4 + 2b_2 + 2b_3+1) + \frac{1}{\sqrt{2}}(a_2 + a_3 + b_1 + b_2 - b_3 - b_4 + 1) \\
                    \frac{1}{2}(-a_1 - a_2 + a_3 + a_4 + 2b_2 + 2b_3+1) + \frac{1}{\sqrt{2}}(a_2 + a_3 - b_1 - b_2 + b_3 + b_4 + 1) \\
                    \frac{1}{2}(a_1 - a_2 + a_3 - a_4 + 2b_1 + 2b_4+1) + \frac{1}{\sqrt{2}}(a_1 + a_4 + b_1 - b_2 + b_3 - b_4 + 2)
                \end{pmatrix} \\
                ={}& \begin{pmatrix}
                    \frac{1}{2}(2c + 2b_1 + 2b_4) + \frac{1}{\sqrt{2}}(2d) \\
                    \frac{1}{2}(2c+2a_1 - 2a_4 + 2b_2 + 2b_3) + \frac{1}{\sqrt{2}}(2d+2c+2a_3-2a_4+2b_1 - 2b_4) \\
                    \frac{1}{2}(2c-2a_2 + 2a_3 + 2b_2 + 2b_3) + \frac{1}{\sqrt{2}}(2d+2c+2a_3-2a_4 - 2b_2 + 2b_3) \\
                    \frac{1}{2}(2c+2a_1 - 2a_2 + 2a_3 - 2a_4 + 2b_1 + 2b_4) + \frac{1}{\sqrt{2}}(2d+2b_1 - 2b_2 + 2b_3 - 2b_4 + 2)
                \end{pmatrix} \\
                ={}& \begin{pmatrix}
                    (c + b_1 + b_4) + {\sqrt{2}}(d) \\
                    (c+a_1 - a_4 + b_2 + b_3) + {\sqrt{2}}(d+c+a_3-a_4+b_1 - b_4) \\
                    (c-a_2 + a_3 + b_2 + b_3) + {\sqrt{2}}(d+c+a_3-a_4 - b_2 + b_3) \\
                    (c+a_1 - a_2 + a_3 - a_4 + b_1 + b_4) + {\sqrt{2}}(d+b_1 - b_2 + b_3 - b_4 + 1)
                \end{pmatrix}
                \in \mathbb{Z}[\sqrt{2}]^4.
            \end{align*}
            \item $a_1 + a_2 + a_3 + a_4 \equiv 1 \pmod{2}$ and $a_1 - a_4 + b_1 + b_2 + b_3 + b_4 \equiv 1 \pmod{2}$. In this case,
            we have $a_1 + a_2 + a_3 + a_4+1 \equiv a_1 - a_4 + b_1 + b_2 + b_3 + b_4+1 \equiv 0 \pmod{2}$. Write $a_1 + a_2 + a_3 + a_4+1 = 2c, a_1 - a_4 + b_1 + b_2 + b_3 + b_4+1 = 2d$ for some $c, d \in \ZZ$, with direct computation, we have 
            \begin{align*}
                & \sqrt{2}^{k-1}\CH{1,4}\CH{2,3}\CH{1,2}\CH{2,3}\CH{1,4}v \\
                ={}& \begin{pmatrix}
                    \frac{1}{2}(a_1 + a_2 + a_3 + a_4 + 2b_1 - 2b_4+3) + \frac{1}{\sqrt{2}}(a_1 - a_4 + b_1 + b_2 + b_3 + b_4 + 1) \\
                    \frac{1}{2}(a_1 - a_2 - a_3 + a_4 + 2b_2 - 2b_3-1) + \frac{1}{\sqrt{2}}(a_2 - a_3 + b_1 - b_2 - b_3 + b_4) \\
                    \frac{1}{2}(a_1 - a_2 - a_3 + a_4 - 2b_2 + 2b_3+1) + \frac{1}{\sqrt{2}}(-a_2 + a_3 + b_1 - b_2 - b_3 + b_4) \\
                    \frac{1}{2}(a_1 + a_2 + a_3 + a_4 - 2b_1 + 2b_4+1) + \frac{1}{\sqrt{2}}(-a_1 + a_4 + b_1 + b_2 + b_3 + b_4 + 1)
                \end{pmatrix} \\
                ={}& \begin{pmatrix}
                    \frac{1}{2}(2c + 2b_1 - 2b_4) + \frac{1}{\sqrt{2}}(2d) \\
                    \frac{1}{2}(2c-2a_2 - 2a_3 + 2b_2 - 2b_3-4) + \frac{1}{\sqrt{2}}(2d+2c-2a_1-2a_3- 2b_2 - 2b_3-2) \\
                    \frac{1}{2}(2c-2a_2 - 2a_3 - 2b_2 + 2b_3-2) + \frac{1}{\sqrt{2}}(2d+2c-2a_1-2a_2- 2b_2 - 2b_3-2) \\
                    \frac{1}{2}(2c - 2b_1 + 2b_4-2) + \frac{1}{\sqrt{2}}(2d-2a_1+2a_4)
                \end{pmatrix} \\
                ={}& \begin{pmatrix}
                    (c + b_1 - b_4) + {\sqrt{2}}(d) \\
                    (c-a_2 - a_3 + b_2 - b_3-2) + {\sqrt{2}}(d+c-a_1-a_3- b_2 - b_3-1) \\
                    (c-a_2 - a_3 - b_2 + b_3-1) + {\sqrt{2}}(d+c-a_1-a_2- b_2 - b_3-1) \\
                    (c - b_1 + b_4-1) + {\sqrt{2}}(d-a_1+a_4)
                \end{pmatrix}
                \in \mathbb{Z}[\sqrt{2}]^4.
            \end{align*}
        \end{enumerate}
        Therefore, we complete the proof for this case.
        \item If $\sqrt{2}^kv^T \equiv (1,1+\sqrt{2},1,1+\sqrt{2}) \pmod{2}$, there exist $a_1,b_1,a_2,b_2,a_3,b_3,a_4,b_4 \in \ZZ$ such that
        \[\sqrt{2}^kv^T = (2a_1+1+2b_1\sqrt{2}, 2a_2+1+(2b_2+1)\sqrt{2},2a_3+1+2b_3\sqrt{2},2a_4+1+2(b_4+1)\sqrt{2}).\]
        Letting $u_1$, $u_2$, $u_3$, and $u_4$ denoting the first entries of the following vectors, respectively:
        \begin{gather*}
            \sqrt{2}^{k-1}\CH{1,4}\CH{2,3}\CH{1,2}\CH{2,3}\CH{1,4}v, \sqrt{2}^{k-1}\CH{1,4}\CH{2,3}\CH{1,3}\CH{2,3}\CH{1,4}v, \\
            \sqrt{2}^{k-1}\CH{1,4}\CH{2,3}\CH{2,4}\CH{2,3}\CH{1,4}v, \text{ and } \sqrt{2}^{k-1}\CH{1,4}\CH{2,3}\CH{3,4}\CH{2,3}\CH{1,4}v,
        \end{gather*}
        by direct computation, we have
        \begin{align*}
            u_1 &= \frac{1}{2}(a_1 + a_2 + a_3 + a_4 + 2b_1 - 2b_4+1) + \frac{1}{\sqrt{2}}(a_1 - a_4 + b_1 + b_2 + b_3 + b_4 + 1), \\
            u_2 &= \frac{1}{2}(a_1 + a_2 - a_3 + a_4 + 2b_1 - 2b_4) + \frac{1}{\sqrt{2}}(a_1 - a_4 + b_1 + b_2 - b_3 + b_4 + 1), \\
            u_3 &= \frac{1}{2}(-a_1 + a_2 + a_3 + a_4 + 2b_1 + 2b_4+2) + \frac{1}{\sqrt{2}}(a_1 + a_4 - b_1 + b_2 + b_3 + b_4 + 2), \\
            u_4 &= \frac{1}{2}(-a_1 + a_2 - a_3 + a_4 + 2b_1 + 2b_4+1) + \frac{1}{\sqrt{2}}(a_1 + a_4 - b_1 + b_2 - b_3 + b_4 + 2).
        \end{align*}
        Then, depending on the parity of $a_1 + a_2 + a_3 + a_4$ and $a_1 - a_4 + b_1 + b_2 + b_3 + b_4$, we have one of the following four cases:
        \begin{enumerate}
            \item  $a_1 + a_2 + a_3 + a_4 \equiv 0 \pmod{2}$ and $a_1 - a_4 + b_1 + b_2 + b_3 + b_4 \equiv 0 \pmod{2}$. In this case, we have $u_3 \in \mathbb{Z}[\sqrt{2}]$, then we can conclude similarly to the one
            above that $\sqrt{2}^{k-1}\CH{1,4}\CH{2,3}\CH{2,4}\CH{2,3}\CH{1,4}v \in \mathbb{Z}[\sqrt{2}]^4$.
            \item  $a_1 + a_2 + a_3 + a_4 \equiv 0 \pmod{2}$ and $a_1 - a_4 + b_1 + b_2 + b_3 + b_4 \equiv 1 \pmod{2}$. In this case, we have $u_2 \in \mathbb{Z}[\sqrt{2}]$, then we can conclude similarly to the one
            above that $\sqrt{2}^{k-1}\CH{1,4}\CH{2,3}\CH{1,3}\CH{2,3}\CH{1,4}v \in \mathbb{Z}[\sqrt{2}]^4$.
            \item  $a_1 + a_2 + a_3 + a_4 \equiv 1 \pmod{2}$ and $a_1 - a_4 + b_1 + b_2 + b_3 + b_4 \equiv 0 \pmod{2}$. In this case, we have $u_4 \in \mathbb{Z}[\sqrt{2}]$, then we can conclude similarly to the one
            above that $\sqrt{2}^{k-1}\CH{1,4}\CH{2,3}\CH{3,4}\CH{2,3}\CH{1,4}v \in \mathbb{Z}[\sqrt{2}]^4$.
            \item  $a_1 + a_2 + a_3 + a_4 \equiv 1 \pmod{2}$ and $a_1 - a_4 + b_1 + b_2 + b_3 + b_4 \equiv 1 \pmod{2}$. In this case, we have $u_1 \in \mathbb{Z}[\sqrt{2}]$, then we can conclude similarly to the one
            above that $\sqrt{2}^{k-1}\CH{1,4}\CH{2,3}\CH{1,2}\CH{2,3}\CH{1,4}v \in \mathbb{Z}[\sqrt{2}]^4$. \qedhere
        \end{enumerate}
    \end{itemize}
\end{proof}

\begin{lemma}\label{lem:useful2_1}
    For any $M \in \HPiUnitary{}$ and $k\geq 2$ such that $\level(M) = (j,k,4)$ and
    \[\sqrt{2}^kv^T \equiv (1+\sqrt{2},1,1+\sqrt{2},1) \pmod{2} \text{ or } \equiv (1, 1+\sqrt{2}, 1, 1+\sqrt{2}) \pmod{2},\]
    where $v = (M[1,j], M[2,j], M[3,j], M[4,j])^T$,
    if $\sqrt{2}^{k-1}\CH{1,4}\CH{2,3}\CH{2,4}\CH{2,3}\CH{1,4}v \in \mathbb{Z}[\sqrt{2}]^4$,
    then 
    \[ \sqrt{2}^{k-1}\CH{1,3}\CH{2,4}\CH{1,2}\CH{1,3}\CH{2,4}v \in \mathbb{Z}[\sqrt{2}]^4. \]
\end{lemma}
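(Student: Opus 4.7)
}

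The plan is to argue by direct computation, in the same spirit as the proof of \cref{lem:useful2}. Set $u = \sqrt{2}^{k} v$, so $u \in \mathbb{Z}[\sqrt{2}]^{4}$, and write $u_{i} = \alpha_{i} + \beta_{i}\sqrt{2}$ with $\alpha_{i},\beta_{i}\in\mathbb{Z}$. In the first case of the hypothesis, namely $\sqrt{2}^{k}v^{T}\equiv(1+\sqrt{2},1,1+\sqrt{2},1)\pmod{2}$, we have the parities
\[
    \alpha_{1}\equiv 1,\ \beta_{1}\equiv 1,\quad \alpha_{2}\equiv 1,\ \beta_{2}\equiv 0,\quad \alpha_{3}\equiv 1,\ \beta_{3}\equiv 1,\quad \alpha_{4}\equiv 1,\ \beta_{4}\equiv 0 \pmod 2,
\]
and analogously in the second case. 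So the proof splits naturally into two parity cases.

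First, I would compute the two matrices $W_{1} = \CH{1,4}\CH{2,3}\CH{2,4}\CH{2,3}\CH{1,4}$ and $W_{2} = \CH{1,3}\CH{2,4}\CH{1,2}\CH{1,3}\CH{2,4}$ explicitly as $4\times 4$ matrices over $\mathbb{Z}[\tfrac{1}{\sqrt{2}}]$. The first one has already been computed in the proof of \cref{lem:useful2}, with common denominator $\sqrt{2}^{3}$. I would carry out the multiplication for $W_{2}$ in the same way, producing a matrix whose entries are of the form $(p + q\sqrt{2})/\sqrt{2}^{3}$ for small integers $p,q$. Then, multiplying $W_{2}$ by $v$ and isolating the coefficient of $1$ and of $\sqrt{2}$ in $\sqrt{2}^{k-1}W_{2}v$, each entry is expressible as a specific $\mathbb{Z}$-linear combination of $\alpha_{1},\ldots,\alpha_{4},\beta_{1},\ldots,\beta_{4}$, plus an integer constant, divided by $2$ in the $1$-part and $\sqrt{2}$ in the $\sqrt{2}$-part.

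Next, I would translate the hypothesis $\sqrt{2}^{k-1}W_{1}v\in\mathbb{Z}[\sqrt{2}]^{4}$ into a collection of congruences on the $\alpha_{i},\beta_{i}$ modulo $2$. As in Case 1 of the proof of \cref{lem:useful2}, this hypothesis corresponds exactly to the congruence pair
\[
    \alpha_{1}+\alpha_{2}+\alpha_{3}+\alpha_{4}\equiv 0\pmod 2,\qquad \alpha_{1}-\alpha_{4}+\beta_{1}+\beta_{2}+\beta_{3}+\beta_{4}\equiv 1\pmod 2
\]
(and the symmetric analogue in the second parity sub-case). Using these congruences together with the parities of each $\alpha_{i},\beta_{i}$ imposed by the hypothesis on $u$, I would verify that every entry of $\sqrt{2}^{k-1}W_{2}v$ lies in $\mathbb{Z}[\sqrt{2}]$ by showing that each coefficient of $1$ is even and each coefficient of $\sqrt{2}$ is an integer. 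This is a routine parity-tracking calculation on four entries.

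The main obstacle is purely bookkeeping: confirming that the parity conditions extracted from the $W_{1}$-hypothesis are exactly what is needed to make $W_{2}v$ integral in each of the two parity sub-cases. There is no subtlety beyond this; once the matrix $W_{2}$ is written down explicitly and the congruences are substituted into each of its four output coordinates, the required divisibility drops out coordinate by coordinate, mirroring the structure of the four cases handled in the proof of \cref{lem:useful2}.
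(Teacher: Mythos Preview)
There is a genuine gap in your plan. You propose to deduce the conclusion purely from the parity pattern of $\sqrt{2}^{k}v$ together with the integrality hypothesis on $W_{1}v$, where $W_{1}=\CH{1,4}\CH{2,3}\CH{2,4}\CH{2,3}\CH{1,4}$. But this is not enough: the hypotheses $M\in\HPiUnitary$ and $\level(M)=(j,k,4)$ are essential, and you never use them. Concretely, take $\sqrt{2}^{k}v=(1+\sqrt{2},3,1+\sqrt{2},3)^{T}$ (so $a_{1}=a_{3}=0$, $a_{2}=a_{4}=1$, all $b_{i}=0$ in the paper's parametrisation). One checks directly that $\sqrt{2}^{k-1}W_{1}v=(2+\sqrt{2},\sqrt{2},1,-1+\sqrt{2})^{T}\in\mathbb{Z}[\sqrt{2}]^{4}$, yet the first entry of $\sqrt{2}^{k-1}W_{2}v$, with $W_{2}=\CH{1,3}\CH{2,4}\CH{1,2}\CH{1,3}\CH{2,4}$, equals $2+\tfrac{1}{\sqrt{2}}\notin\mathbb{Z}[\sqrt{2}]$. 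So no amount of bookkeeping on $v$ alone can close the argument.

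The paper's proof does begin as you suggest, extracting congruences from the first entry of $\sqrt{2}^{k-1}W_{1}v$, but then splits on the parity of $a_{3}+a_{4}$. When $a_{3}+a_{4}\equiv 0\pmod 2$ the conclusion follows by the elementary comparison you have in mind. The other case $a_{3}+a_{4}\equiv 1\pmod 2$ (which your counterexample falls into) is shown to be \emph{impossible}: the paper uses the unit norm of the $j$-th column of $M$ and the level constraint $l=4$ to derive a parity contradiction among the entries $M[i,j]$ for $i>4$ after applying $\CH{1,3}\CH{2,4}$. This global argument is the missing ingredient in your plan.
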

\begin{proof}
    Without loss of generality, we assume $\sqrt{2}^kv^T \equiv (1+\sqrt{2},1,1+\sqrt{2},1) \pmod{2}$.
    Then, there exist $a_1,b_1,a_2,b_2,a_3,b_3,a_4,b_4 \in \ZZ$ such that
    \[\sqrt{2}^kv^T = (2a_1+1+(2b_1+1)\sqrt{2}, 2a_2+1+2b_2\sqrt{2},2a_3+1+(2b_3+1)\sqrt{2},2a_4+1+2b_4\sqrt{2}).\]
    Let $w_1$ be the first entry of
    $\sqrt{2}^{k-1}\CH{1,4}\CH{2,3}\CH{2,4}\CH{2,3}\CH{1,4}v \in \ZZ[\sqrt{2}]^4$, 
    \[(u_1, u_2, u_3, u_4)^T = \sqrt{2}^{k-1}\CH{1,3}\CH{2,4}\CH{1,2}\CH{1,3}\CH{2,4}v.\]
    By direct computation as in the proof of \cref{lem:useful2}, we have
    \begin{align*}
        w_1 &= \frac{1}{2}(-a_1 + a_2 + a_3 + a_4 + 2b_1 + 2b_4+2) + \frac{1}{\sqrt{2}}(a_1 + a_4 - b_1 + b_2 + b_3 + b_4 + 1), \\
        u_1 &= \frac{1}{2}(a_1+a_2+a_3+a_4+2b_1-2b_3+2) + \frac{1}{\sqrt{2}}(a_1-a_3+b_1+b_2+b_3+b_4 + 1), \\
        u_2 &= \frac{1}{2}(a_1-a_2+a_3-a_4+2b_2-2b_4) + \frac{1}{\sqrt{2}}(a_2-a_4+b_1-b_2+b_3-b_4 + 1), \\
        u_3 &= \frac{1}{2}(a_1+a_2+a_3+a_4-2b_1+2b_3+2) + \frac{1}{\sqrt{2}}(-a_1+a_3+b_1+b_2+b_3+b_4 + 1), \\
        u_4 &= \frac{1}{2}(a_1-a_2+a_3-a_4-2b_2+2b_4) + \frac{1}{\sqrt{2}}(-a_2+a_4+b_1-b_2+b_3-b_4 + 1).
    \end{align*}
    With $w_1 \in \ZZ[\sqrt{2}]$, we have $-a_1 + a_2 + a_3 + a_4 \equiv a_1 + a_4 - b_1 + b_2 + b_3 + b_4 + 1 \equiv 0 \pmod{2}$.
    We then consider the parity of $a_3+a_4$.
    \begin{itemize}
        \item $a_3+a_4 \equiv 0\pmod{2}$. In this case, 
        \[ u_1 - w_1 = \frac{1}{2}(2a_1-2b_3-2b_4)+\frac{1}{\sqrt{2}}(-a_3-a_4+2b_1) \in \ZZ[\sqrt{2}], \]
        which implies that $u_1\in \ZZ[\sqrt{2}]$. Similarly, it can also be seen that $u_2,u_3,u_4\in \ZZ[\sqrt{2}]$. Thus $\sqrt{2}^{k-1}\CH{1,3}\CH{2,4}\CH{1,2}\CH{1,3}\CH{2,4}v \in \ZZ[\sqrt{2}]^4$.
        \item $a_3+a_4 \equiv 1\pmod{2}$. In this case, we have
        \begin{align*}
            a_1+a_3+b_1+b_2+b_3+b_4 &= (a_1 + a_4 - b_1 + b_2 + b_3 + b_4) - (-a_3+a_4-2b_1) \\
            &\equiv 1 - 1 \pmod{2}\\
            &\equiv 0 \pmod{2}.
        \end{align*}
        Since $\level(M) = (j,k,4)$, we can write $\sqrt{2}^{k}M[i,j] = \sqrt{2}(a_i+b_i\sqrt{2})$ with $a_i,b_i \in \ZZ$ for $ 4< i\leq n$. Then,
        \begin{align*}
            2^k ={}& \sum_{i=1}^4(2a_i+1)^2+2\rbra*{\rbra*{2b_1+1}^2+\rbra*{2b_2}^2+\rbra*{2b_3+1}^2+\rbra*{2b_4}^2} \\
            &\quad + 2\rbra*{(2a_1+1)(2b_1+1)+(2a_3+1)(2b_3+1)}\sqrt{2} \\
            &\quad + 2\rbra*{(2a_2+1)(2b_2)+(2a_4+1)(2b_4)}\sqrt{2} \\
            &\quad + 2\rbra*{\sum_{i>4}a_i^2+2b_i^2+2a_ib_i\sqrt{2}}.
        \end{align*}
        Thus, 
        \begin{align*}
            2^k ={}& \sum_{i=1}^4(2a_i+1)^2+2\rbra*{\rbra*{2b_1+1}^2+\rbra*{2b_2}^2+\rbra*{2b_3+1}^2+\rbra*{2b_4}^2} \\
            & \quad + 2\rbra*{\sum_{i>4}a_i^2+2b_i^2}, \\
            ={}& 4\rbra*{\sum_{i=1}^4\rbra*{a_i^2+b_i^2+a_i}+b_1+b_2+2+\sum_{i>4}b_i^2}+2\sum_{i>4}a_i^2,  \\
            0 ={}&  2\rbra*{(2a_1+1)(2b_1+1)+(2a_3+1)(2b_3+1)} \\
            & \quad +2\rbra*{(2a_2+1)(2b_2)+(2a_4+1)(2b_4)}+2\rbra*{\sum_{i>4}2a_ib_i} \\
            ={}& 4\rbra*{\sum_{i=1}^4\rbra*{2a_ib_i+b_i}+a_1+a_3+\sum_{i>4}a_ib_i}.
        \end{align*}
        Since $k \geq 2$, we have
        \begin{align*}
            2\sum_{i>4}a_i^2 \equiv 0 \pmod{4} \text{ and }
            \sum_{i>4}a_ib_i \equiv  \sum_{i=1}^4b_i +a_1+a_3\pmod{2}.
        \end{align*}
        With $a_1+a_3 + b_1+b_2+b_3+b_4 \equiv 0 \pmod{2}$, we have 
        \begin{align*}
            \sum_{i>4}a_i \equiv 0 \pmod{2} \text{ and }
            \sum_{i>4}a_ib_i \equiv 0\pmod{2}.
        \end{align*}
        It follows that $a_i+\sqrt{2}b_i \equiv 1+\sqrt{2} \pmod{2}$ for evenly many $4 < i \leq n$.

        Consider the first four entries of the vector $(\CH{1,3}\CH{2,4}M)[:,j]$, we have
        \begin{align*}
            \sqrt{2}^{k-1}(\CH{1,3}\CH{2,4}M)[1,j] &= a_1+a_3+1+\sqrt{2}(b_1+b_3+1), \\
            \sqrt{2}^{k-1}(\CH{1,3}\CH{2,4}M)[2,j] &= a_2+a_4+1+\sqrt{2}(b_2+b_4), \\
            \sqrt{2}^{k-1}(\CH{1,3}\CH{2,4}M)[3,j] &= a_1-a_3+\sqrt{2}(b_1-b_3), \\
            \sqrt{2}^{k-1}(\CH{1,3}\CH{2,4}M)[4,j] &= a_2-a_4+\sqrt{2}(b_2-b_4).
        \end{align*}
        With $\sqrt{2}^{k-1}(\CH{1,3}\CH{2,4}M)[i,j] = \sqrt{2}^{k-1}M[i,j] = a_i+\sqrt{2}b_i$ for $4 < i \leq n$, it can be seen that $\lde((\CH{1,3}\CH{2,4}M)[:,j]) \leq k-1$.

        We further consider the parity of $a_1+a_3$:
        \begin{itemize}
            \item $a_1+a_3 \equiv 0 \pmod{2}$. In this case, $b_1+b_2+b_3+b_4 \equiv 0 \pmod{2}$. Then, either \begin{itemize}
                \item $\sqrt{2}^{k-1}(\CH{1,3}\CH{2,4}M)[1,j] \equiv 1+\sqrt{2}\pmod{2}$; or
                \item $\sqrt{2}^{k-1}(\CH{1,3}\CH{2,4}M)[2,j] \equiv 1+\sqrt{2}\pmod{2}$.
            \end{itemize}
            Additionally, $\sqrt{2}^{k-1}(\CH{1,3}\CH{2,4}M)[3,j] \equiv \sqrt{2}^{k-1}(\CH{1,3}\CH{2,4}M)[4,j] \equiv 0 \pmod{\sqrt{2}}$. 
            \item $a_1+a_3 \equiv 1 \pmod{2}$. In this case, $b_1+b_2+b_3+b_4 \equiv 1 \pmod{2}$. Then, either \begin{itemize}
                \item $\sqrt{2}^{k-1}(\CH{1,3}\CH{2,4}M)[3,j] \equiv 1+\sqrt{2}\pmod{2}$; or
                \item $\sqrt{2}^{k-1}(\CH{1,3}\CH{2,4}M)[4,j] \equiv 1+\sqrt{2}\pmod{2}$.
            \end{itemize}
            Additionally, $\sqrt{2}^{k-1}(\CH{1,3}\CH{2,4}M)[1,j] \equiv \sqrt{2}^{k-1}(\CH{1,3}\CH{2,4}M)[2,j] \equiv 0 \pmod{\sqrt{2}}$.
        \end{itemize}
        Therefore, there only one index $i$ chosen from $\{1,2,3,4\}$ such that $\sqrt{2}^{k-1}(\CH{1,3}\CH{2,4}M)[i,j] \equiv 1+\sqrt{2}\pmod{2}$.
        Together with $\sqrt{2}^{k-1}(\CH{1,3}\CH{2,4}M)[i,j] = a_i+\sqrt{2}b_i \equiv 1+\sqrt{2} \pmod{2}$ for evenly many $4 < i \leq n$, we have $\sqrt{2}^{k-1}(\CH{1,3}\CH{2,4}M)[i,j] \equiv 1+\sqrt{2} \pmod{2}$ for oddly many $1\leq i\leq n$, which contradicts $\lde((\CH{1,3}\CH{2,4}M)[:,j]) = k-1$.\footnote{See the proof of Lemma~5.13 in~\cite{Amy2020numbertheoretic}.} \emph{Thus, this case is impossible.}
    \end{itemize}
    Since one case leads to a valid conclusion and the other is impossible, the proof is complete.
\end{proof}

\begin{lemma}\label{lem:useful2_2}
    For any $M \in \HPiUnitary{}$ and $k\geq 2$ such that $\level(M) = (j,k,4)$ and
    \[\sqrt{2}^kv^T \equiv (1+\sqrt{2},1,1+\sqrt{2},1) \pmod{2} \text{ or } \equiv (1, 1+\sqrt{2}, 1, 1+\sqrt{2}) \pmod{2},\]
    where $v = (M[1,j], M[2,j], M[3,j], M[4,j])^T$,
    if $\sqrt{2}^{k-1}\CH{1,4}\CH{2,3}\CH{3,4}\CH{2,3}\CH{1,4}v \in \mathbb{Z}[\sqrt{2}]^4$,
    then 
    \[ \sqrt{2}^{k-1}\CH{1,3}\CH{2,4}\CH{1,4}\CH{1,3}\CH{2,4}v \in \mathbb{Z}[\sqrt{2}]^4. \]
\end{lemma}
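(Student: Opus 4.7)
My plan is to mirror the proof of Lemma~\ref{lem:useful2_1} line by line, with $\CH{3,4}$ playing the role that $\CH{2,4}$ played in the hypothesis and $\CH{1,4}$ playing the role that $\CH{1,2}$ played in the conclusion. Without loss of generality, take $\sqrt{2}^k v^T \equiv (1+\sqrt{2},\,1,\,1+\sqrt{2},\,1) \pmod{2}$, and parametrise
\[
\sqrt{2}^k v^T = (2a_1+1+(2b_1+1)\sqrt{2},\; 2a_2+1+2b_2\sqrt{2},\; 2a_3+1+(2b_3+1)\sqrt{2},\; 2a_4+1+2b_4\sqrt{2})
\]
with integers $a_i, b_i$. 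The first entry $w_1$ of $\sqrt{2}^{k-1}\CH{1,4}\CH{2,3}\CH{3,4}\CH{2,3}\CH{1,4} v$ is already written out in case~3 of the proof of Lemma~\ref{lem:useful2}; the hypothesis $w_1 \in \ZZ[\sqrt{2}]$ pins down two parity identities on the $a_i, b_i$.

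Second, I would carry out the direct matrix multiplication to write the four components $u_1, u_2, u_3, u_4$ of $\sqrt{2}^{k-1}\CH{1,3}\CH{2,4}\CH{1,4}\CH{1,3}\CH{2,4} v$ as half-integer polynomials in the $a_i, b_i$. Splitting on the parity of a suitable sum of $a_i$'s (the analogue of the case split on $a_3 + a_4$ in Lemma~\ref{lem:useful2_1}), one of the two parity cases yields the conclusion at once: suitable differences such as $u_i - w_1$ or pairwise $u_i - u_j$ lie in $\ZZ[\sqrt{2}]$ by the hypothesis-derived parity identities, and so every $u_i$ is already in $\ZZ[\sqrt{2}]$.

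The remaining parity case I would rule out by the same norm-based contradiction that closes Lemma~\ref{lem:useful2_1}. Using $\level(M) = (j,k,4)$, write $\sqrt{2}^k M[i,j] = \sqrt{2}(a_i + b_i \sqrt{2})$ for $i > 4$, expand the unit-column identity $\sum_i M[i,j]^2 = 1$, and compare rational and irrational parts modulo $4$ and modulo $2\sqrt{2}$. This produces congruences $\sum_{i>4} a_i \equiv 0$ and $\sum_{i>4} a_i b_i \equiv 0 \pmod{2}$, whose right-hand sides are fixed by the ``unfriendly''-case parities of $a_1,\dots,a_4,b_1,\dots,b_4$. Feeding these back into the first four entries of $\CH{1,3}\CH{2,4} M e_j$ forces $\sqrt{2}^{k-1}(\CH{1,3}\CH{2,4}M)[i,j] \equiv 1+\sqrt{2} \pmod{2}$ at an odd number of indices $i$, contradicting $\lde((\CH{1,3}\CH{2,4}M)[:,j]) = k-1$ via the parity invariant from Lemma~5.13 of~\cite{Amy2020numbertheoretic}. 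Since one parity case yields the conclusion and the other is impossible, this completes the proof.

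The main obstacle is bookkeeping: computing $\CH{1,3}\CH{2,4}\CH{1,4}\CH{1,3}\CH{2,4}$ correctly, and then checking, in each sub-case, that the hypothesis-derived parity identities are exactly strong enough --- in the friendly case to push every $u_i$ into $\ZZ[\sqrt{2}]$, and in the unfriendly case to make the number of $1+\sqrt{2}$-residue entries of $\CH{1,3}\CH{2,4}Me_j$ odd. Because $\CH{1,4}$ intertwines rows $1$ and $4$, which are precisely the rows carrying the nontrivial $\sqrt{2}$-parts in our WLOG choice of residue, I expect the symmetry that drove Lemma~\ref{lem:useful2_1} to carry through, but with the intermediate congruences reshuffled accordingly.
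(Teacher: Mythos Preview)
Your proposal is correct and follows essentially the same approach as the paper: parametrise, extract the two parity identities from $w_1\in\ZZ[\sqrt{2}]$, split on the parity of $a_3+a_4$, show $u_i-w_1\in\ZZ[\sqrt{2}]$ in the friendly case ($a_3+a_4$ odd here, rather than even as in Lemma~\ref{lem:useful2_1}), and in the other case derive $a_1+a_3+b_1+b_2+b_3+b_4\equiv 0\pmod 2$ and invoke verbatim the norm-based contradiction from Lemma~\ref{lem:useful2_1}. The paper is in fact terser than your outline in the impossible case, simply noting that the same contradiction argument applies once that congruence is in hand.
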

\begin{proof}
    Without loss of generality, we assume $\sqrt{2}^kv^T \equiv (1+\sqrt{2},1,1+\sqrt{2},1) \pmod{2}$.
    Then, there exist $a_1,b_1,a_2,b_2,a_3,b_3,a_4,b_4 \in \ZZ$ such that
    \[\sqrt{2}^kv^T = (2a_1+1+(2b_1+1)\sqrt{2}, 2a_2+1+2b_2\sqrt{2},2a_3+1+(2b_3+1)\sqrt{2},2a_4+1+2b_4\sqrt{2}).\]
    Let $w_1$ be the first entry of
    $\sqrt{2}^{k-1}\CH{1,4}\CH{2,3}\CH{3,4}\CH{2,3}\CH{1,4}v \in \ZZ[\sqrt{2}]^4$, 
    \[(u_1, u_2, u_3, u_4)^T = \sqrt{2}^{k-1}\CH{1,3}\CH{2,4}\CH{1,4}\CH{1,3}\CH{2,4}v.\]
    By direct computation as in the proof of \cref{lem:useful2_2}, we have
    \begin{align*}
        w_1 &= \frac{1}{2}(-a_1+a_2-a_3+a_4+2b_1+2b_4+1) + \frac{1}{\sqrt{2}}(a_1+a_4-b_1+b_2-b_3+b_4), \\
        u_1 &= \frac{1}{2}(a_1+a_2+a_3-a_4+2b_1-2b_3+1) + \frac{1}{\sqrt{2}}(a_1-a_3+b_1+b_2+b_3+b_4+1), \\
        u_2 &= \frac{1}{2}(a_1-a_2+a_3+a_4+2b_2+2b_4+1) + \frac{1}{\sqrt{2}}(a_2+a_4+b_1-b_2+b_3+b_4 + 2), \\
        u_3 &= \frac{1}{2}(a_1+a_2+a_3-a_4+2b_1+2b_3+1) + \frac{1}{\sqrt{2}}(-a_1+a_3+b_1+b_2+b_3-b_4 + 1), \\
        u_4 &= \frac{1}{2}(-a_1+a_2-a_3-a_4+2b_2+2b_4+1) + \frac{1}{\sqrt{2}}(a_2+a_4-b_1+b_2-b_3-b_4).
    \end{align*}
    With $w_1 \in \ZZ[\sqrt{2}]$, we have $-a_1+a_2-a_3+a_4+1 \equiv a_1+a_4-b_1+b_2-b_3+b_4 \equiv 0 \pmod{2}$.
    We then consider the parity of $a_3+a_4$.
    \begin{itemize}
        \item $a_3+a_4 \equiv 1\pmod{2}$. In this case, 
        \[ u_1 - w_1 = \frac{1}{2}(2a_1+2a_3-2a_4-2b_3-2b_4)+\frac{1}{\sqrt{2}}(-a_3-a_4+2b_1+2b_3+1) \in \ZZ[\sqrt{2}], \]
        which implies that $u_1\in \ZZ[\sqrt{2}]$. Similarly, it can also be seen that $u_2,u_3,u_4\in \ZZ[\sqrt{2}]$. Thus $\sqrt{2}^{k-1}\CH{1,3}\CH{2,4}\CH{1,4}\CH{1,3}\CH{2,4}v \in \ZZ[\sqrt{2}]^4$.
        \item $a_3+a_4 \equiv 0\pmod{2}$. In this case, we have
        \begin{align*}
            a_1+a_3+b_1+b_2+b_3+b_4 &= (a_1 + a_4 - b_1 + b_2 - b_3 + b_4) - (-a_3+a_4-2b_1-2b_3) \\
            &\equiv 0 - 0 \pmod{2}\\
            &\equiv 0 \pmod{2}.
        \end{align*}
        As in the proof of \cref{lem:useful2_1}, it leads to a contradiction. \emph{Therefore, this case is also impossible.}
    \end{itemize}
    Since one case leads to a valid conclusion and the other is impossible, the proof is complete.
\end{proof}

\begin{lemma}
    \label{lem:useful3}
    Let $s$ and $r$ be states with $(j,k,l) = \level(s) = \level(r)$ and $k \geq 1$ such that
    \begin{enumerate}
        \item $\lde(s[a,j]) = \lde(s[b,j]) = k$;
        \item $\lde(s[c,j]) \leq k-1, \lde(s[d,j]) \leq k-1$;
        \item $r = \CH{a,c}\CH{b,d}\CH{a,b}\CH{b,d}\CH{a,c}s$;
        \item $\sqrt{2}^ks[a,j] \not\equiv \sqrt{2}^ks[b,j] \pmod{2}$
    \end{enumerate} 
    for distinct $1\leq a,b,c,d \leq j$, then either
    \begin{itemize}
        \item $\lde(r[a,j]) = \lde(r[b,j]) = k$, $\lde(r[c,j])\leq k-1$, $\lde(r[d,j]) \leq k-1$; or
        \item $\lde(r[c,j]) = \lde(r[d,j]) = k$, $\lde(r[a,j])\leq k-1$, $\lde(r[b,j]) \leq k-1$.
    \end{itemize}
\end{lemma}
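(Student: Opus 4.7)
The plan is to reduce everything to an explicit matrix calculation on the four coordinates $a,b,c,d$, followed by a parity bookkeeping. A direct computation in the spirit of \cref{lem:useful2} gives that $\CH{a,c}\CH{b,d}\CH{a,b}\CH{b,d}\CH{a,c}$, restricted to coordinates $\{a,b,c,d\}$, equals
\[
\frac{1}{2\sqrt{2}}\begin{pmatrix}
1+\sqrt{2} & 1 & 1-\sqrt{2} & 1 \\
1 & -1+\sqrt{2} & 1 & -1-\sqrt{2} \\
1-\sqrt{2} & 1 & 1+\sqrt{2} & 1 \\
1 & -1-\sqrt{2} & 1 & -1+\sqrt{2}
\end{pmatrix},
\]
so each of $r[a,j], r[b,j], r[c,j], r[d,j]$ is an explicit $\ZZ[\sqrt{2}]$-linear combination of $s[a,j], s[b,j], s[c,j], s[d,j]$ divided by $2\sqrt{2}$.

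Next I parameterise the entries. Without loss of generality (swapping $a,b$ if needed), the hypotheses let me write
$\sqrt{2}^ks[a,j] = (2m_1+1) + 2n_1\sqrt{2}$,
$\sqrt{2}^ks[b,j] = (2m_2+1) + (2n_2+1)\sqrt{2}$,
$\sqrt{2}^ks[c,j] = 2f_3 + e_3\sqrt{2}$,
$\sqrt{2}^ks[d,j] = 2f_4 + e_4\sqrt{2}$
for integers $m_i, n_i, f_i, e_i$. Substituting into the four formulas, each $\sqrt{2}^k r[\bullet,j]$ takes the shape $B_\bullet/2 + (A_\bullet/4)\sqrt{2}$ where $A_\bullet, B_\bullet$ are explicit integer linear forms in the parameters. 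Since $\level(r) = (j,k,l)$ forces $\sqrt{2}^k r[\bullet,j] \in \ZZ[\sqrt{2}]$, both $B_\bullet$ must be even and $A_\bullet$ must be divisible by $4$; hence $\lde(r[\bullet,j]) = k$ or $\leq k-1$ is decided by the parity of the integer $B_\bullet/2$.

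The finish is a bookkeeping of three parity identities obtained by direct substitution:
\[
B_a/2 + B_c/2 \equiv 1,\quad B_b/2 + B_d/2 \equiv 1,\quad B_a/2 + B_b/2 \equiv 1 + (m_1+m_2+f_3+f_4+e_3) \pmod{2}.
\]
The first two say exactly one of $\{r[a,j], r[c,j]\}$ and exactly one of $\{r[b,j], r[d,j]\}$ has $\lde = k$. For the third, the divisibility $A_a \equiv 0 \pmod{4}$ forced by $\sqrt{2}^k r[a,j] \in \ZZ[\sqrt{2}]$ computes to $m_1+m_2+f_3+e_3+f_4 \equiv 1 \pmod{2}$, which is precisely the statement $B_a/2 \equiv B_b/2 \pmod{2}$. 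Combining the three identities, the pair of entries with $\lde = k$ is either $\{a,b\}$ or $\{c,d\}$, which is exactly the dichotomy claimed. The main obstacle is purely computational: one must keep careful track of the $A_\bullet, B_\bullet$ expressions and verify that the arithmetic condition $A_a \equiv 0 \pmod{4}$ coincides with the parity relation $B_a/2 \equiv B_b/2 \pmod{2}$; the swapped WLOG case (with the roles of $\equiv 1$ and $\equiv 1+\sqrt{2}$ reversed at $a$ and $b$) is handled symmetrically and yields the same conclusion.
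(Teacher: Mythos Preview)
Your approach is correct and essentially the same as the paper's: both parameterise the four entries after the same WLOG assumption, compute the explicit $4\times 4$ matrix $\CH{a,c}\CH{b,d}\CH{a,b}\CH{b,d}\CH{a,c}$, use the hypothesis $\level(r)=(j,k,l)$ to force the rational and $\sqrt{2}$-parts of each $\sqrt{2}^{k}r[\bullet,j]$ to be integers, and then read off the dichotomy from the resulting parity relations among the rational parts. The only cosmetic difference is which pairs you compare: the paper checks $c_1-c_2,\ c_3-c_4,\ c_1-c_3$ (getting $0,0,1\pmod 2$), whereas you check $B_a/2+B_c/2,\ B_b/2+B_d/2,\ B_a/2+B_b/2$ (getting $1,1,0\pmod 2$); these are equivalent sets of relations.
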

\begin{proof}
    Without loss of generality, we assume $\sqrt{2}^ks[a,j] \equiv 1 \pmod{2}$ and $\sqrt{2}^ks[b,j] \equiv 1+\sqrt{2} \pmod{2}$.
    Then, there exist $a_1,b_1,a_2,b_2,a_3,b_3,a_4,b_4\in \ZZ$ such that
    \begin{align*}
        & \sqrt{2}^k(s[a,j],s[b,j],s[c,j],s[d,j]) \\
        ={}& (2a_1+1+2b_1\sqrt{2},2a_2+1+(2b_2+1)\sqrt{2},2a_3+b_3\sqrt{2}, 2a_4+b_4\sqrt{2}).
    \end{align*}
    By direct computation, we have
    \begin{equation*}
        \CH{a,c}\CH{b,d}\CH{a,b}\CH{b,d}\CH{a,c} = \frac{1}{\sqrt{2}^3}\begin{pmatrix}
            1 + \sqrt{2} & 1 & 1 - \sqrt{2} & 1 \\
            1 & -1 + \sqrt{2} & 1 & -\sqrt{2} - 1 \\
            1 - \sqrt{2} & 1 & 1 + \sqrt{2} & 1 \\
            1 & -1-\sqrt(2) & 1 & -1 + \sqrt{2}
        \end{pmatrix}_{[a,b,c,d]}
    \end{equation*}
    With $r = \CH{a,c}\CH{b,d}\CH{a,b}\CH{b,d}\CH{a,c}s$, we have
    \begin{align*}
        \sqrt{2}^k r[a,j] &= a_1 - a_3 + b_1 + b_2 + \frac{b_3 + b_4}{2} + 1 + \left(\frac{a_1 + a_2 + a_3 + a_4 + 2b_1 - b_3 + 2}{2}\right)\sqrt{2}, \\
        \sqrt{2}^k r[b,j] &= a_2 - a_4 + b_1 - b_2 + \frac{b_3 - b_4}{2} + 1 + \left(\frac{a_1 - a_2 + a_3 - a_4 + 2b_2 - b_4}{2}\right)\sqrt{2}, \\
        \sqrt{2}^k r[c,j] &= -a_1 + a_3 + b_1 + b_2 + \frac{b_3 + b_4}{2} + \left(\frac{a_1 + a_2 + a_3 + a_4 - 2b_1 + b_3}{2}\right)\sqrt{2}, \\
        \sqrt{2}^k r[d,j] &= -a_2 + a_4 + b_1 - b_2 + \frac{b_3 - b_4}{2} + \left(\frac{a_1 - a_2 + a_3 - a_4 - 2b_2 + b_4}{2}\right)\sqrt{2}.
    \end{align*}
    Since $\level(r) = \level(s)$, we must have $\sqrt{2}^k r[a,j],\sqrt{2}^k r[b,j],\sqrt{2}^k r[c,j],\sqrt{2}^k r[d,j] \in \ZZ[\sqrt{2}]$, then
    \[ b_3+b_4\equiv 0 \pmod{2}, \enspace a_1 + a_2 + a_3 + a_4 - b_3 \equiv 0 \pmod{2}.\]
    Write $b_3+b_4 = 2e, a_1 + a_2 + a_3 + a_4 - b_3 = 2f$, and $\sqrt{2}^k (r[a,j],r[b,j],r[c,j],r[d,j]) = (c_1+d_1\sqrt{2}, c_2+d_2\sqrt{2}, c_3+d_3\sqrt{2}, c_4+d_4\sqrt{2})$ for some $c_1,d_1,c_2,d_2,c_3,d_3,c_4,d_4,e,f \in \ZZ$, we have
    \begin{align*}
        c_1 - c_2 = -2f+2e+2a_1+2b_2 &\equiv 0 \pmod{2}, \\
        c_3 - c_4 = 2f+2e-2a_1-2a_4+2b_2&\equiv 0 \pmod{2}, \\
        c_1 - c_3 = 2a_1-2a_3+1 &\equiv 1 \pmod{2}.
    \end{align*}
    Then, depending on the parity of $c_1$, we have either
    \begin{itemize}
        \item $c_1\equiv 1 \pmod{2}$, then,
        \[\lde(r[a,j]) = \lde(r[b,j]) = k, \lde(r[c,j])\leq k-1, \lde(r[d,j]) \leq k-1 \text{; or}\]
        \item $c_1\equiv 0 \pmod{2}$, then, 
        \[\lde(r[c,j]) = \lde(r[d,j]) = k, \lde(r[a,j])\leq k-1, \lde(r[b,j]) \leq k-1. \qedhere\]
    \end{itemize}
\end{proof}

\begin{lemma}
    \label{lem:useful4}
    Let $s$ and $r$ be states with $(j,k,l) = \level(s) = \level(r)$ and $k \geq 1$ such that
    \begin{enumerate}
        \item $\lde(s[a,j]) = \lde(s[b,j]) = k$;
        \item $\lde(s[c,j]) \leq k-1, \lde(s[d,j]) \leq k-1$;
        \item $r = \CH{a,c}\CH{b,d}\CH{a,b}\CH{b,d}\CH{a,c}s$
    \end{enumerate} 
    for distinct $1\leq a,b,c,d \leq j$, then there exist a sequence of simple edges $\mathbf{G}$ such that the diagram
    \begin{equation*}
        \begin{tikzcd}
            \setwiretype{n} s \ar[r,to*,bend right=-40,"\CH{a,c}\CH{b,d}\CH{a,b}\CH{b,d}\CH{a,c}"] \ar[r,to*, bend right=40,"\mathbf{G}",swap]&[1cm] r 
        \end{tikzcd}
    \end{equation*}
    commutes equationally and $\level(\mathbf{G}:s \Se*{} r) \leq (j,k,l+2)$.
\end{lemma}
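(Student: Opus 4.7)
The central obstacle is that the original word $W = \CH{a,c}\CH{b,d}\CH{a,b}\CH{b,d}\CH{a,c}$ itself already fails the intermediate level bound. Applying $\CH{a,c}$ first yields $(\CH{a,c}s)[a,j] = (s[a,j]+s[c,j])/\sqrt{2}$; since $\sqrt{2}^k s[a,j]$ has nonzero residue modulo $\sqrt{2}$ (because $\lde(s[a,j]) = k$) while $\sqrt{2}^k s[c,j]$ is divisible by $\sqrt{2}$ (because $\lde(s[c,j])\leq k-1$), no cancellation occurs, and $\lde((\CH{a,c}s)[a,j]) = k+1$. The same issue arises if we start with $\CH{b,d}$. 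Hence the only Hadamard generators on $\{a,b,c,d\}$ that preserve $\lde$ when applied to $s$ are $\CH{a,b}$ (which pairs the two entries of residue in $\{1, 1+\sqrt{2}\}$ modulo $\sqrt{2}$, so their sum vanishes modulo $\sqrt{2}$) and $\CH{c,d}$ (which pairs the two residue-$0$ entries).

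My plan is to construct $\mathbf{G}$ as an equivalent rewriting of $W$ that begins with a safe generator. Since $\CH{a,c}$ and $\CH{b,d}$ commute by \cref{rel:b6}, we can write $W = T\, \CH{a,b}\, T$ where $T = \CH{a,c}\CH{b,d}$; a direct computation using \cref{rel:a3,rel:b6} shows $T^2 \approx \varepsilon$, and hence $T\,W\,T \approx \CH{a,b}$. I would combine this with the derived relation \cref{rel:f1}, which gives $\CH{a,b}\CH{c,d}\CH{a,c}\CH{b,d} \approx \CH{b,d}\CH{a,c}\CH{c,d}\CH{a,b}$, to rewrite $W$ into a word $\mathbf{G}$ whose first generator is $\CH{a,b}$ or $\CH{c,d}$, followed by further generators chosen so that, after the initial safe pre-conditioning has altered the residue pattern, each subsequent application is safe on the evolving state (either both involved entries have residue $1 \pmod{\sqrt{2}}$, or both have residue $0$). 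Auxiliary $\CX$ and $\CZ$ generators, which do not affect $\lde$ at all, can be inserted freely using \cref{rel:d1,rel:d2} to correct signs.

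Finally, I would verify the bound $\level(\mathbf{G}:s\Se*{} r) \leq (j,k,l+2)$ by direct residue tracking. Since only the four positions $\{a,b,c,d\}$ are touched, the $\lde$ component stays at $k$ by the safe-generator property, while the $l$ component (the count of residue-$1$-or-$1+\sqrt{2}$ entries at scale $\sqrt{2}^k$) increases by at most $2$ because each Hadamard flips at most two entries between the residue classes $\{0,\sqrt{2}\}$ and $\{1,1+\sqrt{2}\}$ modulo $2$. The main obstacle I anticipate is exhibiting $\mathbf{G}$ concretely while preserving $\mathbf{G} \approx W$: not every naive candidate works, for instance $\CH{a,b}\CH{c,d}\CH{a,c}\CH{b,d}\CH{a,b}\CH{c,d}$ collapses via \cref{rel:f1} to $\CH{a,c}\CH{b,d} \not\approx W$. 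I expect a careful case analysis on the residues of $s[c,j]$ and $s[d,j]$ modulo $2$ (constrained by $\level(s) = \level(r)$ as computed in \cref{lem:useful3}) is needed, and $\mathbf{G}$ may have to be adjusted per case by inserting or removing $\CX$/$\CZ$ generators to ensure both the equivalence $\mathbf{G} \approx W$ and the stepwise level bound hold simultaneously.
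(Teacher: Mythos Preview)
Your diagnosis of the obstacle is correct, and your rewriting idea—begin with a Hadamard that pairs two entries of equal residue modulo~$2$—matches what the paper does in the base case $l=2$. There the hypotheses force $\sqrt{2}^{k} s[a,j] \equiv \sqrt{2}^{k} s[b,j] \pmod 2$, so $\CH{a,b}$ drops both entries to $\lde\le k-1$, and the paper exhibits a seven-generator word on $\{a,b,c,d\}$ equivalent to $W$ (via the short derived identity of \cref{prop:h0}) with intermediate levels at most $(j,k,4)=(j,k,l{+}2)$.

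The gap is the subcase $\sqrt{2}^{k} s[a,j] \not\equiv \sqrt{2}^{k} s[b,j] \pmod 2$, which can and does occur once $l\ge 4$: the two extra $\lde$-$k$ entries of the column then sit at positions $e,f\notin\{a,b,c,d\}$, and $a$ pairs in residue with one of them rather than with $b$. Here your ``safe'' generator $\CH{a,b}$ makes no progress—by \cref{lem:useful0} the outputs still have $\lde=k$ with different residues—and no purely four-index rewriting of $W$ is offered (nor does one obviously exist). The paper instead brings in $e,f$: first the arithmetic \cref{lem:useful2}, refined by \cref{lem:useful2_1,lem:useful2_2} (all of which require $k\ge 2$ and the unit-norm constraint on the \emph{full} column, not just the four entries), determines which five-fold Hadamard product on $\{a,b,e,f\}$ reduces the $\lde$; then one of a family of fifteen-generator identities on six indices, all derived from \cref{rel:d4}, supplies $\mathbf G\approx W$ together with the level bound. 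The case analysis you anticipate on residues of $s[c,j],s[d,j]$ is looking in the wrong place—the decisive information lives at $e,f$. The paper also disposes of $k=1$ separately (it is shown to be impossible under the hypotheses) and handles $l\ge 6$ by conjugating with an external $\CH{e',f'}$ to reduce to smaller $l$; both steps are absent from your outline.
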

\begin{proof}
    \let\fullwidthdisplay\relax
    We first prove that $k = 1$ is impossible. Suppose $k=1$, we write $\sqrt{2}s[i,j] = x_i+y_i\sqrt{2}$ for $1\leq i \leq n$ with $x_i,y_i \in \ZZ$. Since $s[:,j]$ is a unit vector, we have
    \[2 = \sum_{i=1}^n x_i^2+2y_i^2 +x_iy_i2\sqrt{2}.\]
    Then $\sum_{i=1}^n x_i^2+2y_i^2 = 2$, which means there exist $\abs{x_{i_1}} = \abs{x_{i_2}} = 1$ or $\abs{y_{i_3}} = 1$; and all other $x_i,y_i$ are zero.
    With $\lde(s[a,j]) = \lde(s[b,j]) = 1$, we must have
    \[\abs{s[a,j]} = \abs{s[b,j]} = \frac{1}{\sqrt{2}}\]
    and all other entries of $s[:,j]$ are zero. Then $\CH{a,b}s$ is a unit vector with a single non-zero entry, either $\abs{(\CH{a,b}s)[a,j]} = 1$ or $\abs{(\CH{a,b}s)[b,j]} = 1$.
    By direct computation,
    \begin{equation*}
        \sem{\CH{a,c}\CH{b,d}\CH{a,b}\CH{b,d}\CH{a,c}\CH{a,b}} = \frac{1}{\sqrt{2}^3}\begin{pmatrix}
            1+\sqrt{2} & 1 & 1-\sqrt{2} & 1 \\
            1 & -1+\sqrt{2} & 1 & -1-\sqrt{2} \\
            -1+\sqrt{2} & -1 & 1+\sqrt{2} & 1 \\
            -1 & 1+\sqrt{2} & 1 & -1+\sqrt{2}
        \end{pmatrix}_{[a,b,c,d]}
    \end{equation*}
    With $r = \CH{a,c}\CH{b,d}\CH{a,b}\CH{b,d}\CH{a,c}\CH{a,b}(\CH{a,b}s)$, we can verify that $\level(r) = (j,3,4)$, which contradicts $\level(r) = \level(s) = (j,1,l)$.
    \emph{Thus, $k=1$ is impossible}.

    Then, for $k \geq 2$, we prove it by induction on $l$, noting that $l \geq 2$ must be even.
    \begin{itemize}
        \item $l = 2$. In this case, we must have 
        \[ \sqrt{2}^ks[a,j] \equiv \sqrt{2}^ks[b,j] \pmod{2}.\]
        Then, we distinct subcases by \cref{lem:useful3}.
        \begin{itemize}[leftmargin=10pt]
            \item $\lde(r[a,j]) = \lde(r[b,j]) = k$, $\lde(r[c,j])\leq k-1$, $\lde(r[d,j]) \leq k-1$.
            In this case, we have
            \[ \sqrt{2}^kr[a,j] \equiv \sqrt{2}^kr[b,j] \pmod{2}. \]
            Then, consider the following diagram.
            \begin{equation*}
                \begin{tikzcd}
                    \setwiretype{n} s \ar[rrrrr,to*,"\CH{a,c}\CH{b,d}\CH{a,b}\CH{b,d}\CH{a,c}"{name=a1}]\ar[d, "\CH{a,b}", swap] &&&&& r\\[-4mm]
                    \setwiretype{n} t_1 \ar[r,"\CH{a,c}",swap] & t_2 \ar[r,"\CH{b,d}",swap] & t_3 \ar[r,"\CH{c,d}"{name=a2},swap] & t_4 \ar[r,"\CH{b,d}",swap] & t_5 \ar[r,"\CH{a,c}",swap] & t_6 \ar[u,"\CH{a,b}",swap]
                    \arrow[phantom,from=a1,to=a2,"\text{\cref{prop:h0}}"]
                \end{tikzcd}
            \end{equation*}
            The diagram commutes equationally by the postponed \cref{prop:h0}.
            It can be seen that
            \begin{gather*}
                \level(t_1), \level(t_6) < (j,k,0), \enspace
                \level(t_2), \level(t_5) \leq (j,k,2), \enspace
                \level(t_3), \level(t_4) \leq (j,k,4).
            \end{gather*}
            Let 
            \[\mathbf{G}: s \Se*{\CH{a,b}\CH{a,c}\CH{b,d}\CH{c,d}\CH{b,d}\CH{a,c}\CH{a,b}} r.\]
            We have $\mathbf{G} \approx  \CH{a,c}\CH{b,d}\CH{a,b}\CH{b,d}\CH{a,c}$ and $\level(\mathbf{G}) \leq (j,k,4) = (j,k,l+2)$.
            \item $\lde(r[c,j]) = \lde(r[d,j]) = k$, $\lde(r[a,j])\leq k-1$, $\lde(r[b,j]) \leq k-1$. In this case, we have
            \[ \sqrt{2}^kr[c,j] \equiv \sqrt{2}^kr[d,j] \pmod{2}. \]
            Then, consider the following diagram.
            \begin{equation*}
                \begin{tikzcd}
                    \setwiretype{n} s \ar[rrrrr,to*,"\CH{a,c}\CH{b,d}\CH{a,b}\CH{b,d}\CH{a,c}"{name=a1}]\ar[d, "\CH{a,b}", swap] &&&&& r\\[-4mm]
                    \setwiretype{n} t_1 \ar[r,"\CH{a,c}",swap] & t_2 \ar[r,"\CH{b,d}",swap] & t_3 \ar[r,"\CH{a,b}"{name=a2},swap] & t_4 \ar[r,"\CH{b,d}",swap] & t_5 \ar[r,"\CH{a,c}",swap] & t_6 \ar[u,"\CH{c,d}",swap]
                    \arrow[phantom,from=a1,to=a2,"\text{\cref{prop:h0'}}"]
                \end{tikzcd}
            \end{equation*}
            The diagram commutes equationally by the postponed \cref{prop:h0'}.
            It can be seen that
            \begin{gather*}
                \level(t_1), \level(t_6) < (j,k,0), \enspace
                \level(t_2), \level(t_5) \leq (j,k,2), \enspace
                \level(t_3), \level(t_4) \leq (j,k,4).
            \end{gather*}
            Let 
            \[\mathbf{G}: s \Se*{\CH{c,d}\CH{a,c}\CH{b,d}\CH{a,b}\CH{b,d}\CH{a,c}\CH{a,b}} r.\]
            We have $\mathbf{G} \approx  \CH{a,c}\CH{b,d}\CH{a,b}\CH{b,d}\CH{a,c}$ and $\level(\mathbf{G}) \leq (j,k,4) = (j,k,l+2)$.
        \end{itemize}
        \item $l = 4$. We distinguish subcases depending on $s[a,j]$ and $s[b,j]$.
        \begin{enumerate}[leftmargin=10pt]
            \item $\sqrt{2}^ks[a,j] \equiv \sqrt{2}^ks[b,j] \pmod{2}$. Since $l = 4$, there must exist distinct $e,f \not\in\{a,b,c,d\}$ with $1\leq e,f \leq j$ such that
            \[\lde(s[e,j]) = \lde(s[f,j]) = k \text{ and } \sqrt{2}^ks[e,j] \equiv \sqrt{2}^ks[f,j] \pmod{2}.\]
            Then, consider the following diagram.
            \begin{equation*}
                \begin{tikzcd}
                    \setwiretype{n} s \ar[rrrrr,to*,"\CH{a,c}\CH{b,d}\CH{a,b}\CH{b,d}\CH{a,c}"]\ar[d, "\CH{e,f}", swap] &&&&& r\\[-4mm]
                    \setwiretype{n} s'\ar[rrrrr,to*,"\CH{a,c}\CH{b,d}\CH{a,b}\CH{b,d}\CH{a,c}"{name=a1}]\ar[rrrrr,to*, bend right=30,"\mathbf{G}'"{name=a2},swap] &&&&& r' \ar[u, "\CH{e,f}",swap]
                    \arrow[phantom,from=a1,to=a2,"\text{Case $l=2$}"]
                \end{tikzcd}
            \end{equation*}
            The rectangle at the top of the diagram commute equationally by \cref{rel:b6}.
            It can be seen that $\level(s') = \level(r') = (j,k,2)$,
            \begin{gather*}
                \lde(s'[a,j]) = \lde(s[a,j]) = k, \lde(s'[b,j]) = \lde(s[b,j]) = k, \\
                \lde(s'[c,j]) = \lde(s[c,j]) \leq k-1, \lde(s'[d,j]) = \lde(s[d,j]) \leq k-1
            \end{gather*}
            and $r' = \CH{a,c}\CH{b,d}\CH{a,b}\CH{b,d}\CH{a,c}s'$. Then, by the basic case $l=2$, there exist $\mathbf{G}'$ such that
            \begin{gather*}
                \mathbf{G}' \approx \CH{a,c}\CH{b,d}\CH{a,b}\CH{b,d}\CH{a,c}, \\
                \level(\mathbf{G}':s'\Se*{}r') \leq (j,k,4).
            \end{gather*}
            Let $\mathbf{G} = \CH{e,f}\mathbf{G}'\CH{e,f}$,
            it can be seen that
            \begin{align*}
                \mathbf{G} &= \CH{e,f}\mathbf{G}'\CH{e,f} \\
                &\approx \CH{e,f}\CH{a,c}\CH{b,d}\CH{a,b}\CH{b,d}\CH{a,c}\CH{e,f} \\
                &\approx \CH{a,c}\CH{b,d}\CH{a,b}\CH{b,d}\CH{a,c}
            \end{align*}
            and
            \begin{align*}
                \level(\mathbf{G}:s \Se*{} r) &= \max\left\{\level(s), \level(r), \level(\mathbf{G}':s'\Se*{}r')\right\} \\
                &= (j,k,4) \leq (j,k,l+2).
            \end{align*}
            \item $\sqrt{2}^ks[a,j] \not\equiv \sqrt{2}^ks[b,j] \pmod{2}$.
            Since $l=4$, there must exist distinct $e,f \not\in\{a,b,c,d\}$ with $1\leq e,f \leq j$ such that
            \begin{gather*}
                \lde(s[e,j]) = \lde(s[f,j]) = k, \\
                \sqrt{2}^ks[e,j] \equiv \sqrt{2}^ks[a,j] \pmod{2}, \enspace\enspace
                \sqrt{2}^ks[f,j] \equiv \sqrt{2}^ks[b,j] \pmod{2}.
            \end{gather*}
            Then, $(s[a,j],s[b,j],s[e,j],s[f,j])^T$ and $k$ satisfy the condition of \cref{lem:useful2}. Thus, one of the following four cases must hold:
            \begin{enumerate}[leftmargin=10pt]
                \item $\sqrt{2}^{k-1}\CH{1,4}\CH{2,3}\CH{1,2}\CH{2,3}\CH{1,4}(s[a,j],s[b,j],s[e,j],s[f,j])^T \in \ZZ[\sqrt{2}]^4$.  We have
                \[\sqrt{2}^{k-1}(\CH{a,f}\CH{b,e}\CH{a,b}\CH{b,e}\CH{a,f}s)[:,j] \in \ZZ[\sqrt{2}]^4.\]
                Then, we distinct subcases by \cref{lem:useful3}.
                \begin{itemize}[leftmargin=10pt]
                    \item $\lde(r[a,j]) = \lde(r[b,j]) = k$, $\lde(r[c,j])\leq k-1$, $\lde(r[d,j]) \leq k-1$.
                    In this case, consider the following diagram.
                    \begin{equation*}
                        \hspace{-6mm}
                        \adjustbox{max width=1.05\linewidth}{
                        \begin{tikzcd}
                            \setwiretype{n} s \ar[rrrrrrrrrrr,to*,"\CH{a,c}\CH{b,d}\CH{a,b}\CH{b,d}\CH{a,c}"{name=a1}]\ar[d, "\CH{a,f}", swap] &&&&&&&&&&& r\\[-4mm]
                            \setwiretype{n} t_1 \ar[d, "\CH{b,e}", swap] &&&&&&&&&&& t_{14}\ar[u, "\CH{a,f}", swap] \\[-4mm]
                            \setwiretype{n} t_2 \ar[r, "\CH{a,b}", swap] & t_3 \ar[r, "\CH{b,e}", swap] & t_4\ar[r, "\CH{a,f}", swap] & t_5\ar[r,swap,"\CH{c,f}"] & t_6\ar[r, swap, "\CH{d,e}"] & t_7\ar[r, swap, "\CH{f,e}"{name=a2}] & t_8\ar[r, swap, "\CH{d,e}"] & t_9\ar[r,swap,"\CH{c,f}"] & t_{10}\ar[r, "\CH{a,f}", swap] & t_{11}\ar[r, "\CH{b,e}", swap] & t_{12}\ar[r, "\CH{a,b}", swap] & t_{13}\ar[u, "\CH{b,e}", swap]
                            \arrow[phantom,from=a1,to=a2,"\text{\cref{prop:h1}}"]
                        \end{tikzcd}}
                    \end{equation*}
                    The diagram commutes equationally by the postponed \cref{prop:h1}.
                    
                    Since $t_5 = \CH{a,f}\CH{b,e}\CH{a,b}\CH{b,e}\CH{a,f}s$, we have
                    \begin{gather*}
                        \lde(t_5[a,j]) \leq k-1, \lde(t_5[b,j]) \leq k-1, \lde(t_5[e,j]) \leq k-1, \lde(t_5[f,j]) \leq k-1
                    \end{gather*}
                    and
                    \[\lde(t_5[c,j]) = \lde(s[c,j]) \leq k-1, \enspace \lde(t_5[d,j]) = \lde(s[d,j]) \leq k-1.\]
                    Then, with \cref{lem:useful0}, we must have
                    \begingroup
                    \begin{align*}
                        \lde(t_1[a,j]) &= k, &\lde(t_1[b,j]) &= k, &\lde(t_1[e,j]) &= k, &\lde(t_1[f,j]) &= k, \\
                        \lde(t_2[a,j]) &= k, &\lde(t_2[b,j]) &= k, &\lde(t_2[e,j]) &= k, &\lde(t_2[f,j]) &= k, \\
                        \lde(t_3[a,j]) &= k, &\lde(t_3[b,j]) &= k, &\lde(t_3[e,j]) &= k, &\lde(t_3[f,j]) &= k, \\
                        \lde(t_4[a,j]) &= k, &\lde(t_4[b,j]) &\leq k-1, &\lde(t_4[e,j]) &\leq k-1, &\lde(t_4[f,j]) &= k,
                    \end{align*}
                    \begin{gather*}
                        \lde(t_4[c,j]) = \lde(t_3[c,j]) = \lde(t_2[c,j]) = \lde(t_1[c,j]) = \lde(s[c,j]) \leq k-1, \\
                        \lde(t_4[d,j]) = \lde(t_3[d,j]) = \lde(t_2[d,j]) = \lde(t_1[d,j]) = \lde(s[d,j]) \leq k-1
                    \end{gather*}
                    and
                    \begin{align*}
                        \lde(t_6[c,j]) &\leq k, &\lde(t_6[d,j]) &\leq k-1, &\lde(t_6[e,j]) &\leq k-1, &\lde(t_6[f,j]) &\leq k, \\
                        \lde(t_7[c,j]) &\leq k, &\lde(t_7[d,j]) &\leq k, &\lde(t_7[e,j]) &\leq k, &\lde(t_7[f,j]) &\leq k,
                    \end{align*}
                    \begin{gather*}
                        \lde(t_7[a,j]) = \lde(t_6[a,j]) = \lde(t_5[a,j]) \leq k-1, \\
                        \lde(t_7[b,j]) = \lde(t_6[b,j]) = \lde(t_5[b,j]) \leq k-1.
                    \end{gather*}
                    \endgroup
                    Since $\lde(t_9[d,j]) = \lde(r[d,j]) \leq k-1$ and $\lde(t_7[e,j]), \lde(t_7[f,j]) \leq k$, we have $\lde(t_8[e,j]), \lde(t_8[f,j])\leq k$; otherwise, if $\lde(t_8[e,j])$ or $\lde(t_8[f,j])$ is greater than $k$, with $t_8 = \CH{f,e}t_7$, we will have $\lde(t_8[e,j]) = \lde(t_8[f,j]) = k+1$. Then, with $\lde(t_8[d,j]) = \lde(t_7[d,j]) \leq k$ and $t_9 = \CH{d,e}t_8$, we will have $\lde(t_9[d,j]) = k+2$, which contradicts $\lde(t_9[d,j]) \leq k-1$. Thus,
                    \begin{gather*}
                        \lde(t_8[a,j]), \lde(t_8[b,j]) \leq k-1, \\
                        \lde(t_8[c,j]), \lde(t_8[d,j]), \lde(t_8[e,j]), \lde(t_8[f,j]) \leq k, \\
                        \lde(t_9[a,j]), \lde(t_9[b,j]),\lde(t_9[d,j]) \leq k-1, \\
                        \lde(t_9[c,j]), \lde(t_9[e,j]), \lde(t_9[f,j]) \leq k.
                    \end{gather*}
                    Since $\lde(t_{10}[c,j]) = \lde(r[c,j]) \leq k-1$ and $t_{10} = \CH{c,f}t_{9}$, we can also have
                    \begin{gather*}
                        \lde(t_{10}[a,j]), \lde(t_{10}[b,j]),\lde(t_{10}[c,j]), \lde(t_{10}[d,j]) \leq k-1, \\
                        \lde(t_{10}[e,j]), \lde(t_{10}[f,j]) \leq k.
                    \end{gather*}
                    With $t_{14}=\CH{a,f}r, t_{13}=\CH{b,e}\CH{a,f}r, t_{12}=\CH{a,b}\CH{b,e}\CH{a,f}r$, it can be seen that\footnote{If $\sqrt{2}^kr[a,j]\equiv \sqrt{2}^kr[e,j] \pmod{2}$, we will have $\lde(t_{13}[a,j]) = \lde(t_{13}[b,j]) = \lde(t_{13}[e,j]) = \lde( t_{13}[f,j]) = k$; if $\sqrt{2}^kr[a,j]\not\equiv \sqrt{2}^kr[e,j] \pmod{2}$, we will have $\lde(t_{13}[a,j]), \lde(t_{13}[b,j]), \lde(t_{13}[e,j]), \lde( t_{13}[f,j]) \leq k-1$.}
                    \begin{gather*}
                        \lde(t_{14}[c,j]), \lde(t_{14}[d,j]) \leq k-1, \\
                        \lde(t_{14}[a,j]), \lde(t_{14}[b,j]), \lde(t_{14}[e,j]), \lde(t_{14}[f,j]) \leq k, \\
                        \lde(t_{13}[c,j]), \lde(t_{13}[d,j]) \leq k-1, \\
                        \lde(t_{13}[a,j]), \lde(t_{13}[b,j]), \lde(t_{13}[e,j]), \lde(t_{13}[f,j]) \leq k. \\
                        \lde(t_{12}[c,j]), \lde(t_{12}[d,j]) \leq k-1, \\
                        \lde(t_{12}[a,j]), \lde(t_{12}[b,j]), \lde(t_{12}[e,j]), \lde(t_{12}[f,j]) \leq k.
                    \end{gather*}
                    Based on the values of $\lde$ of $t_{10},t_{12}$, we have $\level(t_{10}) \leq (j,k,2), \level(t_{12}) \leq (j,k,4)$. With $t_{12} = \CH{b,e} t_{11} = \CH{b,e}\CH{a,f} t_{10}$, we must have $\level(t_{11}) \leq (j, k, 4)$. Then $\level(t_{i}) \leq (j,k,4)$ for $1\leq i\leq 14$.
                    Let 
                    \[\mathbf{G}: s \Se*{\CH{a,f}\CH{b,e}\CH{a,b}\CH{b,e}\CH{a,f} \CH{c,f}\CH{d,e}\CH{f,e}\CH{d,e}\CH{c,f} \CH{a,f}\CH{b,e}\CH{a,b}\CH{b,e}\CH{a,f}} r.\]
                    We have $\mathbf{G} \approx  \CH{a,c}\CH{b,d}\CH{a,b}\CH{b,d}\CH{a,c}$ and $\level(\mathbf{G}) = (j,k,4) \leq (j,k,l+2)$.
                    \item $\lde(r[c,j]) = \lde(r[d,j]) = k$, $\lde(r[a,j])\leq k-1$, $\lde(r[b,j]) \leq k-1$. In this case, consider the following diagram.
                    \begin{equation*}
                        \hspace{-6mm}
                        \adjustbox{max width=1.05\linewidth}{
                        \begin{tikzcd}
                            \setwiretype{n} s \ar[rrrrrrrrrrr,to*,"\CH{a,c}\CH{b,d}\CH{a,b}\CH{b,d}\CH{a,c}"{name=a1}]\ar[d, "\CH{a,f}", swap] &&&&&&&&&&& r\\[-4mm]
                            \setwiretype{n} t_1 \ar[d, "\CH{b,e}", swap] &&&&&&&&&&& t_{14}\ar[u, "\CH{c,f}", swap] \\[-4mm]
                            \setwiretype{n} t_2 \ar[r, "\CH{a,b}", swap] & t_3 \ar[r, "\CH{b,e}", swap] & t_4\ar[r, "\CH{a,f}", swap] & t_5\ar[r,swap,"\CH{a,c}"] & t_6\ar[r, swap, "\CH{b,d}"] & t_7\ar[r, swap, "\CH{a,b}"{name=a2}] & t_8\ar[r, swap, "\CH{b,d}"] & t_9\ar[r,swap,"\CH{a,c}"] & t_{10}\ar[r, "\CH{c,f}", swap] & t_{11}\ar[r, "\CH{d,e}", swap] & t_{12}\ar[r, "\CH{f,e}", swap] & t_{13}\ar[u, "\CH{d,e}", swap]
                            \arrow[phantom,from=a1,to=a2,"\text{\cref{prop:h2}}"]
                        \end{tikzcd}}
                    \end{equation*}
                    The diagram commutes equationally by the postponed \cref{prop:h2}. 
                    Let 
                    \[\mathbf{G}: s \Se*{\CH{c,f}\CH{d,e}\CH{f,e}\CH{d,e}\CH{c,f} \CH{a,c}\CH{b,d}\CH{a,b}\CH{b,d}\CH{a,c} \CH{a,f}\CH{b,e}\CH{a,b}\CH{b,e}\CH{a,f}} r,\]
                    we can have $\mathbf{G} \approx \CH{a,c}\CH{b,d}\CH{a,b}\CH{b,d}\CH{a,c}$ and $\level(\mathbf{G}) = (j,k,4) \leq (j,k,l+2)$ by the same reasoning as the previous case.
                \end{itemize}

                \item $\sqrt{2}^{k-1}\CH{1,4}\CH{2,3}\CH{1,3}\CH{2,3}\CH{1,4}(s[a,j],s[b,j],s[e,j],s[f,j])^T \in \ZZ[\sqrt{2}]^4$. We have
                \[\sqrt{2}^{k-1}(\CH{a,f}\CH{b,e}\CH{a,e}\CH{b,e}\CH{a,f}s)[:,j] \in \ZZ[\sqrt{2}]^4.\]
                Then, we distinct subcases by \cref{lem:useful3}.
                \begin{itemize}
                    \item $\lde(r[a,j]) = \lde(r[b,j]) = k$, $\lde(r[c,j])\leq k-1$, $\lde(r[d,j]) \leq k-1$.
                    In this case, consider the following diagram.
                    \begin{equation*}
                        \hspace{-6mm}
                        \adjustbox{max width=1.05\linewidth}{
                        \begin{tikzcd}
                            \setwiretype{n} s \ar[rrrrrrrrrrr,to*,"\CH{a,c}\CH{b,d}\CH{a,b}\CH{b,d}\CH{a,c}"{name=a1}]\ar[d, "\CH{a,f}", swap] &&&&&&&&&&& r\\[-4mm]
                            \setwiretype{n} t_1 \ar[d, "\CH{b,e}", swap] &&&&&&&&&&& t_{14}\ar[u, "\CH{a,f}", swap] \\[-4mm]
                            \setwiretype{n} t_2 \ar[r, "\CH{a,e}", swap] & t_3 \ar[r, "\CH{b,e}", swap] & t_4\ar[r, "\CH{a,f}", swap] & t_5\ar[r,swap,"\CH{c,f}"] & t_6\ar[r, swap, "\CH{d,e}"] & t_7\ar[r, swap, "\CH{f,d}"{name=a2}] & t_8\ar[r, swap, "\CH{d,e}"] & t_9\ar[r,swap,"\CH{c,f}"] & t_{10}\ar[r, "\CH{a,f}", swap] & t_{11}\ar[r, "\CH{b,e}", swap] & t_{12}\ar[r, "\CH{a,e}", swap] & t_{13}\ar[u, "\CH{b,e}", swap]
                            \arrow[phantom,from=a1,to=a2,"\text{\cref{prop:h3}}"]
                        \end{tikzcd}}
                    \end{equation*}
                    The diagram commutes equationally by the postponed \cref{prop:h3}.
                    Let 
                    \[\mathbf{G}: s \Se*{\CH{a,f}\CH{b,e}\CH{a,e}\CH{b,e}\CH{a,f} \CH{c,f}\CH{d,e}\CH{f,d}\CH{d,e}\CH{c,f} \CH{a,f}\CH{b,e}\CH{a,e}\CH{b,e}\CH{a,f}} r,\]
                    we can have $\mathbf{G} \approx \CH{a,c}\CH{b,d}\CH{a,b}\CH{b,d}\CH{a,c}$ and $\level(\mathbf{G}) = (j,k,4) \leq (j,k,l+2)$ by the same reasoning as the previous case.
                    \item $\lde(r[c,j]) = \lde(r[d,j]) = k$, $\lde(r[a,j])\leq k-1$, $\lde(r[b,j]) \leq k-1$. In this case, consider the following diagram.
                    \begin{equation*}
                        \hspace{-6mm}
                        \adjustbox{max width=1.05\linewidth}{
                        \begin{tikzcd}
                            \setwiretype{n} s \ar[rrrrrrrrrrr,to*,"\CH{a,c}\CH{b,d}\CH{a,b}\CH{b,d}\CH{a,c}"{name=a1}]\ar[d, "\CH{a,f}", swap] &&&&&&&&&&& r\\[-4mm]
                            \setwiretype{n} t_1 \ar[d, "\CH{b,e}", swap] &&&&&&&&&&& t_{14}\ar[u, "\CH{c,f}", swap] \\[-4mm]
                            \setwiretype{n} t_2 \ar[r, "\CH{a,e}", swap] & t_3 \ar[r, "\CH{b,e}", swap] & t_4\ar[r, "\CH{a,f}", swap] & t_5\ar[r,swap,"\CH{a,c}"] & t_6\ar[r, swap, "\CH{b,d}"] & t_7\ar[r, swap, "\CH{a,b}"{name=a2}] & t_8\ar[r, swap, "\CH{b,d}"] & t_9\ar[r,swap,"\CH{a,c}"] & t_{10}\ar[r, "\CH{c,f}", swap] & t_{11}\ar[r, "\CH{d,e}", swap] & t_{12}\ar[r, "\CH{f,d}", swap] & t_{13}\ar[u, "\CH{d,e}", swap]
                            \arrow[phantom,from=a1,to=a2,"\text{\cref{prop:h4}}"]
                        \end{tikzcd}}
                    \end{equation*}
                    The diagram commutes equationally by the postponed \cref{prop:h4}. 
                    Let 
                    \[\mathbf{G}: s \Se*{\CH{c,f}\CH{d,e}\CH{f,d}\CH{d,e}\CH{c,f} \CH{a,c}\CH{b,d}\CH{a,b}\CH{b,d}\CH{a,c} \CH{a,f}\CH{b,e}\CH{a,e}\CH{b,e}\CH{a,f}} r,\]
                    we can have $\mathbf{G} \approx \CH{a,c}\CH{b,d}\CH{a,b}\CH{b,d}\CH{a,c}$ and $\level(\mathbf{G}) = (j,k,4) \leq (j,k,l+2)$ by the same reasoning as the previous case.
                \end{itemize}
                \item $\sqrt{2}^{k-1}\CH{1,4}\CH{2,3}\CH{2,4}\CH{2,3}\CH{1,4}(s[a,j],s[b,j],s[e,j],s[f,j])^T \in \ZZ[\sqrt{2}]^4$. In this case, we apply \cref{lem:useful2_1} to get
                \[ \sqrt{2}^{k-1}\CH{1,3}\CH{2,4}\CH{1,2}\CH{2,4}\CH{1,3}(s[a,j],s[b,j],s[e,j],s[f,j])^T \in \ZZ[\sqrt{2}]^4, \]
                which means
                \[\sqrt{2}^{k-1}(\CH{a,e}\CH{b,f}\CH{a,b}\CH{b,f}\CH{a,e}s)[:,j] \in \ZZ[\sqrt{2}]^4.\]
                Then, we distinct subcases by \cref{lem:useful3}.
                \begin{itemize}
                    \item $\lde(r[a,j]) = \lde(r[b,j]) = k$, $\lde(r[c,j])\leq k-1$, $\lde(r[d,j]) \leq k-1$.
                    In this case, consider the following diagram.
                    \begin{equation*}
                        \hspace{-6mm}
                        \adjustbox{max width=1.05\linewidth}{
                        \begin{tikzcd}
                            \setwiretype{n} s \ar[rrrrrrrrrrr,to*,"\CH{a,c}\CH{b,d}\CH{a,b}\CH{b,d}\CH{a,c}"{name=a1}]\ar[d, "\CH{a,e}", swap] &&&&&&&&&&& r\\[-4mm]
                            \setwiretype{n} t_1 \ar[d, "\CH{b,f}", swap] &&&&&&&&&&& t_{14}\ar[u, "\CH{a,e}", swap] \\[-4mm]
                            \setwiretype{n} t_2 \ar[r, "\CH{a,b}", swap] & t_3 \ar[r, "\CH{b,f}", swap] & t_4\ar[r, "\CH{a,e}", swap] & t_5\ar[r,swap,"\CH{c,e}"] & t_6\ar[r, swap, "\CH{d,f}"] & t_7\ar[r, swap, "\CH{e,f}"{name=a2}] & t_8\ar[r, swap, "\CH{d,f}"] & t_9\ar[r,swap,"\CH{c,e}"] & t_{10}\ar[r, "\CH{a,e}", swap] & t_{11}\ar[r, "\CH{b,f}", swap] & t_{12}\ar[r, "\CH{a,b}", swap] & t_{13}\ar[u, "\CH{b,f}", swap]
                            \arrow[phantom,from=a1,to=a2,"\text{\cref{prop:h5}}"]
                    \end{tikzcd}}
                    \end{equation*}
                    The diagram commutes equationally by the postponed \cref{prop:h5}.
                    Let 
                    \begin{align*}
                        \mathbf{G}&: s \Se*{\CH{a,e}\CH{b,f}\CH{a,b}\CH{b,f}\CH{a,e} \CH{c,e}\CH{d,f}\CH{e,f}\CH{d,f}\CH{c,e} \CH{a,e}\CH{b,f}\CH{a,b}\CH{b,f}\CH{a,e}} r,
                    \end{align*}
                    we can have $\mathbf{G} \approx \CH{a,c}\CH{b,d}\CH{a,b}\CH{b,d}\CH{a,c}$ and $\level(\mathbf{G}) = (j,k,4) \leq (j,k,l+2)$ by the same reasoning as the previous case.
                    \item $\lde(r[c,j]) = \lde(r[d,j]) = k$, $\lde(r[a,j])\leq k-1$, $\lde(r[b,j]) \leq k-1$. In this case, consider the following diagram.
                    \begin{equation*}
                        \hspace{-6mm}
                        \adjustbox{max width=1.05\linewidth}{
                        \begin{tikzcd}
                            \setwiretype{n} s \ar[rrrrrrrrrrr,to*,"\CH{a,c}\CH{b,d}\CH{a,b}\CH{b,d}\CH{a,c}"{name=a1}]\ar[d, "\CH{a,e}", swap] &&&&&&&&&&& r\\[-4mm]
                            \setwiretype{n} t_1 \ar[d, "\CH{b,f}", swap] &&&&&&&&&&& t_{14}\ar[u, "\CH{c,e}", swap] \\[-4mm]
                            \setwiretype{n} t_2 \ar[r, "\CH{a,b}", swap] & t_3 \ar[r, "\CH{b,f}", swap] & t_4\ar[r, "\CH{a,e}", swap] & t_5\ar[r,swap,"\CH{a,c}"] & t_6\ar[r, swap, "\CH{b,d}"] & t_7\ar[r, swap, "\CH{a,b}"{name=a2}] & t_8\ar[r, swap, "\CH{b,d}"] & t_9\ar[r,swap,"\CH{a,c}"] & t_{10}\ar[r, "\CH{c,e}", swap] & t_{11}\ar[r, "\CH{d,f}", swap] & t_{12}\ar[r, "\CH{e,f}", swap] & t_{13}\ar[u, "\CH{d,f}", swap]
                            \arrow[phantom,from=a1,to=a2,"\text{\cref{prop:h6}}"]
                        \end{tikzcd}}
                    \end{equation*}
                    The diagram commutes equationally by the postponed \cref{prop:h6}. 
                    Let 
                    \[\mathbf{G}: s \Se*{\CH{c,e}\CH{d,f}\CH{e,f}\CH{d,f}\CH{c,e} \CH{a,c}\CH{b,d}\CH{a,b}\CH{b,d}\CH{a,c} \CH{a,e}\CH{b,f}\CH{a,b}\CH{b,f}\CH{a,e}} r,\]
                    we can have $\mathbf{G} \approx \CH{a,c}\CH{b,d}\CH{a,b}\CH{b,d}\CH{a,c}$ and $\level(\mathbf{G}) = (j,k,4) \leq (j,k,l+2)$ by the same reasoning as the previous case.
                \end{itemize}
                \item $\sqrt{2}^{k-1}\CH{1,4}\CH{2,3}\CH{3,4}\CH{2,3}\CH{1,4}(s[a,j],s[b,j],s[e,j],s[f,j])^T \in \ZZ[\sqrt{2}]^4$. In this case, we apply \cref{lem:useful2_2} to get
                \[ \sqrt{2}^{k-1}\CH{1,3}\CH{2,4}\CH{1,4}\CH{2,4}\CH{1,3}(s[a,j],s[b,j],s[e,j],s[f,j])^T \in \ZZ[\sqrt{2}]^4, \]
                which means
                \[\sqrt{2}^{k-1}(\CH{a,e}\CH{b,f}\CH{a,f}\CH{b,f}\CH{a,e}s)[:,j] \in \ZZ[\sqrt{2}]^4.\]
                Then, we distinct subcases by \cref{lem:useful3}.
                \begin{itemize}
                    \item $\lde(r[a,j]) = \lde(r[b,j]) = k$, $\lde(r[c,j])\leq k-1$, $\lde(r[d,j]) \leq k-1$.
                    In this case, consider the following diagram.
                    \begin{equation*}
                        \hspace{-6mm}
                        \adjustbox{max width=1.05\linewidth}{
                        \begin{tikzcd}
                            \setwiretype{n} s \ar[rrrrrrrrrrr,to*,"\CH{a,c}\CH{b,d}\CH{a,b}\CH{b,d}\CH{a,c}"{name=a1}]\ar[d, "\CH{a,e}", swap] &&&&&&&&&&& r\\[-4mm]
                            \setwiretype{n} t_1 \ar[d, "\CH{b,f}", swap] &&&&&&&&&&& t_{14}\ar[u, "\CH{a,e}", swap] \\[-4mm]
                            \setwiretype{n} t_2 \ar[r, "\CH{a,f}", swap] & t_3 \ar[r, "\CH{b,f}", swap] & t_4\ar[r, "\CH{a,e}", swap] & t_5\ar[r,swap,"\CH{c,e}"] & t_6\ar[r, swap, "\CH{d,f}"] & t_7\ar[r, swap, "\CH{e,d}"{name=a2}] & t_8\ar[r, swap, "\CH{d,f}"] & t_9\ar[r,swap,"\CH{c,e}"] & t_{10}\ar[r, "\CH{a,e}", swap] & t_{11}\ar[r, "\CH{b,f}", swap] & t_{12}\ar[r, "\CH{a,f}", swap] & t_{13}\ar[u, "\CH{b,f}", swap]
                            \arrow[phantom,from=a1,to=a2,"\text{\cref{prop:h7}}"]
                        \end{tikzcd}}
                    \end{equation*}
                    The diagram commutes equationally by the postponed \cref{prop:h7}.
                    Let 
                    \begin{align*}
                        \mathbf{G}&: s \Se*{\CH{a,e}\CH{b,f}\CH{a,f}\CH{b,f}\CH{a,e}\CH{c,e}\CH{d,f}\CH{e,d}\CH{d,f}\CH{c,e}\CH{a,e}\CH{b,f}\CH{a,f}\CH{b,f}\CH{a,e}} r,
                    \end{align*}
                    we can have $\mathbf{G} \approx \CH{a,c}\CH{b,d}\CH{a,b}\CH{b,d}\CH{a,c}$ and $\level(\mathbf{G}) = (j,k,4) \leq (j,k,l+2)$ by the same reasoning as the previous case.
                    \item $\lde(r[c,j]) = \lde(r[d,j]) = k$, $\lde(r[a,j])\leq k-1$, $\lde(r[b,j]) \leq k-1$. In this case, consider the following diagram.
                    \begin{equation*}
                        \hspace{-6mm}
                        \adjustbox{max width=1.05\linewidth}{
                        \begin{tikzcd}
                            \setwiretype{n} s \ar[rrrrrrrrrrr,to*,"\CH{a,c}\CH{b,d}\CH{a,b}\CH{b,d}\CH{a,c}"{name=a1}]\ar[d, "\CH{a,e}", swap] &&&&&&&&&&& r\\[-4mm]
                            \setwiretype{n} t_1 \ar[d, "\CH{b,f}", swap] &&&&&&&&&&& t_{14}\ar[u, "\CH{c,e}", swap] \\[-4mm]
                            \setwiretype{n} t_2 \ar[r, "\CH{a,f}", swap] & t_3 \ar[r, "\CH{b,f}", swap] & t_4\ar[r, "\CH{a,e}", swap] & t_5\ar[r,swap,"\CH{a,c}"] & t_6\ar[r, swap, "\CH{b,d}"] & t_7\ar[r, swap, "\CH{a,b}"{name=a2}] & t_8\ar[r, swap, "\CH{b,d}"] & t_9\ar[r,swap,"\CH{a,c}"] & t_{10}\ar[r, "\CH{c,e}", swap] & t_{11}\ar[r, "\CH{d,f}", swap] & t_{12}\ar[r, "\CH{e,d}", swap] & t_{13}\ar[u, "\CH{d,f}", swap]
                            \arrow[phantom,from=a1,to=a2,"\text{\cref{prop:h8}}"]
                        \end{tikzcd}}
                    \end{equation*}
                    The diagram commutes equationally by the postponed \cref{prop:h8}. 
                    Let 
                    \[\mathbf{G}: s \Se*{\CH{c,e}\CH{d,f}\CH{e,d}\CH{d,f}\CH{c,e} \CH{a,c}\CH{b,d}\CH{a,b}\CH{b,d}\CH{a,c} \CH{a,e}\CH{b,f}\CH{a,f}\CH{b,f}\CH{a,e}} r,\]
                    we can have $\mathbf{G} \approx \CH{a,c}\CH{b,d}\CH{a,b}\CH{b,d}\CH{a,c}$ and $\level(\mathbf{G}) = (j,k,4) \leq (j,k,l+2)$ by the same reasoning as the previous case.
                \end{itemize}
            \end{enumerate}
        \end{enumerate}
        \item $l = 2(m+1)$ with $m \geq 2$. Since $l \geq 6$, there must exist distinct $e,f \not\in\{a,b,c,d\}$ with $1\leq e,f \leq j$ such that
        \[\lde(s[e,j]) = \lde(s[f,j]) = k \text{ and } \sqrt{2}^ks[e,j] \equiv \sqrt{2}^ks[f,j] \pmod{2}.\]
        Then, consider the following diagram.
        \begin{equation*}
            \begin{tikzcd}
                \setwiretype{n} s \ar[rrrrr,to*,"\CH{a,c}\CH{b,d}\CH{a,b}\CH{b,d}\CH{a,c}"]\ar[d, "\CH{e,f}", swap] &&&&& r\\[-4mm]
                \setwiretype{n} s'\ar[rrrrr,to*,"\CH{a,c}\CH{b,d}\CH{a,b}\CH{b,d}\CH{a,c}"{name=a1}]\ar[rrrrr,to*, bend right=30,"\mathbf{G}'"{name=a2},swap] &&&&& r' \ar[u, "\CH{e,f}",swap]
                \arrow[phantom,from=a1,to=a2,"\text{IH}"]
            \end{tikzcd}
        \end{equation*}
        The rectangle at the top of the diagram commute equationally by \cref{rel:b6}.
        It can be seen that $\level(s') = \level(r') = (j,k,2m)$,
        \begin{gather*}
            \lde(s'[a,j]) = \lde(s[a,j]) = k, \lde(s'[b,j]) = \lde(s[b,j]) = k, \\
            \lde(s'[c,j]) = \lde(s[c,j]) \leq k-1, \lde(s'[d,j]) = \lde(s[d,j]) \leq k-1
        \end{gather*}
        and $r' = \CH{a,c}\CH{b,d}\CH{a,b}\CH{b,d}\CH{a,c}s'$. Then, by inductive hypothesis of $m$, there exist $\mathbf{G}'$ such that
        \begin{gather*}
            \mathbf{G}' \approx \CH{a,c}\CH{b,d}\CH{a,b}\CH{b,d}\CH{a,c}, \\
            \level(\mathbf{G}':s'\Se*{}r') \leq (j,k,2m+2).
        \end{gather*}
        Let $\mathbf{G} = \CH{e,f}\mathbf{G}'\CH{e,f}$,
        it can be seen that
        \begin{align*}
            \mathbf{G} &= \CH{e,f}\mathbf{G}'\CH{e,f} \\
            &\approx \CH{e,f}\CH{a,c}\CH{b,d}\CH{a,b}\CH{b,d}\CH{a,c}\CH{e,f} \\
            &\approx \CH{a,c}\CH{b,d}\CH{a,b}\CH{b,d}\CH{a,c}
        \end{align*}
        and
        \begin{align*}
            \level(\mathbf{G}:s \Se*{} r) &= \max\left\{\level(s), \level(r), \level(\mathbf{G}':s'\Se*{}r')\right\} \\
            &= (j,k,2(m+1)) \leq (j,k,l+2).
        \end{align*} \qedhere
    \end{itemize}
\end{proof}

\begin{proposition}
    \label{prop:h0}
    \begin{equation*}
        \CH{a,b}\CH{a,c}\CH{b,d}\CH{c,d}\CH{b,d}\CH{a,c}\CH{a,b} \approx \CH{a,c}\CH{b,d}\CH{a,b}\CH{b,d}\CH{a,c}.
    \end{equation*}
\end{proposition}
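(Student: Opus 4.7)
The plan is to first introduce convenient shorthand. Set $X = \CH{a,b}$, $Y = \CH{c,d}$, $P = \CH{a,c}$, $Q = \CH{b,d}$, and $W = PQ$. Since $\{a,c\}$ and $\{b,d\}$ are disjoint, \cref{rel:b6} gives $PQ \approx QP$, and then \cref{rel:a3} yields $W^2 \approx \varepsilon$; in particular $PQ$ and $QP$ are the same word $W$. Analogously, $X$ and $Y$ commute and are self-inverse. In this notation the desired equation collapses to
\[
    XWYWX \;\approx\; WXW.
\]

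The only nontrivial ingredient is \cref{rel:d3}, which in the abbreviations above reads $(YW)^4 \approx XY$. Because $XY$ is self-inverse ($XY \approx YX$ and both $X, Y$ are involutions), taking inverses of both sides gives the companion identity $(WY)^4 \approx XY$ for free. Right-multiplying this companion identity by $Y$ and cancelling $Y^2$ yields the single substitution rule the rest of the argument relies on:
\[
    X \;\approx\; (WY)^3 W.
\]

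With this rule in hand the proof becomes a short book-keeping calculation. Substituting it into the right-hand side of the goal gives $WXW \approx W\cdot (WY)^3 W\cdot W$, and collapsing the two pairs of adjacent $W$'s produces $WXW \approx (YW)^2 Y$. Substituting it into both $X$'s on the left-hand side gives $XWYWX \approx (WY)^3 W\cdot WYW\cdot (WY)^3 W$; collapsing the three resulting pairs of adjacent like letters leaves $XWYWX \approx (WY)^5 W$. It remains to verify $(WY)^5 W \approx (YW)^2 Y$, which is immediate once one rewrites $(WY)^5 W = (WY)^4 \cdot WYW \approx XY\cdot WYW$ using $(WY)^4 \approx XY$, replaces $XY$ on the left by $(YW)^4$ using the original form of the relation, and performs the cascade of $WW$- and $YY$-cancellations that exactly collapses an $11$-letter word down to $YWYWY$.

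The main obstacle is really only spotting the right abbreviation: once $PQ$ is packaged as a single involution $W$, the proposition becomes an identity in the one-relator group $\langle W, X, Y \mid W^2,\, X^2,\, Y^2,\, XYX^{-1}Y^{-1},\, (YW)^4(XY)^{-1} \rangle$, and the single substitution $X \approx (WY)^3 W$ forces all remaining steps to be mechanical word cancellation.
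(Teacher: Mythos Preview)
Your proof is correct. The substitution $X \approx (WY)^3 W$ that you extract from \cref{rel:d3} (after taking inverses and using that $X,Y$ commute) is the right lever, and the subsequent word cancellations all check out: $WXW \approx (YW)^2Y$, $XWYWX \approx (WY)^5W$, and the final reduction $(WY)^5W \approx (YW)^2Y$ via one more application of $(WY)^4 \approx XY \approx (YW)^4$ is valid. One cosmetic point: when you write ``$PQ$ and $QP$ are the same word $W$'' you mean they are $\approx$-equivalent, not literally the same word, but this does not affect the argument.

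The paper's proof reaches the same destination by a different path. It works by a chain of bi-implications, rearranging the equation until it reduces to \cref{rel:d3} directly; along the way it invokes the derived relation \cref{rel:f1}, namely $(XYW)^2 \approx \varepsilon$, as an intermediate swapping tool. Your approach bypasses \cref{rel:f1} entirely: by packaging $PQ$ as a single involution $W$ and deriving the substitution $X \approx (WY)^3W$, you turn the problem into pure word cancellation in the group $\langle W,X,Y \mid W^2, X^2, Y^2, [X,Y], (YW)^4(XY)^{-1}\rangle$. This is more conceptual and shorter; the paper's proof is more explicit about which named relation is invoked at each step, which fits its style of tracking every rewrite against the relation list in \cref{fig:relations}.
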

\begin{proof}
    \begin{align*}
        & \CH{a,b}\CH{a,c}\CH{b,d}\CH{c,d}\CH{b,d}\CH{a,c}\CH{a,b} \approx \CH{a,c}\CH{b,d}\CH{a,b}\CH{b,d}\CH{a,c} \\
        \Leftrightarrow{}& \CH{c,d} \approx \blue{\CH{b,d}\CH{a,c}\CH{a,b}}\CH{a,c}\CH{b,d}\CH{a,b}\CH{b,d}\CH{a,c}\blue{\CH{a,b}\CH{a,c}\CH{b,d}} \tag{by \cref{rel:a3}} \\
        \Leftrightarrow{}& \CH{c,d} \approx \CH{b,d}\CH{a,c}\blue{\CH{c,d}\CH{a,b}\CH{c,d}}\CH{a,c}\CH{b,d}\CH{a,b}\CH{b,d}\CH{a,c}\CH{a,b}\CH{a,c}\CH{b,d} \tag{by \cref{rel:a3,rel:b6}} \\
        \Leftrightarrow{}& \CH{c,d} \approx \CH{b,d}\CH{a,c}\CH{c,d}\blue{\CH{a,c}\CH{b,d}\CH{a,b}\CH{c,d}}\CH{a,b}\CH{b,d}\CH{a,c}\CH{a,b}\CH{a,c}\CH{b,d} \tag{by \cref{rel:a3,rel:b6,,rel:f1}} \\
        \Leftrightarrow{}& \CH{c,d} \approx \CH{b,d}\CH{a,c}\CH{c,d}\CH{a,c}\CH{b,d}\blue{\CH{c,d}}\CH{b,d}\CH{a,c}\CH{a,b}\CH{a,c}\CH{b,d} \tag{by \cref{rel:a3,rel:b6}} \\
        \Leftrightarrow{}& \blue{\varepsilon} \approx \blue{\CH{c,d}}\CH{b,d}\CH{a,c}\CH{c,d}\CH{a,c}\CH{b,d}\CH{c,d}\CH{b,d}\CH{a,c}\CH{a,b}\CH{a,c}\CH{b,d} \tag{by \cref{rel:a3}} \\
        \Leftrightarrow{}& \varepsilon \approx \blue{\rbra*{\CH{c,d}\CH{a,c}\CH{b,d}}^3}\CH{a,b}\CH{a,c}\CH{b,d} \tag{by \cref{rel:b6}} \\
        \Leftrightarrow{}& \varepsilon \approx \blue{\rbra*{\CH{c,d}\CH{a,c}\CH{b,d}}^3}\blue{\CH{c,d}\CH{a,b}\CH{c,d}}\CH{a,c}\CH{b,d} \tag{by \cref{rel:a3,rel:b6}} \\
        \Leftrightarrow{}& \varepsilon \approx \rbra*{\CH{c,d}\CH{a,c}\CH{b,d}}^3\CH{c,d}\blue{\CH{a,c}\CH{b,d}\CH{a,b}\CH{c,d}} \tag{by \cref{rel:a3,rel:b6,,rel:f1}} \\
        \Leftrightarrow{}& \varepsilon \approx \blue{\rbra*{\CH{c,d}\CH{a,c}\CH{b,d}}^4}\CH{a,b}\CH{c,d} \\
        \Leftrightarrow{}& \varepsilon \approx \blue{\CH{a,b}\CH{c,d}}\CH{a,b}\CH{c,d} \tag{by \cref{rel:d3}}\\
        \Leftrightarrow{}& \varepsilon \approx \blue{\varepsilon}. \tag{by \cref{rel:a3,rel:b6}}
    \end{align*}
\end{proof}

\begin{proposition}
    \label{prop:h0'}
    \begin{equation*}
        \CH{c,d}\CH{a,c}\CH{b,d}\CH{a,b}\CH{b,d}\CH{a,c}\CH{a,b} \approx \CH{a,c}\CH{b,d}\CH{a,b}\CH{b,d}\CH{a,c}.
    \end{equation*}
\end{proposition}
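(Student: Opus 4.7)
My plan is to reduce \cref{prop:h0'} to the already-established \cref{prop:h0} by means of the auxiliary 4-letter commutation
\[
  \CH{c,d}\CH{a,c}\CH{b,d}\CH{a,b} \approx \CH{a,b}\CH{a,c}\CH{b,d}\CH{c,d}. \quad (\ast)
\]
Once $(\ast)$ is in hand, substituting it for the first four factors of the LHS of \cref{prop:h0'} gives
\[
  \CH{c,d}\CH{a,c}\CH{b,d}\CH{a,b}\CH{b,d}\CH{a,c}\CH{a,b} \approx \CH{a,b}\CH{a,c}\CH{b,d}\CH{c,d}\CH{b,d}\CH{a,c}\CH{a,b},
\]
and the right-hand side here is exactly the LHS of \cref{prop:h0}, which equals $\CH{a,c}\CH{b,d}\CH{a,b}\CH{b,d}\CH{a,c}$ by the preceding proposition.

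To establish $(\ast)$, I would work from \cref{rel:f1}, $(\CH{a,b}\CH{c,d}\CH{a,c}\CH{b,d})^2 \approx \varepsilon$. Combined with the involutivity \cref{rel:a3} of each generator, inverting the 4-letter word gives
\[
  \CH{a,b}\CH{c,d}\CH{a,c}\CH{b,d} \approx \CH{b,d}\CH{a,c}\CH{c,d}\CH{a,b}.
\]
Pre- and post-multiplying both sides by $\CH{a,b}$ and then collapsing the resulting squares using \cref{rel:a3} converts this into
\[
  \CH{c,d}\CH{a,c}\CH{b,d}\CH{a,b} \approx \CH{a,b}\CH{b,d}\CH{a,c}\CH{c,d},
\]
after which a single disjoint-index commutation $\CH{a,c}\CH{b,d} \approx \CH{b,d}\CH{a,c}$ from \cref{rel:b6} rewrites the right-hand side as $\CH{a,b}\CH{a,c}\CH{b,d}\CH{c,d}$, yielding $(\ast)$.

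The main obstacle is really conceptual rather than computational: one has to notice that \cref{prop:h0'} and \cref{prop:h0} share a right-hand side and differ on the left only by exchanging an outer $\CH{a,b}$ with an inner $\CH{c,d}$. The four-braid relation \cref{rel:f1} is precisely the tool that governs this exchange, and everything else is routine bookkeeping with \cref{rel:a3,rel:b6}. An alternative presentation in the style of the proof of \cref{prop:h0} — rearranging to the form $\varepsilon \approx \text{word}$ and reducing via \cref{rel:d3} — is also possible, but the reduction via $(\ast)$ is shorter and reuses the work already done for \cref{prop:h0}.
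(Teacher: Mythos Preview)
Your proposal is correct and follows essentially the same route as the paper: both arguments use \cref{rel:a3,rel:b6,rel:f1} to rewrite the first four letters of the LHS so that the word becomes the LHS of \cref{prop:h0}, and then invoke that proposition. The only difference is organisational: you isolate the key rewriting as an explicit auxiliary identity $(\ast)$, whereas the paper achieves the same effect inline by inserting $\CH{a,b}\CH{a,b}$ at the front, applying \cref{rel:f1} (with \cref{rel:b6}) to the resulting block, and then collapsing $\CH{a,b}\CH{c,d}\CH{a,b}\approx\CH{c,d}$.
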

\begin{proof}
    \begin{align*}
        & \CH{c,d}\CH{a,c}\CH{b,d}\CH{a,b}\CH{b,d}\CH{a,c}\CH{a,b} \\
        \approx{}& \blue{\CH{a,b}\CH{a,b}\CH{c,d}}\CH{a,c}\CH{b,d}\CH{a,b}\CH{b,d}\CH{a,c}\CH{a,b} \tag{by \cref{rel:a3,rel:b6}} \\
        \approx{}& \CH{a,b}\blue{\CH{a,c}\CH{b,d}\CH{a,b}\CH{c,d}}\CH{a,b}\CH{b,d}\CH{a,c}\CH{a,b} \tag{by \cref{rel:a3,rel:b6,,rel:f1}} \\
        \approx{}& \CH{a,b}\CH{a,c}\CH{b,d}\blue{\CH{c,d}}\CH{b,d}\CH{a,c}\CH{a,b} \tag{by \cref{rel:a3,rel:b6}} \\
        \approx{}& \blue{\CH{a,c}\CH{b,d}\CH{a,b}\CH{b,d}\CH{a,c}}. \tag{by \cref{prop:h0}}
    \end{align*}
\end{proof}

\begin{proposition}
    \label{prop:h1}
    \begin{gather*}
        \CH{a,f}\CH{b,e}\CH{a,b}\CH{b,e}\CH{a,f} \CH{c,f}\CH{d,e}\CH{f,e}\CH{d,e}\CH{c,f} \CH{a,f}\CH{b,e}\CH{a,b}\CH{b,e}\CH{a,f} \\
        \approx{} \CH{a,c}\CH{b,d}\CH{a,b}\CH{b,d}\CH{a,c}.
    \end{gather*}
\end{proposition}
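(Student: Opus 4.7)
The strategy will mirror the style of the proof of \cref{prop:h0}, but using \cref{rel:d4} as the key 6-index relation (instead of \cref{rel:d3}, which was the 4-index analogue used for \cref{prop:h0}). Abbreviating $A = \CH{a,f}\CH{b,e}\CH{a,b}\CH{b,e}\CH{a,f}$, $B = \CH{c,f}\CH{d,e}\CH{f,e}\CH{d,e}\CH{c,f}$, and $C = \CH{a,c}\CH{b,d}\CH{a,b}\CH{b,d}\CH{a,c}$, the goal is to prove $ABA \approx C$. Since each Hadamard generator is self-inverse (\cref{rel:a3}), this is equivalent to $ABAC \approx \varepsilon$, and it is also equivalent to $BA \approx AC$, which I expect to be the more convenient form because both sides are then palindromic products of 10 generators over disjoint-ish index groupings.

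The plan is as follows. First, I would use \cref{rel:b6} aggressively to reorganise $BA$ and $AC$ so that Hadamards with disjoint index sets are adjacent; for instance $\CH{c,f}$ in $B$ commutes with $\CH{b,e}$ in $A$ (index sets $\{c,f\}$ and $\{b,e\}$ are disjoint), and these rewrites concentrate the interesting non-commuting structure in the middle. Second, I would apply the derived relation \cref{rel:f1}, $(\CH{a,b}\CH{c,d}\CH{a,c}\CH{b,d})^2 \approx \varepsilon$, to collapse 4-index palindromic subwords wherever they appear; this is the same trick that made the proof of \cref{prop:h0} work, and I expect it to reduce the 10-generator expression on each side to a common 5-word plus some residual permutation pattern on indices $\{a,b,c,d,e,f\}$.

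The hard step, and the main obstacle, is reducing what remains to an instance of \cref{rel:d4}. Relation \cref{rel:d4} has the shape $(W \cdot X)^3 \approx V \cdot X$ with $X = \CX{c,e}\CX{d,f}$, so to invoke it I will need to introduce a canceling CNOT pair $X \cdot X \approx \varepsilon$ (via \cref{rel:a2}) at a carefully chosen location, then rewrite one of the $X$-factors using \cref{rel:d2} ($\CH{a,b}\CX{a,b} \approx \CZ{b}\CH{a,b}$) to convert between a CNOT-shaped gate and the surrounding Hadamard gates. This is the point at which the conjugation structure $ABA$ in our left-hand side will reveal itself as (a relabelled, rearranged instance of) the triple-product $(WX)^3$ appearing in \cref{rel:d4}.

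Finally, after applying \cref{rel:d4}, I expect the resulting word to differ from $C$ only by a short sequence of 4-index palindromes (killed by \cref{rel:f1}) and commutations (\cref{rel:b6}), together with the involution relation \cref{rel:a3} to eliminate any leftover squared Hadamards. The main obstacle I anticipate is identifying the precise substitution of indices $(a,b,c,d,e,f)$ into \cref{rel:d4} that matches the pattern assembled after Steps 1--2; I expect a small case analysis might be needed to locate the correct bridging CNOTs, but once the pattern is matched the remainder of the argument is routine, and the final graphical manipulation should produce $C$ on the nose.
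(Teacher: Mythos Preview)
Your high-level instinct that \cref{rel:d4} is the 6-index engine behind this identity is correct, but the paper's proof of \cref{prop:h1} does not invoke \cref{rel:d4} directly. Instead it is a four-line reduction to \cref{prop:h5}: insert $\CX{e,f}\CX{e,f}\approx\varepsilon$ on the left (via \cref{rel:a2}), then push one $\CX{e,f}$ through the entire word using \cref{rel:c5,rel:b5,rel:e2}. This swaps every occurrence of $e$ and $f$ in the subscripts, turning the left-hand side into exactly the left-hand side of \cref{prop:h5} conjugated by $\CX{e,f}$. Apply \cref{prop:h5}; the result $\CH{a,c}\CH{b,d}\CH{a,b}\CH{b,d}\CH{a,c}$ contains neither $e$ nor $f$, so the conjugation collapses by \cref{rel:b5,rel:a2}. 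All the real work with \cref{rel:d4} lives in the proof of \cref{prop:h5}, and \cref{prop:h1} is just its $e\leftrightarrow f$ image.

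Your direct plan is essentially a proposal to reprove \cref{prop:h5} from scratch (with $e,f$ swapped), which is in principle doable but much longer. More concretely, your Steps~1--2 are off: \cref{rel:f1} is not the tool that collapses the structure here. Looking at the actual derivation in \cref{prop:h5}, the reduction to \cref{rel:d4} proceeds via a chain of $\Leftrightarrow$-rewrites that introduce $\CX{c,e}\CX{d,f}$ factors and then use \cref{rel:c5} to absorb them into the $\CH{a,e}\CH{b,f}\cdots$ block (turning it into $\CH{a,c}\CH{b,d}\cdots$), together with \cref{rel:d1,rel:d2} to trade $Z$- and $X$-type generators through Hadamards; \cref{rel:f1} never appears. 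So if you pursue the direct route, replace the ``collapse 4-index palindromes with \cref{rel:f1}'' step by ``use \cref{rel:c5} to relabel Hadamard indices across the inserted $\CX{c,e}\CX{d,f}$''---that is the actual mechanism that produces the $(W\cdot X)^3$ shape needed for \cref{rel:d4}.
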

\begin{proof}
    \begin{align*}
        & \CH{a,f}\CH{b,e}\CH{a,b}\CH{b,e}\CH{a,f} \CH{c,f}\CH{d,e}\CH{f,e}\CH{d,e}\CH{c,f} \CH{a,f}\CH{b,e}\CH{a,b}\CH{b,e}\CH{a,f} \\
        \approx{}& \blue{\CX{e,f}\CX{e,f}}\CH{a,f}\CH{b,e}\CH{a,b}\CH{b,e}\CH{a,f} \CH{c,f}\CH{d,e}\CH{f,e}\CH{d,e}\CH{c,f} \\
        & \quad \CH{a,f}\CH{b,e}\CH{a,b}\CH{b,e}\CH{a,f} \tag{by \cref{rel:a2}} \\
        \approx{}& \CX{e,f}\blue{\CH{a,e}\CH{b,f}\CH{a,b}\CH{b,f}\CH{a,e} \CH{c,e}\CH{d,f}\CH{e,f}\CH{d,f}\CH{c,e}} \\
        & \quad \blue{\CH{a,e}\CH{b,f}\CH{a,b}\CH{b,f}\CH{a,e}\CX{e,f}} \tag{by \cref{rel:c5,rel:b5,,rel:e2}} \\
        \approx{}& \CX{e,f}\blue{\CH{a,c}\CH{b,d}\CH{a,b}\CH{b,d}\CH{a,c}}\CX{e,f} \tag{by \cref{prop:h5}} \\
        \approx{}& \blue{\CH{a,c}\CH{b,d}\CH{a,b}\CH{b,d}\CH{a,c}}. \tag{by \cref{rel:a2,rel:b5}}
    \end{align*}
\end{proof}

\begin{proposition}
    \label{prop:h2}
    \begin{gather*}
        \CH{c,f}\CH{d,e}\CH{f,e}\CH{d,e}\CH{c,f} \CH{a,c}\CH{b,d}\CH{a,b}\CH{b,d}\CH{a,c} \CH{a,f}\CH{b,e}\CH{a,b}\CH{b,e}\CH{a,f} \\
        \approx{} \CH{a,c}\CH{b,d}\CH{a,b}\CH{b,d}\CH{a,c}.
    \end{gather*}
\end{proposition}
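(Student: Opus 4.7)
The plan is to mirror the proof of \cref{prop:h1} and reduce the claim to \cref{prop:h6} via conjugation by $\CX{e,f}$. Concretely, I will insert $\CX{e,f}\CX{e,f} \approx \varepsilon$ at the outside of the left-hand side using \cref{rel:a2}, then propagate one $\CX{e,f}$ inward from each side so that the resulting expression becomes $\CX{e,f} \cdot M \cdot \CX{e,f}$ for some $M$ to which \cref{prop:h6} applies directly.

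During the propagation, \cref{rel:c5} together with \cref{rel:e2} give that conjugation by $\CX{e,f}$ swaps $e$ and $f$ in any $\CH$ generator that mentions exactly one of them: $\CH{c,f}\mapsto \CH{c,e}$, $\CH{d,e}\mapsto \CH{d,f}$, $\CH{a,f}\mapsto \CH{a,e}$, and $\CH{b,e}\mapsto \CH{b,f}$. The generators $\CH{a,c}$, $\CH{b,d}$, $\CH{a,b}$, whose indices lie outside $\{e,f\}$, commute with $\CX{e,f}$ by \cref{rel:b5}, and $\CH{f,e}$ is unaffected. Consequently the three blocks transform as
\begin{align*}
    \CH{c,f}\CH{d,e}\CH{f,e}\CH{d,e}\CH{c,f} &\mapsto \CH{c,e}\CH{d,f}\CH{e,f}\CH{d,f}\CH{c,e}, \\
    \CH{a,c}\CH{b,d}\CH{a,b}\CH{b,d}\CH{a,c} &\mapsto \CH{a,c}\CH{b,d}\CH{a,b}\CH{b,d}\CH{a,c}, \\
    \CH{a,f}\CH{b,e}\CH{a,b}\CH{b,e}\CH{a,f} &\mapsto \CH{a,e}\CH{b,f}\CH{a,b}\CH{b,f}\CH{a,e},
\end{align*}
and the sandwiched middle $M$ matches the left-hand side of \cref{prop:h6} exactly.

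Applying \cref{prop:h6} collapses $M$ to $\CH{a,c}\CH{b,d}\CH{a,b}\CH{b,d}\CH{a,c}$, which contains no $e$ or $f$ index. The two flanking $\CX{e,f}$'s then commute through this block by \cref{rel:b5} and cancel via \cref{rel:a2}, leaving the required right-hand side. The argument is a direct replay of h1's derivation, with h6 taking the role that h5 plays there, so no genuine obstacle is anticipated; the only care required is bookkeeping the index transformations under $\CX{e,f}$-conjugation and verifying that the middle block precisely matches the statement of \cref{prop:h6}.
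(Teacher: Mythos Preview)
Your proposal is correct and follows exactly the same approach as the paper: insert $\CX{e,f}\CX{e,f}$ via \cref{rel:a2}, push one copy through using \cref{rel:c5}, \cref{rel:b5}, and \cref{rel:e2} to swap the $e,f$ indices in every $H$-generator, apply \cref{prop:h6}, and cancel the remaining $\CX{e,f}$'s against the invariant middle block. The only minor wording slip is that $\CH{f,e}$ is not literally ``unaffected'' by the conjugation---it becomes $\CH{e,f}$ via \cref{rel:e2}---but your displayed block transformation already records this correctly.
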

\begin{proof}
    \begin{align*}
        & \CH{c,f}\CH{d,e}\CH{f,e}\CH{d,e}\CH{c,f} \CH{a,c}\CH{b,d}\CH{a,b}\CH{b,d}\CH{a,c} \CH{a,f}\CH{b,e}\CH{a,b}\CH{b,e}\CH{a,f} \\
        \approx{}& \blue{\CX{e,f}\CX{e,f}}\CH{c,f}\CH{d,e}\CH{f,e}\CH{d,e}\CH{c,f} \CH{a,c}\CH{b,d}\CH{a,b}\CH{b,d}\CH{a,c} \\
        & \quad \CH{a,f}\CH{b,e}\CH{a,b}\CH{b,e}\CH{a,f} \tag{by \cref{rel:a2}} \\
        \approx{}& \CX{e,f}\blue{\CH{c,e}\CH{d,f}\CH{e,f}\CH{d,f}\CH{c,e} \CH{a,c}\CH{b,d}\CH{a,b}\CH{b,d}\CH{a,c}} \\
        & \quad \blue{\CH{a,e}\CH{b,f}\CH{a,b}\CH{b,f}\CH{a,e}\CX{e,f}} \tag{by \cref{rel:c5,rel:b5,,rel:e2}} \\
        \approx{}& \CX{e,f}\blue{\CH{a,c}\CH{b,d}\CH{a,b}\CH{b,d}\CH{a,c}}\CX{e,f} \tag{by \cref{prop:h6}} \\
        \approx{}& \blue{\CH{a,c}\CH{b,d}\CH{a,b}\CH{b,d}\CH{a,c}}. \tag{by \cref{rel:a2,rel:b5}}
    \end{align*}
\end{proof}

\begin{proposition}
    \label{prop:h3}
    \begin{gather*}
        \CH{a,f}\CH{b,e}\CH{a,e}\CH{b,e}\CH{a,f} \CH{c,f}\CH{d,e}\CH{f,d}\CH{d,e}\CH{c,f} \CH{a,f}\CH{b,e}\CH{a,e}\CH{b,e}\CH{a,f} \\
        \approx{} \CH{a,c}\CH{b,d}\CH{a,b}\CH{b,d}\CH{a,c}.
    \end{gather*}
\end{proposition}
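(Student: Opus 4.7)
The plan is to mirror the strategy used in the proofs of \cref{prop:h1,prop:h2}: sandwich the left-hand side with a pair of $\CX{e,f}$ gates and exploit this conjugation to swap the indices $e$ and $f$ throughout the expression. First I would insert $\CX{e,f}\CX{e,f} \approx \varepsilon$ (via \cref{rel:a2}) at each end, rewriting the expression as
\[
\CX{e,f} \bigl(\CX{e,f} \cdots \CX{e,f}\bigr) \CX{e,f}.
\]
Then I would push the inner $\CX{e,f}$'s through each Hadamard generator one at a time, replacing every index $f$ by $e$ and vice versa.

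For gates whose indices are already in the standard (increasing) order, this swap is immediate from \cref{rel:c4,rel:c5,rel:b5} together with the derived reversed-subscript variants of \cref{rel:c4,rel:c5}. The one factor that requires care is $\CH{f,d}$ inside the middle block, which has reversed subscripts; I would first normalise it via \cref{rel:e2} (writing $\CH{f,d} \approx \CX{d,f}\CH{d,f}\CX{d,f}$) and then conjugate by $\CX{e,f}$ using \cref{rel:c2,rel:c5}, obtaining $\CH{e,d}$. Under this index swap, $\CH{a,f}\CH{b,e}\CH{a,e}\CH{b,e}\CH{a,f}$ becomes $\CH{a,e}\CH{b,f}\CH{a,f}\CH{b,f}\CH{a,e}$ and $\CH{c,f}\CH{d,e}\CH{f,d}\CH{d,e}\CH{c,f}$ becomes $\CH{c,e}\CH{d,f}\CH{e,d}\CH{d,f}\CH{c,e}$, so the interior of the sandwich becomes precisely the left-hand side of \cref{prop:h7}.

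Applying \cref{prop:h7} rewrites this interior to $\CH{a,c}\CH{b,d}\CH{a,b}\CH{b,d}\CH{a,c}$. Since $e$ and $f$ are distinct from $a, b, c, d$, the remaining outer $\CX{e,f}$ on each side commutes past every Hadamard of this RHS by \cref{rel:b5}, and the two copies then annihilate via \cref{rel:a2}, yielding the desired equivalence. The only mildly delicate piece of bookkeeping is the handling of the reversed-subscript $\CH{f,d}$; beyond that, the argument is a direct analogue of the proofs of \cref{prop:h1,prop:h2}, and I do not foresee any genuine obstacle.
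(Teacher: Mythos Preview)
Your reduction to \cref{prop:h7} does not go through. After conjugating by $\CX{e,f}$ and swapping $e\leftrightarrow f$ throughout, the interior becomes
\[
\bigl(\CH{a,e}\CH{b,f}\CH{a,f}\CH{b,f}\CH{a,e}\bigr)\,\bigl(\CH{c,e}\CH{d,f}\CH{e,d}\CH{d,f}\CH{c,e}\bigr)\,\bigl(\CH{a,e}\CH{b,f}\CH{a,f}\CH{b,f}\CH{a,e}\bigr),
\]
which is \emph{not} the left-hand side of \cref{prop:h7}. The latter has the target $\CH{a,c}\CH{b,d}\CH{a,b}\CH{b,d}\CH{a,c}$ as its \emph{middle} block and is not a symmetric sandwich; your expression is a symmetric sandwich with all three blocks living on indices involving $e,f$. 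In fact the $e\leftrightarrow f$ swap applied to \cref{prop:h3} simply returns \cref{prop:h3} with the roles of $e$ and $f$ exchanged, so the argument is circular. The analogy with \cref{prop:h1} fails precisely because in \cref{prop:h1} the central letter of the outer block is $\CH{a,b}$, which is fixed by the swap, whereas here it is $\CH{a,e}$, which is not.

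The paper instead reduces \cref{prop:h3} to \cref{prop:h1} by a $\CZ{e}$ conjugation rather than an $\CX{e,f}$ conjugation: insert $\CX{b,e}\CX{b,e}$ in the outer block, use \cref{rel:c5} to turn $\CH{a,e}$ into $\CH{a,b}$, and then \cref{rel:d2} to convert the remaining $\CX{b,e}\CH{b,e}$ into $\CH{b,e}\CZ{e}$. Pushing that $\CZ{e}$ inward via \cref{rel:b4,rel:d2} converts $\CH{f,d}$ in the middle block to $\CH{f,e}$, landing exactly on the left-hand side of \cref{prop:h1} sandwiched by $\CZ{e}$'s, which then cancel by \cref{rel:a1,rel:b4}.
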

\begin{proof}
    \begin{align*}
        & \CH{a,f}\CH{b,e}\CH{a,e}\CH{b,e}\CH{a,f} \CH{c,f}\CH{d,e}\CH{f,d}\CH{d,e}\CH{c,f} \CH{a,f}\CH{b,e}\CH{a,e}\CH{b,e}\CH{a,f} \\
        \approx{}& \CH{a,f}\CH{b,e}\blue{\CX{b,e}\CX{b,e}}\CH{a,e}\CH{b,e}\CH{a,f} \CH{c,f}\CH{d,e}\CH{f,d}\CH{d,e}\CH{c,f} \\
        & \quad \CH{a,f}\CH{b,e}\CH{a,e}\CH{b,e}\CH{a,f} \tag{by \cref{rel:a2}} \\
        \approx{}& \CH{a,f}\CH{b,e}\CX{b,e}\blue{\CH{a,b}\CX{b,e}}\CH{b,e}\CH{a,f} \CH{c,f}\CH{d,e}\CH{f,d}\CH{d,e}\CH{c,f} \\
        & \quad \CH{a,f}\CH{b,e}\CH{a,e}\CH{b,e}\CH{a,f} \tag{by \cref{rel:c5}} \\
        \approx{}& \CH{a,f}\CH{b,e}\CX{b,e}\CH{a,b}\blue{\CH{b,e}\CZ{e}}\CH{a,f} \CH{c,f}\CH{d,e}\CH{f,d}\CH{d,e}\CH{c,f} \\
        & \quad \CH{a,f}\CH{b,e}\CH{a,e}\CH{b,e}\CH{a,f} \tag{by \cref{rel:d2}} \\
        \approx{}& \CH{a,f}\CH{b,e}\CX{b,e}\CH{a,b}\CH{b,e}\CH{a,f} \CH{c,f}\blue{\CH{d,e}\CX{d,e}}\CH{f,d}\CH{d,e}\CH{c,f} \\
        & \quad \CH{a,f}\CH{b,e}\CH{a,e}\CH{b,e}\CH{a,f} \tag{by \cref{rel:b4,rel:d2}} \\
        \approx{}& \CH{a,f}\CH{b,e}\CX{b,e}\CH{a,b}\CH{b,e}\CH{a,f} \CH{c,f}\CH{d,e}\blue{\CH{f,e}\CX{d,e}}\CH{d,e}\CH{c,f} \\
        & \quad \CH{a,f}\CH{b,e}\CH{a,e}\CH{b,e}\CH{a,f} \tag{by \cref{rel:c5}} \\
        \approx{}& \CH{a,f}\CH{b,e}\CX{b,e}\CH{a,b}\CH{b,e}\CH{a,f} \CH{c,f}\CH{d,e}\CH{f,e}\blue{\CH{d,e}\CZ{e}}\CH{c,f} \\
        & \quad \CH{a,f}\CH{b,e}\CH{a,e}\CH{b,e}\CH{a,f} \tag{by \cref{rel:d2}} \\
        \approx{}& \CH{a,f}\CH{b,e}\CX{b,e}\CH{a,b}\CH{b,e}\CH{a,f} \CH{c,f}\CH{d,e}\CH{f,e}\CH{d,e}\blue{\CH{c,f}} \\
        & \quad \blue{\CH{a,f}\CH{b,e}\CX{b,e}}\CH{a,e}\CH{b,e}\CH{a,f} \tag{by \cref{rel:b4,rel:d2}} \\
        \approx{}& \CH{a,f}\CH{b,e}\CX{b,e}\CH{a,b}\CH{b,e}\CH{a,f} \CH{c,f}\CH{d,e}\CH{f,e}\CH{d,e}{\CH{c,f}} \\
        & \quad\CH{a,f}\CH{b,e} \blue{\CH{a,b}\CX{b,e}}\CH{b,e}\CH{a,f} \tag{by \cref{rel:c5}} \\
        \approx{}& \CH{a,f}\blue{\CZ{e}\CH{b,e}}\CH{a,b}\CH{b,e}\CH{a,f} \CH{c,f}\CH{d,e}\CH{f,e}\CH{d,e}{\CH{c,f}} \\
        & \quad\CH{a,f}\CH{b,e} \CH{a,b}\blue{\CH{b,e}\CZ{e}}\CH{a,f} \tag{by \cref{rel:d2}} \\
        \approx{}& \blue{\CZ{e}\CH{a,f}}\CH{b,e}\CH{a,b}\CH{b,e}\CH{a,f} \CH{c,f}\CH{d,e}\CH{f,e}\CH{d,e}{\CH{c,f}} \\
        & \quad\CH{a,f}\CH{b,e} \CH{a,b}\CH{b,e}\blue{\CH{a,f}\CZ{e}} \tag{by \cref{rel:b4}} \\
        \approx{}& \CZ{e}\blue{\CH{a,c}\CH{b,d}\CH{a,b}\CH{b,d}\CH{a,c}}\CZ{e} \tag{by \cref{prop:h1}} \\
        \approx{}& \blue{\CH{a,c}\CH{b,d}\CH{a,b}\CH{b,d}\CH{a,c}}. \tag{by \cref{rel:a1,rel:b4}}
    \end{align*}
\end{proof}

\begin{proposition}
    \label{prop:h4}
    \begin{gather*}
        \CH{c,f}\CH{d,e}\CH{f,d}\CH{d,e}\CH{c,f} \CH{a,c}\CH{b,d}\CH{a,b}\CH{b,d}\CH{a,c} \CH{a,f}\CH{b,e}\CH{a,e}\CH{b,e}\CH{a,f} \\
        \approx{} \CH{a,c}\CH{b,d}\CH{a,b}\CH{b,d}\CH{a,c}.
    \end{gather*}
\end{proposition}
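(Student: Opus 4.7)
The plan is to reduce Proposition h4 to Proposition h2 by mirroring the derivation of Proposition h3 from Proposition h1. The key observation is that the differences between h4 and h2 are exactly the same as those between h3 and h1: the central $\CH{f,d}$ in the left block replaces $\CH{f,e}$, and the central $\CH{a,e}$ in the right block replaces $\CH{a,b}$. Because $e \notin \{a,b,c,d\}$, the generator $\CZ{e}$ commutes by \cref{rel:b4} with every factor of $\CH{a,c}\CH{b,d}\CH{a,b}\CH{b,d}\CH{a,c}$, so conjugating the equation of Proposition h2 by $\CZ{e}$ will leave its right-hand side unchanged modulo \cref{rel:a1}.

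Concretely, I would first insert identities $\CX{b,e}\CX{b,e} \approx \varepsilon$ (by \cref{rel:a2}) flanking the central $\CH{a,e}$ of the right block, and $\CX{d,e}\CX{d,e} \approx \varepsilon$ flanking the central $\CH{f,d}$ of the left block. Applying \cref{rel:c5} in the forms $\CX{b,e}\CH{a,e} \approx \CH{a,b}\CX{b,e}$ and $\CX{d,e}\CH{f,d} \approx \CH{f,e}\CX{d,e}$ then converts these central generators into the ones appearing in Proposition h2, at the cost of surplus $\CX$ factors. The surplus $\CX$'s are next absorbed by the adjacent $\CH{b,e}$ and $\CH{d,e}$ via \cref{rel:d2} (in the forms $\CH{b,e}\CX{b,e} \approx \CZ{e}\CH{b,e}$ and $\CX{b,e}\CH{b,e} \approx \CH{b,e}\CZ{e}$, and analogously for indices $d,e$), producing $\CZ{e}$ factors inside each outer block, which are then shuffled outward through $\CH{a,f}$ and $\CH{c,f}$ using \cref{rel:b4}. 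The overall effect rewrites the LHS of h4 as $\CZ{e}\cdot(\text{LHS of Prop.~h2})\cdot\CZ{e}$. Applying Proposition h2 reduces this to $\CZ{e}\cdot\CH{a,c}\CH{b,d}\CH{a,b}\CH{b,d}\CH{a,c}\cdot\CZ{e}$, and commuting one $\CZ{e}$ through the middle word via \cref{rel:b4} together with $\CZ{e}^2 \approx \varepsilon$ (\cref{rel:a1}) delivers the desired right-hand side.

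The main obstacle is bookkeeping: each $\CX$ insertion must be shepherded so that the $\CZ{e}$ factors emerging from the left and right blocks end up in matching positions that ultimately cancel after passing through the middle. The calculation is otherwise mechanical, following the exact template of the proof of Proposition h3 but substituting Proposition h2 for Proposition h1 throughout. A minor subtlety is that, unlike in h3 where the $\{c,d,e,f\}$-block sits between two $\{a,b,e,f\}$-blocks, in h4 the $\{a,b,c,d\}$-block sits between a $\{c,d,e,f\}$-block and an $\{a,b,e,f\}$-block; the $\CZ{e}$ produced on the left side must therefore commute out through $\CH{c,f}$ rather than through another outer block, but this is still covered by \cref{rel:b4} since $e \notin \{c,f\}$.
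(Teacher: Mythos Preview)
Your proposal is correct and matches the paper's approach: both arguments convert $\CH{f,d}\to\CH{f,e}$ and $\CH{a,e}\to\CH{a,b}$ by $\CX$-conjugation (via \cref{rel:c5}), absorb the surplus $\CX$'s into $\CZ{e}$'s using \cref{rel:d2}, commute these out by \cref{rel:b4} to obtain $\CZ{e}\cdot(\text{LHS of \cref{prop:h2}})\cdot\CZ{e}$, and finish with \cref{prop:h2} plus \cref{rel:a1,rel:b4}. The only cosmetic difference is that the paper inserts a single $\CX{d,e}\CX{d,e}$ pair in the left block and threads the resulting $\CZ{e}$ through the middle to perform the right-block conversion, whereas you insert one $\CX$-pair per outer block and let the two inner $\CZ{e}$'s cancel across $M$; the net effect is the same.
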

\begin{proof}
    \begin{align*}
        & \CH{c,f}\CH{d,e}\CH{f,d}\CH{d,e}\CH{c,f} \CH{a,c}\CH{b,d}\CH{a,b}\CH{b,d}\CH{a,c} \CH{a,f}\CH{b,e}\CH{a,e}\CH{b,e}\CH{a,f} \\
        \approx{}& \CH{c,f}\CH{d,e}\blue{\CX{d,e}\CX{d,e}}\CH{f,d}\CH{d,e}\CH{c,f} \CH{a,c}\CH{b,d}\CH{a,b}\CH{b,d}\CH{a,c} \\
        & \quad \CH{a,f}\CH{b,e}\CH{a,e}\CH{b,e}\CH{a,f} \tag{by \cref{rel:a2}} \\
        \approx{}& \CH{c,f}\CH{d,e}\CX{d,e}\blue{\CH{f,e}\CX{d,e}}\CH{d,e}\CH{c,f} \CH{a,c}\CH{b,d}\CH{a,b}\CH{b,d}\CH{a,c} \\
        & \quad \CH{a,f}\CH{b,e}\CH{a,e}\CH{b,e}\CH{a,f} \tag{by \cref{rel:c5}} \\
        \approx{}& \CH{c,f}\CH{d,e}\CX{d,e}\CH{f,e}\blue{\CH{d,e}\CZ{e}}\CH{c,f} \CH{a,c}\CH{b,d}\CH{a,b}\CH{b,d}\CH{a,c} \\
        & \quad \CH{a,f}\CH{b,e}\CH{a,e}\CH{b,e}\CH{a,f} \tag{by \cref{rel:d2}} \\
        \approx{}& \CH{c,f}\CH{d,e}\CX{d,e}\CH{f,e}\CH{d,e}\blue{\CH{c,f} \CH{a,c}\CH{b,d}\CH{a,b}\CH{b,d}\CH{a,c}} \\
        & \quad \blue{\CH{a,f}\CZ{e}}\CH{b,e}\CH{a,e}\CH{b,e}\CH{a,f} \tag{by \cref{rel:b4}} \\
        \approx{}& \CH{c,f}\CH{d,e}\CX{d,e}\CH{f,e}\CH{d,e}{\CH{c,f} \CH{a,c}\CH{b,d}\CH{a,b}\CH{b,d}\CH{a,c}} \\
        & \quad\CH{a,f} \blue{\CH{b,e}\CX{b,e}}\CH{a,e}\CH{b,e}\CH{a,f} \tag{by \cref{rel:d2}} \\
        \approx{}& \CH{c,f}\CH{d,e}\CX{d,e}\CH{f,e}\CH{d,e}{\CH{c,f} \CH{a,c}\CH{b,d}\CH{a,b}\CH{b,d}\CH{a,c}} \\
        & \quad\CH{a,f} \CH{b,e}\blue{\CH{a,b}\CX{b,e}}\CH{b,e}\CH{a,f} \tag{by \cref{rel:c5}} \\
        \approx{}& \CH{c,f}\blue{\CZ{e}\CH{d,e}}\CH{f,e}\CH{d,e}{\CH{c,f} \CH{a,c}\CH{b,d}\CH{a,b}\CH{b,d}\CH{a,c}} \\
        & \quad\CH{a,f} \CH{b,e}\CH{a,b}\blue{\CH{b,e}\CZ{e}}\CH{a,f} \tag{by \cref{rel:d2}} \\
        \approx{}& \blue{\CZ{e}}\CH{c,f}\CH{d,e}\CH{f,e}\CH{d,e}{\CH{c,f} \CH{a,c}\CH{b,d}\CH{a,b}\CH{b,d}\CH{a,c}} \\
        & \quad\CH{a,f} \CH{b,e}\CH{a,b}\CH{b,e}\blue{\CH{a,f}\CZ{e}} \tag{by \cref{rel:b4}} \\
        \approx{}& \CZ{e}\blue{\CH{a,c}\CH{b,d}\CH{a,b}\CH{b,d}\CH{a,c}} \CZ{e} \tag{by \cref{prop:h2}} \\
        \approx{}& \blue{\CH{a,c}\CH{b,d}\CH{a,b}\CH{b,d}\CH{a,c}}. \tag{by \cref{rel:a1,rel:b4}}
    \end{align*}
\end{proof}

\begin{proposition}
    \label{prop:h5}
    \begin{gather*}
        \CH{a,e}\CH{b,f}\CH{a,b}\CH{b,f}\CH{a,e} \CH{c,e}\CH{d,f}\CH{e,f}\CH{d,f}\CH{c,e} \CH{a,e}\CH{b,f}\CH{a,b}\CH{b,f}\CH{a,e} \\
        \approx{} \CH{a,c}\CH{b,d}\CH{a,b}\CH{b,d}\CH{a,c}.
    \end{gather*}
\end{proposition}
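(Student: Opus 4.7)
The proposition asserts that sandwiching the block $T = \CH{c,e}\CH{d,f}\CH{e,f}\CH{d,f}\CH{c,e}$ between two copies of the conjugator $S = \CH{a,e}\CH{b,f}\CH{a,b}\CH{b,f}\CH{a,e}$ yields $U = \CH{a,c}\CH{b,d}\CH{a,b}\CH{b,d}\CH{a,c}$. Intuitively, $S$ acts as a generalised swap exchanging the pair $(a,b)$ with the pair $(e,f)$, so conjugating $T$ (whose centre is $\CH{e,f}$) by $S$ should produce $U$ (whose centre is $\CH{a,b}$). My plan is to verify this algebraically using the relations of \cref{fig:relations} together with the derived relations \cref{rel:f1,rel:f2}, with \cref{rel:d4} providing the decisive non-trivial ingredient.

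First I would establish the involutive identity $S\cdot S \approx \varepsilon$ by cascading inner cancellations via \cref{rel:a3}, which lets me rephrase the goal equivalently as $T \approx SUS$. Next I would use \cref{rel:b6} to reorder the tails of $U$ and $T$ into the d4-compatible forms $\CH{a,c}\CH{b,d}\CH{a,b}\CH{a,c}\CH{b,d}$ and $\CH{c,e}\CH{d,f}\CH{e,f}\CH{c,e}\CH{d,f}$, and then derive from \cref{rel:d4} the consequence $UQUQU \approx T$ where $Q = \CX{c,e}\CX{d,f}$; this follows by cancelling one trailing $Q$ using $Q\cdot Q \approx \varepsilon$ via \cref{rel:a2,rel:b3}.

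The core of the proof is then to show $SUS \approx UQUQU$. I would expand $S$ explicitly and push its constituent $\CH$-gates past the factors of $U$ using the commutation relations \cref{rel:b4,rel:b5,rel:b6}, invoking the conjugation relations \cref{rel:c4,rel:c5} whenever a shared index forces a transformation. The effect should be to convert certain $\CH$-products at shared indices $(a,e)$ and $(b,f)$ into $\CX$-gates via \cref{rel:d2}, matching the $Q$ factors on the right, with final cleanup via \cref{rel:f1,rel:f2} to regroup any leftover four-$\CH$ blocks.

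The main obstacle will be the length and precision of the chain of rewrites: $SUS$ is a fifteen-term word that must be transformed to match $UQUQU$, and the middle $\CH{a,b}$ of both $S$ and $U$ sits exactly at the position where the swap needs to take effect, which obstructs naive commutation. A fallback strategy, should the direct route become unmanageable, would be to first establish conjugation identities of the form $S\CH{i,e}S \approx \CH{i,a}$ and $S\CH{e,f}S \approx \CH{a,b}$ for $i \in \{c,d\}$ using \cref{rel:b4,rel:b6,rel:c4,rel:c5,rel:d1,rel:d2,rel:d3}, and then apply these to the five factors of $T$ individually to assemble $STS$ into $U$ factor by factor.
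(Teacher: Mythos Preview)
Your identification of \cref{rel:d4} as the decisive ingredient is correct, and deriving $T \approx UQUQU$ from it via \cref{rel:b6,rel:a2,rel:b3} is sound. But observe that your ``core step'' $SUS \approx UQUQU$ is not a simplification: since $UQUQU \approx T$, it is exactly your reformulated goal $SUS \approx T$ under another name. Your plan to attack it by expanding $SUS$ and commuting $\CH$-gates does not get off the ground, because $S$ (on indices $a,b,e,f$) and $U$ (on indices $a,b,c,d$) interact at the shared indices $a,b$, and pairs like $\CH{a,e},\CH{a,c}$ or $\CH{a,b},\CH{a,b}$ have no elementary commutation rule among \cref{rel:b4,rel:b5,rel:b6,rel:c4,rel:c5}.

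The paper proceeds differently: it rewrites the \emph{equation} $STS \approx U$ through a chain of equivalences until it literally becomes \cref{rel:d4}. The one-line observation driving this, which your proposal does not isolate, is that \cref{rel:c5} together with \cref{rel:b5} gives $SQ \approx QU$: pushing $\CX{c,e}$ leftward through $S$ turns each $\CH{a,e}$ into $\CH{a,c}$, pushing $\CX{d,f}$ turns each $\CH{b,f}$ into $\CH{b,d}$, and $\CH{a,b}$ commutes with both. With this in hand the paper moves one $S$ to the right ($ST \approx US$), appends $Q^2 \approx \varepsilon$, applies $SQ \approx QU$ to get $ST \approx (UQ)^2$, inserts another $Q^2$ and repeats on the remaining $S$ to reach $QT \approx (UQ)^3$, and then uses \cref{rel:d1,rel:d2} to commute $Q$ across $T$, landing on $(UQ)^3 \approx TQ$, which is \cref{rel:d4} after a \cref{rel:b6} reordering.

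Your fallback is not viable either: $S$ is not a signed permutation, and the proposed identity $S\CH{c,e}S \approx \CH{c,a}$ fails semantically. Apply both sides to $e_b$: the right-hand side fixes $e_b$, but on the left $\sem{S}e_b$ has a nonzero $e_e$-component, which $\CH{c,e}$ spreads into a nonzero $e_c$-component that the final $\sem{S}$ (acting only on $a,b,e,f$) cannot remove.
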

\begin{proof}
    \begin{align*}
        & \CH{a,e}\CH{b,f}\CH{a,b}\CH{b,f}\CH{a,e} \CH{c,e}\CH{d,f}\CH{e,f}\CH{d,f}\CH{c,e} \CH{a,e}\CH{b,f}\CH{a,b}\CH{b,f}\CH{a,e} \\
        & \approx{} \CH{a,c}\CH{b,d}\CH{a,b}\CH{b,d}\CH{a,c} \\
        \Leftrightarrow{}& \CH{a,e}\CH{b,f}\CH{a,b}\CH{b,f}\CH{a,e} \CH{c,e}\CH{d,f}\CH{e,f}\CH{d,f}\CH{c,e}\\
        & \approx{} \CH{a,c}\CH{b,d}\CH{a,b}\CH{b,d}\CH{a,c} \blue{\CH{a,e}\CH{b,f}\CH{a,b}\CH{b,f}\CH{a,e} } \tag{by \cref{rel:a3}} \\
        \Leftrightarrow{}& \CH{a,e}\CH{b,f}\CH{a,b}\CH{b,f}\CH{a,e} \CH{c,e}\CH{d,f}\CH{e,f}\CH{d,f}\CH{c,e}\\
        & \approx{} \CH{a,c}\CH{b,d}\CH{a,b}\CH{b,d}\CH{a,c} \CH{a,e}\CH{b,f}\CH{a,b}\CH{b,f}\CH{a,e} \blue{\CX{c,e}\CX{d,f}\CX{c,e}\CX{d,f}} \tag{by \cref{rel:a2,rel:b3}} \\
        \Leftrightarrow{}& \CH{a,e}\CH{b,f}\CH{a,b}\CH{b,f}\CH{a,e} \CH{c,e}\CH{d,f}\CH{e,f}\CH{d,f}\CH{c,e} \\
        & \approx{} \CH{a,c}\CH{b,d}\CH{a,b}\CH{b,d}\CH{a,c} \blue{\CX{c,e}\CX{d,f}\CH{a,c}\CH{b,d}\CH{a,b}\CH{b,d}\CH{a,c}} \CX{c,e}\CX{d,f} \tag{by \cref{rel:c5,rel:e1}} \\
        \Leftrightarrow{}& \CH{a,e}\CH{b,f}\CH{a,b}\CH{b,f}\CH{a,e} \CH{c,e}\CH{d,f}\CH{e,f}\CH{d,f}\CH{c,e} \\
        & \approx{} \blue{(\CH{a,c}\CH{b,d}\CH{a,b}\CH{b,d}\CH{a,c} \CX{c,e}\CX{d,f})^2} \\
        \Leftrightarrow{}& \CH{a,e}\CH{b,f}\CH{a,b}\CH{b,f}\CH{a,e}\blue{\CX{c,e}\CX{d,f}\CX{c,e}\CX{d,f}} \CH{c,e}\CH{d,f}\CH{e,f}\CH{d,f}\CH{c,e} \\
        & \approx{} (\CH{a,c}\CH{b,d}\CH{a,b}\CH{b,d}\CH{a,c} \CX{c,e}\CX{d,f})^2 \tag{by \cref{rel:a2,rel:b3}}\\
        \Leftrightarrow{}& \blue{\CX{c,e}\CX{d,f}\CH{a,c}\CH{b,d}\CH{a,b}\CH{b,d}\CH{a,c}}\CX{c,e}\CX{d,f} \CH{c,e}\CH{d,f}\CH{e,f}\CH{d,f}\CH{c,e} \\
        & \approx{} (\CH{a,c}\CH{b,d}\CH{a,b}\CH{b,d}\CH{a,c} \CX{c,e}\CX{d,f})^2 \tag{by \cref{rel:c5,rel:e1}}\\
        \Leftrightarrow{}& \CX{c,e}\CX{d,f} \CH{c,e}\CH{d,f}\CH{e,f}\CH{d,f}\CH{c,e} \\
        & \approx{} \blue{\CH{a,c}\CH{b,d}\CH{a,b}\CH{b,d}\CH{a,c} \CX{c,e}\CX{d,f}}(\CH{a,c}\CH{b,d}\CH{a,b}\CH{b,d}\CH{a,c} \CX{c,e}\CX{d,f})^2 \tag{by \cref{rel:a2,rel:a3,,rel:b3}}\\
        \Leftrightarrow{}& \CX{c,e}\CX{d,f} \CH{c,e}\CH{d,f}\CH{e,f}\CH{d,f}\CH{c,e} \approx{} \blue{(\CH{a,c}\CH{b,d}\CH{a,b}\CH{b,d}\CH{a,c} \CX{c,e}\CX{d,f})^3} \\
        \Leftrightarrow{}& \blue{\CH{c,e}\CH{d,f}\CZ{e}\CZ{f}}\CH{e,f}\CH{d,f}\CH{c,e} \approx{} (\CH{a,c}\CH{b,d}\CH{a,b}\CH{b,d}\CH{a,c} \CX{c,e}\CX{d,f})^3 \tag{by \cref{rel:a3,,rel:b1,rel:b4,,rel:d2}}\\
        \Leftrightarrow{}& \CH{c,e}\CH{d,f}\blue{\CH{e,f}\CZ{e}\CZ{f}}\CH{d,f}\CH{c,e} \approx{} (\CH{a,c}\CH{b,d}\CH{a,b}\CH{b,d}\CH{a,c} \CX{c,e}\CX{d,f})^3 \tag{by \cref{rel:d1}}\\
        \Leftrightarrow{}&  \CH{c,e}\CH{d,f}\CH{e,f}\blue{\CH{d,f}\CH{c,e}\CX{c,e}\CX{d,f}} \approx{} (\CH{a,c}\CH{b,d}\CH{a,b}\CH{b,d}\CH{a,c} \CX{c,e}\CX{d,f})^3, \tag{by \cref{rel:a3,,rel:b1,rel:b4,,rel:d2}}
    \end{align*}
    which is obtained directly from \cref{rel:d4}.
\end{proof}

\begin{proposition}
    \label{prop:h6}
    \begin{gather*}
        \CH{c,e}\CH{d,f}\CH{e,f}\CH{d,f}\CH{c,e} \CH{a,c}\CH{b,d}\CH{a,b}\CH{b,d}\CH{a,c} \CH{a,e}\CH{b,f}\CH{a,b}\CH{b,f}\CH{a,e} \\
        \approx{} \CH{a,c}\CH{b,d}\CH{a,b}\CH{b,d}\CH{a,c}.
    \end{gather*}
\end{proposition}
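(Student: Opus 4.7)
The plan is to reduce \cref{prop:h6} to two intermediate identities that together make the desired equivalence a short algebraic cancellation, mirroring but differing in structure from the proof of \cref{prop:h5}. Write $C = \CH{a,c}\CH{b,d}\CH{a,b}\CH{b,d}\CH{a,c}$, $A' = \CH{a,e}\CH{b,f}\CH{a,b}\CH{b,f}\CH{a,e}$, $B' = \CH{c,e}\CH{d,f}\CH{e,f}\CH{d,f}\CH{c,e}$, and $D = \CX{c,e}\CX{d,f}$. Once we establish (i) $B' \approx CDCDC$ and (ii) $DCD \approx A'$, the conclusion follows from the chain
\begin{align*}
B' C A' \approx (CDCDC) \cdot C \cdot (DCD) \approx CDCD \cdot DCD \approx CDC \cdot CD \approx CD \cdot D \approx C,
\end{align*}
where each step uses only $C^2 \approx \varepsilon$ (derivable by repeatedly applying \cref{rel:a3} together with \cref{rel:b6} to fold $C \cdot C$ onto itself) or $D^2 \approx \varepsilon$ (from \cref{rel:a2} and \cref{rel:b3}).

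Identity (i) will come directly from \cref{rel:d4} after a mild rewriting. Using \cref{rel:b6} to swap the disjoint Hadamard pairs $\CH{a,c}\CH{b,d}$, the block $\CH{a,c}\CH{b,d}\CH{a,b}\CH{a,c}\CH{b,d}$ on the left-hand side of \cref{rel:d4} becomes $C$; analogously $\CH{c,e}\CH{d,f}\CH{e,f}\CH{c,e}\CH{d,f}$ on the right becomes $B'$. Hence \cref{rel:d4} reads $(CD)^3 \approx B' D$, which gives $B' \approx (CD)^3 D \approx CDCDC$ after $D^2 \approx \varepsilon$.

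Identity (ii) is proved by an iterative commutation: push $\CX{c,e}$ from the left across the five Hadamard factors of $C$ using \cref{rel:b5} (commutation past $\CH{b,d}$ and $\CH{a,b}$) and \cref{rel:c5} (which converts $\CX{c,e}\CH{a,c}$ into $\CH{a,e}\CX{c,e}$ at each endpoint), so that $\CX{c,e}\, C \approx \CH{a,e}\CH{b,d}\CH{a,b}\CH{b,d}\CH{a,e} \, \CX{c,e}$; repeat with $\CX{d,f}$ to turn each $\CH{b,d}$ into $\CH{b,f}$. After the trailing $\CX$ factors annihilate via \cref{rel:a2}, the result is exactly $A'$, giving $DCD \approx A'$.

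The main obstacle is simply the bookkeeping in Identity (ii): although each move is an immediate application of \cref{rel:b5} or \cref{rel:c5}, there are on the order of ten such moves and the order of factors must be tracked carefully. The combinatorial content is transparent, though, because $\CX{c,e}$ only affects the two outer $\CH{a,c}$ generators (turning their $c$ index into $e$) and $\CX{d,f}$ only affects the two $\CH{b,d}$ generators (turning their $d$ index into $f$), with the central $\CH{a,b}$ left invariant since $\{a,b\}$ is disjoint from both $\{c,e\}$ and $\{d,f\}$.
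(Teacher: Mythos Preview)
Your proposal is correct and takes essentially the same approach as the paper. Both arguments rest on the two facts you isolate: the conjugation $DCD \approx A'$ (proved by pushing the $X$-generators through $C$ via \cref{rel:b5,rel:c5}, with an implicit \cref{rel:b3} to reorder $\CX{c,e}$ and $\CX{d,f}$) and the identification of $(CD)^3 \approx B'D$ with \cref{rel:d4} after a \cref{rel:b6} swap; the paper just presents these as a single backward chain of biconditionals terminating in \cref{rel:d4}, whereas you extract (i) and (ii) first and then run a forward cancellation.
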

\begin{proof}
    \begin{align*}
        & \CH{c,e}\CH{d,f}\CH{e,f}\CH{d,f}\CH{c,e} \CH{a,c}\CH{b,d}\CH{a,b}\CH{b,d}\CH{a,c} \CH{a,e}\CH{b,f}\CH{a,b}\CH{b,f}\CH{a,e} \\
        &\approx{} \CH{a,c}\CH{b,d}\CH{a,b}\CH{b,d}\CH{a,c} \\
        \Leftrightarrow{}& \blue{\varepsilon}\CH{a,c}\CH{b,d}\CH{a,b}\CH{b,d}\CH{a,c} \CH{a,e}\CH{b,f}\CH{a,b}\CH{b,f}\CH{a,e} \\
        &\approx{} \blue{\CH{c,e}\CH{d,f}\CH{e,f}\CH{d,f}\CH{c,e}} \CH{a,c}\CH{b,d}\CH{a,b}\CH{b,d}\CH{a,c} \tag{by \cref{rel:a3}}\\
        \Leftrightarrow{}& \CH{a,c}\CH{b,d}\CH{a,b}\CH{b,d}\CH{a,c} \CH{a,e}\CH{b,f}\CH{a,b}\CH{b,f}\CH{a,e}\blue{\CX{c,e}\CX{d,f}\CX{c,e}\CX{d,f}} \\
        &\approx{} \CH{c,e}\CH{d,f}\CH{e,f}\CH{d,f}\CH{c,e} \CH{a,c}\CH{b,d}\CH{a,b}\CH{b,d}\CH{a,c} \tag{by \cref{rel:a2,rel:b3}}\\
        \Leftrightarrow{}& \CH{a,c}\CH{b,d}\CH{a,b}\CH{b,d}\CH{a,c} \blue{\CX{c,e}\CX{d,f}\CH{a,c}\CH{b,d}\CH{a,b}\CH{b,d}\CH{a,c}}\CX{c,e}\CX{d,f} \\
        &\approx{} \CH{c,e}\CH{d,f}\CH{e,f}\CH{d,f}\CH{c,e} \CH{a,c}\CH{b,d}\CH{a,b}\CH{b,d}\CH{a,c} \tag{by \cref{rel:c5,rel:e1}} \\
        \Leftrightarrow{}& (\CH{a,c}\CH{b,d}\CH{a,b}\CH{b,d}\CH{a,c} \CX{c,e}\CX{d,f})^2\blue{\CH{a,c}\CH{b,d}\CH{a,b}\CH{b,d}\CH{a,c}} \\
        &\approx{} \CH{c,e}\CH{d,f}\CH{e,f}\CH{d,f}\CH{c,e}\blue{\varepsilon} \tag{by \cref{rel:a3}} \\
        \Leftrightarrow{}& (\CH{a,c}\CH{b,d}\CH{a,b}\CH{b,d}\CH{a,c} \CX{c,e}\CX{d,f})^2{\CH{a,c}\CH{b,d}\CH{a,b}\CH{b,d}\CH{a,c}}\blue{\CX{c,e}\CX{d,f}} \\
        &\approx{} \CH{c,e}\CH{d,f}\CH{e,f}\CH{d,f}\CH{c,e}\blue{\CX{c,e}\CX{d,f}} \tag{by \cref{rel:a3}} \\
        \Leftrightarrow{}& \blue{(\CH{a,c}\CH{b,d}\CH{a,b}\CH{b,d}\CH{a,c} \CX{c,e}\CX{d,f})^3} \approx{} \CH{c,e}\CH{d,f}\CH{e,f}\CH{d,f}\CH{c,e}\CX{c,e}\CX{d,f},
    \end{align*}
    which is obtained directly from \cref{rel:d4}.
\end{proof}

\begin{proposition}
    \label{prop:h7}
    \begin{gather*}
        \CH{c,e}\CH{d,f}\CH{e,d}\CH{d,f}\CH{c,e} \CH{a,c}\CH{b,d}\CH{a,b}\CH{b,d}\CH{a,c} \CH{a,e}\CH{b,f}\CH{a,f}\CH{b,f}\CH{a,e} \\
        \approx{} \CH{a,c}\CH{b,d}\CH{a,b}\CH{b,d}\CH{a,c}.
    \end{gather*}
\end{proposition}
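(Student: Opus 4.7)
The plan is to reduce Proposition~\ref{prop:h7} to the already proved \cref{prop:h6} by recognising that the left-hand side of h7 is conjugate by $\CZ{f}$ to that of h6. Observe that the LHS of h7 differs from the LHS of h6 in exactly two places: the middle generator $\CH{e,f}$ of the first outer factor is replaced by $\CH{e,d}$, and the middle generator $\CH{a,b}$ of the third outer factor is replaced by $\CH{a,f}$. Applying \cref{rel:c5} (extended by \cref{rel:e2} to arbitrary index orderings), one obtains
\[\CH{a,f} \approx \CX{b,f}\CH{a,b}\CX{b,f}, \qquad \CH{e,d} \approx \CX{d,f}\CH{e,f}\CX{d,f}.\]

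Substituting these identities into the outer factors and invoking \cref{rel:d2} (together with \cref{rel:a3}) in the forms $\CH{y,f}\CX{y,f} \approx \CZ{f}\CH{y,f}$ and $\CX{y,f}\CH{y,f} \approx \CH{y,f}\CZ{f}$ for $y\in\{b,d\}$, the two newly created copies of $\CX{\cdot,f}$ are absorbed into $\CZ{f}$'s sitting just outside the corresponding Hadamards. After commuting each $\CZ{f}$ past the remaining Hadamards $\CH{a,e}$ and $\CH{c,e}$ by \cref{rel:b4} (whose indices omit $f$), each outer factor takes the shape $\CZ{f}\cdot A_i'\cdot \CZ{f}$, where $A_1' = \CH{c,e}\CH{d,f}\CH{e,f}\CH{d,f}\CH{c,e}$ and $A_3' = \CH{a,e}\CH{b,f}\CH{a,b}\CH{b,f}\CH{a,e}$ are the outer factors of h6.

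The middle block $B = \CH{a,c}\CH{b,d}\CH{a,b}\CH{b,d}\CH{a,c}$ uses only indices from $\{a,b,c,d\}$, so by \cref{rel:b4} every generator of $B$ commutes with $\CZ{f}$. Hence the LHS of h7 is equivalent to
\[\CZ{f}\cdot A_1'\cdot \CZ{f}\cdot B\cdot \CZ{f}\cdot A_3'\cdot \CZ{f}.\]
Pushing the two inner $\CZ{f}$'s through $B$ brings them adjacent, where they cancel by \cref{rel:a1}, leaving $\CZ{f}\cdot (A_1' B A_3')\cdot \CZ{f}$. Applying \cref{prop:h6} reduces this to $\CZ{f}\cdot B\cdot \CZ{f}$, which collapses to $B$ by one more commutation via \cref{rel:b4} and \cref{rel:a1}.

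The main obstacle is purely bookkeeping: one must verify that each invocation of \cref{rel:c5}, \cref{rel:d2}, and \cref{rel:b4} respects the required distinctness of indices (which holds because $a,b,c,d,e,f$ are pairwise distinct throughout), and that the two $\CZ{f}$'s flanking $B$ are transported through it in the correct direction before cancellation. Structurally, the reduction mirrors that used in \cref{prop:h3} to reduce to \cref{prop:h1} via conjugation by $\CZ{e}$, so no genuinely new technique is required.
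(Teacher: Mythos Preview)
Your proposal is correct and follows essentially the same approach as the paper: both reduce to \cref{prop:h6} by recognising that the two outer blocks of h7 are $\CZ{f}$-conjugates of those in h6, using \cref{rel:c5} to rewrite $\CH{e,d}$ and $\CH{a,f}$ and \cref{rel:d2} to convert the resulting $\CX{\cdot,f}$'s into $\CZ{f}$'s that commute through the middle block via \cref{rel:b4}. The only cosmetic difference is that the paper transports a single $\CZ{f}$ from the first block through $B$ to perform the conversion in the third block directly (yielding two $\CZ{f}$'s total), whereas you convert both blocks independently (yielding four $\CZ{f}$'s) and then cancel the inner pair; the underlying calculation is identical.
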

\begin{proof}
    \begin{align*}
        & \CH{c,e}\CH{d,f}\CH{e,d}\CH{d,f}\CH{c,e} \CH{a,c}\CH{b,d}\CH{a,b}\CH{b,d}\CH{a,c} \CH{a,e}\CH{b,f}\CH{a,f}\CH{b,f}\CH{a,e} \\
        \approx{}& \CH{c,e}\CH{d,f}\blue{\CX{d,f}\CX{d,f}}\CH{e,d}\CH{d,f}\CH{c,e} \CH{a,c}\CH{b,d}\CH{a,b}\CH{b,d}\CH{a,c} \\
        & \quad \CH{a,e}\CH{b,f}\CH{a,f}\CH{b,f}\CH{a,e} \tag{by \cref{rel:a2}} \\
        \approx{}& \CH{c,e}\blue{\CZ{f}\CH{d,f}}\CX{d,f}\CH{e,d}\CH{d,f}\CH{c,e} \CH{a,c}\CH{b,d}\CH{a,b}\CH{b,d}\CH{a,c} \\
        & \quad \CH{a,e}\CH{b,f}\CH{a,f}\CH{b,f}\CH{a,e} \tag{by \cref{rel:d2}} \\
        \approx{}& \CH{c,e}\blue{\CZ{f}\CH{d,f}}\CX{d,f}\CH{e,d}\CH{d,f}\CH{c,e} \CH{a,c}\CH{b,d}\CH{a,b}\CH{b,d}\CH{a,c} \\
        & \quad \CH{a,e}\CH{b,f}\CH{a,f}\CH{b,f}\CH{a,e} \tag{by \cref{rel:d2}} \\
        \approx{}& \CH{c,e}\CZ{f}\CH{d,f}\blue{\CH{e,f}\CX{d,f}}\CH{d,f}\CH{c,e} \CH{a,c}\CH{b,d}\CH{a,b}\CH{b,d}\CH{a,c} \\
        & \quad \CH{a,e}\CH{b,f}\CH{a,f}\CH{b,f}\CH{a,e} \tag{by \cref{rel:c5}} \\
        \approx{}& \CH{c,e}\CZ{f}\CH{d,f}\CH{e,f}\blue{\CH{d,f}\CZ{f}}\CH{c,e} \CH{a,c}\CH{b,d}\CH{a,b}\CH{b,d}\CH{a,c} \\
        & \quad \CH{a,e}\CH{b,f}\CH{a,f}\CH{b,f}\CH{a,e} \tag{by \cref{rel:a3,,rel:d2}} \\
        \approx{}& \CH{c,e}\CZ{f}\CH{d,f}\CH{e,f}\CH{d,f}\blue{\CH{c,e} \CH{a,c}\CH{b,d}\CH{a,b}\CH{b,d}\CH{a,c}} \\
        & \quad \blue{\CH{a,e}\CZ{f}}\CH{b,f}\CH{a,f}\CH{b,f}\CH{a,e} \tag{by \cref{rel:b4}} \\
        \approx{}& \CH{c,e}\CZ{f}\CH{d,f}\CH{e,f}\CH{d,f}\CH{c,e} \CH{a,c}\CH{b,d}\CH{a,b}\CH{b,d}\CH{a,c} \\
        & \quad \CH{a,e}\blue{\CH{b,f}\CX{b,f}}\CH{a,f}\CH{b,f}\CH{a,e} \tag{by \cref{rel:d2}} \\
        \approx{}& \CH{c,e}\CZ{f}\CH{d,f}\CH{e,f}\CH{d,f}\CH{c,e} \CH{a,c}\CH{b,d}\CH{a,b}\CH{b,d}\CH{a,c} \\
        & \quad \CH{a,e}\CH{b,f}\blue{\CH{a,b}\CX{b,f}}\CH{b,f}\CH{a,e} \tag{by \cref{rel:c5}} \\
        \approx{}& \CH{c,e}\CZ{f}\CH{d,f}\CH{e,f}\CH{d,f}\CH{c,e} \CH{a,c}\CH{b,d}\CH{a,b}\CH{b,d}\CH{a,c} \\
        & \quad \CH{a,e}\CH{b,f}\CH{a,b}\blue{\CH{b,f}\CZ{f}}\CH{a,e} \tag{by \cref{rel:a3,,rel:d2}} \\
        \approx{}& \blue{\CZ{f}\CH{c,e}}\CH{d,f}\CH{e,f}\CH{d,f}\CH{c,e} \CH{a,c}\CH{b,d}\CH{a,b}\CH{b,d}\CH{a,c} \\
        & \quad \CH{a,e}\CH{b,f}\CH{a,b}\CH{b,f}\blue{\CH{a,e}\CZ{f}} \tag{by \cref{rel:b4}} \\
        \approx{}& \CZ{f}\blue{\CH{a,c}\CH{b,d}\CH{a,b}\CH{b,d}\CH{a,c}}\CZ{f} \tag{by \cref{prop:h6}} \\
        \approx{}& \blue{\CH{a,c}\CH{b,d}\CH{a,b}\CH{b,d}\CH{a,c}}. \tag{by \cref{rel:a1,rel:b4}}
    \end{align*}
\end{proof}

\begin{proposition}
    \label{prop:h8}
    \begin{gather*}
        \CH{a,f}\CH{b,e}\CH{a,b}\CH{b,e}\CH{a,f} \CH{c,f}\CH{d,e}\CH{f,e}\CH{d,e}\CH{c,f} \CH{a,f}\CH{b,e}\CH{a,b}\CH{b,e}\CH{a,f} \\
        \approx{} \CH{a,c}\CH{b,d}\CH{a,b}\CH{b,d}\CH{a,c}.
    \end{gather*}
\end{proposition}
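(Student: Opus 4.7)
As written, the statement of \cref{prop:h8} is syntactically identical to that of \cref{prop:h1}, so on a strictly literal reading the proof is a single citation of \cref{prop:h1}. However, the pairing pattern of the preceding propositions (\cref{prop:h1} with \cref{prop:h2}, \cref{prop:h3} with \cref{prop:h4}, \cref{prop:h5} with \cref{prop:h6}), together with the usage in case~(4)(ii) of the proof of \cref{lem:useful4}, strongly suggests a typo: the intended statement is the case-(4)(ii) companion of \cref{prop:h7}, namely
\[
\CH{a,e}\CH{b,f}\CH{a,f}\CH{b,f}\CH{a,e}\,\CH{c,e}\CH{d,f}\CH{e,d}\CH{d,f}\CH{c,e}\,\CH{a,e}\CH{b,f}\CH{a,f}\CH{b,f}\CH{a,e} \;\approx\; \CH{a,c}\CH{b,d}\CH{a,b}\CH{b,d}\CH{a,c}.
\]
I will describe a plan for this intended statement; the literal reading is, as noted, immediate from \cref{prop:h1}.

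The plan follows exactly the strategy of the proof of \cref{prop:h7}, using $\CZ{f}$-conjugation to reduce to a previously established relation. First I would insert identities $\CX{b,f}\CX{b,f}\approx\varepsilon$ (by \cref{rel:a2}) next to each outer $\CH{a,f}$ factor, and $\CX{d,f}\CX{d,f}\approx\varepsilon$ next to the middle $\CH{e,d}$ factor. Then I would apply \cref{rel:d2} in the form $\CH{b,f}\,\CX{b,f}\approx\CZ{f}\,\CH{b,f}$ (and symmetrically $\CH{d,f}\,\CX{d,f}\approx\CZ{f}\,\CH{d,f}$), combined with \cref{rel:c5} to shuffle the freshly introduced $\CX$-factors across $\CH{a,f}$ and $\CH{e,d}$. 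This rewrites every $\CH{a,f}$ as $\CH{a,b}$ and every $\CH{e,d}$ as $\CH{e,f}$, at the cost of producing a few $\CZ{f}$ factors along the sequence.

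Next, I would collect these $\CZ{f}$ factors and move them to the two ends of the expression using the commutation \cref{rel:b4}: since $f\notin\{a,b,c,d,e\}$ for all the $\CH{x,y}$ factors remaining in the middle of the expression, $\CZ{f}$ commutes past every one of them. At that point the sandwich between the two boundary $\CZ{f}$'s is exactly the left-hand side of \cref{prop:h5}, which I invoke to collapse the sandwich to $\CH{a,c}\CH{b,d}\CH{a,b}\CH{b,d}\CH{a,c}$. A final use of \cref{rel:b4} to slide the $\CZ{f}$'s back together with \cref{rel:a1} (so that $\CZ{f}^2\approx\varepsilon$) finishes the proof.

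The main bookkeeping hurdle is tracking the order of the $\CX$-rewrites so that each produced $\CZ{f}$ ends up on the correct side of its neighbour to be cancelled, rather than being trapped between factors with which it fails to commute. This is precisely the same subtlety handled carefully in the proof of \cref{prop:h7}; because $f$ is here distinct from $e$ (unlike in h7, where the middle block contains $\CH{e,f}$), the $\CZ{f}$'s actually move more freely through the middle, so I expect the argument to be a slightly simpler parallel of the h7 calculation rather than a genuinely new obstacle.
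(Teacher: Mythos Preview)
Your diagnosis is spot on: the stated \cref{prop:h8} is word-for-word identical to \cref{prop:h1}, and the paper's own proof of \cref{prop:h8} is in fact a verbatim copy of the proof of \cref{prop:h1} (insert $\CX{e,f}\CX{e,f}$, conjugate through to land on \cref{prop:h5}, cancel). So for the literal statement, the paper and your one-line citation agree.

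For the intended statement you recovered (the $A'B'A'$ companion of \cref{prop:h7}), your $\CZ{f}$-conjugation plan is correct and is exactly the pattern the paper uses in the analogous pairs (\cref{prop:h3}$\to$\cref{prop:h1} via $\CZ{e}$, \cref{prop:h7}$\to$\cref{prop:h6} via $\CZ{f}$). Concretely, rewriting each outer $\CH{b,f}\CH{a,f}\CH{b,f}\approx\CZ{f}\CH{b,f}\CH{a,b}\CH{b,f}\CZ{f}$ and the inner $\CH{d,f}\CH{e,d}\CH{d,f}\approx\CZ{f}\CH{d,f}\CH{e,f}\CH{d,f}\CZ{f}$, then commuting each $\CZ{f}$ only past the adjacent $\CH{a,e}$ or $\CH{c,e}$, makes the six $\CZ{f}$'s pair up and cancel by \cref{rel:a1}, leaving precisely the left-hand side of \cref{prop:h5}. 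One caveat: your sentence ``since $f\notin\{a,b,c,d,e\}$ for all the $\CH{x,y}$ factors remaining in the middle \ldots $\CZ{f}$ commutes past every one of them'' is false as stated (the rewritten word still contains $\CH{b,f}$, $\CH{d,f}$, $\CH{e,f}$). Fortunately you never need to push $\CZ{f}$ through those; the only commutations required are past $\CH{a,e}$ and $\CH{c,e}$, which \cref{rel:b4} handles. With that correction the argument goes through cleanly.

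An equally valid alternative, closer in spirit to the paper's actual h8 write-up, is to conjugate the intended left-hand side by $\CX{e,f}$: this swaps $e\leftrightarrow f$ throughout (via \cref{rel:c5,rel:e2}) and lands you on the left-hand side of \cref{prop:h3} rather than \cref{prop:h5}. Either reduction works; your $\CZ{f}$ route has the advantage of reducing directly to \cref{prop:h5}, which is the one proved from \cref{rel:d4} without further intermediaries.
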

\begin{proof}
    \begin{align*}
        & \CH{a,f}\CH{b,e}\CH{a,b}\CH{b,e}\CH{a,f} \CH{c,f}\CH{d,e}\CH{f,e}\CH{d,e}\CH{c,f} \CH{a,f}\CH{b,e}\CH{a,b}\CH{b,e}\CH{a,f} \\
        \approx{}& \blue{\CX{e,f}\CX{e,f}} \CH{a,f}\CH{b,e}\CH{a,b}\CH{b,e}\CH{a,f} \CH{c,f}\CH{d,e}\CH{f,e}\CH{d,e}\CH{c,f} \\
        & \quad \CH{a,f}\CH{b,e}\CH{a,b}\CH{b,e}\CH{a,f} \tag{by \cref{rel:a2}} \\
        \approx{}& \CX{e,f}\blue{\CH{a,e}\CH{b,f}\CH{a,b}\CH{b,f}\CH{a,e} \CH{c,e}\CH{d,f}\CH{e,f}\CH{d,f}\CH{c,e}} \\
        & \quad \blue{\CH{a,e}\CH{b,f}\CH{a,b}\CH{b,f}\CH{a,e}\CX{e,f}} \tag{by \cref{rel:a2,,rel:c5,rel:e2}} \\
        \approx{}& \CX{e,f}\blue{\CH{a,c}\CH{b,d}\CH{a,b}\CH{b,d}\CH{a,c}}\CX{e,f} \tag{by \cref{prop:h5}} \\
        \approx{}& \blue{\CH{a,c}\CH{b,d}\CH{a,b}\CH{b,d}\CH{a,c}}. \tag{by \cref{rel:a3,rel:b6}}
    \end{align*}
\end{proof}

\subsection{The Main Lemma}\label{proof:main_lemma}

We prove \nameref{lem:basic_main} by splitting it into lemmas for all cases of $G$ and $N$, as listed in \cref{tab:all_cases}.

\begin{table}[ht]
    \centering
    \caption{Case distinction for \nameref{lem:basic_main}.}\label{tab:all_cases}
    \begin{tabular}{|c|c|c|c|}
        \hline
        \diagbox{$N=$}{Cases}{$G=$} & $\CX{1,x}$ & $\CZ{1}$ & $\CH{1,2}$\\
        \hline 
        $\CH{1,i_2}$ & \cref{maincase:1} & \cref{maincase:5} & \cref{maincase:9} \\
        \hline 
        $\CH{1,i_2}\CX{1,i_1}$ & \cref{maincase:2} & \cref{maincase:6} & \cref{maincase:10} \\
        \hline 
        $\CZ{a}^{\tau}$ & \cref{maincase:3} & \cref{maincase:7} & \cref{maincase:11} \\
        \hline 
        $\CX{a,j}\CZ{a}^{\tau}$ & \cref{maincase:4} & \cref{maincase:8} & \cref{maincase:12} \\
        \hline
    \end{tabular}
\end{table}

\begin{maincase}\label{maincase:1}
    \nameref{lem:basic_main} holds for the case of $G=\CX{1,x}$ and $N=\CH{1,i_2}$.
\end{maincase}
\begin{proof}
    We distinguish further subcases depending on $x$ and $i_2$.
    \begin{enumerate}
        \item $x < i_2$. Then the first syllable output by \cref{alg:exact_synthesis} on input $\CX{1,x}s$ is $\CH{1,i_2}\CX{1,x}$. We can complete the diagram as follows.
        \begin{equation*}
            \begin{tikzcd}
                \setwiretype{n} s \ar[r,"\CX{1,x}"]\ar[d,Rightarrow,"\CH{1,i_2}",swap] & r \ar[d, Rightarrow, "\CH{1,i_2}\CX{1,x}"] \\
                \setwiretype{n} t \ar[r, to*, swap, "\varepsilon"]& q
            \end{tikzcd}
        \end{equation*}
        The diagram commutes equationally by \cref{rel:a2,rel:a3}, and $\level(q) = \level(t) < \level(s)$.
        \item $x = i_2$. Then the first syllable output by \cref{alg:exact_synthesis} on input $\CX{1,i_2}s$ is $\CH{1,i_2}$. We can complete the diagram as follows.
        \begin{equation*}
            \begin{tikzcd}
                \setwiretype{n} s \ar[r,"\CX{1,i_2}"]\ar[d,Rightarrow,"\CH{1,i_2}",swap] & r \ar[d, Rightarrow, "\CH{1,i_2}"] \\
                \setwiretype{n} t \ar[r, swap, "\CZ{i_2}"]& q
            \end{tikzcd} 
        \end{equation*}
        The diagram commutes equationally by \cref{rel:d2} and  $\level(t) <\level(s), \level(q) < \level(r) = \level(s)$.
        \item $x > i_2$. Let $(j,k,l) = \level(s)$.
        \begin{enumerate}[leftmargin=10pt]
            \item \label{case:1_3a} If $\sqrt{2}^k s[x,j] \equiv \sqrt{2}^k s[i_2,j] \pmod{2}$, then the first syllable output by \cref{alg:exact_synthesis} on input $\CX{1,x}s$ will be $\CH{1,i_2}$. There must be $i_3 > i_2$ (w.l.o.g we can assume $x < i_3$) such that $i_2$, $x$ and $i_3$ are the first, the second and the third smallest integers, respectively, that satisfy the Line $6$ in \cref{alg:exact_synthesis} on input $s$. We can complete the diagram in two ways.
            \begin{equation*}
                \begin{tikzcd}
                    \setwiretype{n} s \ar[rrrrr,"\CX{1,x}"]\ar[d,Rightarrow,"\CH{1,i_2}",swap] &[5mm] &[5mm] &[5mm] &[5mm] &[5mm] r \ar[d, Rightarrow, "\CH{1,i_2}"] \\[-4mm]
                    \setwiretype{n} t \ar[d,Rightarrow,"\CH{1,i_3}\CX{1,x}",swap] & & & & & r_1 \ar[d, Rightarrow, "\CH{1,i_3}\CX{1,x}"] \\[-4mm]
                    \setwiretype{n} t_1 \ar[rd,swap,"\CH{1,i_2}"] \ar[r,swap,"\CH{1,x}"] & t_2 \ar[r, swap, "\CH{i_2,i_3}"] & t_3 \ar[r,, swap, "\CX{x,i_3}"] & t_4 \ar[r, swap, "\CH{i_2,i_3}"] & t_5 \ar[r,, swap, "\CH{1,x}"] & q \\[-4mm]
                    \setwiretype{n} & t_2' \ar[r, swap, "\CH{x,i_3}"] & t_3' \ar[r, , swap, "\CX{i_2,i_3}"] & t_4' \ar[r, , swap, "\CH{x,i_3}"] & t_5' \ar[ru, , swap, "\CH{1,i_2}"]
                \end{tikzcd}
            \end{equation*}
            We can verify that the rectangle at the top of the diagram commutes equationally:
            \begin{align*}
                & \mathbf{G}'N \approx \mathbf{N}'G \\
                \Leftrightarrow{}& \CH{1,x}\CH{i_2,i_3}\CX{x,i_3}\CH{i_2,i_3}\CH{1,x}\CH{1,i_3}\CX{1,x}\CH{1,i_2} \\
                & \quad \approx \CH{1,i_3}\CX{1,x}\CH{1,i_2}\CX{1,x} \\
                \Leftrightarrow{}& \CH{1,x}\CH{i_2,i_3}\CX{x,i_3}\CH{i_2,i_3}\CH{1,x}\blue{\CX{1,x}\CH{x,i_3}}\CH{1,i_2} \\
                & \quad \approx \CH{1,i_3}\blue{\CH{x,i_2}} \tag{by \cref{rel:a3,rel:c4}}\\
                \Leftrightarrow{}& \CH{1,x}\CH{i_2,i_3}\CX{x,i_3}\CH{i_2,i_3}\blue{\CZ{x}\CH{1,x}}\CH{x,i_3}\CH{1,i_2} \\
                & \quad \approx \CH{1,i_3}\CH{x,i_2} \tag{by \cref{rel:d2}}\\
                \Leftrightarrow{}& \CH{1,x}\CH{i_2,i_3}\blue{\CZ{i_3}\CX{x,i_3}\CH{i_2,i_3}}\CH{1,x}\CH{x,i_3}\CH{1,i_2} \\
                & \quad \approx \CH{1,i_3}\CH{x,i_2} \tag{by \cref{rel:c1,rel:b4}}\\
                \Leftrightarrow{}& \blue{\CX{i_2,i_3}\CH{1,x}\CH{i_2,i_3}}\CX{x,i_3}\CH{i_2,i_3}\CH{1,x}\CH{x,i_3}\CH{1,i_2} \\
                & \quad \approx \CH{1,i_3}\CH{x,i_2} \tag{by \cref{rel:b5,rel:d2}}\\
                \Leftrightarrow{}& \CH{1,x}\CH{i_2,i_3}\CX{x,i_3}\CH{i_2,i_3}\CH{1,x}\CH{x,i_3}\CH{1,i_2} \\
                & \quad \approx \blue{\CX{i_2,i_3}}\CH{1,i_3}\CH{x,i_2} \tag{by \cref{rel:a2}}\\
                \Leftrightarrow{}& \CH{1,x}\CH{i_2,i_3}\CX{x,i_3}\CH{i_2,i_3}\CH{1,x}\CH{x,i_3}\CH{1,i_2} \\
                & \quad \approx \blue{\CH{1,i_2}\CH{x,i_3}\CX{i_2,i_3}} \tag{by \cref{rel:c5}}\\
                \Leftrightarrow{}& \CX{x,i_3}\CH{i_2,i_3}\CH{1,x}\CH{x,i_3}\CH{1,i_2} \\
                & \quad \approx \blue{\CH{1,x}\CH{i_2,i_3}}\CH{1,i_2}\CH{x,i_3}\CX{i_2,i_3} \tag{by \cref{rel:a3}}\\
                \Leftrightarrow{}& \CX{x,i_3}\CH{i_2,i_3}\CH{1,x}\CH{x,i_3}\CH{1,i_2} \\
                & \quad \approx \blue{\CH{1,i_2}\CH{x,i_3}\CH{1,x}\CH{i_2,i_3}}\CX{i_2,i_3} \tag{by \cref{rel:f1} with $a=1,b=i_2,c=x,d=i_3$, \cref{rel:a3,rel:b6}}\\
                \Leftrightarrow{}& \CX{x,i_3}\CH{i_2,i_3}\CH{1,x}\CH{x,i_3}\CH{1,i_2} \\
                & \quad \approx \CH{1,i_2}\CH{x,i_3}\CH{1,x}\blue{\CZ{i_3}\CH{i_2,i_3}} \tag{by \cref{rel:d2}}\\
                \Leftrightarrow{}& \CX{x,i_3}\CH{i_2,i_3}\CH{1,x}\CH{x,i_3}\CH{1,i_2} \\
                & \quad \approx \blue{\CX{x,i_3}\CH{1,i_2}\CH{x,i_3}\CH{1,x}}\CH{i_2,i_3} \tag{by \cref{rel:b4,rel:b5,,rel:d2}}\\
                \Leftrightarrow{}& \CH{i_2,i_3}\CH{1,x}\CH{x,i_3}\CH{1,i_2} \approx \CH{1,i_2}\CH{x,i_3}\CH{1,x}\CH{i_2,i_3} \tag{by \cref{rel:a2}}\\
                \Leftrightarrow{}&  \blue{\varepsilon} \approx \CH{1,i_2}\CH{x,i_3}\CH{1,x}\CH{i_2,i_3}\blue{\CH{1,i_2}\CH{x,i_3}\CH{1,x}\CH{i_2,i_3}} \tag{by \cref{rel:a3}}\\
                \Leftrightarrow{}& \varepsilon \approx \blue{\varepsilon}. \tag{by \cref{rel:f1} with $a=1,b=i_2,c=x,d=i_3$}
            \end{align*}
            The decagon at the bottom of the diagram also commutes equationally:
            \begin{align*}
                & \CH{1,x}\CH{i_2,i_3}\CX{x,i_3}\CH{i_2,i_3}\CH{1,x} \\
                \approx{}& \blue{\CH{1,i_2}\CH{x,i_3}\CH{1,i_2}\CH{x,i_3}}\CH{1,x}\CH{i_2,i_3}\CX{x,i_3}\CH{i_2,i_3}\CH{1,x} \tag{by \cref{rel:a3,rel:b6}} \\
                \approx{}& \CH{1,i_2}\CH{x,i_3}\blue{\CH{1,x}\CH{i_2,i_3}\CH{1,i_2}\CH{x,i_3}}\CX{x,i_3}\CH{i_2,i_3}\CH{1,x} \tag{by \cref{rel:f1} with $a=1,b=i_2,c=x,d=i_3$, \cref{rel:a3,rel:b6}} \\
                \approx{}& \CH{1,i_2}\CH{x,i_3}\CH{1,x}\CH{i_2,i_3}\CH{1,i_2}\blue{\CZ{i_3}\CH{x,i_3}}\CH{i_2,i_3}\CH{1,x} \tag{by \cref{rel:d2}} \\
                \approx{}& \CH{1,i_2}\CH{x,i_3}\CH{1,x}\CH{i_2,i_3}\blue{\CZ{i_3}\CH{1,i_2}}\CH{x,i_3}\CH{i_2,i_3}\CH{1,x} \tag{by \cref{rel:b4}} \\
                \approx{}& \CH{1,i_2}\CH{x,i_3}\CH{1,x}\CH{i_2,i_3}\CZ{i_3}\blue{\CH{i_2,i_3}\CH{1,x}\CH{x,i_3}\CH{1,i_2}} \tag{by \cref{rel:f1} with $a=1,b=i_2,c=x,d=i_3$, \cref{rel:a3,rel:b6}} \\
                \approx{}& \CH{1,i_2}\CH{x,i_3}\CH{1,x}\blue{\CX{i_2,i_3}}\CH{1,x}\CH{x,i_3}\CH{1,i_2} \tag{by \cref{rel:d2}} \\
                \approx{}& \CH{1,i_2}\CH{x,i_3}\blue{\varepsilon\CX{i_2,i_3}}\CH{x,i_3}\CH{1,i_2} \tag{by \cref{rel:a2,rel:b5}} \\
                \approx{}& \CH{1,i_2}\CH{x,i_3}\CX{i_2,i_3}\CH{x,i_3}\CH{1,i_2}.
            \end{align*}
            Thus, we have two sequences of simple edges for $\mathbf{G}'$:
            \begin{align*}
                \mathbf{G}_1'&: t \Ne{\CH{1,i_3}\CX{1,x}} t_1 \Se{\CH{1,x}} t_2 \Se{\CH{i_2,i_3}} t_3 \Se{\CX{x,i_3}} t_4 \Se{\CH{i_2,i_3}} t_5 \Se{\CH{1,x}} q, \\
                \mathbf{G}_2'&: t \Ne{\CH{1,i_3}\CX{1,x}} t_1 \Se{\CH{1,i_2}} t_2' \Se{\CH{x,i_3}} t_3' \Se{\CX{i_2,i_3}} t_4' \Se{\CH{x,i_3}} t_5' \Se{\CH{1,i_2}} q.
            \end{align*}
            We then prove that there exists a $\mathbf{G}'$ chosen from $\mathbf{G}_1'$ or $\mathbf{G}_2'$ such that $\level(\mathbf{G}')<\level(s)$.
            Recall that from Lines 3-9 in \cref{alg:exact_synthesis}, we know that $\sqrt{2}^ks[1,j]\equiv 1 \pmod{2}$ or $\equiv 1+\sqrt{2} \pmod{2}$, $i_2, x$ and $i_3$ are the first, the second and the third smallest integer $i > 1$ satisfying $\sqrt{2}^ks[i,j] \equiv \sqrt{2}^ks[1,j] \pmod{2}$ respectively, $k\geq 2$\footnote{Since there are $4$ indices $i$ such that $\sqrt{2}^k s[i,j] \not\equiv 0 \pmod{2}$, $k$ can not be $1$ otherwise the norm of the $j$-th row of $s$ would be greater than $1$} and $l \geq 4$. 
            Then $v= s[:,j][1,i_2,x,i_3]$ and $k$ satisfy the condition of \cref{lem:useful1}, thus we have that either 
            \begin{itemize}
                \item $\sqrt{2}^{k-1}(\CH{1,x}\CH{i_2,i_3}\CH{1,i_2}\CH{x,i_3}s)[:,j][1,i_2,x,i_3]\in \ZZ[\sqrt{2}]^4$; or
                \item $\sqrt{2}^{k-1}(\CH{1,i_3}\CH{i_2,x}\CH{1,i_2}\CH{x,i_3}s)[:,j][1,i_2,x,i_3]\in \ZZ[\sqrt{2}]^4$,
            \end{itemize}
            which are equivalent to
            \begin{itemize}
                \item $ \sqrt{2}^{k-1}(\CZ{x}\CH{i_2,i_3}\CH{1,x}\CH{1,i_3}\CX{1,x}\CH{1,i_2}s)[:,j][1,i_2,x,i_3]\in \ZZ[\sqrt{2}]^4$; or
                \item $ \sqrt{2}^{k-1}(\CX{1,x}\CX{1,i_2}\CZ{i_2}\CH{x,i_3}\CH{1,i_2}\CH{1,i_3}\CX{1,x}\CH{1,i_2}s)[:,j][1,i_2,x,i_3]\in \ZZ[\sqrt{2}]^4$,
            \end{itemize}
            respectively, i.e.,
            \begin{itemize}
                \item $ \sqrt{2}^{k-1}(\CZ{x}t_3)[:,j][1,i_2,x,i_3] \in \ZZ[\sqrt{2}]^4$; or
                \item $\sqrt{2}^{k-1}(\CX{1,x}\CX{1,i_2}\CZ{i_2}t_3')[:,j][1,i_2,x,i_3] \in \ZZ[\sqrt{2}]^4$,
            \end{itemize}
            They correspond exactly to $\mathbf{G}_1'$ or $\mathbf{G}_2'$ respectively.
            \emph{Without loss of generality}, we assume that $\sqrt{2}^{k-1}(\CZ{x}t_3)[:,j][1,i_2,x,i_3] \in \ZZ[\sqrt{2}]^4$.
            Since the generators involved here only act on indices $1,i_2,x,i_3$, we only need to consider the $\lde$ of indices $1,i_2,x,i_3$.
            It is clearly that $\sqrt{2}^{k}t_3[:,j][1,i_2,x,i_3] \equiv (0,0,0,0) \pmod{2}$, which means
            \[\lde(t_3[:,j][1,i_2,x,i_3]) \leq k-1.\]
            Then the $4$ elements at $(1,j), (i_2,j), (x,j), (i_3,j)$ must not contribute to the integer $l$ when the level is $(j,k,l) = \level(s)$. Thus,
            \[\level(t_3) \leq (j,k,l-4).\]
            Since $t_4 = \CX{x,i_3}t_3$, we can get a similar result that $\level(t_4) \leq (j,k,l-4)$.
            With
            $\level(t) = (j,k,l-2)$, $\level(t_1) \leq (j,k,l-4)$, $\level(q) < (j,k,l-4)$,
            and each $H$ generator is only possible to add at most $2$ to $l'$ when the $k'$ of level $(j',k',l')$ is unchanged, we have $\level(t_2) \leq (j,k,l-2), \level(t_5) \leq (j,k,l-2)$.
            Finally, we have \[\level(\mathbf{G}'_1:t \Se*{} q) \leq (j,k,l-2) < (j,k,l) = \level(s).\] 
            \item \label{case:1_3b} If $\sqrt{2}^ks[x,j] \not\equiv \sqrt{2}^ks[i_2,j] \pmod{2}$ but $\sqrt{2}^ks[x,j] \equiv 1 \pmod{2}$ or $\equiv 1+\sqrt{2} \pmod{2}$ (i.e., $x$ is an integer satisfying the Line $5$ but not the Line $6$ in \cref{alg:exact_synthesis} on input $s$), there must be another smallest $i_3 \neq x$ (w.l.og. we can assume $x < i_3$) such that $\sqrt{2}^{k}s[i_3,j]\equiv \sqrt{2}^ks[x,j] \pmod{2}$. Then, the first syllable output by \cref{alg:exact_synthesis} on input $\CX{1,x}s$ will be $\CH{1,i_3}$. We can complete the diagram as follows.
            \begin{equation*}
                \begin{tikzcd}
                    \setwiretype{n} s \ar[rrr,"\CX{1,x}"]\ar[d,Rightarrow,"\CH{1,i_2}",swap] & & & r \ar[d, Rightarrow, "\CH{1,i_3}"] \\[-4mm]
                    \setwiretype{n} t \ar[r,"\CH{x,i_3}",swap] & t_1 \ar[r, swap,"\CH{1,i_2}"] & t_2 \ar[r,"\CX{1,x}",swap] & q 
                \end{tikzcd}
            \end{equation*}
            The diagram commutes equationally by \cref{rel:a3,,rel:c4,,rel:b6}. Since $\CH{1,i_2}$ and $\CH{x,i_3}$ act on different indices, $\sqrt{2}^ks[1,j] \equiv \sqrt{2}^ks[i_2,j]\pmod{2}$ and $\sqrt{2}^ks[i_3,j] \equiv \sqrt{2}^ks[x,j]\pmod{2}$, we have
            \begin{gather*}
                \level(t) = \level(\CH{1,i_2}s) = (j,k,l-2), \\
                \level(t_1) = \level(\CH{x,i_3}\CH{1,i_2}s) \leq (j,k,l-4), \\
                \level(t_2) = \level(\CH{1,i_2}\CH{x,i_3}\CH{1,i_2}s) = \level(\CH{x,i_3}s) = (j,k,l-2), \\
                \level(q) = \level(\CX{1,x}t_2) \leq (j,k,l-2),
            \end{gather*}
            where $l\geq 4$ because  $\sqrt{2}^ks[1,j], \sqrt{2}^ks[i_2,j], \sqrt{2}^ks[i_3,j]$ and $\sqrt{2}^ks[x,j]$ are $4$ elements that contribute to $l$. Therefore,
            \[\level(\mathbf{G}': t \Se{\CH{x,i_3}} t_1 \Se{\CH{1,i_2}} t_2 \Se{\CX{1,x}} q) \leq (j,k,l-2) < (j,k,l) = \level(s).\]
            \item \label{case:1_3c} If $\sqrt{2}^ks[x,j] \equiv 0 \pmod{2}$, then the first syllable output by \cref{alg:exact_synthesis} on input $\CX{1,x}s$ will be $\CH{1,x}\CX{1,i_2}$. We can complete the diagram as follows.
            \[ \begin{tikzcd}
                \setwiretype{n} s \ar[r,"\CX{1,x}"]\ar[d,Rightarrow,"\CH{1,i_2}",swap] &[1cm] r \ar[d, Rightarrow, "\CH{1,x}\CX{1,i_2}"] \\[-4mm]
                \setwiretype{n} t \ar[r, to*,swap, "\CZ{x}\CX{i_2,x}"]& q
            \end{tikzcd} 
            \]
            The diagram commutes equationally by \cref{rel:a2,rel:d2}.
            We have $l \geq 2$. Then $\level(r) = (j,k,l)$, $\level(t) \leq (j,k,l-2)$, and $\level(q) \leq (j,k,l-2)$. Thus, we can easily verify that $\level(\mathbf{G}': t\Se*{\CZ{x}\CX{i_2,x}} q) \leq (j,k,l-2) < \level(s)$. \qedhere
        \end{enumerate} 
    \end{enumerate}
\end{proof}

\begin{maincase}\label{maincase:2}
    \nameref{lem:basic_main} holds for the case of $G=\CX{1,x}$ and $N=\CH{1,i_2}\CX{1,i_1}$.
\end{maincase}
\begin{proof}
    We distinguish further subcases depending on $x$, $i_1$ and $i_2$.
    \begin{enumerate}
        \item $x=i_1$. Then the first syllable output by \cref{alg:exact_synthesis} on input $\CX{1,x}s$ will be $\CH{1,i_2}$. We can complete the diagram as follows.
        \[ \begin{tikzcd}
            \setwiretype{n} s \ar[r,"\CX{1,i_1}"]\ar[d,Rightarrow,"\CH{1,i_2}\CX{1,i_1}",swap] &[1.4cm] r \ar[d, Rightarrow, "\CH{1,i_2}"] \\[-4mm]
            \setwiretype{n} t \ar[r, to*,swap, "\varepsilon"]& q
        \end{tikzcd} 
        \]
        The diagram commutes equationally and $\level(q) = \level(t)<\level(s)$.
        \item $x=i_2$. Then the first syllable output by \cref{alg:exact_synthesis} on input $\CX{1,x}s$ will be $\CH{1,i_1}$. We can complete the diagram as follows.
        \[ \begin{tikzcd}
            \setwiretype{n} s \ar[r,"\CX{1,i_2}"]\ar[d,Rightarrow,"\CH{1,i_2}\CX{1,i_1}",swap] &[1.4cm] r \ar[d, Rightarrow, "\CH{1,i_1}"] \\[-4mm]
            \setwiretype{n} t \ar[r, to*, swap, "\CZ{i_1}\CX{i_1,i_2}"]& q
        \end{tikzcd} 
        \]
        The diagram commutes equationally:
        \begin{align*}
            &\phantom{={}} \CZ{i_1}\CX{i_1,i_2}\CH{1,i_2}\CX{1,i_1} \\
            &\approx \blue{\CX{i_1,i_2}\CZ{i_2}}\CH{1,i_2}\CX{1,i_1} \tag{by \cref{rel:c1}} \\
            &\approx \CX{i_1,i_2}\blue{\CH{1,i_2}\CX{1,i_2}}\CX{1,i_1} \tag{by \cref{rel:d2}} \\
            &\approx \blue{\CH{1,i_1}\CX{i_1,i_2}}\CX{1,i_2}\CX{1,i_1} \tag{by \cref{rel:c5}} \\
            &\approx \CH{1,i_1}\CX{i_1,i_2}\blue{\CX{1,i_1}\CX{i_1,i_2}} \tag{by \cref{rel:c2}} \\
            &\approx \CH{1,i_1}\blue{\CX{1,i_2}} \tag{by \cref{rel:c3}}
        \end{align*}
        Let $(j,k,l) = \level(s)$, we have $l \geq 2$. Then $\level(r) = (j,k,l)$, $\level(t) \leq (j,k,l-2)$, and $\level(q) \leq (j,k,l-2)$. Thus, we can easily verify that \[\level(\mathbf{G}': t\Se*{\CZ{i_1}\CX{i_1,i_2}} q) \leq (j,k,l-2) < \level(s).\]
        \item $x\neq i_1,i_2$. Let $(j,k,l) = \level(s)$.
        \begin{enumerate}
            \item If $\sqrt{2}^ks[x,j] \equiv \sqrt{2}^ks[i_1,j]$ (i.e., $x$ satisfies Lines $5$ and $6$ of \cref{alg:exact_synthesis} on input $s$), then, similar to \cref{case:1_3a} of \cref{maincase:1}, we can complete the diagram in two ways.\footnote{Here, we assume $1<i_1<i_2<x<i_3$. For the other cases of $1<i_1< i_2,x,i_3$, we can apply a permutation that doesn't affect $1, i_1$, which will not change the first syllable output by \cref{alg:exact_synthesis}.}
            \begin{equation*}
                \hspace{4mm}\begin{tikzcd}
                \setwiretype{n} s \ar[rrrrrr,"\CX{1,x}"]\ar[d,Rightarrow,"\CH{1,i_2}\CX{1,i_1}"] &[2mm] &[2mm] &[2mm] &[2mm] &[2mm] &[18mm] r \ar[d, swap,Rightarrow, "\CH{1,i_1}"] \\[-4mm]
                \setwiretype{n} t \ar[d,Rightarrow,"\CH{1,i_3}\CX{1,x}"] & & & & & & r_1 \ar[d, swap,Rightarrow, "\CH{1,i_3}\CX{1,i_2}"] \\[-4mm]
                \setwiretype{n} t_1 \ar[rd,swap,"\CH{1,i_2}"] \ar[r,,swap,"\CH{1,x}"] & t_2 \ar[r, , swap, "\CH{i_2,i_3}"] & t_3 \ar[r, , swap, "\CX{i_2,x}"] & t_4 \ar[r, , swap, "\CH{i_2,i_3}"] & t_5 \ar[r, , swap, "\CH{1,x}"] & t_6 \ar[r, , swap, "\scalebox{.7}{$\CX{1,x}\CZ{i_3}\CX{i_2,i_3}\CX{i_1,i_3}\CX{1,i_1}$}"] & q \\[-3mm]
                \setwiretype{n} & t_2' \ar[r, , swap, "\CH{x,i_3}"] & t_3' \ar[r, , swap, "\CX{i_2,x}"] & t_4' \ar[r, , swap, "\CH{x,i_3}"] & t_5' \ar[ru, , swap, "\CH{1,i_2}"]
                \end{tikzcd}
            \end{equation*}
            We can verify that the rectangle at the top of the diagram commutes equationally:
            \begin{align*}
                & \mathbf{G}'N \approx \mathbf{N}'G \\
                \Leftrightarrow{}& \CX{1,x}\CZ{i_3}\CX{i_2,i_3}\CX{i_1,i_3}\CX{1,i_1}\CH{1,x}\CH{i_2,i_3}\CX{i_2,x}\CH{i_2,i_3}\CH{1,x}\CH{1,i_3}\CX{1,x}\CH{1,i_2}\CX{1,i_1} \\
                & \approx \CH{1,i_3}\CX{1,i_2}\CH{1,i_1}\CX{1,x} \\
                \Leftrightarrow{}& \CX{1,x}\CZ{i_3}\CX{i_2,i_3}\CX{i_1,i_3}\blue{\CH{i_1,x}\CH{i_2,i_3}\CX{i_2,x}\CH{i_2,i_3}\CH{i_1,x}\CH{i_1,i_3}\CX{i_1,x}\CH{i_1,i_2}\CX{1,i_1}}\CX{1,i_1} \\
                & \approx \blue{\CX{1,x}\CH{x,i_3}\CX{x,i_2}\CH{x,i_1}} \tag{by \cref{rel:c2,rel:c4}} \\
                \Leftrightarrow{}& \CZ{i_3}\CX{i_2,i_3}\CX{i_1,i_3}\CH{i_1,x}\CH{i_2,i_3}\CX{i_2,x}\CH{i_2,i_3}\CH{i_1,x}\CH{i_1,i_3}\CX{i_1,x}\CH{i_1,i_2}\blue{\varepsilon} \\
                & \approx \CH{x,i_3}\CX{x,i_2}\CH{x,i_1} \tag{by \cref{rel:a2}} \\
                \Leftrightarrow{}& \CZ{i_3}\CX{i_2,i_3}\CX{i_1,i_3}\blue{\CX{i_2,x}\CH{i_1,i_2}\CH{x,i_3}}\CH{i_2,i_3}\CH{i_1,x}\CH{i_1,i_3}\CX{i_1,x}\CH{i_1,i_2} \\
                & \approx \CH{x,i_3}\CX{x,i_2}\CH{x,i_1} \tag{by \cref{rel:c4,rel:c5}} \\
                \Leftrightarrow{}& \CZ{i_3}\CX{i_2,i_3}\CX{i_1,i_3}\CX{i_2,x}\CH{i_1,i_2}\CH{x,i_3}\blue{\CH{i_1,x}\CH{i_2,i_3}}\CH{i_1,i_3}\CX{i_1,x}\CH{i_1,i_2} \\
                & \approx \CH{x,i_3}\CX{x,i_2}\CH{x,i_1} \tag{by \cref{rel:b6}} \\
                \Leftrightarrow{}& \CZ{i_3}\CX{i_2,i_3}\CX{i_1,i_3}\CX{i_2,x}\blue{\CH{i_1,x}\CH{i_2,i_3}\CH{i_1,i_2}\CH{x,i_3}}\CH{i_1,i_3}\CX{i_1,x}\CH{i_1,i_2} \\
                & \approx \CH{x,i_3}\CX{x,i_2}\CH{x,i_1} \tag{by \cref{rel:f1} with $a=i_1,b=i_2,c=x,d=i_3$} \\
                \Leftrightarrow{}& \CZ{i_3}\CX{i_2,i_3}\CX{i_1,i_3}\CX{i_2,x}\CH{i_1,x}\blue{\CX{i_1,x}\CH{i_2,i_3}\CH{x,i_2}\CH{i_1,i_3}\CH{x,i_3}}\CH{i_1,i_2} \\
                & \approx \CH{x,i_3}\CX{x,i_2}\CH{x,i_1} \tag{by \cref{rel:b5,rel:c4}} \\
                \Leftrightarrow{}& \CZ{i_3}\CX{i_2,i_3}\CX{i_1,i_3}\CX{i_2,x}\blue{\CZ{x}\CH{i_1,x}}\CH{i_2,i_3}\CH{x,i_2}\CH{i_1,i_3}\CH{x,i_3}\CH{i_1,i_2} \\
                & \approx \CH{x,i_3}\CX{x,i_2}\CH{x,i_1} \tag{by \cref{rel:d2}} \\
                \Leftrightarrow{}& \CZ{i_3}\blue{\CZ{i_3}\CX{i_2,i_3}\CX{i_1,i_3}\CX{i_2,x}}\CH{i_1,x}\CH{i_2,i_3}\CH{x,i_2}\CH{i_1,i_3}\CH{x,i_3}\CH{i_1,i_2} \\
                & \approx \CH{x,i_3}\CX{x,i_2}\CH{x,i_1} \tag{by \cref{rel:b2,rel:c1}} \\
                \Leftrightarrow{}& \blue{\varepsilon}\CX{i_2,i_3}\CX{i_1,i_3}\CX{i_2,x}\CH{i_1,x}\CH{i_2,i_3}\CH{x,i_2}\CH{i_1,i_3}\CH{x,i_3}\CH{i_1,i_2} \\
                & \approx \CH{x,i_3}\CX{x,i_2}\CH{x,i_1} \tag{by \cref{rel:a1}} \\
                \Leftrightarrow{}& \blue{\CX{i_2,x}\CX{x,i_3}\CX{i_1,i_3}}\CH{i_1,x}\CH{i_2,i_3}\CH{x,i_2}\CH{i_1,i_3}\CH{x,i_3}\CH{i_1,i_2} \\
                & \approx \blue{\CX{x,i_2}\CH{i_2,i_3}}\CH{x,i_1} \tag{by \cref{rel:a1}} \\
                \Leftrightarrow{}& \CX{x,i_3}\CX{i_1,i_3}\CH{i_1,x}\CH{i_2,i_3}\CH{x,i_2}\CH{i_1,i_3}\CH{x,i_3}\CH{i_1,i_2} \\
                & \approx \CH{i_2,i_3}\CH{x,i_1} \tag{by \cref{rel:e1}} \\
                \Leftrightarrow{}& \CX{x,i_3}\CX{i_1,i_3}\CH{i_1,x}\CH{i_2,i_3}\blue{\CX{i_2,x}\CH{i_2,x}\CX{i_2,x}}\CH{i_1,i_3}\CH{x,i_3}\CH{i_1,i_2} \\
                & \approx \CH{i_2,i_3}\blue{\CX{i_1,x}\CH{i_1,x}\CX{i_1,x}} \tag{by \cref{rel:e2}} \\
                \Leftrightarrow{}& \CX{x,i_3}\CX{i_1,i_3}\blue{\CX{i_2,x}\CH{i_1,i_2}\CH{x,i_3}}\CH{i_2,x}\blue{\CH{i_1,i_3}\CH{i_2,i_3}\CH{i_1,x}\CX{i_2,x}} \\
                & \approx \blue{\CX{i_1,x}\CH{i_2,i_3}}\CH{i_1,x}\CX{i_1,x} \tag{by \cref{rel:c4,rel:c5,,rel:b5}} \\
                \Leftrightarrow{}& \blue{\CX{i_1,x}}\CX{x,i_3}\CX{i_1,i_3}\CX{i_2,x}\CH{i_1,i_2}\CH{x,i_3}\CH{i_2,x}\CH{i_1,i_3}\CH{i_2,i_3}\CH{i_1,x}\CX{i_2,x} \\
                & \approx \blue{\varepsilon}\CH{i_2,i_3}\CH{i_1,x}\CX{i_1,x} \tag{by \cref{rel:a2}} \\
                \Leftrightarrow{}& \blue{\CX{x,i_3}}\CX{i_2,x}\CH{i_1,i_2}\CH{x,i_3}\CH{i_2,x}\CH{i_1,i_3}\CH{i_2,i_3}\CH{i_1,x}\CX{i_2,x} \approx \CH{i_2,i_3}\CH{i_1,x}\CX{i_1,x} \tag{by \cref{rel:c3}} \\
                \Leftrightarrow{}& \CX{x,i_3}\CX{i_2,x}\CH{i_1,i_2}\CH{x,i_3}\CH{i_2,x}\CH{i_1,i_3}\CH{i_2,i_3}\CH{i_1,x}\CX{i_2,x}\blue{\CX{i_1,x}} \approx \CH{i_2,i_3}\CH{i_1,x}\blue{\varepsilon} \tag{by \cref{rel:a2}} \\
                \Leftrightarrow{}& \CX{x,i_3}\CX{i_2,x}\CH{i_1,i_2}\CH{x,i_3}\CH{i_2,x}\CH{i_1,i_3}\CH{i_2,i_3}\CH{i_1,x}\blue{\CX{i_1,i_2}\CX{i_2,x}} \approx \CH{i_2,i_3}\CH{i_1,x} \tag{by \cref{rel:c3,rel:e1}} \\
                \Leftrightarrow{}& \CX{x,i_3}\CX{i_2,x}\CH{i_1,i_2}\blue{\CX{i_1,i_2}\CH{x,i_3}\CH{i_1,x}\CH{i_2,i_3}\CH{i_1,i_3}\CH{i_2,x}}\CX{i_2,x} \approx \CH{i_2,i_3}\CH{i_1,x} \tag{by \cref{rel:c4}} \\
                \Leftrightarrow{}& \CX{x,i_3}\CX{i_2,x}\CH{i_1,i_2}\CX{i_1,i_2}\CH{x,i_3}\CH{i_1,x}\CH{i_2,i_3}\CH{i_1,i_3}\blue{\CZ{x}\CH{i_2,x}} \approx \CH{i_2,i_3}\CH{i_1,x} \tag{by \cref{rel:d2}} \\
                \Leftrightarrow{}& \CX{x,i_3}\CX{i_2,x}\CH{i_1,i_2}\CX{i_1,i_2}\CH{x,i_3}\blue{\CX{i_1,x}\CH{i_1,x}\CH{i_2,i_3}\CH{i_1,i_3}}\CH{i_2,x} \approx \CH{i_2,i_3}\CH{i_1,x} \tag{by \cref{rel:b2,rel:b4,,rel:d2}} \\
                \Leftrightarrow{}& \CX{x,i_3}\CX{i_2,x}\CH{i_1,i_2}\CX{i_1,i_2}\CH{x,i_3}\CX{i_1,x}\blue{\CX{i_1,i_2}\CX{x,i_3}\CH{i_1,i_3}\CH{i_2,x}\CH{i_2,i_3}\CH{i_1,x}} \approx \CH{i_2,i_3}\CH{i_1,x} \tag{by \cref{rel:f2} with $a=i_1,b=i_2,c=x,d=i_3$ and \cref{rel:a3,rel:b6}} \\
                \Leftrightarrow{}& \CX{x,i_3}\CX{i_2,x}\CH{i_1,i_2}\CX{i_1,i_2}\CH{x,i_3}\CX{i_1,x}\CX{i_1,i_2}\CX{x,i_3}\CH{i_1,i_3}\CH{i_2,x} \approx \varepsilon \tag{by \cref{rel:a3}} \\
                \Leftrightarrow{}& \CX{x,i_3}\CX{i_2,x}\CH{i_1,i_2}\CX{i_1,i_2}\blue{\CX{i_1,i_2}\CH{x,i_3}\CX{i_2,x}}\CX{x,i_3}\CH{i_1,i_3}\CH{i_2,x} \approx \varepsilon \tag{by \cref{rel:b6,rel:c2}} \\
                \Leftrightarrow{}& \CX{x,i_3}\CX{i_2,x}\CH{i_1,i_2}\blue{\varepsilon}\CH{x,i_3}\CX{i_2,x}\CX{x,i_3}\CH{i_1,i_3}\CH{i_2,x} \approx \varepsilon \tag{by \cref{rel:a2}} \\
                \Leftrightarrow{}& \CX{x,i_3}\CX{i_2,x}\blue{\CX{i_2,x}\CH{i_1,x}\CH{i_2,i_3}}\CX{x,i_3}\CH{i_1,i_3}\CH{i_2,x} \approx \varepsilon \tag{by \cref{rel:c4,rel:c4}} \\
                \Leftrightarrow{}& \CX{x,i_3}\blue{\varepsilon}\CH{i_1,x}\CH{i_2,i_3}\CX{x,i_3}\CH{i_1,i_3}\CH{i_2,x} \approx \varepsilon \tag{by \cref{rel:a2}} \\
                \Leftrightarrow{}& \CX{x,i_3}\blue{\CX{x,i_3}\CH{i_1,i_3}\CH{i_2,x}}\CH{i_1,i_3}\CH{i_2,x} \approx \varepsilon \tag{by \cref{rel:c5}} \\
                \Leftrightarrow{}& \blue{\varepsilon} \approx \varepsilon \tag{by \cref{rel:a2,rel:a3,,rel:b6}}
            \end{align*}
            and the decagon at the bottom of the diagram also commutes equationally:
                \begin{align*}
                    &\CH{1,x}\CH{i_2,i_3}\CX{i_2,x}\CH{i_2,i_3}\CH{1,x} \approx \CH{1,i_2}\CH{x,i_3}\CX{i_2,x}\CH{x,i_3}\CH{1,i_2} \\
                    \Leftrightarrow{}& \blue{\CX{i_2,x}\CH{1,i_2}\CH{x,i_3}}\CH{i_2,i_3}\CH{1,x} \approx \blue{\CX{i_2,x}\CH{1,x}\CH{i_2,i_3}}\CX{i_2,x}\CH{x,i_3}\CH{1,i_2} \tag{by \cref{rel:c4,rel:c5}} \\
                    \Leftrightarrow{}& \blue{\varepsilon}\CH{1,i_2}\CH{x,i_3}\CH{i_2,i_3}\CH{1,x} \approx \blue{\varepsilon}\CH{1,x}\CH{i_2,i_3}\CX{i_2,x}\CH{x,i_3}\CH{1,i_2} \tag{by \cref{rel:a2}} \\
                    \Leftrightarrow{}& \varepsilon \approx \varepsilon \tag{by \cref{rel:f1} with $a=1,b=i_2,c=x,d=i_3$ and \cref{rel:a3}}
                \end{align*}
                Thus, we have two sequences of simple edges for $\mathbf{G}'$:
                \begin{align*}
                    \mathbf{G}_1' &:t \Ne{\CH{1,i_3}\CX{1,x}} t_1 \Se{\CH{1,x}} t_2 \Se{\CH{i_2,i_3}} t_3 \Se{\CX{i_2,x}} t_4 \Se{\CH{i_2,i_3}} t_5 \Se{\CH{1,x}} t_6 \Se*{\scalebox{.7}{$\CX{1,x}\CZ{i_3}\CX{i_2,i_3}\CX{i_1,i_3}\CX{1,i_1}$}} q, \\
                    \mathbf{G}_2' &:t \Ne{\CH{1,i_3}\CX{1,x}} t_1 \Se{\CH{1,i_2}} t_2 \Se{\CH{x,i_3}} t_3 \Se{\CX{i_2,x}} t_4 \Se{\CH{x,i_3}} t_5 \Se{\CH{1,i_2}} t_6 \Se*{\scalebox{.7}{$\CX{1,x}\CZ{i_3}\CX{i_2,i_3}\CX{i_1,i_3}\CX{1,i_1}$}} q.
                \end{align*}
                Similar to \cref{case:1_3a} of \cref{maincase:1}, we can choose $\mathbf{G}'$ to be one of $\mathbf{G}_1'$ or $\mathbf{G}_2'$ such that $\level(\mathbf{G}') < \level(s)$, then this case is proved.
            \item If $\sqrt{2}^ks[x,j] \not\equiv \sqrt{2}^ks[i_1,j] \pmod{2}$ but $\sqrt{2}^ks[x,j] \equiv 1 \pmod{2}$ or $\equiv 1+\sqrt{2} \pmod{2}$ (i.e., $x$ is an integer satisfying the Line $5$ but not the Line $6$ in \cref{alg:exact_synthesis} on input $s$). Then, similar to \cref{case:1_3b} of \cref{maincase:1}, we can complete the diagram as follows.
            \[ \begin{tikzcd}
                \setwiretype{n} s \ar[rrr,"\CX{1,x}"]\ar[d,Rightarrow,"\CH{1,i_2}\CX{1,i_1}",swap] & & &[1cm] r \ar[d, Rightarrow, "\CH{1,i_3}"] \\
                \setwiretype{n} t \ar[r,"\CH{x,i_3}",swap] & t_1 \ar[r, swap,"\CH{1,i_2}"] & t_2 \ar[r,to*,"\CX{1,x}\CX{1,i_1}",swap] & q 
            \end{tikzcd} 
            \]
            The diagram commutes equationally by \cref{rel:a2,rel:a3,,rel:c4,,rel:b5,rel:b6}. Also similar to \cref{case:1_3b} of \cref{maincase:1}, we can prove that \[\level(\mathbf{G}':t \Se{\CH{x,i_3}} t_1\Se{\CH{1,i_2}}t_2\Se*{\CX{1,x}\CX{1,i_2}}q)<\level(s).\]
            \item If $\sqrt{2}^ks[x,j] \equiv 0 \pmod{2}$, then the first syllable output by \cref{alg:exact_synthesis} on input $\CX{1,x}s$ is still $\CH{1,i_2}\CX{1,i_1}$.  We can complete the diagram as follows.
            \[ \begin{tikzcd}
                \setwiretype{n} s \ar[r,"\CX{1,x}"]\ar[d,Rightarrow,"\CH{1,i_2}\CX{1,i_1}",swap] &[1.4cm] r \ar[d, Rightarrow, "\CH{1,i_2}\CX{1,i_1}"] \\
                \setwiretype{n} t \ar[r, to*,swap, "\CX{1,i_1}\CX{1,x}\CX{1,i_1}"]& q
            \end{tikzcd} 
            \]
            The diagram commutes equationally by \cref{rel:a3,rel:c4,rel:b5}. We have $l\geq 2$. Then, $\level(r) = (j,k,l)$, $\level(t) \leq (j,k,l-2)$, and $\level(q) \leq (j,k,l-2)$. Thus, we can easily verify that $\level(\mathbf{G}': t\Se*{\CX{1,i_1}\CX{1,x}\CX{1,i_1}} q) \leq (j,k,l-2) < \level(s)$.
            \qedhere
        \end{enumerate}
    \end{enumerate}
\end{proof}

\begin{maincase}\label{maincase:3}
    \nameref{lem:basic_main} holds for the case of $G=\CX{1,x}$ and $N=\CZ{a}^{\tau}$.
\end{maincase}
\begin{proof}
    In this case, $\tau = 1$ (see Line 15 in \cref{alg:exact_synthesis}). We distinguish further subcases depending on $x$ and $a$.
    \begin{enumerate}
        \item $x< a$. Then the first syllable output by \cref{alg:exact_synthesis} on input $\CX{1,x}s$ is still $\CZ{a}$. We can complete the diagram as follows. 
        \[ \begin{tikzcd}
            \setwiretype{n} s \ar[r,"\CX{1,x}"]\ar[d,Rightarrow,"\CZ{a}",swap] & r \ar[d, Rightarrow, "\CZ{a}"] \\
            \setwiretype{n} t \ar[r, swap, "\CX{1,x}"]& q
        \end{tikzcd} 
        \]
        The diagram commutes equationally by \cref{rel:b2} and we have $\level(t) < \level(s)$, $\level(q) < \level(r) = \level(s)$.
        \item $x = a$. Then the first syllable output by \cref{alg:exact_synthesis} on input $\CX{1,x}s$ is $\CX{1,a}\CZ{1}$. We can complete the diagram as follows. 
        \[ \begin{tikzcd}
            \setwiretype{n} s \ar[r,"\CX{1,a}"]\ar[d,Rightarrow,"\CZ{a}",swap] & r \ar[d, Rightarrow, "\CX{1,a}\CZ{1}"] \\
            \setwiretype{n} t \ar[r, to*,swap, "\varepsilon"]& q
        \end{tikzcd} 
        \]
        The diagram commutes equationally by \cref{rel:c1} and we have $\level(q) = \level(t) < \level(s)$.
        \item $x > a$. Then the first and second syllables output by \cref{alg:exact_synthesis} on input $\CX{1,x}s$ would be $\CX{1,x}$ and $\CZ{a}$. We can complete the diagram as follows. 
        \[ \begin{tikzcd}
            \setwiretype{n} s \ar[r,"\CX{1,x}"]\ar[d,Rightarrow,"\CZ{a}",swap] & r \ar[d, Rightarrow*, "\CZ{a}\CX{1,x}"] \\
            \setwiretype{n} t \ar[r, to*,swap, "\varepsilon"]& q
        \end{tikzcd} 
        \]
        The diagram commutes equationally by \cref{rel:a1,rel:a2} and we have $\level(q) = \level(t) < \level(s)$. \qedhere
    \end{enumerate}
\end{proof}

\begin{maincase}\label{maincase:4}
    \nameref{lem:basic_main} holds for the case of $G=\CX{1,x}$ and $N=\CX{a,j}\CZ{a}^{\tau}$.
\end{maincase}
\begin{proof}
    We distinguish further subcases depending on $x$, $j$ and $a$.
    \begin{enumerate}
        \item $x < j$ and $a = 1$. Then the first syllable output by \cref{alg:exact_synthesis} on input $\CX{1,x}s$ is $\CX{x,j}\CZ{x}^{\tau}$. We can complete the diagram as follows. 
        \[ \begin{tikzcd}
            \setwiretype{n} s \ar[r,"\CX{1,x}"]\ar[d,Rightarrow,"\CX{1,j}\CZ{1}^{\tau}",swap] & r \ar[d, Rightarrow, "\CX{x,j}\CZ{x}^{\tau}"] \\
            \setwiretype{n} t \ar[r,swap, "\CX{1,x}"]& q
        \end{tikzcd} 
        \]
        The diagram commutes equationally by \cref{rel:c1,rel:c2} and we have $\level(t) < \level(s), \level(q) < \level(r) = \level(s)$.
        \item $x = j$ and $a = 1$. Then the first syllable output by \cref{alg:exact_synthesis} on input $\CX{1,x}s$ is $\CZ{j}^{\tau}$ (or $\varepsilon$ if $\tau = 0$). We can complete the diagram as follows. 
        \[ \begin{tikzcd}
            \setwiretype{n} s \ar[r,"\CX{1,j}"]\ar[d,Rightarrow,"\CX{1,j}\CZ{1}^{\tau}",swap] & r \ar[d, Rightarrow*, "\CZ{j}^{\tau} \text{ or $\varepsilon$ if $\tau = 0$}"] \\
            \setwiretype{n} t \ar[r,to*,swap, "\varepsilon"]& q
        \end{tikzcd} 
        \]
        The diagram commutes equationally by \cref{rel:c1} and we have $\level(q) = \level(t) < \level(s)$.
        \item $x \leq j$, $x \neq a$ and $a\neq 1$. Then the first syllable output by \cref{alg:exact_synthesis} on input $\CX{1,x}s$ is still $\CX{a,j}\CZ{a}^{\tau}$. We can complete the diagram as follows. 
        \[ \begin{tikzcd}
            \setwiretype{n} s \ar[r,"\CX{1,x}"]\ar[d,Rightarrow,"\CX{a,j}\CZ{a}^{\tau}",swap] & r \ar[d, Rightarrow, "\CX{a,j}\CZ{a}^{\tau}"] \\
            \setwiretype{n} t \ar[r,swap, "\CX{1,x}"]& q
        \end{tikzcd} 
        \]
        The diagram commutes equationally by \cref{rel:b2,rel:b3}. And we have $\level(t) < \level(s), \level(q) < \level(r) = \level(s)$.
        \item $x = a$. Then the first syllable output by \cref{alg:exact_synthesis} on input $\CX{1,x}s$ is $\CX{1,j}\CZ{1}^{\tau}$. We can complete the diagram as follows. 
        \[ \begin{tikzcd}
            \setwiretype{n} s \ar[r,"\CX{1,a}"]\ar[d,Rightarrow,"\CX{a,j}\CZ{a}^{\tau}",swap] & r \ar[d, Rightarrow, "\CX{1,j}\CZ{1}^{\tau}"] \\
            \setwiretype{n} t \ar[r,swap, "\CX{1,a}"]& q
        \end{tikzcd} 
        \]
        The diagram commutes equationally by \cref{rel:a1,rel:a2,,rel:c1,rel:c2,,}. And we have $\level(t) < \level(s), \level(q) < \level(r) = \level(s)$.
        \item $x>j$. Then the first and second syllables output by \cref{alg:exact_synthesis} on input $\CX{1,x}s$ is $\CX{1,x}$ and $\CX{a,j}\CZ{a}^\tau$. We can complete the diagram as follows. 
        \[ \begin{tikzcd}
            \setwiretype{n} s \ar[r,"\CX{1,x}"]\ar[d,Rightarrow,"\CX{a,j}\CZ{a}^{\tau}",swap] & r \ar[d, Rightarrow*, "\CX{a,j}\CZ{a}^{\tau}\CX{1,x}"] \\
            \setwiretype{n} t \ar[r,to*,swap, "\varepsilon"]& q
        \end{tikzcd} 
        \]
        The diagram commutes equationally by \cref{rel:a2} and we have $\level(q) = \level(t) < \level(s)$. \qedhere
    \end{enumerate}
\end{proof}

\begin{maincase}\label{maincase:5}
    \nameref{lem:basic_main} holds for the case of $G=\CZ{1}$ and $N=\CH{1,i_2}$.
\end{maincase}
\begin{proof}
    In this case, the first syllable output by \cref{alg:exact_synthesis} on input $\CZ{1}s$ is still $\CH{1,i_2}$. We can complete the diagram as follows. 
    \[ \begin{tikzcd}
        \setwiretype{n} s \ar[r,"\CZ{1}"]\ar[d,Rightarrow,"\CH{1,i_2}",swap] &[1.8cm] r \ar[d, Rightarrow, "\CH{1,i_2}"] \\
        \setwiretype{n} t \ar[r,to*,swap, "\CZ{1}\CZ{i_2}\CX{1,i_2}"]& q
    \end{tikzcd} 
    \]
    The diagram commutes equationally by \cref{rel:a1,,rel:d1,rel:d2}. Let $(j,k,l) = \level(s)$, we have $l \geq 2$. Then $\level(r) = (j,k,l)$, $\level(t) \leq (j,k,l-2), \level(q) \leq (j,k,l-2)$. Thus, we can easily verify that \[\level(\mathbf{G}': t\Se*{\CZ{1}\CZ{i_2}\CX{1,i_2}} q) \leq (j,k,l-2) < \level(s).\]
\end{proof}

\begin{maincase}\label{maincase:6}
    \nameref{lem:basic_main} holds for the case of $G=\CZ{1}$ and $N=\CH{1,i_2}\CX{1,i_1}$.
\end{maincase}
\begin{proof}
    In this case, the first syllable output by \cref{alg:exact_synthesis} on input $\CZ{1}s$ is still $\CH{1,i_2}\CX{1,i_1}$. We can complete the diagram as follows. 
    \[ \begin{tikzcd}
        \setwiretype{n} s \ar[r,"\CZ{1}"]\ar[d,Rightarrow,"\CH{1,i_2}\CX{1,i_1}",swap] & r \ar[d, Rightarrow, "\CH{1,i_2}\CX{1,i_1}"] \\
        \setwiretype{n} t \ar[r,swap, "\CZ{i_1}"]& q
    \end{tikzcd} 
    \]
    The diagram commutes equationally by \cref{rel:b4,rel:c1}. And we have $\level(t) < \level(s)$, $\level(q) < \level(r) = \level(s)$.
\end{proof}

\begin{maincase}\label{maincase:7}
    \nameref{lem:basic_main} holds for the case of $G=\CZ{1}$ and $N=\CZ{a}^{\tau}$.
\end{maincase}
\begin{proof}
    In this case, $\tau=1$ (see Line 15 in \cref{alg:exact_synthesis}).
    We distinguish further subcases depending on $a$.
    \begin{enumerate}
        \item $a = 1$. Then we can complete the diagram as follows. 
        \[ \begin{tikzcd}
            \setwiretype{n} s \ar[r,"\CZ{1}"]\ar[d,Rightarrow,"\CZ{1}",swap] & r \ar[d, Rightarrow*, "\varepsilon"] \\
            \setwiretype{n} t \ar[r,to*,swap, "\varepsilon"]& q
            \end{tikzcd} 
        \]
        The diagram commutes equationally and we have $\level(q) = \level(t) < \level(s)$.
        \item $a > 1$. Then the first syllable output by \cref{alg:exact_synthesis} on input $\CZ{1}s$ is still $\CZ{a}^\tau$. We can complete the diagram as follows. 
        \[ \begin{tikzcd}
            \setwiretype{n} s \ar[r,"\CZ{1}"]\ar[d,Rightarrow,"\CZ{a}^\tau",swap] & r \ar[d, Rightarrow, "\CZ{a}^\tau"] \\
            \setwiretype{n} t \ar[r,swap, "\CZ{1}"]& q
            \end{tikzcd} 
        \]
        The diagram commutes equationally by \cref{rel:b1} and we have $\level(t) < \level(s)$, $\level(q) < \level(r) = \level(s)$. 
        \qedhere
    \end{enumerate}
\end{proof}

\begin{maincase}\label{maincase:8}
    \nameref{lem:basic_main} holds for the case of $G=\CZ{1}$ and $N=\CX{a,j}\CZ{a}^{\tau}$.
\end{maincase}
\begin{proof}
    We distinguish further subcases depending on $a$.
    \begin{enumerate}
        \item $a = 1$. Then the first syllable output by \cref{alg:exact_synthesis} on input $\CZ{1}s$ would be $\CX{1,j}\CZ{1}^{\tau+1}$. We can complete the diagram as follows. 
        \[ \begin{tikzcd}
            \setwiretype{n} s \ar[r,"\CZ{1}"]\ar[d,Rightarrow,"\CX{1,j}\CZ{1}^{\tau}",swap] & r \ar[d, Rightarrow, "\CX{1,j}\CZ{1}^{\tau+1}"] \\
            \setwiretype{n} t \ar[r,to*,swap, "\varepsilon"]& q
        \end{tikzcd} 
        \]
        The diagram commutes equationally by \cref{rel:a1} and we have $\level(q) = \level(t) < \level(s)$.
        \item $a > 1$. Then the first syllable output by \cref{alg:exact_synthesis} on input $\CZ{1}s$ is still $\CX{a,j}\CZ{a}^{\tau}$. We can complete the diagram as follows. 
        \[ \begin{tikzcd}
            \setwiretype{n} s \ar[r,"\CZ{1}"]\ar[d,Rightarrow,"\CX{a,j}\CZ{a}^{\tau}",swap] & r \ar[d, Rightarrow, "\CX{a,j}\CZ{a}^{\tau}"] \\
            \setwiretype{n} t \ar[r,swap, "\CZ{1}"]& q
        \end{tikzcd} 
        \]
        The diagram commutes equationally by \cref{rel:b1,rel:b2} and we have $\level(t) < \level(s)$, $\level(q) < \level(r) = \level(s)$. \qedhere
    \end{enumerate}
\end{proof}

\begin{maincase}\label{maincase:9}
    \nameref{lem:basic_main} holds for the case of $G=\CH{1,2}$ and $N=\CH{1,i_2}$.
\end{maincase}
\begin{proof}
    \let\fullwidthdisplay\relax
    We distinguish further subcases depending on $i_2$.
    \begin{enumerate}[leftmargin=25pt]
        \item $i_2 = 2$. Then we can complete the diagram as follows. 
        \[ \begin{tikzcd}
            \setwiretype{n} s \ar[r,"\CH{1,2}"]\ar[d,Rightarrow,"\CH{1,2}",swap] & r \ar[d, Rightarrow*, "\varepsilon"] \\
            \setwiretype{n} t \ar[r,to*,swap, "\varepsilon"]& q
        \end{tikzcd} 
        \]
        The diagram commutes equationally and we have $\level(q) = \level(t) < \level(s)$.
        \item $i_2 > 2$.
        Let $(j,k,l) = \level(s)$.

        Since the first syllable output by \cref{alg:exact_synthesis} on input $s$ is $\CH{1,i_2}$, we have $\sqrt{2}^ks[1,j] \equiv 1 \pmod{2}$ or $\equiv 1+\sqrt{2} \pmod{2}$. Consider the following two cases:
        \begin{enumerate}[leftmargin=5pt]
            \item $\sqrt{2}^ks[2,j] \equiv 0 \pmod{2}$ or $\equiv \sqrt{2} \pmod{2}$.
            In this case,
            \begin{equation*}
                \begin{gathered}
                    \lde((\CH{1,2}s)[1,j]) = \lde((\CH{1,2}s)[2,j]) = k+1, \\
                    \sqrt{2}^{k+1}(\CH{1,2}s)[1,j] \equiv \sqrt{2}^{k+1}(\CH{1,2}s)[2,j] \pmod{2}.
                \end{gathered}
            \end{equation*}
            Then, the first syllable output by \cref{alg:exact_synthesis} on input $\CH{1,2}s$ will be $\CH{1,2}$.
            The diagram can be completed as follows.
            \begin{equation*}
                \begin{tikzcd}
                    \setwiretype{n} s \ar[r,"\CH{1,2}"]\ar[d,Rightarrow,"\CH{1,i_2}",swap] & r \ar[d, Rightarrow*, "\CH{1,i_2}\CH{1,2}"] \\
                    \setwiretype{n} t \ar[r,to*,swap, "\varepsilon"]& q
                \end{tikzcd}
            \end{equation*}
            The diagram commutes equationally by \cref{rel:a3} and $\level(q) = \level(t) < \level(s)$.
            \item $\sqrt{2}^ks[2,j] \equiv 1 \pmod{2}$ or $\equiv 1+\sqrt{2} \pmod{2}$.
            In this case, since the first syllable output by \cref{alg:exact_synthesis} on input $s$ is $\CH{1,i_2}$, we have
            \begin{equation*}
                \begin{gathered}
                    \sqrt{2}^k s[1,j] \equiv \sqrt{2}^k s[i_2,j] \not\equiv \sqrt{2}^ks[2,j] \pmod{2}, \\
                    \lde(s[1,j]) = \lde(s[2,j]) = k,
                \end{gathered}
            \end{equation*}
            from which, we  can prove that
            \begin{equation*}
                \begin{gathered}
                    \lde((\CH{1,2}s)[1,j]) = \lde((\CH{1,2}s)[2,j]) = k, \\
                    \sqrt{2}^k (\CH{1,2}s)[1,j] \not\equiv \sqrt{2}^k(\CH{1,2}s)[2,j] \pmod{2}.
                \end{gathered}
            \end{equation*}
            \begin{enumerate}[leftmargin=5pt]
                \item If $\sqrt{2}^k (\CH{1,2}s)[1,j] \equiv \sqrt{2}^k s[1,j] \pmod{2}$, then
                \begin{equation*}
                    \sqrt{2}^k (\CH{1,2}s)[i_2,j] = \sqrt{2}^k s[i_2,j] \equiv \sqrt{2}^k s[1,j] \equiv \sqrt{2}^k (\CH{1,2}s)[1,j] \pmod{2}.
                \end{equation*}
                Thus, the first syllable output by \cref{alg:exact_synthesis} on input $\CH{1,2}s$ will be $\CH{1,i_2}$.

                \begin{itemize}[leftmargin=10pt]
                    \item If $l \geq 6$, there must exist distinct $i_4,i_5 \not\in\{1,2,i_2\}$ such that
                    \[\lde(s[i_4,j]) = \lde(s[i_5,j]) = k, \enspace \sqrt{2}^ks[i_4,j] \equiv \sqrt{2}^ks[i_5,j] \pmod{2}.\]
                    Then, the diagram can be completed as follows.
                    \begin{equation*}
                        \begin{tikzcd}
                            \setwiretype{n} s \ar[d,Rightarrow,"\CH{1,i_2}",swap] \ar[rrrr,"\CH{1,2}"] &&&& r \ar[d,Rightarrow,"\CH{1,i_2}"] \\[-2mm]
                            \setwiretype{n} t_1 \ar[d,"\CH{i_4,i_5}",swap] &&&& q \\[-2mm]
                            \setwiretype{n} t_2 \ar[rrrr,to*,swap,"\CH{1,i_2}\CH{1,2}\CH{1,i_2}"{name=a1}]  &&&& t_3 \ar[u,"\CH{i_4,i_5}",swap]
                        \end{tikzcd}
                    \end{equation*}
                    The diagram commutes equationally by \cref{rel:a3,rel:b6}.
                    Let
                    \[\mathbf{G}': t_1 \Se*{\CH{i_4,i_5}\CH{1,i_2}\CH{1,2}\CH{1,i_2}\CH{i_4,i_5}} q.\]
                    It can be seen that $\level(t_2), \level(t_3) \leq (j,k,l-4)$, thus $\level(\mathbf{G}') \leq (j,k,l-2) < \level(s)$.
                    \item If $l < 6$, which implies $l = 4$. There is also $i_3$ such that
                    \begin{equation*}
                        \sqrt{2}^ks[2,j] \equiv \sqrt{2}^ks[i_3,j] \equiv \sqrt{2}^k(\CH{1,2}s)[2,j] \equiv \sqrt{2}^k(\CH{1,2}s)[i_3,j] \pmod{2}.
                    \end{equation*}
                    Let
                    \begin{align*}
                        v_1 &= (\CH{2,i_3}\CH{1,i_2}s)[1,j], &
                        v_2 &= (\CH{2,i_3}\CH{1,i_2}s)[2,j], \\
                        v_3 &= (\CH{2,i_3}\CH{1,i_2}s)[i_2,j], &
                        v_4 &= (\CH{2,i_3}\CH{1,i_2}s)[i_3,j].
                    \end{align*}
                    It can be seen that either\footnote{It is easy to see that $\lde(v_1), \lde(v_3) \leq k-1$, and they cannot both be equal to $k-1$ nor both strictly less than $k-1$.}
                    \begin{itemize}[leftmargin=10pt]
                        \item $\lde(v_1) = k-1$, $\lde(v_3) \leq k-2$; or
                        \item $\lde(v_1) \leq k-2$, $\lde(v_3) = k-1$.
                    \end{itemize}
                    Similarly, we have that either
                    \begin{itemize}[leftmargin=10pt]
                        \item $\lde(v_2) = k-1$, $\lde(v_4) \leq k-2$; or
                        \item $\lde(v_2) \leq k-2$, $\lde(v_4) = k-1$.
                    \end{itemize}
                    Then, we consider subcases depending on $v_1,v_2,v_3$ and $v_4$.
                    \begin{itemize}[leftmargin=10pt]
                        \item $\lde(v_1) = k-1$, $\lde(v_3) \leq k-2$, $\lde(v_2) = k-1$, $\lde(v_4) \leq k-2$.
                        In this case, the diagram can be completed as follows.
                        \begin{equation*}
                            \begin{tikzcd}
                                \setwiretype{n} s \ar[d,Rightarrow,"\CH{1,i_2}",swap] \ar[rrrr,"\CH{1,2}"] &&&& r \ar[d,Rightarrow,"\CH{1,i_2}"] \\[-2mm]
                                \setwiretype{n} t \ar[d,Rightarrow,"\CH{1,i_3}\CX{1,2}",swap] &&&& r_1 \ar[d,Rightarrow,"\CH{1,i_3}\CX{1,2}"] \\[-2mm]
                                \setwiretype{n} t_1 \ar[d,"\CX{1,2}",swap] &&&& q \\[-2mm]
                                \setwiretype{n} t_2 \ar[rrrr,to*,swap,"\CH{1,i_2}\CH{2,i_3}\CH{1,2}\CH{2,i_3}\CH{1,i_2}"{name=a1}]\ar[rrrr,bend right=40,"\mathbf{G}''"{name=a2},swap]  &&&& t_3 \ar[u,"\CX{1,2}",swap]
                                \arrow[phantom,from=a1,to=a2,"\text{\cref{lem:useful4}}"]
                            \end{tikzcd}
                        \end{equation*}
                        The rectangle at the top of the diagram commutes equationally by \cref{rel:a3,rel:c4,,rel:b6}.

                        Since $\level(s) = (j,k,4)$, we have that there exist $l'\geq 2$ such that 
                        \[\level(t_1) = \level(q)  = (j,k-1,l'),\]
                        thus
                        \[\level(t_2) = \level(t_3)  = (j,k-1,l').\]
                        Given that $t_2 = \CX{1,2}\CH{1,i_3}\CX{1,2}\CH{1,i_2}s = \CH{2,i_3}\CH{1,i_2}s$, we obtain
                        \begin{align*}
                            \lde(t_2[1,j]) &= \lde(v_1) = k-1, & \lde(t_2[2,j]) &= \lde(v_2) = k-1, \\
                            \lde(t_2[i_2,j]) &= \lde(v_3) \leq k-2, & \lde(t_2[i_3,j]) &= \lde(v_4) \leq  k-2.
                        \end{align*}
                        Together with $t_3 = \CH{1,i_2}\CH{2,i_3}\CH{1,2}\CH{2,i_3}\CH{1,i_2}t_2$, we can apply \cref{lem:useful4} to get a sequence of simple edges $\mathbf{G}''$ such that
                        \[ \mathbf{G}'' \approx \CH{1,i_2}\CH{2,i_3}\CH{1,2}\CH{2,i_3}\CH{1,i_2} \text{ and } \level(\mathbf{G}'':t_2\Se*{}t_3) \leq (j,k-1,l'+2). \]
                        Let
                        \begin{align*}
                            \mathbf{G}'&: t \Ne{\CH{1,i_3}\CX{1,2}} t_1 \Se{\CX{1,2}} t_2 \Se*{\mathbf{G}''} t_3 \Se{\CX{1,2}} q, \\
                            \mathbf{N}'&: r \Ne*{\CH{1,i_3}\CX{1,2}\CH{1,i_2}} q.
                        \end{align*}
                        We have
                        \begin{align*}
                            \mathbf{G}'N ={}& \CX{1,2}\mathbf{G}''\CX{1,2}\CH{1,i_3}\CX{1,2}\CH{1,i_2} \\
                            \approx{}& \CX{1,2}\blue{\CH{1,i_2}\CH{2,i_3}\CH{1,2}\CH{2,i_3}\CH{1,i_2}}\CX{1,2}\CH{1,i_3}\CX{1,2}\CH{1,i_2} \\
                            \approx{}& \CX{1,2}\CH{1,i_2}\CH{2,i_3}\CH{1,2}\CH{2,i_3}\CH{1,i_2}\blue{\CH{2,i_3}}\CH{1,i_2} \tag{by \cref{rel:a2,rel:c4}} \\
                            \approx{}& \CX{1,2}\CH{1,i_2}\CH{2,i_3}\CH{1,2}\blue{\varepsilon} \tag{by \cref{rel:a3,rel:b6}} \\
                            \approx{}& \blue{\CH{1,i_3}\CX{1,2}}\CH{1,i_2}\CH{1,2} \tag{by \cref{rel:b6,rel:c4}} \\
                            ={}& \mathbf{N}'G,
                        \end{align*}
                        and $\level(\mathbf{G}') = (j,k,2) < (j,k,4) = \level(s)$.
                        \item $\lde(v_1) = k-1$, $\lde(v_3) \leq k-2$, $\lde(v_2) \leq k-2$, $\lde(v_4) = k-1$. In this case, the diagram can be completed as follows.
                        \begin{equation*}
                            \begin{tikzcd}
                                \setwiretype{n} s \ar[d,Rightarrow,"\CH{1,i_2}",swap] \ar[rrrr,"\CH{1,2}"] &&&& r \ar[d,Rightarrow,"\CH{1,i_2}"] \\[-2mm]
                                \setwiretype{n} t \ar[d,Rightarrow,"\CH{1,i_3}\CX{1,2}",swap] &&&& r_1 \ar[d,Rightarrow,"\CH{1,i_3}\CX{1,2}"] \\[-2mm]
                                \setwiretype{n} t_1 \ar[d,to*,"\CX{2,i_3}\CX{1,2}",swap] &&&& q \\[-2mm]
                                \setwiretype{n} t_2 \ar[rrrr,to*,swap,"\CH{1,i_2}\CH{2,i_3}\CH{1,2}\CH{2,i_3}\CH{1,i_2}"{name=a1}]\ar[rrrr,bend right=40,"\mathbf{G}''"{name=a2},swap]  &&&& t_3 \ar[u,to*,"\CX{1,2}\CX{2,i_3}",swap]
                                \arrow[phantom,from=a1,to=a2,"\text{\cref{lem:useful4}}"]
                            \end{tikzcd}
                        \end{equation*}
                        The rectangle at the top of the diagram commutes equationally:
                        \begin{align*}
                            & \CX{1,2}\CX{2,i_3}\CH{1,i_2}\CH{2,i_3}\CH{1,2}\CH{2,i_3}\CH{1,i_2}\CX{2,i_3}\CX{1,2}\CH{1,i_3}\CX{1,2}\CH{1,i_2} \\
                            \approx{}& \CX{1,2}\CX{2,i_3}\CH{1,i_2}\CH{2,i_3}\CH{1,2}\CH{2,i_3}\CH{1,i_2}\CX{2,i_3}\blue{\CH{2,i_3}}\CH{1,i_2} \tag{by \cref{rel:a2,rel:c4}} \\
                            \approx{}& \CX{1,2}\CX{2,i_3}\CH{1,i_2}\CH{2,i_3}\CH{1,2}\blue{\CZ{i_3}\CH{2,i_3}\CH{1,i_2}}\CH{2,i_3}\CH{1,i_2} \tag{by \cref{rel:b5,rel:d2}} \\
                            \approx{}& \CX{1,2}\blue{\CH{1,i_2}\CH{2,i_3}\CZ{i_3}}\CH{1,2}{\CZ{i_3}\CH{2,i_3}\CH{1,i_2}}\CH{2,i_3}\CH{1,i_2} \tag{by \cref{rel:b5,rel:d2}} \\
                            \approx{}& \CX{1,2}\CH{1,i_2}\CH{2,i_3}\blue{\CH{1,2}} \tag{by \cref{rel:a1,rel:a3,,rel:b4,rel:b6}} \\
                            \approx{}& \CX{1,2}\blue{\CH{2,i_3}\CH{1,i_2}}\CH{1,2} \tag{by \cref{rel:b6}} \\
                            \approx{}& \blue{\CH{1,i_3}\CX{1,2}}\CH{1,i_2}\CH{1,2}. \tag{by \cref{rel:c4}}
                        \end{align*}
                        Similar to the previous case, we can also apply \cref{lem:useful4} to $t_2$ and $t_3$ to get a sequence of simple edges $\mathbf{G}''$ such that
                        \[ \mathbf{G}'' \approx \CH{1,i_2}\CH{2,i_3}\CH{1,2}\CH{2,i_3}\CH{1,i_2} \text{ and } \level(\mathbf{G}'':t_2\Se*{}t_3) \leq (j,k-1,l'+2), \]
                        where $\level(t_2) = (j,k-1,l')$.
                        Let
                        \begin{align*}
                            \mathbf{G}'&: t \Ne{\CH{1,i_3}\CX{1,2}} t_1 \Se*{\CX{2,i_3}\CX{1,2}} t_2 \Se*{\mathbf{G}''} t_3 \Se*{\CX{1,2}\CX{2,i_3}} q, \\
                            \mathbf{N}'&: r \Ne*{\CH{1,i_3}\CX{1,2}\CH{1,i_2}} q.
                        \end{align*}
                        We have
                        \begin{align*}
                            \mathbf{G}'N ={}& \CX{1,2}\CX{2,i_3}\mathbf{G}''\CX{2,i_3}\CX{1,2}\CH{1,i_3}\CX{1,2}\CH{1,i_2} \\
                            \approx{}& \CX{1,2}\CX{2,i_3}\blue{\CH{1,i_2}\CH{2,i_3}\CH{1,2}\CH{2,i_3}\CH{1,i_2}}\CX{2,i_3}\CX{1,2}\CH{1,i_3}\CX{1,2}\CH{1,i_2} \\
                            \approx{}& \blue{\CH{1,i_3}\CX{1,2}\CH{1,i_2}\CH{1,2}} \\
                            ={}& \mathbf{N}'G,
                        \end{align*}
                        and $\level(\mathbf{G}') = (j,k,2) < (j,k,4) = \level(s)$.
                        \item $\lde(v_1) \leq k-2$, $\lde(v_3) = k-1$, $\lde(v_2) = k-1$, $\lde(v_4) \leq k-2$. In this case, the diagram can be completed as follows.
                        \begin{equation*}
                            \begin{tikzcd}
                                \setwiretype{n} s \ar[d,Rightarrow,"\CH{1,i_2}",swap] \ar[rrrr,"\CH{1,2}"] &&&& r \ar[d,Rightarrow,"\CH{1,i_2}"] \\[-2mm]
                                \setwiretype{n} t \ar[d,Rightarrow,"\CH{1,i_3}\CX{1,2}",swap] &&&& r_1 \ar[d,Rightarrow,"\CH{1,i_3}\CX{1,2}"] \\[-2mm]
                                \setwiretype{n} t_1 \ar[d,to*,"\CX{1,i_2}\CX{1,2}",swap] &&&& q \\[-2mm]
                                \setwiretype{n} t_2 \ar[rrrr,to*,swap,"\CH{1,i_2}\CH{2,i_3}\CH{1,2}\CH{2,i_3}\CH{1,i_2}"{name=a1}]\ar[rrrr,bend right=40,"\mathbf{G}''"{name=a2},swap]  &&&& t_3 \ar[u,to*,"\CX{1,2}\CX{1,i_2}",swap]
                                \arrow[phantom,from=a1,to=a2,"\text{\cref{lem:useful4}}"]
                            \end{tikzcd}
                        \end{equation*}
                        The rectangle at the top of the diagram commutes equationally:
                        \begin{align*}
                            & \CX{1,2}\CX{1,i_2}\CH{1,i_2}\CH{2,i_3}\CH{1,2}\CH{2,i_3}\CH{1,i_2}\CX{1,i_2}\CX{1,2}\CH{1,i_3}\CX{1,2}\CH{1,i_2} \\
                            \approx{}& \CX{1,2}\CX{1,i_2}\CH{1,i_2}\CH{2,i_3}\CH{1,2}\CH{2,i_3}\CH{1,i_2}\CX{1,i_2}\blue{\CH{2,i_3}}\CH{1,i_2} \tag{by \cref{rel:a2,rel:c4}} \\
                            \approx{}& \CX{1,2}\blue{\CH{1,i_2}\CZ{i_2}}\CH{2,i_3}\CH{1,2}\CH{2,i_3}\blue{\CZ{i_2}\CH{1,i_2}}{\CH{2,i_3}}\CH{1,i_2} \tag{by \cref{rel:d2}} \\
                            \approx{}& \CX{1,2}\CH{1,i_2}\blue{\CH{2,i_3}\CH{1,2}\CH{2,i_3}}\CH{1,i_2}{\CH{2,i_3}}\CH{1,i_2} \tag{by \cref{rel:b4,rel:a1}} \\
                            \approx{}& \CX{1,2}\CH{1,i_2}\CH{2,i_3}\CH{1,2}\blue{\varepsilon} \tag{by \cref{rel:a3,rel:b6}} \\
                            \approx{}& \CX{1,2}\blue{\CH{2,i_3}\CH{1,i_2}}\CH{1,2} \tag{by \cref{rel:b6}} \\
                            \approx{}& \blue{\CH{1,i_3}\CX{1,2}}\CH{1,i_2}\CH{1,2}. \tag{by \cref{rel:c4}}
                        \end{align*}
                        Similar to the previous case, we can also apply \cref{lem:useful4} to $t_2$ and $t_3$ to get a sequence of simple edges $\mathbf{G}''$ such that
                        \[ \mathbf{G}'' \approx \CH{1,i_2}\CH{2,i_3}\CH{1,2}\CH{2,i_3}\CH{1,i_2} \text{ and } \level(\mathbf{G}'':t_2\Se*{}t_3) \leq (j,k-1,l'+2), \]
                        where $\level(t_2) = (j,k-1,l')$.
                        Let
                        \begin{align*}
                            \mathbf{G}'&: t \Ne{\CH{1,i_3}\CX{1,2}} t_1 \Se*{\CX{1,i_2}\CX{1,2}} t_2 \Se*{\mathbf{G}''} t_3 \Se*{\CX{1,2}\CX{1,i_2}} q, \\
                            \mathbf{N}'&: r \Ne*{\CH{1,i_3}\CX{1,2}\CH{1,i_2}} q.
                        \end{align*}
                        We have
                        \begin{align*}
                            \mathbf{G}'N ={}& \CX{1,2}\CX{1,i_2}\mathbf{G}''\CX{1,i_2}\CX{1,2}\CH{1,i_3}\CX{1,2}\CH{1,i_2} \\
                            \approx{}& \CX{1,2}\CX{1,i_2}\blue{\CH{1,i_2}\CH{2,i_3}\CH{1,2}\CH{2,i_3}\CH{1,i_2}}\CX{1,i_2}\CX{1,2}\CH{1,i_3}\CX{1,2}\CH{1,i_2} \\
                            \approx{}& \blue{\CH{1,i_3}\CX{1,2}\CH{1,i_2}\CH{1,2}} \\
                            ={}& \mathbf{N}'G,
                        \end{align*}
                        and $\level(\mathbf{G}') = (j,k,2) < (j,k,4) = \level(s)$.
                        \item $\lde(v_1) \leq k-2$, $\lde(v_3) = k-1$, $\lde(v_2) \leq k-2$, $\lde(v_4) = k-1$. In this case, the diagram can be completed as follows.
                        \begin{equation*}
                            \begin{tikzcd}
                                \setwiretype{n} s \ar[d,Rightarrow,"\CH{1,i_2}",swap] \ar[rrrr,"\CH{1,2}"] &&&& r \ar[d,Rightarrow,"\CH{1,i_2}"] \\[-2mm]
                                \setwiretype{n} t \ar[d,Rightarrow,"\CH{1,i_3}\CX{1,2}",swap] &&&& r_1 \ar[d,Rightarrow,"\CH{1,i_3}\CX{1,2}"] \\[-2mm]
                                \setwiretype{n} t_1 \ar[d,to*,"\CX{2,i_3}\CX{1,i_2}\CX{1,2}",swap] &&&& q \\[-2mm]
                                \setwiretype{n} t_2 \ar[rrrr,to*,swap,"\CH{1,i_2}\CH{2,i_3}\CH{1,2}\CH{2,i_3}\CH{1,i_2}"{name=a1}]\ar[rrrr,bend right=40,"\mathbf{G}''"{name=a2},swap]  &&&& t_3 \ar[u,to*,"\CX{1,2}\CX{1,i_2}\CX{2,i_3}",swap]
                                \arrow[phantom,from=a1,to=a2,"\text{\cref{lem:useful4}}"]
                            \end{tikzcd}
                        \end{equation*}
                        The rectangle at the top of the diagram commutes equationally:
                        \begin{align*}
                            & \CX{1,2}\CX{1,i_2}\CX{2,i_3}\CH{1,i_2}\CH{2,i_3}\CH{1,2}\CH{2,i_3}\CH{1,i_2}\CX{2,i_3}\CX{1,i_2}\CX{1,2}\CH{1,i_3}\CX{1,2}\CH{1,i_2} \\
                            \approx{}& \CX{1,2}\CX{1,i_2}\CX{2,i_3}\CH{1,i_2}\CH{2,i_3}\CH{1,2}\CH{2,i_3}\CX{2,i_3}\CH{1,i_2}\CX{1,i_2}\blue{\CH{2,i_3}}\CH{1,i_2} \tag{by \cref{rel:a2,rel:c4}} \\
                            \approx{}& \CX{1,2}\blue{\CH{1,i_2}\CH{2,i_3}\CZ{i_2}\CZ{i_3}}\CH{1,2}\blue{\CZ{i_3}\CZ{i_2}\CH{2,i_3}\CH{1,i_2}}{\CH{2,i_3}}\CH{1,i_2} \tag{by \cref{rel:b4,rel:b5,,rel:d2}} \\
                            \approx{}& \CX{1,2}\CH{1,i_2}\CH{2,i_3}\blue{\CH{1,2}}\CH{2,i_3}\CH{1,i_2}{\CH{2,i_3}}\CH{1,i_2} \tag{by \cref{rel:b4,rel:a1}} \\
                            \approx{}& \CX{1,2}\CH{1,i_2}\CH{2,i_3}\CH{1,2}\blue{\varepsilon} \tag{by \cref{rel:a3,rel:b6}} \\
                            \approx{}& \CX{1,2}\blue{\CH{2,i_3}\CH{1,i_2}}\CH{1,2} \tag{by \cref{rel:b6}} \\
                            \approx{}& \blue{\CH{1,i_3}\CX{1,2}}\CH{1,i_2}\CH{1,2}. \tag{by \cref{rel:c4}}
                        \end{align*}
                        Similar to the previous case, we can also apply \cref{lem:useful4} to $t_2$ and $t_3$ to get a sequence of simple edges $\mathbf{G}''$ such that
                        \[ \mathbf{G}'' \approx \CH{1,i_2}\CH{2,i_3}\CH{1,2}\CH{2,i_3}\CH{1,i_2} \text{ and } \level(\mathbf{G}'':t_2\Se*{}t_3) \leq (j,k-1,l'+2), \]
                        where $\level(t_2) = (j,k-1,l')$.
                        Let
                        \begin{align*}
                            \mathbf{G}'&: t \Ne{\CH{1,i_3}\CX{1,2}} t_1 \Se*{\CX{2,i_3}\CX{1,i_2}\CX{1,2}} t_2 \Se*{\mathbf{G}''} t_3 \Se*{\CX{1,2}\CX{1,i_2}\CX{2,i_3}} q, \\
                            \mathbf{N}'&: r \Ne*{\CH{1,i_3}\CX{1,2}\CH{1,i_2}} q.
                        \end{align*}
                        We have
                        \begin{align*}
                            & \mathbf{G}'N \\
                            ={}& \CX{1,2}\CX{1,i_2}\CX{2,i_3}\mathbf{G}''\CX{2,i_3}\CX{1,i_2}\CX{1,2}\CH{1,i_3}\CX{1,2}\CH{1,i_2} \\
                            \approx{}& \CX{1,2}\CX{1,i_2}\CX{2,i_3}\blue{\CH{1,i_2}\CH{2,i_3}\CH{1,2}\CH{2,i_3}\CH{1,i_2}}\CX{2,i_3}\CX{1,i_2}\CX{1,2}\CH{1,i_3}\CX{1,2}\CH{1,i_2} \\
                            \approx{}& \blue{\CH{1,i_3}\CX{1,2}\CH{1,i_2}\CH{1,2}} \\
                            ={}& \mathbf{N}'G,
                        \end{align*}
                        and $\level(\mathbf{G}') = (j,k,2) < (j,k,4) = \level(s)$.
                    \end{itemize}
                \end{itemize}
                \item If $\sqrt{2}^k (\CH{1,2}s)[1,j] \not\equiv \sqrt{2}^k s[1,j] \pmod{2}$, then the first syllable output by \cref{alg:exact_synthesis} on input $\CH{1,2}s$ will be $\CH{1,i_3}$ for $i_3\not\in \{1,2,i_2\}$.
                \begin{itemize}[leftmargin=10pt]
                    \item If $l \geq 6$, there must exist distinct $i_4,i_5 \not\in\{1,2,i_2,i_3\}$ such that
                    \[\lde(s[i_4,j]) = \lde(s[i_5,j]) = k, \enspace \sqrt{2}^ks[i_4,j] \equiv \sqrt{2}^ks[i_5,j] \pmod{2}.\]
                    Then, the diagram can be completed as follows.
                    \begin{equation*}
                        \begin{tikzcd}
                            \setwiretype{n} s \ar[d,Rightarrow,"\CH{1,i_2}",swap] \ar[rrrr,"\CH{1,2}"] &&&& r \ar[d,Rightarrow,"\CH{1,i_3}"] \\[-2mm]
                            \setwiretype{n} t_1 \ar[d,"\CH{i_4,i_5}",swap] &&&& q \\[-2mm]
                            \setwiretype{n} t_2 \ar[rrrr,to*,swap,"\CH{1,i_3}\CH{1,2}\CH{1,i_2}"{name=a1}]  &&&& t_3 \ar[u,"\CH{i_4,i_5}",swap]
                        \end{tikzcd}
                    \end{equation*}
                    The diagram commutes equationally by \cref{rel:a3,rel:b6}.
                    Let
                    \[\mathbf{G}': t_1 \Se*{\CH{i_4,i_5}\CH{1,i_3}\CH{1,2}\CH{1,i_2}\CH{i_4,i_5}} q.\]
                    It can be seen that $\level(t_2), \level(t_3) \leq (j,k,l-4)$, thus $\level(\mathbf{G}') \leq (j,k,l-2) < \level(s)$.
                    \item If $l < 6$, which means $l = 4$. In this case, we first consider the following diagram.
                    \begin{equation*}
                        \begin{tikzcd}
                            \setwiretype{n} s \ar[d,Rightarrow,"\CH{1,i_2}",swap] \ar[rrrr,"\CH{1,2}"] &&&& r \ar[d,Rightarrow,"\CH{1,i_3}"] \\[-2mm]
                            \setwiretype{n} t \ar[d,"\CH{2,i_3}",swap] &&&& q  \\[-2mm]
                            \setwiretype{n} t_1 \ar[d,to*,"\CX{i_2,i_3}\CX{1,2}",swap] &&&& t_4 \ar[u,"\CH{2,i_2}",swap]\\[-2mm]
                            \setwiretype{n} t_2 \ar[rrrr,to*,swap,"\CH{1,i_2}\CH{2,i_3}\CH{1,2}\CH{2,i_3}\CH{1,i_2}"{name=a1}]  &&&& t_3 \ar[u,to*,"\CX{i_2,i_3}\CX{2,i_3}\CZ{2}\CZ{i_3}",swap]
                        \end{tikzcd}
                    \end{equation*}
                    The diagram commutes equationally as follows.
                    \begin{align*}
                        &\CH{2,i_2}\CX{i_2,i_3}\CX{2,i_3}\CZ{2}\CZ{i_3}\CH{1,i_2}\CH{2,i_3}\CH{1,2}\CH{2,i_3}\CH{1,i_2}\CX{i_2,i_3}\CX{1,2}\CH{2,i_3}\CH{1,i_2} \\
                        \hspace{-10mm}\approx{}& \CH{2,i_2}\CX{i_2,i_3}\CX{2,i_3}\CZ{2}\CZ{i_3}\CH{1,i_2}\CH{2,i_3}\CH{1,2}\blue{\CX{i_2,i_3}\CH{2,i_2}\CH{1,i_3}}\CX{1,2}\CH{2,i_3}\CH{1,i_2} \tag{by \cref{rel:c5}} \\
                        \hspace{-10mm}\approx{}& \CH{2,i_2}\CX{i_2,i_3}\CX{2,i_3}\CZ{2}\CZ{i_3}\CH{1,i_2}\CH{2,i_3}\CH{1,2}\CX{i_2,i_3}\blue{\CX{1,2}\CH{1,i_2}\CH{2,i_3}}\CH{2,i_3}\CH{1,i_2} \tag{by \cref{rel:c4}} \\
                        \hspace{-10mm}\approx{}& \CH{2,i_2}\CX{i_2,i_3}\CX{2,i_3}\CZ{2}\CZ{i_3}\CH{1,i_2}\CH{2,i_3}\CH{1,2}\CX{i_2,i_3}\CX{1,2}\blue{\varepsilon} \tag{by \cref{rel:a3}} \\
                        \hspace{-10mm}\approx{}& \CH{2,i_2}\CX{i_2,i_3}\CX{2,i_3}\blue{\CH{1,i_2}\CH{2,i_3}\CZ{2}\CZ{i_3}}\CH{1,2}\CX{i_2,i_3}\CX{1,2} \tag{by \cref{rel:b1,rel:b4,,rel:d1}} \\
                        \hspace{-10mm}\approx{}& \CH{2,i_2}\CX{i_2,i_3}\CX{2,i_3}\CH{1,i_2}\CH{2,i_3}\CZ{2}\CZ{i_3}\blue{\CZ{2}\CH{1,2}}\CX{i_2,i_3} \tag{by \cref{rel:b3,,rel:d2}} \\
                        \hspace{-10mm}\approx{}& \CH{2,i_2}\CX{i_2,i_3}\CX{2,i_3}\CH{1,i_2}\CH{2,i_3}\blue{\CZ{i_3}}\CH{1,2}\CX{i_2,i_3} \tag{by \cref{rel:b1,rel:a1}} \\
                        \hspace{-10mm}\approx{}& \CH{2,i_2}\CX{i_2,i_3}\CX{2,i_3}\CH{1,i_2}\blue{\CX{2,i_3}\CH{2,i_3}}\CH{1,2}\CX{i_2,i_3} \tag{by \cref{rel:d2}} \\
                        \hspace{-10mm}\approx{}& \CH{2,i_2}\CX{i_2,i_3}\blue{\CH{1,i_2}}\CH{2,i_3}\CH{1,2}\CX{i_2,i_3} \tag{by \cref{rel:a2,rel:b5}} \\
                        \hspace{-10mm}\approx{}& \CH{2,i_2}\CX{i_2,i_3}\CH{1,i_2}\blue{\CX{i_2,i_3}\CH{2,i_2}\CH{1,2}} \tag{by \cref{rel:b5,rel:c5}} \\
                        \hspace{-10mm}\approx{}& \CH{2,i_2}\blue{\CH{1,i_3}}\CH{2,i_2}\CH{1,2} \tag{by \cref{rel:a2,rel:c5}} \\
                        \hspace{-10mm}\approx{}& \blue{\CH{1,i_3}}\CH{1,2}. \tag{by \cref{rel:a3,rel:b6}}
                    \end{align*}
                    Then, similar to the case of $\sqrt{2}^k (\CH{1,2}s)[1,j] \equiv \sqrt{2}^k s[1,j] \pmod{2}$ and $l<6$, we can choose $\mathbf{G}_0, \mathbf{G}_1$ according to $\lde(t_2[1,j]), \lde(t_2[2,j]), \lde(t_2[i_2,j])$ and $\lde(t_2[i_3,j])$:
                    \begin{itemize}[leftmargin=10pt]
                        \item $\lde(t_2[1,j]) = k-1, \lde(t_2[2,j]) = k-1, \lde(t_2[i_2,j]) \leq k-2$ and $\lde(t_2[i_3,j]) \leq k-2$. Choose
                        \[\mathbf{G}_0 = \varepsilon, \enspace \mathbf{G}_1 = \varepsilon. \]
                        \item $\lde(t_2[1,j]) = k-1, \lde(t_2[2,j]) \leq k-2, \lde(t_2[i_2,j]) \leq k-2$ and $\lde(t_2[i_3,j]) = k-1$. Choose
                        \[\mathbf{G}_0 = \CX{2,i_3}, \enspace \mathbf{G}_1 = \CX{2,i_3}. \]
                        \item $\lde(t_2[1,j]) \leq k-2, \lde(t_2[2,j]) = k-1, \lde(t_2[i_2,j]) = k-1$ and $\lde(t_2[i_3,j]) \leq k-2$. Choose
                        \[\mathbf{G}_0 = \CX{1,i_2}, \enspace \mathbf{G}_1 = \CX{1,i_2}. \]
                        \item $\lde(t_2[1,j]) \leq k-2, \lde(t_2[2,j]) \leq k-2, \lde(t_2[i_2,j]) = k-1$ and $\lde(t_2[i_3,j]) = k-1$. Choose
                        \[\mathbf{G}_0 = \CX{2,i_3}\CX{1,i_2}, \enspace \mathbf{G}_1 = \CX{1,i_2}\CX{2,i_3}. \]
                    \end{itemize}
                    such that the following diagram commutes equationally with $t_2'$ and $t_3'$ satisfy the conditions of \cref{lem:useful4}.
                    \begin{equation*}
                        \begin{tikzcd}
                            \setwiretype{n} t_2 \ar[rrrr,to*,swap,"\CH{1,i_2}\CH{2,i_3}\CH{1,2}\CH{2,i_3}\CH{1,i_2}"]\ar[d,to*,swap,"\mathbf{G}_0"]  &&&& t_3 \\[-2mm]
                            \setwiretype{n} t_2' \ar[rrrr,to*,swap,"\CH{1,i_2}\CH{2,i_3}\CH{1,2}\CH{2,i_3}\CH{1,i_2}"{name=a1}] \ar[rrrr,bend right=40,"\mathbf{G}''"{name=a2},swap]  &&&& t_3' \ar[u,to*,"\mathbf{G}_1",swap]
                            \arrow[phantom,from=a1,to=a2,"\text{\cref{lem:useful4}}"]
                        \end{tikzcd}
                    \end{equation*}
                    Thus, we can obtain a sequence of simple edges $\mathbf{G}''$ such that
                    \[ \mathbf{G}'' \approx \CH{1,i_2}\CH{2,i_3}\CH{1,2}\CH{2,i_3}\CH{1,i_2} \text{ and } \level(\mathbf{G}'':t_2'\Se*{}t_3') \leq (j,k-1,l'+2), \]
                    where $\level(t_2') = \level(t_2) = \level(t_1) = (j,k-1,l')$.
                    Let
                    \begin{align*}
                        \mathbf{G}'&: t \Se{\CH{2,i_3}} t_1 \Se*{\CX{i_2,i_3}\CX{1,2}} t_2 \Se*{\mathbf{G}_0} t_2' \Se*{\mathbf{G}''} t_3' \Se*{\mathbf{G}_1} t_2 \Se*{\CH{2,i_2}\CX{i_2,i_3}\CX{2,i_3}\CZ{2}\CZ{i_3}} q, \\
                        \mathbf{N}'&: r \Ne{\CH{1,i_3}} q.
                    \end{align*}
                    We can check that $\mathbf{G}'N
                        \approx{} \mathbf{N}'G$
                    and $\level(\mathbf{G}') = (j,k,2) < (j,k,4) = \level(s)$. \qedhere
                \end{itemize}
            \end{enumerate}
        \end{enumerate}
    \end{enumerate}
\end{proof}

\begin{maincase}\label{maincase:10}
    \nameref{lem:basic_main} holds for the case of $G=\CH{1,2}$ and $N=\CH{1,i_2}\CX{1,i_1}$.
\end{maincase}
\begin{proof}
    We distinguish further subcases depending on $i_1$.
    \begin{enumerate}
        \item $i_1 = 2$. Then the first syllable output by \cref{alg:exact_synthesis} on input $\CH{1,2}s$ will be $\CH{1,2}$. We can complete the diagram as follows. 
        \[ \begin{tikzcd}
            \setwiretype{n} s \ar[r,"\CH{1,2}"]\ar[d,Rightarrow,"\CH{1,i_2}\CX{1,2}",swap] & r \ar[d, Rightarrow*, "\CH{1,i_2}\CX{1,2}\CH{1,2}"] \\
            \setwiretype{n} t \ar[r,to*,swap, "\varepsilon"]& q
        \end{tikzcd} 
        \]
        The diagram commutes equationally by \cref{rel:a3} and we have $\level(q) = \level(t) < \level(s)$.
        \item $i_1 > 2$. Then the first syllable output by \cref{alg:exact_synthesis} on input $\CH{1,2}s$ will be $\CH{1,2}$ or $\CH{1,i_2}\CX{1,i_1}$. 
        \begin{itemize}[leftmargin=10pt]
            \item If output $\CH{1,2}$, we can complete the diagram as follows. 
            \begin{equation*}
                \begin{tikzcd}
                    \setwiretype{n} s \ar[r,"\CH{1,2}"]\ar[d,Rightarrow,"\CH{1,i_2}\CX{1,i_1}",swap] & r \ar[d, Rightarrow*, "\CH{1,i_2}\CX{1,i_1}\CH{1,2}"] \\
                    \setwiretype{n} t \ar[r,to*,swap, "\varepsilon"]& q
                \end{tikzcd} 
            \end{equation*}
            The diagram commutes equationally by \cref{rel:a3} and we have $\level(q) = \level(t) < \level(s)$.
            \item If output $\CH{1,i_2}\CX{1,i_1}$, we can complete the diagram as follows.
            \begin{equation*}
                \begin{tikzcd}
                    \setwiretype{n} s \ar[rrr,"\CH{1,2}"]\ar[d,Rightarrow,"\CH{1,i_2}\CX{1,i_1}",swap] &&& r \ar[d, Rightarrow, "\CH{1,i_2}\CX{1,i_1}"] \\
                    \setwiretype{n} t \ar[r,swap, "\CX{1,i_1}"]& t_1 \ar[r,swap,"\CH{1,2}"] & t_2 \ar[r,swap, "\CX{1,i_1}"] & q
                \end{tikzcd} 
            \end{equation*}
            The diagram commutes equationally by \cref{rel:b6,rel:c4}.
            Let $(j,k,l) = \level(s)$, where $l\geq 2$. Since the first syllable output by \cref{alg:exact_synthesis} on input $\CH{1,2}s$ is $\CH{1,i_2}\CX{1,i_1}$, we have that indices $1,2$ do not contribute to the $l'$ of $(j,k,l') = \level(r)$, then $\level(r) = (j,k,l)$.
            Thus, we have $\level(t)\leq (j,k,l-2)$ and $\level(q) \leq (j,k,l-2)$, which ensure that $\level(t_1), \level(t_2) \leq (j,k,l-2)$. Thus,
            \[\level(\mathbf{G}':t \Se{\CX{1,i_1}} t_1 \Se{\CH{1,2}} t_2 \Se{\CX{1,i_1}}) \leq (j,k,l-2) < (j,k,l) = \level(s). \qedhere \]
        \end{itemize}
    \end{enumerate}
\end{proof}

\begin{maincase}\label{maincase:11}
    \nameref{lem:basic_main} holds for the case of $G=\CH{1,2}$ and $N=\CZ{a}^{\tau}$.
\end{maincase}
\begin{proof}
    In this case, $\tau=1$ (see Line 15 in \cref{alg:exact_synthesis}).
    We distinguish further subcases depending on $a$.
    \begin{enumerate}
            \item $1 \leq a \leq 2$. Then the first syllable output by \cref{alg:exact_synthesis} on input $\CH{1,2}s$ will be $\CH{1,2}$. We can complete the diagram as follows. 
            \[ \begin{tikzcd}
                \setwiretype{n} s \ar[r,"\CH{1,2}"]\ar[d,Rightarrow,"\CZ{a}^{\tau}",swap] & r \ar[d, Rightarrow*, "\CZ{a}^{\tau}\CH{1,2}"] \\
                \setwiretype{n} t \ar[r,to*,swap, "\varepsilon"]& q
            \end{tikzcd} 
            \]
            The diagram commutes equationally by \cref{rel:a3} and we have $\level(q) = \level(t) < \level(s)$.
            \item $a > 2$. Then the first syllable output by \cref{alg:exact_synthesis} on input $\CH{1,2}s$ is still $\CZ{a}^{\tau}$. We can complete the diagram as follows. 
            \[ \begin{tikzcd}
                \setwiretype{n} s \ar[r,"\CH{1,2}"]\ar[d,Rightarrow,"\CZ{a}^{\tau}",swap] & r \ar[d, Rightarrow, "\CZ{a}^{\tau}"] \\
                \setwiretype{n} t \ar[r,swap, "\CH{1,2}"]& q
            \end{tikzcd} 
            \]
            The diagram commutes equationally by \cref{rel:b4}, and we have $\level(t) < \level(s)$, $\level(q) < \level(r) = \level(s)$. \qedhere
        \end{enumerate}
\end{proof}

\begin{maincase}\label{maincase:12}
    \nameref{lem:basic_main} holds for the case of $G=\CH{1,2}$ and $N=\CX{a,j}\CZ{a}^{\tau}$.
\end{maincase}
\begin{proof}
    We distinguish further subcases depending on $a$.
    \begin{enumerate}
        \item $1\leq a \leq 2$. Then the first and second syllables output by \cref{alg:exact_synthesis} on input $\CH{1,2}s$ will be $\CH{1,2}$ and $\CX{1,j}\CZ{1}^{\tau}$. We can complete the diagram as follows. 
        \[ \begin{tikzcd}
            \setwiretype{n} s \ar[r,"\CH{1,2}"]\ar[d,Rightarrow,"\CX{1,j}\CZ{1}^{\tau}",swap] & r \ar[d, Rightarrow*, "\CX{1,j}\CZ{1}^{\tau}\CH{1,2}"] \\
            \setwiretype{n} t \ar[r,to*,swap, "\varepsilon"]& q
        \end{tikzcd} 
        \]
        The diagram commutes equationally by \cref{rel:a3} and we have $\level(q) = \level(t) < \level(s)$.
        \item $a > 2$. Then the first syllable output by \cref{alg:exact_synthesis} on input $\CH{1,2}s$ is still $\CX{a,j}\CZ{a}^{\tau}$. We can complete the diagram as follows. 
        \[ \begin{tikzcd}
            \setwiretype{n} s \ar[r,"\CH{1,2}"]\ar[d,Rightarrow,"\CX{a,j}\CZ{a}^{\tau}",swap] & r \ar[d, Rightarrow, "\CX{a,j}\CZ{a}^{\tau}"] \\
            \setwiretype{n} t \ar[r, swap,"\CH{1,2}"] & q
        \end{tikzcd} 
        \]
        The diagram commutes equationally by \cref{rel:b4,rel:b5}. Let $(j,k,l) = \level(s)$, we have $k=0$ and $l = 0$. Then, it is easy to see that $\level(r) = (j,0,0) = \level(s)$. Thus, $\level(t) < \level(s), \level(q) < \level(r) = \level(s)$. 
        \qedhere
    \end{enumerate}
\end{proof}

\endgroup

\section{Completeness of \texorpdfstring{\QPiLang{}}{Q-Pi} \texorpdfstring{(\NoCaseChange{\cref{thm:qpi}})}{}}\label{app:qpi_proof}
\begingroup
\allowdisplaybreaks

\begin{lemma}\label{lem:n_embedded_n+m}
    Let $\mathbf{G} \in \cG_{n}^*$, which can be embedded into $\cG_{n+m}^*$, then 
    $\sem{\mathbf{G}\in  \cG_{n+m}^*} = \sem{ \mathbf{G} \in \cG_{n}^*}\oplus I_{m}$.
\end{lemma}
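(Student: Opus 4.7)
The plan is to prove this by induction on the length $\ell$ of the word $\mathbf{G} = G_1 G_2 \cdots G_\ell$. The base case $\ell = 0$ is immediate since $\sem{\varepsilon} = I_n$ in $\cG_n^*$ and $I_n \oplus I_m = I_{n+m} = \sem{\varepsilon}$ in $\cG_{n+m}^*$. For the inductive step, writing $\mathbf{G} = \mathbf{G}' G$, I would compute
\[
\sem{\mathbf{G}' G \in \cG_{n+m}^*} = \sem{\mathbf{G}' \in \cG_{n+m}^*} \cdot \sem{G \in \cG_{n+m}^*}
\]
and reduce the problem to showing the claim holds for single generators, since block-matrix multiplication respects the direct sum structure: $(A_1 \oplus I_m)(A_2 \oplus I_m) = (A_1 A_2) \oplus I_m$.

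The remaining work is then to verify the claim for each of the three types of generators $\CZ{a}$, $\CX{b,c}$, $\CH{b,c}$ individually. This follows directly from the definition of an $m$-level matrix $M_{[a_1,\ldots,a_m]}$: by construction, all entries outside the indices $a_1, \ldots, a_m$ agree with the identity matrix. Since the indices of a generator in $\cG_n$ are all between $1$ and $n$, embedding it into $\cG_{n+m}^*$ leaves all rows and columns indexed by $n+1, \ldots, n+m$ acting as identity, which is exactly the block $I_m$ in the direct sum $\sem{G \in \cG_n^*} \oplus I_m$.

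There is no real obstacle here; the lemma is essentially a bookkeeping observation that the generators $\cG_n$ are defined in terms of $m$-level matrices which by design act trivially on all unused coordinates, and this property is preserved under matrix multiplication. The only thing to be slightly careful about is to state the base case (empty word) explicitly so that the induction is well-founded, and to use the mixed-product property of $\oplus$ in the inductive step.
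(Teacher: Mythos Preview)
Your proposal is correct and takes essentially the same approach as the paper, which simply states ``by induction on structure of $\mathbf{G}$ and direct computation.'' You have spelled out explicitly what the paper leaves implicit: the base case, the use of $(A_1 \oplus I_m)(A_2 \oplus I_m) = (A_1 A_2) \oplus I_m$ in the inductive step, and the observation that the $m$-level matrix definition makes each generator act trivially on coordinates $n+1,\ldots,n+m$.
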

\begin{proof}
    By induction on structure of $\mathbf{G}$ and direct computation.
\end{proof}

\begin{lemma}\label{lem:n_shift}
    Let $\mathbf{G} \in \cG_{n}^*$, then 
    $\sem{\hshift(\mathbf{G}, m) \in \cG_{n+m}^*} = I_{m} \oplus \sem{ \mathbf{G} \in \cG_{n}^*}$.
\end{lemma}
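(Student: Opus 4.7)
The plan is to prove this by induction on the structure of $\mathbf{G}$, mirroring closely the preceding \cref{lem:n_embedded_n+m} but with the identity block on the left instead of the right.

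For the base cases, I would treat each of the three generator forms separately. When $\mathbf{G} = \CZ{a}$, the shift produces $\CZ{a+m}$, whose $(n+m) \times (n+m)$ matrix representation acts as the identity on indices $1, \ldots, m$ and as $\CZ{a}$ (in dimension $n$) on indices $m+1, \ldots, m+n$; this is exactly $I_m \oplus \sem{\CZ{a} \in \cG_n^*}$ by definition of the direct sum. The cases $\CX{b,c}$ and $\CH{b,c}$ are verified by the same direct computation on the level-matrix definition: shifting both indices by $m$ places the non-trivial $2 \times 2$ block precisely in the lower-right $n \times n$ corner.

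For the inductive step, suppose $\mathbf{G} = \mathbf{G}_1\mathbf{G}_2$. By the definition of $\hshift$ in \cref{fig:def_wsem}, $\hshift(\mathbf{G}_1\mathbf{G}_2, m) = \hshift(\mathbf{G}_1, m)\hshift(\mathbf{G}_2, m)$. Applying the inductive hypothesis to each factor yields $\sem{\hshift(\mathbf{G}_i, m)} = I_m \oplus \sem{\mathbf{G}_i}$ for $i = 1, 2$. Then the bifunctoriality of $\oplus$ with respect to composition (equivalently, direct block-matrix multiplication) gives
\begin{equation*}
    (I_m \oplus \sem{\mathbf{G}_1}) \cdot (I_m \oplus \sem{\mathbf{G}_2}) = I_m \oplus (\sem{\mathbf{G}_1} \cdot \sem{\mathbf{G}_2}) = I_m \oplus \sem{\mathbf{G}},
\end{equation*}
which completes the induction.

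There is no real obstacle here: the only point to be careful about is that the semantic bracket $\sem{\cdot}$ is overloaded to interpret a word in $\cG_{n+m}^*$ rather than in $\cG_n^*$, and the induction has to track the dimension consistently. Once the base-case block-matrix computation is carried out, the inductive step is purely an application of $(I_m \oplus A)(I_m \oplus B) = I_m \oplus (AB)$.
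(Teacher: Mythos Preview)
Your proposal is correct and matches the paper's approach exactly: the paper's proof reads simply ``By induction on structure of $\mathbf{G}$ and direct computation,'' which is precisely the structural induction with generator base cases and the block-multiplication inductive step you describe.
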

\begin{proof}
    By induction on structure of $\mathbf{G}$ and direct computation.
\end{proof}

\begin{lemma}\label{lem:qt_faith}
    Let $\mathbf{G} \in \cG_n^*$, then
    $\sem{\qT[n]{\mathbf{G}}} = \sem{\mathbf{G}}$.
\end{lemma}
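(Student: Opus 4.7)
The plan is to induct on the structure of $\mathbf{G}$, relying on functoriality of $\sem{\cdot}$ for composition and a small suite of ``swap lemmas'' for the three generator cases.

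First I would handle the trivial cases. For $\mathbf{G} = \varepsilon$, by definition $\qT[n]{\varepsilon} = \pid_{n\bbo}$, whose denotation is the identity matrix, which is also $\sem{\varepsilon}$. For $\mathbf{G} = \mathbf{G}_1\mathbf{G}_2$, the definition gives $\qT[n]{\mathbf{G}_1\mathbf{G}_2} = \qT[n]{\mathbf{G}_2}\seqq\qT[n]{\mathbf{G}_1}$; since $\sem{\cdot}$ sends $\seqq$ to matrix multiplication with the appropriate order, the induction hypothesis yields
\[
  \sem{\qT[n]{\mathbf{G}_1\mathbf{G}_2}} = \sem{\qT[n]{\mathbf{G}_1}}\cdot \sem{\qT[n]{\mathbf{G}_2}} = \sem{\mathbf{G}_1}\cdot \sem{\mathbf{G}_2} = \sem{\mathbf{G}_1\mathbf{G}_2}.
\]

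Next I would prove an auxiliary lemma asserting that $\sem{\swapp(b,c)} = X_{[b,c]}$ as an $n$-dimensional permutation matrix, by induction on the recursive definition of $\swapp(k,n)$ given in \cref{def:qt}. The base case $\swapp(n-1,n) = \assocrp\seqq(\pid_{(n-2)\bbo}+\swapp)\seqq\assoclp$ is a direct computation using $\sem{\swapp} = \sigma_\oplus$ and the associators being identities on the matrix side. The recursive unfolding for $\swapp(k,n)$ exhibits $X_{[k,n]}$ as conjugation of $X_{[k+1,n]}$ by an adjacent transposition $X_{[k,k+1]}$, which is the standard bubble-sort identity for permutations. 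The remaining case $\swapp(b,c)$ (for $b\neq c$) is the standard three-transposition formula $X_{[b,c]} = X_{[c,n]}X_{[b,n]}X_{[c,n]}$.

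With that lemma in hand, each generator case is a direct matrix computation:
\[
  \sem{\qT[n]{\CZ{a}}} = X_{[a,n]}\cdot ((-1)_{[n]})\cdot X_{[a,n]} = (-1)_{[a]} = \sem{\CZ{a}},
\]
\[
  \sem{\qT[n]{\CX{b,c}}} = X_{[b,c]} = \sem{\CX{b,c}},
\]
and for $\CH{b,c}$, after collapsing the associators to identities on the matrix side and using that $\sem{\pid_{(n-2)\bbo}+\hadamard} = I_{n-2}\oplus H = H_{[n-1,n]}$, we get
\[
  \sem{\qT[n]{\CH{b,c}}} = X_{[b,n-1]}X_{[c,n]}\cdot H_{[n-1,n]}\cdot X_{[c,n]}X_{[b,n-1]} = H_{[b,c]} = \sem{\CH{b,c}}.
\]

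The main obstacle I anticipate is not conceptual but bookkeeping: carefully tracking the type-conversion isomorphisms ($\assocrp$, $\assoclp$, etc.) that appear in the recursive unfolding of $\swapp(k,n)$, and confirming they denote identities under $\sem{\cdot}$ (which they do, since the rig structure on \Unitary{} is strict up to the canonical iso between $\bigoplus$-bracketings). Lemmas \ref{lem:n_embedded_n+m} and \ref{lem:n_shift} would then be invoked only if the generators are presented with extra dimensions on either side; in the present statement the generators live in $\cG_n$ with fixed ambient dimension $n$, so those lemmas are not strictly needed here but justify the semantic coherence of the constructions.
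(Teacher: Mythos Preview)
Your proposal is correct and matches the paper's approach exactly: the paper's proof is the single line ``By induction on the structure of $\mathbf{G}$ and direct computation,'' and you have simply unpacked what that direct computation entails (the swap-conjugation lemma for $\swapp(b,c)$ and the three generator checks). Your remark that \cref{lem:n_embedded_n+m,lem:n_shift} are not needed here is also accurate.
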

\begin{proof}
    By induction on the structure of $\mathbf{G}$ and direct computation.
\end{proof}

\lemwfaith*
\begin{proof}
    We prove it by induction on the structure of $c$.
    \begin{itemize}
        \item $c = (-1)$. $\sem{\wsem{(-1)}} = \sem{\CZ{1}} = -1 =\sem{(-1)}$, where $\wsem{(-1)}\in \cG_1^*$.
        \item $c = \hadamard$. $\sem{\wsem{\hadamard}} = \sem{\CH{1,2}} = H = \sem{\hadamard}$, where $\wsem{\hadamard}\in \cG_2^*$.
        \item $c = \pid$, $\assoclp$, $\assocrp$, $\unitep$, $\unitip$, $\assoclt$, $\assocrt$, $\unitet$, $\unitit$, $\dist$, $\factor$, $\absorb$ or $ \factorz$. In the category of \Unitary{}, we have $\sem{c} = I_{\hdim(c)}$. With $\wsem{c} = \varepsilon \in \cG_{\hdim(c)}^*$, we have $\sem{\wsem{c}} = I_{\hdim(c)} = \sem{c}$.
        \item $c = \swapp$. Let $\pi:j\in[n_1+n_2] \mapsto \mathsf{if}\, j \leq n_1 \,\mathsf{then}\, j+n_2 \,\mathsf{else}\, j-n_1$, and $X_\pi$ be the matrix representation of $\pi$. Then, by the definition of $\hpermute$ and correctness of \cref{alg:exact_synthesis}, $\sem{\swapp} = X_{\pi} = \sem{\hpermute(\pi)} = \sem{\wsem{\swapp}}$.
        \item $c = \swapt$. Let $\pi:\pi:j\in[n_1\times n_2] \mapsto \mathsf{let}\, (k,l) = (j\ \mathrm{div}\ n_2, j\bmod n_2) \,\mathsf{in}\, l\times n_1+k+1$, and $X_\pi$ be the matrix representation of $\pi$. Then, by the definition of $\hpermute$ and correctness of \cref{alg:exact_synthesis}, $\sem{\swapt} = X_{\pi} = \sem{\hpermute(\pi)} = \sem{\wsem{\swapt}}$.
        \item $c = c_1\seqq c_2$. 
        \begin{align*}
            \sem{\wsem{c_1\seqq c_2}} &= \sem{\wsem{c_2}\wsem{c_1}} \\
            &= \sem{\wsem{c_2}}\cdot\sem{\wsem{c_1}} \\
            &= \sem{c_2}\cdot \sem{c_1} \tag{by inductive hypothesis} \\
            &= \sem{c_1\seqq c_2}.
        \end{align*}
        \item $c = c_1 + c_2$. \begin{align*}
            \sem{\wsem{c_1+c_2}} ={}& \sem{\wsem{c_1}\hshift(\wsem{c_2},\hdim(c_1))} \\
            ={}& \sem{\wsem{c_1}}\cdot\sem{\hshift(\wsem{c_2},\hdim(c_1))} \tag{$\wsem{c_1}, \wsem{c_2} \in \cG_{\hdim(c_1+c_2)}^*$} \\
            ={}& (\sem{\wsem{c_1}}\oplus I_{\hdim(c_2)})\cdot \sem{\hshift(\wsem{c_2},\hdim(c_1))} \tag{by \cref{lem:n_embedded_n+m} and $\wsem{c_1} \in \cG_{\hdim(c_1)}^*, \hdim(c_1+c_2) = \hdim(c_1)+\hdim(c_2)$} \\
            ={}& (\sem{\wsem{c_1}\in \cG_{\hdim(c_1)}^*}\oplus I_{\hdim(c_2)})\cdot (I_{\hdim(c_1)}\oplus \sem{\wsem{c_2}\in\cG_{\hdim(c_2)}^*}) \tag{by \cref{lem:n_shift}} \\
            ={}& \sem{\wsem{c_1}\in \cG_{\hdim(c_1)}^*}\oplus \sem{\wsem{c_2}\in\cG_{\hdim(c_2)}^*} \\
            ={}& \sem{c_1}\oplus \sem{c_2}. \tag{by inductive hypothesis}
        \end{align*}
        \item $c = c_1 \times c_2$. \begin{itemize}[leftmargin=10pt]
            \item For the special case $c = \pid_b \times c$, \begin{align*}
                & \sem{\wsem{\pid_{b}\times c}} \\
                ={}& \sem{\wsem{c}\shift\rbra[\big]{\wsem{c},\hdim(c)}\shift\rbra[\big]{\wsem{c},2\hdim(c)}\cdots\shift\rbra[\big]{\wsem{c},(\hdim(b)-1)\times \hdim(c)}} \\
                ={}& \sem{\wsem{c}}\cdot\sem{\shift\rbra[\big]{\wsem{c},\hdim(c)}}\cdot \\
                & \quad \sem{\shift\rbra[\big]{\wsem{c},2\hdim(c)}}\cdots \sem{\shift\rbra[\big]{\wsem{c},(\hdim(b)-1)\times \hdim(c)}} \\
                ={}& \rbra*{\sem{\wsem{c}\in \cG_{\hdim(c)}^*}\oplus I_{\hdim(b-1)\times \hdim(c)}}\cdot \sem{\shift\rbra[\big]{\wsem{c},\hdim(c)}}\cdot \\
                & \quad \sem{\shift\rbra[\big]{\wsem{c},2\hdim(c)}}\cdots\sem{\shift\rbra[\big]{\wsem{c},(\hdim(b)-1)\times \hdim(c)}} \tag{by \cref{lem:n_embedded_n+m}} \\
                ={}& \rbra*{\sem{\wsem{c}\in \cG_{\hdim(c)}^*}\oplus I_{\hdim(b-1)\times \hdim(c)}}\cdot \rbra*{I_{\hdim(c)}\oplus\sem{\wsem{c}\in \cG_{\hdim(b-1)\times \hdim(c)}^*}}\cdot \\
                & \quad \sem{\shift\rbra[\big]{\wsem{c},2\hdim(c)}}\cdots\sem{\shift\rbra[\big]{\wsem{c},(\hdim(b)-1)\times \hdim(c)}} \tag{by \cref{lem:n_shift}} \\
                ={}& \rbra*{\sem{\wsem{c}\in \cG_{\hdim(c)}^*}\oplus I_{\hdim(b-1)\times \hdim(c)}}\cdot \rbra*{I_{\hdim(c)}\oplus\sem{\wsem{c}\in \cG_{\hdim(c)}^*}\oplus I_{\hdim(b-2)\times \hdim(c)}}\cdot \\
                & \quad \sem{\shift\rbra[\big]{\wsem{c},2\hdim(c)}}\cdots\sem{\shift\rbra[\big]{\wsem{c},(\hdim(b)-1)\times \hdim(c)}} \tag{by \cref{lem:n_embedded_n+m}} \\
                ={}& \rbra*{\sem{\wsem{c}\in \cG_{\hdim(c)}^*}\oplus\sem{\wsem{c}\in \cG_{\hdim(c)}^*}\oplus I_{\hdim(b-2)\times \hdim(c)}}\cdot \\
                & \quad \sem{\shift\rbra[\big]{\wsem{c},2\hdim(c)}}\cdots\sem{\shift\rbra[\big]{\wsem{c},(\hdim(b)-1)\times \hdim(c)}} \\
                ={}& \cdots \\
                ={}& \underbrace{\sem{\wsem{c}\in \cG_{\hdim(c)}^*}\oplus\sem{\wsem{c}\in \cG_{\hdim(c)}^*}\oplus \cdots \oplus \sem{\wsem{c}\in \cG_{\hdim(c)}^*}}_{\hdim(b)} \\
                ={}& I_{\hdim(b)}\otimes \sem{\wsem{c}\in \cG_{\hdim(c)}^*} \\
                ={}& I_{\hdim(b)}\otimes \sem{c} \tag{by inductive hypothesis} \\
                ={}& \sem{\pid_{b}}\otimes \sem{c} = \sem{\pid_{b}\times c}.
            \end{align*}
            \item For general cases of $c = c_1\times c_2$ with $c_1:b_1\fromto b_3$ and $c_2:b_2\fromto b_4$. We have $\hdim(c_1) = \hdim(b_1)=\hdim(b_3), \hdim(c_2) = \hdim(b_2) = \hdim(b_4)$. \begin{align*}
                \sem{\wsem{c_1\times c_2}} ={}& \sem{\wsem{\swapt}\wsem{\pid_{b_4}\times c_1}\wsem{\swapt}\wsem{\pid_{b_1}\times c_2}} \\
                ={}& \sem{\wsem{\swapt}}\cdot\sem{\wsem{\pid_{b_4}\times c_1}}\cdot\sem{\wsem{\swapt}}\cdot\sem{\wsem{\pid_{b_1}\times c_2}} \\
                ={}& \sem{{\swapt}}\cdot\sem{{\pid_{b_4}\times c_1}}\cdot\sem{{\swapt}}\cdot\sem{{\pid_{b_1}\times c_2}} \tag{by inductive hypothesis} \\
                ={}& \sem{\swapt}\cdot (I_{\hdim(c_2)}\otimes \sem{c_1}) \cdot \sem{\swapt} \cdot (I_{\hdim(c_1)}\otimes \sem{c_2}) \\
                ={}& (\sem{c_1}\otimes I_{\hdim(c_2)})\cdot (I_{\hdim(c_1)}\otimes \sem{c_2}) \tag{by semantic of $\swapt$} \\
                ={}& \sem{c_1}\otimes \sem{c_2} = \sem{c_1\times c_2}.
            \end{align*}
            \qedhere
        \end{itemize}
    \end{itemize}
\end{proof}

\lemqceqs*
\begin{proof}
    \begin{align*}
        \scalebox{\scaleratio}{\begin{quantikz}[row sep=2mm,column sep=2mm]
            &\ctrl{1} &\\
            &\gate{\p{z}} &
        \end{quantikz}} & \fromto_2 \pid_{\bb{2}}+\pid_{\bbo}+(-1) \\
        &\fromto_2 (\pid_\bbo+\swapp+\pid_\bbo)\seqq (\pid_{\bb{2}}+\pid_{\bbo}+(-1))\seqq(\pid_\bbo+\swapp+\pid_\bbo) \tag{by bifunctoriality of $+$ and naturality of $\swapp$} \\
        &\fromto_2 \swapt_{\bb{2}\times \bb{2}}\seqq (\pid_{\bb{2}}+\pid_{\bbo}+(-1))\seqq\swapt_{\bb{2}\times\bb{2}} \tag{by completeness of $\Pi$} \\
        &\fromto_2 \scalebox{\scaleratio}{\begin{quantikz}[row sep=2mm,column sep=2mm]
            & \swap{1} & \ctrl{1} & \swap{1} &\\
            & \targX{} & \gate{\p{z}} & \targX{} &
        \end{quantikz}} \fromto_2 \scalebox{\scaleratio}{\begin{quantikz}[row sep=2mm,column sep=2mm]
            & \gate{\p{z}} &\\
            & \ctrl{-1} &
        \end{quantikz}} \tag{by \cref{eq:qt5}} \\
        \\
        \scalebox{\scaleratio}{\begin{quantikz}[row sep=2mm,column sep=2mm]
            &&\gate{\p{x}}&\ctrl{1} & \gate{\p{x}} &\\
            &\qwbundle{}&&\gate{c} & &
        \end{quantikz}}
        &\fromto_2{} (\swapp_{\bb{2}}\times \pid) \seqq (\pid+c) \seqq (\swapp_{\bb{2}}\times \pid) \\
        &\fromto_2{} \swapp \seqq (\pid+c) \seqq \swapp \tag{by completeness of $\Pi$} \\
        &\fromto_2{} c+\pid \tag{by naturality of $\swapp$}
        \fromto_2{} \scalebox{\scaleratio}{\begin{quantikz}[row sep=2mm,column sep=2mm]
            &&\octrl{1} &\\
            &\qwbundle{}&\gate{c} &
        \end{quantikz}} \\
        \\
        \scalebox{\scaleratio}{\begin{quantikz}[row sep=2mm,column sep=2mm]
            &&\ctrl{1} &\\
            &\qwbundle{}&\gate{d\seqq c\seqq d^{-1}} &
        \end{quantikz}} &\fromto_2 \scalebox{\scaleratio}{\begin{quantikz}[row sep=2mm,column sep=2mm]
            && \ctrl{1} & \ctrl{1} & \ctrl{1} & \\
            &\qwbundle{}&\gate{d} & \gate{c}& \gate{d^{-1}} &
        \end{quantikz}} \tag{by \cref{eq:qt0}} \\
        &\fromto_2 \scalebox{\scaleratio}{\begin{quantikz}[row sep=2mm,column sep=2mm]
            && & \octrl{1} & \ctrl{1} & \octrl{1} & &\\
            &\qwbundle{}&\gate{d} &\gate{d} & \gate{c}& \gate{d^{-1}} & \gate{d^{-1}}
        \end{quantikz}} \tag{by \cref{eq:qt2,eq:qt3}} \\
        &\fromto_2 \scalebox{\scaleratio}{\begin{quantikz}[row sep=2mm,column sep=2mm]
            && & \ctrl{1} & \octrl{1} & \octrl{1} & &\\
            &\qwbundle{}&\gate{d} &\gate{c} & \gate{d}& \gate{d^{-1}} & \gate{d^{-1}}
        \end{quantikz}} \tag{by \cref{eq:qt2}} \\
        &\fromto_2 \scalebox{\scaleratio}{\begin{quantikz}[row sep=2mm,column sep=2mm]
            && & \ctrl{1} & & \ctrl{1} & & &\\
            &\qwbundle{}&\gate{d} &\gate{c} & \gate{d}& \gate{d\seqq d^{-1}} & \gate{d^{-1}} & \gate{d^{-1}}
        \end{quantikz}} \tag{by \cref{eq:qt3,eq:qt0}} \\
        &\fromto_2 \scalebox{\scaleratio}{\begin{quantikz}[row sep=2mm,column sep=2mm]
            && & \ctrl{1} & & \ctrl{1} & & &\\
            &\qwbundle{}&\gate{d} &\gate{c} & \gate{d}& \gate{\pid} & \gate{d^{-1}} & \gate{d^{-1}}
        \end{quantikz}} \tag{$d^{-1}$ is the inverse of $d$}\\
        &\fromto_2 \scalebox{\scaleratio}{\begin{quantikz}[row sep=2mm,column sep=2mm]
            && & \ctrl{1} &  &\\
            &\qwbundle{}&\gate{d} &\gate{c} & \gate{d^{-1}}
        \end{quantikz}} \tag{by completeness of $\Pi$ and $d^{-1}$ is the inverse of $d$}
    \end{align*}
\end{proof}

\lemweq*
\begin{proof}
    \ifSHOWPROOF%
    In addition to \cref{rel:d3,rel:d4} proved in the main text, we get graphical proofs for the remaining cases of \cref{rel:a1,rel:a2,rel:d2} with fixed indices here.
    \begin{itemize}
        \item \cref{rel:a1}: $\CZ{1}^2\approx \varepsilon$. \begin{align*}
            \qT[1]{\CZ{1}^2} ={}& 
            \scalebox{\scaleratio}{\begin{quantikz}[row sep=3mm,column sep=2mm]
                \push{\bbo} & \gate{\qT[1]{\CZ{1}}} & \gate{\qT[1]{\CZ{1}}} & 
            \end{quantikz}} \\
            ={}& \scalebox{\scaleratio}{\begin{quantikz}[row sep=3mm,column sep=2mm]
                \push{\bbo} & \gate{(-1)} & \gate{(-1)} & 
            \end{quantikz}} \\
            \fromto_2{}& \scalebox{\scaleratio}{\begin{quantikz}[row sep=3mm,column sep=2mm]
                \push{\bbo} & & & &
            \end{quantikz}} \tag{by \cref{eq:e1}} \\
            \fromto_2{}& \scalebox{\scaleratio}{\begin{quantikz}[row sep=3mm,column sep=2mm]
                \push{\bbo} & \gate{\qT{\varepsilon}} & 
            \end{quantikz}} = \qT[1]{\varepsilon}.
        \end{align*}
        \item \cref{rel:a2}: $\CX{1,2}^2 \approx \varepsilon$. \begin{align*}
            \qT[2]{\CX{1,2}^2} ={}& \scalebox{\scaleratio}{\begin{quantikz}[row sep=3mm,column sep=2mm]
                \lstick[2,brackets=none]{$+$\hspace{-.65cm}}\push{\bbo} & \gate[2]{\qT[2]{\CX{1,2}}} & \gate[2]{\qT[2]{\CX{1,2}}} & \\
                \push{\bbo}& & &
            \end{quantikz}} \\
            =& \scalebox{\scaleratio}{\begin{quantikz}[row sep=3mm,column sep=2mm]
                \lstick[2,brackets=none]{$+$\hspace{-.65cm}}\push{\bbo} & \permute{2,1} & \permute{2,1} & \\
                \push{\bbo}& & &
            \end{quantikz}} \\
            \fromto_2{}& \scalebox{\scaleratio}{\begin{quantikz}[row sep=3mm,column sep=2mm]
                \lstick[2,brackets=none]{$+$\hspace{-.65cm}}\push{\bbo} &  & & \\
                \push{\bbo}& & &
            \end{quantikz}} \tag{by completeness of $\Pi$} \\
            \fromto_2{}& \scalebox{\scaleratio}{\begin{quantikz}[row sep=3mm,column sep=2mm]
                \lstick[2,brackets=none]{$+$\hspace{-.65cm}}\push{\bbo} & \gate[2]{\qT{\varepsilon}} & \\
                \push{\bbo}& &
            \end{quantikz}} = \qT[2]{\varepsilon}.
        \end{align*}
        \item \cref{rel:a3}: $\CH{1,2}^2 \approx \varepsilon$. \begin{align*}
            \qT[2]{\CH{1,2}^2} ={}& \scalebox{\scaleratio}{\begin{quantikz}[row sep=3mm,column sep=2mm]
                \lstick[2,brackets=none]{$+$\hspace{-.65cm}}\push{\bbo} & \gate[2]{\qT[2]{\CH{1,2}}} & \gate[2]{\qT[2]{\CH{1,2}}} & \\
                \push{\bbo}& & &
            \end{quantikz}} \\
            ={}& \scalebox{\scaleratio}{\begin{quantikz}[row sep=3mm,column sep=2mm]
                \lstick[2,brackets=none]{$+$\hspace{-.65cm}}\push{\bbo} & \gate[2]{\hadamard} & \gate[2]{\hadamard} & \\
                \push{\bbo}& & &
            \end{quantikz}} \\
            \fromto_2{}& \scalebox{\scaleratio}{\begin{quantikz}[row sep=3mm,column sep=2mm]
                \lstick[2,brackets=none]{$+$\hspace{-.65cm}}\push{\bbo} &  & & \\
                \push{\bbo}& & &
            \end{quantikz}} \tag{by \cref{eq:e2}} \\
            \fromto_2{}& \scalebox{\scaleratio}{\begin{quantikz}[row sep=3mm,column sep=2mm]
                \lstick[2,brackets=none]{$+$\hspace{-.65cm}}\push{\bbo} & \gate[2]{\qT{\varepsilon}} & \\
                \push{\bbo}& &
            \end{quantikz}} = \qT[2]{\varepsilon}.
        \end{align*}
        \item \cref{rel:b1}: $\CZ{1}\CZ{2} \approx \CZ{2}\CZ{1}$. \begin{align*}
            \qT[2]{\CZ{1}\CZ{2}} ={}& \scalebox{\scaleratio}{\begin{quantikz}[row sep=3mm,column sep=2mm]
                \lstick[2,brackets=none]{$+$\hspace{-.65cm}}\push{\bbo} & \gate[2]{\qT[2]{\CZ{2}}} & \gate[2]{\qT[2]{\CZ{1}}} & \\
                \push{\bbo}& & &
            \end{quantikz}} \\
            \fromto_2{}& \scalebox{\scaleratio}{\begin{quantikz}[row sep=3mm,column sep=2mm]
                \lstick[2,brackets=none]{$+$\hspace{-.65cm}}\push{\bbo} & & \permute{2,1} & & \permute{2,1} &\\
                \push{\bbo}& \gate{(-1)} & & \gate{(-1)} & &
            \end{quantikz}} \\
            \fromto_2{}& \scalebox{\scaleratio}{\begin{quantikz}[row sep=3mm,column sep=2mm]
                \lstick[2,brackets=none]{$+$\hspace{-.65cm}}\push{\bbo} & & \gate{(-1)} &\\
                \push{\bbo}& \gate{(-1)} &  &
            \end{quantikz}} \tag{by naturality of $\swapp$}\\
            \fromto_2{}& \scalebox{\scaleratio}{\begin{quantikz}[row sep=3mm,column sep=2mm]
                \lstick[2,brackets=none]{$+$\hspace{-.65cm}}\push{\bbo} & \gate{(-1)} & & \\
                \push{\bbo}& & \gate{(-1)} &
            \end{quantikz}} \tag{by bifunctoriality of $+$}\\
            \fromto_2{}& \scalebox{\scaleratio}{\begin{quantikz}[row sep=3mm,column sep=2mm]
                \lstick[2,brackets=none]{$+$\hspace{-.65cm}}\push{\bbo} & \permute{2,1} & & \permute{2,1} & &\\
                \push{\bbo} & & \gate{(-1)} & & \gate{(-1)} &
            \end{quantikz}} \tag{by naturality of $\swapp$} \\
            \fromto_2{}& \scalebox{\scaleratio}{\begin{quantikz}[row sep=3mm,column sep=2mm]
                \lstick[2,brackets=none]{$+$\hspace{-.65cm}}\push{\bbo} & \gate[2]{\qT[2]{\CZ{1}}} & \gate[2]{\qT[2]{\CZ{2}}} & \\
                \push{\bbo}& & &
            \end{quantikz}} = \qT[2]{\CZ{2}\CZ{1}}.
        \end{align*}
        \item \cref{rel:b2}: $\CZ{1}\CX{2,3} \approx \CX{2,3}\CZ{1}$. \begin{align*}
            \qT[3]{\CZ{1}\CX{2,3}} ={}& \scalebox{\scaleratio}{\begin{quantikz}[row sep=3mm,column sep=2mm]
                \lstick[2,brackets=none]{$+$\hspace{-.65cm}}\push{\bbo} & \gate[3]{\qT[3]{\CX{2,3}}} & \gate[3]{\qT[3]{\CZ{1}}} & \\
                \lstick[2,brackets=none]{$+$\hspace{-.65cm}}\push{\bbo}& & & \\
                \push{\bbo}& & &
            \end{quantikz}} \\
            \fromto_2{}& \scalebox{\scaleratio}{\begin{quantikz}[row sep=3mm,column sep=2mm]
                \lstick[2,brackets=none]{$+$\hspace{-.65cm}}\push{\bbo} & & \permute{3,2,1} && \permute{3,2,1} & \\
                \lstick[2,brackets=none]{$+$\hspace{-.65cm}}\push{\bbo}& \permute{2,1} & & & & \\
                \push{\bbo}& & & \gate{(-1)} & &
            \end{quantikz}} \\
            \fromto_2{}& \scalebox{\scaleratio}{\begin{quantikz}[row sep=3mm,column sep=2mm]
                \lstick[2,brackets=none]{$+$\hspace{-.65cm}}\push{\bbo} & & \gate{(-1)} & \\
                \lstick[2,brackets=none]{$+$\hspace{-.65cm}}\push{\bbo}& \permute{2,1} & & \\
                \push{\bbo}& & &
            \end{quantikz}} \tag{by naturality of $\swapp$} \\
            \fromto_2{}& \scalebox{\scaleratio}{\begin{quantikz}[row sep=3mm,column sep=2mm]
                \lstick[2,brackets=none]{$+$\hspace{-.65cm}}\push{\bbo} & \gate{(-1)} & & \\
                \lstick[2,brackets=none]{$+$\hspace{-.65cm}}\push{\bbo}& & \permute{2,1} & \\
                \push{\bbo}& & &
            \end{quantikz}} \tag{by bifunctoriality of $+$}\\
            \fromto_2{}& \scalebox{\scaleratio}{\begin{quantikz}[row sep=3mm,column sep=2mm]
                \lstick[2,brackets=none]{$+$\hspace{-.65cm}}\push{\bbo} & \permute{3,2,1} && \permute{3,2,1} & &\\
                \lstick[2,brackets=none]{$+$\hspace{-.65cm}}\push{\bbo}& & & & \permute{2,1} & \\
                \push{\bbo}& & \gate{(-1)} & & &
            \end{quantikz}} \tag{by naturality of $\swapp$}\\
            \fromto_2{}& \scalebox{\scaleratio}{\begin{quantikz}[row sep=3mm,column sep=2mm]
                \lstick[2,brackets=none]{$+$\hspace{-.65cm}}\push{\bbo} & \gate[3]{\qT[3]{\CZ{1}}} & \gate[3]{\qT[3]{\CX{2,3}}} & \\
                \lstick[2,brackets=none]{$+$\hspace{-.65cm}}\push{\bbo}& & & \\
                \push{\bbo}& & &
            \end{quantikz}} = \qT[3]{\CX{2,3}\CZ{1}}.
        \end{align*}
        \item \cref{rel:b3}: $\CX{1,2}\CX{3,4} \approx \CX{3,4}\CX{1,2}$. \begin{align*}
            \qT[4]{\CX{1,2}\CX{3,4}} ={}& \scalebox{\scaleratio}{\begin{quantikz}[row sep=3mm,column sep=2mm]
                \lstick[2,brackets=none]{$+$\hspace{-.65cm}}\push{\bbo} & \gate[4]{\qT[4]{\CX{3,4}}} & \gate[4]{\qT[4]{\CX{1,2}}} & \\
                \lstick[2,brackets=none]{$+$\hspace{-.65cm}}\push{\bbo}& & & \\
                \lstick[2,brackets=none]{$+$\hspace{-.65cm}}\push{\bbo}& & & \\
                \push{\bbo}& & &
            \end{quantikz}} \\
            \fromto_2{}& \scalebox{\scaleratio}{\begin{quantikz}[row sep=3mm,column sep=2mm]
                \lstick[2,brackets=none]{$+$\hspace{-.65cm}}\push{\bbo} & \permute{1,2,4,3} & \permute{2,1} & \\
                \lstick[2,brackets=none]{$+$\hspace{-.65cm}}\push{\bbo}& & & \\
                \lstick[2,brackets=none]{$+$\hspace{-.65cm}}\push{\bbo}& & & \\
                \push{\bbo}& & &
            \end{quantikz}} \\
            \fromto_2{}& \scalebox{\scaleratio}{\begin{quantikz}[row sep=3mm,column sep=2mm]
                \lstick[2,brackets=none]{$+$\hspace{-.65cm}}\push{\bbo} & \permute{2,1} & \permute{1,2,4,3} & \\
                \lstick[2,brackets=none]{$+$\hspace{-.65cm}}\push{\bbo}& & & \\
                \lstick[2,brackets=none]{$+$\hspace{-.65cm}}\push{\bbo}& & & \\
                \push{\bbo}& & &
            \end{quantikz}} \\
            \fromto_2{}& \scalebox{\scaleratio}{\begin{quantikz}[row sep=3mm,column sep=2mm]
                \lstick[2,brackets=none]{$+$\hspace{-.65cm}}\push{\bbo} & \gate[4]{\qT[4]{\CX{1,2}}} & \gate[4]{\qT[4]{\CX{3,4}}} & \\
                \lstick[2,brackets=none]{$+$\hspace{-.65cm}}\push{\bbo}& & & \\
                \lstick[2,brackets=none]{$+$\hspace{-.65cm}}\push{\bbo}& & & \\
                \push{\bbo}& & &
            \end{quantikz}} = \qT[4]{\CX{3,4}\CX{1,2}}.
        \end{align*}
        \item \cref{rel:b4}: $\CZ{1}\CH{2,3} \approx \CH{2,3}\CZ{1}$. \begin{align*}
            \qT[3]{\CZ{1}\CH{2,3}} ={}& \scalebox{\scaleratio}{\begin{quantikz}[row sep=3mm,column sep=2mm]
                \lstick[2,brackets=none]{$+$\hspace{-.65cm}}\push{\bbo} & \gate[3]{\qT[3]{\CH{2,3}}} & \gate[3]{\qT[3]{\CZ{1}}} & \\
                \lstick[2,brackets=none]{$+$\hspace{-.65cm}}\push{\bbo}& & & \\
                \push{\bbo}& & & \\
            \end{quantikz}} \\
            \fromto_2{}& \scalebox{\scaleratio}{\begin{quantikz}[row sep=3mm,column sep=2mm]
                \lstick[2,brackets=none]{$+$\hspace{-.65cm}}\push{\bbo} &  & \permute{3,2,1} & & \permute{3,2,1} & \\
                \lstick[2,brackets=none]{$+$\hspace{-.65cm}}\push{\bbo}& \gate[2]{\hadamard} & & & & \\
                \push{\bbo}& & & \gate{(-1)} & & \\
            \end{quantikz}} \\
            \fromto_2{}& \scalebox{\scaleratio}{\begin{quantikz}[row sep=3mm,column sep=2mm]
                \lstick[2,brackets=none]{$+$\hspace{-.65cm}}\push{\bbo} &  & \gate{(-1)} & \\
                \lstick[2,brackets=none]{$+$\hspace{-.65cm}}\push{\bbo}& \gate[2]{\hadamard} & & \\
                \push{\bbo}& & & \\
            \end{quantikz}} \tag{by naturality of $\swapp$}\\
            \fromto_2{}& \scalebox{\scaleratio}{\begin{quantikz}[row sep=3mm,column sep=2mm]
                \lstick[2,brackets=none]{$+$\hspace{-.65cm}}\push{\bbo} & \gate{(-1)} & & \\
                \lstick[2,brackets=none]{$+$\hspace{-.65cm}}\push{\bbo}& & \gate[2]{\hadamard} & \\
                \push{\bbo}& & &
            \end{quantikz}} \tag{by bifunctoriality of $+$}\\
            \fromto_2{}& \scalebox{\scaleratio}{\begin{quantikz}[row sep=3mm,column sep=2mm]
                \lstick[2,brackets=none]{$+$\hspace{-.65cm}}\push{\bbo} & \permute{3,2,1} & & \permute{3,2,1} & & \\
                \lstick[2,brackets=none]{$+$\hspace{-.65cm}}\push{\bbo}&&& & \gate[2]{\hadamard} & \\
                \push{\bbo}& & \gate{(-1)} & & &
            \end{quantikz}} \tag{by naturality of $\swapp$}\\
            \fromto_2{}& \scalebox{\scaleratio}{\begin{quantikz}[row sep=3mm,column sep=2mm]
                \lstick[2,brackets=none]{$+$\hspace{-.65cm}}\push{\bbo} & \gate[3]{\qT[3]{\CZ{1}}} & \gate[3]{\qT[3]{\CH{2,3}}} & \\
                \lstick[2,brackets=none]{$+$\hspace{-.65cm}}\push{\bbo}& & & \\
                \push{\bbo}& & & \\
            \end{quantikz}} = \qT[3]{\CH{2,3}\CZ{1}}.
        \end{align*}
        \item \cref{rel:b5}: $\CX{1,2}\CH{3,4} \approx \CH{3,4}\CX{1,2}$. \begin{align*}
            \qT[4]{\CX{1,2}\CH{3,4}} ={}& \scalebox{\scaleratio}{\begin{quantikz}[row sep=3mm,column sep=2mm]
                \lstick[2,brackets=none]{$+$\hspace{-.65cm}}\push{\bbo} & \gate[4]{\qT[4]{\CX{3,4}}} & \gate[4]{\qT[4]{\CH{1,2}}} & \\
                \lstick[2,brackets=none]{$+$\hspace{-.65cm}}\push{\bbo}& & & \\
                \lstick[2,brackets=none]{$+$\hspace{-.65cm}}\push{\bbo}& & & \\
                \push{\bbo}& & &
            \end{quantikz}} \\
            \fromto_2{}& \scalebox{\scaleratio}{\begin{quantikz}[row sep=3mm,column sep=2mm]
                \lstick[2,brackets=none]{$+$\hspace{-.65cm}}\push{\bbo} & \permute{1,2,4,3} & \permute{3,4,1,2}& & \permute{3,4,1,2} & \\
                \lstick[2,brackets=none]{$+$\hspace{-.65cm}}\push{\bbo}& & && & \\
                \lstick[2,brackets=none]{$+$\hspace{-.65cm}}\push{\bbo}& & &\gate[2]{\hadamard}& & \\
                \push{\bbo}& & && &
            \end{quantikz}} \\
            \fromto_2{}& \scalebox{\scaleratio}{\begin{quantikz}[row sep=3mm,column sep=2mm]
                \lstick[2,brackets=none]{$+$\hspace{-.65cm}}\push{\bbo} & \permute{1,2,4,3} & \gate[2]{\hadamard} & \\
                \lstick[2,brackets=none]{$+$\hspace{-.65cm}}\push{\bbo}& & & \\
                \lstick[2,brackets=none]{$+$\hspace{-.65cm}}\push{\bbo}& & & \\
                \push{\bbo}& & &
            \end{quantikz}} \tag{by naturality of $\swapp$}\\
            \fromto_2{}& \scalebox{\scaleratio}{\begin{quantikz}[row sep=3mm,column sep=2mm]
                \lstick[2,brackets=none]{$+$\hspace{-.65cm}}\push{\bbo} & \gate[2]{\hadamard} & \permute{1,2,4,3} & \\
                \lstick[2,brackets=none]{$+$\hspace{-.65cm}}\push{\bbo}& & & \\
                \lstick[2,brackets=none]{$+$\hspace{-.65cm}}\push{\bbo}& & & \\
                \push{\bbo}& & &
            \end{quantikz}} \tag{by bifunctoriality of $+$}\\
            \fromto_2{}& \scalebox{\scaleratio}{\begin{quantikz}[row sep=3mm,column sep=2mm]
                \lstick[2,brackets=none]{$+$\hspace{-.65cm}}\push{\bbo} & \permute{3,4,1,2}& & \permute{3,4,1,2} & \permute{1,2,4,3} & \\
                \lstick[2,brackets=none]{$+$\hspace{-.65cm}}\push{\bbo}& & && & \\
                \lstick[2,brackets=none]{$+$\hspace{-.65cm}}\push{\bbo}& &\gate[2]{\hadamard}& & & \\
                \push{\bbo}& & && &
            \end{quantikz}} \tag{by naturality of $\swapp$}\\
            \fromto_2{}& \scalebox{\scaleratio}{\begin{quantikz}[row sep=3mm,column sep=2mm]
                \lstick[2,brackets=none]{$+$\hspace{-.65cm}}\push{\bbo} & \gate[4]{\qT[4]{\CH{1,2}}} & \gate[4]{\qT[4]{\CX{3,4}}} & \\
                \lstick[2,brackets=none]{$+$\hspace{-.65cm}}\push{\bbo}& & & \\
                \lstick[2,brackets=none]{$+$\hspace{-.65cm}}\push{\bbo}& & & \\
                \push{\bbo}& & &
            \end{quantikz}} = \qT[4]{\CH{3,4}\CX{1,2}}.
        \end{align*}
        \item \cref{rel:b6}: $\CH{1,2}\CH{3,4} \approx \CH{3,4}\CH{1,2}$. \begin{align*}
            \qT[4]{\CH{1,2}\CH{3,4}} ={}& \scalebox{\scaleratio}{\begin{quantikz}[row sep=3mm,column sep=2mm]
                \lstick[2,brackets=none]{$+$\hspace{-.65cm}}\push{\bbo} & \gate[4]{\qT[4]{\CH{3,4}}} & \gate[4]{\qT[4]{\CH{1,2}}} & \\
                \lstick[2,brackets=none]{$+$\hspace{-.65cm}}\push{\bbo}& & & \\
                \lstick[2,brackets=none]{$+$\hspace{-.65cm}}\push{\bbo}& & & \\
                \push{\bbo}& & &
            \end{quantikz}} \\
            \fromto_2{}& \scalebox{\scaleratio}{\begin{quantikz}[row sep=3mm,column sep=2mm]
                \lstick[2,brackets=none]{$+$\hspace{-.65cm}}\push{\bbo} & & \permute{3,4,1,2} & & \permute{3,4,1,2} & \\
                \lstick[2,brackets=none]{$+$\hspace{-.65cm}}\push{\bbo}& & & &&\\
                \lstick[2,brackets=none]{$+$\hspace{-.65cm}}\push{\bbo}& \gate[2]{\hadamard} & & \gate[2]{\hadamard} &&\\
                \push{\bbo}& & & &&
            \end{quantikz}} \\
            \fromto_2{}& \scalebox{\scaleratio}{\begin{quantikz}[row sep=3mm,column sep=2mm]
                \lstick[2,brackets=none]{$+$\hspace{-.65cm}}\push{\bbo} & & \gate[2]{\hadamard} & \\
                \lstick[2,brackets=none]{$+$\hspace{-.65cm}}\push{\bbo}& & & \\
                \lstick[2,brackets=none]{$+$\hspace{-.65cm}}\push{\bbo}& \gate[2]{\hadamard} & & \\
                \push{\bbo}& & &
            \end{quantikz}} \tag{by naturality of $\swapp$}\\
            \fromto_2{}& \scalebox{\scaleratio}{\begin{quantikz}[row sep=3mm,column sep=2mm]
                \lstick[2,brackets=none]{$+$\hspace{-.65cm}}\push{\bbo} & \gate[2]{\hadamard} & & \\
                \lstick[2,brackets=none]{$+$\hspace{-.65cm}}\push{\bbo}& & & \\
                \lstick[2,brackets=none]{$+$\hspace{-.65cm}}\push{\bbo}& & \gate[2]{\hadamard} & \\
                \push{\bbo}& & &
            \end{quantikz}} \tag{by bifunctoriality of $+$}\\
            \fromto_2{}& \scalebox{\scaleratio}{\begin{quantikz}[row sep=3mm,column sep=2mm]
                \lstick[2,brackets=none]{$+$\hspace{-.65cm}}\push{\bbo} & \permute{3,4,1,2} & & \permute{3,4,1,2} & & \\
                \lstick[2,brackets=none]{$+$\hspace{-.65cm}}\push{\bbo}& & & &&\\
                \lstick[2,brackets=none]{$+$\hspace{-.65cm}}\push{\bbo}& & \gate[2]{\hadamard} & & \gate[2]{\hadamard} &\\
                \push{\bbo}& & & &&
            \end{quantikz}} \tag{by naturality of $\swapp$}\\
            \fromto_2{}& \scalebox{\scaleratio}{\begin{quantikz}[row sep=3mm,column sep=2mm]
                \lstick[2,brackets=none]{$+$\hspace{-.65cm}}\push{\bbo} & \gate[4]{\qT[4]{\CH{1,2}}} & \gate[4]{\qT[4]{\CH{3,4}}} & \\
                \lstick[2,brackets=none]{$+$\hspace{-.65cm}}\push{\bbo}& & & \\
                \lstick[2,brackets=none]{$+$\hspace{-.65cm}}\push{\bbo}& & & \\
                \push{\bbo}& & &
            \end{quantikz}} = \qT[4]{\CH{3,4}\CH{1,2}}.
        \end{align*}
        \item \cref{rel:c1}: $\CZ{1}\CX{1,2} \approx \CX{1,2}\CZ{2}$. \begin{align*}
            \qT[2]{\CZ{1}\CX{1,2}} ={}& \scalebox{\scaleratio}{\begin{quantikz}[row sep=3mm,column sep=2mm]
                \lstick[2,brackets=none]{$+$\hspace{-.65cm}}\push{\bbo} & \gate[2]{\qT[2]{\CX{1,2}}} & \gate[2]{\qT[2]{\CZ{1}}} & \\
                \push{\bbo}& & &
            \end{quantikz}} \\
            \fromto_2{}& \scalebox{\scaleratio}{\begin{quantikz}[row sep=3mm,column sep=2mm]
                \lstick[2,brackets=none]{$+$\hspace{-.65cm}}\push{\bbo} & \permute{2,1} & \permute{2,1} & & \permute{2,1} & \\
                \push{\bbo}& & & \gate{(-1)} & &
            \end{quantikz}} \\
            \fromto_2{}& \scalebox{\scaleratio}{\begin{quantikz}[row sep=3mm,column sep=2mm]
                \lstick[2,brackets=none]{$+$\hspace{-.65cm}}\push{\bbo} & & \permute{2,1} & \\
                \push{\bbo}& \gate{(-1)} & &
            \end{quantikz}} \tag{by completeness of $\Pi$} \\
            \fromto_2{}& \scalebox{\scaleratio}{\begin{quantikz}[row sep=3mm,column sep=2mm]
                \lstick[2,brackets=none]{$+$\hspace{-.65cm}}\push{\bbo}  & \gate[2]{\qT[2]{\CZ{2}}} & \gate[2]{\qT[2]{\CX{1,2}}} & \\
                \push{\bbo}& & &
            \end{quantikz}} = \qT[2]{\CX{1,2}\CZ{2}}.
        \end{align*}
        \item \cref{rel:c2}: $\CX{2,3}\CX{1,2} \approx \CX{1,2}\CX{1,3}$. \begin{align*}
            \qT[3]{\CX{2,3}\CX{1,2}} ={}& \scalebox{\scaleratio}{\begin{quantikz}[row sep=3mm,column sep=2mm]
                \lstick[2,brackets=none]{$+$\hspace{-.65cm}}\push{\bbo}  & \gate[3]{\qT[3]{\CX{1,2}}} & \gate[3]{\qT[3]{\CX{2,3}}} & \\
                \lstick[2,brackets=none]{$+$\hspace{-.65cm}}\push{\bbo}  & & & \\
                \push{\bbo}& & &
            \end{quantikz}} \\
            \fromto_2{}& \scalebox{\scaleratio}{\begin{quantikz}[row sep=3mm,column sep=2mm]
                \lstick[2,brackets=none]{$+$\hspace{-.65cm}}\push{\bbo}  & \permute{2,1} & \permute{1,3,2} & \\
                \lstick[2,brackets=none]{$+$\hspace{-.65cm}}\push{\bbo}  & & & \\
                \push{\bbo}& & &
            \end{quantikz}} \\
            \fromto_2{}& \scalebox{\scaleratio}{\begin{quantikz}[row sep=3mm,column sep=2mm]
                \lstick[2,brackets=none]{$+$\hspace{-.65cm}}\push{\bbo}  & \permute{3,2,1} & \permute{2,1} & \\
                \lstick[2,brackets=none]{$+$\hspace{-.65cm}}\push{\bbo}  & & & \\
                \push{\bbo}& & &
            \end{quantikz}} \tag{by completeness of $\Pi$} \\
            \fromto_2{}& \scalebox{\scaleratio}{\begin{quantikz}[row sep=3mm,column sep=2mm]
                \lstick[2,brackets=none]{$+$\hspace{-.65cm}}\push{\bbo}  & \gate[3]{\qT[3]{\CX{1,3}}} & \gate[3]{\qT[3]{\CX{1,2}}} & \\
                \lstick[2,brackets=none]{$+$\hspace{-.65cm}}\push{\bbo}  & & & \\
                \push{\bbo}& & &
            \end{quantikz}} = \qT[3]{\CX{1,2}\CX{2,3}}.
        \end{align*}
        \item \cref{rel:c3}: $\CX{1,3}\CX{2,3} \approx \CX{2,3}\CX{1,2}$. \begin{align*}
            \qT[3]{\CX{1,3}\CX{2,3}} ={}& \scalebox{\scaleratio}{\begin{quantikz}[row sep=3mm,column sep=2mm]
                \lstick[2,brackets=none]{$+$\hspace{-.65cm}}\push{\bbo}  & \gate[3]{\qT[3]{\CX{2,3}}} & \gate[3]{\qT[3]{\CX{1,3}}} & \\
                \lstick[2,brackets=none]{$+$\hspace{-.65cm}}\push{\bbo}  & & & \\
                \push{\bbo}& & &
            \end{quantikz}} \\
            \fromto_2{}& \scalebox{\scaleratio}{\begin{quantikz}[row sep=3mm,column sep=2mm]
                \lstick[2,brackets=none]{$+$\hspace{-.65cm}}\push{\bbo}  & \permute{1,3,2} & \permute{3,2,1} & \\
                \lstick[2,brackets=none]{$+$\hspace{-.65cm}}\push{\bbo}  & & & \\
                \push{\bbo}& & &
            \end{quantikz}} \\
            \fromto_2{}& \scalebox{\scaleratio}{\begin{quantikz}[row sep=3mm,column sep=2mm]
                \lstick[2,brackets=none]{$+$\hspace{-.65cm}}\push{\bbo}  & \permute{2,1} & \permute{1,3,2} & \\
                \lstick[2,brackets=none]{$+$\hspace{-.65cm}}\push{\bbo}  & & & \\
                \push{\bbo}& & &
            \end{quantikz}} \tag{by completeness of $\Pi$} \\
            \fromto_2{}& \scalebox{\scaleratio}{\begin{quantikz}[row sep=3mm,column sep=2mm]
                \lstick[2,brackets=none]{$+$\hspace{-.65cm}}\push{\bbo}  & \gate[3]{\qT[3]{\CX{1,2}}} & \gate[3]{\qT[3]{\CX{2,3}}} & \\
                \lstick[2,brackets=none]{$+$\hspace{-.65cm}}\push{\bbo}  & & & \\
                \push{\bbo}& & &
            \end{quantikz}} = \qT[3]{\CX{2,3}\CX{1,2}}.
        \end{align*}
        \item \cref{rel:c4}: $\CH{2,3}\CX{1,2} \approx \CX{1,2}\CH{1,3}$. \begin{align*}
            \qT[3]{\CH{2,3}\CX{1,2}} ={}& \scalebox{\scaleratio}{\begin{quantikz}[row sep=3mm,column sep=2mm]
                \lstick[2,brackets=none]{$+$\hspace{-.65cm}}\push{\bbo}  & \gate[3]{\qT[3]{\CX{1,2}}} & \gate[3]{\qT[3]{\CH{2,3}}} & \\
                \lstick[2,brackets=none]{$+$\hspace{-.65cm}}\push{\bbo}  & & & \\
                \push{\bbo}& & &
            \end{quantikz}} \\
            \fromto_2{}& \scalebox{\scaleratio}{\begin{quantikz}[row sep=3mm,column sep=2mm]
                \lstick[2,brackets=none]{$+$\hspace{-.65cm}}\push{\bbo}  & \permute{2,1} & & \\
                \lstick[2,brackets=none]{$+$\hspace{-.65cm}}\push{\bbo}  & & \gate[2]{\hadamard} & \\
                \push{\bbo}& & &
            \end{quantikz}} \\
            \fromto_2{}& \scalebox{\scaleratio}{\begin{quantikz}[row sep=3mm,column sep=2mm]
                \lstick[2,brackets=none]{$+$\hspace{-.65cm}}\push{\bbo}  & \permute{2,1} & & \permute{2,1} & \permute{2,1} & \\
                \lstick[2,brackets=none]{$+$\hspace{-.65cm}}\push{\bbo}  & & \gate[2]{\hadamard}& & & \\
                \push{\bbo}& & & & &
            \end{quantikz}} \tag{by completeness of $\Pi$} \\
            \fromto_2{}& \scalebox{\scaleratio}{\begin{quantikz}[row sep=3mm,column sep=2mm]
                \lstick[2,brackets=none]{$+$\hspace{-.65cm}}\push{\bbo}  & \gate[3]{\qT[3]{\CX{1,3}}} & \gate[3]{\qT[3]{\CX{1,2}}} & \\
                \lstick[2,brackets=none]{$+$\hspace{-.65cm}}\push{\bbo}  & & & \\
                \push{\bbo}& & &
            \end{quantikz}} = \qT[3]{\CX{1,2}\CX{2,3}}.
        \end{align*}
        \item \cref{rel:c5}: $\CH{1,3}\CX{2,3} \approx \CX{2,3}\CH{1,2}$. \begin{align*}
            \qT[3]{\CH{1,3}\CX{2,3}} ={}& \scalebox{\scaleratio}{\begin{quantikz}[row sep=3mm,column sep=2mm]
                \lstick[2,brackets=none]{$+$\hspace{-.65cm}}\push{\bbo}  & \gate[3]{\qT[3]{\CX{2,3}}} & \gate[3]{\qT[3]{\CH{1,3}}} & \\
                \lstick[2,brackets=none]{$+$\hspace{-.65cm}}\push{\bbo}  & & & \\
                \push{\bbo}& & &
            \end{quantikz}} \\
            \fromto_2{}& \scalebox{\scaleratio}{\begin{quantikz}[row sep=3mm,column sep=2mm]
                \lstick[2,brackets=none]{$+$\hspace{-.65cm}}\push{\bbo}  & \permute{1,3,2} & \permute{2,1} & & \permute{2,1} & \\
                \lstick[2,brackets=none]{$+$\hspace{-.65cm}}\push{\bbo}  & & & \gate[2]{\hadamard} & & \\
                \push{\bbo}& & & & &
            \end{quantikz}} \\
            \fromto_2{}& \scalebox{\scaleratio}{\begin{quantikz}[row sep=3mm,column sep=2mm]
                \lstick[2,brackets=none]{$+$\hspace{-.65cm}}\push{\bbo}& \permute{2,3,1} & & \permute{3,1,2} & \permute{1,3,2} & \\
                \lstick[2,brackets=none]{$+$\hspace{-.65cm}}\push{\bbo}& & \gate[2]{\hadamard} & & & \\
                \push{\bbo}& & & & &
            \end{quantikz}} \tag{by completeness of $\Pi$} \\
            \fromto_2{}& \scalebox{\scaleratio}{\begin{quantikz}[row sep=3mm,column sep=2mm]
                \lstick[2,brackets=none]{$+$\hspace{-.65cm}}\push{\bbo}  & \gate[3]{\qT[3]{\CH{1,2}}} & \gate[3]{\qT[3]{\CX{2,3}}} & \\
                \lstick[2,brackets=none]{$+$\hspace{-.65cm}}\push{\bbo}  & & & \\
                \push{\bbo}& & &
            \end{quantikz}} = \qT[3]{\CX{2,3}\CH{1,2}}.
        \end{align*}
        \item \cref{rel:d1}: $\CZ{1}\CZ{2}\CH{1,2} \approx \CH{1,2}\CZ{1}\CZ{2}$. By \cref{rel:b1}, we only need to prove the case of $\CZ{1}\CZ{2}\CH{1,2} \approx \CH{1,2}\CZ{2}\CZ{1}$.
        \begin{align*}
            & \qT[2]{\CZ{1}\CZ{2}\CH{1,2}} \\
            ={}& \scalebox{\scaleratio}{\begin{quantikz}[row sep=3mm,column sep=2mm]
                \lstick[2,brackets=none]{$+$\hspace{-.65cm}}\push{\bbo} & \gate[2]{\qT[2]{\CH{1,2}}} & \gate[2]{\qT[2]{\CZ{2}}} & \gate[2]{\qT[2]{\CZ{1}}} & \\
                \push{\bbo}& & & &
            \end{quantikz}} \\
            \fromto_2{}& \scalebox{\scaleratio}{\begin{quantikz}[row sep=3mm,column sep=2mm]
                \lstick[2,brackets=none]{$+$\hspace{-.65cm}}\push{\bbo} & \gate[2]{\hadamard} & & \permute{2,1} && \permute{2,1} & \\
                \push{\bbo}& & \gate{(-1)} & &\gate{(-1)}& &
            \end{quantikz}} \\
            \fromto_2{}& \scalebox{\scaleratio}{\begin{quantikz}[row sep=3mm,column sep=2mm]
                \lstick[2,brackets=none]{$+$\hspace{-.65cm}}\push{\bbo} & \gate[2]{\hadamard} & \gate[2]{\mathit{hxh}} & \gate[2]{\swapp} &\gate[2]{\mathit{hxh}}& \gate[2]{\swapp} & \\
                \push{\bbo}& &  & && &
            \end{quantikz}} \tag{by \cref{eq:e3}} \\
            \fromto_2{}& \scalebox{\scaleratio}{\begin{quantikz}[row sep=3mm,column sep=2mm]
                \lstick[2,brackets=none]{$+$\hspace{-.65cm}}\push{\bbo} & \gate[2]{\swapp} & \gate[2]{\hadamard} & \gate[2]{\swapp} &\gate[2]{\mathit{hxh}}& \gate[2]{\swapp} & \\
                \push{\bbo}& &  & && &
            \end{quantikz}} \tag{by \cref{eq:e2}} \\
            \fromto_2{}& \scalebox{\scaleratio}{\begin{quantikz}[row sep=3mm,column sep=2mm]
                \lstick[2,brackets=none]{$+$\hspace{-.65cm}}\push{\bbo} & \gate[2]{\swapp} & \gate[2]{\mathit{hxh}} & \gate[2]{\swapp} &\gate[2]{\hadamard}& \gate[2]{\swapp} & \\
                \push{\bbo}& &  & && &
            \end{quantikz}} \\
            \fromto_2{}& \scalebox{\scaleratio}{\begin{quantikz}[row sep=3mm,column sep=2mm]
                \lstick[2,brackets=none]{$+$\hspace{-.65cm}}\push{\bbo} & \gate[2]{\swapp} & \gate[2]{\mathit{hxh}} & \gate[2]{\swapp} &\gate[2]{\mathit{hxh}}& \gate[2]{\hadamard} & \\
                \push{\bbo}& &  & && &
            \end{quantikz}} \tag{by \cref{eq:e2}}\\
            \fromto_2{}& \scalebox{\scaleratio}{\begin{quantikz}[row sep=3mm,column sep=2mm]
                \lstick[2,brackets=none]{$+$\hspace{-.65cm}}\push{\bbo} & \permute{2,1} & & \permute{2,1} & & \gate[2]{\hadamard} & \\
                \push{\bbo}& & \gate{(-1)} & & \gate{(-1)} & &
            \end{quantikz}} \tag{by \cref{eq:e3}}\\
            \fromto_2{}& \scalebox{\scaleratio}{\begin{quantikz}[row sep=3mm,column sep=2mm]
                \lstick[2,brackets=none]{$+$\hspace{-.65cm}}\push{\bbo} & \gate[2]{\qT[2]{\CZ{1}}} & \gate[2]{\qT[2]{\CZ{2}}} & \gate[2]{\qT[2]{\CH{1,2}}} & \\
                \push{\bbo}& & & &
            \end{quantikz}} \\
            ={}& \qT[2]{\CH{1,2}\CZ{2}\CZ{1}}.
        \end{align*}
        \item \cref{rel:d2}: $\CZ{2}\CH{1,2} \approx \CH{1,2}\CX{1,2}$. \begin{align*}
            \qT[2]{\CZ{2}\CH{1,2}} ={}& \scalebox{\scaleratio}{\begin{quantikz}[row sep=3mm,column sep=2mm]
                \lstick[2,brackets=none]{$+$\hspace{-.65cm}}\push{\bbo} & \gate[2]{\qT[2]{\CH{1,2}}} & \gate[2]{\qT[2]{\CZ{2}}} & \\
                \push{\bbo}& & & 
            \end{quantikz}} \\
            \fromto_2{}& \scalebox{\scaleratio}{\begin{quantikz}[row sep=3mm,column sep=2mm]
                \lstick[2,brackets=none]{$+$\hspace{-.65cm}}\push{\bbo} & \gate[2]{\hadamard} &  & \\
                \push{\bbo}& & \gate{(-1)} & 
            \end{quantikz}} \\
            \fromto_2{}& \scalebox{\scaleratio}{\begin{quantikz}[row sep=3mm,column sep=2mm]
                \lstick[2,brackets=none]{$+$\hspace{-.65cm}}\push{\bbo} & \gate[2]{\hadamard} & \gate[2]{\mathit{hxh}} & \\
                \push{\bbo}& &  & 
            \end{quantikz}} \tag{by \cref{eq:e3}} \\
            \fromto_2{}& \scalebox{\scaleratio}{\begin{quantikz}[row sep=3mm,column sep=2mm]
                \lstick[2,brackets=none]{$+$\hspace{-.65cm}}\push{\bbo} & \gate[2]{\swapp} & \gate[2]{\hadamard} & \\
                \push{\bbo}& &  & 
            \end{quantikz}} \tag{by \cref{eq:e2}}\\
            \fromto_2{}& \scalebox{\scaleratio}{\begin{quantikz}[row sep=3mm,column sep=2mm]
                \lstick[2,brackets=none]{$+$\hspace{-.65cm}}\push{\bbo} & \gate[2]{\qT[2]{\CX{1,2}}} & \gate[2]{\qT[2]{\CH{1,2}}} & \\
                \push{\bbo}& &  & 
            \end{quantikz}} = \qT[2]{\CH{1,2}\CX{1,2}}.
        \end{align*}
    \end{itemize}
    \else%
    Set \verb|\SHOWPROOFtrue| to compile the proof.
    \fi
\end{proof}

\lemwqinv*
\begin{proof}
    We prove it by induction on the structure of $c$. For convenience, we omit the isomorphisms $\assocrp$, $\assoclp$, $\unitep$, $\unitip$, $\assocrt$, $\assoclt$, $\unitet$, $\unitit$, $\dist$, $\factor$, $\absorb$, and $\factorz$ used for type conversion.
    \begin{itemize}
        \item $c = (-1)$. \[\qT[\hdim((-1))]{\wsem{(-1)}} = \qT[1]{\CZ{1}} = \pid_{\bbo}\seqq(\pid_{\bbz}+(-1))\seqq\pid_{\bbo} \fromto_2{} (-1).\]
        \item $c = \hadamard$.
        \begin{align*}
            \qT[\hdim(\hadamard)]{\wsem{\hadamard}} ={}& \qT[2]{\CH{1,2}} \\
            ={}& \pid_{\bb{2}}\seqq\pid_{\bb{2}}\seqq(\pid_{\bbz}+\hadamard)\seqq\pid_{\bb{2}}\seqq\pid_{\bb{2}} \\
            \fromto_2{}& \hadamard.
        \end{align*}
        \item $c = \pid$, $\assocrp$, $\assoclp$, $\unitep$, $\unitip$, $\assocrt$, $\assoclt$, $\unitet$, $\unitit$, $\dist$, $\factor$, $\absorb$, or $\factorz$.
        \[ \qT[\hdim(c)]{\wsem{c}} = \qT[\hdim(c)]{\varepsilon} = \pid_{\hdim(c)\bbo} \fromto_2 c. \tag{by strictness}\]
        \item $c = \swapp$. By \cref{lem:w_faith,lem:q_faith}, we have \[\sem{\swapp} = \sem{\wsem{\swapp}} = \sem{\qT[\hdim(\swapp)]{\wsem{\swapp}}}.\] Then, we can get $\swapp \fromto_2 \qT[\hdim(\swapp)]{\wsem{\swapp}}$ through the completeness of $\Pi$, since they only contain primitives of classical $\Pi$.
        \item $c = \swapt$. By \cref{lem:w_faith,lem:q_faith}, we have \[\sem{\swapt} = \sem{\wsem{\swapt}} = \sem{\qT[\hdim(\swapt)]{\wsem{\swapt}}}.\] Then, we can get $\swapt \fromto_2 \qT[\hdim(\swapt)]{\wsem{\swapt}}$ through the completeness of $\Pi$, since they only contain primitives of classical $\Pi$.
        \item $c = c_1\seqq c_2$. It is easy to prove that for any $c: b_1\fromto b_2$, $\hdim(c) = \hdim(b_1) = \hdim(b_2)$, thus we can get $\hdim(c_1\seqq c_2) = \hdim(c_1)=\hdim(c_2)$. Then, \begin{align*}
            \qT[\hdim(c_1\seqq c_2)]{\wsem{c_1\seqq c_2}} ={}& \qT[\hdim(c_1\seqq c_2)]{\wsem{c_2}\wsem{c_1}} \\
            ={}& \qT[\hdim(c_1\seqq c_2)]{\wsem{c_1}}\seqq\qT[\hdim(c_1\seqq c_2)]{\wsem{c_2}} \\
            ={}& \qT[\hdim(c_1)]{\wsem{c_1}}\seqq\qT[\hdim(c_2)]{\wsem{c_2}} \\
            \fromto_2{}& c_1\seqq c_2. \tag{by inductive hypothesis}
        \end{align*}
        \item $c = c_1+c_2$. We already prove this case in the main text.
        \item $c = c_1\times c_2$. \begin{itemize}[leftmargin=10pt]
            \item For the special case $c = \pid_{b} \times c$. \begin{align*}
                & \qT[\hdim(c)]{\pid_b\times c_1} \\
                ={}& \cT_Q^{\hdim(c)}\big[\wsem{c_1} \\
                & \quad \shift(\wsem{c_1},\hdim(c_1))\shift(\wsem{c_1},2\hdim(c_1))\cdots\shift(\wsem{c_1},(\hdim(b)-1)\times \hdim(c_1))\big] \\
                ={}& \scalebox{.96}{$\qT[\hdim(c)]{\shift(\wsem{c_1},\hdim(c_1))\shift(\wsem{c_1},2\hdim(c_1))\cdots\shift(\wsem{c_1},(\hdim(b)-1)\times \hdim(c_1))}$}\seqq \\
                &\quad \qT[\hdim(c)]{\wsem{c_1}} \\
                \fromto_2{}& \scalebox{.96}{$\qT[\hdim(c)]{\shift(\wsem{c_1},\hdim(c_1))\shift(\wsem{c_1},2\hdim(c_1))\cdots\shift(\wsem{c_1},(\hdim(b)-1)\times \hdim(c_1))}$}\seqq \\
                &\quad \rbra*{\qT[\hdim(c_1)]{\wsem{c_1}}+\pid_{(\hdim(b)-1)\times\hdim(c_1)\bbo}} \tag{by \cref{lem:qT_nm} and $\hdim(c) = \hdim(b)\times \hdim(c_1)$} \\
                \fromto_2{}& \scalebox{.96}{$\qT[\hdim(c)]{\shift(\wsem{c_1},\hdim(c_1))\shift(\wsem{c_1},2\hdim(c_1))\cdots\shift(\wsem{c_1},(\hdim(b)-1)\times \hdim(c_1))}$}\seqq \\
                &\quad \rbra*{c_1+\pid_{(\hdim(b)-1)\times\hdim(c_1)\bbo}} \tag{by inductive hypothesis} \\
                ={}& \qT[\hdim(c)]{\shift(\wsem{c_1},2\hdim(c_1))\cdots\shift(\wsem{c_1},(\hdim(b)-1)\times \hdim(c_1))}\seqq \\
                &\quad \qT[\hdim(c)]{\shift(\wsem{c_1},\hdim(c_1))}\seqq\rbra*{c_1+\pid_{(\hdim(b)-1)\times\hdim(c_1)\bbo}} \\
                \fromto_2{}& \qT[\hdim(c)]{\shift(\wsem{c_1},2\hdim(c_1))\cdots\shift(\wsem{c_1},(\hdim(b)-1)\times \hdim(c_1))}\seqq \\
                &\quad \rbra*{\pid_{\hdim{c_1}\bbo}+\qT[(\hdim(b)-1)\times\hdim{c_1}]{\wsem{c_1}}}\seqq\rbra*{c_1+\pid_{(\hdim(b)-1)\times\hdim(c_1)\bbo}} \tag{by \cref{lem:qT_shift}} \\
                \fromto_2{}& \qT[\hdim(c)]{\shift(\wsem{c_1},2\hdim(c_1))\cdots\shift(\wsem{c_1},(\hdim(b)-1)\times \hdim(c_1))}\seqq \\
                &\quad \rbra*{\pid_{\hdim{c_1}\bbo}+\qT[\hdim{c_1}]{\wsem{c_1}}+\pid_{(\hdim(b)-2)\times \hdim(c_1)\bbo}}\seqq\rbra*{c_1+\pid_{(\hdim(b)-1)\times\hdim(c_1)\bbo}} \tag{by \cref{lem:qT_nm}} \\
                \fromto_2{}& \qT[\hdim(c)]{\shift(\wsem{c_1},2\hdim(c_1))\cdots\shift(\wsem{c_1},(\hdim(b)-1)\times \hdim(c_1))}\seqq \\
                &\quad \rbra*{\pid_{\hdim{c_1}\bbo}+c_1+\pid_{(\hdim(b)-2)\times \hdim(c_1)\bbo}}\seqq\rbra*{c_1+\pid_{(\hdim(b)-1)\times\hdim(c_1)\bbo}} \tag{by inductive hypothesis} \\
                \fromto_2{}& \qT[\hdim(c)]{\shift(\wsem{c_1},2\hdim(c_1))\cdots\shift(\wsem{c_1},(\hdim(b)-1)\times \hdim(c_1))}\seqq \\
                &\quad \rbra*{c_1+c_1+\pid_{(\hdim(b)-2)\times \hdim(c_1)\bbo}} \tag{by bifunctoriality of $+$} \\
                \fromto_2{}& \cdots \\
                \fromto_2{}& (\underbrace{c_1+c_1+\cdots+c_1}_{\hdim(b)}) \\
                \fromto_2{}& \pid_{\hdim(b)\bbo}\times c_1 \tag{by distributivity}\\
                \fromto_2{}& \pid_{b}\times c_1. \tag{by completeness of $\Pi$}
            \end{align*}
            \item For general cases of $c = c_1\times c_2$ with $c_1:b_1\fromto b_3$ and $c_2:b_2\fromto b_4$. We have $\hdim(c_1) = \hdim(b_1)=\hdim(b_3), \hdim(c_2) = \hdim(b_2) = \hdim(b_4)$. \begin{align*}
                & \qT[\hdim(c)]{\wsem{c_1\times c_2}} \\
                ={}& \qT[\hdim(c)]{\wsem{\swapt}\wsem{\pid_{b_4}\times c_1}\wsem{\swapt}\wsem{\pid_{b_1}\times c_2}} \\
                ={}&  \qT[\hdim(c)]{\wsem{\pid_{b_1}\times c_2}}\seqq \qT[\hdim(c)]{\wsem{\swapt}} \seqq \qT[\hdim(c)]{\wsem{\pid_{b_4}\times c_1}} \seqq \qT[\hdim(c)]{\wsem{\swapt}} \\
                \fromto_2{}& (\pid_{b_1}\times c_2)\seqq \swapt \seqq (\pid_{b_4}\times c_1) \seqq \swapt \tag{by inductive hypothesis} \\
                \fromto_2{}& (\pid_{b_1}\times c_2)\seqq (c_1\times\pid_{b_4}) \tag{by naturality of $\swapt$} \\
                \fromto_2{}& c_1\times c_2. \tag{by bifunctoriality of $\times$}
            \end{align*}
        \end{itemize}
    \end{itemize}
\end{proof}

\lemqtshift*
\begin{proof}
    By induction on $\mathbf{G}$.
    \begin{itemize}
        \item $\mathbf{G} = \CZ{a}$. \begin{align*}
            \qT[n+k]{\hshift(\CZ{a},k)}
            ={}& \qT[n+k]{\CZ{a+k}} \\
            ={}& \swapp(a+k,n+k)\seqq(\pid_{(n+k-1)\bbo}+(-1))\seqq\swapp(a+k,n+m) \\
            \fromto_2{}& (\pid_{k\bbo}+\swapp(a,n))\seqq(\pid_{k\bbo} + \pid_{(n-1)\bbo}+(-1))\seqq(\pid_{k\bbo}+\swapp(a,n)) \tag{by completeness of $\Pi$} \\
            \fromto_2{}& \pid_{k\bbo} + (\swapp(a,n)\seqq(\pid_{(n-1)\bbo}+(-1))\seqq\swapp(a,n)) \tag{by bifunctoriality of $+$} \\
            ={}& \pid_{k\bbo} + \qT[n]{\CZ{a}}.
        \end{align*}
        \item $\mathbf{G} = \CX{b,c}$. \begin{align*}
            \qT[n+k]{\hshift(\CX{b,c},k)}
            ={}& \qT[n+k]{\CX{b+k,c+k}} \\
            ={}& \swapp(b+k,c+k) \\
            \fromto_2{}& \pid_{k\bbo}+\swapp(b,c) \tag{by completeness of $\Pi$} \\
            =& \pid_{k\bbo}+\qT[n]{\CX{b,c}}.
        \end{align*}
        \item $\mathbf{G} = \CH{b,c}$. \begin{align*}
            & \qT[n+k]{\hshift(\CH{b,c},k)} \\
            ={}& \qT[n+k]{\CH{b+k,c+k}} \\
            ={}& \swapp(b+k,n+k-1)\seqq\swapp(c+k,n+k)\seqq(\pid_{(n+k-2)\bbo}+\hadamard)\seqq\\
            & \quad \swapp(c+k,n+k)\seqq\swapp(b+k,n+k-1) \\
            \fromto_2{}& (\pid_{k\bbo}+\swapp(b,n-1))\seqq(\pid_{k\bbo}+\swapp(c,n))\seqq(\pid_{k\bbo}+\pid_{(n-2)\bbo}+\hadamard)\seqq \\
            & \quad (\pid_{k\bbo}+\swapp(c,n))\seqq(\pid_{k\bbo}+\swapp(b,n-1)) \tag{by completeness of $\Pi$} \\
            \fromto_2{}& \pid_{k\bbo}+(\swapp(b,n-1)\seqq\swapp(c,n)\seqq(\pid_{(n-2)\bbo}+\hadamard)\seqq\swapp(c,n)\seqq\swapp(b,n-1)) \tag{by bifunctoriality of $+$} \\
            ={}& \pid_{k\bbo}+\qT[n]{\CH{b,c}}.
        \end{align*}
        \item $\mathbf{G} = \mathbf{G}_1\mathbf{G}_2$. \begin{align*}
            \qT[n+k]{\mathbf{G}_1\mathbf{G}_2}
            ={}& \qT[n+k]{\mathbf{G}_2}\seqq \qT[n+k]{\mathbf{G}_1} \\
            \fromto_2{}& (\pid_{k\bbo}+\qT[n]{\mathbf{G}_2})\seqq (\pid_{k\bbo}+\qT[n]{\mathbf{G}_1}) \tag{by inductive hypothesis} \\
            \fromto_2{}& \pid_{k\bbo}+(\qT[n]{\mathbf{G}_2}\seqq\qT[n]{\mathbf{G}_1}) \tag{by bifunctoriality of $+$} \\
            ={}& \pid_{k\bbo}+\qT[n]{\mathbf{G}_1\mathbf{G}_2}.
        \end{align*}
    \end{itemize}
\end{proof}

\endgroup
\section{Completeness of \texorpdfstring{\HPiLang{}}{Hadamard-Pi} \texorpdfstring{(\NoCaseChange{\cref{thm:hpi}})}{}}\label{app:hpi_proof}
\begingroup
\allowdisplaybreaks

For the convenience of the proof, we introduce the following lemmas and a new translation $\cT_{\mathit{HQ}}$.

\begin{lemma}
    \label{lem:type_conversion}
    Let $b$ be a type of $\Pi$, then we can construct a term $\mathit{tconv}_b:b \fromto \hdim(b)\bbo$ of $\Pi$ such that $\sem{\mathit{tconv}_b} = I_{\hdim(b)}$.
\end{lemma}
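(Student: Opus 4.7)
The plan is to prove the lemma by structural induction on the type $b$, constructing $\mathit{tconv}_b$ explicitly in each case.

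For the base cases, observe that $\hdim(\bbz) = 0$ so $0\bbo$ is the empty sum, namely $\bbz$ itself (by convention), and we can take $\mathit{tconv}_{\bbz} = \pid_{\bbz}$. Similarly $\hdim(\bbo) = 1$ and $1\bbo = \bbo$, so we take $\mathit{tconv}_{\bbo} = \pid_{\bbo}$. Both have identity semantics immediately.

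For the sum case $b = b_1 + b_2$, invoke the inductive hypothesis to obtain $\mathit{tconv}_{b_1} : b_1 \fromto \hdim(b_1)\bbo$ and $\mathit{tconv}_{b_2} : b_2 \fromto \hdim(b_2)\bbo$. Then $\mathit{tconv}_{b_1} + \mathit{tconv}_{b_2}$ has type $b_1 + b_2 \fromto \hdim(b_1)\bbo + \hdim(b_2)\bbo$, and a suitable sequence of $\assoclp/\assocrp$ combinators rebrackets the right-hand side as $(\hdim(b_1)+\hdim(b_2))\bbo = \hdim(b_1+b_2)\bbo$. The composite is $\mathit{tconv}_{b_1+b_2}$.

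For the product case $b = b_1 \times b_2$, first apply $\mathit{tconv}_{b_1} \times \pid_{b_2}$ to reach $\hdim(b_1)\bbo \times b_2$. Then repeatedly apply the distributor $\dist$ (together with associators for $+$) to rewrite this as a sum of $\hdim(b_1)$ copies of $\bbo \times b_2$. Each copy becomes $b_2$ via $\unitet$, and the inductive hypothesis converts each $b_2$ into $\hdim(b_2)\bbo$. This produces $\hdim(b_1)$ copies of $\hdim(b_2)\bbo$, which after reassociation is exactly $(\hdim(b_1)\times \hdim(b_2))\bbo = \hdim(b_1 \times b_2)\bbo$. Formally, one defines this by a secondary induction on $\hdim(b_1)$ (or equivalently on the structure of $b_1$ after the first step).

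For the semantic claim $\sem{\mathit{tconv}_b} = I_{\hdim(b)}$, note that all primitives used ($\pid$, associators, unitors, distributor) interpret as the canonical coherence isomorphisms of the rig groupoid $\Unitary$, which on computational basis states act by relabelling in a way that corresponds to the standard isomorphism between $\sem{b}$ and $\sem{\hdim(b)\bbo}$ once bases are listed in the canonical lexicographic order. Since $\mathit{tconv}_b$ is a composition of such coherence maps, by Laplaza's coherence theorem for rig categories it is the unique such isomorphism, and under the canonical identification of both sides with $\mathbb{C}^{\hdim(b)}$ it is the identity matrix. The main obstacle is a notational rather than conceptual one: tracking the precise form of the reassociating isomorphism in the product case so that the composite ends up being strictly the identity rather than merely a permutation; this is handled by choosing the associators consistently to flatten sums always to the right, matching the left-to-right reading of $\hdim(b)\bbo$.
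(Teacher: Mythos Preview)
Your proposal is correct and follows essentially the same approach as the paper: structural induction on $b$, with identical base cases and sum case, and a product case that differs only in the order of operations (you convert $b_1$ first, distribute, then convert each copy of $b_2$; the paper applies $\tconv_{b_1}\times\tconv_{b_2}$ up front and then distributes). Your appeal to Laplaza coherence for the semantic claim is slightly more elaborate than needed, since in $\Unitary$ the associators, unitors, and distributors are all literally identity matrices under the standard Kronecker conventions, so the composite is the identity without any further argument about basis orderings.
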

\begin{proof}
    By induction on the structure of $b$.
    \begin{itemize}
        \item $b = \bbz$. Let $\tconv_\bbz = \pid_{\bbz}$, we have $\sem{\tconv_\bbz} = \sem{\pid_{\bbz}} = I_{0} = I_{\hdim(\bbz)}$.
        \item $b = \bbo$. Let $\tconv_\bbo = \pid_{\bbo}$, we have $\sem{\tconv_\bbo} = \sem{\pid_{\bbo}} = I_{1} = I_{\hdim(\bbo)}$.
        \item $b = b_1+b_2$. Let \[\tconv_{b_1+b_2} = (\tconv_{b_1}+\tconv_{b_2})\seqq\assoclp\seqq(\assoclp+\pid_{\bbo})\seqq\cdots\seqq(\assoclp+\underbrace{\pid_{\bbo}+\pid_{\bbo}+\cdots+\pid_{\bbo}}_{\hdim(b_2)-1 \text{ times of } \pid_{\bbo}}).\] 
        Then,
        \begin{align*}
            & \sem{\tconv_{b_1+b_2}} \\
            ={}& \sem{\assoclp\seqq(\assoclp+\pid_{\bbo})\seqq\cdots\seqq(\assoclp+{\pid_{\bbo}+\pid_{\bbo}+\cdots+\pid_{\bbo}})} \cdot \sem{\tconv_{b_1}+\tconv_{b_2}} \\
            ={}& I_{\hdim(b_1+b_2)}\cdot (I_{\hdim(b_1)}\oplus I_{\hdim(b_2)}) \tag{by inductive hypothesis} \\
            ={}& I_{\hdim(b_1+b_2)}. \tag{by $\hdim(b_1+b_2) = \hdim(b_1) + \hdim(b_2)$}
        \end{align*}
        \item $b = b_1\times b_2$. Let \begin{align*}
            & \tconv_{b_1\times b_2} \\
            ={}& (\tconv_{b_1}\times \tconv_{b_2})\seqq\dist\seqq(\dist+\pid_{\hdim(b_2)\bbo})\seqq\cdots\seqq(\dist+\underbrace{\pid_{\hdim(b_2)\bbo}+\cdots+\pid_{\hdim(b_2)\bbo}}_{\hdim(b_1)-1 \text{ times } \pid_{\hdim(b_2)\bbo}})\seqq\\
            &\quad  \assoclp\seqq(\assoclp+\pid_{\bbo})\seqq\cdots\seqq(\assoclp+\underbrace{\pid_{\bbo}+\cdots+\pid_{\bbo}}_{\mathclap{\hdim(b_1)\hdim(b_2)-1 \text{ times of } \pid_{\bbo}}}).
        \end{align*}
        Then,
        \begin{align*}
            \sem{ \tconv_{b_1\times b_2}} = \cdots ={}& \sem{\tconv_{b_1}}\otimes \sem{\tconv_{b_2}} \\
            ={}& I_{\hdim(b_1)}\otimes I_{\hdim(b_2)} \tag{by inductive hypothesis} \\
            ={}& I_{\hdim(b_1\times b_2)}. \tag{by $\hdim(b_1\times b_2) = \hdim(b_1) \times \hdim(b_2)$}
        \end{align*}
    \end{itemize}
\end{proof}

\begin{definition}
    [Translation $\cT_{\mathit{HQ}}$]
    The translation $\cT_{\mathit{HQ}}$ that maps a term $c:b_1 \fromto b_2$ of \QPiLang{} to a term $\hqT{c}: \hdim(c)\bbo+ b_1 \fromto \hdim(c)\bbo+ b_2$ of \HPiLang{} is defined inductively on the structure of combinators as follows.
    \begin{gather*}
        \hqT{(-1)} = \hadamard\seqq\swapp\seqq\hadamard \\
        \hqT{\mathit{iso}} = \pid_{\hdim(\mathit{iso})}+\mathit{iso} \\
        \hqT{c_1\seqq c_2} = \hqT{c_1}\seqq\hqT{c_2} \\
        \begin{prooftree}
            \hypo{c_1:b_1\fromto b_3} \hypo{c_2:b_2\fromto b_3}
            \infer2{\hqT{c_1 + c_2} = (\pid_{b_1}+\swapp_{b_2+ \hdim(c_1)\bbo}+\pid_{b_2})\seqq(\hqT{c_1}+\hqT{c_2})\seqq(\pid_{b_1}+\swapp_{b_3+ \hdim(c_2)\bbo}+\pid_{b_4})}
        \end{prooftree} \\
        \begin{prooftree}
            \hypo{c_1:b_1\fromto b_3} \hypo{c_2:b_2\fromto b_3}
            \infer2{\mbox{$
            \begin{aligned}
                \hqT{c_1 \times c_2} ={}& \factor\seqq(\hqT{c_1}\times \pid_{b_2})\seqq\dist\seqq(((\tconv_{b_3}^{-1}\times \pid_{b_2})\seqq\swapt_{b_3\times b_2})+\swapt_{b_3\times b_2})\seqq \\
                &\quad \factor\seqq (\hqT{c_2}\times \pid_{b_3})\seqq\dist\seqq \\
                &\quad (((\pid_{\hdim(c_2)\bbo}\times\tconv_{b_3})\seqq\swapt_{(\hdim(c_2)\bbo)\times (\hdim(c_1)\bbo)})+\swapt_{b_4\times b_3})
            \end{aligned}
            $}}
        \end{prooftree}
    \end{gather*}
    where $\tconv_{b}^{-1}$ is the inverse of $\tconv_{b}$. When omitting the isomorphisms $\assocrp$, $\assoclp$, $\unitep$, $\unitip$, $\assocrt$, $\assoclt$, $\unitet$, $\unitit$, $\dist$, $\factor$, $\absorb$, \emph{$\hqT{c}$ can also be treated as the type of $\bb{2}\times b_1 \fromto \bb{2}\times b_2$}.
\end{definition}

The difference between $\cT_{\mathit{HQ}}$ and $\cT_{H}$ is that we double the dimension of the space, corresponding to taking the tensor product with a qubit, rather than merely adding one auxiliary dimension.
Then, the following lemmas can be established.

\begin{lemma}\label{lem:hqt0}
    Let $c:b_1\fromto b_3$ be a term of \QPiLang{} and $b$ be a type. When omitting the isomorphisms $\assocrp$, $\assoclp$, $\unitep$, $\unitip$, $\assocrt$, $\assoclt$, $\unitet$, $\unitit$, $\dist$, $\factor$, $\absorb$, and $\factorz$ used for type conversion, we have that
    \[\hqT{\pid_{b}\times c} \fromto_2{} (\swapt\times\pid_{b_1})\seqq(\pid_{b}\times \hqT{c})\seqq(\swapt\times \pid_{b_3})\]
    is a level-2 combinator of \HPiLang{}.
\end{lemma}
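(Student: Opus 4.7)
The plan is to unfold the definition of $\hqT{c_1\times c_2}$ in the special case $c_1 = \pid_b$, $c_2 = c$, and then simplify. Since $\pid_b$ is an isomorphism, the definition of $\cT_{HQ}$ gives
\[ \hqT{\pid_b} \;=\; \pid_{\hdim(b)\bbo} + \pid_b, \]
which is the identity on $\hdim(b)\bbo + b$. Consequently the factor $\hqT{\pid_b}\times\pid_{b_1}$ appearing in the unfolded right-hand side reduces to the identity by bifunctoriality of $\times$ and can be deleted. What survives is a composition consisting of $\factor$, $\dist$, $\swapt$ at various types, the type-conversion terms $\tconv_b^{\pm 1}$ from \cref{lem:type_conversion}, together with a single middle occurrence of $\hqT{c}$ acting on its intended domain $\hdim(c)\bbo + b_1 \fromto \hdim(c)\bbo + b_3$.

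The next step is to observe that, treating the single occurrence of $\hqT{c}$ as an opaque box, the surrounding scaffolding is built entirely from primitives of pure $\Pi$. Since $\sem{\tconv_b} = I$ by \cref{lem:type_conversion} and every other constituent denotes a permutation of summands or factors, the scaffolding denotes a classical bijection. By full abstraction of $\Pi$ over $\FinBij$, it therefore suffices to check that this classical bijection agrees, as a permutation, with the one realised by $(\swapt\times\pid_{b_1})$ on the input side and $(\swapt\times\pid_{b_3})$ on the output side surrounding $\pid_b\times\hqT{c}$. This can be done by tracing indices of shape $(i,j,k) \in \hdim(c)\bbo \times b \times b_1$ through each $\dist$, $\factor$, and $\swapt$: the role of the first swap-and-convert block is precisely to rearrange $(\bb{2}\times b)\times b_1$ into $b\times(\bb{2}\times b_1)$, and symmetrically for the output side.

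The main obstacle is bookkeeping with the type-conversion isomorphisms ($\assocrp$, $\assoclp$, $\dist$, $\factor$, $\unitep$, $\unitip$, etc.) that the statement permits us to omit: one must verify that every rearrangement used is genuinely derivable from the rig-groupoid axioms of $\Pi$ alone, and does not secretly invoke the new equations \cref{eq:h1,eq:h2,eq:h3} that involve $\hadamard$. Since $\hqT{c}$ is the only non-$\Pi$ primitive that ever appears in the expression and it is left untouched throughout, this reduces the entire argument to a classical permutation check, which is exactly what completeness of $\Pi$ handles uniformly.
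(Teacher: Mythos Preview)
Your approach is essentially the paper's. Both unfold the definition of $\hqT{c_1\times c_2}$ with $c_1=\pid_b$, collapse $\hqT{\pid_b}$ to an identity, and then reduce the remaining rearrangement to the classical $\Pi$ theory.

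One point deserves tightening. You invoke ``full abstraction of $\Pi$ over $\FinBij$'' while treating $\hqT{c}$ as an opaque box, but full abstraction is a statement about closed $\Pi$-terms, not about $\Pi$-contexts with a non-$\Pi$ hole. In the unfolded left-hand side the hole appears as $\hqT{c}\times\pid_b$, whereas on the right it is $\pid_b\times\hqT{c}$; these are not the same hole, so a bare permutation check of the surrounding scaffolding does not type. The paper makes the missing link explicit: apply naturality of $\swapt$ (a $\Pi$ level-2 scheme valid for \emph{any} terms, including $\hqT{c}$) to rewrite $\hqT{c}\times\pid_b \fromto_2 \swapt\seqq(\pid_b\times\hqT{c})\seqq\swapt$. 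After that step both sides have the hole in identical position, and completeness of $\Pi$ genuinely applies to the pure-$\Pi$ prefixes and suffixes separately. Your index-tracing paragraph is really verifying the semantic hypothesis of that last step; just insert the naturality move before it and the argument is complete.
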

\begin{proof}
    \begin{align*}
        & \hqT{\pid_{b}\times c} \\
        \fromto_2{}& (\hqT{\pid_{b}}\times \pid_{b_1})\seqq (\swapt_{b\times b_1}+\swapt_{b\times b_1})\seqq (\hqT{c}\times \pid_{b})\seqq (\swapt_{b_3\times b}+\swapt_{b_3\times b}) \\
        \fromto_2{}&  (\pid_{\bb{2}\times b}\times \pid_{b_1})\seqq (\pid_{\bb{2}}\times \swapt_{b\times b_1})\seqq (\hqT{c}\times \pid_{b})\seqq (\pid_{\bb{2}}\times \swapt_{b_3\times b}) \tag{by completeness of $\Pi$} \\
        \fromto_2{}& (\pid_{\bb{2}\times b}\times \pid_{b_1})\seqq (\pid_{\bb{2}}\times \swapt_{b\times b_1})\seqq \swapt_{(\bb{2}\times b_1)\times b}\seqq(\pid_{b}\times \hqT{c})\seqq\swapt_{b\times (\bb{2}\times b_3)}\seqq \\
        & \quad (\pid_{\bb{2}}\times \swapt_{b_3\times b}) \tag{by naturality of $\swapt$} \\
        \fromto_2{}& (\swapt_{\bb{2}\times b}\times \pid_{b_1})\seqq(\pid_{b}\times \hqT{c})\seqq(\swapt_{b\times \bb{2}}\times \pid_{b_3}). \tag{by completeness of $\Pi$}
    \end{align*}
\end{proof}

\begin{lemma}\label{lem:ht_assoc_p}
    Let $c_1,c_2,c_3$ be terms of \QPiLang{}. When omitting the isomorphisms $\assocrp$, $\assoclp$, $\unitep$, $\unitip$, $\assocrt$, $\assoclt$, $\unitet$, $\unitit$, $\dist$, $\factor$, $\absorb$, and $\factorz$ used for type conversion, we have that
    \[\hT{(c_1+c_2)+c_3} \fromto_2{} \hT{c_1+(c_2+c_3)}\]
    is a level-2 combinator of \HPiLang{}.
\end{lemma}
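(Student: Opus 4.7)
The plan is to establish \cref{lem:ht_assoc_p} by reducing both sides of the desired equivalence to a common normal form, exploiting \cref{lem:ht3} (specifically \cref{eq:ht_plus}) to peel apart the translation of a sum into a composition of single-component translations. Letting $c_i \colon b_i \fromto b_i'$ for $i=1,2,3$, applying \cref{eq:ht_plus} twice to the left-hand side gives
\[\hT{(c_1+c_2)+c_3} \fromto_2 \hT{(c_1+c_2)+\pid_{b_3}}\seqq\hT{\pid_{b_1'+b_2'}+c_3},\]
and then further expanding the first factor yields a three-fold composition of ``single-active-summand'' translations; similarly for the right-hand side with the opposite associativity. So the first step is this symmetric decomposition.

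Next, I would show that each ``single-active-summand'' translation is invariant (up to $\fromto_2$) under re-associating the identity parts around the active position. Concretely, the goal reduces to three equivalences of the form $\hT{(c_i+\pid)+\pid} \fromto_2 \hT{c_i+(\pid+\pid)}$ (and the symmetric placements), which unfold by definition of $\cT_H$ into terms that differ only by the placement of $\assocrp/\assoclp/\swapp$ witnessing how the ancillary $\bbo$ is routed to position $i$ in the sum. The twist lemma \cref{lem:ht2} allows the routing $\swapp$'s to pass through $\hT{\pid_{b}} = \pid_{\bbo+b}$ (noting $\hT{\pid_b}$ on a non-$(-1)$ isomorphism is just $\pid_\bbo + \pid_b$), after which the remaining combinators are pure classical permutations and can be identified using the completeness of $\Pi$ applied to the sub-diagram built from $\assocrp$, $\assoclp$, $\swapp$, and identities.

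Finally, I would assemble the pieces: on the left, after full decomposition, the three factors sit in $((\bbo+b_1)+b_2)+b_3$-style associations and route the ancilla through the left-associated structure; on the right, they sit in $\bbo+(b_1+(b_2+b_3))$-style associations. Using bifunctoriality of $+$ and \cref{lem:ht2} to slide the twists across the inactive factors, both compositions collapse to the same sequence $(\text{route to slot }1) \seqq \hT{c_1}\seqq (\text{route to slot }2) \seqq \hT{c_2}\seqq (\text{route to slot }3) \seqq \hT{c_3}$, with all routing witnessed by classical $\Pi$-terms and therefore equal by $\Pi$-completeness.

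The main obstacle will be the bookkeeping: the translation $\cT_H$ interleaves each invocation of $\hT{c_i}$ with a pair of ancilla swaps, and when we apply \cref{eq:ht_plus} repeatedly, these swaps accumulate and must be pushed past adjacent $\hT{c_j}$'s via \cref{lem:ht2}. Writing the intermediate terms graphically (as is done for the proof of \cref{lem:ht2} itself) and invoking $\Pi$-completeness to collapse all classical permutation sub-diagrams at the end should keep the argument manageable, but the sheer number of associator and unitor rearrangements is what makes this lemma tedious rather than conceptually hard.
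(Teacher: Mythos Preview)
Your plan relies on \cref{lem:ht3} (specifically \cref{eq:ht_plus}), but in the paper's logical order that lemma is \emph{downstream} of the one you are trying to prove: the proof of \cref{lem:ht3} uses \cref{lem:hqt2}, whose proof uses \cref{lem:hqt1}, whose proof in turn invokes \cref{lem:ht_assoc_p}. So the decomposition step you start with is circular. Moreover, even granting \cref{eq:ht_plus}, one application only gives $\hT{(c_1+c_2)+\pid}\seqq\hT{\pid+c_3}$, and further splitting the first factor into single-active-summand form is not another instance of \cref{eq:ht_plus}; you would need to push bifunctoriality of $+$ in \QPiLang{} through $\cT_H$, which is exactly \cref{lem:q_eq} and again circular at this point in the development.

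The paper's proof avoids all of this by working directly from the definition of $\cT_H$. It unfolds $\hT{(c_1+c_2)+c_3}$ using the $c_1+c_2$ clause of \cref{fig:def_ht} once for the outer sum and once for the inner one, obtaining a string diagram whose only non-classical boxes are $\hT{c_1},\hT{c_2},\hT{c_3}$ and whose remaining wires are $\swapp$'s and identities. A single rearrangement by completeness of $\Pi$ reshapes those classical permutations, after which the diagram refolds (again by definition) into $\hT{c_2+c_3}$ nested inside $\hT{c_1+(c_2+c_3)}$. No appeal to \cref{lem:ht2} or \cref{lem:ht3} is needed; the argument is purely definitional unfolding plus $\Pi$-completeness, and this is precisely why it can sit at the base of the dependency chain.
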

\begin{proof}
    \ifSHOWPROOF%
    \begin{align*}
        & \hT{(c_1+c_2)+c_3} \\
        \fromto_2{}& \scalebox{\scaleratio}{\begin{quantikz}[row sep=3mm,column sep=2mm]
            \lstick[2,brackets=none]{$+$\hspace{-.65cm}}\push{\bbo}&\gate[3]{\hT{c_1+c_2}} & \permute{3,1,2} & & \permute{2,3,1} &\\
            \lstick[2,brackets=none]{$+$\hspace{-.65cm}}\push{\phantom{\bbo}}& & & & &\\
            \lstick[2,brackets=none]{$+$\hspace{-.65cm}}\push{\phantom{\bbo}}& & &\gate[2]{\hT{c_3}} & &\\
            \push{\phantom{\bbo}}&&&& &
        \end{quantikz}} \\
        \fromto_2{}& \scalebox{\scaleratio}{\begin{quantikz}[row sep=3mm,column sep=2mm]
            \lstick[2,brackets=none]{$+$\hspace{-.65cm}}\push{\bbo}&\gate[2]{\hT{c_1}} &\permute{2,1}&&\permute{2,1}& \permute{3,1,2} & & \permute{2,3,1} &\\
            \lstick[2,brackets=none]{$+$\hspace{-.65cm}}\push{\phantom{\bbo}}& & &\gate[2]{\hT{c_2}} & & & & &\\
            \lstick[2,brackets=none]{$+$\hspace{-.65cm}}\push{\phantom{\bbo}}& & &&& &\gate[2]{\hT{c_3}} & &\\
            \push{\phantom{\bbo}}&&&&&& &&
        \end{quantikz}} \\
        \fromto_2{}& \scalebox{\scaleratio}{\begin{quantikz}[row sep=3mm,column sep=2mm]
            \lstick[2,brackets=none]{$+$\hspace{-.65cm}}\push{\bbo}&\gate[2]{\hT{c_1}} &\permute{2,1}&&& \permute{1,3,2} & && \permute{2,1} &\\
            \lstick[2,brackets=none]{$+$\hspace{-.65cm}}\push{\phantom{\bbo}}& & &\gate[2]{\hT{c_2}} & & & & \permute{2,1} &&\\
            \lstick[2,brackets=none]{$+$\hspace{-.65cm}}\push{\phantom{\bbo}}& & &&& &\gate[2]{\hT{c_3}} & &&\\
            \push{\phantom{\bbo}}&&&&&& &&&
        \end{quantikz}} \tag{by completeness of $\Pi$} \\
        \fromto_2{}& \scalebox{\scaleratio}{\begin{quantikz}[row sep=3mm,column sep=2mm]
            \lstick[2,brackets=none]{$+$\hspace{-.65cm}}\push{\bbo}&\gate[2]{\hT{c_1}} &\permute{2,1}&&& \permute{2,1} &\\
            \lstick[2,brackets=none]{$+$\hspace{-.65cm}}\push{\phantom{\bbo}}& & &\gate[3]{\hT{c_2+c_3}} & &&\\
            \lstick[2,brackets=none]{$+$\hspace{-.65cm}}\push{\phantom{\bbo}}& &&& &&\\
            \push{\phantom{\bbo}}&&&&&& 
        \end{quantikz}} \\
        \fromto_2{}& \hT{c_1+(c_2+c_3)}.
    \end{align*}
    \else%
    Set \verb|\SHOWPROOFtrue| to compile the proof.
    \fi
\end{proof}

\begin{lemma}\label{lem:hqt1}
    Let $c:b_1\fromto b_3$ be a term of \QPiLang{} and $b$ be a type. When omitting the isomorphisms $\assocrp$, $\assoclp$, $\unitep$, $\unitip$, $\assocrt$, $\assoclt$, $\unitet$, $\unitit$, $\dist$, $\factor$, $\absorb$, and $\factorz$ used for type conversion, we have that
    \begin{align*}
        \hT{\pid_{b}\times c} \fromto_2{} \cT_{H}[\underbrace{c+c+\cdots+c}_{\hdim(b)}] \enspace \text{and}\enspace
        \hqT{\pid_{b}\times c} \fromto_2{} \cT_{HQ}[\underbrace{c+c+\cdots+c}_{\hdim(b)}]
    \end{align*}
    are level-2 combinators of \HPiLang{}.
\end{lemma}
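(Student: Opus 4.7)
The plan is to prove both equations by well-founded induction on the ranking function $R(b)$ that was introduced precisely to make the definition of $\hT{\pid_b \times c}$ well-founded. The base cases $b = \bbz$ and $b = \bbo$ are immediate: when $b = \bbz$ the empty sum on the right-hand side is the identity on $\bbz$, and $\hT{\pid_\bbz \times c}$ matches this up to the omitted $\absorb/\factorz$; when $b = \bbo$ the right-hand sum contains exactly one copy of $c$, and $\hT{\pid_\bbo \times c}$ reduces to $\hT{c}$ modulo $\unitet/\unitit$. The sum case $b = b_1 + b_2$ unfolds by definition to $\hT{\pid_{b_1}\times c + \pid_{b_2}\times c}$, so by \cref{eq:ht_plus} it decomposes into the sequential composition of $\hT{\pid_{b_1}\times c}$ and $\hT{\pid_{b_2}\times c}$; the inductive hypothesis rewrites each factor as $\cT_H$ applied to an $\hdim(b_i)$-fold sum, and \cref{lem:ht_assoc_p} together with bifunctoriality of $+$ reassembles these into a single right-associated $\hdim(b_1)+\hdim(b_2)$-fold sum, which is what $\cT_H$ yields on the right.

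For the product case $b = b_1 \times b_2$, each of the four subcases in the definition of $\hT{\pid_b \times c}$ strictly decreases $R$: the $\bbz \times b_2$ case is direct, the $\bbo \times b_2$ case applies the IH to $b_2$, the $(b_3+b_4)\times b_2$ case reduces by distributivity to $b_3\times b_2 + b_4\times b_2$ (already handled by the sum case), and the $(b_3\times b_4)\times b_2$ case reduces by associativity to $b_3\times(b_4\times b_2)$. In each subcase, applying the IH to the smaller-ranked type and noting that $\hdim$ is invariant under these $\Pi$-level identities (i.e.\ $\hdim((b_3+b_4)\times b_2) = \hdim(b_3\times b_2) + \hdim(b_4\times b_2)$ and $\hdim((b_3\times b_4)\times b_2) = \hdim(b_3\times(b_4\times b_2))$) closes the step, the remaining discrepancies being absorbed into the omitted type-conversion isomorphisms.

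For the $\hqT{\cdot}$ equation, the plan is to first apply \cref{lem:hqt0} to rewrite
\[
  \hqT{\pid_b \times c} \fromto_2 (\swapt \times \pid_{b_1}) \seqq (\pid_b \times \hqT{c}) \seqq (\swapt \times \pid_{b_3}),
\]
and then run the analogous well-founded induction on $R(b)$ with $\hqT{c}$ in place of the innermost combinator. The structural manipulations (distributivity and associativity in the product case, \cref{lem:ht2} and \cref{lem:ht3} for recombining sums) are the same; the only additional bookkeeping is that $\cT_{HQ}$ doubles the auxiliary dimension rather than adding a single $\bbo$, so the flanking swap combinators must be tracked carefully.

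The main obstacle is this bookkeeping: the twists/permutes introduced by $\hT{c_1+c_2}$ and the swap combinators introduced by $\hqT{c_1\times c_2}$ must be shown to collapse correctly when decomposed pieces are reassembled, and the reassociation of a right-nested sum $c + c + \cdots + c$ must line up with the inductively combined translations. Most of this is dispatched by \cref{lem:ht2}, \cref{lem:ht3}, and \cref{lem:ht_assoc_p}, but the product subcase $(b_3\times b_4)\times b_2$ requires particular care because the reassociation reshuffles how $\hdim$ decomposes into subproducts, so one must check that the canonical $\hdim(b)$-fold sum on the right-hand side is reached in exactly the associativity that $\cT_H$ produces.
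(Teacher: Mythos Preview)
Your approach for the first equation is right in spirit, but there is a circularity: you invoke \cref{eq:ht_plus} (part of \cref{lem:ht3}) to decompose $\hT{\pid_{b_1}\times c + \pid_{b_2}\times c}$. In the paper's logical ordering, \cref{lem:ht3} is established only \emph{after} \cref{lem:hqt1}, via \cref{lem:hqt2}, whose proof in the case $c=\pid_b\times c'$ explicitly appeals to \cref{lem:hqt1}. So citing \cref{eq:ht_plus} here closes a dependency loop. The fix is easy and is exactly what the paper does: you do not need \cref{eq:ht_plus}, because the \emph{definition} of $\hT{d_1+d_2}$ already writes it as a composite built from $\hT{d_1}$ and $\hT{d_2}$ with swap conjugations. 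The paper unfolds that definition directly, substitutes the inductive hypothesis into each piece, and then invokes \cref{lem:ht_assoc_p} to reassociate the resulting nested sum. Your closing remark that ``most of this is dispatched by \cref{lem:ht2}, \cref{lem:ht3}, and \cref{lem:ht_assoc_p}'' has the same problem with \cref{lem:ht3}; using \cref{lem:ht2} and \cref{lem:ht_assoc_p} is fine.

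For the second equation your plan also differs from the paper's. Rather than re-running the $R(b)$ induction, the paper inducts on $\hdim(b)$: in the step one picks $b_4$ with $\hdim(b_4)=\hdim(b)-1$, applies \cref{lem:hqt0} to expose $\pid_b\times\hqT{c}$, uses distributivity to split this as $\pid_{b_4}\times\hqT{c}+\hqT{c}$, reabsorbs the first summand via \cref{lem:hqt0} to recover $\hqT{\pid_{b_4}\times c}$, and applies the inductive hypothesis. This avoids the four-way product case split entirely and keeps the swap bookkeeping minimal. Your $R(b)$-based plan can probably be made to work, but it would require passing back and forth through \cref{lem:hqt0} in each subcase, which is exactly the extra bookkeeping you flag as the main obstacle.
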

\begin{proof}
    We prove the first one by induction on the value of $R(b)$, where $R$ has been defined in \cref{fig:def_ht}.
    \begin{itemize}
        \item $R(b)=1$. In this case, we have $b = \bbz$, which is the trivial case.
        \item $R(b) = 2$. In this case, $b = \bbo$. \begin{align*}
            \hT{\pid_{\bbo}\times c} ={}& (\pid_\bbo+\unitet)\seqq\hT{c}\seqq(\pid_\bbo+\unitit)
            \fromto_2{} \hT{c}. \tag{by strictness}
        \end{align*}
        \item Suppose $R(b) = k\geq 3$ and for any type $b_2$ with $R(b_2) < k$, we have 
        \[\hT{\pid_{b_2}\times c} \fromto_2{} \cT_{H}[\underbrace{c+c+\cdots+c}_{\hdim(b_2)}].\]
        \begin{itemize}[leftmargin=10pt]
            \item For the cases where $b$ takes the form $\bbz\times b_4, \bbo\times b_4, (b_5+b_6)\times b_4, (b_5\times b_6)\times b_4$, by the definition of $\cT_H$ in \cref{fig:def_ht}, we can easily reduce to smaller cases of $b'$ with $R(b')< R(b) = k$ such that $\hdim(b') = \hdim(b)$ and $\hT{\pid_{b}\times c} \fromto_2\hT{\pid_{b'}\times c}$. Thus the statement is easy to see.
            \item The only non-trivial case is $b = b_4+b_5$.  \begin{align*}
                &\hT{\pid_{b_4+b_5}\times c} \\
                \fromto_2{}& \hT{\pid_{b_4}\times c+\pid_{b_5}\times c} \\
                \fromto_2{}& (\hT{\pid_{b_4}\times c}+\pid_{b_5\times b_1})\seqq(\swapp_{\bbo+(b_4\times b_3)})\seqq(\pid_{b_4\times b_3}+\hT{\pid_{b_5}\times c})\seqq(\swapp_{(b_4\times b_3)+\bbo}) \\
                \fromto_2{}& (\cT_{H}[\underbrace{c+c+\cdots+c}_{\hdim(b_4)}]+\pid_{b_5\times b_1})\seqq(\swapp_{\bbo+(b_4\times b_3)})\seqq(\pid_{b_4\times b_3}+\cT_{H}[\underbrace{c+c+\cdots+c}_{\hdim(b_5)}])\seqq \\
                &\quad (\swapp_{(b_4\times b_3)+\bbo}) \tag{by inductive hypothesis} \\
                \fromto_2{}& (\cT_{H}[\underbrace{c+c+\cdots+c}_{\hdim(b_4)}]+\pid_{\scalebox{.8}{$\underbrace{b_1+b_1+\cdots+b_1}_{\hdim(b_5)}$}})\seqq(\swapp_{\bbo+(\scalebox{.8}{$\underbrace{b_3+b_3+\cdots+b_3}_{\hdim(b_4)}$})})\seqq \\
                & \quad (\pid_{\scalebox{.8}{$\underbrace{b_3+b_3+\cdots+b_3}_{\hdim(b_4)}$}}+\cT_{H}[\underbrace{c+c+\cdots+c}_{\hdim(b_5)}])\seqq(\swapp_{(\scalebox{.8}{$\underbrace{b_3+b_3+\cdots+b_3}_{\hdim(b_4)}$})+\bbo}) \tag{by strictness}\\
                \fromto_2{}& \cT_{H}[(\underbrace{c+c+\cdots+c}_{\hdim(b_4)})+(\underbrace{c+c+\cdots+c}_{\hdim(b_5)})] \\
                \fromto_2{}& \cT_{H}[(\underbrace{c+c+\cdots+c}_{\hdim(b_4+b_5)})].  \tag{by \cref{lem:ht_assoc_p} and $\hdim(b_4+b_5) = \hdim(b_4)+\hdim(b_5)$}
            \end{align*}
        \end{itemize}
    \end{itemize}
    For the second one, we prove it by induction on the value of $\hdim(b)$. \begin{itemize}
        \item $\hdim(b) = 0, 1$. By \cref{lem:hqt0}, the basic case is trivial.
        \item Suppose $\hdim(b) = k \geq 2$ and for any type $b_2$ with $\hdim(b_2) < k$, we have
        \[\hqT{\pid_{b_2}\times c} \fromto_2{} \cT_{HQ}[\underbrace{c+c+\cdots+c}_{\hdim(b_2)}].\]
        Since $\hdim(b)\geq 2$, there must have $b_4,b_5=\bbo$ such that $\hdim(b) = \hdim(b_4)+\hdim(b_5) = \hdim(b_4)+1$. Then, \begin{align*}
            & \hqT{\pid_{b}\times c} \\
            \fromto_2{}& (\swapt_{\bb{2}\times b}\times\pid_{b_1})\seqq(\pid_{b}\times \hqT{c})\seqq(\swapt_{\bb{2}\times b}\times \pid_{b_3}) \tag{by \cref{lem:hqt0}} \\
            \fromto_2{}& (\swapt_{\bb{2}\times (b_4+\bbo)}\times\pid_{b_1})\seqq((\pid_{b_4}+\pid_{\bbo})\times \hqT{c})\seqq(\swapt_{(b_4+\bbo)\times \bb{2}}\times \pid_{b_3}) \tag{by strictness} \\
            \fromto_2{}& (\swapt_{\bb{2}\times (b_4+\bbo)}\times\pid_{b_1})\seqq((\pid_{b_4}\times \hqT{c}+\hqT{c}))\seqq(\swapt_{(b_4+\bbo)\times \bb{2}}\times \pid_{b_3})  \tag{by distributivity and unitality} \\
            \fromto_2{}& (\swapt_{\bb{2}\times (b_4+\bbo)}\times\pid_{b_1})\seqq(((\swapt_{b_4\times \bb{2}}\times \pid_{b_1})\seqq\hqT{\pid_{b_4}\times c}\seqq(\swapt_{\bb{2}\times b_4}\times \pid_{b_3}))+\hqT{c})\seqq \\
            & \quad (\swapt_{(b_4+\bbo)\times \bb{2}}\times \pid_{b_3}) \tag{by \cref{lem:hqt0}} \\
            \fromto_2{}& (\pid_{b_4\times b_1}+\swapp_{b_1+(b_4\times b_1)}+\pid_{b_1})\seqq(\hqT{\pid_{b_4}\times c}+\hqT{c})\seqq (\pid_{b_4\times b_1}+\swapp_{(b_4\times b_1)+b_1}+\pid_{b_1}) \tag{by completeness of $\Pi$} \\
            \fromto_2{}& (\pid_{b_4\times b_1}+\swapp_{b_1+(b_4\times b_1)}+\pid_{b_1})\seqq(\cT_{HQ}[\underbrace{c+c+\cdots+c}_{\hdim(b_4)}]+\hqT{c})\seqq \\
            & \quad (\pid_{b_4\times b_1}+\swapp_{(b_4\times b_1)+b_1}+\pid_{b_1}) \tag{by inductive hypothesis} \\
            \fromto_2{}& \cT_{HQ}[\underbrace{c+c+\cdots+c}_{\hdim(b_4)+1}] = \cT_{HQ}[\underbrace{c+c+\cdots+c}_{\hdim(b)}].
        \end{align*}
    \end{itemize}
\end{proof}

The following lemma is a stronger version of \cref{lem:ht2}. While we can prove the final conclusion using \cref{lem:ht2}, establishing this lemma will greatly simplify our presentation.

\begin{lemma}\label{lem:hqt2}
    Let $c$ be a term of \QPiLang{}. When omitting the isomorphisms $\assocrp$, $\assoclp$, $\unitep$, $\unitip$, $\assocrt$, $\assoclt$, $\unitet$, $\unitit$, $\dist$, $\factor$, $\absorb$, and $\factorz$ used for type conversion, we have that
    \[\pid_{\hdim(c)\bbo}+\hT{c} \fromto_2{} (\swapp_{\hdim(c)\bbo+\bbo}+\pid)\seqq(\pid_{\bbo}+\hqT{c})\seqq(\swapp_{\bbo+\hdim(c)\bbo}+\pid)\]
    is a level-2 combinator of \HPiLang{}. 
\end{lemma}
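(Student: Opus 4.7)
The plan is to proceed by structural induction on $c$, mirroring the inductive definitions of $\cT_H$ and $\cT_{HQ}$. The shared pattern on both sides is the one in \cref{eq:h2}: moving a single $\bbo$ past some block of auxiliary space while a ``$\mathit{hxh}$-like'' operation acts. The whole lemma can be viewed as the claim that the particular $\bbo$ used by $\hT{c}$ as auxiliary $(-1)$-space can be exchanged for any one of the $\hdim(c)$ $\bbo$'s provided by the qubit-block used by $\hqT{c}$.

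For the base case $c=(-1)$ we have $\hdim(c)=1$, so $\hT{(-1)} = \hqT{(-1)} = \hadamard\seqq\swapp\seqq\hadamard = \mathit{hxh}$, and the target equation is exactly an instance of \cref{eq:h2}. For the remaining primitive isomorphisms $\mathit{iso}\neq(-1)$, both $\hT{\mathit{iso}}$ and $\hqT{\mathit{iso}}$ are of the form $\pid_{\text{aux}} + \mathit{iso}$, so the goal reduces to a permutation identity of pure $\Pi$-structure, provable by completeness of $\Pi$ together with naturality of $\swapp$. The $c = c_1\seqq c_2$ case follows by pasting two applications of the inductive hypothesis along their common boundary.

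The two genuinely substantial inductive steps are for $+$ and $\times$. For $c = c_1+c_2$, I would unfold both $\hT{c_1+c_2}$ and $\hqT{c_1+c_2}$ using their definitions from \cref{fig:def_ht} and \cref{def:qt} respectively, apply the inductive hypothesis separately to the $c_1$-block and the $c_2$-block, and then reconcile the two different ``twist patterns'' — the binary-tree-shaped swaps inserted by $\cT_H$ on one side, versus the single $\swapp_{b+\hdim(c_i)\bbo}$ swaps inserted by $\cT_{HQ}$ on the other. The matching consists of a sequence of permutation rearrangements that, since they involve only $\Pi$-primitives, can be discharged by completeness of $\Pi$; the nontrivial content (the equational passage of $\bbo$'s through $\hT{c_i}$ or $\hqT{c_i}$) has already been contributed by the IH. For $c = c_1\times c_2$, I would first apply \cref{lem:hqt1} to rewrite both $\hT{c_1\times c_2}$ and $\hqT{c_1\times c_2}$ (after expanding $c_1\times c_2$ via $\swapt$ and $\pid\times(\cdot)$ as in the definitions) into iterated sums $\cT_H[c_1+\cdots+c_1]$ and $\cT_{HQ}[c_1+\cdots+c_1]$, then invoke the $+$ case of the current induction to finish.

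The main obstacle will be the $c_1+c_2$ case: checking that the positional bookkeeping of the auxiliary $\bbo$'s on both sides really lines up. Concretely, in the RHS of the goal the single $\bbo$ used by $\hT{c}$ has been swapped to the very front, so when one splits $\hqT{c_1+c_2}$ into its $c_1$- and $c_2$-components via the IH, the ``promoted'' $\bbo$ must be routed over the other summand's block in exactly the way prescribed by $\cT_{HQ}$; getting the permutations to match without invoking any equation beyond the inductive hypothesis, \cref{lem:ht_assoc_p}, and completeness of $\Pi$ is the delicate part. A graphical string-diagram rendering (as in the body of the paper) is likely the clearest way to present this bookkeeping, after which the verification is essentially a rig-category rearrangement.
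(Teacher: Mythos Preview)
Your proposal is correct and follows essentially the same approach as the paper: structural induction on $c$, with the $(-1)$ case reducing to \cref{eq:h2}, the other isomorphisms and $\seqq$ cases being routine, the $+$ case handled by applying the inductive hypothesis to each summand and reconciling the different swap patterns via $\Pi$-completeness, and the $\times$ case reduced via \cref{lem:hqt1} to the $+$ case. The paper organises the $\times$ step slightly differently (it first isolates the special case $\pid_b\times c$, then handles the general case via \cref{lem:hqt0} and naturality of $\swapt$ rather than going straight to sums), but the content is the same.
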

\begin{proof}
    \ifSHOWPROOF%
    We prove it by induction on the structure of $c$.
    \begin{itemize}
        \item $c = (-1)$. \begin{align*}
            & (\swapp_{\hdim((-1))\bbo+\bbo}+\pid)\seqq(\pid_{\bbo}+\hqT{(-1)})\seqq(\swapp_{\bbo+\hdim((-1))\bbo}+\pid) \\
            \fromto_2{}& \scalebox{\scaleratio}{\begin{quantikz}[row sep=3mm,column sep=2mm]
                \lstick[2,brackets=none]{$+$\hspace{-.65cm}}\push{\bbo}& \permute{2,1} & & \permute{2,1} &\\
                \lstick[2,brackets=none]{$+$\hspace{-.65cm}}\push{\bbo}& & \gate[2]{\hadamard\seqq\swapp\seqq\hadamard}& &\\
                & & & &
            \end{quantikz}} \\
            \fromto_2{}& \scalebox{\scaleratio}{\begin{quantikz}[row sep=3mm,column sep=2mm]
                \lstick[2,brackets=none]{$+$\hspace{-.65cm}}\push{\bbo}& &\\
                \lstick[2,brackets=none]{$+$\hspace{-.65cm}}\push{\bbo}& \gate[2]{\hadamard\seqq\swapp\seqq\hadamard} &\\
                & &
            \end{quantikz}} \tag{by \cref{eq:h3}} \\
            \fromto_2{}& \pid_{\hdim((-1))\bbo}+\hT{(-1)}.
        \end{align*}
        \item $c$ is another isomorphism. 
        \begin{align*}
            & (\swapp_{\hdim(c)\bbo+\bbo}+\pid)\seqq(\pid_{\bbo}+\hqT{c})\seqq(\swapp_{\bbo+\hdim(c)\bbo}+\pid) \\
            \fromto_2{}&\qquad \scalebox{\scaleratio}{\begin{quantikz}[row sep=3mm,column sep=2mm]
                \lstick[2,brackets=none]{$+$\hspace{-.65cm}}\push{\mathllap{\hdim(c)}\bbo}& \permute{2,1} & & \permute{2,1} &\\
                \lstick[2,brackets=none]{$+$\hspace{-.65cm}}\push{\bbo}& & & &\\
                & & \gate{c} & &
            \end{quantikz}} \\
            \fromto_2{}& \qquad  \scalebox{\scaleratio}{\begin{quantikz}[row sep=3mm,column sep=2mm]
                \lstick[2,brackets=none]{$+$\hspace{-.65cm}}\push{\mathllap{\hdim(c)}\bbo}& &\\
                \lstick[2,brackets=none]{$+$\hspace{-.65cm}}\push{\bbo}& &\\
                &\gate{c} &
            \end{quantikz}} \tag{by naturality of $\swapp$} \\
            \fromto_2{}& \pid_{\hdim(c)\bbo} + \hT{c}.
        \end{align*}
        \item $c = c_1\seqq c_2$. It is easy to see.
        \item $c = c_1+c_2$ and $c_1:b_1\fromto b_3, c_2:b_2\fromto b_4$. \begin{align*}
            &(\swapp_{\hdim(c_1+c_2)\bbo+\bbo}+\pid)\seqq(\pid_{\bbo}+\hqT{c_1+c_2})\seqq(\swapp_{\bbo+\hdim(c_1+c_2)\bbo}+\pid) \\
            \fromto_2{}& \qquad \scalebox{\scaleratio}{\begin{quantikz}[row sep=3mm,column sep=2mm]
                \lstick[2,brackets=none]{$+$\hspace{-.65cm}}\push{\mathllap{\hdim(b_1)}\bbo}& \permute{2,3,1} & & & & \permute{3,1,2}&\\
                \lstick[2,brackets=none]{$+$\hspace{-.65cm}}\push{\mathllap{\hdim(b_2)}\bbo}&& & \gate[2]{\hqT{c_1}} & &&\\
                \lstick[2,brackets=none]{$+$\hspace{-.65cm}}\push{\bbo}&& \permute{2,1} & & \permute{2,1} &&\\
                \lstick[2,brackets=none]{$+$\hspace{-.65cm}}\push{b_1}&& & \gate[2]{\hqT{c_2}}& &&\\
                \push{b_2}&& & & && 
            \end{quantikz}} \\
            \fromto_2{}& \qquad \scalebox{\scaleratio}{\begin{quantikz}[row sep=3mm,column sep=2mm]
                \lstick[2,brackets=none]{$+$\hspace{-.65cm}}\push{\mathllap{\hdim(b_1)}\bbo}& \permute{2,3,1} & \permute{2,1} & & \permute{2,1} & \permute{3,1,2}&\\
                \lstick[2,brackets=none]{$+$\hspace{-.65cm}}\push{\mathllap{\hdim(b_2)}\bbo}&& & \gate[2]{\hT{c_1}} & &&\\
                \lstick[2,brackets=none]{$+$\hspace{-.65cm}}\push{\bbo}&& \permute{2,1} & & \permute{2,1} &&\\
                \lstick[2,brackets=none]{$+$\hspace{-.65cm}}\push{b_1}&& & \gate[2]{\hqT{c_2}}& &&\\
                \push{b_2}&& & & && 
            \end{quantikz}} \tag{by inductive hypothesis of $c_1$} \\
            \fromto_2{}& \qquad \scalebox{\scaleratio}{\begin{quantikz}[row sep=3mm,column sep=2mm]
                \lstick[2,brackets=none]{$+$\hspace{-.65cm}}\push{\mathllap{\hdim(b_1)}\bbo}& \permute{1,3,2} & & & & \permute{1,3,2}&\\
                \lstick[2,brackets=none]{$+$\hspace{-.65cm}}\push{\mathllap{\hdim(b_2)}\bbo}&& & \gate[2]{\hT{c_1}} & &&\\
                \lstick[2,brackets=none]{$+$\hspace{-.65cm}}\push{\bbo}&& \permute{2,1} & & \permute{2,1} &&\\
                \lstick[2,brackets=none]{$+$\hspace{-.65cm}}\push{b_1}&& & \gate[2]{\hqT{c_2}}& &&\\
                \push{b_2}&& & & && 
            \end{quantikz}} \tag{by completeness of $\Pi$} \\
            \fromto_2{}& \qquad \scalebox{\scaleratio}{\begin{quantikz}[row sep=3mm,column sep=2mm]
                \lstick[2,brackets=none]{$+$\hspace{-.65cm}}\push{\mathllap{\hdim(b_1)}\bbo}&& \permute{1,3,2} & & & & \permute{1,3,2}&\\
                \lstick[2,brackets=none]{$+$\hspace{-.65cm}}\push{\mathllap{\hdim(b_2)}\bbo}&&& & & &&\\
                \lstick[2,brackets=none]{$+$\hspace{-.65cm}}\push{\bbo}&\gate[2]{\hT{c_1}} && \permute{2,1} & & \permute{2,1} &&\\
                \lstick[2,brackets=none]{$+$\hspace{-.65cm}}\push{b_1}&&& & \gate[2]{\hqT{c_2}}& &&\\
                \push{b_2}&&& & & && 
            \end{quantikz}} \tag{by bifunctoriality of $+$} \\
            \fromto_2{}& \qquad \scalebox{\scaleratio}{\begin{quantikz}[row sep=3mm,column sep=2mm]
                \lstick[2,brackets=none]{$+$\hspace{-.65cm}}\push{\mathllap{\hdim(b_1)}\bbo}&&  & & & &&&&&\\
                \lstick[2,brackets=none]{$+$\hspace{-.65cm}}\push{\mathllap{\hdim(b_2)}\bbo}& & & & \permute{2,1} & & & & \permute{2,1}& &\\
                \lstick[2,brackets=none]{$+$\hspace{-.65cm}}\push{\bbo}&\gate[2]{\hT{c_1}} && \permute{2,1} & & \permute{2,1} & & \permute{2,1} & & \permute{2,1}&\\
                \lstick[2,brackets=none]{$+$\hspace{-.65cm}}\push{b_1}& && & & & \gate[2]{\hqT{c_2}}& &&&\\
                \push{b_2}&&& & & && &&&
            \end{quantikz}} \tag{by completeness of $\Pi$ and naturality of $\swapp$} \\
            \fromto_2{}& \qquad \scalebox{\scaleratio}{\begin{quantikz}[row sep=3mm,column sep=2mm]
                \lstick[2,brackets=none]{$+$\hspace{-.65cm}}\push{\mathllap{\hdim(b_1)}\bbo}&&  & & & &&&\\
                \lstick[2,brackets=none]{$+$\hspace{-.65cm}}\push{\mathllap{\hdim(b_2)}\bbo}& & & & \permute{2,1} & & \permute{2,1}& &\\
                \lstick[2,brackets=none]{$+$\hspace{-.65cm}}\push{\bbo}&\gate[2]{\hT{c_1}} && \permute{2,1} & & & & \permute{2,1}&\\
                \lstick[2,brackets=none]{$+$\hspace{-.65cm}}\push{b_1}& && & & \gate[2]{\hT{c_2}} &&&\\
                \push{b_2}&&& & & && &&
            \end{quantikz}} \tag{by inductive hypothesis of $c_2$} \\
            \fromto_2{}& \qquad \scalebox{\scaleratio}{\begin{quantikz}[row sep=3mm,column sep=2mm]
                \lstick[2,brackets=none]{$+$\hspace{-.65cm}}\push{\mathllap{\hdim(b_1)}\bbo}&&  & & &&\\
                \lstick[2,brackets=none]{$+$\hspace{-.65cm}}\push{\mathllap{\hdim(b_2)}\bbo}& & & & & &\\
                \lstick[2,brackets=none]{$+$\hspace{-.65cm}}\push{\bbo}&\gate[2]{\hT{c_1}} && \permute{2,1} & & \permute{2,1}&\\
                \lstick[2,brackets=none]{$+$\hspace{-.65cm}}\push{b_1}& && & \gate[2]{\hT{c_2}} &&\\
                \push{b_2}&&& & & &&
            \end{quantikz}} \tag{by naturality of $\swapp$} \\
            \fromto_2{}& \pid_{\hdim(c_1+c_2)\bbo}+\hT{c_1+c_2}. \tag{by strictness and $\hdim(c_1+c_2) = \hdim(b_1)+\hdim(b_2)$}
        \end{align*}
        \item $c = c_1\times c_2$ and $c_1:b_1\fromto b_3, c_2:b_2\fromto b_4$. \begin{itemize}[leftmargin=10pt]
            \item For the basic case of $c = \pid_{b_1}\times c_2$, we have \begin{align*}
                &\pid_{\hdim(\pid_{b_1}\times c_2)\bbo}+\hT{\pid_{b_1}\times c_2} \\
                \fromto_2{}&  \pid_{\hdim(\pid_{b_1}\times c_2)\bbo}+\cT_{H}[\underbrace{c_2+\cdots+c_2}_{\hdim(b_1)}] \tag{by \cref{lem:hqt1}} \\
                \fromto_2{}& \cdots \\
                \fromto_2{}& (\swapp_{\hdim(b_1\times b_2)\bbo+\bbo}+\pid_{b_1\times b_2})\seqq(\pid_{\bbo}+\cT_{HQ}[\underbrace{c_2+\cdots+c_2}_{\hdim(b_1)}])\seqq(\swapp_{\bbo+\hdim(b_1\times b_4)\bbo}+\pid_{b_1\times b_4})  \tag{by inductive hypothesis} \\
                \fromto_2{}& (\swapp_{\hdim(b_1\times b_2)\bbo+\bbo}+\pid_{b_1\times b_2})\seqq(\pid_{\bbo}+\hqT{\pid_{b_1}\times c_2})\seqq(\swapp_{\bbo+\hdim(b_1\times b_4)\bbo}+\pid_{b_1\times b_4}). \tag{by \cref{lem:hqt1}}
            \end{align*}
            \item For the general case of $c = c_1\times c_2$.
            {\begin{align*}
                & \pid_{\dim(c_1\times c_2)\bbo}+\hT{c_1\times c_2} \\
                \fromto_2{}&  \pid_{\hdim(c_1\times c_2)\bbo}+ ((\pid_{\bbo}+\swapt_{b_1\times b_2})\seqq\hT{\pid_{b_2}\times c_1}\seqq(\pid_{\bbo}+\swapt_{b_2\times b_3})\seqq\hT{\pid_{b_3}\times c_2}) \\
                \fromto_2{}&  (\swapt_{b_1\times b_2}+ (\pid_{\bbo}+\swapt_{b_1\times b_2}))\seqq(\pid_{\hdim(\pid_{b_2}\times c_1)\bbo}+ \hT{\pid_{b_2}\times c_1})\seqq \\
                & \quad (\swapt_{b_2\times b_3}+ (\pid_{\bbo}+\swapt_{b_2\times b_3}))\seqq(\pid_{\hdim(\pid_{b_3}\times c_2)\bbo}+ \hT{\pid_{b_3}\times c_2}) \tag{by bifunctoriality of $+$, strictness and naturality of $\swapt$}\\
                \fromto_2{}& (\swapt_{b_1\times b_2}+ (\pid_{\bbo}+\swapt_{b_1\times b_2}))\seqq \\
                & \quad (\swapp_{\hdim(b_2\times b_1)\bbo+\bbo}+\pid_{b_2\times b_1})\seqq(\pid_{\bbo}+\hqT{\pid_{b_2}\times c_1})\seqq(\swapp_{\bbo+\hdim(b_2\times b_3)\bbo}+\pid_{b_2\times b_3})\seqq \\
                & \quad (\swapt_{b_2\times b_3}+ (\pid_{\bbo}+\swapt_{b_2\times b_3}))\seqq \\
                & \quad (\swapp_{\hdim(b_3\times b_2)\bbo+\bbo}+\pid_{b_3\times b_2})\seqq(\pid_{\bbo}+\hqT{\pid_{b_3}\times c_2})\seqq(\swapp_{\bbo+\hdim(b_3\times b_4)\bbo}+\pid_{b_3\times b_4}) \tag{by inductive hypothesis on the case of $\pid_{b}\times c$} \\
                \fromto_2{}&  (\swapt_{b_1\times b_2}+ (\pid_{\bbo}+\swapt_{b_1\times b_2}))\seqq(\swapp_{\hdim(b_2\times b_1)\bbo+\bbo}+\pid_{b_2\times b_1})\seqq \\
                & \quad (\pid_{\bbo}+((\swapt_{\bb{2}\times b_2}\times \pid_{b_1})\seqq(\pid_{b_2}\times \hqT{c_1})\seqq(\swapt_{b_2\times \bb{2}}\times \pid_{b_3})))\seqq\\
                & \quad (\swapp_{\bbo+\hdim(b_2\times b_3)\bbo}+\pid_{b_2\times b_3})\seqq(\swapt_{b_2\times b_3}+ (\pid_{\bbo}+\swapt_{b_2\times b_3}))\seqq\\
                & \quad (\swapp_{\hdim(b_3\times b_2)\bbo+\bbo}+\pid_{b_3\times b_2})\seqq \\
                & \quad (\pid_{\bbo}+((\swapt_{\bb{2}\times b_3}\times \pid_{b_2})\seqq(\pid_{b_3}\times \hqT{c_2})\seqq(\swapt_{b_3\times \bb{2}}\times \pid_{b_4})))\seqq\\
                & \quad (\swapp_{\bbo+\hdim(b_3\times b_4)\bbo}+\pid_{b_3\times b_4}) \tag{by \cref{lem:hqt0}} \\
                \fromto_2{}&  (\swapt_{b_1\times b_2}+ (\pid_{\bbo}+\swapt_{b_1\times b_2}))\seqq(\swapp_{\hdim(b_2\times b_1)\bbo+\bbo}+\pid_{b_2\times b_1})\seqq \\
                & \quad (\pid_{\bbo}+((\swapt_{\bb{2}\times b_2}\times \pid_{b_1})\seqq(\swapt_{b_2\times (\bb{2}\times b_1)})\seqq(\hqT{c_1}\times \pid_{b_2})\seqq \\
                & \qquad (\swapt_{(\bb{2}\times b_3)\times b_2})\seqq(\swapt_{b_2\times \bb{2}}\times \pid_{b_3})))\seqq\\
                & \quad (\swapp_{\bbo+\hdim(b_2\times b_3)\bbo}+\pid_{b_2\times b_3})\seqq(\swapt_{b_2\times b_3}+ (\pid_{\bbo}+\swapt_{b_2\times b_3}))\seqq\\
                & \quad (\swapp_{\hdim(b_3\times b_2)\bbo+\bbo}+\pid_{b_3\times b_2})\seqq \\
                & \quad (\pid_{\bbo}+((\swapt_{\bb{2}\times b_3}\times \pid_{b_2})\seqq(\swapt_{b_3\times (\bb{2}\times b_2)})\seqq(\hqT{c_2}\times \pid_{b_3})\seqq \\
                & \qquad (\swapt_{(\bb{2}\times b_4)\times b_3})\seqq(\swapt_{b_3\times \bb{2}}\times \pid_{b_4})))\seqq\\
                & \quad (\swapp_{\bbo+\hdim(b_3\times b_4)\bbo}+\pid_{b_3\times b_4}) \tag{by naturality of $\swapt$} \\
                \fromto_2{}&  (\swapt_{b_1\times b_2}+ (\pid_{\bbo}+\swapt_{b_1\times b_2}))\seqq(\swapp_{\hdim(b_2\times b_1)\bbo+\bbo}+\pid_{b_2\times b_1})\seqq \\
                & \quad (\pid_{\bbo}+((\swapt_{\bb{2}\times b_2}\times \pid_{b_1})\seqq(\swapt_{b_2\times (\bb{2}\times b_1)})\seqq(\hqT{c_1}\times \pid_{b_2})))\seqq\\
                & \quad (\pid_{\bbo}+ (\swapt_{b_3\times b_2} + \swapt_{b_3\times b_2}))\seqq \\
                & \quad (\pid_{\bbo}+((\hqT{c_2}\times \pid_{b_3})\seqq(\swapt_{(\bb{2}\times b_4)\times b_3})\seqq(\swapt_{b_3\times \bb{2}}\times \pid_{b_4})))\seqq\\
                & \quad (\swapp_{\bbo+\hdim(b_3\times b_4)\bbo}+\pid_{b_3\times b_4}) \tag{by completeness of $\Pi$ and bifunctoriality of $+$} \\
                \fromto_2{}&  (\swapt_{b_1\times b_2}+ (\pid_{\bbo}+\swapt_{b_1\times b_2}))\seqq(\swapp_{\hdim(b_2\times b_1)\bbo+\bbo}+\pid_{b_2\times b_1})\seqq \\
                & \quad (\pid_{\bbo}+((\swapt_{\bb{2}\times b_2}\times \pid_{b_1})\seqq(\swapt_{b_2\times (\bb{2}\times b_1)})))\seqq\\
                & \quad (\pid_{\bbo}+ ((\hqT{c_1}\times \pid_{b_2})\seqq(\swapt_{b_3\times b_2} + \swapt_{b_3\times b_2})\seqq(\hqT{c_2}\times \pid_{b_3})\seqq\\
                & \qquad (\swapt_{b_4\times b_3} + \swapt_{b_4\times b_3})))\seqq \\
                & \quad (\pid_{\bbo}+((\swapt_{b_3\times b_4} + \swapt_{b_3\times b_4})\seqq(\swapt_{(\bb{2}\times b_4)\times b_3})\seqq(\swapt_{b_3\times \bb{2}}\times \pid_{b_4})))\seqq\\
                & \quad (\swapp_{\bbo+\hdim(b_3\times b_4)\bbo}+\pid_{b_3\times b_4}) \tag{by bifunctoriality of $+$ and naturality of $\swapt$} \\
                \fromto_2{}&  (\swapt_{b_1\times b_2}+ (\pid_{\bbo}+\swapt_{b_1\times b_2}))\seqq(\swapp_{\hdim(b_2\times b_1)\bbo+\bbo}+\pid_{b_2\times b_1})\seqq \\
                & \quad (\pid_{\bbo}+((\swapt_{\bb{2}\times b_2}\times \pid_{b_1})\seqq(\swapt_{b_2\times (\bb{2}\times b_1)})))\seqq\\
                & \quad (\pid_{\bbo}+ \hqT{c_1\times c_2})\seqq \\
                & \quad (\pid_{\bbo}+((\swapt_{b_3\times b_4} + \swapt_{b_3\times b_4})\seqq(\swapt_{(\bb{2}\times b_4)\times b_3})\seqq(\swapt_{b_3\times \bb{2}}\times \pid_{b_4})))\seqq\\
                & \quad (\swapp_{\bbo+\hdim(b_3\times b_4)\bbo}+\pid_{b_3\times b_4}) \tag{by definition of $\cT_{HQ}$} \\
                \fromto_2{}&  (\swapp_{(b_1\times b_2)+\bbo} +\pid_{b_1\times b_2})\seqq (\pid_{\bbo}+ \hqT{c_1\times c_2})\seqq (\swapp_{\bbo+\hdim(b_3\times b_4)\bbo}+\pid_{b_3\times b_4}) \tag{by completeness of $\Pi$} \\
                \fromto_2{}&  (\swapp_{\hdim(c_1\times c_2)\bbo+\bbo} +\pid_{b_1\times b_2})\seqq (\pid_{\bbo}+ \hqT{c_1\times c_2})\seqq (\swapp_{\bbo+\hdim(c_1\times c_2)\bbo}+\pid_{b_3\times b_4}). \tag{by strictness}
            \end{align*}}
        \end{itemize} 
    \end{itemize}
    \else%
    Set \verb|\SHOWPROOFtrue| to compile the proof.
    \fi
\end{proof}

The following lemma is extracted from \[\hT{c_1}\times \hT{c_2}\fromto_2{} (\hT{c_1}\times \pid_{\bbo+b_2})\seqq(\pid_{\bbo+b_3}\times \hT{c_2}) \fromto_2{} (\pid_{\bbo+b_1}\times \hT{c_2})\seqq(\hT{c_1}\times \pid_{\bbo+b_4}).\]

\begin{lemma}\label{lem:ht_times4}
    Let $c_1:b_1\fromto b_3, c_2:b_2\fromto b_4$ be two terms of \QPiLang{}. When omitting the isomorphisms $\assocrp$, $\assoclp$, $\unitep$, $\unitip$, $\assocrt$, $\assoclt$, $\unitet$, $\unitit$, $\dist$, $\factor$, $\absorb$, and $\factorz$ used for type conversion, we have
    \begin{equation*}
        \adjustbox{max width=\linewidth}{$\begin{aligned}
            &(\swapp_{b_2+b_1}+\swapt_{b_2\times b_1})\seqq(\pid_{b_1}+\hT{c_1}\times \pid_{b_2})\seqq (\swapp_{b_3+b_2}+\swapt_{b_3\times b_2})\seqq(\pid_{b_2}+\hT{c_2}\times \pid_{b_3}) \\
            \fromto_2{}& (\pid_{b_2}+\hT{c_2}\times \pid_{b_1})\seqq (\swapp_{b_2+b_1}+\swapt_{b_2\times b_1})\seqq (\pid_{b_1}+\hT{c_1}\times \pid_{b_4})\seqq(\swapp_{b_3+b_4}+\swapt_{b_3\times b_4}).
        \end{aligned}$}
    \end{equation*}
\end{lemma}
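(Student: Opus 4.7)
The identity is, in essence, a bifunctoriality assertion for $\times$ in disguise: both sides are two realisations of $\hT{c_1}\times\hT{c_2}$, differing only in whether the $c_1$-block or the $c_2$-block acts first, and with the swap morphisms $\swapp+\swapt$ present to merge the two redundant $\bbo$-workspaces (one from each $\hT{c_i}$) into the single $\bbo$-workspace carried by each side of the equation. The plan is to exploit this observation directly, using \cref{lem:hqt2} to make the symmetry transparent.

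The strategy is to convert both sides to $\hqT{\cdot}$-form first. More precisely, the plan is to apply \cref{lem:hqt2} to each occurrence of $\hT{c_i}$ appearing in the LHS and RHS. This lemma rewrites $\pid_{\hdim(c_i)\bbo}+\hT{c_i}$ as a $\swapp$-conjugate of $\pid_\bbo+\hqT{c_i}$, so after tensoring with a $\pid_{b_j}$ and rearranging, every block of the form $\pid+\hT{c_i}\times\pid_{b_j}$ on either side becomes (up to additional $\swapp$ and $\swapt$ morphisms that reorganise the auxiliary $\bbo$'s) an expression of the shape $\pid+\hqT{c_i}\times\pid_{b_j}$. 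The virtue is that $\hqT{c_i}$ has the symmetric tensor type $\bb{2}\times b_i \fromto \bb{2}\times b_i$, so $\times$ interacts cleanly with it.

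After this conversion, both sides reduce to expressions built from $\hqT{c_1}\times\pid_{\bb{2}\times b_2}$ and $\pid_{\bb{2}\times b_3}\times\hqT{c_2}$ (and permutations), and one then applies the ordinary bifunctoriality of $\times$ in the form
\[
(\hqT{c_1}\times\pid)\seqq(\pid\times\hqT{c_2}) \fromto_2 \hqT{c_1}\times\hqT{c_2} \fromto_2 (\pid\times\hqT{c_2})\seqq(\hqT{c_1}\times\pid).
\]
The LHS of the lemma rewrites via the first decomposition and the RHS via the second; both therefore equal the same $\hqT{c_1}\times\hqT{c_2}$-term surrounded by identical permutation-shells. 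Matching those permutation shells is a purely classical-$\Pi$ task and is discharged by completeness of $\Pi$, as in the proofs of \cref{lem:d3_fixed,lem:ht2}. Finally, an appeal to the cancellative condition \cref{eq:h3} removes the residual $\pid_\bbo$-summand that arises because the combined expression carries one more auxiliary $\bbo$ than either side of the stated equation.

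The principal obstacle is bookkeeping: both sides contain roughly a dozen occurrences of $\swapp$, $\swapt$, $\dist$, $\factor$, $\assocrp$, $\assoclp$ (the latter five nominally suppressed), and one must align all of them carefully so that the permutation shells on the two sides match verbatim. This is the reason \cref{lem:hqt2} was developed in advance with the symmetric $\bb{2}\times b$ form, rather than attempting a direct manipulation on $\hT{c_i}$ itself; the symmetry of $\bb{2}\times b$ removes the need for an asymmetric $(\swapp+\pid)$-conjugation to be tracked through the bifunctoriality step. Once the $\hqT{\cdot}$-conversion is performed, the remaining work is essentially tracking a permutation identity that, by completeness of $\Pi$, holds automatically.
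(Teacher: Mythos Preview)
Your proposal is correct and shares the paper's high-level strategy: both recognise that the lemma is the bifunctoriality identity
\[
(\hT{c_1}\times\pid_{\bbo+b_2})\seqq(\pid_{\bbo+b_3}\times\hT{c_2}) \fromto_2 (\pid_{\bbo+b_1}\times\hT{c_2})\seqq(\hT{c_1}\times\pid_{\bbo+b_4})
\]
repackaged, with the surrounding $\swapp+\swapt$ morphisms absorbing the extra $\bbo$'s and the cancellative rule \cref{eq:h3} removing the residual summand at the end.

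The route differs in one respect. The paper works directly with $\hT{c_i}$ throughout: it first derives three auxiliary commutation identities (labelled (\ref{eq:ht_times1})--(\ref{eq:ht_times3}) in the appendix) from the weaker \cref{lem:ht2}, then expands both bifunctoriality decompositions of $\hT{c_1}\times\hT{c_2}$ via naturality and distributivity, and pushes the resulting expressions through a long chain of $\Pi$-rewrites until they meet. You instead propose converting to $\hqT{c_i}$ up front via the stronger \cref{lem:hqt2}, applying bifunctoriality to $\hqT{c_1}\times\hqT{c_2}$, and converting back. Your route exploits the symmetric $\bb{2}\times b$ typing of $\hqT{c}$ to make the tensor manipulations cleaner, at the cost of an extra conversion step at each end; the paper's route avoids that conversion but must instead establish and repeatedly invoke the auxiliary equations. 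Both routes terminate in the same place and require comparable bookkeeping; neither is obviously shorter.
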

\begin{proof}
    \ifSHOWPROOF%
    We first prove some equations required.
    Consider the following graphical proof.
        \begin{align*}
            &\pid_{\bbo}+((\hT{c_1}+\pid_{b_2})\seqq(\pid_{\bbo}+\swapp_{b_3+b_2})\seqq(\hT{c_2}+\pid_{b_3})\seqq(\pid_{\bbo}+\swapp_{b_4+b_3})) \\
            \fromto_2{}& \scalebox{\scaleratio}{\begin{quantikz}[row sep=3mm,column sep=2mm]
                \lstick[2,brackets=none]{$+$\hspace{-.65cm}}\push{\bbo}& & & & &\\
                \lstick[2,brackets=none]{$+$\hspace{-.65cm}}\push{\bbo}& \gate[2]{\hT{c_1}} & & \gate[2]{\hT{c_2}} & &\\
                \lstick[2,brackets=none]{$+$\hspace{-.65cm}}\push{b_1}& & \permute{2,1} & & \permute{2,1} &\\
                \push{b_2}& & & & &
            \end{quantikz}} \\
            \fromto_2{}& \scalebox{\scaleratio}{\begin{quantikz}[row sep=3mm,column sep=2mm]
                \lstick[2,brackets=none]{$+$\hspace{-.65cm}}\push{\bbo}& & \permute{2,1} & & \permute{2,1} &\\
                \lstick[2,brackets=none]{$+$\hspace{-.65cm}}\push{\bbo}& \gate[2]{\hT{c_1}} & & \gate[2]{\hT{c_2}} & &\\
                \lstick[2,brackets=none]{$+$\hspace{-.65cm}}\push{b_1}& & \permute{2,1} & & \permute{2,1} &\\
                \push{b_2}& & & & &
            \end{quantikz}} \tag{by \cref{lem:ht2}} \\
            \fromto_2{}& \scalebox{\scaleratio}{\begin{quantikz}[row sep=3mm,column sep=2mm]
                \lstick[2,brackets=none]{$+$\hspace{-.65cm}}\push{\bbo}& \permute{2,1} & & \permute{2,1} & &\\
                \lstick[2,brackets=none]{$+$\hspace{-.65cm}}\push{\bbo}& & \gate[2]{\hT{c_2}} & & \gate[2]{\hT{c_1}} &\\
                \lstick[2,brackets=none]{$+$\hspace{-.65cm}}\push{b_1} & \permute{2,1} & & \permute{2,1} & &\\
                \push{b_2}& & & & &
            \end{quantikz}} \tag{by bifunctoriality of $+$} \\
            \fromto_2{}& \scalebox{\scaleratio}{\begin{quantikz}[row sep=3mm,column sep=2mm]
                \lstick[2,brackets=none]{$+$\hspace{-.65cm}}\push{\bbo}& & & & &\\
                \lstick[2,brackets=none]{$+$\hspace{-.65cm}}\push{\bbo}& & \gate[2]{\hT{c_2}} & & \gate[2]{\hT{c_1}} &\\
                \lstick[2,brackets=none]{$+$\hspace{-.65cm}}\push{b_1} & \permute{2,1} & & \permute{2,1} & &\\
                \push{b_2}& & & & &
            \end{quantikz}} \tag{by \cref{lem:ht2}} \\
            \fromto_2{}& \pid_{\bbo}+((\pid_{\bbo}+\swapp_{b_1+b_2})\seqq(\hT{c_2}+\pid_{b_1})\seqq(\pid_{\bbo}+\swapp_{b_4+b_3}) \seqq (\hT{c_1}+\pid_{b_4})),
        \end{align*}
        then, by \cref{eq:h3}, we have
        \begin{equation}\label{eq:ht_times1}
            \begin{aligned}
                &(\hT{c_1}+\pid_{b_2})\seqq(\pid_{\bbo}+\swapp_{b_3+b_2})\seqq(\hT{c_2}+\pid_{b_3})\seqq(\pid_{\bbo}+\swapp_{b_4+b_3}) \\
                \fromto_2{}& (\pid_{\bbo}+\swapp_{b_1+b_2})\seqq(\hT{c_2}+\pid_{b_1})\seqq(\pid_{\bbo}+\swapp_{b_4+b_1}) \seqq (\hT{c_1}+\pid_{b_4}).
            \end{aligned} 
        \end{equation}
        By \cref{lem:ht2}, we also have
        \begin{align*}
            &\rbra*{(\swapp_{\bbo+\bbo}+\pid_{b_1})\seqq(\pid_{\bbo}+\hT{c_1})\seqq(\swapp_{\bbo+\bbo}+\pid_{b_3})} \times \pid_{b_2}
            \fromto_2{} \rbra*{\pid_{\bbo}+\hT{c_1}} \times \pid_{b_2} \\
            \Rightarrow{}& (\swapp_{\bbo+\bbo}\times \pid_{b_2}+\pid_{b_1}\times \pid_{b_2})\seqq (\pid_{\bbo}\times \pid_{b_2}+\hT{c_1}\times \pid_{b_2})\seqq(\swapp_{\bbo+\bbo}\times \pid_{b_2}+\pid_{b_3}\times \pid_{b_2}) \\
            & \quad \fromto_2{} \pid_{\bbo}\times \pid_{b_2}+\hT{c_1}\times \pid_{b_2}, \tag{by bifunctoriality of $\times$ and distributivity}
        \end{align*}
        then, by completeness of $\Pi$, we obtain \begin{equation}\label{eq:ht_times2}
            (\swapp_{b_2+b_2}+\pid_{b_1\times b_2})\seqq(\pid_{b_2}+\hT{c_1}\times \pid_{b_2}) \seqq (\swapp_{b_2+b_2}+\pid_{b_3\times b_2})  \fromto_2{} \pid_{b_2}+\hT{c_1}\times \pid_{b_2}.
        \end{equation}
        Consider the following graphical proof.
        \begin{align*}
            &\pid_{b_2}+\big((\pid_{\bbo+b_1}+\hT{c_1}\times \pid_{b_2})\seqq(\pid_{\bbo}+\swapp_{b_1+b_2}+\pid_{b_3\times b_2})\seqq \\
            & \quad (\hT{c_2}+\pid_{b_1}+\pid_{b_3\times b_2})\seqq(\pid_{\bbo}+\swapp_{b_4+b_1}+\pid_{b_3\times b_2})\big) \\
            \fromto_2{}& \quad \scalebox{\scaleratio}{\begin{quantikz}[row sep=3mm,column sep=2mm]
                \lstick[2,brackets=none]{$+$\hspace{-.65cm}}\push{b_2}& & & & &\\
                \lstick[2,brackets=none]{$+$\hspace{-.65cm}}\push{\bbo}& & & \gate[2]{\hT{c_2}} & &\\
                \lstick[2,brackets=none]{$+$\hspace{-.65cm}}\push{b_1}& & \permute{2,1} & & \permute{2,1} &\\
                \lstick[2,brackets=none]{$+$\hspace{-.65cm}}\push{\mathllap{\bbo\times{}}b_2}& \gate[2]{\hT{c_1}\times \pid_{b_2}} & & & &\\
                \push{\mathllap{b_1\times{}} b_2}& & & & &
            \end{quantikz}} \\
            \fromto_2{}&\quad \scalebox{\scaleratio}{\begin{quantikz}[row sep=3mm,column sep=2mm]
                \lstick[2,brackets=none]{$+$\hspace{-.65cm}}\push{b_2}&\permute{3,2,1} & & \permute{3,2,1}& & & &\\
                \lstick[2,brackets=none]{$+$\hspace{-.65cm}}\push{\bbo}&&& & & \gate[2]{\hT{c_2}} & &\\
                \lstick[2,brackets=none]{$+$\hspace{-.65cm}}\push{b_1}&&& & \permute{2,1} & & \permute{2,1} &\\
                \lstick[2,brackets=none]{$+$\hspace{-.65cm}}\push{\mathllap{\bbo\times{}}b_2}&& \gate[2]{\hT{c_1}\times \pid_{b_2}} && & & &\\
                \push{\mathllap{b_1\times{}} b_2}&&& & & & &
            \end{quantikz}} \tag{by naturality of $\swapp$}\\
            \fromto_2{}&\quad \scalebox{\scaleratio}{\begin{quantikz}[row sep=3mm,column sep=2mm]
                \lstick[2,brackets=none]{$+$\hspace{-.65cm}}\push{b_2}&\permute{3,2,1} & & & & \permute{3,2,1}& & & &\\
                \lstick[2,brackets=none]{$+$\hspace{-.65cm}}\push{\bbo}&&&&& & & \gate[2]{\hT{c_2}} & &\\
                \lstick[2,brackets=none]{$+$\hspace{-.65cm}}\push{b_1}&&\permute{2,1}&&\permute{2,1} & & \permute{2,1} & & \permute{2,1} &\\
                \lstick[2,brackets=none]{$+$\hspace{-.65cm}}\push{\mathllap{\bbo\times{}}b_2}&&& \gate[2]{\hT{c_1}\times \pid_{b_2}} &&& & & &\\
                \push{\mathllap{b_1\times{}} b_2}&&&&& & & & &
            \end{quantikz}} \tag{by \cref{eq:ht_times2}} \\
            \fromto_2{}&\quad \scalebox{\scaleratio}{\begin{quantikz}[row sep=3mm,column sep=2mm]
                \lstick[2,brackets=none]{$+$\hspace{-.65cm}}\push{b_2}&&&&\permute{3,2,1} & & & & \permute{3,2,1}&\\
                \lstick[2,brackets=none]{$+$\hspace{-.65cm}}\push{\bbo}& & \gate[2]{\hT{c_2}} & &&&&&&\\
                \lstick[2,brackets=none]{$+$\hspace{-.65cm}}\push{b_1}& \permute{2,1} & & \permute{2,1} & &\permute{2,1}& &\permute{2,1} &&\\
                \lstick[2,brackets=none]{$+$\hspace{-.65cm}}\push{\mathllap{\bbo\times{}}b_2}& & & &&& \gate[2]{\hT{c_1}\times \pid_{b_2}} &&&\\
                \push{\mathllap{b_1\times{}} b_2}&&&&& & & & &
            \end{quantikz}} \tag{by bifunctoriality of $+$} \\
            \fromto_2{}&\quad \scalebox{\scaleratio}{\begin{quantikz}[row sep=3mm,column sep=2mm]
                \lstick[2,brackets=none]{$+$\hspace{-.65cm}}\push{b_2}&&&&\permute{3,2,1} & & \permute{3,2,1}&\\
                \lstick[2,brackets=none]{$+$\hspace{-.65cm}}\push{\bbo}& & \gate[2]{\hT{c_2}} & &&&&\\
                \lstick[2,brackets=none]{$+$\hspace{-.65cm}}\push{b_1}& \permute{2,1} & & \permute{2,1} & & &&\\
                \lstick[2,brackets=none]{$+$\hspace{-.65cm}}\push{\mathllap{\bbo\times{}}b_2}& & & && \gate[2]{\hT{c_1}\times \pid_{b_2}} &&\\
                \push{\mathllap{b_1\times{}} b_2}&&&&& & &
            \end{quantikz}} \tag{by \cref{eq:ht_times2}} \\
            \fromto_2{}&\quad \scalebox{\scaleratio}{\begin{quantikz}[row sep=3mm,column sep=2mm]
                \lstick[2,brackets=none]{$+$\hspace{-.65cm}}\push{b_2}&&&&&\\
                \lstick[2,brackets=none]{$+$\hspace{-.65cm}}\push{\bbo}& & \gate[2]{\hT{c_2}} & &&\\
                \lstick[2,brackets=none]{$+$\hspace{-.65cm}}\push{b_1}& \permute{2,1} & & \permute{2,1} & &\\
                \lstick[2,brackets=none]{$+$\hspace{-.65cm}}\push{\mathllap{\bbo\times{}}b_2}& & & & \gate[2]{\hT{c_1}\times \pid_{b_2}} &\\
                \push{\mathllap{b_1\times{}} b_2}&&&&&
            \end{quantikz}} \tag{by naturality of $\swapp$} \\
            \fromto_2{}& \pid_{b_2}+\big((\pid_{\bbo}+\swapp_{b_1+b_2}+\pid_{b_1\times b_2})\seqq(\hT{c_2}+\pid_{b_2}+\pid_{b_1\times b_2})\seqq\\
            & \quad (\pid_{\bbo}+\swapp_{b_4+b_1}+\pid_{b_1\times b_2})\seqq(\pid_{\bbo+b_1}+\hT{c_1}\times \pid_{b_2})\big).
        \end{align*}
        By \cref{eq:h3}, we have 
        \begin{equation}\label{eq:ht_times3}
            \begin{aligned}
                &(\pid_{\bbo+b_1}+\hT{c_1}\times \pid_{b_2})\seqq(\pid_{\bbo}+\swapp_{b_1+b_2}+\pid_{b_3\times b_2})\seqq(\hT{c_2}+\pid_{b_1}+\pid_{b_3\times b_2})\seqq \\
                & \quad (\pid_{\bbo}+\swapp_{b_4+b_1}+\pid_{b_3\times b_2}) \\
                \fromto_2{}& (\pid_{\bbo}+\swapp_{b_1+b_2}+\pid_{b_1\times b_2})\seqq(\hT{c_2}+\pid_{b_1}+\pid_{b_1\times b_2})\seqq (\pid_{\bbo}+\swapp_{b_4+b_1}+\pid_{b_1\times b_2})\seqq\\
                & \quad (\pid_{\bbo+b_1}+\hT{c_1}\times \pid_{b_2})
            \end{aligned}
        \end{equation}
        Thus, \begin{align*}
            &(\hT{c_1}\times \pid_{\bbo+b_2})\seqq(\pid_{\bbo+b_3}\times \hT{c_2}) \\
            \fromto_2{}& \blue{\swapt_{(\bbo+b_1)\times (\bbo+b_2)}\seqq(\pid_{\bbo+b_2}\times \hT{c_1})\seqq\swapt_{(\bbo+b_2)\times (\bbo+b_3)}}\seqq(\pid_{\bbo+b_3}\times \hT{c_2}) \tag{by naturality of $\swapt$} \\
            \fromto_2{}& \swapt_{(\bbo+b_1)\times (\bbo+b_2)}\seqq(\blue{\hT{c_1}+\pid_{b_2}\times \hT{c_1}})\seqq\swapt_{(\bbo+b_2)\times (\bbo+b_3)}\seqq(\blue{\hT{c_2}+\pid_{b_3}\times \hT{c_2}}) \tag{by distributivity} \\
            \fromto_2{}& \swapt_{(\bbo+b_1)\times (\bbo+b_2)}\seqq  (\hT{c_1}+(\blue{\swapt_{b_2\times (\bbo+b_1)}\seqq(\hT{c_1}\times \pid_{b_2})\seqq(\swapt_{(\bbo+b_3)\times b_2})}))\seqq \\
            & \quad \swapt_{(\bbo+b_2)\times (\bbo+b_3)}\seqq(\hT{c_2}+\pid_{b_3}\times \hT{c_2}) \tag{by naturality of $\swapt$} \\
            \fromto_2{}& \swapt_{(\bbo+b_1)\times (\bbo+b_2)}\seqq \blue{(\pid_{\bbo+b_1}+\swapt_{b_2\times (\bbo+b_1)})\seqq (\hT{c_1}+(\hT{c_1}\times \pid_{b_2}))\seqq} \\
            & \blue{(\pid_{\bbo+b_3}+\swapt_{(\bbo+b_3)\times b_2})}\seqq \swapt_{(\bbo+b_2)\times (\bbo+b_3)}\seqq(\hT{c_2}+\pid_{b_3\times(\bbo+b_2)})\seqq(\pid_{\bbo+b_4}+\pid_{b_3}\times \hT{c_2}) \tag{by bifunctoriality of $+$} \\
            \fromto_2{}& \swapt_{(\bbo+b_1)\times (\bbo+b_2)}\seqq (\pid_{\bbo+b_1}+\swapt_{b_2\times (\bbo+b_1)})\seqq (\hT{c_1}+(\hT{c_1}\times \pid_{b_2}))\seqq \\
            & \quad (\pid_{\bbo+b_3}+\swapt_{(\bbo+b_3)\times b_2})\seqq \swapt_{(\bbo+b_2)\times (\bbo+b_3)}\seqq \\
            & \quad \blue{(\pid_{\bbo+b_2}+\swapt_{b_3\times (\bbo+b_2)})\seqq(\hT{c_2}+\swapt_{(\bbo+b_2)\times b_3})}\seqq (\pid_{\bbo+b_4}+\pid_{b_3}\times \hT{c_2}) \tag{by naturality of $\swapt$ and bifunctoriality of $+$} \\
            \fromto_2{}& \swapt_{(\bbo+b_1)\times (\bbo+b_2)}\seqq (\pid_{\bbo+b_1}+\swapt_{b_2\times (\bbo+b_1)})\seqq (\hT{c_1}+(\hT{c_1}\times \pid_{b_2}))\seqq \\
            & \quad (\pid_{\bbo+b_3}+\swapt_{(\bbo+b_3)\times b_2})\seqq\swapt_{(\bbo+b_2)\times (\bbo+b_3)}\seqq(\pid_{\bbo+b_2}+\swapt_{b_3\times (\bbo+b_2)})\seqq\\
            & \quad \blue{(\pid_{\bbo+b_2+b_3}+\swapt_{b_2\times b_3})\seqq} \blue{(\hT{c_2}+((\pid_{b_3}+\swapt_{b_3\times b_2})\seqq\swapt_{(\bbo+b_2)\times b_3}))}\seqq \\
            & \quad (\pid_{\bbo+b_4}+\pid_{b_3}\times \hT{c_2}) \tag{by naturality of $\swapt$ and bifunctoriality of $+$} \\
            \fromto_2{}& \swapt_{(\bbo+b_1)\times (\bbo+b_2)}\seqq (\pid_{\bbo+b_1}+\swapt_{b_2\times (\bbo+b_1)})\seqq (\hT{c_1}+(\hT{c_1}\times \pid_{b_2}))\seqq \\
            &\quad \blue{(\pid_{\bbo}+\swapp_{b_3+b_2}+\pid_{b_3\times b_2})} \seqq
            (\hT{c_2}+((\pid_{b_3}+\swapt_{b_3\times b_2})\seqq\swapt_{(\bbo+b_2)\times b_3})) \\
            &\quad \seqq (\pid_{\bbo+b_4}+\pid_{b_3}\times \hT{c_2}) \tag{by completeness of $\Pi$} \\
            \fromto_2{}& \swapt_{(\bbo+b_1)\times (\bbo+b_2)}\seqq (\pid_{\bbo+b_1}+\swapt_{b_2\times (\bbo+b_1)})\seqq (\hT{c_1}+\pid_{(\bbo+b_1)\times b_2})\seqq \\
            &\quad \blue{(\pid_{\bbo+b_3}+(\hT{c_1}\times \pid_{b_2}))}\seqq(\pid_{\bbo}+\swapp_{b_3+b_2}+\pid_{b_3\times b_2}) \seqq \blue{(\hT{c_2}+\pid_{b_3}+\pid_{b_3\times b_2})\seqq}\\
            &\quad \blue{(\pid_{\bbo+b_4}+((\pid_{b_3}+\swapt_{b_3\times b_2})\seqq\swapt_{(\bbo+b_2)\times b_3}))}\seqq (\pid_{\bbo+b_4}+\pid_{b_3}\times \hT{c_2}) \tag{by bifunctoriality of $+$} \\
            \fromto_2{}& \swapt_{(\bbo+b_1)\times (\bbo+b_2)}\seqq (\pid_{\bbo+b_1}+\swapt_{b_2\times (\bbo+b_1)})\seqq (\hT{c_1}+\pid_{(\bbo+b_1)\times b_2})\seqq \\
            &\quad \blue{(\pid_{\bbo}+\swapp_{b_3+b_2}+\pid_{b_3\times b_2}) \seqq (\hT{c_2}+\pid_{b_3}+\pid_{b_3\times b_2})\seqq (\pid_{\bbo}+\swapp_{b_4+b_3}+\pid_{b_3\times b_2})\seqq}\\
            &\quad \blue{(\pid_{\bbo+b_3}+(\hT{c_1}\times \pid_{b_2}))\seqq(\pid_{\bbo}+\swapp_{b_3+b_4}+\pid_{b_3\times b_2})}\seqq \\
            & \quad (\pid_{\bbo+b_4}+((\pid_{b_3}+\swapt_{b_3\times b_2})\seqq\swapt_{(\bbo+b_2)\times b_3}))\seqq (\pid_{\bbo+b_4}+\pid_{b_3}\times \hT{c_2}) \tag{by \cref{eq:ht_times3} and strictness}
        \end{align*}
        In the same way, by the naturality of $\swapt$, we can get \begin{align*}
            &(\pid_{\bbo+b_1}\times \hT{c_2})\seqq(\hT{c_1}\times \pid_{\bbo+b_4}) \\
            \fromto_2{}& \swapt_{(\bbo+b_1)\times (\bbo+b_2)}\seqq (\hT{c_2}\times \pid_{\bbo+b_1})\seqq (\pid_{\bbo+b_4}\times \hT{c_1}) \seqq \swapt_{(\bbo+b_4)\times (\bbo+b_3)} \\
            \fromto_2{}& \swapt_{(\bbo+b_1)\times (\bbo+b_2)}\seqq \\
            & \quad \swapt_{(\bbo+b_2)\times (\bbo+b_1)}\seqq (\pid_{\bbo+b_2}+\swapt_{b_1\times (\bbo+b_2)})\seqq (\hT{c_2}+\pid_{(\bbo+b_2)\times b_1})\seqq \\
            &\quad (\pid_{\bbo}+\swapp_{b_4+b_1}+\pid_{b_4\times b_1}) \seqq (\hT{c_1}+\pid_{b_4}+\pid_{b_4\times b_1})\seqq (\pid_{\bbo}+\swapp_{b_3+b_4}+\pid_{b_4\times b_1})\seqq\\
            &\quad (\pid_{\bbo+b_4}+(\hT{c_2}\times \pid_{b_1}))\seqq(\pid_{\bbo}+\swapp_{b_4+b_3}+\pid_{b_4\times b_1})\seqq \\
            & \quad (\pid_{\bbo+b_3}+((\pid_{b_4}+\swapt_{b_4\times b_1})\seqq\swapt_{(\bbo+b_1)\times b_4}))\seqq (\pid_{\bbo+b_3}+\pid_{b_4}\times \hT{c_1})\seqq \\
            &\quad \swapt_{(\bbo+b_4)\times (\bbo+b_3)} \\
            \fromto_2{}& (\pid_{\bbo+b_2}+\swapt_{b_1\times (\bbo+b_2)})\seqq (\hT{c_2}+\pid_{(\bbo+b_2)\times b_1})\seqq \\
            &\quad (\pid_{\bbo}+\swapp_{b_4+b_1}+\pid_{b_4\times b_1}) \seqq (\hT{c_1}+\pid_{b_4}+\pid_{b_4\times b_1})\seqq (\pid_{\bbo}+\swapp_{b_3+b_4}+\pid_{b_4\times b_1})\seqq\\
            &\quad (\pid_{\bbo+b_4}+(\hT{c_2}\times \pid_{b_1}))\seqq(\pid_{\bbo}+\swapp_{b_4+b_3}+\pid_{b_4\times b_1})\seqq \\
            & \quad (\pid_{\bbo+b_3}+((\pid_{b_4}+\swapt_{b_4\times b_1})\seqq\swapt_{(\bbo+b_1)\times b_4}))\seqq (\pid_{\bbo+b_3}+\pid_{b_4}\times \hT{c_1})\seqq \swapt_{(\bbo+b_4)\times (\bbo+b_3)}.
        \end{align*}
        Combined with 
        \[\hT{c_1}\times \hT{c_2}\fromto_2{} (\hT{c_1}\times \pid_{\bbo+b_2})\seqq(\pid_{\bbo+b_3}\times \hT{c_2}) \fromto_2{} (\pid_{\bbo+b_1}\times \hT{c_2})\seqq(\hT{c_1}\times \pid_{\bbo+b_4}),\]
        we have 
        \begin{align*}
            & \swapt_{(\bbo+b_1)\times (\bbo+b_2)}\seqq (\pid_{\bbo+b_1}+\swapt_{b_2\times (\bbo+b_1)})\seqq (\hT{c_1}+\pid_{(\bbo+b_1)\times b_2})\seqq \\
            &\quad (\pid_{\bbo}+\swapp_{b_3+b_2}+\pid_{b_3\times b_2}) \seqq (\hT{c_2}+\pid_{b_3}+\pid_{b_3\times b_2})\seqq (\pid_{\bbo}+\swapp_{b_4+b_3}+\pid_{b_3\times b_2})\seqq\\
            &\quad (\pid_{\bbo+b_3}+(\hT{c_1}\times \pid_{b_2}))\seqq(\pid_{\bbo}+\swapp_{b_3+b_4}+\pid_{b_3\times b_2})\seqq \\
            & \quad (\pid_{\bbo+b_4}+((\pid_{b_3}+\swapt_{b_3\times b_2})\seqq\swapt_{(\bbo+b_2)\times b_3}))\seqq (\pid_{\bbo+b_4}+\pid_{b_3}\times \hT{c_2}) \\
            &\fromto_2{} (\pid_{\bbo+b_2}+\swapt_{b_1\times (\bbo+b_2)})\seqq (\hT{c_2}+\pid_{(\bbo+b_2)\times b_1})\seqq \\
            &\quad (\pid_{\bbo}+\swapp_{b_4+b_1}+\pid_{b_4\times b_1}) \seqq (\hT{c_1}+\pid_{b_4}+\pid_{b_4\times b_1})\seqq (\pid_{\bbo}+\swapp_{b_3+b_4}+\pid_{b_4\times b_1})\seqq\\
            &\quad (\pid_{\bbo+b_4}+(\hT{c_2}\times \pid_{b_1}))\seqq(\pid_{\bbo}+\swapp_{b_4+b_3}+\pid_{b_4\times b_1})\seqq \\
            & \quad (\pid_{\bbo+b_3}+((\pid_{b_4}+\swapt_{b_4\times b_1})\seqq\swapt_{(\bbo+b_1)\times b_4}))\seqq (\pid_{\bbo+b_3}+\pid_{b_4}\times \hT{c_1})\seqq \swapt_{(\bbo+b_4)\times (\bbo+b_3)} \\
            \Rightarrow{}& \swapt_{(\bbo+b_1)\times (\bbo+b_2)}\seqq (\pid_{\bbo+b_1}+\swapt_{b_2\times (\bbo+b_1)})\seqq\\
            &\quad \blue{(((\hT{c_1}+\pid_{b_2})\seqq (\pid_{\bbo}+\swapp_{b_3+b_2}) \seqq (\hT{c_2}+\pid_{b_3})\seqq (\pid_{\bbo}+\swapp_{b_4+b_3}))+\pid_{b_3\times b_2})}\seqq\\
            &\quad (\pid_{\bbo+b_3}+(\hT{c_1}\times \pid_{b_2}))\seqq(\pid_{\bbo}+\swapp_{b_3+b_4}+\pid_{b_3\times b_2})\seqq \\
            & \quad (\pid_{\bbo+b_4}+((\pid_{b_3}+\swapt_{b_3\times b_2})\seqq\swapt_{(\bbo+b_2)\times b_3}))\seqq (\pid_{\bbo+b_4}+\pid_{b_3}\times \hT{c_2}) \\
            &\fromto_2{} (\pid_{\bbo+b_2}+\swapt_{b_1\times (\bbo+b_2)})\seqq (\hT{c_2}+\pid_{(\bbo+b_2)\times b_1})\seqq \\
            &\quad (\pid_{\bbo}+\swapp_{b_4+b_1}+\pid_{b_4\times b_1}) \seqq (\hT{c_1}+\pid_{b_4}+\pid_{b_4\times b_1})\seqq (\pid_{\bbo}+\swapp_{b_3+b_4}+\pid_{b_4\times b_1})\seqq\\
            &\quad (\pid_{\bbo+b_4}+(\hT{c_2}\times \pid_{b_1}))\seqq(\pid_{\bbo}+\swapp_{b_4+b_3}+\pid_{b_4\times b_1})\seqq \\
            & \quad (\pid_{\bbo+b_3}+((\pid_{b_4}+\swapt_{b_4\times b_1})\seqq\swapt_{(\bbo+b_1)\times b_4}))\seqq (\pid_{\bbo+b_3}+\pid_{b_4}\times \hT{c_1})\seqq \swapt_{(\bbo+b_4)\times (\bbo+b_3)} \tag{by bifunctoriality of $+$ and strictness} \\
            \Rightarrow{}& \swapt_{(\bbo+b_1)\times (\bbo+b_2)}\seqq (\pid_{\bbo+b_1}+\swapt_{b_2\times (\bbo+b_1)})\seqq\\
            &\quad \blue{(((\pid_{\bbo}+\swapp_{b_1+b_2}) \seqq (\hT{c_2}+\pid_{b_1})\seqq (\pid_{\bbo}+\swapp_{b_4+b_1})\seqq(\hT{c_1}+\pid_{b_4}))+\pid_{b_3\times b_2})}\seqq\\
            &\quad (\pid_{\bbo+b_3}+(\hT{c_1}\times \pid_{b_2}))\seqq(\pid_{\bbo}+\swapp_{b_3+b_4}+\pid_{b_3\times b_2})\seqq \\
            & \quad (\pid_{\bbo+b_4}+((\pid_{b_3}+\swapt_{b_3\times b_2})\seqq\swapt_{(\bbo+b_2)\times b_3}))\seqq (\pid_{\bbo+b_4}+\pid_{b_3}\times \hT{c_2}) \\
            &\fromto_2{} (\pid_{\bbo+b_2}+\swapt_{b_1\times (\bbo+b_2)})\seqq (\hT{c_2}+\pid_{(\bbo+b_2)\times b_1})\seqq \\
            &\quad (\pid_{\bbo}+\swapp_{b_4+b_1}+\pid_{b_4\times b_1}) \seqq (\hT{c_1}+\pid_{b_4}+\pid_{b_4\times b_1})\seqq (\pid_{\bbo}+\swapp_{b_3+b_4}+\pid_{b_4\times b_1})\seqq\\
            &\quad (\pid_{\bbo+b_4}+(\hT{c_2}\times \pid_{b_1}))\seqq(\pid_{\bbo}+\swapp_{b_4+b_3}+\pid_{b_4\times b_1})\seqq \\
            & \quad (\pid_{\bbo+b_3}+((\pid_{b_4}+\swapt_{b_4\times b_1})\seqq\swapt_{(\bbo+b_1)\times b_4}))\seqq (\pid_{\bbo+b_3}+\pid_{b_4}\times \hT{c_1})\seqq \swapt_{(\bbo+b_4)\times (\bbo+b_3)} \tag{by \cref{eq:ht_times2}} \\
            \Rightarrow{}& \blue{(\pid_{\bbo+b_2}+\swapt_{(\bbo+b_2)\times b_1})\seqq}\swapt_{(\bbo+b_1)\times (\bbo+b_2)}\seqq (\pid_{\bbo+b_1}+\swapt_{b_2\times (\bbo+b_1)})\seqq\\
            & \quad \blue{(\pid_{\bbo}+\swapp_{b_1+b_2}+\pid_{b_3\times b_2})} \seqq\\
            &\quad (((\hT{c_2}+\pid_{b_1})\seqq (\pid_{\bbo}+\swapp_{b_4+b_1})\seqq(\hT{c_1}+\pid_{b_4}))+\pid_{b_3\times b_2})\seqq\\
            &\quad (\pid_{\bbo+b_3}+(\hT{c_1}\times \pid_{b_2}))\seqq(\pid_{\bbo}+\swapp_{b_3+b_4}+\pid_{b_3\times b_2})\seqq \\
            & \quad (\pid_{\bbo+b_4}+((\pid_{b_3}+\swapt_{b_3\times b_2})\seqq\swapt_{(\bbo+b_2)\times b_3}))\seqq (\pid_{\bbo+b_4}+\pid_{b_3}\times \hT{c_2}) \\
            &\fromto_2{} (\hT{c_2}+\pid_{(\bbo+b_2)\times b_1})\seqq \\
            &\quad (\pid_{\bbo}+\swapp_{b_4+b_1}+\pid_{b_4\times b_1}) \seqq (\hT{c_1}+\pid_{b_4}+\pid_{b_4\times b_1})\seqq (\pid_{\bbo}+\swapp_{b_3+b_4}+\pid_{b_4\times b_1})\seqq\\
            &\quad (\pid_{\bbo+b_4}+(\hT{c_2}\times \pid_{b_1}))\seqq(\pid_{\bbo}+\swapp_{b_4+b_3}+\pid_{b_4\times b_1})\seqq \\
            & \quad (\pid_{\bbo+b_3}+((\pid_{b_4}+\swapt_{b_4\times b_1})\seqq\swapt_{(\bbo+b_1)\times b_4}))\seqq (\pid_{\bbo+b_3}+\pid_{b_4}\times \hT{c_1})\seqq \swapt_{(\bbo+b_4)\times (\bbo+b_3)} \tag{by completeness of $\Pi$ and bifunctoriality of $+$} \\
            \Rightarrow{}& \blue{(\pid_{\bbo+b_2+b_1}+\swapt_{b_2\times b_1})} \seqq\\
            &\quad (((\hT{c_2}+\pid_{b_1})\seqq (\pid_{\bbo}+\swapp_{b_4+b_1})\seqq(\hT{c_1}+\pid_{b_4}))+\pid_{b_3\times b_2})\seqq\\
            &\quad (\pid_{\bbo+b_3}+(\hT{c_1}\times \pid_{b_2}))\seqq(\pid_{\bbo}+\swapp_{b_3+b_4}+\pid_{b_3\times b_2})\seqq \\
            & \quad (\pid_{\bbo+b_4}+((\pid_{b_3}+\swapt_{b_3\times b_2})\seqq\swapt_{(\bbo+b_2)\times b_3}))\seqq (\pid_{\bbo+b_4}+\pid_{b_3}\times \hT{c_2}) \\
            &\fromto_2{} (\hT{c_2}+\pid_{(\bbo+b_2)\times b_1})\seqq \\
            &\quad (\pid_{\bbo}+\swapp_{b_4+b_1}+\pid_{b_4\times b_1}) \seqq (\hT{c_1}+\pid_{b_4}+\pid_{b_4\times b_1})\seqq (\pid_{\bbo}+\swapp_{b_3+b_4}+\pid_{b_4\times b_1})\seqq\\
            &\quad (\pid_{\bbo+b_4}+(\hT{c_2}\times \pid_{b_1}))\seqq(\pid_{\bbo}+\swapp_{b_4+b_3}+\pid_{b_4\times b_1})\seqq \\
            & \quad (\pid_{\bbo+b_3}+((\pid_{b_4}+\swapt_{b_4\times b_1})\seqq\swapt_{(\bbo+b_1)\times b_4}))\seqq (\pid_{\bbo+b_3}+\pid_{b_4}\times \hT{c_1})\seqq \swapt_{(\bbo+b_4)\times (\bbo+b_3)} \tag{by completeness of $\Pi$} \\
            \Rightarrow{}& \blue{(\pid_{\bbo+b_2}+\pid_{b_1}+\swapt_{b_2\times b_1}) \seqq (\hT{c_2}+\pid_{b_1}+\pid_{b_1\times b_2})\seqq}\\
            &\quad (((\pid_{\bbo}+\swapp_{b_4+b_1})\seqq(\hT{c_1}+\pid_{b_4}))+\pid_{b_3\times b_2})\seqq\\
            &\quad (\pid_{\bbo+b_3}+(\hT{c_1}\times \pid_{b_2}))\seqq(\pid_{\bbo}+\swapp_{b_3+b_4}+\pid_{b_3\times b_2})\seqq \\
            & \quad (\pid_{\bbo+b_4}+((\pid_{b_3}+\swapt_{b_3\times b_2})\seqq\swapt_{(\bbo+b_2)\times b_3}))\seqq (\pid_{\bbo+b_4}+\pid_{b_3}\times \hT{c_2}) \\
            &\fromto_2{} (\blue{\hT{c_2}+\pid_{b_1}+\pid_{b_2\times b_1}})\seqq \\
            &\quad (\pid_{\bbo}+\swapp_{b_4+b_1}+\pid_{b_4\times b_1}) \seqq (\hT{c_1}+\pid_{b_4}+\pid_{b_4\times b_1})\seqq (\pid_{\bbo}+\swapp_{b_3+b_4}+\pid_{b_4\times b_1})\seqq\\
            &\quad (\pid_{\bbo+b_4}+(\hT{c_2}\times \pid_{b_1}))\seqq(\pid_{\bbo}+\swapp_{b_4+b_3}+\pid_{b_4\times b_1})\seqq \\
            & \quad (\pid_{\bbo+b_3}+((\pid_{b_4}+\swapt_{b_4\times b_1})\seqq\swapt_{(\bbo+b_1)\times b_4}))\seqq (\pid_{\bbo+b_3}+\pid_{b_4}\times \hT{c_1})\seqq \swapt_{(\bbo+b_4)\times (\bbo+b_3)} \tag{by bifunctoriality of $+$ and strictness} \\
            \Rightarrow{}& \blue{(\hT{c_2}+\pid_{b_1}+\pid_{b_2\times b_1})\seqq(\pid_{\bbo+b_4}+\pid_{b_1}+\swapt_{b_2\times b_1})}\seqq\\
            &\quad (((\pid_{\bbo}+\swapp_{b_4+b_1})\seqq(\hT{c_1}+\pid_{b_4}))+\pid_{b_3\times b_2})\seqq\\
            &\quad (\pid_{\bbo+b_3}+(\hT{c_1}\times \pid_{b_2}))\seqq(\pid_{\bbo}+\swapp_{b_3+b_4}+\pid_{b_3\times b_2})\seqq \\
            & \quad (\pid_{\bbo+b_4}+((\pid_{b_3}+\swapt_{b_3\times b_2})\seqq\swapt_{(\bbo+b_2)\times b_3}))\seqq (\pid_{\bbo+b_4}+\pid_{b_3}\times \hT{c_2}) \\
            &\fromto_2{} (\hT{c_2}+\pid_{b_1}+\pid_{b_2\times b_1})\seqq \\
            &\quad (\pid_{\bbo}+\swapp_{b_4+b_1}+\pid_{b_4\times b_1}) \seqq (\hT{c_1}+\pid_{b_4}+\pid_{b_4\times b_1})\seqq (\pid_{\bbo}+\swapp_{b_3+b_4}+\pid_{b_4\times b_1})\seqq\\
            &\quad (\pid_{\bbo+b_4}+(\hT{c_2}\times \pid_{b_1}))\seqq(\pid_{\bbo}+\swapp_{b_4+b_3}+\pid_{b_4\times b_1})\seqq \\
            & \quad (\pid_{\bbo+b_3}+((\pid_{b_4}+\swapt_{b_4\times b_1})\seqq\swapt_{(\bbo+b_1)\times b_4}))\seqq (\pid_{\bbo+b_3}+\pid_{b_4}\times \hT{c_1})\seqq \swapt_{(\bbo+b_4)\times (\bbo+b_3)} \tag{by bifunctoriality of $+$ and naturality of $\swapt$} \\
            \Rightarrow{}& (\pid_{\bbo+b_4}+\pid_{b_1}+\swapt_{b_2\times b_1})\seqq (((\pid_{\bbo}+\swapp_{b_4+b_1})\seqq(\hT{c_1}+\pid_{b_4}))+\pid_{b_3\times b_2})\seqq\\
            &\quad (\pid_{\bbo+b_3}+(\hT{c_1}\times \pid_{b_2}))\seqq(\pid_{\bbo}+\swapp_{b_3+b_4}+\pid_{b_3\times b_2})\seqq \\
            & \quad (\pid_{\bbo+b_4}+((\pid_{b_3}+\swapt_{b_3\times b_2})\seqq\swapt_{(\bbo+b_2)\times b_3}))\seqq (\pid_{\bbo+b_4}+\pid_{b_3}\times \hT{c_2}) \\
            &\fromto_2{} (\pid_{\bbo}+\swapp_{b_4+b_1}+\pid_{b_4\times b_1}) \seqq (\hT{c_1}+\pid_{b_4}+\pid_{b_4\times b_1})\seqq (\pid_{\bbo}+\swapp_{b_3+b_4}+\pid_{b_4\times b_1})\seqq\\
            &\quad (\pid_{\bbo+b_4}+(\hT{c_2}\times \pid_{b_1}))\seqq(\pid_{\bbo}+\swapp_{b_4+b_3}+\pid_{b_4\times b_1})\seqq \\
            & \quad (\pid_{\bbo+b_3}+((\pid_{b_4}+\swapt_{b_4\times b_1})\seqq\swapt_{(\bbo+b_1)\times b_4}))\seqq (\pid_{\bbo+b_3}+\pid_{b_4}\times \hT{c_1})\seqq \swapt_{(\bbo+b_4)\times (\bbo+b_3)} \tag{by reversibility} \\
            \Rightarrow{}& \blue{(((\pid_{\bbo}+\swapp_{b_4+b_1})\seqq(\hT{c_1}+\pid_{b_4}))+\pid_{b_2\times b_1})\seqq(\pid_{\bbo+b_4}+\pid_{b_1}+\swapt_{b_2\times b_1})}\seqq\\
            &\quad (\pid_{\bbo+b_3}+(\hT{c_1}\times \pid_{b_2}))\seqq(\pid_{\bbo}+\swapp_{b_3+b_4}+\pid_{b_3\times b_2})\seqq \\
            & \quad (\pid_{\bbo+b_4}+((\pid_{b_3}+\swapt_{b_3\times b_2})\seqq\swapt_{(\bbo+b_2)\times b_3}))\seqq (\pid_{\bbo+b_4}+\pid_{b_3}\times \hT{c_2}) \\
            &\fromto_2{} (\pid_{\bbo}+\swapp_{b_4+b_1}+\pid_{b_4\times b_1}) \seqq (\hT{c_1}+\pid_{b_4}+\pid_{b_4\times b_1})\seqq (\pid_{\bbo}+\swapp_{b_3+b_4}+\pid_{b_4\times b_1})\seqq\\
            &\quad (\pid_{\bbo+b_4}+(\hT{c_2}\times \pid_{b_1}))\seqq(\pid_{\bbo}+\swapp_{b_4+b_3}+\pid_{b_4\times b_1})\seqq \\
            & \quad (\pid_{\bbo+b_3}+((\pid_{b_4}+\swapt_{b_4\times b_1})\seqq\swapt_{(\bbo+b_1)\times b_4}))\seqq (\pid_{\bbo+b_3}+\pid_{b_4}\times \hT{c_1})\seqq \swapt_{(\bbo+b_4)\times (\bbo+b_3)} \tag{by bifunctoriality of $+$, naturality of $\swapt$ and strictness} \\
            \Rightarrow{}& (\pid_{\bbo+b_4}+\pid_{b_1}+\swapt_{b_2\times b_1})\seqq (\pid_{\bbo+b_3}+(\hT{c_1}\times \pid_{b_2}))\seqq(\pid_{\bbo}+\swapp_{b_3+b_4}+\pid_{b_3\times b_2})\seqq \\
            & \quad (\pid_{\bbo+b_4}+((\pid_{b_3}+\swapt_{b_3\times b_2})\seqq\swapt_{(\bbo+b_2)\times b_3}))\seqq (\pid_{\bbo+b_4}+\pid_{b_3}\times \hT{c_2}) \\
            &\fromto_2{} (\pid_{\bbo}+\swapp_{b_3+b_4}+\pid_{b_4\times b_1})\seqq (\pid_{\bbo+b_4}+(\hT{c_2}\times \pid_{b_1}))\seqq(\pid_{\bbo}+\swapp_{b_4+b_3}+\pid_{b_4\times b_1})\seqq \\
            & \quad (\pid_{\bbo+b_3}+((\pid_{b_4}+\swapt_{b_4\times b_1})\seqq\swapt_{(\bbo+b_1)\times b_4}))\seqq (\pid_{\bbo+b_3}+\pid_{b_4}\times \hT{c_1})\seqq \swapt_{(\bbo+b_4)\times (\bbo+b_3)} \tag{by bifunctoriality of $+$ and reversibility} \\
            \Rightarrow{}& (\pid_{\bbo+b_4}+\pid_{b_1}+\swapt_{b_2\times b_1})\seqq (\pid_{\bbo+b_3}+(\hT{c_1}\times \pid_{b_2}))\seqq(\pid_{\bbo}+\swapp_{b_3+b_4}+\pid_{b_3\times b_2})\seqq \\
            & \quad (\pid_{\bbo+b_4}+((\pid_{b_3}+\swapt_{b_3\times b_2})\seqq\swapt_{(\bbo+b_2)\times b_3}))\seqq (\pid_{\bbo+b_4}+\pid_{b_3}\times \hT{c_2}) \\
            &\fromto_2{} (\pid_{\bbo}+\swapp_{b_3+b_4}+\pid_{b_4\times b_1})\seqq (\pid_{\bbo+b_4}+(\hT{c_2}\times \pid_{b_1}))\seqq(\pid_{\bbo}+\swapp_{b_4+b_3}+\pid_{b_4\times b_1})\seqq \\
            & \quad (\pid_{\bbo+b_3}+((\pid_{b_4}+\swapt_{b_4\times b_1})\seqq\swapt_{(\bbo+b_1)\times b_4}))\seqq (\pid_{\bbo+b_3}+\pid_{b_4}\times \hT{c_1})\seqq \\
            & \quad (\pid_{\bbo+b_3}+\swapt_{b_4\times(\bbo+b_3)})\seqq(\pid_{\bbo}+\swapp_{b_3+b_4}+\swapt_{b_3\times b_4})\seqq(\pid_{\bbo+b_4}+\swapt_{(\bbo+b_4)\times b_3}) \tag{by completeness of $\Pi$} \\
            \Rightarrow{}& (\pid_{\bbo+b_4}+\pid_{b_1}+\swapt_{b_2\times b_1})\seqq (\pid_{\bbo+b_3}+(\hT{c_1}\times \pid_{b_2}))\seqq(\pid_{\bbo}+\swapp_{b_3+b_4}+\pid_{b_3\times b_2})\seqq \\
            & \quad (\pid_{\bbo+b_4}+((\pid_{b_3}+\swapt_{b_3\times b_2})\seqq\swapt_{(\bbo+b_2)\times b_3}))\seqq (\pid_{\bbo+b_4}+\pid_{b_3}\times \hT{c_2}) \\
            &\fromto_2{} (\pid_{\bbo}+\swapp_{b_3+b_4}+\pid_{b_4\times b_1})\seqq (\pid_{\bbo+b_4}+(\hT{c_2}\times \pid_{b_1}))\seqq(\pid_{\bbo}+\swapp_{b_4+b_3}+\pid_{b_4\times b_1})\seqq \\
            & \quad (\pid_{\bbo+b_3}+((\pid_{b_4}+\swapt_{b_4\times b_1})))\seqq\blue{(\pid_{\bbo+b_3}+\swapt_{(\bbo+b_1)\times b_4})}\seqq (\pid_{\bbo+b_3}+\pid_{b_4}\times \hT{c_1})\seqq \\
            & \quad (\pid_{\bbo+b_3}+\swapt_{b_4\times(\bbo+b_3)})\seqq(\pid_{\bbo}+\swapp_{b_3+b_4}+\swapt_{b_3\times b_4})\seqq(\pid_{\bbo+b_4}+\swapt_{(\bbo+b_4)\times b_3}) \tag{by bifunctoriality of $+$} \\
            \Rightarrow{}& (\pid_{\bbo+b_4}+\pid_{b_1}+\swapt_{b_2\times b_1})\seqq (\pid_{\bbo+b_3}+(\hT{c_1}\times \pid_{b_2}))\seqq(\pid_{\bbo}+\swapp_{b_3+b_4}+\pid_{b_3\times b_2})\seqq \\
            & \quad (\pid_{\bbo+b_4}+((\pid_{b_3}+\swapt_{b_3\times b_2})\seqq\swapt_{(\bbo+b_2)\times b_3}))\seqq (\pid_{\bbo+b_4}+\pid_{b_3}\times \hT{c_2}) \\
            &\fromto_2{} (\pid_{\bbo}+\swapp_{b_3+b_4}+\pid_{b_4\times b_1})\seqq (\pid_{\bbo+b_4}+(\hT{c_2}\times \pid_{b_1}))\seqq(\pid_{\bbo}+\swapp_{b_4+b_3}+\pid_{b_4\times b_1})\seqq \\
            & \quad (\pid_{\bbo+b_3}+((\pid_{b_4}+\swapt_{b_4\times b_1})))\seqq\blue{(\pid_{\bbo+b_3}+\hT{c_1}\times \pid_{b_4})}\seqq \\
            & \quad (\pid_{\bbo}+\swapp_{b_3+b_4}+\swapt_{b_3\times b_4})\seqq(\pid_{\bbo+b_4}+\swapt_{(\bbo+b_4)\times b_3}) \tag{by naturality of $\swapt$} \\
            \Rightarrow{}& (\pid_{\bbo+b_4}+\pid_{b_1}+\swapt_{b_2\times b_1})\seqq (\pid_{\bbo+b_3}+(\hT{c_1}\times \pid_{b_2}))\seqq(\pid_{\bbo}+\swapp_{b_3+b_4}+\pid_{b_3\times b_2})\seqq \\
            & \quad (\pid_{\bbo+b_4}+((\pid_{b_3}+\swapt_{b_3\times b_2})))\seqq\\
            & \quad \blue{(\pid_{\bbo+b_4}+\swapt_{(\bbo+b_2)\times b_3})}\seqq (\pid_{\bbo+b_4}+\pid_{b_3}\times \hT{c_2})\blue{\seqq(\pid_{\bbo+b_4}+\swapt_{b_3\times (\bbo+b_4)})}\seqq \\
            &\fromto_2{} (\pid_{\bbo}+\swapp_{b_3+b_4}+\pid_{b_4\times b_1})\seqq (\pid_{\bbo+b_4}+(\hT{c_2}\times \pid_{b_1}))\seqq(\pid_{\bbo}+\swapp_{b_4+b_3}+\pid_{b_4\times b_1})\seqq \\
            & \quad (\pid_{\bbo+b_3}+((\pid_{b_4}+\swapt_{b_4\times b_1})))\seqq(\pid_{\bbo+b_3}+\hT{c_1}\times \pid_{b_4})\seqq (\pid_{\bbo}+\swapp_{b_3+b_4}+\swapt_{b_3\times b_4}) \tag{by completeness of $\Pi$ and bifunctoriality of $+$} \\
            \Rightarrow{}& (\pid_{\bbo+b_4}+\pid_{b_1}+\swapt_{b_2\times b_1})\seqq (\pid_{\bbo+b_3}+(\hT{c_1}\times \pid_{b_2}))\seqq(\pid_{\bbo}+\swapp_{b_3+b_4}+\pid_{b_3\times b_2})\seqq \\
            & \quad (\pid_{\bbo+b_4}+((\pid_{b_3}+\swapt_{b_3\times b_2})))\seqq \blue{(\pid_{\bbo+b_4}+\hT{c_2}\times\pid_{b_3})}\seqq \\
            &\fromto_2{} (\pid_{\bbo}+\swapp_{b_3+b_4}+\pid_{b_4\times b_1})\seqq (\pid_{\bbo+b_4}+(\hT{c_2}\times \pid_{b_1}))\seqq(\pid_{\bbo}+\swapp_{b_4+b_3}+\pid_{b_4\times b_1})\seqq \\
            & \quad (\pid_{\bbo+b_3}+((\pid_{b_4}+\swapt_{b_4\times b_1})))\seqq(\pid_{\bbo+b_3}+\hT{c_1}\times \pid_{b_4})\seqq (\pid_{\bbo}+\swapp_{b_3+b_4}+\swapt_{b_3\times b_4}) \tag{by naturality of $\swapt$} \\
            \Rightarrow{}& \blue{(\pid_{\bbo}+\swapp_{b_4+b_3}+\pid_{b_4\times b_1})}\seqq(\pid_{\bbo+b_4}+\pid_{b_1}+\swapt_{b_2\times b_1})\seqq (\pid_{\bbo+b_3}+(\hT{c_1}\times \pid_{b_2}))\seqq\\
            & \quad (\pid_{\bbo}+\swapp_{b_3+b_4}+\pid_{b_3\times b_2})\seqq (\pid_{\bbo+b_4}+((\pid_{b_3}+\swapt_{b_3\times b_2})))\seqq (\pid_{\bbo+b_4}+\hT{c_2}\times\pid_{b_3})\seqq \\
            &\fromto_2{} (\pid_{\bbo+b_4}+(\hT{c_2}\times \pid_{b_1}))\seqq(\pid_{\bbo}+\swapp_{b_4+b_3}+\pid_{b_4\times b_1})\seqq \\
            & \quad (\pid_{\bbo+b_3}+((\pid_{b_4}+\swapt_{b_4\times b_1})))\seqq(\pid_{\bbo+b_3}+\hT{c_1}\times \pid_{b_4})\seqq (\pid_{\bbo}+\swapp_{b_3+b_4}+\swapt_{b_3\times b_4}) \tag{by reversibility} \\
            \Rightarrow{}& \pid_{\bbo}+ \big((\swapp_{b_2+b_1}+\swapt_{b_2\times b_1})\seqq(\pid_{b_1}+\hT{c_1}\times \pid_{b_2})\seqq\\
            & \quad (\swapp_{b_3+b_2}+\swapt_{b_3\times b_2})\seqq(\pid_{b_2}+\hT{c_2}\times \pid_{b_3})\big) \\
            & \fromto{}_2{} \pid_{\bbo}+\big((\pid_{b_2}+\hT{c_2}\times \pid_{b_1})\seqq (\swapp_{b_2+b_1}+\swapt_{b_2\times b_1})\seqq \\
            & \quad (\pid_{b_1}+\hT{c_1}\times \pid_{b_4})\seqq(\swapp_{b_3+b_4}+\swapt_{b_3\times b_4})\big) \tag{by bifunctoriality of $+$ and strictness}
        \end{align*}
        Therefore, by \cref{eq:h3}, we have 
        \begin{equation*}
            \adjustbox{max width=\linewidth}{$\begin{aligned}
                &(\swapp_{b_2+b_1}+\swapt_{b_2\times b_1})\seqq(\pid_{b_1}+\hT{c_1}\times \pid_{b_2})\seqq (\swapp_{b_3+b_2}+\swapt_{b_3\times b_2})\seqq(\pid_{b_2}+\hT{c_2}\times \pid_{b_3}) \\
                \fromto_2{}& (\pid_{b_2}+\hT{c_2}\times \pid_{b_1})\seqq (\swapp_{b_2+b_1}+\swapt_{b_2\times b_1})\seqq (\pid_{b_1}+\hT{c_1}\times \pid_{b_4})\seqq(\swapp_{b_3+b_4}+\swapt_{b_3\times b_4}).
            \end{aligned}$}
        \end{equation*}
    \else%
    Set \verb|\SHOWPROOFtrue| to compile the proof.
    \fi
\end{proof}

\begin{lemma}\label{lem:ht_times5}
    Let $c_1:b_1\fromto b_3, c_2:b_2\fromto b_4$ be two terms of \QPiLang{}. When omitting the isomorphisms $\assocrp$, $\assoclp$, $\unitep$, $\unitip$, $\assocrt$, $\assoclt$, $\unitet$, $\unitit$, $\dist$, $\factor$, $\absorb$, and $\factorz$ used for type conversion, we have
    \begin{equation*}
        \begin{aligned}
            &(\hqT{c_1}\times \pid_{b_2})\seqq(\swapt_{b_3\times b_2}+\swapt_{b_3\times b_2})\seqq(\hqT{c_2}\times \pid_{b_3})\seqq(\swapt_{b_4\times b_3}+\swapt_{b_4\times b_3}) \\
            \fromto_2{}& (\swapt_{b_1\times b_2}+\swapt_{b_1\times b_2})\seqq(\hqT{c_2}\times \pid_{b_1})\seqq(\swapt_{b_4\times b_1}+\swapt_{b_4\times b_1})\seqq(\hqT{c_1}\times \pid_{b_4}).
        \end{aligned}
    \end{equation*}
\end{lemma}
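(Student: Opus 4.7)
The strategy is to reduce the commutation of $\hqT{c_i}\times\pid$ stated in \cref{lem:ht_times5} to the corresponding commutation for $\hT{c_i}\times\pid$ already proved in \cref{lem:ht_times4}, using the conjugation between $\hqT{c}$ and $\hT{c}$ provided by \cref{lem:hqt2}. The key observation is that, after unfolding $\bb{2}=\bbo+\bbo$ and applying distributivity, each $\hqT{c_i}\times\pid_{b_j}$ block and each $\swapt+\swapt$ connective in the current goal can be regrouped into the ``direct-sum with a twist over a single auxiliary $\bbo$'' form handled by \cref{lem:ht_times4}, modulo some $\swapp$/$\swapt$ rearrangement.

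The first step is to tensor the conjugation of \cref{lem:hqt2} on the right with $\pid_{b_j}$, using bifunctoriality of $\times$ together with distributivity, so as to obtain for each $c_i$ and auxiliary $b_j$ an equivalence of the form $\pid+\hT{c_i}\times\pid_{b_j} \fromto_2 (\text{swap})\seqq(\pid_\bbo+\hqT{c_i}\times\pid_{b_j})\seqq(\text{swap})$, where the outer ``swap'' combinators are built from $\swapp$ and $\swapt$ by distributivity. Next, I would add an auxiliary $\pid_\bbo$ summand on both sides of the goal — to be cancelled at the end via the cancellative rule \cref{eq:h3} — and conjugate every $\hqT{c_i}\times\pid$ into its $\hT$-form using the equivalence just obtained. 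After further massaging via the naturality of $\swapt$ and $\swapp$, completeness of $\Pi$, and bifunctoriality of $+$, the resulting statement becomes an instance of \cref{lem:ht_times4}. Applying that lemma, transporting back through the same conjugations, and stripping the auxiliary $\pid_\bbo$ via \cref{eq:h3} then yields the conclusion.

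The main obstacle will be matching the $\swapt+\swapt$ blocks of the current goal against the $\swapp+\swapt$ blocks appearing both in \cref{lem:ht_times4} and in the conjugation of \cref{lem:hqt2}. These arise from different groupings — two copies of $b_i\times b_j$ on the one hand, versus a single $\bbo$ paired with a $b_i\times b_j$ on the other — and bridging them requires substantial structural rewrites using distributivity, the naturality of $\swapt$ and $\swapp$, and bifunctoriality of $+$ and $\times$, in the same graphical style as the proofs of \cref{lem:hqt2} and \cref{lem:ht_times4}. I expect the formal proof to be comparably long to those arguments, with the bulk of the work being bookkeeping of swap combinators rather than any new equational reasoning beyond \cref{eq:h1,eq:h2,eq:h3}.
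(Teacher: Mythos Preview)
Your proposal matches the paper's approach: both reduce the $\hqT$-commutation to the $\hT$-commutation of \cref{lem:ht_times4} via the conjugation of \cref{lem:hqt2}, with an auxiliary summand cancelled at the end by \cref{eq:h3}. One minor slip: after tensoring \cref{lem:hqt2} with $\pid_{b_j}$ and distributing, the auxiliary on the $\hqT$-side becomes $\pid_{b_j}$ rather than $\pid_\bbo$, so the summand you add must be large enough to accommodate all the $b_j$'s simultaneously --- the paper handles this by taking the auxiliary to be $\pid_{b_2\times b_1}$ (which rearranges to $\pid_{b_1}+\pid_{b_2}$) and working from the \cref{lem:ht_times4} side towards the $\hqT$-form.
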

\begin{proof}
    Consider \begin{align*}
        &\pid_{b_2\times b_1}+ ((\swapp_{b_2+b_1}+\swapt_{b_2\times b_1})\seqq(\pid_{b_1}+\hT{c_1}\times \pid_{b_2})\seqq \\
        & \quad (\swapp_{b_3+b_2}+\swapt_{b_3\times b_2})\seqq (\pid_{b_2}+\hT{c_2}\times \pid_{b_3})) \\
        \fromto_2{}& (\swapt_{b_2\times b_1} + (\swapp_{b_2+b_1}+\swapt_{b_2\times b_1}))\seqq(\pid_{b_1\times b_2}+(\pid_{b_1}+\hT{c_1}\times \pid_{b_2}))\seqq \\
        & \quad (\swapt_{b_1\times b_2} + ((\swapp_{b_3+b_2}+\swapt_{b_3\times b_2})\seqq(\pid_{b_2}+\hT{c_2}\times \pid_{b_3}))) \tag{by bifunctoriality of $+$ and naturality of $\swapt$} \\
        \fromto_2{}& (\swapt_{b_2\times b_1} + (\swapp_{b_2+b_1}+\swapt_{b_2\times b_1}))\seqq (\swapp_{(b_1\times b_2)+b_1}+\pid_{(\bbo+b_1)\times b_2})\seqq \\
        & \quad (\pid_{b_1}+\pid_{b_1\times b_2}+\hT{c_1}\times \pid_{b_2})\seqq(\swapp_{b_1+(b_1\times b_2)}+\pid_{(\bbo+b_3)\times b_2})\seqq \\
        & \quad (\swapt_{b_1\times b_2} + ((\swapp_{b_3+b_2}+\swapt_{b_3\times b_2})\seqq(\pid_{b_2}+\hT{c_2}\times \pid_{b_3}))) \tag{by naturality of $\swapp$} \\
        \fromto_2{}& (\swapt_{b_2\times b_1} + (\swapp_{b_2+b_1}+\swapt_{b_2\times b_1}))\seqq(\swapp_{(b_1\times b_2)+b_1}+\pid_{(\bbo+b_1)\times b_2})\seqq \\
        & \quad (\pid_{b_1}+(\pid_{b_1}+\hT{c_1})\times \pid_{b_2})\seqq(\swapp_{b_1+(b_1\times b_2)}+\pid_{(\bbo+b_3)\times b_2})\seqq \\
        & \quad (\swapt_{b_1\times b_2} + ((\swapp_{b_3+b_2}+\swapt_{b_3\times b_2})\seqq(\pid_{b_2}+\hT{c_2}\times \pid_{b_3}))) \tag{by distributivity} \\
        \fromto_2{}& (\swapt_{b_2\times b_1} + (\swapp_{b_2+b_1}+\swapt_{b_2\times b_1}))\seqq(\swapp_{(b_1\times b_2)+b_1}+\pid_{(\bbo+b_1)\times b_2})\seqq \\
        & \quad ((\pid_{b_1}+((\swapp_{b_1+\bbo}+\pid_{b_1})\seqq(\pid_{\bbo}+\hqT{c_1})\seqq(\swapp_{\bbo+b_3}+\pid_{b_3})))\times \pid_{b_2})\seqq \\
        & \quad (\swapp_{b_1+(b_1\times b_2)}+\pid_{(\bbo+b_3)\times b_2})\seqq \\
        & \quad (\swapt_{b_1\times b_2} + ((\swapp_{b_3+b_2}+\swapt_{b_3\times b_2})\seqq(\pid_{b_2}+\hT{c_2}\times \pid_{b_3}))) \tag{by \cref{lem:hqt2}} \\
        \fromto_2{}& (\swapt_{b_2\times b_1} + (\swapp_{b_2+b_1}+\swapt_{b_2\times b_1}))\seqq(\swapp_{(b_1\times b_2)+b_1}+\pid_{(\bbo+b_1)\times b_2})\seqq \\
        & \quad (\pid_{b_1}+(\swapp_{b_1+\bbo}+\pid_{b_1})\times \pid_{b_2})\seqq(\pid_{b_1}+(\pid_{\bbo}+\hqT{c_1})\times \pid_{b_2})\seqq\\
        & \quad (\pid_{b_1}+(\swapp_{\bbo+b_3}+\pid_{b_3})\times \pid_{b_2})\seqq (\swapp_{b_1+(b_1\times b_2)}+\pid_{(\bbo+b_3)\times b_2})\seqq \\
        & \quad (\swapt_{b_1\times b_2} + ((\swapp_{b_3+b_2}+\swapt_{b_3\times b_2})\seqq(\pid_{b_2}+\hT{c_2}\times \pid_{b_3}))) \tag{by bifunctoriality of $+$ and $\times$} \\
        \fromto_2{}& (\swapt_{b_2\times b_1} + (\swapp_{b_2+b_1}+\swapt_{b_2\times b_1}))\seqq(\swapp_{(b_1\times b_2)+b_1}+\pid_{(\bbo+b_1)\times b_2})\seqq \\
        & \quad (\pid_{b_1}+(\swapp_{b_1+\bbo}+\pid_{b_1})\times \pid_{b_2})\seqq(\pid_{b_1}+(\pid_{\bbo}+\hqT{c_1})\times \pid_{b_2})\seqq \\
        & \quad (\pid_{b_1}+(\swapp_{\bbo+b_3}+\pid_{b_3})\times \pid_{b_2})\seqq \\
        & \quad (\swapp_{b_1+(b_1\times b_2)}+\pid_{(\bbo+b_3)\times b_2})\seqq (\swapt_{b_1\times b_2} + (\swapp_{b_3+b_2}+\swapt_{b_3\times b_2}))\seqq\\
        & \quad (\pid_{b_2\times b_3}+\pid_{b_2}+\hT{c_2}\times \pid_{b_3}) \tag{by bifunctoriality of $+$} \\
        \fromto_2{}& (\swapt_{b_2\times b_1} + (\swapp_{b_2+b_1}+\swapt_{b_2\times b_1}))\seqq(\swapp_{(b_1\times b_2)+b_1}+\pid_{(\bbo+b_1)\times b_2})\seqq \\
        & \quad (\pid_{b_1}+(\swapp_{b_1+\bbo}+\pid_{b_1})\times \pid_{b_2})\seqq(\pid_{b_1}+(\pid_{\bbo}+\hqT{c_1})\times \pid_{b_2})\seqq\\
        & \quad (\pid_{b_1}+(\swapp_{\bbo+b_3}+\pid_{b_3})\times \pid_{b_2})\seqq \\
        & \quad (\swapp_{b_1+(b_1\times b_2)}+\pid_{(\bbo+b_3)\times b_2})\seqq (\swapt_{b_1\times b_2} + (\swapp_{b_3+b_2}+\swapt_{b_3\times b_2}))\seqq \\
        & \quad (\swapp_{(b_2\times b_3)+b_2}+\pid_{b_2\times b_3})\seqq(\pid_{b_2}+(\pid_{b_2}+\hT{c_2})\times \pid_{b_3})\seqq (\swapp_{(b_3\times b_2)+b_2}+\pid_{b_4\times b_3}) \tag{by naturality of $\swapt$ and distributivity} \\
        \fromto_2{}& (\swapt_{b_2\times b_1} + (\swapp_{b_2+b_1}+\swapt_{b_2\times b_1}))\seqq(\swapp_{(b_1\times b_2)+b_1}+\pid_{(\bbo+b_1)\times b_2})\seqq \\
        & \quad (\pid_{b_1}+((\swapp_{b_1+\bbo}+\pid_{b_1}))\times \pid_{b_2})\seqq(\pid_{b_1}+(\pid_{\bbo}+\hqT{c_1})\times \pid_{b_2})\seqq \\
        & \quad (\pid_{b_1}+(\swapp_{\bbo+b_3}+\pid_{b_3})\times \pid_{b_2})\seqq \\
        & \quad (\swapp_{b_1+(b_1\times b_2)}+\pid_{(\bbo+b_3)\times b_2})\seqq (\swapt_{b_1\times b_2} + (\swapp_{b_3+b_2}+\swapt_{b_3\times b_2}))\seqq \\
        & \quad (\swapp_{(b_2\times b_3)+b_2}+\pid_{b_2\times b_3})\seqq \\
        & \quad (\pid_{b_2}+((\swapp_{b_2+\bbo}+\pid_{b_2})\seqq(\pid_{\bbo}+\hqT{c_2})\seqq (\swapp_{\bbo+b_4}+\pid_{b_4}))\times \pid_{b_3})\seqq \\
        & \quad (\swapp_{(b_3\times b_2)+b_2}+\pid_{b_4\times b_3}) \tag{by \cref{lem:hqt2}} \\
        \fromto_{}& (\swapp_{(b_2\times b_1)+(b_2+b_1)}+\pid_{b_1\times b_2})\seqq(\swapp_{b_2+b_1}+\swapt_{b_2\times b_1} +\swapt_{b_2\times b_1})\seqq \\
        & \quad (\pid_{b_1}+\pid_{b_2}+\hqT{c_1}\times \pid_{b_2})\seqq  (\swapp_{b_1+b_2}+ \swapt_{b_1\times b_2}+ \swapt_{b_1\times b_2})\seqq \\
        & \quad (\pid_{b_2}+\pid_{b_1}+\hqT{c_2}\times \pid_{b_3})\seqq (\swapp_{(b_2+b_1)+(b_2\times b_1)}+\pid_{b_4\times b_3}).\tag{by completeness of $\Pi$ and strictness}
    \end{align*}
    Similarly, we also have \begin{align*}
        &\pid_{b_2\times b_1}+((\pid_{b_2}+\hT{c_2}\times \pid_{b_1})\seqq (\swapp_{b_2+b_1}+\swapt_{b_2\times b_1})\seqq \\
        & \quad (\pid_{b_1}+\hT{c_1}\times \pid_{b_4})\seqq(\swapp_{b_3+b_4}+\swapt_{b_3\times b_4})) \\
        \fromto_2{}& (\swapp_{(b_2\times b_1)+(b_2+b_1)}+\pid_{b_2\times b_1})\seqq(\pid_{b_2}+\pid_{b_1}+\hqT{c_2}\times \pid_{b_1})\seqq \\
        & \quad (\swapp_{b_2+b_1}+\swapt_{b_2\times b_1}+\swapt_{b_2\times b_1})\seqq (\pid_{b_1}+\pid_{b_2}+\hqT{c_1}\times \pid_{b_4})\seqq \\
        & \quad (\swapp_{b_1+b_2}+\swapt_{b_3\times b_4}+\swapt_{b_3\times b_4}) \seqq(\swapp_{(b_2+b_1)+(b_4\times b_3)}+\pid_{b_4\times b_3}).
    \end{align*}
    By \cref{lem:ht_times4}, we have \begin{align*}
        & (\swapp_{(b_2\times b_1)+(b_2+b_1)}+\pid_{b_1\times b_2})\seqq(\swapp_{b_2+b_1}+\swapt_{b_2\times b_1} +\swapt_{b_2\times b_1})\seqq \\
        & \quad (\pid_{b_1}+\pid_{b_2}+\hqT{c_1}\times \pid_{b_2})\seqq (\swapp_{b_1+b_2}+ \swapt_{b_1\times b_2}+ \swapt_{b_1\times b_2})\seqq \\
        & \quad (\pid_{b_2}+\pid_{b_1}+\hqT{c_2}\times \pid_{b_3})\seqq (\swapp_{(b_2+b_1)+(b_2\times b_1)}+\pid_{b_4\times b_3}) \\
        &\fromto_2{} (\swapp_{(b_2\times b_1)+(b_2+b_1)}+\pid_{b_2\times b_1})\seqq(\pid_{b_2}+\pid_{b_1}+\hqT{c_2}\times \pid_{b_1})\seqq \\
        & \quad (\swapp_{b_2+b_1}+\swapt_{b_2\times b_1}+\swapt_{b_2\times b_1})\seqq (\pid_{b_1}+\pid_{b_2}+\hqT{c_1}\times \pid_{b_4})\seqq \\
        & \quad (\swapp_{b_1+b_2}+\swapt_{b_3\times b_4}+\swapt_{b_3\times b_4}) \seqq(\swapp_{(b_2+b_1)+(b_4\times b_3)}+\pid_{b_4\times b_3}) \\
        \Rightarrow{}& (\swapp_{b_2+b_1}+\swapt_{b_2\times b_1} +\swapt_{b_2\times b_1})\seqq (\pid_{b_1}+\pid_{b_2}+\hqT{c_1}\times \pid_{b_2})\seqq \\
        &\quad (\swapp_{b_1+b_2}+ \swapt_{b_1\times b_2}+ \swapt_{b_1\times b_2})\seqq (\pid_{b_2}+\pid_{b_1}+\hqT{c_2}\times \pid_{b_3}) \\
        &\fromto_2{} (\pid_{b_2}+\pid_{b_1}+\hqT{c_2}\times \pid_{b_1})\seqq(\swapp_{b_2+b_1}+\swapt_{b_2\times b_1}+\swapt_{b_2\times b_1})\seqq \\
        & \quad (\pid_{b_1}+\pid_{b_2}+\hqT{c_1}\times \pid_{b_4})\seqq (\swapp_{b_1+b_2}+\swapt_{b_3\times b_4}+\swapt_{b_3\times b_4}) \tag{by reversibility and strictness} \\
        \Rightarrow{}& \pid_{b_1}+\pid_{b_2}+ \\
        & \quad ((\swapt_{b_2\times b_1} +\swapt_{b_2\times b_1})\seqq(\hqT{c_1}\times \pid_{b_2})\seqq(\swapt_{b_1\times b_2}+ \swapt_{b_1\times b_2})\seqq(\hqT{c_2}\times \pid_{b_3})) \\
        &\fromto_2{} \pid_{b_2}+\pid_{b_1}+\\
        & \quad ((\hqT{c_2}\times \pid_{b_1})\seqq(\swapt_{b_2\times b_1}+\swapt_{b_2\times b_1})\seqq(\hqT{c_1}\times \pid_{b_4})\seqq(\swapt_{b_3\times b_4}+\swapt_{b_3\times b_4})) \tag{by bifunctoriality of $+$ and naturality of $\swapp$} \\
        \Rightarrow{}& (\swapt_{b_2\times b_1} +\swapt_{b_2\times b_1})\seqq(\hqT{c_1}\times \pid_{b_2})\seqq(\swapt_{b_1\times b_2}+ \swapt_{b_1\times b_2})\seqq(\hqT{c_2}\times \pid_{b_3}) \\
        &\fromto_2{} (\hqT{c_2}\times \pid_{b_1})\seqq(\swapt_{b_2\times b_1}+\swapt_{b_2\times b_1})\seqq(\hqT{c_1}\times \pid_{b_4})\seqq(\swapt_{b_3\times b_4}+\swapt_{b_3\times b_4}) \tag{by \cref{eq:h3}} \\
        \Rightarrow{}& (\hqT{c_1}\times \pid_{b_2})\seqq(\swapt_{b_3\times b_2}+\swapt_{b_3\times b_2})\seqq(\hqT{c_2}\times \pid_{b_3})\seqq(\swapt_{b_4\times b_3}+\swapt_{b_4\times b_3}) \\
        &\fromto_2{} (\swapt_{b_1\times b_2}+\swapt_{b_1\times b_2})\seqq(\hqT{c_2}\times \pid_{b_1})\seqq(\swapt_{b_4\times b_1}+\swapt_{b_4\times b_1})\seqq(\hqT{c_1}\times \pid_{b_4}). \tag{by naturality of $\swapt$ and bifunctoriality of $+$}
    \end{align*}
\end{proof}

\lemht*
\begin{proof}
    For convenience, we omit the isomorphisms $\assocrp$, $\assoclp$, $\unitep$, $\unitip$, $\assocrt$, $\assoclt$, $\unitet$, $\unitit$, $\dist$, $\factor$, $\absorb$, and $\factorz$ used for type conversion.
    \begin{itemize}
        \item For \cref{eq:ht_plus}, \begin{align*}
            &\pid_{\hdim(c_1+c_2)\bbo}+(\hT{c_1+\pid_{b_2}}\seqq\hT{\pid_{b_3}+c_2}) \\
            ={}& \pid_{\hdim(c_1+c_2)\bbo}+\hT{(c_1+\pid_{b_2})\seqq(\pid_{b_3}+c_2)} \tag{by definition of $\cT_H$} \\
            \fromto_2{}& (\swapp_{\hdim(c_1+c_2)\bbo +\bbo}+\pid_{b_1+b_2})\seqq(\pid_{\bbo}+\hqT{(c_1+\pid_{b_2})\seqq(\pid_{b_3}+c_2)})\seqq \\
            & \quad (\swapp_{\hdim(c_1+c_2)\bbo +\bbo}+\pid_{b_3+b_4}) \tag{by \cref{lem:hqt2} and strictness} \\
            ={}& (\swapp_{\hdim(c_1+c_2)\bbo +\bbo}+\pid_{b_1+b_2})\seqq(\pid_{\bbo}+\hqT{c_1+\pid_{b_2}}\seqq\hqT{\pid_{b_3}+c_2})\seqq \\
            & \quad (\swapp_{\hdim(c_1+c_2)\bbo +\bbo}+\pid_{b_3+b_4}) \tag{by definition of $\cT_{HQ}$} \\
            \fromto_2{}& (\swapp_{\hdim(c_1+c_2)\bbo +\bbo}+\pid_{b_1+b_2})\seqq \\
            & \quad (\pid_{\bbo}+((\pid_{b_1}+\swapp_{b_2+\hdim(c_1)\bbo}+\pid_{b_2})\seqq(\hqT{c_1}+\pid_{\bb{2}}\times \pid_{b_2})\seqq \\
            & \qquad (\pid_{b_1}+\swapp_{b_3+\hdim(b_2)\bbo}+\pid_{b_2})\seqq (\pid_{b_3}+\swapp_{b_2+\hdim(b_3)\bbo}+\pid_{b_2})\seqq \\
            & \qquad (\pid_{\bb{2}}\times \pid_{b_3}+\hqT{c_2})\seqq(\pid_{b_3}+\swapp_{b_3+\hdim(c_2)\bbo}+\pid_{b_4})))\seqq \\
            & \quad (\swapp_{\hdim(c_1+c_2)\bbo +\bbo}+\pid_{b_3+b_4}) \tag{by definition of $\cT_{HQ}$} \\
            \fromto_2{}& (\swapp_{\hdim(c_1+c_2)\bbo +\bbo}+\pid_{b_1+b_2})\seqq \\
            & \quad (\pid_{\bbo}+((\pid_{b_1}+\swapp_{b_2+\hdim(c_1)\bbo}+\pid_{b_2})\seqq(\hqT{c_1}+\hqT{c_2})\seqq \\
            & \qquad (\pid_{b_3}+\swapp_{b_3+\hdim(c_2)\bbo}+\pid_{b_4})))\seqq \\
            & \quad (\swapp_{\hdim(c_1+c_2)\bbo +\bbo}+\pid_{b_3+b_4}) \tag{by naturality of $\swapp$ and bifunctoriality of $+$} \\
            \fromto_2{}& (\swapp_{\hdim(c_1+c_2)\bbo +\bbo}+\pid_{b_1+b_2})\seqq  (\pid_{\bbo}+\hqT{c_1+c_2})\seqq  (\swapp_{\hdim(c_1+c_2)\bbo +\bbo}+\pid_{b_3+b_4}) \tag{by definition of $\cT_{HQ}$} \\
            \fromto_2{}& \pid_{\hdim(c_1+c_2)\bbo}+\hT{c_1+c_2}. \tag{by \cref{lem:hqt2}}
        \end{align*}
        Then, by \cref{eq:h3}, we have $\hT{c_1+\pid_{b_2}}\seqq\hT{\pid_{b_3}+c_2} \fromto_2{} \hT{c_1+c_2}$. Similarly, we obtain $\hT{\pid_{b_1}+c_2}\seqq\hT{c_1+\pid_{b_4}} \fromto_2{} \hT{c_1+c_2}$.
        \item For \cref{eq:ht_times}, \begin{align*}
            &\pid_{\hdim(c_1\times c_2)\bbo}+(\hT{c_1\times \pid_{b_2}}\seqq\hT{\pid_{b_3}\times c_2}) \\
            ={}& \pid_{\hdim(c_1\times c_2)\bbo}+\hT{(c_1\times \pid_{b_2})\seqq(\pid_{b_3}\times c_2)} \tag{by definition of $\cT_H$} \\
            \fromto_2{}& (\swapp_{\hdim(c_1\times c_2)\bbo +\bbo}+\pid_{b_1\times b_2})\seqq(\pid_{\bbo}+\hqT{(c_1\times \pid_{b_2})\seqq(\pid_{b_3}\times c_2)})\seqq \\
            & \quad (\swapp_{\hdim(c_1\times c_2)\bbo +\bbo}+\pid_{b_3\times b_4}) \tag{by \cref{lem:hqt2} and strictness} \\
            ={}& (\swapp_{\hdim(c_1\times c_2)\bbo +\bbo}+\pid_{b_1\times b_2})\seqq(\pid_{\bbo}+\hqT{c_1\times \pid_{b_2}}\seqq\hqT{\pid_{b_3}\times c_2})\seqq \\
            & \quad (\swapp_{\hdim(c_1\times c_2)\bbo +\bbo}+\pid_{b_3\times b_4}) \tag{by definition of $\cT_{HQ}$} \\
            \fromto_2{}& (\swapp_{\hdim(c_1\times c_2)\bbo +\bbo}+\pid_{b_1\times b_2})\seqq \\
            & \quad (\pid_{\bbo}+((\hqT{c_1}\times \pid_{b_2})\seqq(\swapt_{b_3\times b_2}+\swapt_{b_3\times b_2})\seqq(\pid_\bb{2}\times \pid_{b_2}\times \pid_{b_3})\seqq \\
            & \qquad (\swapt_{b_2\times b_3}+\swapt_{b_2\times b_3})\seqq (\pid_{\bb{2}}\times \pid_{b_3}\times \pid_{b_2})\seqq(\swapt_{b_3\times b_2}+\swapt_{b_3\times b_2})\seqq \\
            & \qquad (\hqT{c_2}\times \pid_{b_3})\seqq(\swapt_{b_4\times b_3}+\swapt_{b_4\times b_3})))\seqq \\
            & \quad (\swapp_{\hdim(c_1\times c_2)\bbo +\bbo}+\pid_{b_3\times b_4}) \tag{by definition of $\cT_{HQ}$ and strictness} \\
            \fromto_2{}& (\swapp_{\hdim(c_1\times c_2)\bbo +\bbo}+\pid_{b_1\times b_2})\seqq \\
            & \quad (\pid_{\bbo}+((\hqT{c_1}\times \pid_{b_2})\seqq(\swapt_{b_3\times b_2}+\swapt_{b_3\times b_2})\seqq \\
            & \qquad (\hqT{c_2}\times \pid_{b_3})\seqq(\swapt_{b_4\times b_3}+\swapt_{b_4\times b_3})))\seqq \\
            & \quad (\swapp_{\hdim(c_1\times c_2)\bbo +\bbo}+\pid_{b_3\times b_4}) \tag{by naturality of $\swapt$ and bifunctoriality of $+$} \\
            \fromto_2{}& (\swapp_{\hdim(c_1\times c_2)\bbo +\bbo}+\pid_{b_1\times b_2})\seqq (\pid_{\bbo}+\hqT{c_1\times c_2})\seqq (\swapp_{\hdim(c_1\times c_2)\bbo +\bbo}+\pid_{b_3\times b_4}) \tag{by definition of $\cT_{HQ}$ and strictness} \\
            \fromto_2{}& \pid_{\hdim(c_1\times c_2)\bbo}+\hT{c_1\times c_2}. \tag{by \cref{lem:hqt2}}
        \end{align*}
        Then, by \cref{eq:h3}, we have $\hT{c_1\times \pid_{b_2}}\seqq\hT{\pid_{b_3}\times c_2} \fromto_2{} \hT{c_1\times c_2}$. Similarly, we have \begin{align*}
            & \pid_{\hdim(c_1\times c_2)\bbo}+(\hT{\pid_{b_1}\times c_2}\seqq\hT{c_1\times \pid_{b_4}}) \\
            \fromto_2{}& (\swapp_{\hdim(c_1\times c_2)\bbo +\bbo}+\pid_{b_1\times b_2})\seqq \\
            & \quad (\pid_{\bbo}+((\swapt_{b_1\times b_2}+\swapt_{b_1\times b_2})\seqq(\hqT{c_2}\times \pid_{b_1})\seqq \\
            & \qquad (\swapt_{b_4\times b_1}+\swapt_{b_4\times b_1})\seqq(\hqT{c_1}\times \pid_{b_4})))\seqq \\
            & \quad (\swapp_{\hdim(c_1\times c_2)\bbo +\bbo}+\pid_{b_3\times b_4}) \\
            \fromto{}& (\swapp_{\hdim(c_1\times c_2)\bbo +\bbo}+\pid_{b_1\times b_2})\seqq \\
            & \quad (\pid_{\bbo}+((\hqT{c_1}\times \pid_{b_2})\seqq(\swapt_{b_3\times b_2}+\swapt_{b_3\times b_2})\seqq \\
            & \qquad (\hqT{c_2}\times \pid_{b_3})\seqq(\swapt_{b_4\times b_3}+\swapt_{b_4\times b_3})))\seqq \\
            & \quad (\swapp_{\hdim(c_1\times c_2)\bbo +\bbo}+\pid_{b_3\times b_4}) \tag{by \cref{lem:ht_times5}} \\
            \fromto_2{}& (\swapp_{\hdim(c_1\times c_2)\bbo +\bbo}+\pid_{b_1\times b_2})\seqq (\pid_{\bbo}+\hqT{c_1\times c_2})\seqq (\swapp_{\hdim(c_1\times c_2)\bbo +\bbo}+\pid_{b_3\times b_4}) \tag{by definition of $\cT_{HQ}$ and strictness} \\
            \fromto_2{}& \pid_{\hdim(c_1\times c_2)\bbo}+\hT{c_1\times c_2}. \tag{by \cref{lem:hqt2}}
        \end{align*}
        Thus, by \cref{eq:h3}, we also have $\hT{\pid_{b_1}\times c_2}\seqq\hT{c_1\times \pid_{b_4}} \fromto_2{} \hT{c_1\times c_2}$.
        \qedhere
    \end{itemize}
\end{proof}

\lemqeq*
\begin{proof}
    We prove it by induction on the level-2 combinators of \QPiLang{} deriving $c_1\fromto_2{} c_2$. For convenience, we suppress the isomorphisms $\assocrp$, $\assoclp$, $\unitep$, $\unitip$, $\assocrt$, $\assoclt$, $\unitet$, $\unitit$, $\dist$, $\factor$, $\absorb$, and $\factorz$ used for type conversion.
    \begin{itemize}
        \item Level-2 combinators of $\Pi$. \begin{itemize}[leftmargin=10pt]
            \item Reflexivity, symmetry and transitivity. Since $\fromto_2{}$ in \HPiLang{} is also an equivalence relation, this case is easily proved by the induction hypothesis.
            \item Unitality. Suppose $c:b_1\fromto b_2$, we have \begin{align*}
                \hT{\pid_{b_1}\seqq c} ={}& \hT{\pid_{b_1}}\seqq\hT{c} \\
                ={}& (\pid_{\bbo}+\pid_{b_1})\seqq \hT{c} \\
                \fromto_2{}& \pid_{\bbo+b_1} \seqq \hT{c} \tag{by bifunctoriality}\\
                \fromto_2{}& \hT{c}, \tag{by unitality} \\ \\
                \hT{\unitip\seqq(\pid_{\bbz}+c)\seqq\unitep} ={}& \hT{\unitip}\seqq\hT{\pid_{\bbz}+c}\seqq\hT{\unitep} \\
                \fromto_2{}& (\pid_{\bbo}+\pid_{b_1})\seqq((\pid_{\bbo}+\pid_{\bbz})+\pid_{b_1})\seqq \\
                & \quad (\swapp_{\bbo+\bbz}+\pid_{b_1})\seqq(\pid_{\bbz}+\hT{c})\seqq(\swapp_{\bbz+\bbo}+\pid_{b_2})\seqq \\
                & \quad (\pid_{\bbo}+\pid_{b_2}) \\
                \fromto_2{}& \pid_{\bbo+b_1}\seqq(\pid_{\bbz}+\hT{c})\seqq \pid_{\bbo+b_2} \tag{by completeness of $\Pi$} \\
                \fromto_2{}& \hT{c}, \tag{by unitality} \\ \\
                \hT{\unitit\seqq(\pid_{\bbo}\times c)\seqq\unitet} ={}& \hT{\unitit}\seqq\hT{\pid_{\bbo}\times c}\seqq\hT{\unitet} \\
                \fromto_2{}& (\pid_{\bbo}+\pid_{b_1})\seqq (\pid_{\bbo}+\pid_{b_1})\seqq \hT{c} \seqq \\
                & \quad (\pid_{\bbo}+\pid_{b_1})\seqq (\pid_{\bbo}+\pid_{b_1})\\
                \fromto_2{}& \hT{c}. \tag{by unitality and bifunctoriality}
            \end{align*}
            The remaining three are also proven in the same way.
            \item Associativity. \begin{align*}
                \hT{(c_1\seqq c_2)\seqq c_3} ={}& \hT{c_1\seqq c_2}\seqq \hT{c_3} \\
                ={}& (\hT{c_1}\seqq\hT{c_2})\seqq\hT{c_3} \\
                \fromto_2{}& \hT{c_1}\seqq(\hT{c_2}\seqq\hT{c_3}) \tag{by associativity} \\
                ={}& \hT{c_1}\seqq\hT{c_2\seqq c_3} \\
                ={}& \hT{c_1\seqq(c_2\seqq c_3)}, \\ \\
                \hT{c_1+(c_2+c_3)} \fromto_2{}& \hT{(c_1+c_2)+c_3}. \tag{by \cref{lem:ht_assoc_p}}
            \end{align*}
            For the associativity of $\times$, we can prove $\hqT{c_1\times (c_2\times c_3)} \fromto_2 \hqT{(c_1\times c_2)\times c_3}$ using a similar graphical proof as in \cref{lem:ht_assoc_p}, but replacing $+$ between wires with $\times$ between wires. Then applying \cref{lem:hqt2} and \cref{eq:h3}, we have \[ \hT{c_1\times (c_2\times c_3)} \fromto_2{} \hT{(c_1\times c_2)\times c_3}.\]
            \begin{align*}
                & \hT{((\assocrp+\pid)\seqq\assocrp)\seqq(\pid+\assocrp)} \\
                \fromto_2{}& \hT{\assocrp+\pid}\seqq\hT{\assocrp}\seqq\hT{\pid+\assocrp} \\
                \fromto_2{}& (\pid_{\bbo}+\assocrp+\pid)\seqq(\swapp+\pid)\seqq (\pid+\pid_{\bbo}+\pid)\seqq(\swapp+\pid)\seqq \\
                & \quad \hT{\assocrp}\seqq\hT{\pid+\assocrp} \\
                \fromto_2{}&(\pid_{\bbo}+(\assocrp+\pid))\seqq\hT{\assocrp}\seqq\\
                & \quad \hT{\pid+\assocrp} \tag{by bifunctoriality and naturality} \\
                \fromto_2{}& \cdots \\
                \fromto_2{}& \pid_{\bbo}+ (((\assocrp+\pid)\seqq\assocrp)\seqq(\pid+\assocrp)) \\
                \fromto_2{}& \pid_{\bbo}+(\assocrp\seqq\assocrp) \tag{by associativity}\\
                \fromto_2{}& (\pid_{\bbo}+\assocrp)\seqq(\pid_{\bbo}+\assocrp) \tag{by bifunctoriality}\\
                ={}& \hT{\assocrp\seqq\assocrp}.
            \end{align*}
            The remaining three are also proven in the same way.
            \item Annihilativity. Suppose $c:b_1\fromto b_2$, we have \begin{align*}
                & \hT{\factorz\seqq(c\times \pid_{\bbz})\seqq\absorb} \\
                ={}& \hT{\factorz}\seqq \hT{c\times \pid_{\bbz}} \seqq\hT{\absorb} \\
                \fromto_2{}& (\pid_{\bbo}+\pid_{\bbz})\seqq (\pid_{\bbo}+\swapt_{b_1\times \bbz})\seqq\hT{\pid_{\bbz}\times c}\seqq (\pid_{\bbo}+\swapt_{\bbz\times b_2})\seqq \\
                & \quad \hT{\pid_{b_2}\times \pid_{\bbz}}\seqq (\pid_{\bbo}+\pid_{\bbz}) \\
                \fromto_2{}& \hT{\pid_{\bbz}\times c}\seqq\hT{\pid_{b_2}\times \pid_{\bbz}} \tag{by completeness of $\Pi$ and unitality} \\
                ={}& (\pid_{\bbo}+\swapt_{\bbz\times b_1}\seqq\absorb\seqq\pid_{\bbz}\seqq\factorz\seqq\swapt_{b_2\times \bbz})\seqq \hT{\pid_{b_2}\times \pid_{\bbz}} \\
                \fromto_2{}& \hT{\pid_{b_2}\times \pid_{\bbz}} \tag{by completeness of $\Pi$ and unitality} \\
                \fromto_2{}& \cT_{H}[\underbrace{\pid_\bbz+\cdots+\pid_\bbz}_{\hdim(b_2)}] \tag{by \cref{lem:hqt1}} \\
                \fromto_2{}& \pid_{\bbz}+ \cT_{H}[\underbrace{\pid_\bbz+\cdots+\pid_\bbz}_{\hdim(b_2)-1}] \tag{by the proof for unitality} \\
                \fromto_2{}& \cT_{H}[\underbrace{\pid_\bbz+\cdots+\pid_\bbz}_{\hdim(b_2)-1}] \tag{by unitality} \\
                \fromto_2{}& \cdots \\
                \fromto_2{}& \hT{\pid_\bbz}.
            \end{align*}
            The remaining one can be proven in the same way.
            \item Bifunctoriality. \begin{itemize}[leftmargin=10pt]
                \item For $ \begin{prooftree}
                    \hypo{c_1\fromto_2 c_3} \hypo{c_2\fromto_2 c_4}
                    \infer2{c_1+c_2 \fromto_2 c_3+c_4}
                \end{prooftree} $, suppose $c_1\fromto_2 c_3$ and $c_2\fromto_2 c_4$, then by inductive hypothesis, we have $\hT{c_1}\fromto \hT{c_3}, \hT{c_2}\fromto_2 \hT{c_4}$. Since $\hT{c_1+c_2}, \hT{c_3+c_4}$ are defined inductively on $\hT{c_1},\hT{c_2}$ and $\hT{c_3},\hT{c_4}$ respectively, we can easily see that $\hT{c_1+c_2}\fromto_2 \hT{c_3+c_4}$.
                \item For $\begin{prooftree}
                    \hypo{c_1\fromto_2 c_3} \hypo{c_2\fromto_2 c_4}
                    \infer2{c_1\times c_2 \fromto_2 c_3\times c_4}
                \end{prooftree}$, suppose $c_1\fromto_2 c_3$ and $c_2\fromto_2 c_4$, then by inductive hypothesis, we have $\hT{c_1}\fromto \hT{c_3}, \hT{c_2}\fromto_2 \hT{c_4}$. With \cref{lem:hqt2} and \cref{eq:h3}, we obtain $\hqT{c_1}\fromto_2{} \hqT{c_3}, \hqT{c_2}\fromto \hqT{c_4}$.
                Since $\hqT{c_1\times c_2}, \hqT{c_3\times c_4}$ are defined inductively on $\hqT{c_1},\hqT{c_2}$ and $\hqT{c_3},\hqT{c_4}$ respectively, we can easily see that $\hqT{c_1\times c_2}\fromto_2 \hqT{c_3\times c_4}$.
                Applying \cref{lem:hqt2} and \cref{eq:h3} again, we have $\hT{c_1\times c_2}\fromto_2 \hT{c_3\times c_4}$.
                \item For $\pid + \pid \fromto_2 \pid$ and $\pid \times  \pid \fromto_2 \pid$,  they are trivial.
                \item For $(c_1+c_2)\seqq(c_3+c_4) \fromto_2 (c_1\seqq c_3) + (c_2\seqq c_4)$, we have \begin{align*}
                    & \hT{(c_1+c_2)\seqq(c_3+c_4)} \\
                    ={}& \hT{c_1+c_2}\seqq\hT{c_3+c_4} \\
                    \fromto_2{}& \hT{c_1+\pid}\seqq\hT{\pid+c_2}\seqq\hT{c_3+\pid}\seqq\hT{\pid+c_4} \tag{by \cref{lem:ht3}} \\
                    \fromto_2{}& \hT{c_1+\pid}\seqq\hT{c_3+\pid}\seqq\hT{\pid+c_2}\seqq\hT{\pid+c_4} \tag{by \cref{lem:ht3}} \\
                    \fromto_2{}& (\hT{c_1}+\pid)\seqq(\swapp+\pid)\seqq(\pid+\pid_\bbo +\pid)\seqq(\swapp+\pid)\seqq \\
                    & \quad (\hT{c_3}+\pid)\seqq(\swapp+\pid)\seqq(\pid+\pid_\bbo +\pid)\seqq(\swapp+\pid)\seqq \\
                    & \quad \hT{\pid+c_2}\seqq\hT{\pid+c_4} \tag{by definition of $\cT_H$}\\
                    \fromto_2{}& (\hT{c_1}\seqq\hT{c_3}+\pid)\seqq(\swapp+\pid)\seqq(\pid+\pid_\bbo +\pid)\seqq(\swapp+\pid)\seqq \\
                    & \quad \hT{\pid+c_2}\seqq\hT{\pid+c_4} \tag{by naturality of $\swapp$} \\
                    ={}& (\hT{c_1\seqq c_3}+\pid)\seqq(\swapp+\pid)\seqq(\pid+\pid_\bbo +\pid)\seqq(\swapp+\pid)\seqq \\
                    & \quad \hT{\pid+c_2}\seqq\hT{\pid+c_4} \tag{by definition of $\cT_H$} \\
                    \fromto_2{}& \hT{(c_1\seqq c_3)+\pid}\seqq \hT{\pid+c_2}\seqq\hT{\pid+c_4} \tag{by definition of $\cT_H$} \\
                    \fromto_2{}& \cdots \\
                    \fromto_2{}& \hT{(c_1\seqq c_3)+\pid}\seqq \hT{\pid+(c_2\seqq c_4)} \\
                    \fromto_2{}& \hT{(c_1\seqq c_3)+ (c_2\seqq c_4)}. \tag{by \cref{lem:ht3}}
                \end{align*}
            \item For $(c_1\times c_2)\seqq(c_3\times c_4) \fromto_2 (c_1\seqq c_3) \times  (c_2\seqq c_4)$, we have \begin{align*}
                & \hT{(c_1\times c_2)\seqq(c_3\times c_4)} \\
                ={}& \hT{c_1\times c_2}\seqq\hT{c_3\times c_4} \\
                \fromto_2{}& \hT{c_1\times \pid}\seqq\hT{\pid\times c_2}\seqq\hT{c_3\times \pid}\seqq\hT{\pid\times c_4} \tag{by \cref{lem:ht3}} \\
                \fromto_2{}& \hT{c_1\times \pid}\seqq\hT{c_3\times \pid}\seqq\hT{\pid\times c_2}\seqq\hT{\pid\times c_4} \tag{by \cref{lem:ht3}} \\
                \fromto_2{}& (\pid_{\bbo}+\swapt)\seqq\hT{\pid\times c_1}\seqq(\pid_{\bbo}+\swapt)\seqq\hT{\pid\times \pid}\seqq \\
                & \quad (\pid_{\bbo}+\swapt)\seqq\hT{\pid\times c_3}\seqq(\pid_{\bbo}+\swapt)\seqq\hT{\pid\times \pid}\seqq \\
                & \quad \hT{\pid\times c_2}\seqq\hT{\pid\times c_4} \tag{by definition of $\cT_H$} \\
                \fromto_2{}& (\pid_{\bbo}+\swapt)\seqq\hT{\pid\times c_1}\seqq(\pid_{\bbo}+\swapt)\seqq(\pid_{\bbo}+\pid)\seqq \\
                & \quad (\pid_{\bbo}+\swapt)\seqq\hT{\pid\times c_3}\seqq(\pid_{\bbo}+\swapt)\seqq\hT{\pid\times \pid}\seqq \\
                & \quad \hT{\pid\times c_2}\seqq\hT{\pid\times c_4} \\
                \fromto_2{}& (\pid_{\bbo}+\swapt)\seqq\hT{\pid\times c_1}\seqq\hT{\pid\times c_3}\seqq(\pid_{\bbo}+\swapt)\seqq\hT{\pid\times \pid}\seqq \\
                & \quad \hT{\pid\times c_2}\seqq\hT{\pid\times c_4} \tag{by naturality of $\swapt$} \\
                \fromto_2{}& (\pid_{\bbo}+\swapt)\seqq\hT{c_1+\cdots+c_1}\seqq\hT{c_3+\cdots+c_3}\seqq \\
                & \quad (\pid_{\bbo}+\swapt)\seqq\hT{\pid\times \pid}\seqq  \hT{c_2+\cdots+c_2}\seqq\hT{c_4+\cdots+c_4} \tag{by \cref{lem:hqt1}}\\
                \fromto_2{}& (\pid_{\bbo}+\swapt)\seqq\hT{(c_1\seqq c_3)+\cdots+(c_1\seqq c_3)}\seqq \\
                & \quad (\pid_{\bbo}+\swapt)\seqq\hT{\pid\times \pid}\seqq \hT{(c_2\seqq c_4)+\cdots+(c_2\seqq c_4)} \tag{by samilar proofs for $\hT{(c_1+c_2)\seqq(c_3+c_4)} \fromto_2 \hT{(c_1\seqq c_3)+ (c_2\seqq c_4)}$} \\
                \fromto_2{}& (\pid_{\bbo}+\swapt)\seqq\hT{\pid\times (c_1\seqq c_3)}\seqq  (\pid_{\bbo}+\swapt)\seqq\hT{\pid\times \pid}\seqq \hT{\pid\times (c_2\seqq c_4)} \tag{by \cref{lem:hqt1}} \\
                \fromto_2{}& \hT{(c_1\seqq c_3)\times \pid}\seqq \hT{\pid\times (c_2\seqq c_4)} \tag{by definition of $\cT_H$} \\
                \fromto_2{}& \hT{(c_1\seqq c_3)\times (c_2\seqq c_4)}. \tag{by \cref{lem:ht3}}
            \end{align*}
            \end{itemize}
            \item Distributivity. Suppose $c_1:b_1\fromto b_4, c_2:b_2 \fromto b_5, c_3:b_3\fromto b_6$. Consider \begin{align*}
                &\hqT{\factor\seqq((c_1+c_2)\times c_3)\seqq\dist} \\
                \fromto_2{}& (\hqT{c_1+c_2}\times \pid_{b_3})\seqq(\swapt_{(b_4+b_5)\times b_3}+\swapt_{(b_4+b_5)\times b_3})\seqq \\
                &\quad (\hqT{c_3}\times \pid_{b_4+b_5})\seqq(\swapt_{b_6\times (b_4+b_5)}+\swapt_{b_6\times (b_4+b_5)}) \tag{by definition of $\cT_{HQ}$} \\
                \fromto_2{}& (((\pid_{b_1}+\swapp_{b_2+b_1}+\pid_{b_2})\seqq(\hqT{c_1}+\hqT{c_2})\seqq(\pid_{b_4}+\swapp_{b_4+b_5}+\pid_{b_5}))\times \pid_{b_3})\seqq\\
                & \quad (\swapt_{(b_4+b_5)\times b_3}+\swapt_{(b_4+b_5)\times b_3})\seqq \\
                &\quad (\hqT{c_3}\times \pid_{b_4+b_5})\seqq(\swapt_{b_6\times (b_4+b_5)}+\swapt_{b_6\times (b_4+b_5)}) \tag{by definition of $\cT_{HQ}$} \\
                \fromto_2{}& ((\pid_{b_1}+\swapp_{b_2+b_1}+\pid_{b_2})\times \pid_{b_3})\seqq((\hqT{c_1}+\hqT{c_2})\times \pid_{b_3})\seqq \\
                & \quad ((\pid_{b_4}+\swapp_{b_4+b_5}+\pid_{b_5})\times \pid_{b_3})\seqq (\swapt_{(b_4+b_5)\times b_3}+\swapt_{(b_4+b_5)\times b_3})\seqq \\
                &\quad (\hqT{c_3}\times \pid_{b_4+b_5})\seqq(\swapt_{b_6\times (b_4+b_5)}+\swapt_{b_6\times (b_4+b_5)}) \tag{by bifunctoriality} \\
                \fromto_2{}& ((\pid_{b_1}+\swapp_{b_2+b_1}+\pid_{b_2})\times \pid_{b_3})\seqq(\hqT{c_1}\times \pid_{b_3}+\hqT{c_2}\times \pid_{b_3})\seqq \\
                & \quad ((\pid_{b_4}+\swapp_{b_4+b_5}+\pid_{b_5})\times \pid_{b_3})\seqq (\swapt_{(b_4+b_5)\times b_3}+\swapt_{(b_4+b_5)\times b_3})\seqq \\
                &\quad (\hqT{c_3}\times \pid_{b_4+b_5})\seqq(\swapt_{b_6\times (b_4+b_5)}+\swapt_{b_6\times (b_4+b_5)}) \tag{by distributivity} \\
                \fromto_2{}& ((\pid_{b_1}+\swapp_{b_2+b_1}+\pid_{b_2})\times \pid_{b_3})\seqq(\hqT{c_1}\times \pid_{b_3}+\hqT{c_2}\times \pid_{b_3})\seqq \\
                & \quad ((\pid_{b_4}+\swapp_{b_4+b_5}+\pid_{b_5})\times \pid_{b_3})\seqq (\swapt_{(b_4+b_5)\times b_3}+\swapt_{(b_4+b_5)\times b_3})\seqq \\
                &\quad (\hqT{c_3}\times \pid_{b_4+b_5})\seqq(\swapt_{b_6\times (b_4+b_5)}+\swapt_{b_6\times (b_4+b_5)}) \tag{by naturality} \\
                \fromto_2{}& ((\pid_{b_1}+\swapp_{b_2+b_1}+\pid_{b_2})\times \pid_{b_3})\seqq(\hqT{c_1}\times \pid_{b_3}+\hqT{c_2}\times \pid_{b_3})\seqq \\
                & \quad ((\pid_{b_4}+\swapp_{b_4+b_5}+\pid_{b_5})\times \pid_{b_3})\seqq (\swapt_{(b_4+b_5)\times b_3}+\swapt_{(b_4+b_5)\times b_3})\seqq \\
                &\quad \swapt_{(\bb{2}\times b_3)\times (b_4+b_5)}\seqq(\pid_{b_4}\times \hqT{c_3}+\pid_{b_5}\times \hqT{c_3})\seqq\swapt_{(b_4+b_5)\times (\bb{2}\times b_6)}\seqq\\
                & \quad (\swapt_{b_6\times (b_4+b_5)}+\swapt_{b_6\times (b_4+b_5)}) \tag{by naturality and distributivity} \\
                \fromto_2{}& ((\pid_{b_1}+\swapp_{b_2+b_1}+\pid_{b_2})\times \pid_{b_3})\seqq(\hqT{c_1}\times \pid_{b_3}+\hqT{c_2}\times \pid_{b_3})\seqq \\
                &\quad (\swapt_{\bb{2}\times b_4}\times \pid_{b_3}+\swapt_{\bb{2}\times b_5}\times \pid_{b_3})\seqq(\pid_{b_4}\times \hqT{c_3}+\pid_{b_5}\times \hqT{c_3})\seqq\\
                & \quad (\swapt_{(b_4+b_5)\times \bb{2}}\times \pid_{b_6}) \tag{by completeness of $\Pi$ and bifunctoriality} \\
                \fromto_2{}& ((\pid_{b_1}+\swapp_{b_2+b_1}+\pid_{b_2})\times \pid_{b_3})\seqq \\
                & (((\hqT{c_1}\times \pid_{b_3})\seqq(\swapt_{\bb{2}\times b_4}\times \pid_{b_3})\seqq(\pid_{b_4}\times \hqT{c_3}))+ \\
                &\qquad ((\hqT{c_2}\times \pid_{b_3})\seqq(\swapt_{\bb{2}\times b_5}\times \pid_{b_3})\seqq(\pid_{b_5}\times \hqT{c_3})))\seqq \\
                &\quad (\swapt_{(b_4+b_5)\times \bb{2}}\times \pid_{b_6}) \tag{by bifunctoriality} \\
                \fromto_2{}& ((\pid_{b_1}+\swapp_{b_2+b_1}+\pid_{b_2})\times \pid_{b_3})\seqq \\
                & (((\hqT{c_1}\times \pid_{b_3})\seqq(\swapt_{\bb{2}\times b_4}\times \pid_{b_3})\seqq(\swapt_{b_4\times(\bb{2}\times b_3)})\seqq\\
                & \qquad\quad (\hqT{c_3}\times \pid_{b_4})\seqq(\swapt_{(\bb{2}\times b_6)\times b_4}))+ \\
                &\qquad ((\hqT{c_2}\times \pid_{b_3})\seqq(\swapt_{\bb{2}\times b_5}\times \pid_{b_3})\seqq(\swapt_{b_5\times (\bb{2}\times b_3)})\seqq \\
                & \qquad\quad (\hqT{c_3}\times \pid_{b_5})\seqq(\swapt_{(\bb{2}\times b_6)\times b_5})))\seqq \\
                &\quad (\swapt_{(b_4+b_5)\times \bb{2}}\times \pid_{b_6}) \tag{by naturality}\\
                \fromto_2{}& ((\pid_{b_1}+\swapp_{b_2+b_1}+\pid_{b_2})\times \pid_{b_3})\seqq \\
                & (((\hqT{c_1}\times \pid_{b_3})\seqq(\swapt_{b_4\times b_3}+\swapt_{b_4\times b_3})\seqq(\hqT{c_3}\times \pid_{b_4})\seqq \\
                & \qquad \quad (\swapt_{b_6\times b_4}+\swapt_{b_6\times b_4})\seqq (\swapt_{\bb{2}\times b_4}\times \pid_{b_6}))+ \\
                &\qquad ((\hqT{c_2}\times \pid_{b_3})\seqq(\swapt_{b_5\times b_3}+\swapt_{b_5\times b_3})\seqq(\hqT{c_3}\times \pid_{b_5})\seqq \\
                & \qquad \quad (\swapt_{b_6\times b_5}+\swapt_{b_6\times b_5})\seqq (\swapt_{\bb{2}\times b_5}\times \pid_{b_6})))\seqq \\
                &\quad (\swapt_{(b_4+b_5)\times \bb{2}}\times \pid_{b_6}) \tag{by completeness of $\Pi$} \\
                \fromto_2{}& ((\pid_{b_1}+\swapp_{b_2+b_1}+\pid_{b_2})\times \pid_{b_3})\seqq \\
                & ((\hqT{c_1\times c_3}\seqq(\swapt_{\bb{2}\times b_4}\times \pid_{b_6}))+ (\hqT{c_2\times c_3}\seqq(\swapt_{\bb{2}\times b_5}\times \pid_{b_6})))\seqq \\
                &\quad (\swapt_{(b_4+b_5)\times \bb{2}}\times \pid_{b_6}) \tag{by definition of $\cT_{HQ}$} \\
                \fromto_2{}& (\pid_{b_1\times b_3}+\swapp_{(b_2\times b_3)+(b_1\times b_3)}+\pid_{b_2\times b_3})\seqq(\hqT{c_1\times c_3} + \hqT{c_2\times c_3})\seqq \\
                & \quad (\pid_{b_3\times b_6}+\swapp_{(b_4\times b_6)+(b_5\times b_6)}+\pid_{b_5\times b_6}) \tag{by completeness of $\Pi$ and bifunctoriality} \\
                \fromto_2{}& \hT{(c_1\times c_3)+(c_2\times c_3)}, \tag{by definition of $\cT_{HQ}$}
            \end{align*}
            with the help of \cref{lem:hqt2,eq:h3}, we have \[\hT{\factor\seqq((c_1+c_2)\times c_3)\seqq\dist} \fromto_2{} \hT{(c_1\times c_3)+(c_2\times c_3)}.\]
            The remaining one can be proven in the same way.
            \item Coherence. \begin{itemize}[leftmargin=10pt]
                \item For $+$, we can easily verify that  \begin{align*}
                    \hT{\assocrp} ={}& \pid_{\bbo}+\assocrp, \\
                    \hT{\assoclp} ={}& \pid_{\bbo}+\assoclp, \\
                    \hT{\swapp} ={}& \pid_{\bbo}+\swapp, \\
                    \hT{\pid+\unitep} \fromto_2{}& \pid_{\bbo}+(\pid+\unitep), \\
                    \hT{\unitep+\pid} \fromto_2{}& \pid_{\bbo}+(\unitep+\pid), \\
                    \hT{\swapp+\pid} \fromto_2{}& \pid_{\bbo}+(\swapp+\pid), \\
                    \hT{\pid+\swapp} \fromto_2{}& \pid_{\bbo}+(\pid+\swapp).
                \end{align*}
                Thus, it is straightforward to get the equivalence for coherence of $+$ by bifunctoriality.
                \item For $\times$, we can easily verify that \begin{align*}
                    \hqT{\assocrt} ={}& \pid_{\hdim(\assocrt)\bbo}+\assocrt, \\
                    \hqT{\assoclt} ={}& \pid_{\hdim(\assoclt)\bbo}+\assoclt, \\
                    \hqT{\swapt} ={}& \pid_{\hdim(\swapt)\bbo}+\swapt, \\
                    \hqT{\pid\times \unitet} \fromto_2{}& \pid_{\hdim(\pid\times \unitet)\bbo}+(\pid\times \unitet), \\
                    \hqT{\unitet\times \pid} \fromto_2{}& \pid_{\hdim(\unitet\times \pid)\bbo}+(\unitet\times \pid), \\
                    \hqT{\swapt\times \pid} \fromto_2{}& \pid_{\hdim(\swapt\times \pid)\bbo}+(\swapt\times \pid), \\
                    \hqT{\pid\times \swapt} \fromto_2{}& \pid_{\hdim(\pid\times \swapt)\bbo}+(\pid\times \swapt).
                \end{align*}
                Then, it is straightforward to obtain the equivalence for coherence of $\times$ by bifunctoriality when applied with the translation $\cT_{HQ}$. By \cref{lem:hqt2,eq:h3}, we derive the equivalence for coherency of $\times$ by bifunctoriality when applied with the translation $\cT_{H}$.
            \end{itemize}
            \item Naturality. Suppose $c_1:b_1\fromto b_3, c_2:b_2\fromto b_4$. 
            \begin{itemize}[leftmargin=10pt]
                \item For $\swapp$, we have \begin{align*}
                    & \hT{\swapp\seqq(c_1+c_2)\seqq\swapp} \\
                    ={}&\hT{\swapp_{b_2+b_1}}\seqq\hT{c_1+c_2}\seqq\hT{\swapp_{b_3+b_4}} \\
                    \fromto_2{}& \hT{\swapp_{b_2+b_1}}\seqq\hT{\pid_{b_1}+c_2}\seqq\hT{c_1+\pid_{b_4}}\seqq\hT{\swapp_{b_3+b_4}} \tag{by \cref{lem:ht3}} \\
                    \fromto_2{}& (\pid_{\bbo}+\swapp_{b_2+b_1})\seqq \\
                    & \quad (\pid_{\bbo}+\pid_{b_1}+\pid_{b_2})\seqq(\swapp_{\bbo+b_1}+\pid_{b_2})\seqq(\pid_{b_1}+\hT{c_2})\seqq(\swapp_{b_1+\bbo}+\pid_{b_4})\seqq \\
                    & \quad (\hT{c_1}+\pid_{b_4})\seqq(\swapp_{\bbo+b_3}+\pid_{b_4})\seqq(\pid_{b_3}+\pid_{\bbo}+\pid_{b_4})\seqq(\swapp_{b_3+\bbo}+\pid_{b_4})\seqq \\
                    & \quad (\pid_{\bbo}+\swapp_{b_3+b_4}) \tag{by definition of $\cT_H$} \\
                    \fromto_2{}& \swapp_{(\bbo+b_2)+b_1}\seqq(\pid_{b_1}+\hT{c_2})\seqq(\swapp_{b_1+\bbo}+\pid_{b_4})\seqq \\
                    &\quad (\hT{c_1}+\pid_{b_4})\seqq (\pid_{\bbo}+\swapp_{b_3+b_4}) \tag{by completeness of $\Pi$} \\
                    \fromto_2{}& (\hT{c_2}+\pid_{b_1})\seqq \swapp_{(\bbo+b_4)+b_1}\seqq(\swapp_{b_1+\bbo}+\pid_{b_4})\seqq\\
                    &\quad (\hT{c_1}+\pid_{b_4})\seqq (\pid_{\bbo}+\swapp_{b_3+b_4}) \tag{by naturality} \\
                    \fromto_2{}& (\hT{c_2}+\pid_{b_1})\seqq (\swapp_{\bbo+b_4}+\pid_{b_1})\seqq(\swapp_{b_4+(\bbo+b_1)})\seqq \\
                    &\quad  (\hT{c_1}+\pid_{b_4})\seqq (\pid_{\bbo}+\swapp_{b_3+b_4}) \tag{by completeness of $\Pi$} \\
                    \fromto_2{}& (\hT{c_2}+\pid_{b_1})\seqq (\swapp_{\bbo+b_4}+\pid_{b_1})\seqq  (\pid_{b_4}+\hT{c_1})\seqq \\
                    & \quad (\swapp_{b_4+(\bbo+b_3)})\seqq (\pid_{\bbo}+\swapp_{b_3+b_4}) \tag{by naturality} \\
                    \fromto_2{}& (\hT{c_2}+\pid_{b_1})\seqq (\swapp_{\bbo+b_4}+\pid_{b_1})\seqq  (\pid_{b_4}+\hT{c_1})\seqq (\swapp_{b_4+\bbo}+\pid_{b_3}) \tag{by completeness of $\Pi$} \\
                    ={}& \hT{c_2+c_1}. \tag{by definition of $\cT_H$}
                \end{align*}
                \item For $\swapt$, we have \begin{align*}
                    & \hqT{\swapt\seqq(c_1\times c_2)\seqq\swapt} \\
                    ={}& \hqT{\swapt}\seqq\hqT{c_1\times c_2}\seqq\hqT{\swapt} \\
                    \fromto_2{}& \hqT{\swapt_{b_2\times b_1}}\seqq\hqT{\pid_{b_1}\times c_2}\seqq\hqT{c_1\times \pid_{b_4}}\seqq \hqT{\swapt_{b_3\times b_4}} \tag{by \cref{lem:ht3,lem:hqt2,eq:h3}} \\
                    \fromto_2{}& (\pid_{b_2\times b_1}+\swapt_{b_2\times b_1})\seqq \\
                    & \quad (\swapt_{b_1\times b_2}+\swapt_{b_1\times b_2})\seqq(\hqT{c_2}\times \pid_{b_1})\seqq(\swapt_{b_4\times b_1}+\swapt_{b_4\times b_1})\seqq\\
                    & \quad (\hqT{c_1}\times \pid_{b_4})\seqq  (\pid_{b_3\times b_4}+\swapt_{b_3\times b_4}) \tag{by definition of $\cT_{HQ}$ and completeness of $\Pi$} \\
                    \fromto_2{}& (\swapt_{b_1\times b_2}+\pid_{b_2\times b_1})\seqq(\hqT{c_2}\times \pid_{b_1})\seqq(\swapt_{b_4\times b_1}+\swapt_{b_4\times b_1})\seqq\\
                    & \quad (\hqT{c_1}\times \pid_{b_4})\seqq  (\swapt_{b_3\times b_4}+\swapt_{b_3\times b_4})\seqq(\swapt_{b_4\times b_3}+\pid_{b_4\times b_3}) \tag{by completeness of $\Pi$} \\
                    \fromto_2{}& (\swapt_{b_1\times b_2}+\pid_{b_2\times b_1})\seqq\hqT{c_2\times c_1}\seqq(\swapt_{b_4\times b_3}+\pid_{b_4\times b_3}). \tag{by definition of $\cT_{HQ}$}
                \end{align*}
                By \cref{lem:hqt2}, we obtain \begin{align*}
                    & \pid_{\hdim(c_2\times c_1)\bbo}+\hT{\swapt\seqq(c_1\times c_2)\seqq\swapt} \\
                    \fromto_2{}& (\swapp_{\hdim(c_2\times c_1)\bbo+\bbo}+\pid_{b_2\times b_1}) \seqq \\
                    & \quad (\pid_{\bbo}+ \hqT{\swapt\seqq(c_1\times c_2)\seqq\swapt})\seqq \\
                    &\quad (\pid_\bbo+\swapp_{\hdim(c_2\times c_1)\bbo}+\pid_{b_4\times b_3}) \\
                    \fromto_2{}& (\swapp_{\hdim(c_2\times c_1)\bbo+\bbo}+\pid_{b_2\times b_1}) \seqq \\
                    & \quad (\pid_{\bbo}+ ((\swapt_{b_1\times b_2}+\pid_{b_2\times b_1})\seqq\hqT{c_2\times c_1}\seqq(\swapt_{b_4\times b_3}+\pid_{b_4\times b_3})))\seqq \\
                    &\quad (\pid_\bbo+\swapp_{\hdim(c_2\times c_1)\bbo}+\pid_{b_4\times b_3}) \\
                    \fromto_2{}& (\swapp_{\hdim(c_2\times c_1)\bbo+\bbo}+\pid_{b_2\times b_1}) \seqq (\pid_{\bbo}+ (\swapt_{b_1\times b_2}+\pid_{b_2\times b_1}))\seqq\\
                    & \quad (\pid_{\bbo}+\hqT{c_2\times c_1})\seqq\\
                    & \quad (\pid_{\bbo}+(\swapt_{b_4\times b_3}+\pid_{b_4\times b_3}))\seqq (\pid_\bbo+\swapp_{\hdim(c_2\times c_1)\bbo}+\pid_{b_4\times b_3}) \tag{by bifunctoriality}\\
                    \fromto_2{}& (\swapp_{\hdim(c_2\times c_1)\bbo+\bbo}+\pid_{b_2\times b_1}) \seqq (\pid_{\bbo}+ (\swapt_{b_1\times b_2}+\pid_{b_2\times b_1}))\seqq\\
                    & \quad (\swapp_{\bbo+\hdim(c_2\times c_1)\bbo}+\pid_{b_2\times b_1})\seqq(\pid_{\hdim(c_2\times c_1)\bbo}+\hT{c_2\times c_1})\seqq \\
                    & \quad (\swapp_{\hdim(c_2\times c_1)\bbo+\bbo}+\pid_{b_4\times b_3})\seqq\\
                    & \quad (\pid_{\bbo}+(\swapt_{b_4\times b_3}+\pid_{b_4\times b_3}))\seqq (\bbo+\swapp_{\hdim(c_2\times c_1)\bbo}+\pid_{b_4\times b_3}) \tag{by \cref{lem:hqt2}} \\
                    \fromto_2{}& (\swapt_{b_1\times b_2}+\pid_{\bbo}+\pid_{b_2\times b_1})\seqq(\pid_{\hdim(c_2\times c_1)\bbo}+\hT{c_2\times c_1})\seqq \\
                    & \quad (\swapt_{b_4\times b_3}+\pid_{\bbo}+\pid_{b_4\times b_3}) \tag{by completeness of $\Pi$} \\
                    \fromto_2{}& (\swapt_{(\hdim(b_1)\bbo)\times(\hdim(b_2)\bbo)}+\pid_{\bbo}+\pid_{b_2\times b_1})\seqq(\pid_{\hdim(c_2\times c_1)\bbo}+\hT{c_2\times c_1})\seqq \\
                    & (\swapt_{(\hdim(b_4)\bbo)\times(\hdim(b_3)\bbo)}+\pid_{\bbo}+\pid_{b_4\times b_3}) \tag{by completeness of $\Pi$} \\
                    \fromto_2{}& \pid_{\hdim(c_2\times c_1)\bbo}+\hT{c_2\times c_1}. \tag{by bifunctoriality and naturality}
                \end{align*}
                Then, applying \cref{eq:h3}, we have have $\hT{\swapt\seqq(c_1\times c_2)\seqq\swapt} \fromto_2 \hT{c_2\times c_1}$.
            \end{itemize}    
        \end{itemize}
        \item Additional level-2 combinators of \QPiLang{}. \begin{itemize}[leftmargin=10pt]
            \item \cref{eq:e1}. \begin{align*}
                \hT{(-1)^2} ={}& \hT{(-1)}\seqq\hT{(-1)} \\
                \fromto_2{}& \hadamard\seqq\swapp\seqq\hadamard\seqq\\
                & \quad \hadamard\seqq\swapp\seqq\hadamard \\
                \fromto_2{}& \pid_{\bbo+\bbo} \tag{by \cref{eq:h1}, naturality and unitality} \\
                \fromto_2{}& \pid_{\bbo}+\pid_{\bbo} \\
                ={}& \hT{\pid_{\bbo}}.
            \end{align*}
            \item \cref{eq:e2}. \begin{align*}
                \hT{\hadamard^2} ={}& (\pid_{\bbo}+\hadamard)^2 \\
                \fromto_2{}& \pid_{\bbo}+\hadamard^2 \tag{by bifunctoriality} \\
                \fromto_2{}& \pid_{\bbo}+\pid_{\bbo+\bbo} \tag{by \cref{eq:h1}} \\
                ={}& \hT{\pid_{\bbo+\bbo}}.
            \end{align*}
            \item \cref{eq:e3}. \begin{align*}
                & \hT{\pid_{\bbo}+(-1)} \\
                \fromto_2{}& (\pid_{\bbo}+\pid_{\bbo}+\pid_{\bbo})\seqq(\swapp_{\bbo+\bbo}+\pid_{\bbo})\seqq\\
                & \quad (\pid_{\bbo}+\hadamard\seqq\swapp_{\bbo+\bbo}\seqq\hadamard)\seqq(\swapp_{\bbo+\bbo}+\pid_{\bbo}) \\
                \fromto_2{}& (\swapp_{\bbo+\bbo}+\pid_{\bbo})\seqq\\
                & \quad (\pid_{\bbo}+\hadamard\seqq\swapp_{\bbo+\bbo}\seqq\hadamard)\seqq(\swapp_{\bbo+\bbo}+\pid_{\bbo}) \\
                \fromto_2{}& (\pid_{\bbo}+\hadamard\seqq\swapp_{\bbo+\bbo}\seqq\hadamard) \tag{by \cref{eq:h2}} \\
                \fromto_2{}& (\pid_{\bbo}+\hadamard)\seqq(\pid_\bbo+\swapp)\seqq(\pid_{\bbo}+\hadamard) \tag{by bifunctoriality}\\
                ={}& \hT{\hadamard}\seqq\hT{\swapp}\seqq\hT{\hadamard} \\
                \fromto_2{}& \hT{\hadamard\seqq\swapp\seqq\hadamard}.
            \end{align*}
            \item \cref{eq:e4}. Suppose $c_1+c_2\fromto_2{} \pid$ and $c_1:b_1\fromto b_3, c_2:b_2\fromto b_4$, then
            \begin{align*}
                & \pid_{\bbo}+\pid_{\bbo+b_1+b_2} \\
                \fromto_2{}&
                \pid_{\bbo}+\hT{\pid_{b_1+b_2}} \tag{by definition of $\cT_H$} \\
                \fromto_2{}& \pid_{\bbo}+\hT{c_1+c_2} \tag{by inductive hypothesis}\\
                \fromto_2{}& \pid_{\bbo}+((\hT{c_1}+\pid_{b_2})\seqq \\
                & \qquad (\swapp_{\bbo+b_3}+\pid_{b_2})\seqq (\pid_{b_3}+\hT{c_2})\seqq(\swapp_{b_3+\bbo}+\pid_{b_4})) \tag{by definition of $\cT_H$}\\
                \fromto_2{}& (\pid_{\bbo}+(\hT{c_1}+\pid_{b_2}))\seqq \\
                & \quad (\pid_{\bbo}+((\swapp_{\bbo+b_3}+\pid_{b_2})\seqq(\pid_{b_3}+\hT{c_2})\seqq(\swapp_{b_3+\bbo}+\pid_{b_4}))) \tag{by bifunctoriality of $+$} \\
                \fromto_2{}& (((\swapp_{\bbo+\bbo}+\pid_{b_1})\seqq(\pid_{\bbo}+\hT{c_1})\seqq(\swapp_{\bbo+\bbo}+\pid_{b_3}))+\pid_{b_2})\seqq \\
                & \quad (\pid_{\bbo}+((\swapp_{\bbo+b_3}+\pid_{b_2})\seqq(\pid_{b_3}+\hT{c_2})\seqq(\swapp_{b_3+\bbo}+\pid_{b_4}))) \tag{by \cref{lem:ht2}} \\
                \fromto_2{}& (\swapp_{\bbo+\bbo}+\pid_{b_1}+\pid_{b_2})\seqq(\pid_{\bbo}+\hT{c_1}+\pid_{b_2})\seqq(\swapp_{\bbo+\bbo}+\pid_{b_3}+\pid_{b_2})\seqq \\
                & \quad (\pid_{\bbo}+\swapp_{\bbo+b_3}+\pid_{b_2})\seqq(\pid_{\bbo}+\pid_{b_3}+\hT{c_2})\seqq(\pid_{\bbo}+\swapp_{b_3+\bbo}+\pid_{b_4}) \tag{by bifunctoriality of $+$} \\
                \fromto_2{}& (\swapp_{\bbo+\bbo}+\pid_{b_1}+\pid_{b_2})\seqq(\pid_{\bbo}+\hT{c_1}+\pid_{b_2})\seqq(\swapp_{\bbo+(\bbo+b_3)}+\pid_{b_2})\seqq \\
                & \quad (\pid_{\bbo}+\pid_{b_3}+\hT{c_2})\seqq(\pid_{\bbo}+\swapp_{b_3+\bbo}+\pid_{b_4}) \tag{by completeness of $\Pi$} \\
                \fromto_2{}& (\swapp_{\bbo+\bbo}+\pid_{b_1}+\pid_{b_2})\seqq(\swapp_{\bbo+(\bbo+b_1)}+\pid_{b_2})\seqq(\hT{c_1}+\pid_{\bbo}+\pid_{b_2})\seqq \\
                & \quad (\pid_{\bbo}+\pid_{b_3}+\hT{c_2})\seqq(\pid_{\bbo}+\swapp_{b_3+\bbo}+\pid_{b_4}) \tag{by naturality of $\swapp$} \\
                \fromto_2{}& (\swapp_{\bbo+\bbo}+\pid_{b_1}+\pid_{b_2})\seqq(\swapp_{\bbo+(\bbo+b_1)}+\pid_{b_2})\seqq (\hT{c_1}+\hT{c_2})\seqq(\pid_{\bbo}+\swapp_{b_3+\bbo}+\pid_{b_4}) \tag{by bifunctoriality of $+$} \\
                \fromto_2{}& (\pid_{\bbo}+\swapp_{\bbo+b_1}+\pid_{b_2})\seqq (\hT{c_1}+\hT{c_2})\seqq(\pid_{\bbo}+\swapp_{b_3+\bbo}+\pid_{b_4}). \tag{by completeness of $\Pi$}
            \end{align*}
            Using the naturality of $\swapp$, and bifunctoriality of $+$, we obtain
            \begin{align*}
                \hT{c_1}+\hT{c_2} \fromto_2{} (\pid_{\bbo}+\pid_{b_1})+(\pid_{\bbo}+\pid_{b_2}).
            \end{align*}
            Then, from \cref{eq:h3} and the definition of $\cT_H$, it follows that 
            \[\hT{c_1} \fromto_2{} \hT{\pid_{b_1}}. \qedhere\]
        \end{itemize}
    \end{itemize}
\end{proof}

\endgroup

\section{Misc}\label{app:misc}

\subsection{Level-2 Combinators of \texorpdfstring{$\Pi$}{Pi}}\label{app:pi_equations}

\begin{itemize}
    \item Reflexivity, symmetry and transitivity: \begin{gather*}
        c \fromto_2 c \\[1mm]
        \begin{prooftree}
            \hypo{c_1 \fromto_2 c_2}
            \infer1{c_2 \fromto_2 c_1}
        \end{prooftree} \\[1mm]
        \begin{prooftree}
            \hypo{c_1 \fromto_2 c_2} \hypo{c_2 \fromto_2 c_3}
            \infer2{c_1 \fromto_2 c_3}
        \end{prooftree}
    \end{gather*}
    \item Unitality: \begin{gather*}
        \pid\seqq c \fromto_2 c \\
        c\seqq \pid \fromto_2 c \\
        \unitip\seqq(\pid_{\bbz}+c)\seqq\unitep \fromto_2 c\\
        \unitep\seqq c\seqq \unitip \fromto_2  \pid_{\bbz}+c\\
        \unitit\seqq(\pid_{\bbo}\times c)\seqq\unitet \fromto_2 c\\
        \unitet\seqq c\seqq \unitit \fromto_2  \pid_{\bbo}\times c
    \end{gather*}
    \item Associativity: \begin{gather*}
        c_1\seqq(c_2\seqq c_3) \fromto_2{} (c_1\seqq c_2)\seqq c_3 \\
        \assocrp\seqq(c_1+(c_2+c_3))\seqq \assoclp \fromto_2{} (c_1+c_2)+c_3 \\
        \assoclp\seqq((c_1+c_2)+c_3)\seqq \assocrp \fromto_2{} c_1+(c_2+c_3) \\
        \assocrt\seqq(c_1\times (c_2\times c_3))\seqq \assoclt \fromto_2{} (c_1\times c_2)\times c_3 \\
        \assoclt\seqq((c_1\times c_2)\times c_3)\seqq \assocrt \fromto_2{} c_1\times (c_2\times c_3) \\
        \assocrp\seqq\assocrp \fromto_2 ((\assocrp+\pid)\seqq\assocrp)\seqq(\pid+\assocrp) \\
        \assocrt\seqq\assocrt \fromto_2 ((\assocrt+\pid)\seqq\assocrt)\seqq(\pid+\assocrt)\\
    \end{gather*}
    \item Annihilativity: \begin{gather*}
        \factorz\seqq(c\times \pid_{\bbz})\seqq\absorb \fromto_2 \pid_{\bbz} \\
        \absorb\seqq\factorz \fromto_2 c\times \pid_{\bbz}
    \end{gather*}
    \item Bifunctoriality: \begin{gather*}
        \begin{prooftree}
            \hypo{c_1\fromto_2 c_3} \hypo{c_2\fromto_2 c_4}
            \infer2{c_1+c_2 \fromto_2 c_3+c_4}
        \end{prooftree} \\[1mm]
        \begin{prooftree}
            \hypo{c_1\fromto_2 c_3} \hypo{c_2\fromto_2 c_4}
            \infer2{c_1\times c_2 \fromto_2 c_3\times c_4}
        \end{prooftree} \\[1mm]
        \pid + \pid \fromto_2 \pid \\
        \pid \times  \pid \fromto_2 \pid \\
        (c_1+c_2)\seqq(c_3+c_4) \fromto_2 (c_1\seqq c_3) + (c_2\seqq c_4) \\
        (c_1\times c_2)\seqq(c_3\times c_4) \fromto_2 (c_1\seqq c_3) \times  (c_2\seqq c_4)
    \end{gather*}
    \item Distributivity: \begin{gather*}
        \factor\seqq((c_1+c_2)\times c_3)\seqq \dist \fromto_2 (c_1\times c_3)+(c_2\times c_3) \\
        \dist\seqq((c_1\times c_3)+(c_2\times c_3))\seqq\factor \fromto_2{} (c_1+c_2)\times c_3
    \end{gather*}
    \item Coherence: \begin{gather*}
        \assocrp\seqq(\pid+\unitep)\fromto_2 (\swapp+\pid)\seqq(\unitep+\pid) \\
        \assocrt\seqq(\pid\times \unitet)\fromto_2 (\swapt\times \pid)\seqq(\unitet\times \pid) \\
        (\assocrp\seqq\swapp)\seqq \assocrp \fromto_2 ((\swapp+\pid)\seqq\assocrp)\seqq(\pid+\swapp) \\
        (\assoclp\seqq\swapp)\seqq \assoclp \fromto_2 ((\pid+\swapp)\seqq\assoclp)\seqq(\swapp+\pid) \\
        (\assocrt\seqq\swapt)\seqq \assocrt \fromto_2 ((\swapt\times \pid)\seqq\assocrt)\seqq(\pid\times \swapt) \\
        (\assoclt\seqq\swapt)\seqq \assoclt \fromto_2 ((\pid\times \swapt)\seqq\assoclt)\seqq(\swapt\times \pid) \\
    \end{gather*}
    \item Naturality:  \begin{gather*}
        \swapp\seqq (c_1+c_2)\seqq\swapp \fromto_2 c_2 + c_1 \\
        \swapt\seqq (c_1\times c_2)\seqq\swapt \fromto_2 c_2 \times  c_1 \\
    \end{gather*}
\end{itemize}

Due to the strictness, we will omit the isomorphisms $\assocrp$, $\assoclp$, $\unitep$, $\unitip$, $\assocrt$, $\assoclt$, $\unitet$, $\unitit$, $\dist$, $\factor$, $\absorb$, and $\factorz$ used for type conversion in proofs for convenience.
For example, we rewrite \[c_1 + (c_2 + c_3) \fromto_2{} (c_1 + c_2) + c_3\] for simplicity.

\subsection{Additional Figure}

\begin{figure}[ht]
    \centering
    \scalebox{.8}{\begin{tikzpicture}
    [every node/.style={inner sep=2,minimum width=9mm,minimum height=5mm}]
    \matrix[
        nodes={
            rectangle,draw,
            align=center,
        },
        column sep=-.5pt,row sep=-.5pt,line width=.5pt,
    ] (swapp0){
        & \node{$\mathclap{1,\_}$}; \\
        & \node{$\mathclap{2,\_}$}; \\
        & \node[draw=none]{$\vdots$}; \\[2mm]
        & \node{$\mathclap{n_1,\_}$}; \\
        & \node{$\mathclap{\_,1}$}; \\
        & \node[draw=none]{$\vdots$}; \\[2mm]
        & \node{$\mathclap{\_,n_2}$}; \\
    };

    \matrix[
        right=of swapp0,
        nodes={
            rectangle,draw,
            align=center,
        },
        column sep=-.5pt,row sep=-.5pt,line width=.5pt,
    ] (swapp1){
        & \node{$\mathclap{\_,1}$}; \\
        & \node[draw=none]{$\vdots$}; \\[2mm]
        & \node{$\mathclap{\_,n_2}$}; \\
        & \node{$\mathclap{1,\_}$}; \\
        & \node{$\mathclap{2,\_}$}; \\
        & \node[draw=none]{$\vdots$}; \\[2mm]
        & \node{$\mathclap{n_1,\_}$}; \\
    };
    \draw[-stealth,] (swapp0) --node[above](swapp){$\swapp$} (swapp1);

    \matrix[
        right=2cm of swapp,
        nodes={
            rectangle,draw,
            align=center,
        },
        column sep=-.5pt,row sep=-.5pt,line width=.5pt,
    ] (swapt0){
        & \node{$\mathclap{1,1}$}; & \node{$\mathclap{1,2}$}; & \node[draw=none]{$\cdots$}; & \node{$\mathclap{1,n_2}$}; \\
        & \node{$\mathclap{2,1}$}; & \node{$\mathclap{2,2}$}; & \node[draw=none]{$\cdots$}; & \node{$\mathclap{2,n_2}$}; \\
        & \node{$\mathclap{3,1}$}; & \node{$\mathclap{3,2}$}; & \node[draw=none]{$\cdots$}; & \node{$\mathclap{3,n_2}$}; \\
        & \node[draw=none]{$\vdots$}; & \node[draw=none]{$\vdots$}; & \node[draw=none]{$\ddots$}; & \node[draw=none]{$\vdots$}; \\[2mm]
        & \node{$\mathclap{n_1,1}$}; & \node{$\mathclap{n_1,2}$}; & \node[draw=none]{$\cdots$}; & \node{$\mathclap{n_1,n_2}$}; \\
    };

    \matrix[
        right= of swapt0,
        nodes={
            rectangle,draw,
            align=center,
        },
        column sep=-.5pt,row sep=-.5pt,line width=.5pt,
    ] (swapt1){
        & \node{$\mathclap{1,1}$}; & \node{$\mathclap{2,1}$}; & \node{$\mathclap{3,1}$}; & \node[draw=none]{$\cdots$}; & \node{$\mathclap{n_1,1}$}; \\
        & \node{$\mathclap{1,2}$}; & \node{$\mathclap{2,2}$}; & \node{$\mathclap{3,2}$}; & \node[draw=none]{$\cdots$}; & \node{$\mathclap{n_1,2}$}; \\
        & \node[draw=none]{$\vdots$}; & \node[draw=none]{$\vdots$}; & \node[draw=none]{$\vdots$}; & \node[draw=none]{$\ddots$}; & \node[draw=none]{$\vdots$}; \\[2mm]
        & \node{$\mathclap{1,n_2}$}; & \node{$\mathclap{2,n_2}$}; & \node{$\mathclap{3,n_2}$}; & \node[draw=none]{$\cdots$}; & \node{$\mathclap{n_1,n_2}$}; \\
    };
    \draw[-stealth,] (swapt0) --node[above]{$\swapt$} (swapt1);

\end{tikzpicture}}
    \begin{align*}
        \swapp&: b_1+b_2 \fromto b_2+b_1 & \swapt &: b_1\times b_2 \fromto b_2\times b_1 & n_1 &= \hdim(b_1) & n_2 &= \hdim(b_2) 
    \end{align*}
    \caption{Visualisation of permutations in the definitions of $\wsem{\swapp}, \wsem{\swapt}$. The squares are numbered by row-major order.}
    \label{fig:visual_permutation}
\end{figure}
\fi

\end{document}